\pdfoutput= 1 
\RequirePackage{fixltx2e}
\documentclass[11pt]{article}  
\usepackage[utf8]{inputenc}
\usepackage[T1]{fontenc}
\usepackage{amsmath, amsfonts, amsthm, amssymb, amscd, aliascnt}
\usepackage[normalem]{ulem}     
\usepackage{relsize}
\usepackage{cancel}
\usepackage{slashed}
\usepackage{stmaryrd}
\usepackage{esint} 
\usepackage{slashed}
\usepackage{float} 
\usepackage{setspace}
\usepackage{titlesec}
\titlespacing*{\paragraph}{0pt}{1.25ex plus 1ex minus .2ex}{.5em}

\usepackage{subcaption} 
\usepackage{tikz}

\usepackage{bbm} 
\usepackage{mathrsfs} 
\usepackage{accents, verbatim}
\makeatletter
\@ifundefined{c@localmathalphabets}{}{\setcounter{localmathalphabets}{0}}
\makeatother
\input{epsf}
\usepackage{pdfpages, enumerate, graphicx, xcolor, array, booktabs, microtype}
\usepackage{fancyhdr}
\usepackage[nottoc]{tocbibind} 
\usepackage[pdfusetitle]{hyperref}
\hypersetup{linktoc = all}
\hypersetup{hidelinks}
\hypersetup{bookmarksnumbered}

\numberwithin{equation}{section}
\numberwithin{table}{section}
\numberwithin{figure}{section}
\usepackage{footnotebackref}

\setlength{\arraycolsep}{\glueexpr\arraycolsep/2\relax} 

\usepackage{needspace,multirow,accents,comment,xparse}
\renewcommand\dot{\accentset{\bullet}}

\usepackage{amsmath}
\def\over{\csname @@over\endcsname} 

\usepackage{tikz}
\usetikzlibrary{shadings}


\newcommand{\refwithname}[2]{\hyperref[#2]{#1~\ref*{#2}}}
\newcommand{\appref}{\refwithname{Appendix}}

\theoremstyle{plain}
\newtheorem{theorem}{Theorem}[section]
\newcommand{\mynewtheorem}[2]{
  \newaliascnt{#1}{theorem}
  \newtheorem{#1}[#1]{#2}
  \aliascntresetthe{#1}
  \expandafter\providecommand\csname #1autorefname\endcsname{#2}
}
\mynewtheorem{definition}{Definition}
\mynewtheorem{proposition}{Proposition}
\mynewtheorem{lemma}{Lemma}
\mynewtheorem{corollary}{Corollary}
\mynewtheorem{remark}{Remark}
\mynewtheorem{claim}{Claim}

\providecommand{\coloneqq}{\mathrel{\mathop:}\mathrel{\mkern-1.2mu}=}
\providecommand{\eqqcolon}{=\mathrel{\mkern-1.2mu}\mathrel{\mathop:}}
\newcommand{\smallbullet}{{\scriptscriptstyle\mspace{.5mu}\bullet\mspace{.5mu}}}


\newcommand{\Hess}{\operatorname{Hess}}
\newcommand \Riem {{\operatorfont Riem}}
\newcommand \Ric {{\operatorfont Ric}}
\newcommand \vol {{\mathbf{vol}}}
\newcommand \Hstar {H_\star}
\newcommand \Mstar {M_\star}
\newcommand \la 	{\langle}
\newcommand \ra 	{\rangle}
\newcommand \dbf 	{{\mathbf d}}
\newcommand \Dcal 	{\mathcal D}
\newcommand \Gcal 	{\mathcal G}


\newcommand \bei 	{\begin{itemize}}
\newcommand \eei 	{\end{itemize}}
\newcommand \del	{\partial}

\newcommand \Fcal 	{\mathcal F}
\newcommand \Hcal 	{\mathcal H}
\newcommand \Lcal 	{\mathcal L}
\newcommand \Mcal 	{\mathcal M}
\newcommand \Qcal 	{\mathcal Q}
\newcommand \Rcal 	{\mathcal R}
\newcommand \RR 	{\mathbb R}

\newcommand \eps 	{\epsilon}
\newcommand \be 	{\begin{equation}\mathopen{}} 
\newcommand \ee 	{\end{equation}}
\newcommand \bel [1]	{\be \label{#1}\mathopen{}} 
%


\newcommand \Bcal 	{\mathcal B}

\newcommand \lambdabf {\boldsymbol \lambda}

\newcommand \Acal {\mathcal A}

\newcommand \Mbf 	{\mathbf M}
\newcommand \omegabf {\boldsymbol{\omega}}

\newcommand \rbf     {\mathbf r}
\newcommand \Jcal   {\mathcal J}
\newcommand{\Lie}{\mathcal L}

\newcommand \diag {{\bf diag}}
\newcommand \bse {\begin{subequations}}
\newcommand \ese {\ifcase\value{equation}\relax\GenericError{}{Subequation (\theequation) is empty}{}{}\or\GenericError{}{Subequation (\theequation) with single equation}{}{}\fi\end{subequations}}


\newcommand \Jbb {\mathbb J}
\newcommand \Nbb {\mathbb N}
\newcommand \Mbb {\mathbb M}
%

\DeclareMathOperator \Tr {\bf Tr}
%
\DeclareFontFamily{U}{matha}{\hyphenchar\font45}
\DeclareFontShape{U}{matha}{m}{n}{ <5> <6> <7> <8> <9> <10> gen * matha <10.95> matha10 <12> <14.4> <17.28> <20.74> <24.88> matha12 }{}
\DeclareSymbolFont{matha}{U}{matha}{m}{n}
\DeclareFontSubstitution{U}{matha}{m}{n}
\DeclareFontFamily{U}{mathx}{\hyphenchar\font45}
\DeclareFontShape{U}{mathx}{m}{n}{ <5> <6> <7> <8> <9> <10> <10.95> <12> <14.4> <17.28> <20.74> <24.88> mathx10 }{}
\DeclareSymbolFont{mathx}{U}{mathx}{m}{n}
\DeclareFontSubstitution{U}{mathx}{m}{n}
\DeclareMathAccent{\widecheck}{0}{mathx}{"71}
\DeclareMathAccent{\wideparen}{0}{mathx}{"75}

\DeclareMathDelimiter{\VV}{0}{matha}{"7E}{mathx}{"17} 
\ExplSyntaxOn
\NewDocumentCommand\norm{som}{%
  \IfBooleanTF{#1}{\left\lVert#3\right\rVert}%
    {\IfValueTF{#2}{\use:c{\cs_to_str:N#2l}\lVert#3\use:c{\cs_to_str:N#2r}\rVert}{\lVert#3\rVert}}}
\NewDocumentCommand\Norm{som}{%
  \IfBooleanTF{#1}{\left\VV #3\right\VV}%
    {\IfValueTF{#2}{\use:c{\cs_to_str:N#2l}\VV #3\use:c{\cs_to_str:N#2r}\VV}{\mathopen{\VV} #3\mathclose{\VV}}}}
\ExplSyntaxOff

\makeatletter
\providecommand\clap[1]{\hbox to 0pt{\hss#1\hss}}
\providecommand\mathclap[1]{\mathpalette\mathclap@aux{#1}}
\newcommand\mathclap@aux[2]{\clap{$#1#2$}}
\providecommand\mathllap[1]{\mathpalette\mathllap@aux{#1}}
\newcommand\mathllap@aux[2]{\llap{$#1#2$}}
\providecommand\mathrlap[1]{\mathpalette\mathrlap@aux{#1}}
\newcommand\mathrlap@aux[2]{\rlap{$#1#2$}}
\makeatother


\newcommand \Obig {\mathcal O}
\usepackage{calligra}
\DeclareMathAlphabet{\mathcalligra}{T1}{calligra}{m}{n}
\newsavebox{\calligrabox}
\newcommand \osmall {\scalebox{1.2}{%
    \mbox
      {%
        \sbox{\calligrabox}{$\mathcalligra{o}$}%
        \hskip\wd\calligrabox
        \pdfsave
        \pdfsetmatrix{1 0 -.7 1}%
        \llap{\usebox{\calligrabox}}%
        \pdfrestore
      }}}

\makeatletter
\newcommand \lot {{\bf l.o.t.}\@ifnextchar.\@gobble{}} 
\makeatother

\renewcommand \ker {\operatorname{\bf ker}}
\renewcommand \dim {\operatorname{dim}}

%

\makeatletter
\pdfstringdefDisableCommands{%
  \def\color#1#{\@gobble}%
}
\makeatother


\newcommand{\putabove}[2]{%
  \ifx\Tr#2\!\fi
  \mathrlap{\overset{\hbox{\lower1.5pt\hbox{\smash{$#1$}}}}{\phantom{#2}}}#2}

\newcommand{\zg}{\putabove{\scriptscriptstyle 0}g}
\newcommand{\zh}{\putabove{\scriptscriptstyle 0}h}
\newcommand{\zvol}{\putabove{\scriptscriptstyle 0}{\dVol}}
\newcommand{\zR}{\putabove{\scriptscriptstyle 0}R}

\newcommand{\zTr}{\putabove{\scriptscriptstyle 0}{\Tr}}
\newcommand{\znabla}{\putabove{\scriptscriptstyle 0}\nabla}

\newcommand{\zDelta}{\putabove{\scriptscriptstyle 0}\Delta}
\newcommand{\zdG}{\putabove{\scriptscriptstyle 0}{d\Gcal}}
\newcommand{\zdH}{\putabove{\scriptscriptstyle 0}{d\Hcal}}
\newcommand{\zdM}{\putabove{\scriptscriptstyle 0}{d\Mcal}}

\newcommand{\gdiff}{\gamma} 
\newcommand{\hdiff}{\eta} 
\newcommand{\yg}{\putabove{\scriptscriptstyle 1}g}
\newcommand{\yh}{\putabove{\scriptscriptstyle 1}h}
\newcommand{\yR}{\putabove{\scriptscriptstyle 1}R}
\newcommand{\yRic}{\putabove{\scriptscriptstyle 1}{\Ric}}
\newcommand{\yTr}{\putabove{\scriptscriptstyle 1}{\Tr}}
\newcommand{\ynabla}{\putabove{\scriptscriptstyle 1}\nabla}

\newcommand{\yDelta}{\putabove{\scriptscriptstyle 1}\Delta}

\newcommand{\ydH}{\putabove{\scriptscriptstyle 1}{d\Hcal}}
\newcommand{\ydM}{\putabove{\scriptscriptstyle 1}{d\Mcal}}

\newcommand{\yQH}{\putabove{\scriptscriptstyle 1}{\Qcal\Hcal}}
\newcommand{\yQM}{\putabove{\scriptscriptstyle 1}{\Qcal\Mcal}}


\newcommand \Poin {\textnormal{\textbf{P}}}  
\newcommand \Korn {\textnormal{\textbf{K}}}       
\newcommand \Hardy {\textnormal{\textbf{H}}}

\newcommand \Deltaslash {\slashed\Delta{}}

\newcommand \nablaslash {\slashed\nabla{}}

\newcommand \xh {\widehat{x}}

\newcommand \wtrr {\rbf}

\newcommand \Oneone {\mathbf{1}}

\newcommand \ut {\widetilde{u}}
\newcommand \vt {\widetilde{v}}

\newcommand \unquad {\hspace{-1em}\ifmmode\mathopen{}\fi}



%

\newcommand \astar {a_\star}
\newcommand \bstar {b_\star}
\newcommand \pstar {p_\star}

\newcommand \Sphe {\mathbf{S}}

\newcommand \Ball {\mathbf{B}}

\newcommand \Arr {\Acal}
\newcommand \Brr {\Bcal}

\newcommand \Ars {\slashed \Acal}
\newcommand \Brs {\slashed \Bcal}

\newcommand \nut{\widetilde\nu}

\newcommand \gslash {\slashed{g}{}}

\newcommand \Zpar {Z^{\parallel}}
\newcommand \Zperp {Z^{\perp}}

\newcommand \xipar {\xi^{\parallel}}
\newcommand \xiperp {\xi^{\perp}}

\newcommand{\ssAAUX}{\slashed{\Acal}}
\newcommand{\ssA}{\slashed{\ssAAUX}{}}
\declareslashed{}{/}{.1}{0}{\Acal}
\declareslashed{}{/}{.3}{0}{\ssAAUX}
\newcommand{\ssBAUX}{\slashed{\Bcal}}
\newcommand{\ssB}{\slashed{\ssBAUX}{}}
\declareslashed{}{/}{.1}{0}{\Bcal}
\declareslashed{}{/}{.3}{0}{\ssBAUX}
\newcommand{\ssFAUX}{\slashed{\Fcal}}

\declareslashed{}{/}{.1}{0}{F}
\declareslashed{}{/}{.3}{0}{\ssFAUX}
\newcommand{\ssrmAAUX}{\slashed{\mathrm{A}}}
\newcommand{\ssrmA}{\slashed{\ssrmAAUX}{}}
\declareslashed{}{/}{.1}{0}{\mathrm{A}}
\declareslashed{}{/}{.3}{0}{\ssrmAAUX}
\newcommand{\ssrmBAUX}{\slashed{\mathrm{B}}}
\newcommand{\ssrmB}{\slashed{\ssrmBAUX}{}}
\declareslashed{}{/}{.1}{0}{\mathrm{B}}
\declareslashed{}{/}{.3}{0}{\ssrmBAUX}


\newcommand \Seed {\textnormal{\textbf{Seed}}}

\newcommand \Ehat {\widehat{E}}

\newcommand \mbb {\mathbbm{m}}
\newcommand \notreH {\mathscr{H}}

\newcommand \notreM {\mathscr{M}}


\AtBeginDocument{%
  \mathchardef\ordinarycolon=\mathcode`\:
  \mathcode`\:="8000
  \mathchardef\ordinaryequal=\mathcode`\=
  \mathcode`\=="8000
}
\begingroup
\makeatletter
\lccode`~=`: \lowercase{\gdef~{\@ifnextchar={\coloneqq\@gobble}%
    {\begingroup\def~{\ordinarycolon}\colon\endgroup}}}
\lccode`~=`= \lowercase{\gdef~{\@ifnextchar:{\eqqcolon\@gobble}{\ordinaryequal}}}
\endgroup
\renewcommand{\coloneqq}{\mathrel{\mathop\ordinarycolon}\mathrel{\mkern-1.2mu}\ordinaryequal}
\renewcommand{\eqqcolon}{\ordinaryequal\mathrel{\mkern-1.2mu}\mathrel{\mathop\ordinarycolon}}

\newcommand \expoP       {P}
\newcommand \expoPp {\overline{P}}
\newcommand \expoPm {\underline{P}}

\newcommand \dVol {\textnormal{d\textbf{V}}}

\newcommand \seedg {g_{\textnormal{\textbf{s}}}}
\newcommand \seedh {h_{\textnormal{\textbf{s}}}}
\newcommand \seedmodg {g_{\textnormal{\textbf{s}}}^{\textnormal{\textbf{m}}}}
\newcommand \seedmodh {h_{\textnormal{\textbf{s}}}^{\textnormal{\textbf{m}}}}
\newcommand \seedmodgpstarp {g_{\textnormal{\textbf{s}}\,\pstar'}^{\textnormal{\textbf{m}}}}
\newcommand \seedmodhpstarp {h_{\textnormal{\textbf{s}}\,\pstar'}^{\textnormal{\textbf{m}}}}

\newcommand \modu {\infty}

\newcommand \bfDiv {\textnormal{\textbf{Div}}}

\newcommand \aire {\textbf{Area}}

\newcommand \normal {\textnormal{\textbf{si}}} 

\newcommand \Solu {\textnormal{\textbf{Sol}}}

\newcommand \Fpar {F^{\parallel}}
\newcommand \Fperp {F^{\perp}}

\newcommand \Mu {\mathrm{M}}
\newcommand \Nu {\mathrm{N}}
\newcommand \Nuh {\widehat{\Nu}}
\newcommand \Chi {\mathrm{X}}
\newcommand \Kappa {\mathrm{K}}

\newcommand \cutoff {\delta}

\newcommand \Span 	{\mathop{\bf Span}}
\newcommand \Sym 	{{\bf Sym}}

\newcommand \Ucal {\mathcal{U}}

\newcommand \rhopar {\rho^{\parallel}}
\newcommand \rhoperp {\rho^{\perp}}

\newcommand \bnotreH {b^\notreH}

\newcommand\gxedaux{\slashed{g}}

\declareslashed{}{\backslash}{.1}{0}{\gxedaux}
\newcommand\nablaxedaux{\slashed{\nabla}}

\declareslashed{}{\backslash}{.1}{0}{\nablaxedaux}
\newcommand\Deltaxedaux{\slashed{\Delta}}

\declareslashed{}{\backslash}{.1}{0}{\Deltaxedaux}
\newcommand\uxedaux{\slashed{u}}

\declareslashed{}{\backslash}{.1}{0}{\uxedaux}
\newcommand\vxedaux{\slashed{v}}

\declareslashed{}{\backslash}{.1}{0}{\vxedaux}
\newcommand\kxedaux{\slashed{k}}

\declareslashed{}{\backslash}{.1}{0}{\kxedaux}


\newcommand \matQ {Q}
\newcommand \projP {P}
\newcommand \polyP {P}

\newcommand \detbf {\operatorname{\bf det}}

\newcommand \seed {{\textnormal{\textbf{seed}}}}
\newcommand \lin {{\textnormal{\textbf{lin}}}}
\newcommand \qua {{\textnormal{\textbf{qua}}}}
\newcommand \adj {{\textnormal{\textbf{adj}}}}

\newcommand \diam {{\textnormal{\textbf{diam}}}}

\newcommand\mmax{m^{\max}}
\newcommand\Chu{C_{\textnormal{h}}^u}
\newcommand\ChE{C_{\textnormal{h}}^E}
\newcommand\ChNu{C_{\textnormal{h}}^{\Nu}}

\newcommand\barg{\overline{g}}

\newcommand\gammash{\gamma_{\textnormal{sh}}}
\newcommand\cellip{c_{\textnormal{ell}}}
\newcommand\pmax{p_{\max}}

\newcommand\Jmax{|J|^{\max}}
\newcommand\Dtot{\Dcal_{\textnormal{tot}}}
\newcommand\Dsym{\Dcal_{\textnormal{sym}}}
\newcommand\dtot{d_{\textnormal{tot}}}
\newcommand\dsym{d_{\textnormal{sym}}}
\newcommand{\Yperp}{Y^\perp}
\newcommand{\Ypar}{Y^\parallel}
\newcommand\Abb{\mathbb A}

\newcommand\Cstar{C_\star}
\newcommand\xistar{\xi^{\star}}
\newcommand\xistarpar{\xi^{\star\parallel}}
\newcommand\xistarperp{\xi^{\star\perp}}
\newcommand\mseed{m^{\textnormal{\textbf{s}}}}
\newcommand\Jseed{J^{\textnormal{\textbf{s}}}}
\newcommand\pushforward{\star}
\newcommand\pullback{\star}
\newcommand\mmodu{m^\modu}
\newcommand\Jmodu{J^\modu}
\newcommand\umodu{u^\modu}
\newcommand\Zmodu{Z^\modu}
\newcommand\gmodu{g^\modu}
\newcommand\hmodu{h^\modu}

\newcommand\cstun{\varpi_1}
\newcommand\cstdeux{\varpi_2}

\newcommand \uY {\underline{Y}}

\newcommand\ChZ{C_{\textnormal{h}}^Z}
\newcommand\Jrm{\mathrm{J}}

\newcommand\Est{\textnormal{\textbf{Est}}}

\newcommand\CPKHzero{C^{g_0,h_0}_{\textnormal{\textbf{PKH}}}}
\newcommand\unL{\underline{L}}
\newcommand\unH{\underline{H}}
\newcommand\charac{\operatorname{char}}
\newcommand \Err {\mathbf{Err}}

\newcommand{\compresseq}[1]{\medmuskip=#1\medmuskip}
\makeatletter
\newcommand{\compressmath}[2]{\mathpalette{\compressmath@{#1}}{#2}}
\newcommand{\compressmath@}[3]{\mbox{$\m@th#2\compresseq{#1}#3$}}
\makeatother

\newcommand{\cradialH}{c_{\textnormal{radial}}^\notreH}
\newcommand{\cradialHiota}{c_{\iota,\textnormal{radial}}^\notreH}
\newcommand{\CKornM}{C_{\textnormal{Korn}}^\notreM}
\newcommand{\CKornMiota}{C_{\iota,\textnormal{Korn}}^\notreM}

\newcommand{\gammashellHiota}{\gamma_{\iota,\textnormal{shell}}^\notreH}

\newcommand{\gammashellMiota}{\gamma_{\iota,\textnormal{shell}}^{\notreM}}
\newcommand{\gammaKornM}{\gamma_{\textnormal{Korn}}^{\notreM}}
\newcommand{\gammaKornMiota}{\gamma_{\iota,\textnormal{Korn}}^{\notreM}}
\newcommand{\shellKorn}{\mathfrak{K}}
\newcommand{\fluc}{\mathrm{fl}}

\newcommand{\etainv}{\eta_{\textnormal{inv}}}

\newcommand{\deltaH}{\delta^{\notreH}}
\newcommand{\deltaHiota}{\delta^{\notreH}_{\iota}}
\newcommand{\deltaM}{\delta^{\notreM}}
\newcommand{\deltaMiota}{\delta^{\notreM}_{\iota}}

\newcommand{\teps}{\widetilde\eps}
\newcommand{\DcalR}{\mathcal{D}_R}


\usepackage{footnotebackref}
\usepackage{perpage}
\MakePerPage{footnote}
 

\usepackage[textwidth=6in, textheight=9.5in]{geometry}

\newcounter{fig}[section] 

\usepackage{etoolbox}
\AtBeginEnvironment{table}{\refstepcounter{fig}}
\AtBeginEnvironment{figure}{\refstepcounter{fig}}

\begin{document} 

\phantomsection\label{firstpage}

\title{Optimal localization for the Einstein constraints}

\author{Bruno Le Floch\texorpdfstring{$^{1}$}{} and Philippe G. LeFloch\texorpdfstring{$^2$}{}}

\date{}

\maketitle
\footnotetext[1]{Laboratoire de Physique Théorique et Hautes Énergies, Sorbonne Universit\'e \& Centre National de la Recherche Scientifique, 4 Place Jussieu, 75252 Paris, France. Email: {\tt blefloch@lpthe.jussieu.fr}.
}
\footnotetext[2]{Laboratoire Jacques-Louis Lions, Sorbonne Universit\'e \& Centre National de la Recherche Scientifique, 4 Place Jussieu, 75252 Paris, France. Email: {\tt contact@philippelefloch.org}.
\newline
{\it Keywords and Phrases.} 
Shielding gravity;  optimal localization; 
harmonic, radial, and shell stability; curvature functional; 
Poincar\'e, Korn, and Hardy inequalities. 
\hfill  First version: Dec. 2023. This version: August 2026.\!%
}

\begin{abstract}
We establish the existence of asymptotically Euclidean solutions, or initial data sets, to Einstein's vacuum constraints exhibiting both \textit{gravitational shielding} and \textit{arbitrarily low decay}.  We resolve a conjecture of Carlotto and Schoen on gluing two solutions across an asymptotically conical domain: in the interpolation region we establish optimal estimates \textit{at and beyond harmonic decay}, together with ADM invariant estimates.  Starting from a seed data set that asymptotically solves the constraints, we project it to an exact solution, whose difference from the seed inherits the natural optimal radial decay rate.  The seed-to-solution projection operator composes the linearized constraint operator with its formal adjoint.  Its harmonic kernel at infinity generates energy-momentum modulators, which define \textit{relative invariants} equal to the differences of the usual ADM invariants whenever those exist.  We introduce harmonic, radial, and shell stability conditions on the angular gluing function; they ensure the positivity and radial monotonicity of the functionals governing optimal decay.  A follow-up paper proves these conditions from weighted Poincaré, Korn, and Hardy inequalities, leading to verifiable conditions on the localization function.  As particular cases, our theory applies to solutions \textit{without localization} and for solutions in the \textit{small aperture limit}.  Our analysis of the linearized scalar curvature operator relies crucially on a new fourth-order curvature functional, and also applies to gluing scalar-flat metrics.
\end{abstract}

{\small

\begin{spacing}{.95}

\setcounter{secnumdepth}{3}
\setcounter{tocdepth}{3}
 
\tableofcontents

\end{spacing}
}


\section{Introduction}
\label{section=1}

\subsection{Localization in Einstein gravity}
\label{section=1.1}

\paragraph{Main objective.}

We consider $n$-dimensional initial data sets for Einstein's vacuum field equations of general relativity, that is, spacelike hypersurfaces embedded in a Ricci flat, $(n+1)$-dimensional Lorentzian manifold. By definition, such data sets satisfy the Einstein constraints which are the Gauss-Codazzi equations satisfied by the induced geometry on such a hypersurface, namely its first and second fundamental forms. We are interested in asymptotically Euclidean initial data sets and in the anti-gravity phenomenon discovered by Carlotto and Schoen~\cite{CarlottoSchoen} and generalized by Chru\'sciel and Delay~\cite{ChruscielDelay-2018,ChruscielDelay-2021}: in short, the Einstein constraints admit classes of solutions that exhibit some angular localization at infinity, constructed by gluing solutions in different conical domains. Their method builds upon pioneering work on the gluing of solutions in compact domains~\cite{ChruscielDelay-memoir,Corvino-2000,CorvinoSchoen}.  

In the present paper, together with the companion paper~\cite{LL-PoincareKornHardy}, we investigate the decay properties of solutions
generated by gluing and develop an {\it optimal localization theory} for
constructing classes of solutions that

\bei

\item[(i)] satisfy estimates with \emph{(super-)harmonic decay};

\item[(ii)] allow for arbitrarily slow decay, so that the ADM energy
need not be finite; and

\item[(iii)] are obtained by gluing across conical domains with
arbitrarily small aperture.

\eei

\noindent
These results were first presented in~\cite{LL-first-version} in December~2023 and summarized in~\cite{LL-Letter}. Related recent developments include
\cite{AretakisCzimekRodnianski,LeFlochNguyen-preprint,LeFlochNguyen,
MaoOhTao,MaoTao}, which we discuss below. Our framework also includes the isotropic case, in which the localization function is identically one and no angular localization is imposed. In our companion paper
\cite{LL-PoincareKornHardy}, we establish the stability conditions introduced here.


\paragraph{Classes of initial data sets.}
 
The construction and analysis of physically relevant solutions to the Einstein constraints is a central topic in the physical, mathematical, and numerical literature. Recent reviews are provided in Carlotto~\cite{Carlotto-Review} and Galloway et al.~\cite{GallowayMiaoSchoen}. Historically, the subject started with a pioneering work by Lichnerowicz~\cite{Lichne} and the (now called) {\it conformal method,} which, later on, was expanded in many directions; cf.~\cite{ChrusCorvinoIsenberg,DruetHebey,DruetPremoselli,Isenberg-1995,IsenbergMaxwellPollack,IsenbergMoncrief,Maxwell-2005} as well as \cite{Gicquaud} (and the references cited therein). In particular, the pioneering paper by Isenberg~\cite{Isenberg-1995} provides one with a parametrization of all {\it closed} manifolds representing vacuum initial data sets with constant mean curvature. The problem of parametrizing classes of solutions, recently addressed by Maxwell~\cite{Maxwell-2021}, is an important issue that we also tackle in the present paper for {\it non-compact} manifolds.  

A different strategy, referred to as the {\it variational method,} was introduced by Corvino~\cite{Corvino-2000} and Corvino and Schoen~\cite{CorvinoSchoen}, who built on Fischer and Marsden's study of the deformations of the scalar curvature operator~\cite{FischerMarsden-1973,FischerMarsden-1975}. We will follow this line of study in the present paper, while further related results will be quoted throughout our presentation.  

Importantly, both lines of research led to major achievements in general relativity, but also made contact with central developments in Riemannian geometry, including gluing techniques that allow to combine two different manifolds and build ``new'' ones. The research in this direction encompasses classes of compact or non-compact manifolds, whose ends may be asymptotically Euclidean, asymptotically hyperbolic, etc. The techniques of geometric analysis in these papers are varied and rely on the linear and nonlinear differential structure of the Einstein equations: for basic material on the Einstein equations and on elliptic equations we refer to~\cite{Bartnik,Choquet-book} and to~\cite{ChoquetC,DouglisNirenberg,HanLin-book}, respectively. 


\paragraph{Shielding (or anti-gravity) phenomenon.}
  
A complete vacuum initial data set that agrees with the Euclidean data $(\delta,0)$ in a neighborhood of infinity has vanishing ADM energy-momentum.
By the rigidity statement in the positive mass theorem, the data arises from a spacelike hypersurface in Minkowski spacetime; in the time-symmetric case, the data are globally Euclidean.
In contrast, in~\cite{CarlottoSchoen}, Carlotto and Schoen made a remarkable discovery for manifolds with asymptotically Euclidean ends, namely, the existence of localized solutions that, in a neighborhood of infinity, coincide with the Euclidean geometry in all angular directions except within a conical domain with arbitrarily small aperture. Alternatively, one can interpolate between exactly Euclidean and Schwarzschild solutions through a gluing region that is conical at infinity. Subsequently, Chru\'sciel and Delay~\cite{ChruscielDelay-2021} presented a method that applied to more general gluing domains, and Beig and Chru\'sciel~\cite{BeigChrusciel-2017} contributed to this problem in the context of linearized gravity. We refer the interested reader to the reviews~\cite{Carlotto-Review,Chrusciel-bourbaki,GallowayMiaoSchoen}.  

We observe that the method in~\cite{CarlottoSchoen,ChruscielDelay-2021} allows the authors to establish {\it sub-harmonic} estimates within the gluing region, namely $r^{-n+2+ \eta}$ estimates on the metric with respect to a radial coordinate~$r$ at infinity, for any exponent $\eta > 0$.  Carlotto and Schoen in~\cite{CarlottoSchoen,Carlotto-Review} also conjectured that the localization at infinity should be achievable with harmonic  control, corresponding to taking $\eta$ to vanish. 

Next, in a preprint~\cite{LeFlochNguyen-preprint} posted in 2019 (and later published in~\cite{LeFlochNguyen}), {P.~LeFloch} and Nguyen proposed a different approach to the gluing problem and formulated what they called the {\it asymptotic} localization problem, as opposed to the {\it exact} localization problem originally proposed by Carlotto and Schoen~\cite{CarlottoSchoen}. In~\cite{LeFlochNguyen-preprint} a notion of {\it seed-to-solution map} was introduced and estimates at the \mbox{(super-)harmonic} level of decay were indeed proven, so that the gluing at harmonic rate was achieved in the sense that solutions match with the seed data in all angular directions at harmonic level at least, {\it up to} contributions with faster decay. 

After the publication of the optimal gluing results in~\cite{LL-first-version,LL-Letter}, many further advances on the gluing problem took place via independent methods. Namely, Aretakis, Czimek, and Rodnianski~\cite{AretakisCzimekRodnianski} addressed the {\it characteristic gluing problem,} for a class of metrics with prescribed Kerr behavior. Their work implies (as a corollary) Carlotto-Schoen's conjecture on spacelike gluing. Also recently, Mao and Tao~\cite{MaoTao} achieved localization with optimal decay in regions that are asymptotically conical, or even smaller, based on solution operators with support on half-lines, and reproduced the obstruction-free compact gluing in a purely spacelike context together with Oh~\cite{MaoOhTao}. 


\subsection{Overview of the optimal localization theory}  
\label{section=1.2}
 
\paragraph{Optimal localization theory.} 

We build upon the works (cited above) by Carlotto, Corvino, Chru\'sciel, Delay, and Schoen, as well as the recent contribution by P.~LeFloch and Nguyen, and we establish an {\it optimal localization theory,} in which we control the behavior of harmonic terms arising from the gluing. Namely, we describe precisely the \emph{harmonic terms} associated with seed data sets and we provide a {\it parametrization of solutions.} Importantly, this theory, 
combined with the stability criteria proven in our companion paper~\cite{LL-PoincareKornHardy},
yields a complete and direct proof of Carlotto and Schoen's conjecture in the variational approach. Our results are summarized in the next paragraphs, while precise definitions and statements are given throughout \refwithname{Sections}{section=2}, \ref{section=3}, \refwithname{and}{section=4}, below. For a schematic illustration, we refer to \autoref{figure---111}.
  
 
\paragraph{Einstein's constraint equations.}
 
For $n \geq 3$, we are interested in $n$-dimensional Riemannian manifolds $(\Mbf, g, k)$ with (possibly) several asymptotic ends, endowed with a Riemannian metric $g$ and a symmetric $(0,2)$-tensor field $k$, which represents the extrinsic curvature in the spacetime picture. The {\it Hamiltonian and momentum constraints} read as follows\footnote{For instance, see~the textbook~\cite[Chap.~VII]{Choquet-book}.}: 
\bel{eq:ee11}
\aligned
R_g + (\Tr_g k)^2 -  | k |_g^2 & = 0,
\qquad
\bfDiv_g \bigl( k - (\Tr_g k) g \bigr) = 0.
\endaligned
\ee  
Here, $R_g$ denotes\footnote{In any local coordinate chart $(x^j)$ we write $g = g_{ij} dx^i dx^j$, so that $\Tr_g k = g_{ij} k^{ij}$ and $| k |_g^2 = k^{ij} k^{lm} g_{il} g_{jm}$, while the divergence operator reads $(\bfDiv_g k)_j = \nabla_i k^i_j$. The  Levi-Civita connection of $g$ is denoted by $\nabla$ and the range of Latin indices is taken to be $i,j, \ldots = 1,2,\dots,n$.} the scalar curvature of $g$, while $\Tr_g k$  and $| k |_g$ denote the trace and norm of $k$, respectively, and $\bfDiv_g$ stands for the divergence operator.   It is convenient to introduce the $(2,0)$-tensor $h$ by 
\be 
h \coloneqq \big( k - \Tr_g(k) g\big)^{\sharp\sharp},
\ee
where the sharp symbol refers to the identification between covariant and contravariant tensors, induced by the metric $g$. From now on, we work exclusively with the unknowns $(g,h)$, and we express the Hamiltonian and momentum operators as 
\be
\aligned
\Hcal(g,h) 
& \coloneqq R_g + {1  \over n-1} \bigl( \Tr_g h \bigr)^2 - | h |^2_g,
\qquad\quad
\Mcal(g,h) \coloneqq \bfDiv_g h, 
\endaligned
\ee
which are scalar-valued and vector-valued, respectively. We then formulate Einstein's  vacuum constraints as 
\bel{eq:Einstein00}
\Gcal(g,h) \coloneqq \big( \Hcal(g,h), \Mcal(g,h) \big) = 0
\ee
and observe that $\Gcal(g,h)$ can be interpreted as an $(n+1)$-dimensional vector field in the spacetime picture.  


\paragraph{Localized seed-to-solution projection.}

Before we tackle the main problem of interest in this paper (namely, the control of harmonic terms), we are going to introduce a  framework that encompasses a broad class of initial data sets and is formulated so as to provide us with basic continuity and decay   estimates in suitably weighted norms. Specifically, we parametrize solutions in the ``vicinity'' of a given {\it localization data set,} as we call it. The localization data set serves to specify the underlying geometry of interest, especially the gluing domain and the asymptotic structure. On a given manifold $\Mbf$ we introduce a localization domain $\Omega \subset \Mbf$ together with a weight of the form 
\bel{omegabfp}
\omegabf_p = \lambdabf^{\expoP} \wtrr^{n/2-p}, 
\ee
which provides a localization in angular directions: here, $\lambdabf \geq 0$ vanishes linearly on the boundary of $\Omega$, the exponent $\expoP$ is a large integer, and the factor $\wtrr^{n/2-p}$ controls the radial behavior in each asymptotic end. Our parametrization is presented in terms of a {\it localized seed-to-solution projection operator} denoted by 
\bel{equa-sol-map-repeat} 
\Solu^{\lambdabf}_{n,p}:  (\seedg,\seedh) \in \Seed(\Omega, g_0,h_0,p_G, p_A,\eps_G,\eps_A)
\mapsto (g,h) \in [(\seedg,\seedh)], 
\ee
which maps an approximate solution (the {\it seed\/}) to an exact solution of the Einstein constraints that lies in an affine space based on the seed. This standpoint extends the (non-localized) formulation in~\cite{LeFlochNguyen} and, for further details, we refer to \autoref{def-mapping-2}, below. 

The parametrization of solutions to the Einstein constraints by \emph{seed data sets} prescribed at infinity was a key ingredient in~\cite{LeFlochNguyen-preprint} in order to solve the asymptotic localization problem, as well as in our 2023 preprint~\cite{LL-first-version} to solve the Carlotto-Schoen localization problem beyond the harmonic rate.  The technique of prescribing an Ansatz at infinity to generate a class of solutions to the constraints can be traced back to the work of Beig and \'O~Murchadha~\cite{BeigMurchadha}, which was based on the conformal method, further extended in the work by Bieri, Garfinkle, Isenberg, Maxwell, and Wheeler in a 2025 preprint~\cite{Bieri-constraints} for the construction of broad classes of solutions. 


\paragraph{Conical gluing framework.}

Within this general setup, we specialize our investigation to asymptotically Euclidean manifolds and conical gluing domains. In a preliminary stage, we prove the existence of solutions enjoying {\it sub-harmonic estimates;} cf.~\autoref{thm:sts-existence}. While we follow closely Carlotto and Schoen~\cite{CarlottoSchoen}, our presentation (based on a different iteration scheme) has the advantage of separating clearly between the roles (and ranges) of the relevant decay exponents. 

\bei 
 
\item The \textbf{projection exponent} for the solution map $\Solu^{\lambdabf}_{n,p}$ is denoted by $p \in (0,n-2)$ and arises in the variational formulation of the linearized Einstein operator.
 
\item The \textbf{geometry exponent} for the seed data set $(\seedg,\seedh)$ is denoted by $p_G >0$ and specifies the (possibly very low) decay of the metric and extrinsic curvature. Namely, the solutions at infinity are solely required to decay as $r^{-p_G}$ for some $p_G>0$ (possibly close to~$0$). The ADM energy-momentum vector of such a solution need not be well-defined. 
 
\item The \textbf{accuracy exponent} $p_A \geq \max(p,p_G)$ describes the accuracy of the seed data set, regarded as an approximate solution of the Einstein equations in the vicinity of the asymptotic ends. 
 
\eei 

\noindent We work in suitably weighted H\"older--Lebesgue norms and derive first basic estimates which will be useful throughout this paper. At this stage of our analysis, we content ourselves with the sub-harmonic estimates available for both the linearized constraints and the nonlinearities of Einstein constraints.


\paragraph{Harmonic stability.}

Next, building upon the proposed projection framework we turn our attention to the core of the present paper, devoted to a new method for deriving sharp integral and pointwise estimates for Einstein's initial data sets. Our harmonic results involve 

\bei 

\item a \textbf{sharp decay exponent}, which is denoted by $\pstar \in [p, p_A]$ and describes the control of the difference ${(\seedg,\seedh) - (g,h)}$ between the seed data set and the actual solution to the Einstein constraints, up to harmonic terms (cf.~\eqref{equa-xjs3}). 

\eei 

\noindent In the course of our analysis, we are led to introduce a lower bound for the projection exponent,
\bel{pflatn-def}
p^\flat_n \coloneqq \frac{(n-1)(n-3)}{2(n-2)} = \frac{n-2}{2} - \frac{1}{2(n-2)} ,
\ee
and an upper bound $p^{\lambdabf}_{n,p}> n-2$ for the sharp decay exponent, that depends on the geometry and on~$p$,
so that the range of interest for the estimates is
\bel{exponent-range}
\aligned
& 0 < p_G \leq p_A, \qquad
&& p^\flat_n < p < n-2, \qquad
\\
& p \leq \pstar \leq p_A , \qquad  
&& \pstar < \min(p^{\lambdabf}_{n,p}, n-2+p_G). 
\endaligned
\ee
Importantly, to accommodate localized solutions with possibly very low decay (small $p_G>0$) our analysis encompasses a wide range of decay exponents. After applying the projection operator $\Solu^{\lambdabf}_{n,p}$, we prove that ${(\seedg,\seedh) - (g,h)}$ contains a harmonic contribution which arises in the asymptotic structure of the solution at each asymptotic end, even if the seed has very high accuracy exponent~$p_A$. By examining an asymptotic version of the constraints at infinity, we unveil certain (asymptotic kernel) contributions associated with the sphere at infinity, which we refer to as \emph{energy-momentum modulators.}  


\paragraph{Overview of the main results.}

Our main statements will be presented in \autoref{theo--beyond-harmonic}, \autoref{thm:informal-sufficient-stability}, \autoref{thm:informal-sufficient-stability-M}, and \autoref{thm:stable-s2}, below, after introducing required concepts. At this stage we state an informal version of  \autoref{theo--beyond-harmonic}. (For an overview, we also refer to~\cite{LL-Letter}.) 

\begin{theorem}[Optimal localization theory]
\label{theoinformel}  
Consider a \emph{conical localization data set} denoted by $(\Mbf, \Omega, g_0,h_0, \wtrr, \lambdabf)$ (\autoref{def-conical}) together with (projection, geometry, accuracy, sharp decay) exponents $(p, p_G,p_A, \pstar)$ satisfying~\eqref{exponent-range} and, within the gluing domain $\Omega$ and in preferred charts at infinity (equipped with a Euclidean metric~$\delta$), the pointwise decay conditions 
\be
\aligned
g_0 - \delta & = \Obig(r^{-p_G}),
&
h_0 & = \Obig(r^{-p_G-1}) \quad 
& \text{ in each end.}
\endaligned
\ee
Assume that $\lambdabf$ is a \emph{stable localization function}\footnote{Sufficient criteria for this stability are stated next} in the sense of \autoref{def-41-stable}, below. 
\bei 

\item For any localized seed data set $(\seedg, \seedh)$ (\autoref{def-aset}) that is sufficiently close to $(g_0, h_0)$ and provides an ``asymptotic solution'' in the sense that in each end\footnote{In the case $\pstar=n-2$, it is in fact sufficient that the \emph{signed} componentwise integrals in~\eqref{equa-mstarJstar} converge.}
\bel{Hgh-Mgh-cond}
\Gcal(\seedg, \seedh) = \Obig(r^{-p_A-2}),
\quad \text{and} \quad
|\Gcal(\seedg, \seedh)|_{g_0} \text{ integrable in the case $\pstar=n-2$,}
\ee
there exists a solution $(g,h)$ to the Einstein constraints~\eqref{eq:Einstein00} defined by variational projection of $(\seedg, \seedh)$ (\autoref{thm:sts-existence}) and enjoying the pointwise decay  
\bel{equa-xjs3}
\aligned
g &= \seedg + \osmall(r^{-\pstar}),
\qquad 
&&h = \seedh + \osmall(r^{-\pstar-1})  && \text{ when } \pstar = [p, n-2), 
\\
g &= \seedmodg + \osmall(r^{-\pstar}),
\qquad 
&&h = \seedmodh + \osmall(r^{-\pstar-1})  && \text{ when } \pstar \geq n-2.
\endaligned
\ee 

\item In~\eqref{equa-xjs3}, the so-called \emph{modulated seed data set} $(\seedmodg, \seedmodh)$ (\autoref{def-317}) is the sum of the seed data set $(\seedg, \seedh)$ and of \emph{energy-momentum modulators} in the (one-dimensional and $n$-dimensional) kernels of the \emph{harmonic operators} associated with the Hamiltonian and momentum operators (\autoref{def-operators-decomposer}).  

\eei
\end{theorem}

Sufficient conditions for Hamiltonian and momentum stability will be explained in \refwithname{Theorems}{thm:informal-sufficient-stability} \refwithname{and}{thm:informal-sufficient-stability-M}, below, informally summarized as follows.  They apply to gluing in thin cones.
The limit where the angular domain is close to the whole sphere (that is, a thick gluing) is also of interest. However, we will not pursue this question in the present paper. 

\begin{theorem}[Optimal localization theory with small aperture]
\label{theo-smallaper}
The stability conditions in \autoref{theoinformel} are satisfied in any gluing domain $(\Omega, \lambdabf)$ whose asymptotic ends have sufficiently small\/\footnote{Deriving the optimal constants for our theory is left as an open problem.} weighted Poincaré constants and angular aperture,
and for a projection exponent $p<n-2$ sufficiently close to the harmonic limit $p\to n-2$.
For a fixed angular profile of $\lambdabf$, these geometric conditions on the sphere at infinity hold whenever $\Omega$ has sufficiently small aperture and $n-2-p$ is sufficiently small.
\end{theorem}


\paragraph{A new result even without localization.}

Interestingly, our framework also applies to the construction of solutions \emph{without} localization at infinity.
We summarize \autoref{thm:stable-s2}, presented below, informally as follows, and mention a special case of \autoref{theoinformel} of particular interest to the parametrization of initial data sets with arbitrarily low decay rate $p_G>0$. Our results are new even in this case, as the analogous results in~\cite{LeFlochNguyen} require $p_G\geq 1/2$ in dimension $n=3$.

\begin{theorem}[Optimal radial decay without localization]
\label{thm:no-loc}
The localization domain $\Omega=\Mbf$ with a constant localization function $\lambdabf\equiv 1$ satisfies the stability conditions in \autoref{theoinformel} for $p<n-2$ sufficiently close to $n-2$ in dimension $n\leq 17$.
The conclusions of \autoref{theoinformel} thus hold.
In particular, for any data set $(\seedg, \seedh)$ on $\RR^n$ with sufficiently small
\be
\seedg-\delta=\Obig(r^{-p_G}) , \qquad
\seedh=\Obig(r^{-p_G-1}) , \qquad
\Gcal(\seedg, \seedh) = \Obig(r^{-p_A-2}),
\ee
for some exponents $0<p_G\leq n-2 < p_A$, there exists an exact solution $(g,h)$ of the constraints that differs from $(\seedg,\seedh)$ by energy-momentum modulators with harmonic decay and terms with some strictly larger radial decay exponent $\pstar>n-2$.
\end{theorem}
 
\paragraph{Arbitrarily slow decay.}

Our earlier papers~\cite{LL-first-version} posted in 2023 and~\cite{LL-Letter} in 2024, together with all the theorems in the present paper, encompass solutions with \emph{arbitrarily slow decay}, namely $g=\delta+\Obig(r^{-p_G})$ with $p_G>0$. The problem of constructing solutions at such an arbitrarily slow rate was recently systematically studied by several authors with distinct aims and methods: Fang--Szeftel--Touati in~2024, and Chen--Klainerman in~2025, and Shen--Wan in~2026.
We refer the reader to both the posted\footnote{The relevant preprints listed in chronological order, are:
Le~Floch--LeFloch,
\href{https://arxiv.org/abs/2312.17706}{\ttfamily arXiv:2312.17706};
Fang--Szeftel--Touati,
\href{https://arxiv.org/abs/2401.14353}{\ttfamily arXiv:2401.14353};
Le~Floch--LeFloch,
\href{https://arxiv.org/abs/2402.17598}{\ttfamily arXiv:2402.17598};
Fang--Szeftel--Touati,
\href{https://arxiv.org/abs/2405.02071}{\ttfamily arXiv:2405.02071};
and Chen--Klainerman,
\href{https://arxiv.org/abs/2512.22704}{\ttfamily arXiv:2512.22704}, and 
Shen--Wan \href{https://arxiv.org/abs/2602.01557}{\ttfamily arXiv:2602.01557}.} and published 
versions~\cite{LL-first-version,Fang-1, LL-Letter, Fang-2,ChenK,ShenWan}.

\begin{table}
\centering
\setlength{\tabcolsep}{5pt}
\caption{Notation associated with the localized Hamiltonian and momentum}
\centerline{%
\begin{tabular}{cccccc}
\toprule
& \textbf{decomposition} & \textbf{functionals} & \textbf{averages}& \textbf{normalization} & \hspace{-2em}\textbf{silhouette}
\\
\midrule
$\notreH^\lambda$  
& $\Arr, \, \Ars^\lambda, \, \ssA^\lambda$ 
&
$\Phi^\notreH\!, \Psi^\notreH_\beta\!, \Kappa^\notreH$
& $\la \nu \ra$ 
& $\la - \Deltaslash\nu + d_{n,p} \nu \ra$ 
& $\nu^\normal$
\\
\midrule
$\notreM^\lambda$  
& $\Brr, \, \Brs^\lambda, \, \ssB^\lambda$ 
&
$\Phi^\notreM\!, \Psi^\notreM_\beta\!, \Kappa^\notreM$
& $\compresseq{0}\bigl\la 2 \xh_l\, \xi^{\perp} + \xi_{l}^{\parallel} \ra$
& $\compresseq{0}\bigl\la - \nablaslash_l \xi^{\perp} + 2a_{n,p}  \xh_l\, \xi^{\perp} + (a_{n,p} +1) \xi_{l}^{\parallel} \ra$
& $\xi^{\normal(j)}$
\\
\bottomrule
\end{tabular}%
}
\label{table-structure}
\end{table}

\subsection{Originality and perspectives}
\label{section=1.3}

\paragraph{Other recent developments.}

A vast literature is available on the construction of initial data sets with specific local or asymptotic properties. In addition to the contributions already mentioned, let us point out further recent developments. 
By leveraging nonlinearities of the Einstein constraints, Czimek and Rodnianski~\cite{CzimekRodnianski} lifted the codimension~$10$ restriction on the Kerr data at infinity.  At the $C^3$-regularity level, Sansom~\cite{Sansom-2024} established null gluing modulo a $20$-dimensional obstruction space.
More recently, Fang, Szeftel, and Touati~\cite{Fang-1,Fang-2} constructed initial data sets that are suitable for the evolution of arbitrarily decaying perturbations of the Kerr solution. Other recent advances on the gluing problem include contributions by Corvino and Huang~\cite{CorvinoHuang} (solutions with matter) and Anderson et al.~\cite{AndersonCorvinoPasqualotto} (multi-localized solutions). In addition, we refer to Lee et al.~\cite{LeeLesourdUnger} for the study of initial data sets with boundaries.
We also refer to Henneaux~\cite{Henneaux} who extends the Corvino-Schoen theorem 
with supertranslations, as well as to the series of papers by Hintz~\cite{Hintz1,Hintz2} on gluing small black holes along timelike geodesics. We also mention the recent contributions by Bieri et al.~\cite{Bieri-constraints}, concerning prescribed asymptotic structures, and Chen et al.~\cite{ChenK}, concerning general free data.


\paragraph{A decomposition of the Einstein constraints.}

Let us outline here some key steps of our method, contrasting it with existing techniques in the literature.
Our construction of the solution $(g,h)$ starting from $(\seedg,\seedh)$ uses a fixed-point method based on repeatedly solving the linearized problem, as in~\cite{CarlottoSchoen}.  Replacing the Picard iteration scheme by a Newton iteration (see \autoref{appendix=E}) allows $(g,h)$ to have the same regularity as the seed, which is necessary to interpret the seed-to-solution map as a projection onto the space of constraints.
At this stage, $g-\seedg$ and $h-\seedh$ are known to decay as $r^{-p}$ and~$r^{-p-1}$, respectively.

This decay must be improved (assuming stability of~$\lambdabf$) from an exponent $p$ to~$\pstar$ to establish \autoref{theoinformel}.

After a suitable reduction at infinity and by neglecting various perturbation terms, we will be led to consider a single conical domain $\Omega_R$ in the Euclidean space $(\RR^n, \delta)$, consisting of the intersection of the exterior of a ball with radius $R>1$ and an open cone (that is, a union of half-lines $\RR_+x$) in $\RR^n$. Within this domain, we study the asymptotic properties of (the square of) the linearization of the localized Hamiltonian and momentum operators, defined as\footnote{Here, indices are freely raised and lowered using the metric~$\delta$, and implicit summation over repeated indices $i,j,\dots$ is used, {\it even when} they are both lower or upper indices.}
\bel{equa:acalew0} 
\aligned
\notreH^\lambda[u] 
& \coloneqq \omega_p^{-2} \, \Bigl(
(n-1) \, \Delta(\omega_p^2 \, \Delta u)
+ \del_i\del_j(\omega_p^2) \del_i\del_j u - \Delta(\omega_p^2) \Delta u
\Bigr),
\\
\notreM^\lambda[Z]^i 
& \coloneqq - \frac{1}{2} \, (\Delta Z_i + \del_j \del_i Z_j) - (\del_j \log \omega_{p+1}) \, (\del_j Z_i + \del_i Z_j), 
\endaligned
\ee
in which $u: \Omega_R \to \RR$ is a scalar-valued unknown and $Z: \Omega_R \to \RR^n$ is a one-form-valued unknown. As in~\eqref{omegabfp}, $\omega_p:\Omega_R \to [0,+\infty)$ is a weight of the form $\omega_p = \lambda^{\expoP} r^{n/2-p}$ with a power-law dependence in the radial distance $r\geq R$ and which vanishes as powers of the distance to the angular boundary of $\Omega_R$, but not the $r=R$ boundary, in contrast to~$\omegabf_p$ which vanishes at all boundaries of~$\Omega$.

We introduce here the \textbf{harmonic-spherical decomposition} of these operators (as we call it), which is adapted to the study of the harmonic decay of solutions and takes the form 
\bel{equa-key-decompose-H-repeat} 
r^4 \notreH^\lambda[u] = \Arr[u] + \Ars^\lambda[u] + \ssA^\lambda[u], 
\qquad   
r^2 \notreM^\lambda[Z] = \Brr[Z] + \Brs^\lambda[Z] + \ssB^\lambda[Z]. 
\ee
The (``double slashed'') harmonic operators are defined by plugging harmonic-decaying functions of the form $\nu \, r^{-2( n-2-p)}$ and $\xi \, r^{-2( n-2-p)}$ and, specifically, we define 
\bel{equa-asymptoOper-reapeat} 
\aligned 
\ssA{}^\lambda[\nu] & \coloneqq r^{4+2(n-2-p)} \notreH^\lambda[\nu \, r^{-2( n-2-p)}],
\qquad   
\ssB{}^\lambda[\xi] \coloneqq r^{2+2(n-2-p)} \notreM^\lambda[\xi \, r^{-2( n-2-p)}]. 
\endaligned
\ee
The Hamiltonian and momentum harmonic operators $\ssA{}^\lambda,\ssB{}^\lambda$ are non-self-adjoint elliptic operators.  If the localization function~$\lambda$ satisfies the \textbf{harmonic stability conditions} (coercivity of quadratic functionals associated to $\ssA{}^\lambda,\ssB{}^\lambda$), we prove that their kernels are of dimension $1$ and $n$, respectively.  These kernel elements give solutions of $\notreH^\lambda[u]=0$ and $\notreM^\lambda[Z]=0$ with $r^{-2(n-2-p)}$ decay that are responsible for the energy-momentum modulators with harmonic decay in \autoref{theoinformel}.  These, in turn, adjust the ADM energy and momentum at each asymptotic end.


\paragraph{A novel strategy.}

The core of our linearized analysis relies on new quadratic functionals for the Einstein constraints, combined with radial differential equations on spherical averages.  These enable us to prove the (integral, pointwise) radial decay of solutions of $\notreH^\lambda[u]=E$ and $\notreM^\lambda[Z]=F$ for suitably decaying sources.
We describe here the strategy for the Hamiltonian operator (\refwithname{Sections}{section=5}--\ref{section=7}); the treatment of the momentum operator (\refwithname{Sections}{section=8}--\ref{section=9}) is analogous.
To simplify the presentation we omit the source~$E$, and simply show that solutions of $\notreH^\lambda[u]=0$ decaying faster than~$r^{-(n-2-p)}$ (provided by the existence theorem) must decay at least as~$r^{-2(n-2-p)}$ (corresponding to a metric with harmonic decay).
This setting is rich enough to uncover the stability conditions that must be imposed on the localization function~$\lambda$.

We introduce the notation $a=a_{n,p}=2(n-2-p)$, which is positive and is small for $p$~close to harmonic.  The functionals of interest are integrals over spherical shells $\Lambda_r\coloneqq\Omega\cap\Sphe_r$ of radius $r\in[R,+\infty)$, whose integrand involves $u$ and its radial derivatives restricted to~$\Lambda_r$.
The short-hand notation $\norm{u}^\notreH$ for the sum of normalized weighted angular $L^2$ norms of $u$ and its (angular and radial) derivatives of order up to~$2$ will prove useful, that is, 
\bel{equa-normsH-0}
\bigl( \norm{u}^\notreH \bigr)^2
\coloneqq \| \vartheta^2 u\|^2_{\unL^2_{-\expoP}(\Lambda_{r})} + \| \vartheta u\|^2_{\unH^1_{-\expoP}(\Lambda_{r})} + \| u\|^2_{\unH^2_{-\expoP}(\Lambda_{r})}, 
\ee
defined in \eqref{equa-normsH} below in suitable weighted Sobolev spaces. 

We now outline our method.

\bei 

\item 
The starting point of our journey is the \textbf{shell functional} $\Phi^\notreH[u](r)$, a non-negative functional~\eqref{PhinotreH-expr} obtained by integrating on~$\Lambda_r$ a quadratic combination of $u$ and its first and second-order derivatives.  In other words, $0\leq\Phi^\notreH[u]\lesssim(\norm{u}^\notreH)^2$.
This functional obeys the shell identity (cf.~\eqref{equa-condition-monotone})
\bel{intro-shell}
a (r\del_r + a) (r\del_r + 2a) \Phi^\notreH[u]
= (r\del_r + 2a) \Psi^\notreH_{a}[u]
- (r\del_r + a) \Psi^\notreH_{2a}[u]
\ee
where the dissipation functionals $\Psi^\notreH_{\beta}[u]\lesssim(\norm{r\del_r u}^\notreH + \norm{u}^\notreH)^2$ for $\beta\in\{a,2a\}$ are quadratic in~$u$ and up to three derivatives (except for third angular derivatives).
Solving this ordinary differential equation (with $u=\osmall(r^{-a})$) gives
\bel{intro-Phi-solve}
\Phi^\notreH[u](r) = \Cstar r^{-2a}
- \sum_{\beta=a,2a} \int_R^{+\infty} \Psi^\notreH_\beta[u](s) f_\beta(r,s) ds ,
\ee
with weights $f_\beta(r,s)\geq 0$.
If $\Psi^\notreH_\beta$ were coercive functionals, \eqref{intro-Phi-solve}~would force $\Phi^\notreH[u]$ and the integrals of $\Psi^\notreH_\beta$ to decay as~$r^{-2a}$, hence $u$~to decay as~$r^{-a}$.
Unfortunately, coercivity is too much to ask for: instead, we find that $\Psi^\notreH_\beta[u] + C \la u\ra^2$ can be coercive for a large constant $C>0$, where $\la u\ra$ is the weighted average of~$u$ on~$\Lambda_r$.

\item The second ingredient is a radial differential equation obeyed by the average~$\la u\ra$, with a source expressed in terms of the fluctuations $\ut = u-\la u\ra$.
Solving this equation yields (for $u=\osmall(r^{-a/2})$)
\bel{intro-uave}
\la u(r)\ra = \Cstar r^{-a} + C_+ r^{-\beta_+} + \int_R^{+\infty} f_\Kappa(r,s) \Kappa^\notreH[\ut](s) ds
\ee
where $\beta_+>a/2$,
for a suitable weight $f_\Kappa$, and where $\Kappa^\notreH$ is a linear functional bounded by $\norm{r\del_r u}^\notreH + \norm{u}^\notreH$.
If $\beta_+<a$, then the homogeneous term $r^{-\beta_+}$ makes~\eqref{intro-uave} unusable to reach the harmonic decay, while for $\beta_+>a$ this term is negligible.  This motivates the \textbf{radial stability condition} $\beta_+>a$.

\item
Combining \eqref{intro-Phi-solve} and~\eqref{intro-uave} leads schematically to
\bel{intro-Phi-plus-Psi}
\Phi^\notreH[u](r) + \bigl(\text{integral of } \Psi^\notreH_\beta[u] + C \la u\ra^2 \bigr)
- \bigl(\text{integral of } C \, \Kappa^\notreH[\ut]^2\bigr) = \Obig(r^{-2a})
\ee
The need to control $\Kappa^\notreH$ motivates the \textbf{shell stability condition}, namely
\be
\Psi^\notreH_\beta[u](s) + C \bigl( \la u\ra^2 - \cradialH \Kappa^\notreH[\ut]^2 \bigr) \gtrsim (\norm{r\del_r u}^\notreH + \norm{u}^\notreH)^2 , \qquad \beta \in \{a, 2a\}
\ee
for an explicitly known $\cradialH>0$ that arises from radial Hardy inequalities.
Under this condition, the left-hand side of~\eqref{intro-Phi-plus-Psi} is positive and $\norm{r\del_r u}^\notreH, \norm{u}^\notreH = \Obig(r^{-a})$, which implies $u=\Obig(r^{-a})$.

\item To go beyond this harmonic decay rate, the shell identity~\eqref{intro-shell} is perturbed by replacing $2a$ by a slightly larger exponent.  All the steps go through except that the $\Cstar r^{-a}$ term in the expression~\eqref{intro-uave} of~$\la u\ra$ remains.  The harmonic stability condition ensures that $\ker\ssA{}^\lambda$ is one-dimensional and the radial stability condition that kernel elements have non-vanishing average.  Thus, one can shift $u$ by $r^{-a}$ times an element of the kernel to cancel $\Cstar r^{-a}$ in $\la u\ra$.  The resulting function then decays with an exponent strictly better than harmonic, as desired.

\eei

\paragraph{Improving the radial decay of the nonlinear solution.}

In \refwithname{Sections}{section=10} to~\ref{section=11}, we leverage the linear estimates to control the radial decay of ${(g-\seedmodg,h-\seedmodh)}$, or equivalently of some underlying scalar and vector fields~$(u,Z)$.
To avoid clutter we shall focus on~$u$ only.
Initially, from the existence of the seed-to-solution projection, $u$ is known to decay as~$r^{-a}$.
To improve this decay, the key is that $u$~solves a non-linear equation $\notreH^\lambda[u]=E+\Qcal[u]$ where $E$ decays strongly (as determined by~$p_A$) and the non-linearities $\Qcal[u]$ decay with a strictly better exponent than the known decay of~$u$.
The estimates on $\notreH^\lambda$ then ensure that $u = (\notreH^\lambda)^{-1}\bigl[\Ehat\bigr]$ decays like $\Ehat=E+\Qcal[u]$.
This improvement stops when reaching either the decay rate of~$E$, or the maximum exponent available in linear estimates, or an exponent of resonances between energy-momentum modulators and the seed metric.  These three cases explain the upper bound on $\pstar$ in~\eqref{exponent-range}.

Technical difficulties plague this simple story.

\bei

\item 
Since several iterations are needed to reach the desired decay exponent, it is important that all Sobolev estimates on~$u$ feature the same derivative order to avoid a large loss of regularity radially.  Elliptic regularity does not make up for such a loss, as it degrades the exponent of~$\lambda$ and would require dealing with a discrepancy between the norms of interest and the powers of~$\lambda$ appearing in the operators.

\item The radial control of solutions of the linear problem $\notreH^\lambda[u]=\Ehat$ relies on shell functionals~$\Psi^\notreH_\beta$ that involve up to three radial derivatives of~$u$.  Accounting for the invertibility of~$\notreH^\lambda$, the functionals are well-defined provided $\Ehat$ and $r\del_r\Ehat$ are \emph{both} in a negative Sobolev space~$H^{-2}$.  This additional radial regularity has to be tracked throughout each improvement of the radial decay estimates.

\item 
If $u$ is known to have a given radial decay rate in an $H^2$~sense, the coercivity of \emph{curved} linearized constraints, suitably commuted with~$r\del_r$, is used to deduce that $r\del_r u$ has that decay rate in an $H^2$~sense too.  This then implies the desired $H^{-2}$ control of $\Qcal[u]$ and $r\del_r\Qcal[u]$, which are needed in order for linear estimates on~$\notreH^\lambda$ to improve the rate.  The cyclic process is illustrated in \autoref{fig:estimates} below.
\eei
\noindent While the treatment of nonlinearities uses many of the same estimates as the iteration scheme used to construct $(g,h)$, we emphasize that at this stage $(g,h)$ (or $(u,Z)$) is \emph{already determined}.  The number of iterations is finite, but is large if $p_G>0$ is small.


\paragraph{Perspectives.}

The stability mechanism developed in this paper is not tied to the
optimal localization problem. Built around the Hamiltonian and momentum
shell functionals introduced and applied here ---and fully investigated in the
companion paper~\cite{LL-PoincareKornHardy}--- it encodes the radial
evolution of (semi-)coercive quantities while retaining the contribution
of angular fluctuations. We expect this structure to provide a
natural framework for other curvature problems in which radial
propagation, monotonicity, and semi-coercivity interact. Such functionals
play a central role in geometric analysis, notably in the study of
minimal surfaces, mean-curvature flow, and Ricci flow, where they govern
global evolution and the formation of singularities; see, for instance,
\cite{BrendleHuisken:2017,CDHS,ColdingMinicozzi}.

\begin{figure}
    \centering
    \begin{subfigure}[b]{\textwidth}
        \centering
        \includegraphics[height=3.5cm,width= 0.3\textwidth]{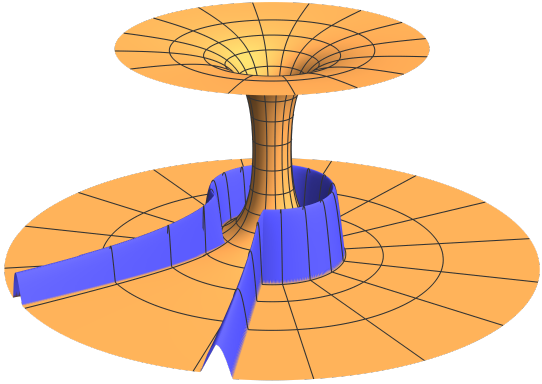}
        \includegraphics[height=3.5cm, width= 0.3\textwidth]{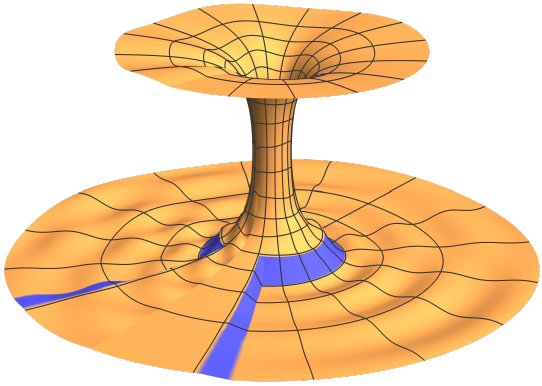}
        \includegraphics[height=3.5cm, width= 0.3\textwidth]{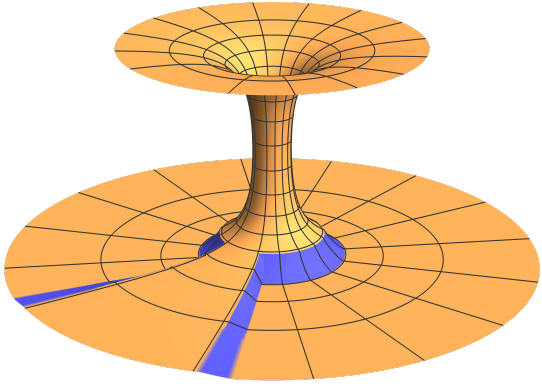}
        \caption{Gluing of the Euclidean metric (outside a conical domain) and the Schwarzschild metric (inside a conical domain). 
        {Left:~\it exact localization with sub-harmonic control.} 
        {Middle:~\it asymptotic localization with harmonic control.} 
        {Right:~\it exact localization with harmonic control.}}
        \label{figure---2} 
    \end{subfigure}
    
    \bigskip
    
    \begin{subfigure}[b]{\textwidth}
        \centering
        \includegraphics[height=4.cm,width= 0.4\textwidth]{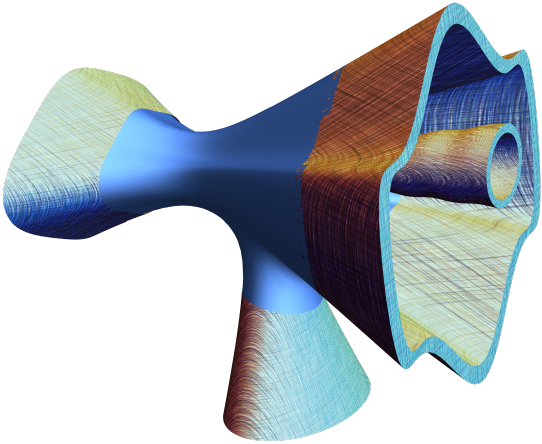}
        \caption{An example of gluing domain with multiple asymptotic ends}
        \label{figure---25}
    \end{subfigure}
    \caption{Schematic illustration for the optimal localization method  in three dimensions}
 \label{figure---111}
\end{figure}

\subsection{Outline and notation} 
\label{section=1.4}
 
\paragraph{Outline of this paper.}

In \autoref{section=2}, we begin with some notation and the notion of seed-to-solution projection (cf.~\refwithname{Definitions}{def-locali-weight} \refwithname{to}{def-mapping-2}). In \autoref{thm:sts-existence} we state standard sub-harmonic estimates which hold for a broad class of decay exponents. In \autoref{section=3}, we introduce the stability conditions that are required for the analysis of the harmonic decay of solutions, including the notions associated with the harmonic Hamiltonian and momentum operators and the modulators. Next, in \autoref{section=4} we state our main result concerning the \mbox{(super-)harmonic} estimates enjoyed by solutions to the localized constraints; cf.~\autoref{theo--beyond-harmonic}. 

While \autoref{thm:sts-existence} relies on basic properties of the linearized Einstein constraints (and is proven in \autoref{appendix=E}), most of the present paper is devoted to the proof of \autoref{theo--beyond-harmonic}.  Relying on the proposed stability structure, we derive harmonic decay estimates for the linearization of the Einstein constraints and their formal adjoints around a Euclidean data set in \refwithname{Sections}{section=5} \refwithname{to}{section=7} (Hamiltonian operator) and \refwithname{Sections}{section=8} \refwithname{and}{section=9} (momentum operator).  In particular, \refwithname{Sections}{section=6} \refwithname{and}{section=9} construct the energy-momentum modulators.
Finally, in \refwithname{Sections}{section=10} \refwithname{and}{section=11} 
we build upon all of the previous sections and complete the proof of the harmonic decay leading to \autoref{theo--beyond-harmonic}.
This is done by iteratively improving the decay exponent for the solution and its (nonlinear) source term.
Operator coefficients and structure constants that arise in our study of the Einstein operators are collected in \autoref{appendix=A}.  See also \autoref{table-structure} for the most important objects associated to the Hamiltonian and momentum.  Technical material is postponed to \autoref{appendix=B} (expressions of dissipation functionals), \autoref{appendix=C} (expansion of the constraints), \autoref{appendix=D} (geometric inequalities on cones), \autoref{appendix=E} (sub-harmonic analysis proving \autoref{thm:sts-existence}), and \autoref{appendix=F} (differential equations). 


\paragraph{Notation for the averages.}

The weighted average of a function $f\colon \Lambda \to \RR$ defined on an open set $\Lambda \subset \Sphe^{n-1}$ of the $(n-1)$-dimensional, unit sphere $\Sphe^{n-1} \subset \RR^n$ is denoted by 
\bel{equa-average} 
\la f \ra \coloneqq \fint_{\Lambda} f \, d\chi 
\coloneqq {1 \over \aire[\Lambda,\lambda]} \int_{\Lambda} f \, d\chi,   
\qquad  
\aire[\Lambda,\lambda] \coloneqq \int_{\Lambda} \, d\chi.
\ee
The notation $d\xh$ stands for the standard homogeneous measure on the unit sphere, and the weighted measure $d\chi \coloneqq \lambda^{2 \expoP} d\xh$ is determined by a function $\lambda \geq 0$ with positivity set~$\Lambda$ (a large enough integer~$\expoP$ being fixed). This notation is used at each asymptotic end labelled with the symbol $\iota=1,\ldots,I$, where $I$ denotes the finite number of asymptotic ends, and we write $\lambda_\iota$ as well as $d\chi_\iota$, $\la u \ra_\iota$, etc. Whenever a spherical shell $\Lambda_{\iota,r}$ is identified with $\Lambda_\iota$ by radial projection, $d\chi_\iota$ denotes the pullback of this angular measure; in particular, it carries no factor~$r^{n-1}$.

Sobolev norms on~$\Lambda$ are defined in~\eqref{equa-norm-poids} using this integral normalized by $\aire[\Lambda,\lambda]$, in contrast to norms on $n$-dimensional domains, defined in \autoref{section=2.2}.
 

\section{The localized seed-to-solution projection}
\label{section=2}

\subsection{A construction scheme}
\label{section=2.1}

\paragraph{Localization and weight functions.}

Our preliminary aim is to parametrize the class of Einstein's initial data sets that are close to a given ``localization data set'' ---a notion we are going to define first. We do so by introducing the notion of a seed-to-solution map, which is defined via a projection along the image of the formal adjoint of the linearized constraints. We include suitable weights in this projection in order to ensure the desired localization. While we are mainly interested in gluing at asymptotically Euclidean ends, our abstract framework encompasses, simultaneously, compact or non-compact manifolds with or without localization. In a first stage, we introduce general definitions and, next, specialize our notions to asymptotically Euclidean manifolds and finally arrive at the sub-harmonic estimates; cf.~\autoref{thm:sts-existence}, below. In the following, $\dbf_{g_0}(x,y)$ denotes the geodesic distance between any two points $x,y $ in a manifold $(\Mbf, g_0)$. (Cf.~\autoref{tab:example} for a summary of notation.)

All the reference data in \autoref{def-locali-weight} and, especially, the weights\footnote{For any functions $A,B \geq 0$, we use the notation $A \lesssim B$ whenever there exists a constant $C>0$ such that for $0 \leq A \leq C \, B$. We write $A \simeq B$ when both  $A \lesssim B$ and $B \lesssim A$ hold. When necessary, the dependence of the implied constants will be specified. The constants will never depend on the point $x$ in the manifold.} are sufficiently regular (with bounds specified next) ---except for $\lambdabf$ which is regular up to the boundary from within $\Omega$, but only Lipschitz continuous across the boundary of the gluing domain. In contrast, the regularity of other tensors will be of crucial importance throughout our construction.

\begin{table}
\centering
\begin{tabular}{lccc}
\toprule
& \textbf{gluing domain} & \textbf{localization} & \textbf{weight} 
\\
\midrule
\textbf{Manifold} & $\Omega \subset \Mbf$ & $\lambdabf: \Omega \to [0, \lambda_0]$ & $\omegabf_p = \lambdabf^{\expoP} \wtrr^{n/2-p}$
\\
\midrule
\textbf{Asymptotic end} & $\Omega_\iota \cong \Omega_R \subset \RR^n$ & $\lambda_\iota: \Lambda_\iota \to [0, \lambda_0]$  & $\omega_{\iota,p} = \lambda_\iota^{\expoP} r^{n/2-p}$
\\
\bottomrule
\end{tabular}
\caption{Main notation for the localization}
\label{tab:example}
\end{table}

\begin{definition}
\label{def-locali-weight}
A \textbf{localization manifold} $(\Mbf,\Omega, g_0,\wtrr, \lambdabf)$ consists of a Riemannian manifold $(\Mbf, g_0)$  endowed with an open set $\Omega \subset \Mbf$ (with smooth boundary) referred to as the \textbf{gluing domain}, and a pair of functions $(\wtrr, \lambdabf)$.  
\bei 
 
 \item The \textbf{radius function} $\wtrr: \Mbf \to [R_0, + \infty)$ is smooth and satisfies (for some $x_0 \in \Mbf$ and $R_0>0$)
\bel{equa-with-base-point}
\wtrr(x) \simeq \big( (R_0)^2+ (\dbf_{g_0}(x, x_0))^2 \big)^{1/2}, 
\qquad
x\in \Mbf.
\ee
For every integer $j\geq1$, it also satisfies the bound $|\nabla^j\wtrr|_{g_0}\leq C_j\wtrr^{1-j}$.

\item If $\Omega=\Mbf$ (the non-localized, or isotropic, case), one allows the choice $\lambdabf\equiv1$.  Otherwise, the \textbf{localization function} $\lambdabf : \Mbf \to [0,\lambda_0]$ is positive and smooth in~$\Omega$, extends smoothly up to $\partial\Omega$ from within~$\Omega$, is extended by zero to ${}^{\complement}\Omega$, and satisfies, for some $\lambda_0>0$, 
\bel{equa-localiz-weight} 
\lambdabf(x)
\simeq\begin{cases} 
{1 \over \wtrr(x)} \dbf_{g_0}\bigl(x, {}^{\complement} \Omega\bigr),  
& x \in \Omega, 
\\
 0, & x \in {}^{\complement} \Omega \coloneqq \Mbf \setminus \Omega. 
\end{cases}
\quad
\ee   
Moreover, for every integer $j\geq1$, $|\nabla^j\lambdabf|_{g_0}\leq C_j\wtrr^{-j}$ in~$\Omega$, and
$|\nabla\lambdabf|_{g_0}\simeq\wtrr^{-1}$ in a fixed collar of~$\partial\Omega$.
\eei
\end{definition}

In the above definition, the specific choice of the base point $x_0$ and the parameters $R_0,\lambda_0$ is unimportant. On the one hand, for {\it sub-harmonic} estimates the specific choice of the functions $\wtrr$ and $\lambdabf$ also is unimportant. On the other hand, deriving {\it harmonic} estimates will require us to be more specific about the functions $\wtrr$ (and normalize it to the standard radius in the chosen coordinate chart at each Euclidean end) and $\lambdabf$ (in order to ensure certain stability conditions). Further specifications will be indicated in the course of this paper. 


\paragraph{Proposed scheme.}

We proceed first with a {\it formal} construction scheme for data in the ``vicinity'' of a localization manifold in the sense of \autoref{def-locali-weight}. Further notions will be defined only afterwards. We are interested in deriving a parametrization of a large class of solutions to the Einstein constraints and we find it convenient to introduce a {\it projection operator}: a \textbf{localization data set} $(\Mbf, \Omega, g_0, h_0, \wtrr, \lambdabf)$ being fixed throughout (with $h_0$ a symmetric two-tensor on~$\Mbf$), we view the solution mapping of interest as a projection along ${(g_0,h_0)}$. Our construction depends upon the choice of a \textbf{variational weight} (as we call it), that is,  
\bel{equa-omega-choice}
\omegabf_p \coloneqq \lambdabf^{\expoP} \wtrr^{n/2-p}, 
\ee
in which the power of the radial variable $\wtrr$ is determined by a \textbf{projection exponent} denoted by $p \in (0,n-2)$. The \textbf{localization exponent} denoted\footnote{This is a basic requirement to prevent certain norms from blowing-up near the boundary (if $\expoP$ approaches $3/2$), but in fact $\expoP$ needs to be larger for existence results as specified in our statements later on.}  by ${\expoP \geq 2}$ determines the regularity at the boundary of the gluing domain. More precisely, we use $\omegabf_p$ for the Hamiltonian component and $\omegabf_{p+1}$ for the momentum component. 

To any localized seed data set $(\seedg, \seedh)$ (cf.~\autoref{def-aset}, below) we want to associate an actual solution $(g,h)$ to the Einstein constraints
$\Gcal(g,h) = 0$, 
which we seek as a deformation of $(\seedg,\seedh)$ in the form 
\bel{equa-deform} 
g = \seedg + \gdiff,
\qquad 
h = \seedh + \hdiff. 
\ee
Specifically, we require that the \emph{deformation} $(\gdiff,\hdiff)$ belongs to the image of the adjoint $d\Gcal_{(g_0,h_0)}^*$ of the linearized Hamiltonian and momentum operators, in the sense that there exist a scalar field $u$ and a one-form field $Z$ so that\footnote{The duality is understood with respect to the metric structure $g_0$.  Detailed expressions are given in \autoref{appendix=C}.}
\bel{equa--221}
\aligned
\gdiff & = \omegabf_p^2 \, d\Hcal_{(g_0,h_0)}^{*\flat\flat}[u,Z], 
\qquad
\hdiff = \omegabf_{p+1}^2  \, d\Mcal_{(g_0,h_0)}^{*\sharp\sharp}[u,Z].
\endaligned
\ee
Observe that, in our formulation, the linearization is taking place {\it at the (fixed) base point} $(g_0, h_0)$ rather than at the seed data, which was the original choice in~\cite{CarlottoSchoen,CorvinoSchoen}). 


Later in this section, we will specify a choice of localization (i.e.~asymptotically Euclidean solutions in conical gluing domains) for which a solution $(u,Z)$ exists (cf.~\autoref{thm:sts-existence}, below).
The proof (in \autoref{appendix=E}) relies on a Newton iteration scheme for the nonlinear problem
\be
\Gcal\bigl(\seedg + \omegabf_p^2 \, d\Hcal_{(g_0,h_0)}^{*\flat\flat}[u,Z], \seedh + \omegabf_{p+1}^2  \, d\Mcal_{(g_0,h_0)}^{*\sharp\sharp}[u,Z] \bigr) = 0 .
\ee
Starting from $(g_1,h_1)=(\seedg,\seedh)$, we iteratively construct (for $k\geq 1$)
\bse\label{Newton-it}
\be
g_{k+1} = g_k + \omegabf_p^2 \, d\Hcal_{(g_0,h_0)}^{*\flat\flat}[u_k,Z_k] ,
\qquad
h_{k+1} = h_k + \omegabf_{p+1}^2  \, d\Mcal_{(g_0,h_0)}^{*\sharp\sharp}[u_k,Z_k] ,
\ee
with fields $(u_k,Z_k)$ taken as the solution of a linear problem that involves the linearization of $\Gcal$ \emph{at the point} $(g_k,h_k)$ in addition to the dual operators $d\Hcal^*$ and~$d\Mcal^*$ at $(g_0,h_0)$,
\bel{equa-Jcal-def}
\aligned
\Jcal_{(g_k,h_k;g_0,h_0)}[u_k,Z_k] & = - \Gcal(g_k,h_k) ,
\\
\Jcal_{(g_k,h_k;g_0,h_0)}[u,Z]
\, & \! \coloneqq d\Gcal_{(g_k,h_k)}\Bigl[ \omegabf_p^2 \, d\Hcal_{(g_0,h_0)}^{*\flat\flat}[u,Z], \omegabf_{p+1}^2  \, d\Mcal_{(g_0,h_0)}^{*\sharp\sharp}[u,Z] \Bigr] .
\endaligned
\ee
\ese

Convergence of the iteration relies on two observations.
Firstly, the linear operator $\Jcal_{(g_k,h_k;g_0,h_0)}$ is invertible thanks to the Lax--Milgram theorem, whose coercivity condition is ensured by weighted Poincaré, Korn, and Hardy inequalities (and non-existence of Killing initial data sets).
The operator is further checked to be \emph{elliptic in the sense of Douglis--Nirenberg}~\cite{DouglisNirenberg} so that standard interior elliptic regularity results apply and suitable pointwise norms of the solution $(u_k,Z_k)$ are controlled by the source term~$\Gcal(g_k,h_k)$.
Secondly, the next source term $\Gcal(g_{k+1},h_{k+1})$ is small since it is the (quadratic) nonlinear remainder of the Einstein operators
\bel{Gcalgk1hk1}
\Gcal(g_{k+1},h_{k+1})
= \Gcal(g_k + \gdiff_k, h_k + \hdiff_k) - \Gcal(g_k,h_k) - d\Gcal_{(g_k,h_k)}[\gdiff_k, \hdiff_k] \eqqcolon \Qcal\Gcal_{(g_k,h_k)}[\gdiff_k, \hdiff_k] ,
\ee
evaluated at $\gdiff_k=\omegabf_p^2 \, d\Hcal_{(g_0,h_0)}^{*\flat\flat}[u_k,Z_k]$ and $\hdiff_k=\omegabf_{p+1}^2  \, d\Mcal_{(g_0,h_0)}^{*\sharp\sharp}[u_k,Z_k]$.
An important subtlety here is that the nonlinearities involve higher derivatives of $(g_0,h_0)$ than of $(g_k,h_k)$.  As a result, while the reference data $(g_0,h_0)$ must feature higher regularity than the solutions $(g,h)$ we eventually construct, there is no loss of regularity from one step to the next in the iteration.
Our choice of defining the seed-to-solution projection in terms of a fixed operator $d\Gcal_{(g_0,h_0)}^*$ is thus crucial in allowing us to use Newton iteration.
In turn, Newton iteration only requires us to prove smallness of the quadratic terms~$\Qcal\Gcal$.
This is simpler than showing their Lipschitz continuity with respect to $(\gdiff,\hdiff)$, needed in the Picard iteration scheme used originally in~\cite{CarlottoSchoen,CorvinoSchoen}.
 

\subsection{Proposed parametrization and norms}
\label{section=2.2}

\paragraph{Localized weighted norms.}

We now introduce the norms that are suitable for transforming our formal scheme above into an actual proof leading to quantitative statements. We are interested in tensor fields defined on the localization manifold $(\Mbf,\Omega, g_0,\wtrr, \lambdabf)$ and, for simplicity, we omit from the notation the metric and the functions $\wtrr, \lambdabf$, and sometimes do not specify the domains of integration $\Mbf, \Omega$. For a comprehensive study of the standard weighted H\"older and Sobolev spaces, we refer to~\cite{Bartnik,ChoquetC}. 
\bei

\item {\it Weighted H\"older norms.} Given $\alpha \in (0,1)$, a non-negative integer $l$, and reals $p, a \in \RR$, we define the space $C_{p,a}^{l,\alpha}(\Omega)$ as the set of tensor fields $f$ on $\Omega$ with local H\"older regularity of order $l + \alpha$ and finite weighted pointwise norm\footnote{Observe that the norm depends on the metric $g_0$ even for scalar functions.} 
\begin{subequations} 
\bel{equa-norm-Hol}
\aligned
& \| f \|_{\Omega, p,a}^{l, \alpha}
= 
\| f \|_{C_{p,a}^{l, \alpha}(\Omega)}
\\
& \coloneqq 
\sum_{|L| \leq l}   \sup_\Omega \Big( \lambdabf^{-a+|L|} \, \wtrr^{p+|L|} \, |\nabla^L f|_{g_0} \Big) 
 + \sum_{|L| = l}   \sup_\Omega \Big( \lambdabf^{-a + |L|+ \alpha} \, \wtrr^{p+|L|+ \alpha} \, \llbracket \nabla^L f \rrbracket_{\Omega,\alpha} \Big). 
\endaligned
\ee
Here, $\nabla^L f$ denotes covariant derivatives for the metric $g_0$ and $L$ represents a multi-index, while
\bel{equa-doublebarre}
\llbracket f \rrbracket_{\Omega,\alpha} (x) 
\coloneqq   
\sup_{y \in A_{\Omega,g_0}(x)}\biggl(
\frac{|\mathsf P_{y\to x}f(y)-f(x)|_{g_0(x)}}
{\dbf_{g_0}(x,y)^\alpha}
\biggr),
\qquad x \in \Omega
\ee  
with $A_{\Omega,g_0}(x)$ the set of points $y\in\Omega$ such that
\be
\dbf_{g_0}(x,y) < \frac{1}{2} \min\bigl(\operatorname{inj}_{g_0}(x),\dbf_{g_0}(x,{}^\complement\Omega)\bigr),
\ee
where $\mathsf P_{y\to x}$ denotes parallel transport along the unique minimizing $g_0$-geodesic from $y$ to~$x$, and the distance to the empty set is understood to be~$+\infty$. 
We occasionally use the $C_{p,a}^l(\Omega)$ norm, in which the term $\llbracket \nabla^L f \rrbracket_{\Omega,\alpha}$ is omitted.
\end{subequations}


\item {\it Weighted Lebesgue norms.} Given any $p, a \in \RR$, we define the space $L^m_{p,a}(\Omega)$ (for $m=1,2$) by completion of the set of smooth tensor fields $f$ with finite weighted integral norm
\bel{equa-def-weightL}
\|f\|_{L^m_{p,a}(\Omega)}
\coloneqq  \Big(
\int_\Omega  | f |_{g_0}^m \lambdabf^{-ma} \, \wtrr^{m p-n} \, \dVol_{g_0}
\Big)^{1/m}, 
\ee
where we recall that $\dVol_{g_0}$ is the volume form associated with the metric $g_0$. When the second index vanishes, we simply write $L^m_p(\Omega)$.  Similarly, we also define weighted Sobolev spaces $H^l_{p,a}(\Omega)$
\bel{equa-def-Hl-space}
\|f\|_{H^l_{p,a}(\Omega)}
\coloneqq  \Big(
\sum_{|L|\leq l} \int_\Omega  | \nabla^L f |_{g_0}^2 \lambdabf^{-2a} \, \wtrr^{2|L|+2p-n} \, \dVol_{g_0}
\Big)^{1/2} .
\ee
In contrast to the H\"older norm~\eqref{equa-norm-Hol}, the power of $\lambdabf$ is the same for all derivatives.

\item {\it Weighted H\"older--Lebesgue norms.} By combining the previous two definitions, it is also useful to consider the space $C_{p,a}^{l,\alpha}(\Omega) \cap L^2_{p,b}(\Omega)$ with the norm
\bel{equa-def-weightLC}
\Norm{f}_{\Omega,p,a,b}^{l,\alpha}
\coloneqq \| f \|_{\Omega,p,a}^{l,\alpha} + \| f \|_{L^2_{p,b}(\Omega)}
= \| f \|_{C_{p,a}^{l,\alpha}(\Omega)} + \| f \|_{L^2_{p,b}(\Omega)} ,
\ee  
in which a possibly different exponent $b\in\RR$ is introduced for the $L^2$ factor. This norm requires some integrability at infinity for the (undifferentiated) function, which is only slightly stronger radially than the pointwise bound implied by the H\"older factor. (We typically take $b\geq a$, so that the integrability is stronger in the angular direction as well.)

\eei

\noindent When emphasis is required, we specify our choice of metric and write $\| f \|_{\Omega, g_0, p,a}^{l, \alpha}$, $\|f\|_{L^m_{p,a}(\Omega, g_0)}$, etc. Furthermore, when localization is not required, a similar notation as above is also used by replacing $\lambdabf$ by $1$ (the irrelevant exponent $a$ being then omitted) and possibly $\Omega$ by~$\Mbf$.  This is \emph{not} equivalent to setting $a$ to a particular value.  Explicitly,
\be
\aligned
\| f \|_{\Omega, p}^{l, \alpha}
= 
\| f \|_{C_p^{l, \alpha}(\Omega)}
\coloneqq 
\sum_{|L| \leq l}   \sup_\Omega \Big( \wtrr^{p+|L|} \, |\nabla^L f|_{g_0} \Big) 
 + \sum_{|L| = l}   \sup_\Omega \Big( \wtrr^{p+|L|+ \alpha} \, \llbracket \nabla^L f \rrbracket_{\Omega,\alpha} \Big). 
\endaligned
\ee


\paragraph{Choice of regularity.}

We wish to modify a given data set {\it within} the gluing domain~$\Omega$, while the (possibly non-vacuum) constraints are assumed to be `already' satisfied by the data {\it outside} the gluing domain. The regularity at the boundary of the gluing domain is determined by a \textbf{localization exponent} $\expoP$ which arises in the variational formulation and, later on, by shifted exponents $\expoPm < \expoP$ and $\expoPp = 2\expoP - \expoPm > \expoP$ (after applying elliptic regularity).
We fix throughout our work these exponents as well as parameters $N,\alpha$ for the H\"older regularity of data sets on~$\Mbf$. We summarize the conditions on these exponents as follows.

\begin{definition}
\label{def-expoP}
Given a \textbf{regularity exponent} $N\geq 3$ and a \textbf{H\"older exponent} $\alpha \in (0,1)$, a triple $(\expoPm, \expoP, \expoPp)$ is called an \textbf{admissible set of localization exponents} provided\footnote{The condition $\expoPm>N+\alpha$ ensures that the data sets we obtain are $C^{N,\alpha}$ regular across the boundary of~$\Omega$, which is required to control linear operators in the iterative construction of solutions.  The condition $\expoPm>1+\expoP/2$ is only used to avoid technicalities explained above \autoref{lem:appE-control-nonlin}, and could be relaxed to $\expoPm>2$.}
\bel{equa-expoPPP}
\max(N + \alpha, 1+\expoP/2) < \expoPm < \expoP < \expoPp , \qquad \expoPp - \expoP = \expoP - \expoPm \geq \frac{n+4}{2} .
\ee
\end{definition}


\paragraph{The notion of seed data.}

Our construction relies on projecting a `seed' data set, which is an approximate solution of the constraints, onto the space of exact solutions.
The localization data set, defined next, will play the role of a `reference' in our projection scheme and it is natural to assume it to have slightly better differentiability in comparison to the seed data sets (defined below) or the actual solutions.  Observe that the smallness conditions stated below concern the differences $\seedg - g_0$ and $\seedh - h_0$ only; no support condition is imposed on these differences, whereas the correction from the seed to the solution is localized to the gluing domain. At this stage, we state a definition for arbitrary exponents but, later on, restrictions will be required in order to reach existence and decay results. 

\begin{definition}
\label{def-locali-weight-3}
Consider a localization manifold $(\Mbf,\Omega, g_0, \wtrr, \lambdabf)$.
Given an exponent $p_G>0$ referred to as the \textbf{geometry exponent}, one calls $(\Mbf, \Omega, g_0, h_0, \wtrr, \lambdabf)$ a (reference) \textbf{localization data set} if, in addition, 
\bei 

\item $h_0$ is a symmetric $(2,0)$-tensor defined on~$\Mbf$, 

\item $g_0$ and $h_0$ are $C^{N+2,\alpha}$ and $C^{N+1, \alpha}$ regular, respectively, and 

\item and the integral decay condition
\bel{equa-nearEins-00-new}
\Norm{\Riem_{g_0}}^{N,\alpha}_{\Mbf, g_0, p_G+2}
+ \Norm{h_0}^{N+1, \alpha}_{\Mbf,g_0,p_G+1}
< +\infty. 
\ee
\eei 
\end{definition}

Observe that, under the condition \eqref{equa-nearEins-00-new}, the reference data satisfies 
\bel{equa-nearEins-00}
\Norm{\Hcal(g_0,h_0)}^{N,\alpha}_{\Mbf,g_0, p_G+2}
+ \Norm{\Mcal(g_0, h_0)}^{N, \alpha}_{\Mbf, g_0,p_G+2}
< +\infty ,
\ee

\begin{definition} 
\label{def-aset}
Let $(\Mbf, \Omega, g_0,h_0, \wtrr, \lambdabf)$ be a localization data set with geometry exponent $p_G>0$. Given $\eps_G,\eps_A>0$ and any exponent $p_A \geq p_G$ referred to as the \textbf{accuracy exponent}, a \textbf{localized seed data set} $(\seedg, \seedh)$ consists of fields defined on the whole manifold~$\Mbf$ and satisfying the following conditions.  
\bei 

\item \emph{Near-localization data:} $\seedg$ is a Riemannian metric and $\seedh$ is a symmetric $(2,0)$-tensor, both with $C^{N,\alpha}$~H\"older regularity, satisfying
\bel{equa-near-refe} 
\| \seedg - g_0 \|^{N, \alpha}_{\Mbf, p_G}
+ \| \seedh - h_0 \|^{N,\alpha}_{\Mbf, p_G+1}\leq \eps_G .
\ee

\item \emph{Near-Einsteinian data:} the Einstein operators satisfy, in H\"older--Lebesgue norms in the gluing domain, 
\bel{equa-nearEins}
\Err_{p_A}[\seedg,\seedh] = \Norm{\Hcal(\seedg,\seedh)}^{N-2,\alpha}_{\Omega,  p_A+2, \expoPm-2,\expoP}
+ \Norm{\Mcal(\seedg, \seedh)}^{N-1, \alpha}_{\Omega, p_A+2, \expoPm-1,\expoP} \leq \eps_A ,
\ee
and $\Gcal(\seedg,\seedh)=0$ in the complement domain~${}^\complement\Omega$.
\eei  
\end{definition}

In view of the above definitions, we introduce the following collection of all localized seed data sets associated with a given localization data set $(\Mbf, \Omega, g_0, h_0)$,
\be
(\seedg,\seedh) \in \Seed(\Omega, g_0,h_0,p_G, p_A,\eps_G,\eps_A) .
\ee 
The conditions~\eqref{equa-near-refe} and~\eqref{equa-nearEins} on the localized seed data are requirements\footnote{The subscripts $G$ and $A$ in $(p_G,\eps_G)$ and $(p_A,\eps_A)$ refer the words ``geometry'' and ``accuracy'', respectively.} of very different nature:~\eqref{equa-near-refe} determines the decay of the geometry, while~\eqref{equa-nearEins} controls the `remainder' associated with the Einstein constraints. The following observations are in order. 

\bei 

\item The extrinsic curvature $\seedh$ is expected to decay faster than the metric itself (as in the Kerr solution), so the pair of exponents $(p_G,p_G+1)$ in~\eqref{equa-near-refe}  and $(p_A+2,p_A+2)$ in~\eqref{equa-nearEins} are natural, in view of the fact that $\Hcal$ and $\Mcal$ are second and first-order operators, respectively. 
 
\item At the boundary of the domain~$\Omega$, the data set $(\seedg, \seedh)$ satisfies Einstein's vacuum constraints since $\Hcal(\seedg,\seedh)$ and $\Mcal(\seedg,\seedh)$ decay pointwise as $\lambdabf^{\expoPm-2}$ and~$\lambdabf^{\expoPm-1}$, respectively.  This is compatible with imposing that $(\seedg, \seedh)$ satisfies Einstein's vacuum constraints in the complement domain ${}^{\complement} \Omega$.

\item The inequalities~\eqref{equa-nearEins} on the Einstein operators, in principle, could be deduced for $p_A=p_G$ from the inequalities~\eqref{equa-near-refe} on the geometric data, with a suitable~$\eps_A$. However, we emphasize that we are interested in seed data that do represent ``accurate'' approximate solutions, namely those for which $p_A$ is {\it strictly greater} than $p_G$. 

\eei


\paragraph{Proposed parametrization.}

We recall that an order of regularity $N\geq 3$ and a H\"older exponent $\alpha \in (0,1)$ are fixed. 
No smallness assumption is imposed on the localization data set in the following definition; the existence theorem will require the reference constraint error~$\Err_{p_G}[g_0,h_0]$ and the seed perturbation to be sufficiently small. 
We recall from \autoref{def-expoP} that $\expoP \geq 2$ is fixed for the variational functional, while $\expoPm$ and $\expoPp = 2\expoP-\expoPm$, subject to $\max(N+\alpha,1+\expoP/2) < \expoPm \leq \expoP - (n+4)/2$, arise when applying interior elliptic regularity.  On the other hand, the range of the projection exponent $p$ is essential and will become clear in \autoref{section=2.3}. 

\begin{definition}
\label{def-mapping}
Let $(\Mbf, \Omega, g_0,h_0, \wtrr, \lambdabf)$  be a localization data set with geometry exponent $p_G>0$. Fix $\eps_G,\eps_A>0$, an accuracy exponent $p_A \geq p_G$, and a projection exponent $p\in(0,n-2) \cap(0,p_A]$. By definition, the \textbf{localized seed-to-solution projection} in the vicinity of this localization data set, denoted by~$\Solu^{\lambdabf}_{n,p}$, maps any data set $(\seedg, \seedh) \in \Seed(\Omega, g_0,h_0,p_G, p_A, \eps_G,\eps_A)$ (cf.~\autoref{def-aset}) to a solution $(g,h) = \Solu^{\lambdabf}_{n,p}(\seedg,\seedh)$ of the Einstein constraints $\Gcal(g,h)= 0$, enjoying  the following properties.

\bei 

\item The pairs $(g,h)$ and $(\seedg,\seedh)$ coincide in the complement domain, that is, 
\bel{equa-gh-uZ-00} 
g = \seedg, 
\quad  h = \seedh 
\qquad \text { in } {}^{\complement} \Omega.
\ee

\item The solution $(g,h)$ is close to the data $(\seedg,\seedh)$ in the gluing domain, in the sense that (in weighted H\"older norm with the positive exponent~$\expoPm$)
\bel{equa-near-seed}
\aligned
\bigl\| g - \seedg \bigr\|^{N, \alpha}_{\Omega, g_0,p,\expoPm}
+ \bigl\| h - \seedh \bigr\|^{N,\alpha}_{\Omega, g_0,p+1,\expoPm}
& \lesssim \Err_p[\seedg,\seedh],  
\endaligned
\ee
in terms of the H\"older--Lebesgue norm\footnote{This norm is generalized in~\eqref{equa-Ecal-pstar} to a norm $\Err^+_{\pstar}[\seedg,\seedh]$ with an additional term for $\pstar=n-2$, absent here since $p<n-2$.} 
\bel{equa-Ecal-pq} 
\aligned
\Err_p[\seedg,\seedh]
\coloneqq
\Norm[\big]{\Hcal(\seedg,\seedh)}^{N-2,\alpha}_{\Omega, g_0, p+2, \expoPm-2, \expoP}
+ \Norm[\big]{\Mcal(\seedg,\seedh)}^{N-1,\alpha}_{\Omega, g_0, p+2, \expoPm-1, \expoP}
\endaligned
\ee
and the implied constant depends upon the localization data set and the exponents. 

\item  The pair $(g,h)$ selected by the construction is a vacuum solution for which there exist a scalar field $u$ and a vector field $Z$ defined in the gluing domain $\Omega$ such that 
\bel{equa-gh-uZ} 
\aligned 
& \Hcal(g,h) = 0,   
&& \Mcal(g,h) = 0,
\\
& g = \seedg + \omegabf_p^2 \, d\Hcal^{*\flat\flat}_{(g_0,h_0)}(u,Z), 
\quad
&& h = \seedh + \omegabf_{p+1}^2 \, d\Mcal^{*\sharp\sharp}_{(g_0,h_0)}(u,Z), 
\endaligned
\ee
where $(u,Z)$ satisfies (in a weighted H\"older norm\footnote{We will also control suitable weighted $H^k$ Sobolev norms of $(u,Z)$, hence weighted $L^2$ norms of $(g,h)$.} with the negative exponent $- \expoPp$) 
\bel{equa-213-2-u-Z}
\aligned
& \| u \|^{N+2, \alpha}_{\Omega, n-2-p, -\expoPp+2}
+ \| Z \|^{N+1, \alpha}_{\Omega, n-2-p, -\expoPp+1}
 \lesssim \Err_p[\seedg,\seedh]. 
\endaligned 
\ee 
\eei
\end{definition}

Finally, the proposed parametrization is made evident by defining an equivalence relation, i.e.
the \emph{equivalentce modulo adjoint-constraints.} We point out the analogy with the Yamabe problem and the conformal method, for which conformal classes are introduced in order to make a certain classification of metrics (cf., for instance,~\cite{Holst,LeeParker}). When dealing with initial data sets, there is no canonical notion of equivalence class, and our parametrization offers a standpoint, which depends on the prescribed localization data set $(\Mbf,\Omega,g_0,h_0,\wtrr,\lambdabf)$ and on the projection exponent~$p$.
Describing the projection in terms of equivalence classes relies on the feature that the data sets of interest $(\seedg,\seedh)$ and $(g,h)$ have the same regularity.

\begin{definition} 
\label{def-mapping-2}
Assume the conditions in \autoref{def-mapping} on the localization data set $(\Mbf, \Omega, g_0,\allowbreak h_0, \wtrr, \lambdabf)$ and exponents $(p,p_G, \expoPm,\expoP,\expoPp)$. Two data sets $(g,h)$ and $(g',h')$, for which $g-g_0,g'-g_0\in C^{N,\alpha}_{p_G}(\Mbf)$ and $h-h_0,h'-h_0\in C^{N,\alpha}_{p_G+1}(\Mbf)$, are deemed \textbf{equivalent modulo adjoint-constraints} if they agree on ${}^{\complement}\Omega$ and, in~$\Omega$,
\be
g - g' = \omegabf_p^2 \, d\Hcal^{*\flat\flat}_{(g_0,h_0)}(u,Z), 
\qquad
h - h' = \omegabf_{p+1}^2 \, d\Mcal^{*\sharp\sharp}_{(g_0,h_0)}(u,Z)
\ee 
for a scalar field $u\in C^{N+2,\alpha}_{n-2-p,-\expoPp+2}(\Omega)$ and a one-form $Z \in C^{N+1,\alpha}_{n-2-p,-\expoPp+1}(\Omega;T^*\Omega)$. 
In particular, $g-g'$ and $h-h'$ have finite $C^{N,\alpha}_{p,\expoPm}$ and $C^{N,\alpha}_{p+1,\expoPm}$ norms, respectively.
In this language, the solution map $\Solu^{\lambdabf}_{n,p}$ sends an element $(\seedg,\seedh)$ to a representative $(g,h)$ of its equivalence class $[(\seedg,\seedh)]$ that satisfies Einstein's constraint equations, namely,
\bel{equa-sol-map}
\Solu^{\lambdabf}_{n,p} \colon
\Seed(\Omega, g_0,h_0,p_G, p_A,\eps_G,\eps_A) \ni (\seedg,\seedh)
\longmapsto (g,h) \in [(\seedg,\seedh)] .
\ee
\end{definition}

\begin{remark}
1. Instead of vacuum solutions we could consider solutions to the Einstein-matter system $\Gcal(g,h) = (\Hstar, \Mstar)$, when the matter fields are $(\Hstar,\Mstar)={(\phi'^2 + |d\phi|_g^2, \phi' \, d\phi)}$. Here, the data $\phi, \phi'$ should be  scalar fields prescribed over the manifold $\Mbf$. The techniques providing the existence of the seed-to-solution map are expected to apply without significant change. Importantly, in the fixed-point argument used in the construction of solutions, the additional terms involve {\it no} derivatives with respect to the principal variables. 

2. A slightly more general class of solutions can be constructed by replacing the weight $\omega_{p+1}$ in front of the momentum operator by a weight $\omega_q$, where the exponent $q$ is assumed to be smaller than, or equal to, $p+1$. The relevant range for the exponent $q$ was investigated in~\cite{LeFlochNguyen} for non-localized solutions; rather direct modifications would allow one to rewrite our theory of localized solutions for such a range of exponents $q$. 
\end{remark} 


\subsection{Existence theory for conical localization}
\label{section=2.3}

\paragraph{Data sets of interest.}

Within the proposed framework, our next aim is to establish the existence of solutions to the projection problem and investigate their decay properties. We focus on the class of asymptotically Euclidean solutions and, in the light of \autoref{def-mapping-2}, we study the projection map $\Solu^{\lambdabf}_{n,p}$ around an asymptotically Euclidean data set $(g_0,h_0)$ with an asymptotically conical localization domain~$\Omega$, in a sense we now define. Specific conditions on the exponents $(p,p_G, p_A)$ arise at this stage and, in order to proceed, we need some terminology.

\begin{definition}
\label{def-conical}
Given a geometry exponent $p_G>0$, a localization data set $(\Mbf, \Omega, g_0, h_0, \wtrr, \lambdabf)$ is called a \textbf{conical localization data set} if the gluing domain has a finite nonzero number $I$ of connected asymptotic ends~$\Omega_\iota$ (labeled with $\iota = 1, \ldots, I$) on which the following conditions hold, and whose union covers all but a (large) bounded domain, namely
\bel{equa-decomp-Omega-Omega}
\Omega = \Omega_0 \cup \hskip-.15cm \bigcup_{\iota =1,\ldots,I} \Omega_\iota, 
\qquad
\Omega_\iota \cap {\Omega_\iota'} = \emptyset \text{ (for $\iota \neq \iota'$),} 
\qquad 
\Omega_0  \, \text{ bounded.}
\ee

\bei 
 
\item \textbf{Asymptotically Euclidean.} Each $\Omega_\iota$ is connected and endowed with a global chart of coordinates $x=(x^j)$ to which one associates the Euclidean metric $\delta$ in these coordinates, namely $
\delta \coloneqq \sum_j (dx^j)^2$ in $\Omega_\iota$, 
in which $r^2 = |x|^2 \coloneqq \sum_j (x^j)^2 = \wtrr(x)^2$ is identified (by convention) with the decay weight of \autoref{def-locali-weight}. It is required that the reference data set enjoys the following decay\footnote{This decay can equivalently be written with $(\Omega_\iota,g_0)$ replaced by $(\Omega_R,\delta)$ in the norms, namely as the finiteness of the norms $\| g_0 - \delta  \|^{N+2, \alpha}_{\Omega_R, \delta, p_G}$ and $\| h_0 \|^{N+1,\alpha}_{\Omega_R, \delta, p_G+1}$.  The equivalence is based on noting that first-order derivatives $\nabla_0(g_0-\delta)$ and $\del(g_0-\delta)$ are two descriptions of the Christoffel symbols of~$g_0$.} in each end~$\Omega_\iota$ (as defined above, this norm is not localized, namely does not involve any power of~$\lambda$)
\bel{equa-slightlybetter}
\max_\iota 
\Bigl( \| g_0 - \delta  \|^{N+2, \alpha}_{\Omega_\iota, g_0, p_G}
+ \| h_0 \|^{N+1,\alpha}_{\Omega_\iota, g_0, p_G+1}
\Bigr) < +\infty .
\ee
Every connected component of~$\Omega$ is required to contain at least one of the ends~$\Omega_\iota$; in particular, bounded connected components are excluded.
 
\item \textbf{Conical localization.} The coordinates under consideration define a diffeomorphism $\phi_\iota\colon\Omega_\iota \overset{\sim}\to K_\iota \cap {}^{\complement}\Ball_R$ to the intersection of the exterior of a ball $\Ball_R \subset \RR^n$ with radius $R>1$ (fixed henceforth) and a cone\footnote{Explicitly, $K_\iota$ is the open cone $\{x\in\RR^n\setminus\{0\}:x/|x|\in\Lambda_\iota\}$ over an open subset $\Lambda_\iota$ of the sphere at infinity.}  $K_\iota \subset\RR^n$.

\bei 

\item The restriction of the localization function $\lambdabf: \Omega_\iota \to (0, \lambda_0]$ is \emph{scale-invariant} in the sense that, in the coordinates provided at each asymptotic end, $\lambdabf(\mu x) = \lambdabf(x)$ for all $x \in \Omega_i$ and all $\mu\geq 1$. 

\item Consequently, one can identify it with a function $\lambda_\iota: \Sphe^{n-1} \to [0, \lambda_0]$ (without boldface, as it is not defined on the whole gluing domain), the interior of its support being denoted by $\Lambda_\iota = K_\iota \cap \Sphe^{n-1}$. Connectedness of $\Omega_\iota$ ensures that $\Lambda_\iota$ is connected.

\eei
\eei  
\end{definition} 

We emphasize that our definition requires at least one asymptotic end for each connected component of $\Omega$. 
This is used in the global Poincaré--Korn--Hardy inequalities of \autoref{lem:PoincareKornHardyD} below. It also allows for different ends~$\Omega_\iota$ of the gluing domain to lie in the {\it same} asymptotic end of the manifold under consideration. We are now interested in the solutions determined by the localized projection mapping. Interestingly, such solutions exist for a broad range of  decay exponents and general localization functions, as we now present. 
We introduce a collection of smooth radial cutoff functions $\kappa_\iota\colon\Omega_\iota\to[0,1]$ ($\iota=1,\ldots,I$) defined on~$\Omega_\iota$ and such that, for some radii $R < R_2 < R_3 < R_4$ (in the coordinates on~$\Omega_\iota$ defined by \autoref{def-conical}), $\kappa_\iota(x)$~vanishes identically for $|x|\leq R_2$ and is identically~$1$ for $|x|\geq R_3$, and $\Omega_0$ contains the pre-image under~$\phi_\iota$ of the truncated conical region $K_\iota\cap\Ball_{R_4}\cap{}^\complement\Ball_R$, in other words
\bel{equa-partition}
(\phi_{\iota *}\kappa_\iota)|_{K_\iota\cap \Ball_{R_2}} = 0 , \qquad
(\phi_{\iota *}\kappa_\iota)|_{K_\iota\cap {}^\complement \Ball_{R_3}} = 1 , \qquad
\phi_\iota^{-1}(K_\iota\cap\Ball_{R_4}\cap {}^\complement \Ball_R) \subset \Omega_0 .
\ee
The first condition ensures that $\kappa_\iota$ vanishes in a neighborhood of $\del\Omega_\iota$, which avoids difficulties related to boundary conditions at $r=R$ when analyzing the Hamiltonian and momentum operators on asymptotic ends. The second and third state that $\Omega_0$ is large enough so that the complement $\Omega\setminus\Omega_0$ consists of asymptotic ends on which exactly one function~$\kappa_\iota$ is identically equal to~$1$ and all others vanish.


\paragraph{Existence theory.}

We arrive at the statement of our existence result. Our basic assumptions on the regularity and localization exponents are given in \autoref{def-expoP}, and those on the decay exponents are as follows. The lower bound $p>0$ below is required for the gluing construction to provide asymptotically Euclidean solutions, while the upper bound $p<n-2$ is required for suitably weighted Poincaré, Korn, and Hardy inequalities to hold. 

\begin{definition} 
\label{def-triple}
A triple $(p,p_G,p_A)$ is called an \textbf{admissible set of decay exponents} provided the following conditions hold: 
\bel{condi--1-repeat} 
\aligned 
& \text{Admissible projection exponent } \, & p & \in (0,n-2). 
\\
& \text{Admissible geometry exponent } & p_G & > 0.
\\
& \text{Admissible accuracy exponent } & p_A & \geq \max(p_G, p).
\endaligned
\ee
\end{definition}

We also introduce a constant that quantifies the coercivity of the squared localized constraint operator $\Jcal_{(g_0,h_0;g_0,h_0)}[u,Z]$ defined in~\eqref{equa-Jcal-def}, considered now with a single data set~$(g_0,h_0)$.  The existence of $\CPKHzero$ is proven in \autoref{lem:squaredquadraticform}.  The weighted Sobolev norms on the left-hand side involve different powers of $\rbf$~for each order of derivatives of $u,Z$.

\begin{definition}\label{def:PoincareKornHardyConst}
For the fixed exponents $p\in(0,n-2)$ and $\expoP\geq2$, the \textbf{Poincaré--Korn--Hardy constant} of a conical localization data set $(\Mbf, \Omega,\allowbreak g_0, h_0, \wtrr, \lambdabf)$ is the optimal (smallest admissible) constant $\CPKHzero>0$ such that for every $u\in H^2_{n-2-p,-\expoP}(\Omega,g_0)$ and every one-form $Z\in H^1_{n-2-p,-\expoP}(\Omega,g_0)$, one has
\be
\aligned
& \|u\|_{H^2_{n-2-p,-\expoP}(\Omega,g_0)}^2 + \|Z\|_{H^1_{n-2-p,-\expoP}(\Omega,g_0)}^2
\\
& \leq (\CPKHzero)^2 \int_{\Mbf} \Bigl( \omegabf_p^2 \bigl| d\Hcal_{(g_0,h_0)}^*[u,Z]\bigr|_{g_0}^2 + \omegabf_{p+1}^2 \bigl| d\Mcal_{(g_0,h_0)}^*[u,Z]\bigr|_{g_0}^2 \Bigr) \dVol_{g_0} .
\endaligned
\ee
\end{definition}

We have at the main result of this section.

\begin{theorem}[The localized seed-to-solution projection]
\label{thm:sts-existence}
Given admissible sets of decay and localization exponents $(p, p_G,p_A)$ and $(\expoPm, \expoP, \expoPp)$, and a conical localization data set $(\Mbf,\Omega,g_0,h_0,\wtrr,\lambdabf)$, there exists a threshold $\overline{\eps}_A \in(0,1]$ with the following property. 

\bei 

\item The threshold depends only on the exponents and on fixed a priori bounds for the localization data set, namely
\be
\aligned
\CPKHzero, \quad
\|\Riem_{g_0}\|_{C^{N,\alpha}_{2+p_G}(\Omega,g_0)},
\quad
\|h_0\|_{C^{N+1,\alpha}_{1+p_G}(\Omega,g_0)},
\quad \operatorname{inj}(\Mbf, g_0),
\endaligned
\ee
($\operatorname{inj}$ denoting the injectivity radius). It is independent of $\eps_G,\eps_A$ and of the seed data.

\item Then, for 
$
0<\eps_A\leq\overline{\eps}_A
$
and for small enough $\eps_G>0$ (as a function of the localization data set and exponents), the localized seed-to-solution map $\Solu^{\lambdabf}_{n,p}$ in~\eqref{equa-sol-map} is well-defined over the collection of seed data sets $\Seed(\Omega, g_0,h_0,p_G, p_A,\eps_G,\eps_A)$ and generates an exact solution of Einstein's vacuum constraints satisfying \eqref{equa-gh-uZ-00}--\eqref{equa-213-2-u-Z} starting from any localized seed data set satisfying, by definition, the near-localization condition~\eqref{equa-near-refe} and the near-Einsteinian condition~\eqref{equa-nearEins}. 
\eei
\end{theorem}

A convenient class of examples for \autoref{thm:sts-existence} is to take $(g_0,h_0)$ to be an exact solution such as the Euclidean solution $(\delta,0)$, and $(\seedg,\seedh)$ to differ from it only in some angular regions.  This includes the gluing of Euclidean and Schwarzschild solutions by Carlotto and Schoen.

The proof of \autoref{thm:sts-existence} and of analogous Sobolev estimates stated in \autoref{thm:sts-Sobolev} is given in \autoref{appendix=E}.  It relies on standard techniques, extended to accommodate a wider range of geometric exponents~$p_G$, and with additional error terms due to constraints being 
linearized at the successive Newton iterates, while the formal adjoint is fixed at the reference data $(g_0,h_0)$, 
as outlined in \autoref{appendix=E}.  The distinction between these data sets allows the seed and solution data sets to share the same regularity $C^{N,\alpha}$ (while the reference data is more regular), and allows $\Solu^{\lambdabf}_{n,p}$~to be a projection along a particular \emph{linear} subspace (determined by the reference data set) and onto the space of solutions of the constraints.

\bei 

\item The linearization $d\Gcal_{(g_0,h_0)}$ of the Einstein operator $\Gcal$ around $(g_0,h_0)$ is not elliptic unless a gauge choice is made~\cite{Corvino-2000,CorvinoSchoen} and, specifically, the solution deformations $(\gdiff,\hdiff)$ are restricted to lie in the image of the adjoint  $d\Gcal_{(g_0,h_0)}^*$, multiplied by weights that suitably localize the deformations to the gluing domain. 

\item The solution is obtained by a Newton iteration scheme~\eqref{Newton-it}, whose convergence requires a good control of the underlying linear operator~$\Jcal_{(g_k,h_k;g_0,h_0)}$.  The proof involves the following ingredients: a weighted Poincaré--Korn--Hardy inequality applied to the linearized Hamiltonian and momentum operators; the Lax--Milgram theorem providing integral bounds on the solution; the interior elliptic regularity for Douglis--Nirenberg systems (away from the boundary of the gluing domain); and the higher radial and angular decay enjoyed by nonlinearities of the Einstein constraints.

\eei

\noindent \autoref{thm:sts-existence} allows us to validate the proposed projection framework for a large class of asymptotically Euclidean solutions.  We encompass solutions with \emph{arbitrarily slow decay} (since ${p_G >0}$) while we establish their \emph{continuous dependence} with respect to the seed data in appropriate weighted spaces.  

 
\section{Harmonic, radial, and shell stability}
\label{section=3}

\subsection{The harmonic-spherical decomposition of the Einstein constraints}
\label{section=3.1}

For our derivation of harmonic estimates, we need to exploit some structure of the Einstein constraints, and specifically of the relevant linearized operators $\Jcal_{(g_0,h_0;g_0,h_0)}$.
Within a conical localization data set $(\Mbf,\Omega, g_0,h_0, \wtrr, \lambdabf)$, we focus on an asymptotic end $\Omega_\iota \simeq K_\iota\cap{}^\complement \Ball_R \subset \RR^n$ as introduced in \autoref{def-conical}.
After a suitable reduction and neglecting comparatively smaller terms (cf.~\autoref{section=10}), we are led to consider these operators with the Euclidean metric $\delta_{ij}$ and vanishing~$h_0$, and we study now their asymptotic properties in the truncated cone $K_\iota\cap{}^\complement \Ball_R$.
Here, a projection exponent $p>0$ is given, together with a localization function $\lambda_\iota\colon\Lambda_\iota = K_\iota \cap \Sphe^{n-1} \to (0, \lambda_0]$, and a variational weight (cf.~\eqref{equa-omega-choice}) 
\be
\omega_{\iota,p} = \lambda_\iota^{\expoP} r^{n/2-p}. 
\ee
This weight has radial homogeneity $n/2-p$ in the distance $\rbf=|x| \geq R$, and the localization function $\lambda_\iota$ vanishes linearly with respect to the distance to the boundary of~$\Lambda_\iota$. With this notation we introduce the following terminology, in which the Laplace operator $\Delta$ is defined in $(\RR^n, \delta)$, and we identify lower and upper indices using the standard metric~$\delta$.

\begin{definition}
With the notation in \autoref{def-conical}, at each asymptotic end $\Omega_\iota$ ($\iota=1,\ldots,I$) the  operators 
\bel{equa:acalew0-deux} 
\aligned
\notreH^{\lambda_\iota}[u] 
& \coloneqq \omega_{\iota,p}^{-2} d\Hcal_{(\delta,0)}\bigl[\omega_{\iota,p}^2 \, d\Hcal_{(\delta,0)}^{*\flat\flat}[u]\bigr]
= \omega_{\iota,p}^{-2} \, \Bigl(
(n-2) \, \Delta(\omega_{\iota,p}^2 \, \Delta u)
+ \del_i\del_j(\omega_{\iota,p}^2 \del_i\del_j u)
\Bigr),
\\
\notreM^{\lambda_\iota}[Z]^i 
& \coloneqq \omega_{\iota,p+1}^{-2} d\Mcal_{(\delta,0)}\bigl[\omega_{\iota,p+1}^2  \, d\Mcal_{(\delta,0)}^{*\sharp\sharp}[Z]\bigr]
= - \frac{1}{2} \omega_{\iota,p+1}^{-2} \del_j\Bigl(\omega_{\iota,p+1}^2(\del_j Z^i + \del_i Z^j) \Bigr),
\endaligned
\ee
are referred to as the (squared) {\bf localized Hamiltonian operator} and (squared) {\bf localized momentum operators,} respectively. Here, $u: \Omega_\iota \subset \RR^n \to \RR$ is a scalar-valued field, while $Z: \Omega_\iota \subset \RR^n \to \RR^n$ is a vector-valued field. 
\end{definition}

We introduce a decomposition of these operators which is adapted to the study of the harmonic decay and takes the form 
\bel{equa-key-decompose-H-repeat-deux} 
\aligned 
r^4 \notreH^{\lambda_\iota}[u] 
& = \Arr[u] + \Ars^{\lambda_\iota}[u] + \ssA^{\lambda_\iota}[u], 
\\
r^2 \notreM^{\lambda_\iota} [Z] 
& = \Brr[Z] + \Brs^{\lambda_\iota}[Z] + \ssB^{\lambda_\iota}[Z]. 
\endaligned
\ee 
Our decomposition is mostly derived by considering how the operators act on functions $\nu(x/r)\, r^{-a_{n,p}}$ and $\xi(x/r) \, r^{-a_{n,p}}$ with radial decay given by the {\bf harmonic exponent} $a_{n,p} = 2(n-2-p)$ (cf.~\eqref{equa-our-parame-00}).
At this stage, we content ourselves with definitions, while the actual decompositions will be made explicit in \refwithname{Lemmas}{lem:sph-Ham} \refwithname{and}{lem:sph-mom} later on.
Importantly, the operators $\ssA^{\lambda_\iota}$ and~$\ssB^{\lambda_\iota}$ are not self-adjoint, but their principal parts are self-adjoint and positive.

\begin{definition} 
\label{def-operators-decomposer}  
The decomposition~\eqref{equa-key-decompose-H-repeat-deux} of the localized Hamiltonian and momentum operators $r^4 \notreH^{\lambda_\iota}[u]$ and $r^2 \notreM^{\lambda_\iota}[u]$, respectively, is characterized by the following two properties. 

\bei 

\item The operators $\Arr[u]$ and $ \Brr[Z]$ involve all the terms without any angular derivatives of the fields $u,Z$ or the localization function $\lambda^{2\expoP}$.

\item The operators $\ssA^{\lambda_\iota}[u]$ and $\ssB^{\lambda_\iota}[Z]$ are defined by restricting attention to fields with harmonic decay, namely for a scalar function~$\nu$ of $x/r$ (respectively a vector-valued function~$\xi$ of~$x/r$), one has
\bel{equa-asymptoOper-reapeat-deux} 
\ssA{}^{\lambda_\iota}[\nu] \coloneqq r^{4 +a_{n,p}} \notreH^{\lambda_\iota}[\nu \, r^{-a_{n,p}}],
\qquad
\ssB{}^{\lambda_\iota}[\xi] \coloneqq r^{2 +a_{n,p}} \notreM^{\lambda_\iota}[\xi \, r^{-a_{n,p}}]. 
\ee 
\eei 
This decomposition is referred to as the \textbf{harmonic-spherical decomposition}, while $\ssA{}^{\lambda_\iota}$ and $\ssB{}^{\lambda_\iota}$ are referred to as the \textbf{harmonic operators.}
\end{definition}

In the following, by analogy with the weighted Sobolev norms defined earlier for functions defined on subsets of $\RR^n$, we use for the weighted measure $d\chi_\iota\coloneqq\lambda_\iota^{2\expoP}d\xh$
and, for any $q\in\RR$, the normalized angular Sobolev norms
\bel{equa-norm-poids}
\|v\|_{\unH^k_q(\Lambda_\iota)}^2
\coloneqq
\sum_{0\leq j\leq k}\fint_{\Lambda_\iota}
|\nablaslash^jv|_{\gslash}^2\,\lambda_\iota^{-2\expoP-2q}\,d\chi_\iota ,
\ee
and $\unL^2_q\coloneqq\unH^0_q$, with the underline denoting the normalization of the integral by the same factor
$\aire[\Lambda_\iota,\lambda_\iota]\coloneqq\int_{\Lambda_\iota}d\chi_\iota$ regardless of~$q$.
In particular, $q=-\expoP$ gives the normalized Sobolev norm associated directly with~$d\chi_\iota$.
If $v$ is defined on the spherical shell $\Lambda_{\iota,r}$, its norm is defined as that of $v(r,\cdot)$ on~$\Lambda_\iota$, namely all angular derivatives are taken with respect to the unit-sphere metric~$\gslash$, and no additional power of~$r$ is implicit.
The space $\unH^k_q(\Lambda_\iota)$ consists of the distributions on $\Lambda_\iota$ with finite norm~\eqref{equa-norm-poids}; the same definition is used componentwise for vector- and tensor-valued fields. This notation does not include any trace condition.

Within proofs, we also consider $\unH^{k*}_{-\expoP}(\Lambda_\iota)$, the continuous dual of $\unH^k_{-\expoP}(\Lambda_\iota)$ for the normalized pairing. Explicitly,
\bel{dual-Sobolev-Lambda}
\|\varphi\|_{\unH^{k*}_{-\expoP}(\Lambda_\iota)}
\coloneqq
\sup_{0\neq v\in\unH^k_{-\expoP}(\Lambda_\iota)}
| \langle\varphi,v\rangle_{\Lambda_\iota}| \bigm/ \|v\|_{\unH^k_{-\expoP}(\Lambda_\iota)} ,
\ee
where $\langle\varphi,v\rangle_{\Lambda_\iota}
=\fint_{\Lambda_\iota}\varphi\cdot v\,d\chi_\iota$ whenever $\varphi$ is represented by a locally integrable field.
For a distribution $\varphi\in\unH^{k*}_{-\expoP}(\Lambda_\iota)$ and test function~$v$ (and likewise for tensors) we use the convention
\be
\fint_{\Lambda_\iota}\varphi\cdot v\,d\chi_\iota
\coloneqq
\langle\varphi,v\rangle_{\Lambda_\iota},
\qquad
\la\varphi\ra_\iota
\coloneqq \langle\varphi,1\rangle_{\Lambda_\iota} .
\ee
Every average or normalized integral involving a dual Sobolev element below is to be understood as a duality pairing, not a claim that the integrand is integrable.

We also introduce operator coefficients (also collected in \autoref{appendix=A})
\bel{equa-our-parame-00} 
\aligned
a_{n,p} & \coloneqq 2( n-2-p) ,
\\
b_{n,p} & \coloneqq 2 + (n-3)(n-2-a_{n,p}), 
\\ 
c_{n,p} & \coloneqq a_{n,p} \bigl( 1 + (n-2) ( n-2- a_{n,p}) \bigr), 
\\
d_{n,p} & \coloneqq (n-1) a_{n,p} b_{n,p} / (n^2-4n+5) ,
\endaligned
\ee
which are all positive for a projection exponent $p\in(p^\flat_n, n-2)$, where $p^\flat_n< (n-2)/2$ was given in~\eqref{pflatn-def}.
 
   
\subsection{Harmonic and radial stability for the Hamiltonian}
\label{section=3.2}

\paragraph{Harmonic stability.}

We present the relevant notion of stability, which addresses three issues: What is the dimension of the harmonic kernel at infinity? Does the spherical average of a solution decay at infinity? Do the anisotropic fluctuations decay at infinity?
We answer the first two questions for the Hamiltonian operator (\autoref{section=3.2}), then for the momentum operator (\autoref{section=3.3}), and then provide quadratic functionals for both operators to control fluctuations (\refwithname{Sections}{section=3.4} \refwithname{and}{section=3.5}).

First of all, we want to ensure that the harmonic operator~$\ssA^{\lambda_\iota}$ has a one-dimensional kernel. We emphasize that it is not self-adjoint.
To this operator we associate the quadratic functional $\nu \mapsto \ssrmA^{\lambda_\iota}[\nu] \coloneqq \fint_{\Lambda_\iota} \nu \, \ssA{}^{\lambda_\iota}[\nu] \, d\chi_\iota$ (after formal integration by parts). We thus consider the quadratic functional 
\bel{ssAalpha-quaform-0}
\aligned
\ssrmA^{\lambda_\iota}[\nu]
& = \fint_{\Lambda_\iota} \Bigl( (n-2) (\Deltaslash\nu)^2
+ |\nablaslash^2\nu|^2
+ 2 (1+a_{n,p}) |\nablaslash\nu|^2 
- c_{n,p} \, \nu \Deltaslash\nu
\Bigr) \, d\chi_\iota. 
\endaligned
\ee
The coefficients $a_{n,p}$ and $c_{n,p}$ were defined in~\eqref{equa-our-parame-00}. We will also use the notation $\la f \ra_\iota$ for the weighted average of a function $f$ on spheres at the asymptotic end $\Omega_\iota$ (cf.~\eqref{equa-average}). 
 
\begin{definition}
\label{def-harmonic-Hstab} 
The localization function $\lambda_\iota\colon \Lambda_\iota \to (0, \lambda_0]$ is said to satisfy the Hamiltonian \textbf{harmonic stability condition} provided\footnote{Here and throughout we use weighted Sobolev norms~\eqref{equa-norm-poids} normalized by $\aire[\Lambda_\iota,\lambda_\iota]$.}
\bel{equa-stable-H-414}
\aligned
  \ssrmA^{\lambda_\iota}[\nu] \gtrsim \| \nu \|_{\unH^2_{-\expoP}(\Lambda_\iota)}^2,
  \qquad \nu \in \unH^2_{-\expoP}(\Lambda_\iota) 
  \text{ with } 
  \la\nu\ra_\iota = 0. 
\endaligned
\ee
\end{definition}
 
Our stability condition above (also sometimes referred to as the harmonic coercivity) is nothing but a weighted Poincaré-type inequality for functions with a vanishing average.  It can be equivalently stated as the coercivity of $\ssrmA^{\lambda_\iota}[\nu]$ plus a large enough multiple of~$\la\nu\ra_\iota^2$.
The harmonic stability condition implies that $\ssA^{\lambda_\iota}$ has a one-dimensional kernel and cokernel, as we prove in \autoref{section=6.2}. Choose any nonzero element $\nu_\iota\in\ker(\ssA^{\lambda_\iota})$, set
$\widetilde\nu_\iota\coloneqq\nu_\iota-\la\nu_\iota\ra_\iota$, and define the structure constants 
\bel{equa-Ptwo-0-struct}
\aligned
\bnotreH_{\iota 1}
&\coloneqq
(n-1)c_{n,p}\la\nu_\iota\ra_\iota
-(n-2)^2\la\Deltaslash\widetilde\nu_\iota\ra_\iota,
\\
\bnotreH_{\iota 0}
&\coloneqq
(n-1)b_{n,p}c_{n,p}\la\nu_\iota\ra_\iota
-(n^2-4n+5)\frac{c_{n,p}}{a_{n,p}}
\la\Deltaslash\widetilde\nu_\iota\ra_\iota.
\endaligned
\ee
Their common rescaling under $\nu_\iota\mapsto C\nu_\iota$ leaves the radial stability condition below unchanged. Once the normalized silhouette is introduced in \autoref{def:normalized-kernel-basis}, below, we take $\nu_\iota=\nu_\iota^{\normal}$ in these formulas.


\paragraph{Spherical averages.}

We are interested in the radial decay of general solutions $u\colon\Omega_\iota \to \RR$ to 
\bel{equa-solutionH} 
\notreH^{\lambda_\iota}[u] = E, 
\ee
where $E\colon\Omega_\iota \to \RR$ is a given scalar field. The operators $\Arr$, $\Ars^{\lambda_\iota}$, and $\ssA^{\lambda_\iota}$ in the decomposition~\eqref{equa-key-decompose-H-repeat-deux}
are quite involved and, especially, are fourth-order in the radial variable as well as in tangential spherical derivatives. 

By contracting the Hamiltonian equation~\eqref{equa-solutionH} with an element of the co-kernel (namely $1$) and an element of the kernel of the harmonic Hamiltonian operator, we are able to derive in \autoref{section=6.3} a fourth-order system of two {\it coupled} differential equations satisfied by the averages $\la u\ra_\iota$ and $\la\Deltaslash u\ra_\iota$. Using the notation 
\be
\vartheta = r \, \del_r
\ee
after radial integration (twice), we find the second-order equation\footnote{The stability condition~\eqref{equa-b2-positive}, below, imposes that $\bnotreH_{\iota 1}$ does not equal zero.} 
\bel{equa-defb-b-b} 
\bigl( - \bnotreH_{\iota 1} \vartheta(\vartheta+a_{n,p}) + \bnotreH_{\iota 0} \bigr) \, \la u\ra_\iota 
= \bigl( -(n-2) \vartheta + {c_{n,p} / a_{n,p}} \bigr) \Kappa^\notreH_\iota[\ut]
+ \Nu^\notreH_\iota[E] + C_\iota^u r^{-a_{n,p}} ,
\ee
where, throughout this paragraph, the harmonic constant $C_\iota^u\in\RR$ denotes a solution-dependent integration constant.
We distinguish here the following objects.

\bei 

\item  The {\bf structure constants} $\bnotreH_{\iota 1}$ and $\bnotreH_{\iota 0}$ (defined in~\eqref{equa-Ptwo-0-struct} above) depend upon the localization function $\lambda_\iota$ as well as $n,p$ and, more precisely, are made explicit in terms of a suitably normalized element  of the kernel of the harmonic operator $\ssA^{\lambda_\iota}$, referred to as the ``silhouette function'' and denoted by $\nu^\normal$; cf.~\autoref{def:normalized-kernel-basis}, below. 

\item The {\bf fluctuation operator} $\Kappa^\notreH_\iota$ is a linear functional (given explicitly in \autoref{def:KappaH}, below): it is an integral over $\Lambda_{\iota,r}$ of a linear form in the derivatives $\vartheta^j \nablaslash^k u$ with $j+k\leq 3$ and $k\leq 2$. 
Interestingly, this operator is bilinear in the {\it fluctuations} of $u$, defined as the function 
\bel{equa-tildenotation}
\ut \coloneqq u - \la u \ra_\iota ,
\ee
and in the fluctuations $\nut^\normal$ of~$\nu^\normal$.
As a result, $\Kappa^\notreH_\iota$ can be thought of as a {\it lower-order term.}
For instance, it vanishes identically when $\Lambda_\iota$ is the whole sphere and $\lambda_\iota$ is taken to be constant.

\item The {\bf source term} $\Nu^\notreH_\iota[E]$ is an integral operator acting on the function $E$ (and given explicitly in~\eqref{equa-hereOmegaHcompact}, below). 

\eei   
  
We now state our condition for the stability of spherical averages.  It guarantees that the characteristic polynomial of the differential operator $- \bnotreH_{\iota 1} \vartheta(\vartheta+a_{n,p}) + \bnotreH_{\iota 0} $ in~\eqref{equa-defb-b-b} admits two roots in $(-\infty,-a_{n,p})\cup(0,+\infty)$, which correspond to a {\it super-harmonic} mode and a {\it growing} mode (excluded by variational bounds).

\begin{definition}
\label{def-radial-Hstab}
A localization function satisfying the Hamiltonian harmonic stability condition~\eqref{equa-stable-H-414} is said to also enjoy the Hamiltonian \textbf{radial stability condition} provided the following product\footnote{This positivity condition is independent of the normalization of the kernel element~$\nu^\normal$.  Strictly speaking, the normalization condition on~$\nu^\normal$ that we use relies on the non-vanishing of~$\bnotreH_{\iota 0}$.} is positive:
\bel{equa-b2-positive}
\bnotreH_{\iota 1} \, \bnotreH_{\iota 0}>0.  
\ee
\end{definition}


\subsection{Harmonic and radial stability for the momentum}
\label{section=3.3}

\paragraph{Harmonic stability.}

It is convenient to introduce the decomposition of a vector $Z$ into radial (perpendicular) and angular (parallel to the spheres) components,
\bel{Zi-split}
Z_i = \xh_i\Zperp+ \Zpar_i , \qquad
\Zperp \coloneqq \xh_j Z_j , \qquad
\Zpar_i \coloneqq Z_i - \xh_i\xh_j Z_j ,
\ee
in which $\xh= x/r$ in the coordinates at infinity. We proceed analogously as before but now deal with the momentum operator and consider first the (non-self-adjoint) harmonic operator $\xi \mapsto \ssB{}^{\lambda_\iota}[\xi]$ introduced in \autoref{def-operators-decomposer}. We thus introduce the quadratic functional 
\bel{ssBalpha-quaform-0}
\aligned
\ssrmB^{\lambda_\iota}[\xi]
= \fint_{\Lambda_\iota} \hskip-.1cm \Bigl(
(n-1) (\xiperp)^2 + \frac{1}{2} |\nablaslash\xiperp|^2
& - {a_{n,p}+2 \over 2} \xipar\cdot\nablaslash\xiperp + 2 \xiperp \nablaslash\cdot\xipar \\[-1ex]
& + {a_{n,p}+1\over 2}|\xipar|^2
+ \bigl|\Sym(\nablaslash\xipar) \bigr|^2 \Bigr) \, d\chi_\iota, 
\endaligned
\ee
where $\Sym(\nablaslash \xipar)_{ab} \coloneqq \frac{1}{2} \bigl( \nablaslash_a \xipar_b + \nablaslash_b \xipar_a\bigr)$. The following condition is nothing but a Korn-type inequality for the localization measure $d\chi_\iota$.
It implies that the kernel and cokernel of $\ssB^{\lambda_\iota}$ have dimension~$n$ (cf.~\autoref{section=9.2}).

\begin{definition}
\label{def-harmonic-Mstab}
A localization function $\lambda_\iota\colon \Lambda_\iota \subset \Sphe^{n-1} \to (0, \lambda_0]$ is said to satisfy the momentum \textbf{harmonic stability condition} provided  
\bel{equa-stable-M-414}
\aligned
& \ssrmB^{\lambda_\iota}[\xi] \gtrsim 
\| \xiperp \|_{\unH^1_{-\expoP}(\Lambda_\iota)}^2
+ \| \xipar \|_{\unL^2_{-\expoP}(\Lambda_\iota)}^2
+ \| \Sym(\nablaslash \xipar) \|_{\unL^2_{-\expoP}(\Lambda_\iota)}^2,
\\
& \xi \in \unH^1_{-\expoP}(\Lambda_\iota)
\quad \text{ with } \, 
 \bigl\la 2 \, \xh_l\, \xi^{\perp} + \xi_{l}^{\parallel} \bigr \ra_\iota  = 0.  
\endaligned
\ee  
\end{definition}
 

\paragraph{Spherical averages.}
 
We can integrate the equation $\notreM^{\lambda_\iota}[Z] = F$ on each sphere of radius $r \geq R$ after contraction with an element of the kernel, or the co-kernel, of the harmonic operator $\ssB{}^{\lambda_\iota}$. After some calculations in \autoref{section=9.3}, it turns out that, for the $n$ spherical averages $\la 2 \, \xh_l \Zperp + \Zpar_l\ra_\iota$ associated with the vector field $Z$ on each sphere, 
we find a second-order differential system, which integrates to (non-differential) linear equations ($j=1,2,\ldots,n$)
\bel{equa-defb-b-b-MM} 
(\Xi^\notreM_\iota T_\iota^{-1})^{(j)l} \bigl\la 2 \, \xh_l \Zperp + \Zpar_l \bigr\ra_\iota 
=  \Kappa_\iota^{\notreM (j)}[Z] + \Nu_\iota^{\notreM (j)}[F] + C_\iota^{Z(j)} \, r^{-a_{n,p}},
\ee
where the harmonic constants $C_\iota^{Z(j)}$ are solution-dependent constants. We distinguish the following contributions. 
\bei  

\item The (constant) {\bf structure matrices}, by definition, are
\bel{equa-the-matrix}
\aligned
(T_{\iota kl}) & \coloneqq (\delta_{kl} + \la \xh_k\xh_l\ra_\iota),
\\
(\Xi^\notreM_\iota)^{(j)}{}_k
& \coloneqq
 \bigl\la - \nablaslash_k \xi_\iota^{\normal (j) \perp} + 2 a_{n,p} \xh_k\, \xi_\iota^{\normal (j) \perp} \bigr \ra_\iota
 +(1+a_{n,p}) \la \xi_\iota^{\normal (j) \parallel}{}_k \ra_\iota. 
 \endaligned
\ee
The matrix $T_\iota$ is positive-definite.  The matrix $\Xi^\notreM_\iota$ is given explicitly in terms of averages of any choice of basis $\xi_\iota^{\normal(j)} \in \ker(\ssB^{\lambda_\iota})$.
The matrix product 
$(\Xi^\notreM_\iota T_\iota^{-1})^{(j)l} \coloneqq (\Xi^\notreM_\iota)^{(j)}{}_{k} (T_\iota^{-1})^{kl}$ in~\eqref{equa-defb-b-b-MM} is defined in a standard way. 

\item The  {\bf fluctuation operator} $\Kappa_\iota^\notreM = ( \Kappa_\iota^{\notreM (j)} )_{1 \leq j \leq n}$ is a vector-valued linear functional given explicitly in \autoref{def:KappaM}, below: it is an integral over $\Lambda_\iota$ of a linear form in the variables $Z, \vartheta Z, \nablaslash Z$.
Importantly, $\Kappa_\iota^\notreM$ can be thought of as a {\it lower-order term.}
For instance, it vanishes identically when $\Lambda_\iota$ is taken to be the sphere and $\lambda_\iota$ to be constant, and it depends upon the {\it fluctuations} $Z^\fluc$ and their derivatives, only. (See  \autoref{DEF-fluct-vectors} for the detailled derivation.)

\item The {\bf source term} $\Nu_\iota^{\notreM (j)}$ is a vector-valued integral operator acting on the source~$F$, whose exact formula and mapping properties are stated in \autoref{prop-87-moment}.

\eei 

\noindent
The stability condition~\eqref{equa-Xi-invertible}, below, implies that~\eqref{equa-defb-b-b-MM} controls the vector of spherical averages $\la 2 \, \xh_l \Zperp + \Zpar_l\ra_\iota$.  In addition, it allows us to normalize the basis $\xi_\iota^{\normal(j)} \in \ker(\ssB^{\lambda_\iota})$, referred to as the `silhouette vector' fields, in such a way that $\Xi^\notreM_\iota$ is a multiple of the identity matrix, as stated in \autoref{def:normalized-kernel-basis}.

\begin{definition}
\label{def-radial-Mstab}
A localization function satisfying the momentum harmonic stability condition~\eqref{equa-stable-M-414} is said to satisfy the momentum {\bf radial stability condition} provided  
the matrix $\Xi^\notreM_\iota$ is invertible, namely 
\bel{equa-Xi-invertible}  
\detbf (\Xi^\notreM_\iota) \neq 0. 
\ee 
\end{definition}


\subsection{Shell stability condition for the Hamiltonian}
\label{section=3.4}

\paragraph{Hamiltonian on spherical shells.}

To complement the average $\la u\ra_\iota$ on each shell $\Lambda_{\iota,r}$, we rely on a quadratic functional,
referred to as the {\bf Hamiltonian shell functional}. Fix auxiliary constants $\cstun>0$ and $\cstdeux\geq 0$, to be chosen as part of the shell stability condition so that the functional below is non-negative and its dissipation is semi-coercive. The functional is given by
\bel{PhinotreH-expr}
\aligned
\Phi^\notreH_\iota[u] 
& = \frac{1}{2(n-1)} \fint_{\Lambda_{\iota,r}} \Bigl(
(n-1) \bigl(\vartheta^2 u + a_{n,p} \vartheta u - \cstun u\bigr)^2
+ \bigl|\nablaslash^2 u\bigr|^2
+ (n-2) (\Deltaslash u)^2
\\[-1ex]
& \qquad\qquad\qquad\quad\ \
+ 2 (1+a_{n,p}+\cstun) \,  |\nablaslash u|^2
+ \cstdeux \, u^2
- 2\bigl(c_{n,p}+(n-2)\cstun\bigr) u \Deltaslash u
\Bigr) \, d\chi_\iota.
\endaligned
\ee
The admissible choice may depend on the localization geometry; for instance, in the isotropic case we can take $\cstdeux=0$. It obeys a crucial dissipation property\footnote{The word ``dissipation'' is used here in a loose sense.} obtained by multiplying~\eqref{equa-solutionH} by
\bel{test-function-theta2u}
{1 \over n-1} r^4 \bigl( \vartheta^2 u + a_{n,p} \vartheta u - \cstun u \bigr) .
\ee
Specifically, for any solution $u\colon\Omega_\iota \to \RR$ to~\eqref{equa-solutionH} we derive the {\bf Hamiltonian shell identity} 
\bse
\label{equa-condition-monotone}
\bel{main-func-identity}
- (\vartheta + a_{n,p}) (\vartheta + 2a_{n,p}) \Phi^\notreH_\iota[u]
+ \Chi^\notreH_\iota[u] = \Mu^\notreH_\iota[u,E], 
\ee 
which involves the following features.
Detailed expressions of the various functionals play no role in the main text and are postponed to \autoref{appendix=B}.
In~\cite{LL-PoincareKornHardy}, we consider functionals~$\Phi^\notreH_\iota$ given as integrals of general quadratic expressions in the variables $\vartheta^j \nablaslash^k u$ with $j+k \leq 2$, of which \eqref{PhinotreH-expr} is only an example.
\bei
\item The differential operator $(\vartheta + a_{n,p}) (\vartheta + 2a_{n,p})$ arises naturally since we are seeking to ``bridge'' the variational decay rate $r^{-a_{n,p}/2}$ (available from \autoref{thm:sts-existence}) and the harmonic decay rate $r^{-a_{n,p}}$ (which is our main aim), for which quadratic expressions of~$u$ behave as $r^{-a_{n,p}}$ and~$r^{-2a_{n,p}}$, respectively.

\item The {\bf bare dissipation functional} $\Chi^\notreH_\iota$ is an integral functional in the variables $\vartheta^j \nablaslash^k u$ for $j+k\leq 4$ with $j \leq 3$ and $k \leq 2$. It features products of second-order and fourth-order derivatives of~$u$ (and no quadratic term in these latter derivatives) hence cannot enjoy positivity properties.

\item To deal with these linear terms, we introduce the two (shifted) {\bf dissipation functionals}
\bel{equa-310c} 
\Psi^\notreH_{\beta\iota}
\coloneqq \Chi^\notreH_\iota - (\vartheta +\beta) \Upsilon^\notreH_\iota, 
\qquad \beta \in \{ a_{n,p}, 2 a_{n,p}\}, 
\ee
in which we have subtracted a {\bf radial integration functional} $\Upsilon^\notreH_\iota[u]$ in the variables $\vartheta^j \nablaslash^k u$ for $j+k\leq 3$ and $j,k\leq 2$, chosen such that $\Psi^\notreH_{\beta\iota}$ is a quadratic functional in $\vartheta^j \nablaslash^k u$ for $j+k\leq 3$ and $k\leq 2$, namely the same derivatives as the functional~$\Kappa^\notreH_\iota$. They are given explicitly in~\eqref{PsinotreH-expr}.

\item The {\bf source functional} $\Mu^\notreH_\iota$ is the negative of the product in~\eqref{test-function-theta2u} and the source~$E$,
\bel{Psi-gendef} 
\Mu^\notreH_\iota[u,E] \coloneqq 
\frac{1}{n-1} \fint_{\Lambda_{\iota,r}}  \bigl(- \vartheta ( \vartheta + a_{n,p}) u  + \cstun u \bigr) \, r^4 E \, d\chi_\iota.  
\ee 

\eei

\noindent The shifts from $\Chi^\notreH_\iota$ to $\Psi^\notreH_{\beta\iota}$ in~\eqref{equa-310c} amount to total derivatives in an explicit integration of the differential equation~\eqref{main-func-identity}. Rather than imposing a condition on $\Upsilon^\notreH_\iota$ itself we impose a (partial) coercivity condition (stated in~\eqref{equa-conditionH2}, below) on both functionals~$\Psi^\notreH_{\beta\iota}$. Namely, it can be checked that the positivity of the functionals $\Psi^\notreH_{\beta\iota}$ is possible only after adding a large multiple of~$\la u\ra_\iota^2$. Such a feature is shared with the harmonic stability condition and is typical for Poincaré inequalities.

\ese


\paragraph{Hamiltonian shell stability.}

The third stability condition is governed by a \emph{radial Hardy constant}, denoted by~$\cradialHiota$. This
constant arises by combining the Hamiltonian shell identity \eqref{main-func-identity} with the radial ODE
\eqref{equa-defb-b-b} satisfied by the spherical average
$\la u\ra_\iota$. Equivalently, $\cradialHiota$ is the optimal
coefficient in a weighted one-dimensional Hardy inequality (cf.~\autoref{section=7}). Its value depends
only on the characteristic exponents of the radial equations
governing $\la u\ra_\iota$ and~$\Phi^\notreH_\iota$. Since its explicit form is not needed here, we defer its expression to
\eqref{equa--517-repeat}.
With some abuse of notation it is convenient to introduce the norm 
\bel{equa-normsH}
\bigl( \norm{u}^\notreH_\iota \bigr)^2
\coloneqq \| \vartheta^2 u\|^2_{\unL^2_{-\expoP}(\Lambda_{\iota,r})} + \| \vartheta u\|^2_{\unH^1_{-\expoP}(\Lambda_{\iota,r})} + \| u\|^2_{\unH^2_{-\expoP}(\Lambda_{\iota,r})} 
\ee
of $u$ and its radial derivatives on each spherical shell $\Lambda_{\iota,r}$ of each asymptotic end.
From now on, we simplify the notation and no longer specify the variable $r$ explicitly. 

\begin{definition}
\label{def-shell-Hstab} 
A localization function satisfying the Hamiltonian harmonic and radial stability conditions~\eqref{equa-stable-H-414} and~\eqref{equa-b2-positive} is said to satisfy the Hamiltonian \textbf{shell stability condition} if there exists a Hamiltonian shell identity~\eqref{equa-condition-monotone} enjoying the following properties.
\bse
\label{equa-last-twoH} 
\bei 
\item {\bf Continuity of the shell functionals.} For every sufficient regular scalar field $u\colon\Omega_\iota \to \RR$ and each spherical shell~$\Lambda_{\iota, r}$, and for $\beta\in\{a_{n,p},2a_{n,p}\}$, one has
\bel{equa-conditionH}
0 \leq \Phi^\notreH_\iota[u] \lesssim \bigl( \norm{u}^\notreH_\iota\bigr)^2 ,
\qquad
\bigl|\Psi^\notreH_{\beta\iota}[u]\bigr|
\lesssim \bigl( \norm{\vartheta u}^\notreH_\iota \bigr)^2 + \bigl( \norm{u}^\notreH_\iota \bigr)^2 .
\ee

\item {\bf Semi-coercivity of the shell dissipation.}
There exists a constant $\gammashellHiota>0$ such that for $\beta\in\{a_{n,p},2a_{n,p}\}$, for every sufficiently regular scalar field $u\colon\Omega_\iota \to \RR$ and each spherical shell~$\Lambda_{\iota,r}$, one has
\bel{equa-conditionH2}
\Psi^\notreH_{\beta\iota}[u]
+ \gammashellHiota \Bigl( \la u\ra_\iota^2
- \cradialHiota \bigl( \Kappa^\notreH_\iota[\ut] \bigr)^2 \Bigr)
\gtrsim
\bigl( \norm{\vartheta u}^\notreH_\iota \bigr)^2 + \bigl( \norm{u}^\notreH_\iota \bigr)^2 .
\ee
\eei 
\ese
The localization function is called {\bf Hamiltonian-stable} when the conditions~\eqref{equa-stable-H-414},~\eqref{equa-b2-positive}, and~\eqref{equa-last-twoH}  hold. 
\end{definition}

We have thus defined the class of localization functions that will naturally arise from our investigation of the structure of the (squared) localized Hamiltonian operator $\notreH^{\lambda_\iota}$ defined in~\eqref{equa:acalew0-deux}. 
Note that \eqref{equa-conditionH}~is clearly satisfied, in view of~\eqref{PhinotreH-expr} and of the construction of~$\Psi^\notreH_{\beta\iota}$, and we state it here to abstract away the details of the functional.
In contrast, the semi-coercivity~\eqref{equa-conditionH2} is a non-trivial condition on the localization domain $(\Lambda,d\chi_\iota)$.  By analyzing the dissipation functionals, we establish sufficient conditions for stability in~\cite{LL-PoincareKornHardy} (cf.~\autoref{thm:informal-sufficient-stability}, below).
Another interesting fact is that shell stability implies harmonic stability: indeed,  the shell identity~\eqref{main-func-identity} applied to a harmonic function implies $\Psi^\notreH_{2a_{n,p}\,\iota}[\nu r^{-a_{n,p}}]=\frac{\cstun}{n-1}\ssrmA^{\lambda_\iota}[\nu]r^{-2a_{n,p}}$, so that semi-coercivity of $\Psi^\notreH_{2a_{n,p}\,\iota}$ implies that of~$\ssrmA^{\lambda_\iota}$.


\subsection{Shell stability condition for the momentum}
\label{section=3.5}

\paragraph{Momentum on spherical shells.}
 
\bse
\label{equ-shell-momen}
We then introduce the {\bf momentum shell functional} (as we call it) 
\bel{equa-quada-M}
\Phi^\notreM_\iota[Z] \coloneqq \frac{1}{2} \fint_{\Lambda_\iota} \bigl( 2 \, \Zperp{}^2 + |\Zpar|^2 \bigr) d\chi_\iota, 
\ee 
which obeys the {\bf shell identity} for the momentum
\bel{main-func-identity-MM}
- (\vartheta + a_{n,p}) (\vartheta + 2a_{n,p}) \Phi^\notreM_\iota[Z]
+ \Chi^\notreM_\iota[Z] = \Mu^\notreM_\iota[Z,F].
\ee 
It involves the following terms (given explicitly in \autoref{appendix=B}, below). 

\bei 

\item  The {\bf bare dissipation functional} $\Chi^\notreM_\iota[Z]$  is an integral functional in the variables 
$Z$, $\vartheta Z$, $\nablaslash Z$, $\vartheta\nablaslash\Zperp$. Since it features $\vartheta\nablaslash\Zperp$ {\it linearly} (and not quadratically) it cannot enjoy positivity properties.
 
\item To eliminate these linear terms that would prevent the functionals from being coercive we introduce the (shifted) {\bf dissipation functionals}  (for $\beta\in\{a_{n,p}, 2 a_{n,p}\}$)
\bel{equa-310c-MM} 
\Psi^\notreM_{\beta\iota} \coloneqq \Chi^\notreM_\iota - (\vartheta +\beta) \Upsilon^\notreM_\iota, 
\ee
in which we have subtracted a {\bf radial integration functional} $\Upsilon^\notreM_\iota[Z]$ in the variables $Z, \nablaslash \Zperp$. 

\item The {\bf source functional} $\Mu^\notreM_\iota$ is
\be
\Mu^\notreM_\iota[Z,F] 
\coloneqq 2 r^2 \fint_{\Lambda_{\iota, r}} Z\cdot F \, d\chi_\iota = \fint_{\Lambda_{\iota,r}} 2 \, \bigl( \Zperp \Fperp + \Zpar\cdot\Fpar \bigr) r^2 d\chi_\iota. 
\ee 
\eei 
\ese


\paragraph{Momentum shell stability.}

From now on, we simplify the notation and no longer specify the variable $r$ explicitly.
Our last stability condition on the localization function is as follows.
It involves a quadratic functional $\shellKorn_\iota[Z]$ on each shell, defined momentarily.  The weighted radial integrals of $\shellKorn_\iota[Z]$ are non-negative, so that its presence in~\eqref{equa-conditionM2} does not spoil radially-integrated coercivity properties of~$\Psi^\notreM_{\beta\iota}$.  As we explain below, this term is only crucial in dimension $n=3$.

\begin{definition}
\label{def-shell-Mstab} 
A localization function satisfying the momentum harmonic and radial stability conditions \eqref{equa-stable-M-414} and~\eqref{equa-Xi-invertible} is said to satisfy the momentum \textbf{shell stability condition} if there exists a momentum shell identity~\eqref{equ-shell-momen} enjoying the following properties.
\bse\label{equa-last-twoM}
\bei 
 
\item {\bf Continuity of the shell functionals.}
For every sufficiently regular vector field $Z\colon\Omega_\iota\to\RR^n$ and any spherical shell $\Lambda_{\iota,r}$, and for $\beta\in\{a_{n,p},2a_{n,p}\}$, one has
\bel{equa-shell-M-coer}
0 \leq \Phi^\notreM_\iota[Z] \lesssim \|Z\|_{\unL^2_{-\expoP}(\Lambda_{\iota,r})}^2 ,
\qquad
\bigl| \Psi^\notreM_{\beta\iota}[Z] \bigr| \lesssim \|\vartheta Z\|_{\unL^2_{-\expoP}(\Lambda_{\iota,r})}^2 + \|Z\|_{\unH^1_{-\expoP}(\Lambda_{\iota,r})}^2.
\ee

\item {\bf Semi-coercivity of the shell dissipation.}
There exist constants $\gammashellMiota,\gammaKornMiota\geq 0$ such that for $\beta\in\{a_{n,p},2a_{n,p}\}$, for every sufficiently regular vector field $Z\colon\Omega_\iota\to\RR^n$ and each spherical shell $\Lambda_{\iota,r}$, one has
\bel{equa-conditionM2}
\aligned
& \Psi^\notreM_{\beta\iota}[Z] 
+ \gammashellMiota \, \Bigl| \la 2 \,\xh \Zperp + \Zpar \ra_\iota - T_\iota(\Xi^\notreM_\iota)^{-1} \Kappa_\iota^\notreM[Z] \Bigr|^2
- \gammaKornMiota \shellKorn_\iota[Z]
\\
& \quad \gtrsim \bigl\|\vartheta\Zperp,\, (\vartheta\Zpar+\nablaslash\Zperp),\, \Zperp\bigr\|_{\unL^2_{-\expoP}(\Lambda_{\iota,r})}^2 + \bigl\|\Zpar\bigr\|_{\unH^1_{-\expoP}(\Lambda_{\iota,r})}^2
\endaligned
\ee
\eei 
\ese
The localization function is called {\bf momentum-stable} when the conditions~\eqref{equa-stable-M-414},~\eqref{equa-Xi-invertible}, and~\eqref{equa-last-twoM}  hold. 
\end{definition}

As for Hamiltonian shell stability, the continuity property is immediate in view of our choice~\eqref{equa-quada-M} and of the construction of~$\Psi^\notreM_{\beta\iota}$, but is stated here for clarity in the presentation.
The averages $\la 2\xh_l\Zperp+\Zpar_l\ra_\iota$ and fluctuation operators $\Kappa_\iota^{\notreM(j)}[Z]$ appear differently from the Hamiltonian case because the equation~\eqref{equa-defb-b-b-MM} relating them is algebraic instead of being a differential equation like~\eqref{equa-defb-b-b}.


\paragraph{Korn inequality.}

Observe that the norm controlled in~\eqref{equa-conditionM2} is strictly weaker than~\eqref{equa-shell-M-coer} on a given shell.  In \autoref{section=8}, when considering radial integrals of these norms, we will apply the Korn inequality on~$\Omega$ to bridge this gap.  A basic version of this inequality is the following standard weighted estimate on a dyadic annulus: there exists an optimal constant $\CKornMiota>0$ such that for all vector fields $Z=(\Zperp,\Zpar)$ on $\Omega_R$, and all $R'\geq R$, one has\footnote{The norms $L^2_{0,-\expoP}$ and $H^1_{0,-\expoP}$ involve the measure $\lambda^{2\expoP}r^{-n}d^nx$, whose radial power is convenient later on.}
\bel{ambient-Korn}
\bigl\|\Sym(\del Z)\bigr\|_{L^2_{0,-\expoP}(\Omega_{R'}\setminus\Omega_{2R'})}^2
\geq \frac{1}{(\CKornMiota)^2} \min_{\textnormal{ambient Killing } \zeta} \|Z-\zeta\|_{H^1_{0,-\expoP}(\Omega_{R'}\setminus\Omega_{2R'})}^2 ,
\ee
where the minimum is taken over all Killing vector fields~$\zeta$ of~$\RR^n$ (rotations and translations).  The constant is independent of~$R'$ by a scaling argument.
In the non-compact setting of the full domain~$\Omega_R$ with norms involving a radial weight, Killing vector fields are eliminated by radial decay conditions, so that the weighted norm of $\Sym(\del Z)$ on the whole domain~$\Omega$ is coercive.

When using the shell identity to control fluctuations of~$Z$, we will rely on positivity properties of \emph{radial integrals} of the shell dissipations~$\Psi^\notreM_{\beta\iota}$ (cf.\ \autoref{section=8.2}).  The Korn inequality~\eqref{ambient-Korn} thus motivates us to allow in \autoref{def-shell-Mstab} an additional term which we dub the {\bf Korn remainder functional} (which is essential only in dimension $n=3$)
\bel{def-Korn-shell}
\shellKorn_\iota[Z] \coloneqq \bigl\|\Sym(\del Z)\bigr\|_{\unL^2_{-\expoP}(\Lambda_\iota)}^2
- \frac{1}{2^{2a_{n,p}}(\CKornMiota)^2} \min_{\textnormal{ambient Killing }\zeta} \|\Zpar - \zeta^\parallel\|_{\unH^1_{-\expoP}(\Lambda_\iota)}^2 ,
\ee
where $\CKornMiota$ is the constant in~\eqref{ambient-Korn}.
We emphasize that the negative term only involves parallel components and their angular derivatives; a control of $\Zperp-\zeta^\perp$ will be not be necessary for our purposes.
The minimum is taken over Killing vectors~$\zeta$ of Euclidean space~$\RR^n$ (corresponding to translations and rotations) restricted to $\Lambda_\iota\subset\Sphe^{n-1}\subset\RR^n$ and projected to their component~$\zeta^\parallel$.
The ambient Korn inequality~\eqref{ambient-Korn} yields the positivity of weighted radial integrals\footnote{The rescaled constant in~\eqref{def-Korn-shell} accounts for the variation of radial factors across a dyadic annulus.  In~\eqref{def-Korn-shell} the minimum over~$\zeta$ is taken independently on every shell, which makes $\shellKorn_\iota[Z]$ less negative than the radial integrand in~\eqref{ambient-Korn}.} of~$\shellKorn_\iota[Z]$, which ensures that our applications of shell stability in \autoref{section=8} go through.

On the other hand, the term $\|\Zpar-\zeta^\parallel\|^2$ helps complete the dissipation~$\Psi^\notreM_{\beta\iota}$ into a coercive functional.
The companion paper~\cite{LL-PoincareKornHardy} proves that, for suitable constants $\gammashellMiota,C_1,C_2>0$,
\bse\label{PsiM-coer-companion}
\be
\aligned
\Psi^\notreM_{\beta\iota}[Z] 
+ \gammashellMiota \, \Bigl| \la 2 \,\xh \Zperp + \Zpar \ra_\iota - T_\iota(\Xi^\notreM_\iota)^{-1} \Kappa_\iota^\notreM[Z] \Bigr|^2 &
\\
{} + C_1 \min_{\textnormal{ambient Killing }\zeta} \|\Zpar - \zeta^\parallel\|_{\unH^1_{-\expoP}(\Lambda_\iota)}^2
& \geq C_2 \|\Sym(\del Z)\|_{\unL^2_{-\expoP}(\Lambda_\iota)}^2 ,
\endaligned
\ee
with the crucial property that
\be
C_1/C_2\to 0 \quad \text{in the } a_{n,p}\to 0 \text{ limit}.
\ee
\ese
Thus, this coercivity property implies the desired shell stability condition~\eqref{equa-conditionM2} provided $C_1/C_2 < 2^{-2a_{n,p}}(\CKornMiota)^{-2}$, which holds for small enough $a_{n,p}>0$.
In higher dimensions the $\shellKorn_\iota[Z]$ term is not harmful, but we could omit it from the definition of shell stability because at best it helps improve some optimal constants, as we now explain.

Intrinsically, the set of $\zeta^\parallel$ to minimize over in~\eqref{def-Korn-shell} are exactly the \emph{conformal} Killing vector fields on the sphere~$\Sphe^{n-1}$ (corresponding to special conformal transformations and rotations), restricted to~$\Lambda_\iota$.
In high dimensions $n\geq 4$, or for $n=3$ and $\Lambda_\iota=\Sphe^2$, these constitute the whole set of conformal Killing vector fields of~$\Lambda_\iota$.
In contrast, in dimension $n=3$ with $\Lambda_\iota\neq\Sphe^2$, the domain~$\Lambda_\iota$ admits an infinite-dimensional space of conformal Killing vector fields, parametrized by holomorphic functions on~$\Lambda_\iota$.  These fields are dubbed \emph{local} conformal Killing vector fields, as opposed to \emph{global} ones that extend to the whole sphere.

It is instructive to separate the radial and angular components of $\Sym(\del Z)$, which are $\vartheta\Zperp$, $(\vartheta-1)\Zpar+\nablaslash\Zperp$, and $\Sym(\nablaslash\Zpar)+\Zperp\gslash$.  Their weighted $L^2$ norm on a fixed shell $\Lambda_\iota$ gives a control of $\Zpar$ thanks to the \emph{conformal} Korn inequality, which states that
\bel{conf-Korn}
\bigl\|\Sym(\nablaslash\Zpar)^\circ\bigr\|_{\unL^2_{-\expoP}(\Lambda_\iota)}^2
\gtrsim \min_{\textnormal{conformal Killing }\zeta^\parallel} \|\Zpar-\zeta^\parallel\|_{\unH^1_{-\expoP}(\Lambda_\iota)}^2 ,
\ee
where the minimum is taken over all conformal Killing vector fields $\zeta^\parallel$ of the shell~$\Lambda_\iota$.
In dimension $n\geq 4$ (or $n=3$ and $\Lambda_\iota=\Sphe^2$), if $C_1/C_2$ appearing in~\eqref{PsiM-coer-companion} is less than the implicit constant in the conformal Korn inequality~\eqref{conf-Korn}, then~\eqref{PsiM-coer-companion} implies that $\Psi^\notreM_{\beta\iota}[Z] + \gammashellMiota \, \bigl| \la 2 \,\xh \Zperp + \Zpar \ra_\iota - T_\iota(\Xi^\notreM_\iota)^{-1} \Kappa_\iota^\notreM[Z] \bigr|^2$ is coercive, which is the desired shell stability condition~\eqref{equa-conditionM2} with no need for~$\gammaKornMiota$.


\section{The optimal localization theory}
\label{section=4}

\subsection{Main statement}
\label{section=4.1}
 
\paragraph{The notion of stable localization.}

We now collect our stability conditions into a definition. The regime of interest for the projection exponent should now be such that\footnote{Dealing with the regime $p \in (0,p^\flat_n)$ would require some change of signs throughout our analysis.}
\bel{equa-p-n-flat}
p^\flat_n  < p < n-2 \qquad \text{ with } p^\flat_n \coloneqq (n-1)(n-3)/(2(n-2)),
\ee 
in which the notation $p^\flat_n$ was already introduced in \autoref{section=1.2}. The lower bound $p^\flat_n$ will arise in our analysis in order for an operator coefficient denoted by $c_{n,p}$ to be non-negative (cf.~our notation in \autoref{appendix=A}). On the one hand, we pick an interval of exponents~$p$ that encompasses a (one-sided) neighborhood of the harmonic exponent $n-2$. On the other hand, we observe that $p_3^\flat=0$ so that the full range $p \in (0,n-2=1)$ in dimension $3$ is covered by our presentation, while in general dimensions this interval of~$p$ is larger than the ``ADM range'' $[(n-2)/2,n-2)$.

\begin{definition} 
\label{def-41-stable}
Fix a projection exponent $p \in (p^\flat_n,n-2)$ and a localization exponent $\expoP \geq 2$. A localization function $\lambdabf$ is called a \textbf{stable localization function} if, at each asymptotic end, it is 
Hamiltonian-stable and momentum-stable in the sense of \refwithname{Definitions}{def-harmonic-Hstab} \refwithname{to}{def-shell-Mstab}.
\end{definition}

Our stability conditions provide sufficient (and essentially necessary) conditions to guarantee, in the asymptotics of solutions to the localized Einstein constraints, 
\bei

\item[(i)] the uniqueness of the harmonic contributions, and  

\item[(ii)] the convergence to the seed data set at $r \to +\infty$, modulo a harmonic contribution.
\eei
\noindent Checking our conditions on examples relies on specific Poincaré, Korn and Hardy inequalities for the weighted measure space $(\Lambda_\iota, \lambda_\iota^{2\expoP} d\xh)$ associated with the localization manifold, together with Hardy inequalities in the radial variable. Consequently, broad classes of localization phenomena are covered by our theory.  


\paragraph{Harmonic kernels and modulated seed data.}

At the harmonic level of decay, certain corrector terms arise. As we will prove, solutions to the (squared) linearized Hamiltonian and momentum operators involve harmonic terms, which read ${\nu_\iota(\xh) r^{-a_{n,p}}}$ and ${\xi_\iota(\xh) r^{-a_{n,p}}}$, respectively, in which $\xh \coloneqq x/|x| \in \Lambda_\iota$ in the chart at the asymptotic end $\Omega_\iota$. Here, $\nu_\iota$ and $\xi_\iota$ are scalar-valued and vector-valued fields defined on the sphere at infinity and belong to the kernels of the harmonic operators, analyzed in \refwithname{Propositions}{prop:Ham-kernel} \refwithname{and}{prop:Mom-kernel}.
Under the harmonic stability conditions, the kernels $\ker(\ssA^{\lambda_\iota})$ and $\ker(\ssB^{\lambda_\iota})$ are {\it of dimension $1$ and $n$,} respectively.
Furthermore, under the radial stability conditions, we are able to select normalized bases of these kernels.
Our expressions below use the parameter $d_{n,p}$ defined in~\eqref{equa-our-parame-00}, as well as $\theta^{\lambda_\iota}$ and $\eta^{\lambda_\iota}$, given explicitly in~\eqref{equa-thetalambda} and~\eqref{equa-thetalambda-M}. 

\begin{definition}
\label{def:normalized-kernel-basis}
At each asymptotic end $\Omega_\iota$, under the harmonic and radial stability conditions, the \textbf{silhouette function} $\nu^{\normal}_\iota$ and the \textbf{silhouette vector fields} $\xi^{\normal (j)}_\iota$ ($j=1,2,\ldots, n$)
are characterized as follows. The function is a basis of $\ker(\ssA^{\lambda_\iota})$, while the vector fields form a basis of $\ker(\ssB^{\lambda_\iota})$ (in the chart in $\Omega_\iota$):
\bel{equa-norm-normal} 
\aligned
& \nu^{\normal}_\iota \in \ker(\ssA^{\lambda_\iota}),  
&
&& \la - \Deltaslash\nu^{\normal}_\iota + d_{n,p} \, \nu^{\normal}_\iota \ra_\iota 
&= \theta^{\lambda_\iota}, 
\\
& \xi^{\normal (j)}_\iota \in \ker(\ssB^{\lambda_\iota}), 
&   
&& \bigl\la - \nablaslash_l \xi^{\normal (j) \perp}_\iota  + 2 a_{n,p} \xh_l\, \xi_\iota^{\normal (j) \perp} 
 +(a_{n,p}+1) \, \xi_\iota^{\normal (j) \parallel}{}_l \bigr \ra_\iota 
 &= \eta^{\lambda_\iota} \, \delta^{(j)}{}_{l} .
\endaligned
\ee
\end{definition}

As we will prove, the normalization in~\eqref{equa-norm-normal} ensures that the energy and momentum modulators (defined next) can be interpreted in the context of the ADM formalism; cf.~\autoref{corollary-bonne-normalisations}, below. 

\begin{definition}  
\label{def-317}
Let $(\Mbf,\Omega, g_0,h_0, \wtrr, \lambdabf)$ be a conical localization data set and $(p, p_G,p_A)$ be an admissible set of exponents satisfying~\eqref{equa-p-n-flat}, and assume $\lambdabf$ is a stable localization function (\autoref{def-41-stable}). Consider the cutoff functions $\kappa_\iota$ in~\eqref{equa-partition}. For any localized seed data set $(\seedg, \seedh)$, a pair 
\bel{equa-2p8}
\aligned
\seedmodg  
& = \seedg + \sum_\iota \kappa_\iota \, \gmodu_\iota, 
\qquad
\seedmodh = \seedh + \sum_\iota \kappa_\iota \, \hmodu_\iota, 
\endaligned
\ee
is called a \textbf{modulated seed data set} (with respect to $(\seedg,\seedh)$) if, for each asymptotic end~$\Omega_\iota$, there exists a constant scalar $\mmodu_\iota$ and a constant vector $\Jmodu_\iota = (\Jmodu_{\iota j})$ so that the correctors (or modulators) at each end read
\bel{equa-2p8-correct} 
\aligned 
\gmodu_\iota
& \coloneqq  
\lambda_\iota^{2\expoP} r^{n-2p} \bigl(\del_i \del_j  u^\modu_\iota - \delta_{ij} \Delta  \umodu_\iota \bigr)_{1\leq i,j\leq n}, 
&& \umodu_\iota \coloneqq \mmodu_\iota \, {\nu^\normal_\iota(x/r) \over r^{a_{n,p}}}, 
\\  
\hmodu_\iota
& \coloneqq - {1 \over 2} \, \lambda_\iota^{2\expoP} r^{n-2p-2} \big(\del_i \Zmodu_{\iota k} + \del_k \Zmodu_{\iota i} \big)_{1 \leq i,k \leq n},   
&& \Zmodu_\iota \coloneqq \Jmodu_{\iota j} \, {\xi^{\normal (j)}_\iota (x/r) \over r^{a_{n,p}}}. 
\endaligned
\ee 
Here, $\nu^\normal_\iota \in \ker\big( \ssA{}^{\lambda_\iota}\big)$ and $\xi^{\normal(j)}_\iota \in \ker\big( \ssB{}^{\lambda_\iota} \big)$ denote the silhouette functions and vector fields at each end $\iota=1,\ldots,I.$ Each pair $(\mmodu_\iota, \Jmodu_\iota)$ forms a spacetime vector defined at each asymptotic end and (in the dynamical picture) is referred to as a \textbf{modulated energy-momentum vector.}\footnote{We use the unconventional letters $m$ and~$J$ to avoid confusion with other quantities denoted using the letters~$E$ and~$P$.}
\end{definition}


\paragraph{Main result.}

We answer positively, and extend, a question raised by Carlotto and Schoen~\cite{CarlottoSchoen} for the gluing problem: for a broad range of projection, geometry, and accuracy exponents we establish that the behavior prescribed by the seed data set is achieved by the solution up to (and beyond) the $1/r^{n-2}$ Schwarzschild rate. At this juncture, we need an additional exponent denoted by $\pstar \geq p$, which we refer to as the \textbf{sharp decay exponent.} In comparison with~\eqref{condi--1-repeat} we now require that $p_A$ is larger than or equal to $\pstar$, so that the given seed data set provides us with a ``sufficiently accurate'' approximate solution in the vicinity of each asymptotic end. We {\it do not restrict} the behavior of the metric itself, which may have slow (or fast) decay.

At the harmonic threshold $\pstar=n-2$, we assume the following signed integrability condition at each end $\iota = 1,2,\ldots$:  there exist a constant $\mseed_\iota = \mseed_\iota (\seedg, \seedh)$ and a vector $\Jseed_\iota = \Jseed_\iota(\seedg, \seedh) \in \RR^n$ defined using the diffeomorphism $\phi_\iota\colon\Omega_\iota\to K_\iota\cap{}^\complement \Ball_R$ of \autoref{def-conical} by
\bel{equa-mstarJstar}
\aligned
2(n-1) |\Sphe^{n-1}| \mseed_\iota & \coloneqq - \int_{\Omega_\iota} \Hcal( \seedg, \seedh) \, \dVol_{\seedg}
\coloneqq - \lim_{r \to +\infty} \int_{(K_\iota\cap{}^\complement \Ball_R)\cap \Ball_r} \phi_{\iota\pushforward} \bigl( \Hcal( \seedg, \seedh)  \, \dVol_{\seedg} \bigr) ,
\\
(n-1) |\Sphe^{n-1}| \Jseed_{\iota j} & \coloneqq - \int_{\Omega_\iota} \la\Mcal( \seedg, \seedh), \phi_\iota^{\pullback}\del_j\ra \, \dVol_{\seedg}
\\
&
\coloneqq - \lim_{r \to +\infty}  \int_{(K_\iota\cap{}^\complement \Ball_R)\cap \Ball_r} \bigl(\phi_{\iota\pushforward} \Mcal( \seedg, \seedh)\bigr)_j \, \phi_{\iota\pushforward}\dVol_{\seedg},
\endaligned
\ee
where $\Ball_r\subset\RR^n$ denotes the ball of radius $r\geq R$ and $\la\cdot,\cdot\ra$ the pairing of the one-form~$\Mcal$ with the pull-back $\phi_\iota^{\pullback}\del_j$ of the standard unit vector.
Before introducing the relevant norm, set $\cutoff_{\pstar}=0$ for $\pstar<n-2$ and $\cutoff_{\pstar}=1$ for $\pstar\geq n-2$, that is, 
\bel{cutoff-pstar-def}
\cutoff_{\pstar} = \Oneone_{\pstar\geq n-2} = \bigl( 0 \text{ when } \pstar < (n-2) \text{ and $1$ otherwise} \bigr).
\ee 
This leads us to introduce the relevant norm for the Einstein operator of the seed data set,
\bel{equa-Ecal-pstar}
\aligned
& \Err^+_{\pstar}[\seedg,\seedh]
\coloneqq
\Norm[\big]{\Hcal(\seedg,\seedh)}^{N-2,\alpha}_{\Omega, g_0, \pstar+2, \expoPm-2, \expoP}
+ \Norm[\big]{\Mcal(\seedg,\seedh)}^{N-1,\alpha}_{\Omega, g_0, \pstar+2, \expoPm-1, \expoP}
\\
& \quad + \cutoff_{\pstar} \sum_\iota \sup_{r\geq R} \biggl( \biggl| \int_{\Kappa_\iota\cap {}^\complement \Ball_r} \phi_{\iota\pushforward} \bigl(\Hcal( \seedg, \seedh) \, \dVol_{\seedg}\bigr) \biggr|
+ \sum_{1\leq j\leq n} \biggl| \int_{\Kappa_\iota\cap {}^\complement \Ball_r} \bigl(\phi_{\iota\pushforward}\Mcal( \seedg, \seedh)\bigr)_j \phi_{\iota\pushforward} \dVol_{\seedg} \biggr| \biggr). 
\endaligned
\ee 
Throughout the paper, whenever an asymptotic modulator~$X$ is defined only in the 
\mbox{(super-)}\allowbreak harmonic regime, the compound expression $\cutoff_{\pstar}X$ is understood \emph{by cases}: it is defined to be zero when $\pstar<n-2$, without requiring or evaluating~$X$, and it equals~$X$ when $\pstar\geq n-2$. The same convention applies with any iteration exponent~$\pstar'$ in place of~$\pstar$. Thus expressions such as $\cutoff_{\pstar}\mmax(E)$, $\cutoff_{\pstar}\umodu$, $\cutoff_{\pstar}\Jmax(F)$, and $\cutoff_{\pstar}\Zmodu$ never involve a product of zero with an undefined object.
The supremum is well-defined under the assumption that the limits~\eqref{equa-mstarJstar} are well-defined.
This norm coincides with~\eqref{equa-Ecal-pq} when evaluated for $\pstar<n-2$, and is equivalent to it for $\pstar>n-2$ because the additional integral terms are controlled by the H\"older--Lebesgue norms of $\Hcal(\seedg,\seedh)$ and $\Mcal(\seedg,\seedh)$.

We arrive at our main result, which we state for clarity in the harmonic and super-harmonic range $\pstar\geq n-2$.
The full range of exponents $(p,p_G,p_A,\pstar)$ summarized in~\eqref{exponent-range} also includes the simpler case of sub-harmonic exponents $\pstar<n-2$, which we treat in \autoref{theo--beyond-harmonic-II}.
We refer to \autoref{section=4.4} for an outline of the proof for both ranges.
We also provide in \autoref{section=4.2} some sufficient conditions for the stability assumption made presently, namely weighted Poincaré inequalities and smallness of $n-2-p$, as stated in \refwithname{Theorems}{thm:informal-sufficient-stability} \refwithname{and}{thm:informal-sufficient-stability-M}.
Recall that $p, p_G, p_A$ satisfy 
\bel{equapGpAp}
p \in (p^\flat_n , n-2) , \qquad
p_G > 0, \qquad
p_A \geq \max(p_G, p). 
\ee
Let us specify how our sharp decay exponent is bounded, in view of the linearized Hamiltonian and momentum estimates below. 
For each end~$\Omega_\iota$, let $\deltaHiota>0$ and $\deltaMiota>0$ denote margins for which the Hamiltonian and momentum estimates in \refwithname{Theorems}{thm-sharp-h-localized} \refwithname{and}{thm-sharp-m-localized}, respectively, hold. Since the number of ends is finite, we may define the positive quantities
\bse
\label{p-lambda-def}
\be
\delta^{\lambdabf}_{n,p}
\coloneqq
\min\Big(
n-2,
\min_{1\leq\iota\leq I}\deltaHiota,
\min_{1\leq\iota\leq I}\deltaMiota\Big),
\qquad
p^{\lambdabf}_{n,p}\coloneqq n-2+\delta^{\lambdabf}_{n,p},
\ee
and then 
\be
\pmax\coloneqq\min(p^{\lambdabf}_{n,p}, n-2+p_G).
\ee
\ese
In particular, 
every exponent below~$p^{\lambdabf}_{n,p}$ lies simultaneously in the ranges of both linear estimates, and below the exponent $2(n-2)$ where possible nonlinear resonances between modulator terms would occur.

\begin{theorem}[The localized seed-to-solution projection ---harmonic and super-harmonic decay]
\label{theo--beyond-harmonic} 
Consider a conical localization data set $(\Mbf, \Omega, g_0,h_0, \wtrr, \lambdabf)$  together with admissible exponents $(p, p_G,p_A)$ satisfying~\eqref{equapGpAp} with the super-harmonic accuracy exponent $p_A \geq (n-2)$, and admissible localization exponents $(\expoPm, \expoP, \expoPp)$. Assume that $\lambdabf$ is a stable localization function (cf.~\autoref{def-41-stable}), 
and let $\pmax>n-2$ be the upper bound defined in~\eqref{p-lambda-def}.
 Then the following property holds for any \mbox{(super-)harmonic} exponent \be
\pstar \in [n-2, \pmax) \, \text{ and } \, \pstar \leq p_A 
 \qquad \text{ (harmonic and super-harmonic cases).}
\ee
Provided $\eps_G,\eps_A>0$ are sufficiently small (as functions of the localization data set and exponents), for any localized seed data set $(\seedg, \seedh) \in \Seed(\Omega, g_0,h_0,p_G, p_A, \eps_G,\eps_A)$ satisfying additionally $\Err^+_{\pstar}[\seedg,\seedh]<\eps_A$, and in particular the integrability\footnote{The condition is required in the harmonic case but, for super-harmonic exponents, is automatically satisfied.} 
\be
\Hcal(\seedg,\seedh), \, \Mcal(\seedg,\seedh)  \text{ are integrable in the sense } \eqref{equa-mstarJstar} \text{ when } \pstar =n-2, 
\ee
the solution $(g,h)$ to the Einstein constraints given by the {seed-to-solution map} $\Solu^{\lambdabf}_{n,p}$ (cf.~\autoref{thm:sts-existence}) enjoys the following pointwise decay estimates for some modulated seed data set $(\seedmodg, \seedmodh)$ associated with a collection of modulated energy-momentum vectors $(\mmodu_\iota, \Jmodu_\iota)$. 

\medskip

$\bullet$ \mbox{\textbf{(Super-)harmonic estimate.}}
The solution enjoys the estimate 
\bse
\bel{pstar-less2}
\aligned 
\| g - \seedmodg\|_{\Omega, g_0, \pstar, \expoPm}^{N,\alpha} +
\| h - \seedmodh \|_{\Omega, g_0, \pstar+1, \expoPm}^{N,\alpha}
& \lesssim \Err^+_{\pstar}[\seedg,\seedh], 
\endaligned 
\ee
where $\Err^+_{\pstar}[\seedg,\seedh]$ is defined in~\eqref{equa-Ecal-pstar} above.
Moreover, one has additional pointwise decay in which pointwise norms restricted to the the exterior region
$\Omega_R^\infty\coloneqq\{x\in\Omega:\wtrr(x)>R\}$
tend to zero in the limit $R\to+\infty$,
\bel{large-pstar-pointwise-decay}
\lim_{R\to+\infty} \Bigl( \| g - \seedmodg\|_{C^1_{\pstar,\expoPm}({\Omega_R^\infty})} +
\| h - \seedmodh \|_{C^0_{\pstar+1,\expoPm}({\Omega_R^\infty})} \Bigr) .
\ee
\ese
 
\medskip
 
$\bullet$ \textbf{ADM energy-momentum estimate.}
The modulators in the expressions~\eqref{equa-2p8-correct} enjoy the estimates
\bse\label{energyg2b}
\bel{energyg2b-1}
\sup_{1 \leq \iota \leq I} \big| \mmodu_\iota \big| + \sup_{1 \leq \iota \leq I} \big|  \Jmodu_\iota \big| 
\lesssim \Err^+_{\pstar}[\seedg,\seedh] .
\ee
If $h_0=0$ and the gluing domain $(\Omega,g_0)$ is (isometric to) a subset\footnote{Note that $\Omega$ can have multiple asymptotic ends despite being embedded into~$\RR^n$ since the spherical cap $\Omega\cap \Sphe_r$ for large radius~$r$ may be disconnected.} $\Omega\subset\RR^n$ of Euclidean space with the Euclidean metric $g_0=\delta$, then the total energy modulator $\mmodu\coloneqq\sum_{\iota=1}^{I} \mmodu_\iota$, and the total momentum modulator $\Jmodu = \sum_{\iota=1}^{I}\Jmodu_\iota$ (with each vector~$\Jmodu_\iota$ being expressed in the same coordinate system on~$\Omega$), are close to the total energy and momentum $(\mseed,\Jseed)$ of the source term~$\Gcal(\seedg,\seedh)$,
\bel{energyg2b-2}
\aligned
|\mmodu - \mseed| & \lesssim (\eps_A+\eps_G) \Err^+_{\pstar}[\seedg,\seedh],
& \quad
2(n-1) |\Sphe^{n-1}| \mseed & \coloneqq - \int_{\Omega} \Hcal( \seedg, \seedh) \, d^nx ,
\\
|\Jmodu - \Jseed| & \lesssim (\eps_A+\eps_G) \Err^+_{\pstar}[\seedg,\seedh],
& \quad
(n-1) |\Sphe^{n-1}| \Jseed_j & \coloneqq - \int_\Omega \Mcal(\seedg, \seedh)_j\, d^nx .
\endaligned
\ee
\ese
\end{theorem} 
 

\paragraph{A notion of relative ADM invariants.}

The energy-momentum modulators are interpreted as deviations from the ADM energy-momentum vectors, as follows. 
 
\begin{definition}\label{def:relative-ADM}
Let $(\Mbf, \Omega, g_0,h_0, \wtrr, \lambdabf)$ be a conical localization data set and consider one of its asymptotic ends $\Omega_\iota$. Given two Riemannian metrics $g,g'$ and two symmetric $(2,0)$-tensors $h,h'$, the \textbf{relative energy} and the \textbf{relative momentum vector} are defined (whenever the limits exist) as  a scalar in $\RR$ and a vector in $\RR^n$, respectively; in the asymptotic chart, the indices of $h-h'$ are lowered with~$\delta$:
\bel{equa-def-energy-momentum}
\aligned  
\mbb(\Omega_\iota,g-g') 
& \coloneqq {1 \over 2(n-1) \, |\Sphe^{n-1}|} \lim_{r \to + \infty} 
r^{n-1} \int_{\Lambda_\iota} 
\sum_{i,j = 1}^n {x_j \over r} \, \bigl( (g-g')_{ij,i} - (g-g')_{ii,j} \bigr) \Big|_{|x|=r} \, d\xh,
\\
\Jbb(\Omega_\iota, h-h')_{j}
&
\coloneqq  {1 \over (n-1) \, |\Sphe^{n-1}|} \lim_{r \to + \infty} 
r^{n-1} \int_{\Lambda_\iota} \sum_{1 \leq k \leq n}{x_k \over r} \,  (h-h')_{jk} \Big|_{|x|=r} \, d\xh.
\endaligned
\ee 
\end{definition}

Our definition makes sense even when the initial data set does not admit a standard notion of energy and momentum, and only requires that the {\it difference} has sufficient decay for the integrals in~\eqref{equa-def-energy-momentum} to make sense. Namely, it is only the {\it difference} which is relevant in our setup.  
While we have $\mbb(\Omega_\iota,g-g') = \mbb(\Omega_\iota,g) - \mbb(\Omega_\iota,g')$ {\it provided} both energies are finite, it is possible for the relative energy to be finite for metrics $g$ and $g'$ having infinite energy. 
 
\begin{corollary}
\label{corollary-bonne-normalisations} 
 The energy-momentum vector $(\mmodu_\iota, \Jmodu_\iota)$ in \autoref{theo--beyond-harmonic} can be interpreted as the relative energy-momentum associated with the prescribed data set and the actual solution, that is, 
\be
\mmodu_\iota = \mbb(\Omega_\iota, g - \seedg),
\qquad
\Jmodu_\iota = \Jbb(\Omega_\iota, h - \seedh).
\ee 
\end{corollary}

\begin{proof}
\autoref{theo--beyond-harmonic} concerns the harmonic or super-harmonic case $\pstar\geq n-2$ and establishes pointwise decay of $(g-\seedmodg, h-\seedmodh)$.
Specifically, by \eqref{large-pstar-pointwise-decay}, $g-\seedmodg$ decays faster than $r^{-\pstar}\leq r^{-(n-2)}$, while its first derivatives and $h-\seedmodh$ decay faster than~$r^{-(n-1)}$.  Thus, from the explicit expressions~\eqref{equa-def-energy-momentum}, $\mbb(\Omega_\iota,g-\seedg) = \mbb(\Omega_\iota,\seedmodg-\seedg)$ and $\Jbb(\Omega_\iota, h-\seedh) = \Jbb(\Omega_\iota, \seedmodh-\seedh)$.
On the other hand, at sufficiently large radii, \eqref{equa-2p8} reduces to $\seedmodg-\seedg = \gmodu_\iota$ and $\seedmodh-\seedh=\hmodu_\iota$.
\autoref{lem-ADMenergymod} and \autoref{lem-ADMmomentmod}, below, determine $\mbb(\gmodu_\iota) = \mmodu_\iota$ and $\Jbb(\hmodu_\iota) = \Jmodu_\iota$ by a direct evaluation.
\end{proof}

Furthermore, we recall that standard notions of energy and momentum enjoy positivity properties~\cite{BeigChrusciel-1996,SchoenYau-79, Witten-81}, and it would be interesting to also investigate the notion of center of mass (cf.~for instance~\cite{CederbaumSakovich}) for localized initial data sets. Additional results will be presented in \cite{LL-next}, including an extension of the technique in~\cite{LeFlochNguyen-preprint,LeFlochNguyen} for ``engineering'' seed data sets in order to reach a specific energy-momentum at infinity. 


\paragraph{Revisiting Carlotto-Schoen's sub-harmonic estimates.}

In fact, we improve the (sub-harmonic) estimates established earlier in \autoref{thm:sts-existence} even in the sub-harmonic regime.
Assuming Hamiltonian and momentum stability conditions on the localization function~$\lambdabf$, the bounds are improved as follows from the projection exponent~$p$ to a larger sharp decay exponent~$\pstar$.

\begin{proposition}[The localized seed-to-solution projection ---refined sub-harmonic control]
\label{theo--beyond-harmonic-II}
Under the same assumptions as in \autoref{theo--beyond-harmonic} but for any \emph{sub-harmonic} exponent~$\pstar$ such that
\be
p \leq \pstar \leq p_A \, \text{ and } \,  \pstar < n-2 \qquad \text{ (sub-harmonic case),}
\ee
the following estimate holds:
\bel{pstar-less2-SUB}
\aligned
\| g  - \seedg \|_{\Omega, g_0, \pstar, \expoPm}^{N,\alpha} +
\| h  - \seedh \|_{\Omega, g_0, \pstar+1, \expoPm}^{N,\alpha}
& \lesssim \Err^+_{\pstar}[\seedg,\seedh].
\endaligned
\ee
\end{proposition}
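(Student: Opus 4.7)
The plan is to bootstrap the base estimate from \autoref{thm:sts-existence} by re-examining the fourth-order elliptic system \eqref{equa-final-linear} for the pair $(u,Z)$ and re-reading it at the target decay rate $p_\star$. Writing $(g,h)=(\seedg+\gdiff,\seedh+\hdiff)$ with $\gdiff = \omegabf_p^2\,d\Hcal^*_{(g_0,h_0)}[u,Z]$ and $\hdiff = \omegabf_{p+1}^2\,d\Mcal^*_{(g_0,h_0)}[u,Z]$, we know a priori from \eqref{equa-near-seed}--\eqref{equa-213-2-u-Z} that $(u,Z)\in L^2C^{N+2,\alpha}_{n-2-p,-\expoPp}\times L^2C^{N+1,\alpha}_{n-2-p,-\expoPp}$ and that the deformation satisfies a bound at rate~$p$. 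My aim is to replace these rates by the improved ones $n-2-p_\star$ and $p_\star$ respectively.

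First, I would inspect the right-hand side $(f,V)$ of \eqref{equa-final-linear}. By construction, $f_1 = -\Hcal(\seedg,\seedh) - \Qcal\Hcal_{(\seedg,\seedh)}[\gdiff,\hdiff]$ and $V_1$ analogously. The seed contributions $\Hcal(\seedg,\seedh)$ and $\Mcal(\seedg,\seedh)$ enjoy decay at rate $p_A+2\geq p_\star+2$ by~\eqref{equa-nearEins} with $p_A\geq p_\star$, and so do the localization data through~\eqref{equa-nearEins-00}. The quadratic remainders $\Qcal\Hcal$ and $\Qcal\Mcal$ as well as the linearization differences $d\Hcal_{(g_0,h_0)}-d\Hcal_{(\seedg,\seedh)}$ are schematically products of $(g_0-\seedg,\gdiff)$ and second derivatives of $(\gdiff,\hdiff)$; each factor decays at least at rate $p_G>0$ or $p>0$, so such bilinear terms gain decay, and a standard nonlinear estimate would show
\[
\Norm{f}^{N-2,\alpha}_{\Omega,g_0,p_\star+2,\expoPm,\expoP}
+\Norm{V}^{N-1,\alpha}_{\Omega,g_0,p_\star+2,\expoPm,\expoP}
\lesssim \Ebb_{p_\star}[\seedg,\seedh]
+ C(\eps)\,\big(\|\gdiff\|^{N,\alpha}_{\Omega,g_0,p_\star,\expoPm}+\|\hdiff\|^{N,\alpha}_{\Omega,g_0,p_\star+1,\expoPm}\big),
\]
with $C(\eps)\to 0$ as $\eps\to 0$, using the base rate $p$ to absorb one factor of each quadratic product.

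Next, I would invoke the weighted Douglis--Nirenberg elliptic theory for $\Jcal_{(g_0,h_0)}$ developed in \autoref{appendix=C}, applied now with the \emph{same} variational weights $\omegabf_p,\omegabf_{p+1}$ but in the weighted spaces tuned to the new exponent $p_\star$. The key point is that the operator $\Jcal_{(g_0,h_0)}$ acts from $L^2C^{N+2,\alpha}_{n-2-p_\star,-\expoPp}\times L^2C^{N+1,\alpha}_{n-2-p_\star,-\expoPp}$ into $C^{N-2,\alpha}_{p_\star+2,\expoPm}\times C^{N-1,\alpha}_{p_\star+2,\expoPm}$; because we remain in the strict sub-harmonic window $0<p<p_\star<n-2$, the Poincar\'e and Korn inequalities continue to apply and the operator is an isomorphism onto its image, with no harmonic kernel contribution to subtract. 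Combining this isomorphism property with the nonlinear estimate above gives
\[
\|(u,Z)\|^{N+2,\alpha}_{\Omega,g_0,n-2-p_\star,-\expoPp}
\lesssim \Ebb_{p_\star}[\seedg,\seedh]
+ C(\eps)\,\|(u,Z)\|^{N+2,\alpha}_{\Omega,g_0,n-2-p_\star,-\expoPp},
\]
so that for $\eps$ sufficiently small the second term is absorbed. Translating back via the weights $\omegabf_p,\omegabf_{p+1}$, the deformation $(\gdiff,\hdiff)$ inherits the sharp rate $(p_\star,p_\star+1)$, which is \eqref{pstar-less2-SUB}.

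The main obstacle, I expect, is the nonlinear self-improvement step: the quadratic terms $\Qcal\Hcal[\gdiff,\hdiff]$ and the difference of linearizations $d\Hcal_{(g_0,h_0)}-d\Hcal_{(\seedg,\seedh)}$ each contain at least one factor that is only controlled at rate $p$ (not $p_\star$), so a naive estimate would only deliver rate $p+p_G$ rather than $p_\star$. The remedy is a finite bootstrap: starting from rate $p$, the above linear step yields rate $\min(p_\star,p+p_G)$; reinjecting into the nonlinear estimate increases the rate to $\min(p_\star,p+2p_G)$, and so on. After finitely many iterations this reaches $p_\star$, provided one never crosses the critical exponent $n-2$ where the linear isomorphism degenerates (a constraint guaranteed by the assumption $p_\star<n-2$). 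Care must also be taken that each intermediate exponent remains admissible in the sense of \autoref{def-triple}, which is automatic since the intermediate exponents lie in $(p,n-2)\subset(0,n-2)$.
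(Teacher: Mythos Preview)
Your overall architecture—the finite bootstrap in the last paragraph—is correct and is exactly how the paper proceeds in \autoref{section=93}. The induction map $p_\star'\mapsto p_\star''=\min(p_A,\,p_G+p_\star',\,2p_\star')$ you describe matches \eqref{equa-decay-better}, and the termination argument is the same.

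The gap is in the linear step. You write that, in the new weighted spaces at rate $p_\star$, ``the Poincaré and Korn inequalities continue to apply and the operator is an isomorphism onto its image,'' and you attribute this to \autoref{appendix=C}. But the coercivity in \autoref{prop.key-whole} and the well-posedness in \autoref{prop-exislinea} are for the \emph{fixed} rate~$p$ dictated by the weight $\omegabf_p$; they yield $u\in H^2_{n-2-p,-\expoP}$ and nothing better. The Douglis--Nirenberg theory (\autoref{propo.ellipticity}, \autoref{pro_integral}) gives interior Hölder regularity once an $L^2$ bound is in hand, but it does not by itself upgrade the decay exponent at infinity. In short, \autoref{appendix=C} does not show that the same variational solution lies in $H^2_{n-2-p_\star,-\expoP}$ when the source has the faster decay.

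The paper closes this gap by invoking, at each asymptotic end and at each step of the bootstrap, the sub-harmonic regime of Theorems~\ref{equa-sharp-H-localized} and~\ref{equa-sharp-M-localized} (built on the integral estimates of Theorems~\ref{prop-decayL2} and~\ref{prop-decayL2-M}). These rely on the shell energy identities and on the stability conditions on~$\lambda$ (Definitions~\ref{def-asymptotic-Hstab}, \ref{def-shell-Hstab}, \ref{def-moment-asymp-stab}, \ref{def-shell-Mstab-shell}); this is why the paper remarks that (sub-harmonic) stability conditions are still required for \autoref{theo--beyond-harmonic-II}. Your proposal omits this ingredient entirely, so the linear improvement $p_\star'\to p_\star''$ is asserted rather than proved.
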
 


\subsection{Sufficient conditions for stability}
\label{section=4.2}

Our main results have uncovered the general functional inequalities that are required in order to control the decay of solutions and cope with harmonic contributions. It is natural to also seek {\sl sufficient conditions} that imply such stability conditions while being comparatively easier to check for a choice of localization domain and localization function. 
We quote here two results proved in our companion paper~\cite{LL-PoincareKornHardy}; they provide directly verifiable sufficient conditions for the abstract stability hypotheses used in the present paper. 
The principal localization theorems of the present paper are formulated under those abstract stability hypotheses; only their verification for the concrete classes below, and hence \autoref{theo-smallaper}, depends on the companion results~\cite{LL-PoincareKornHardy}.
Namely, our stability conditions are naturally implied by concrete inequalities to be imposed on the diameter and Poincaré constant of the localization domain under consideration and on the exponent~$a_{n,p}$.
For the Hamiltonian operator, harmonic and radial stability follow when $(n-2-p)C_\Poin^{\lambda_\iota}$ is sufficiently small, while shell stability follows from the geometric smallness assumptions stated below.  For the momentum operator, the domain may be arbitrary, but $n-2-p$ must be sufficiently small, with a threshold depending on the domain.
Allowing for a broad choice of gluing domains is of particular interest since this opens the way to design rather complex gluing structures at infinity, while still keeping a sharp control on the decay of the solutions.
 
To a localization domain $\Lambda_\iota \subset \Sphe^{n-1}$ together with a localization function $\lambda_\iota$ supported in $\Lambda_\iota$, we associate its diameter measured using the geodesic distance on the sphere (with the round metric~$\gslash$),
\bel{diamLambdadef}
\diam(\Lambda_\iota) = \sup_{x,y\in\Lambda_\iota} \dbf_{\gslash}(x,y) ,
\ee
and the weighted Poincaré constant on $\Lambda_\iota$ endowed with the weighted measure $d\chi_\iota= \lambda_\iota^{2 \expoP} d\xh$, defined as the optimal constant $C_\Poin^{\lambda_\iota}>0$ in   
\bel{equa-Poin1}
\| \nu - \la\nu\ra_\iota \|_{\unL^2_{-\expoP}(\Lambda_\iota)}
\leq  C_\Poin^{\lambda_\iota} \, \| \nablaslash\nu \|_{\unL^2_{-\expoP}(\Lambda_\iota)} 
\ee 
for every $\nu\in\unH^1_{-\expoP}(\Lambda_\iota)$. This constant can be evaluated by standard techniques and, for instance, without localization we would find $\diam(\Sphe^{n-1})=\pi$ and $C_\Poin^{\lambda\equiv 1} = 1/\sqrt{n-1}$.
Before stating the second-order inequality, define the symmetric positive-definite matrix
\be
G_{\iota kl}
\coloneqq
\delta_{kl}+\la\xh_k\xh_l\ra_\iota
-2\la\xh_k\ra_\iota\la\xh_l\ra_\iota.
\ee
The matrix $G$ is positive-definite as the sum of $\la\gslash\ra$ and twice the covariance matrix of~$\xh$.  We find for instance $C_{\Poin 2}^{\lambda\equiv 1} = 1/\sqrt{n+2}$. We also associate a second-order Poincaré constant defined as the optimal constant $C_{\Poin 2}^{\lambda_\iota}>0$ in
\bel{equa-Poin2}
\| \nablaslash\nu \|_{\unL^2_{-\expoP}(\Lambda_\iota)}^2
\leq  (C_{\Poin 2}^{\lambda_\iota})^2 \, \| \nablaslash^2\nu \|_{\unL^2_{-\expoP}(\Lambda_\iota)}^2 + \sum_{k,l} G_{\iota kl} \la\nablaslash_k\nu\ra_\iota \la\nablaslash_l\nu\ra_\iota. 
\ee

\begin{theorem}[Sufficient conditions for Hamiltonian stability~\cite{LL-PoincareKornHardy}]\label{thm:informal-sufficient-stability} 
There exist universal constants $c_1^\notreH(n)$ and~$c_2^\notreH(n)$ depending on the dimension~$n$, only, such that the following holds.

1. For all localization domains $(\Lambda_\iota, \lambda_\iota)$ whose Poincaré constant $C_\Poin^{\lambda_\iota}$ (cf.~\eqref{equa-Poin1}, above) is bounded above as
\be
(n-2-p) C_\Poin^{\lambda_\iota} \leq c_1^\notreH(n) ,
\ee
the harmonic stability condition~\eqref{equa-stable-H-414} in \autoref{def-harmonic-Hstab} and the radial stability condition~\eqref{equa-b2-positive} in \autoref{def-radial-Hstab} for the Hamiltonian operator do hold. \textup{(}Hence, these stability conditions always hold if the projection exponent $p$ is picked to be sufficiently close\footnote{or, equivalently, the coefficient $a_{n,p}$ to be sufficiently small} to the harmonic exponent $(n-2)$.\textup{)}

2. If the diameter and Poincaré constants $C_\Poin^{\lambda_\iota}$ and $C_{\Poin 2}^{\lambda_\iota}$ are bounded as
\be
\diam(\Lambda_\iota) + C_\Poin^{\lambda_\iota} + C_{\Poin 2}^{\lambda_\iota} < c_2^\notreH(n) ,
\ee
then the Hamiltonian shell stability condition in \autoref{def-shell-Hstab} also holds.
\end{theorem}

\begin{theorem}[Sufficient conditions for momentum stability~\cite{LL-PoincareKornHardy}]\label{thm:informal-sufficient-stability-M}
For any localization domain $(\Lambda_\iota, \lambda_\iota)$ there exists a geometric constant $c_{\Poin\Korn\Hardy}(\Lambda_\iota,\lambda_\iota)>0$ (related to Poincaré, Korn, and Hardy inequalities) such that, for any exponent~$p$ with
\[
0 < n-2-p < c_{\Poin\Korn\Hardy}(\Lambda_\iota,\lambda_\iota),
\]
the (harmonic, radial, shell) momentum stability conditions hold.
\end{theorem}

It is additionally shown in \cite{LL-PoincareKornHardy} that in families of domains obtained by scaling a given reference domain and projecting it onto the sphere the Poincaré constants $C_\Poin^{\lambda_\iota}$ and $C_{\Poin 2}^{\lambda_\iota}$ scale as the diameter, so that the premises of \autoref{thm:informal-sufficient-stability} are obeyed for sufficiently small $\diam(\Lambda_\iota)$.  Once the domain is chosen, the premises of \autoref{thm:informal-sufficient-stability-M} are obeyed for sufficiently small $a_{n,p} = 2(n-2-p)$.  Thus, all six stability conditions are satisfied.  In short, sufficiently small $(\Lambda_\iota,\lambda_\iota)$ with $p$ sufficiently close to $n-2$ give a stable localization setup. Consequently, the two results quoted from~\cite{LL-PoincareKornHardy} imply \autoref{theo-smallaper}. 


\subsection{Isotropic construction}
\label{section=4.3}

In the special case $\lambda_\iota \equiv 1$ when no localization is required in an asymptotic end~$\Omega_\iota$, the relevant geometric objects can be worked out explicitly, and the stability conditions hold, as we now state in \autoref{thm:stable-s2}.  Some details rely on general explicit expressions that are only worked out later on, but the proof can be read without referring to these details.

The harmonic contributions are given {\it explicitly}, namely the silhouette functions and vector fields are
\bel{equa-explicit-case}
\left.
\begin{aligned}
\nu^\normal_\iota(\xh) & = {2 \over a_{n,p}b_{n,p}}
\\
\xi^{\normal (j)}_\iota (\xh)_k & =
\frac{3n(n-1)}{a_{n,p}(n+a_{n,p}+2)} \Bigl(\delta_{jk} + \frac{a_{n,p}}{3} \, \xh_j \, \xh_k\Bigr)
\end{aligned}
\right\}
\quad \text{ (case $\lambda_\iota \equiv 1$).}  
\ee 
The first formula is obvious in view of the normalization~\eqref{equa-norm-normal}.
For the second formula we begin by seeking solutions $\xi^{(j)}_k = \alpha\delta_{jk} + \beta \xh_j \xh_k$ to $\notreM^{\lambda_\iota}[Z^{(j)}]=0$, with $Z^{(j)}=r^{-a_{n,p}}\xi$ and the operator~$\notreM^{\lambda_\iota}$ given explicitly in~\eqref{equa:acalew0-deux}.  We evaluate successively, using $\del_k r=x_k/r\eqqcolon\xh_k$ and $r\del_k(\xh_j)=\delta_{jk}-\xh_j\xh_k$,
\be
\aligned
\del_k Z^{(j)}_l + \del_l Z^{(j)}_k
& = r^{-a_{n,p}-1} \bigl(
2 \beta \xh_j \delta_{kl} + (\beta - a_{n,p}\alpha) (\xh_k \delta_{jl} + \xh_l \delta_{jk})
- 2 (a_{n,p} + 2) \beta \xh_j\xh_k\xh_l \bigr) ,
\\
r^{a_{n,p}+2} \notreM^{\lambda_\iota}[Z^{(j)}]_l
& = - {1 \over 2} r^n \del_k(r^{a_{n,p}-n+2}(\del_kZ_l+\del_lZ_k))
= - {1 \over 2} (3\beta - a_{n,p}\alpha) (\delta_{jl} - n\xh_j\xh_l) ,
\endaligned
\ee
which vanishes provided $\beta=\alpha a_{n,p}/3$.  This gives~\eqref{equa-explicit-case} after taking into account the normalization~\eqref{equa-norm-normal}. 

\begin{theorem}[Stability without localization]\label{thm:stable-s2}
In dimension $n\leq 17$ and with $p<n-2$ sufficiently close to $n-2$ (as a function of dimension, only), the localization function $\lambda_\iota\equiv 1$ is (Hamiltonian and momentum) stable in the sense of \autoref{def-41-stable}.
\end{theorem}

\begin{proof}
The restriction on dimension will only play a role for Hamiltonian shell stability.
Momentum stability is immediate from \autoref{thm:informal-sufficient-stability-M}, as the geometric constant $c_{\Poin\Korn\Hardy}(\Lambda_\iota,\lambda_\iota)$ there can only depend on the localization domain and measure $d\chi_\iota=\lambda_\iota^{2\expoP}d\xh$, which here only depends on the dimension.

For the Hamiltonian harmonic stability condition, we observe that in the absence of weight the quadratic functional $\ssrmA^{\lambda_\iota\equiv 1}[\nu]$ given in~\eqref{ssAalpha-quaform-0} is non-negative since the $-\nu\Deltaslash\nu$ term can be integrated by parts to the non-negative term $|\nablaslash\nu|^2$.
The functional is then coercive on functions with vanishing average due to the standard Poincaré inequality.

For the Hamiltonian radial stability condition, we observe that the kernel of $\ssA^{\lambda_\iota\equiv 1}$ manifestly contains the constant functions, and it is one-dimensional due to harmonic stability.  We then evaluate $\bnotreH_{\iota 1} = (n-1) c_{n,p}\la\nu^\normal\ra_\iota$ and $\bnotreH_{\iota 0} = (n^2 -4n+5) d_{n,p} (c_{n,p} / a_{n,p}) \la\nu^\normal\ra_\iota$, whose product is positive for any non-zero element~$\nu^\normal$ of the kernel.
The normalization of~$\nu^\normal$ is given in~\eqref{equa-explicit-case}, above.

As a warm-up for the shell dissipation functional considered next, we show that (in terms of the fluctuations $\ut = u - \la u\ra_\iota$),
\bel{rmA-iso-coer}
\ssrmA^{\lambda_\iota\equiv 1}[u]
\geq (n-1)(n^2-3n+5+2a_{n,p}+c_{n,p}) \fint_{\Sphe^{n-1}} \ut^2 d\xh .
\ee
An integration by parts rewrites the functional as
\bel{rmA-iso-Delta}
\ssrmA^{\lambda_\iota\equiv 1}[u]
= \fint_{\Sphe^{n-1}} \Bigl( (n-1) (\Deltaslash u)^2
+ (n-4-2a_{n,p}-c_{n,p}) u \Deltaslash u
\Bigr) \, d\xh .
\ee
Upon decomposing~$u$ into an orthonormal basis of modes of the spherical Laplacian~$\Deltaslash$, all cross terms integrate to zero, so that~\eqref{rmA-iso-Delta} splits into a sum over (non-zero) modes.
It is then enough to check the scalar inequality
\be
(n-1) s^2  - (n-4-2a_{n,p}-c_{n,p}) s
\geq (n-1)(n^2-3n+5+2a_{n,p}+c_{n,p})
\ee
for all non-zero eigenvalues $s$ of $-\Deltaslash$, namely $s = \ell(n-2+\ell)$ with integer $\ell\geq 1$.  This inequality holds for all $s\geq n-1$.

For the Hamiltonian shell stability condition, the explicit expression of $\Kappa^\notreH[\ut]$ given in \autoref{def:KappaH}, later on, involves only fluctuations of~$\nu^\normal$ and not its average, hence $\Kappa^\notreH[\ut]=0$.  This simplifies shell stability drastically since it now boils down to choosing the parameters $\cstun,\cstdeux$ of the shell functional in such a way that $\Phi^\notreH_\iota$ is non-negative and $\Psi^\notreH_{\beta\iota}[u]+C\la u\ra_\iota^2$ is coercive for a large enough $C>0$.
We shall select for convenience the specific values
\be
\cstun = b_{n,p} \geq 2 , \qquad \cstdeux = 0 .
\ee
The shell functional $\Phi^\notreH_\iota$ given in~\eqref{PhinotreH-expr} is manifestly positive by an integration by parts of the $-u\Deltaslash u$ term to $|\nablaslash u|^2$.

The dissipation functional, given explicitly in~\eqref{PsinotreH-expr}, features the cancellation of a $u\vartheta w$ term that motivates our choice of~$\cstun$.
We are interested in showing its coercivity (modulo $\la u\ra^2$) for small $a_{n,p}=2(n-2-p)$, so it is enough to show coercivity in the limit $a_{n,p}\to 0$, as the dissipation is continuous in~$a_{n,p}$.
We recall that $\beta=a_{n,p},2a_{n,p}$ so $\beta\to 0$ as well.
Note that in this limit,
\be
\cstun\to n^2-5n+8 \eqqcolon \cstun(n) .
\ee

Similarly to $\ssrmA^{\lambda_\iota\equiv 1}$, all angular derivatives in the dissipation functional can be rewritten in terms of~$\Deltaslash$.  Upon decomposing $u,\vartheta u,\vartheta^2 u,\vartheta^3 u$ on the sphere of radius~$r$ into modes of the (spherical) Laplacian, none of the cross-terms contribute.  It thus suffices to prove coercivity for $a_{n,p}=0$ and functions $\vartheta^k u = q_k \nu$ ($k=0,1,2,3$) that are multiples of the same Laplacian eigenmode~$\nu$, with $\Deltaslash\nu=-s\nu$ where $s=\ell(n-2+\ell)$ for some integer $\ell\geq 0$.  In that setting the functional reduces to
\bel{Psi-one-mode}
\aligned
\frac{\Psi^\notreH_{\beta\iota}[u]}{\fint_{\Sphe^{n-1}} \nu^2 \, d\xh}
& = \Bigl(q_3 + \frac{n-2}{n-1} s q_1\Bigr)^2
+ \frac{\cstun(n)}{n-1} \bigl((n-1) s^2  - (n-4) s\bigr) q_0^2
\\
& \quad + \Bigl( \frac{2 s}{n-1} + 2 \cstun(n) \Bigr) q_2^2
- \frac{2}{n-1} \cstun(n) s q_1 q_2
\\
& \quad + \Bigl( \frac{2n-3}{(n-1)^2} s^2 - \frac{n-4}{n-1} s + 2 \cstun(n) s + \cstun(n)^2 \Bigr) q_1^2 .
\endaligned
\ee
For the zero-mode $s=0$ (constant~$\nu$) this controls $q_3^2+q_2^2+q_1^2$, so that $\Psi^\notreH_{\beta\iota}[u]$ controls $\la\vartheta^k u\ra_\iota$ for $k=1,2,3$.
For $s>0$ modes, we must check that this controls $q_3^2 + s q_2^2 + s^2 q_1^2 + s^2 q_0^2$ uniformly in~$s$.  Note that the last term does not feature $s^3$ since shell stability only requires a control of two angular derivatives of~$u$.
The leading term of~\eqref{Psi-one-mode} in the $s\to+\infty$ limit at fixed $q_3,s^{1/2}q_2,sq_1,sq_0$ is
\bel{Psi-one-mode-larges}
\Bigl(q_3 + \frac{n-2}{n-1} s q_1\Bigr)^2
+ \frac{2}{n-1} s q_2^2
+ \frac{2n-3}{(n-1)^2} s^2 q_1^2
+ \cstun(n) s^2 q_0^2 ,
\ee
as needed.
Thus, there only remains to show that \eqref{Psi-one-mode} is positive-definite for all $s>0$ in the spectrum.
The first two terms manifestly are.  The next three constitute a quadratic form in $q_2,q_1$ whose $q_2^2$ term is positive, so altogether only the determinant
\be
P(s) \coloneqq \Bigl( \frac{2 s}{n{-}1} + 2 \cstun(n) \Bigr) \Bigl( \frac{2n{-}3}{(n{-}1)^2} s^2 - \frac{n{-}4}{n{-}1} s + 2 \cstun(n) s + \cstun(n)^2 \Bigr)
- \frac{(n^2{-}4n{+}5)^2 s^2}{(n{-}1)^2}
\ee
needs to be checked to be positive.
For $3\leq n\leq 9$ it has positive coefficients hence is positive for $s>0$.
For $10\leq n\leq 17$ its $s^2$ term has a negative coefficient but is bounded by its $s$ and $s^3$ terms.
For larger dimensions, $P$~is a cubic polynomial with three real roots and there are some modes with $P(s)<0$, so that the dissipation functional is not coercive and Hamiltonian shell stability fails for this choice of $\cstun,\cstdeux$.
As an illustration, for $n=18$ the modes $55\leq\ell\leq 63$ have $P(\ell(n-2+\ell))<0$.
\end{proof}


\subsection{Outline of the proof}
\label{section=4.4}

We summarize here the main steps leading us to the proof of \autoref{theo--beyond-harmonic}. The remainder of this paper is organized as follows.
\refwithname{Sections}{section=5}--\ref{section=7} and \refwithname{Sections}{section=8}--\ref{section=9} concern the localized Hamiltonian and momentum operators, respectively, and \refwithname{Sections}{section=10}--\ref{section=11} control nonlinearities.

\paragraph{\autoref{section=5}: Linear estimates for the squared localized Hamiltonian operator.}

We establish first three results for the Hamiltonian operator linearized around Euclidean data, which together provide us with the desired linearized decay estimates: \autoref{thm-sharp-h-localized-vari} (variational formulation) with sub-harmonic decay, \autoref{thm:decayL2} (integral estimates, proven in \refwithname{Sections}{section=6}--\ref{section=7}) which leverages our new stability conditions to improve the decay rate, and \autoref{thm-sharp-h-localized} (pointwise estimates) based on standard ellipticity arguments.

\paragraph{\autoref{section=6}: Consequences of Hamiltonian harmonic and radial stability.}

We derive the harmonic-spherical decomposition of the Hamiltonian operator in \autoref{lem:sph-Ham} and extract consequences of stability conditions.
Harmonic stability allows us to construct the silhouette function $\nu^{\normal}_\iota \in \ker(\ssA^{\lambda_\iota})$ in \autoref{prop:Ham-kernel}, whose normalization~\eqref{equa-norm-normal} is motivated by \autoref{lem-ADMenergymod} where we compute the ADM energy of the modulator, used in the proof of \autoref{corollary-bonne-normalisations}.
Radial stability leads to \autoref{prop-66} which expresses spherical averages in terms of the functional~$\Kappa^\notreH_\iota$.

\paragraph{\autoref{section=7}: Integral estimates for the Hamiltonian (\autoref{thm:decayL2}).}

We define the Hamiltonian shell identity in a weak sense in \autoref{section=7.1}.
By integrating this identity, we obtain in \autoref{section=7.2} that source terms control a radial integral of the dissipation functionals~$\Psi^\notreH_{\beta\iota}[u]$.  Thanks to Hamiltonian shell stability, these functionals control Sobolev norms of the solution, modulo a contribution of the average~$\la u\ra$ that is controlled in \autoref{section=7.3} based on \autoref{prop-66}.

\paragraph{\autoref{section=8}: Linear estimates for the squared localized momentum operator.}
 
Next, we establish three results for the momentum operator, which all together lead us to the desired linearized decay estimates, namely \autoref{thm-sharp-m-localized-vari} (variational formulation), \autoref{thm:decayL2-M} (integral estimates), and \autoref{thm-sharp-m-localized} (pointwise estimates). As in \autoref{section=5}, these results arise as consequences of our new stability conditions in combination with standard ellipticity arguments.  The proof of \autoref{thm:decayL2-M} relies on calculations in the next section.

\paragraph{\autoref{section=9}: Consequences of momentum harmonic and radial stability.}

We investigate the structure enjoyed by the momentum operator in \autoref{prop:Mom-kernel} (kernel properties of the asymptotic momentum) and \autoref{prop-87-moment} (spherical averages). The normalization of the silhouette vector fields $ \xi^{\normal (j)}_\iota \in \ker(\ssB^{\lambda_\iota})$ in~\eqref{equa-norm-normal} is motivated by \autoref{lem-ADMmomentmod}, which yields the ADM momentum of the modulator, thus completing the proof of \autoref{corollary-bonne-normalisations}.

\paragraph{\autoref{section=10}: Nonlinear analysis of the squared localized Einstein constraints.}

Finally, we are in a position to analyze curvature terms and nonlinearities to determine decay properties of the localized seed-to-solution projection $\Solu^{\lambdabf}_{n,p}$ in~\eqref{equa-sol-map}.  Starting from the low radial decay provided by the existence result \autoref{thm:sts-existence}, we iteratively improve the radial exponent based on \autoref{prop:improve-radial}, to reach the desired sub-harmonic, harmonic, or super-harmonic decay for the localized Einstein equations, thus proving \autoref{theo--beyond-harmonic} and \autoref{theo--beyond-harmonic-II}.

\paragraph{\autoref{section=11}: Proof of the radial decay improvement (\autoref{prop:improve-radial}).}

To close the proof of the main results, there remains to determine how the radial decay exponent of $(u,Z)$ in an asymptotic end can be improved.
We expand the nonlinear Einstein constraints and their adjoint, first around a curved-space linear operator to deduce bounds on $(\vartheta u,\vartheta Z)$ with the same radial decay as $(u,Z)$ in \autoref{section=11.2}, and then around Euclidean space in \autoref{section=11.3} to improve the decay rate of $(u,Z)$ using the linearized estimates in~\refwithname{Sections}{section=5} \refwithname{and}{section=8}.


\section{Linear estimates for the squared localized Hamiltonian operator}
\label{section=5}

\subsection{Conical localization framework}
\label{section=5.1}

In this section and the next, we focus on the linearization of the (squared) localized Hamiltonian operator~$\notreH^\lambda$ within a cone of the $n$-dimensional Euclidean space (with $n \geq 3$), and we aim at establishing sharp decay.
A basic variational formulation provides us first with a control of the decay of solutions in a mild integral sense (\autoref{thm-sharp-h-localized-vari}), and our challenge in the present section is to derive {\it sharp integral} estimates (\autoref{thm:decayL2}) and {\sl sharp pointwise} estimates (\autoref{thm-sharp-h-localized}) in weighted norms.
Results that rely on a decomposition of~$\notreH^\lambda$ into radial and angular derivatives are all collected in \autoref{section=6} and quoted as needed in the present section.
For clarity in the presentation, we summarize our notation as follows. (We point out that our analysis encompasses the special case where $\lambda$ is taken to be identically $1$ and $K$ is chosen to be $\RR^n$.) 
 
\bei 

\item {\bf Gluing domain.} We work in an open cone $K \subset \RR^n$ with smooth boundary (except at its tip $r= 0)$, truncated by restricting attention to the exterior ${}^\complement \Ball_R \subset \RR^n$ of the (closed) ball $\Ball_R$ with fixed radius $R>1$. That is, we consider the open set $\Omega_R \coloneqq K \cap {}^{\complement} \Ball_R$. The radius $R>1$ is arbitrarily fixed  (but will be chosen to be sufficiently large in \autoref{section=10}).  

\item {\bf Radius function.} In our coordinates in $\Omega_R$ the radius $r^2 = \sum (x^j)^2 > R^2$ is bounded away from zero. 

\item {\bf Localization function.} The function $\lambda \geq 0$ is positive in the interior of $K$, depends upon the angular variable $\xh=x/|x|$ only, and vanishes linearly as the boundary $\del K$ is approached. We regard $\lambda: \Lambda \subset \Sphe^{n-1}\to (0,\lambda_0]$ as a function defined over an open subset~$\Lambda$ of the sphere at infinity~$\Sphe^{n-1}$, and we assume that $\Lambda$ is \emph{connected}. For clarity, this connectedness property will be emphasized in our statements below (in order for the asymptotic kernel to be one-dimensional). 

\item {\bf Variational weight.} In this section, the weight~\eqref{equa-omega-choice} reads ${\omega_p = r^{n/2-p} \, \lambda^\expoP \geq 0}$ (without boldface) for some reals $p \in (p_n^\flat, n-2)$ and ${\expoP \geq 2}$. 

\item {\bf Boundary.} The boundary $\del\Omega_R$ consists of a subset of the boundary of $K$ together with a subset $\Sphe_R \cap K=\Lambda_R$ of the sphere. While no boundary condition is explicitly required along $\del \Omega_R \setminus \Sphe_R$ (due to the presence of the weight $\lambda^\expoP$), it will be necessary to specify a Neumann-type boundary condition along the boundary $\Lambda_R$. The specific choice is important in order to easily compute the harmonic terms without spurious contribution from this boundary.\footnote{However, this is unimportant for the application in \autoref{section=10} since the solutions there vanish in the vicinity of~$\Lambda_R$.}

\eei

In agreement with \autoref{section=2.2} we use  the notation $L^2_{q, -\expoP}(\Omega_R)$ (cf.~\eqref{equa-def-weightL}) for the weighted Lebesgue spaces and $H^m_{q,-\expoP}(\Omega_R)$ (for $m= 1,2, \ldots$) for the corresponding weighted Sobolev spaces.
Sharp pointwise and integral estimates for the metric in \autoref{theo--beyond-harmonic} involve a sharp decay exponent~$\pstar$.
Since our description here focuses on the solution~$u$ instead, it proves convenient to work in terms of the exponent
\be
\astar = n - 2 - 2p + \pstar ,
\ee
which accounts for the two derivatives and $\omega_p^2$ factor in~\eqref{equa--221}.
The variational exponent $\pstar=p$ corresponds to $\astar=a_{n,p}/2$ while the harmonic exponent $\pstar=n-2$ corresponds to $\astar=a_{n,p}$.
We recall also from~\eqref{equa:acalew0-deux} that the (squared) localized Hamiltonian operator
\be
\notreH^\lambda[u] 
= \omega_p^{-2} \, \Bigl(
(n-2) \, \Delta(\omega_p^2 \, \Delta u)
+ \del_i\del_j(\omega_p^2 \del_i\del_j u)
\Bigr)
\ee
involves a scaling factor~$\omega_p^{-2}$.
The norm~\eqref{equa-Ecal-pstar} of the Hamiltonian operator of the seed data set, restricted to one asymptotic end, and expressed in terms of a source term $E=(\omega_p)^{-2}\Hcal(\seedg,\seedh)$, is equivalent to the sum of a H\"older and Lebesgue norms of~$E$, and an energy term.
However, results of the present section will be applied in \autoref{section=10} with a source term constructed from fourth derivatives of the variational solution $u$ in a weighted $H^2$ space, so that the Lebesgue norm has to be replaced here by the (weaker) operator norm
\bel{H2star-def}
\|E\|_{H^{2*}_{\astar+4,-\expoP}(\Omega_R)}
= \sup_{v\in H^2_{-\astar-4,-\expoP}(\Omega_R)} \frac{\int_{\Omega_R} E v \lambda^{2\expoP} r^{-n} dx}{\|v\|_{H^2_{-\astar-4,-\expoP}(\Omega_R)}} ,
\ee
which defines $H^{2*}_{\astar+4,-\expoP}(\Omega_R)$ as the dual of $H^2_{-\astar-4,-\expoP}(\Omega_R)$ under the measure $\lambda^{2\expoP} r^{-n}dx=d\chi\,dr/r$, for any decay exponent $\astar+4$.

In \autoref{section=5.2} we provide a variational formulation of the equation $\notreH^\lambda[u]=E$ with Neumann boundary conditions and prove that a weighted $H^2$ norm of~$u$ is controlled by a weighted $H^{2*}$ norm of~$E$.
It will be important for our purposes to control one additional radial derivative of the source term~$E$ as well as the solution~$u$.
We introduce the distributional derivative $\vartheta_* E\in H^{3*}_{\astar+4,-\expoP}(\Omega_R)$ defined by\footnote{For smooth $E$, this derivative differs from $\vartheta E$ by boundary terms (cf.~\eqref{vartheta-star-def}).}
\be
\int_{\Omega_R} \vartheta_* E \, v \lambda^{2\expoP} r^{-n} dx
\coloneqq - \int_{\Omega_R} E \, \vartheta v \lambda^{2\expoP} r^{-n} dx ,
\qquad v\in H^3_{-\astar-4,-\expoP}(\Omega_R) ,
\ee
and we assume the improved regularity $\vartheta_* E\in H^{2*}_{\astar+4,-\expoP}(\Omega_R)$.
Since the radial derivative does not commute with Neumann boundary conditions, the $H^2$ norm of $\vartheta u$ with variational exponent is not controlled by this assumption, and we include it separately.

Before introducing the norm, we define the truncated energy, and its total value and tail bound, as follows: 
\bel{mmodu-def}
\aligned
\mmodu(E;r_1,r_2) & \coloneqq
\frac{1}{2(n-1)|\Sphe^{n-1}|} \int_{\Omega_{r_1}\setminus\Omega_{r_2}} E \, r^{n-2p} \lambda^{2\expoP} dx
\\
& = \frac{\aire[\Lambda,\lambda]}{2(n-1)|\Sphe^{n-1}|} \int_{r_1}^{r_2} \la E(s)\ra \, s^{3+a_{n,p}} ds ,
\\[1ex]
\mmodu(E) & \coloneqq \lim_{r_2 \to +\infty} \mmodu(E;R,r_2) ,
\qquad
\mmax(E) \coloneqq \sup_{r_1\geq R} \Bigl| \lim_{r_2\to+\infty} \mmodu(E;r_1,r_2) \Bigr| .
\endaligned
\ee
In the \mbox{(super-)harmonic} regime, these quantities are part of the hypotheses whenever they are used; in the sub-harmonic regime the convention following~\eqref{cutoff-pstar-def} applies.  In full, the norm of interest here is thus
\bel{NbbastarE-def}
\Nbb_{\astar}^{\notreH}
\coloneqq \| E \|_{\Omega_R,\astar+4,-\expoPp-2}^{N-2,\alpha} + \| E,\vartheta_* E \|_{H^{2*}_{\astar+4,-\expoP}(\Omega_R)} + \cutoff_{\pstar} \mmax(E) + \|\vartheta u\|_{H^2_{n-2-p,-\expoP}(\Omega_R)} ,
\ee
where $\cutoff_{\pstar}$ is equal to $0$ in the sub-harmonic case $\pstar<n-2$ (namely $\astar<a_{n,p}$) and otherwise~$1$ (cf.~\eqref{cutoff-pstar-def}).  The energy term $\mmax(E)$ controls truncated energies $\mmodu(E;r_1,r_2)$ and the total energy~$\mmodu(E)$ ---which is an analogue of~\eqref{equa-mstarJstar} for the linear problem in flat space.
Their existence is predicated on the (signed) integrability of $\la E\ra r^{a_{n,p}+3}$ at infinity\footnote{Since $\expoPp+2<2\expoP$ (owing to $\expoPm>2$), the pointwise bound $|E|\lesssim r^{-\astar-4} \lambda^{-\expoPp-2}$ ensures that $\la E(r)\ra$ is well-defined for all $r\in[R,+\infty)$.}.
We find it convenient to include this truncated energy also in the super-harmonic case $\astar>a_{n,p}$, even though it is controlled by the H\"older norm in~\eqref{NbbastarE-def} in that case.
For the integral estimates in \autoref{thm:decayL2} below, only the (dual) Sobolev norms and truncated energy appear, whereas the pointwise estimates in \autoref{thm-sharp-h-localized} involve all of~$\Nbb_{\astar}^{\notreH}$.
We emphasize that \eqref{NbbastarE-def} features three radial exponents: $r^4 E$ with exponent $\astar$, the energy $\mmax(E)$ related to the harmonic exponent~$a_{n,p}$ (if $\astar\geq a_{n,p}$), and the solution $\vartheta u$ with variational exponent $a_{n,p}/2=n-2-p$.


\subsection{Variational formulation}
\label{section=5.2}

We are now interested in the boundary value problem for the localized Hamiltonian operator in suitably weighted norms associated with the function $\lambda$. Our variational formulation for the operator $\notreH^\lambda$ relies on a choice of boundary conditions of Neumann-type along a sphere of radius $R$. To the Hamiltonian operator $\notreH^\lambda$ we associate the \textbf{Hamiltonian boundary operators} 
\bel{equa-bound-ope-2-lambda}
\aligned
\Abb_{3}^\lambda[u] 
& \coloneqq (n-1) \vartheta ( \vartheta^2 + a_{n,p} \vartheta - b_{n,p} ) u
+ 
2 \, \lambda^{-2\expoP} \nablaslash \cdot \bigl(\lambda^{2\expoP} \nablaslash (\vartheta u -u) \bigr)
\\
& \quad + \Bigl( (n-2) \vartheta - (n-2)(n-2-a_{n,p}) - 1 \Bigr) \Deltaslash u, 
\\
\Abb_{2}^\lambda[u] 
& \coloneqq (n-1) \vartheta (\vartheta + n-3) u + (n-2) \Deltaslash u,
\endaligned
\ee 
which are third-order and second-order, respectively, and are defined on any sphere~$\Sphe_r$ (for $r \geq R$). These boundary operators provide us with an analogue of the Neumann boundary operator associated with the Laplace operator.  We point out that the variational decay we achieve in this preliminary stage is \textsl{much weaker} than the \mbox{(super-)harmonic} decay that we will establish later on in this section.

\begin{theorem}[Variational formulation for the localized Hamiltonian operator]
\label{thm-sharp-h-localized-vari}
Consider a conical domain $\Omega_R = K \cap {}^{\complement} \Ball_R \subset \RR^n$ together with a localization function $\lambda\colon \Lambda \to (0, \lambda_0]$ with connected support  $\Lambda \subset \Sphe^{n-1}$. Fix some arbitrary localization exponent~$\expoP \geq 2$ and consider a projection exponent\footnote{The restriction based on the lower bound $p_n^\flat$ is unnecessary here.} $p \in (0, n-2)$. Given any $E \in H^{2*}_{n+2-p,-\expoP}(\Omega_R)$, there exists a unique solution $u \in H^2_{n-2-p, -\expoP}(\Omega_R)$ characterized by a variational identity (cf.~\eqref{equa-variH} below).  In particular, it solves the following localized fourth-order Hamiltonian equation in the weak sense
\bel{equa-opera5} 
\aligned
\notreH^\lambda[u]
& = E \, 
&& \text{ in the exterior domain } \Omega_R = {}^\complement \Ball_R\cap K , 
\qquad 
\\
\Abb_{3}^\lambda [u]  
& = \Abb_{2}^\lambda[u]  = 0 
&& \text{ on the spherical shell } \Lambda_R = \Sphe_R \cap K.
\endaligned
\ee   
Moreover, one has 
\be
\|u\|_{H^2_{n-2-p,-\expoP}(\Omega_R)}
\lesssim
\|E\|_{H^{2*}_{n+2-p,-\expoP}(\Omega_R)} .
\ee 
\end{theorem}

 The first line in~\eqref{equa-opera5} holds in $\Dcal'(\Omega_R)$.  The two conditions on~$\Lambda_R$ are the natural boundary conditions encoded by the variational identity; they are asserted as classical trace identities only when $u$ has enough regularity up to~$\Lambda_R$ for the second- and third-order boundary operators to possess traces.  

\begin{proof} 
We determine the variational formulation by suitable integrations by parts.
Since $\expoP\geq2$, Proposition~\ref{prop-density-H2} provides a subspace $\DcalR(\Omega_R)$ that is dense in the space $H^2_{n-2-p,-\expoP}(\Omega_R)$ of interest.
This space consists of smooth test functions~$w$ that vanish near the lateral boundary but have arbitrary traces $w|_{\Lambda_R}$ and $\partial_rw|_{\Lambda_R}$.
Hence no lateral boundary term occurs, while the two coefficients of these inner traces are the two natural boundary operators. For a variational solution of the regularity stated in the theorem, the equalities $\Abb_2^\lambda[u]=\Abb_3^\lambda[u]=0$ on $\Lambda_R$ are understood in this natural weak sense; when $u$ is sufficiently regular up to $\Lambda_R$, they agree with the stated differential expressions.  

Consider thus a sufficiently regular solution $u$ to~\eqref{equa-opera5} and integrate by parts.
We use the short-hand $\phi_{kl} = \lambda^{2\expoP} r^{n-2p} (\del_k\del_l u-\delta_{kl}\Delta u)$ in the calculation.  For arbitrary test functions $w\in\DcalR(\Omega_R)$, and taking into account that the outer unit normal to $\Omega_R$ is $-\xh$, we find 
\[
\aligned
\int_{\Omega_R} w \, E\, r^{n-2p} \lambda^{2\expoP} dx  
& = \int_{\Omega_R} w \, (\del_k \del_l - \delta_{kl} \Delta) \phi_{kl} dx 
= \int_{\Omega_R} (\del_k  \del_l w - \delta_{kl} \Delta w) \, \phi_{kl} dx + T_B[w]
\\
T_B[w] \, & \! \coloneqq \int_{\Lambda_R} \Bigl( w \, \bigl( \xh_k \del_k \phi_{ll} - \xh_k \del_l \phi_{kl} \bigr) + \del_k w \, \bigl(\xh_l \phi_{kl} - \xh_k \phi_{ll} \bigr) \Bigr) \, R^{n-1}d\xh, 
\endaligned
\]
therefore
\bel{equa-variH} 
\aligned
Q^\notreH[w,u] 
& \coloneqq
 \int_{\Omega_R} \big( \del_k  \del_l  w \, - \delta_{kl} \Delta w \big) \, (\del_k \del_l u - \delta_{kl} \Delta u) \, \lambda^{2\expoP} r^{n-2p} dx  
\\
& = \int_{\Omega_R} w \, E\, r^{n-2p} \lambda^{2\expoP} dx, 
\endaligned
\ee
provided the boundary term~$T_B[w]$ vanishes for all test functions~$w$.
We note that $\del_k w = \frac{1}{r} \bigl( \xh_k \, \vartheta w + \nablaslash_k w \bigr)$, where $\nablaslash$ is the Levi--Civita connection of the sphere metric~$\gslash$, and we integrate by parts the $\nablaslash w$ terms along the angular shell $\Lambda_R\subset\del\Omega_R$ to get
\[
\aligned
T_B[w] & = \frac{R^{n-1}}{R} \int_{\Lambda_R} \vartheta w \bigl(\xh_k \xh_l \phi_{kl} - \phi_{ll} \bigr) d\xh \\
& \quad - \frac{R^{n-1}}{R} \int_{\Lambda_R} w \Bigl( \nablaslash_k \bigl(\xh_l \phi_{kl} - \xh_k \phi_{ll} \bigr) + R(\xh_k \del_l \phi_{kl} - \xh_k \del_k \phi_{ll}) \Bigr) d\xh
\endaligned
\]
and the required boundary conditions are that the coefficients of $\vartheta w$ and of~$w$ vanish on the shell~$\Lambda_R$.
For the first condition we write (using $\xh_k\del_k=r^{-1}\vartheta$ and $\vartheta\xh_l=0$)
\[
\aligned
\lambda^{-2\expoP} r^{2p-n} \bigl( \xh_k\xh_l \phi_{kl} - \phi_{ll} \bigr)
& = (n-2) \Delta u + \xh_l \xh_k \del_k \del_l u
= (n-2) \Delta u + r^{-1}\vartheta (r^{-1}\vartheta u)
\\
& = r^{-2} \Bigl( (n-1)\vartheta(\vartheta+n-3) u + (n-2)\Deltaslash u \Bigr) , 
\endaligned
\]
which leads us to the boundary operator $\Abb_{2}^\lambda[u]$ defined in~\eqref{equa-bound-ope-2-lambda}.
For the second condition we first evaluate
\[
\aligned
\xh_l \phi_{kl} & = \lambda^{2\expoP} r^{n-2p-2} \bigl( - (n-1) \xh_k \vartheta u + (\vartheta-1)\nablaslash_k u - \xh_k\Deltaslash u \bigr) ,
\\
r \xh_k \del_l \phi_{kl} - r \xh_k \del_k \phi_{ll} & = \del_l(r \xh_k \phi_{kl}) - \phi_{ll} - \vartheta \phi_{ll} ,
\endaligned
\]
and use them, together with $\nablaslash_k(\xh_k f)=(n-1)f$, to obtain
\[
\aligned
& \lambda^{-2\expoP} r^{2p-n+2} \Bigl( \nablaslash_k \bigl(\xh_l \phi_{kl} - \xh_k \phi_{ll} \bigr) + r \xh_k \del_l \phi_{kl} - r \xh_k \del_k \phi_{ll} \Bigr)
\\
& = \lambda^{-2\expoP} \nablaslash_k \Bigl( \lambda^{2\expoP} \Bigl(- (n-1) \xh_k \vartheta u + (\vartheta-1)\nablaslash_k u - \xh_k\Deltaslash u  \\[-1ex]
& \qquad\qquad\qquad\quad\ \
+ (n-1) \xh_k \vartheta(\vartheta+n-2) u + (n-1) \xh_k \Deltaslash u \Bigr) \Bigr) \\
& \quad
 + \lambda^{-2\expoP} r^{2p-n+2} \Bigl(\del_l\Bigl( \lambda^{2\expoP} r^{n-2p-1} \bigl( - (n-1) \xh_l \vartheta u + (\vartheta-1)\nablaslash_l u - \xh_l\Deltaslash u \bigr) \Bigr)  \\[-1ex]
& \qquad\qquad\qquad\qquad\
+ (n-1) (\vartheta+1)(\lambda^{2\expoP} r^{n-2p} \Delta u) \Bigr)
\\
& = (n-1)^2 \vartheta(\vartheta+n-3) u
+ 2 \lambda^{-2\expoP} \nablaslash\cdot \bigl( \lambda^{2\expoP} \nablaslash(\vartheta u - u) \bigr)
+ (n-2) (n-1) \Deltaslash u \\
& \quad - (\vartheta+2n-2p-2) \bigl( (n-1) \vartheta u + \Deltaslash u \bigr)
+ (n-1) (\vartheta+n-2p-1)\bigl(\vartheta(\vartheta+n-2)u+\Deltaslash u\bigr)
\\
& = (n-1) \vartheta\bigl( \vartheta^2 + (a_{n,p}+n-1)\vartheta + (n-3)a_{n,p} + n - 5\bigr) u
\\
& \quad + 2 \lambda^{-2\expoP} \nablaslash\cdot \bigl( \lambda^{2\expoP} \nablaslash(\vartheta u - u) \bigr)
+ \bigl((n-2)\vartheta+(n-2)a_{n,p}+n-3\bigr) \Deltaslash u .
\endaligned
\]
This leads us to a linear combination $\Abb_{3}^\lambda[u] + (n-1) \Abb_{2}^\lambda[u]$ of the boundary operators defined in~\eqref{equa-bound-ope-2-lambda}.
This completes the derivation of the variational formulation with boundary.
In view of the decomposition in \autoref{lem:sph-Ham}, below, the variational formulation is also equivalent to saying 
\bel{eq-521}
\aligned 
& \int_R^{+ \infty} \int_{\Lambda_r} \Bigl(
(n-1) \vartheta^2 w \vartheta^2 u
+ (n-1) b_{n,p} \vartheta w \vartheta u
+ 2 \nablaslash\vartheta w \cdot \nablaslash\vartheta u
+ (n-2) (\vartheta^2+a_{n,p}\vartheta) w \, \Deltaslash u
\\
& \qquad
+ (n-2) \Deltaslash w \, (\vartheta^2+a_{n,p}\vartheta) u
+ \bigl( (n-2)(n-2-a_{n,p}) + 1\bigr) \bigl( \Deltaslash w \, \vartheta u + \vartheta w \Deltaslash u
\bigr)
\\
& \qquad + (n-2) \Deltaslash w \Deltaslash u + \nablaslash^a\nablaslash^b w \nablaslash_a\nablaslash_b u 
+ 2 (a_{n,p}+1) \nablaslash w \cdot \nablaslash u
\Bigr) \, r^{a_{n,p}} \, d\chi \frac{dr}{r}
\\
& = \int_{\Omega_R} w \, E\, r^{n-2p} \lambda^{2\expoP} dx. 
\endaligned
\ee
Consequently, for $E\in H^{2*}_{n+2-p,-\expoP}(\Omega_R)$, the variational solution $u \in H^2_{n-2-p, -\expoP}(\Omega_R)$ is defined by requiring that~\eqref{equa-variH} (equivalently~\eqref{eq-521}) holds for all $w \in H^2_{n-2-p, -\expoP}(\Omega_R)$, the right-hand side of~\eqref{equa-variH} being understood as the duality bracket between $E\in H^{2*}_{n+2-p,-\expoP}(\Omega_R)$ and $r^{2n-2p}w\in H^2_{-n-2+p,-\expoP}(\Omega_R)$. Continuity and coercivity properties for the linearized Hamiltonian follow from the weighted Poincaré inequality in~$\Omega_R$ (cf.~\autoref{lem:PoincareKornHardyD}).  The Lax--Milgram theorem therefore gives existence and uniqueness of the variational solution to~\eqref{equa-opera5}, and the estimate stated above.
Testing against compactly supported functions proves $\notreH^\lambda[u]=E$ in $\Dcal'(\Omega_R)$; the preceding Green formula identifies the natural boundary operators whenever their traces exist.
\end{proof}


\subsection{Statement of the localized integral estimates}
\label{section=5.3}

We now consider the solutions to the localized Hamiltonian and, by integrating differential equations \eqref{equa-defb-b-b}~for the average $\la u\ra$ and \eqref{main-func-identity}~for the shell functional~$\Phi^\notreH[u]$, and applying our Hamiltonian stability conditions, we establish sharp decay for Sobolev norms on spherical caps $\Lambda_r=\{x\in\Omega_R,\,|x|=r\}$.
The proof spans \autoref{section=7.1}, which defines in particular the parameter $\deltaH>0$ appearing in the theorem statement, \autoref{section=7.2} analysing the shell identity, and \autoref{section=7.3} proving a technical bound based on the equation for~$\la u\ra$, and relies on some explicit expressions (especially~\autoref{prop-66}) derived in \autoref{section=6} based on a harmonic-spherical decomposition of~$\notreH^\lambda$.

\begin{theorem}[Integral estimates for the localized Hamiltonian operator]
\label{thm:decayL2}
Consider a conical domain $\Omega_R = K \cap {}^{\complement} \Ball_R \subset \RR^n$ together with a localization function $\lambda\colon \Lambda \to (0, \lambda_0]$ with connected support $\Lambda \subset \Sphe^{n-1}$ and some ${\expoP \geq 2}$. Fix  a projection exponent $p\in (p_n^\flat,n-2)$ and assume that the localization function is Hamiltonian-stable (cf.~\autoref{def-shell-Hstab}).  
There exists $\deltaH>0$, depending only on the exponents, the Hamiltonian-stability constants, and the fixed localization geometry, such that the following holds.
For every sharp decay exponent $\astar \in [a_{n,p}/2, a_{n,p}+\deltaH)$, consider the variational solution $u \in H^2_{n-2-p, -\expoP}(\Omega_R)$ of the localized Hamiltonian equation~\eqref{equa-opera5} associated with a source term obeying the radial decay (cf.~\autoref{rem:decay-reg})
\bse
\bel{equa-hyo-EEE}
T_E \coloneqq \|E,\vartheta_* E\|_{H^{2*}_{\astar+4,-\expoP}(\Omega_R)} + \cutoff_{\pstar} \mmax(E) < +\infty ,
\ee
where $\cutoff_{\pstar}=\Oneone_{\astar\geq a_{n,p}}$ and $\mmax(E)$ is defined in~\eqref{mmodu-def} if $\astar\geq a_{n,p}$.
In that regime, set $\umodu \coloneqq \mmodu(E) \nu^\normal r^{-a_{n,p}}$, with the energy $\mmodu(E)$ given in~\eqref{mmodu-def}.
Assume the following additional radial regularity on the solution \textup{(}with radial exponent $a_{n,p}/2$\textup{)}
\bel{equa-hyo-EEu}
T_u \coloneqq \|\vartheta u\|_{H^2_{n-2-p,-\expoP}(\Omega_R)} < +\infty .
\ee
\ese
Fix an exponent $\bstar\in(\astar,a_{n,p}+\cutoff_{\pstar}\deltaH)$ in the same regime \textup{(}sub-harmonic or not\textup{)} as~$\astar$, and let $o_E(1)$ denote a bounded function of~$r$ that tends to zero as $r\to+\infty$; its rate may depend on the source~$E$. 
Then in terms of the norm 
\be
(\norm{\,{\cdot}\,}^\notreH)^2 = \| \vartheta^2\,{\cdot}\, \|^2_{\unL^2_{-\expoP}(\Lambda)} + \| \vartheta\,{\cdot}\, \|^2_{\unH^1_{-\expoP}(\Lambda)} + \|\,{\cdot}\,\|^2_{\unH^2_{-\expoP}(\Lambda)}
\ee 
on each spherical shell $\Lambda_r$ identified with $\Lambda$ by radial projection, the solution decays radially in a pointwise and integral sense
\bel{equa-bound-int}
\aligned
& \bigl( \norm{u-\cutoff_{\pstar}\umodu}^\notreH(r) \bigr)^2 + \sum_{k=0,1} \int_r^{2r} \bigl(\norm{\vartheta^k (u - \cutoff_{\pstar} \umodu)}^\notreH(s)\bigr)^2 \frac{ds}{s}
\\
& \lesssim (T_E)^2 r^{-2\astar} o_E(1) + (T_u)^2 r^{-2\bstar} , \quad r \geq R. 
\endaligned
\ee
The implicit constant depends on the geometry and exponents, including~$\bstar$, and on the constants implicit in Hamiltonian stability conditions.
\end{theorem} 

\begin{remark}\label{rem:decay-reg}
1. The condition $E,\vartheta_*E \in H^{2*}_{\astar+4,-\expoP}(\Omega_R)$ in~\eqref{equa-hyo-EEE} implies that the function $r\mapsto\la E(r)\ra$ belongs to $H^{2*}_{\astar+4}([R,+\infty))$.  If $\astar>a_{n,p}$, it can be integrated against the test function $r^{a_{n,p}+4}\in H^2_{-\astar-4}([R,+\infty))$ to obtain the energy $\mmodu(E)$ defined in~\eqref{mmodu-def}.  Integrating on a finite interval $[R,r]$ defines a distribution $\mmodu(E;R,{\,\cdot\,})$ in $H^{1*}_\gamma([R,+\infty))$ for any $\gamma<\astar-a_{n,p}$, and we assume here that this distribution is bounded and has a limit.  Boundedness is automatic when assuming H\"older regularity later on.

2. The regularity of $\vartheta u$ is not implied by that of $\vartheta_*E$, because Neumann boundary conditions are not preserved by~$\vartheta$.  In \autoref{section=10} we apply these results to $E$ and~$u$ vanishing near $r=R$, in which case $T_U = \|\vartheta u\|_{H^2_{n-2-p,-\expoP}(\Omega_R)}\lesssim\|\vartheta_*E\|_{H^{2*}_{n+2-p,-\expoP}(\Omega_R)} \leq T_E$.
\end{remark}

 
\subsection{Localized pointwise estimates}
\label{section=5.4}

We now reach our main conclusion for the localized Hamiltonian operator, namely sharp decay properties in pointwise norms when $r \to + \infty$. Observe that the solutions may be unbounded near the part of the boundary of $\Omega_R$ where $\lambda$ vanishes. Moreover, since interior ellipticity is used it is natural also that our estimates below be stated slightly away from the boundary $\Lambda_R=\Sphe_R\cap K$, say in $\Omega_{R'}$ with $R' > R$. For convenience in applications, we also state Sobolev bounds on this reduced domain that are slightly weaker than \autoref{thm:decayL2}.
We also point out that the connectedness of $\Lambda$ is necessary for the kernel at infinity to have dimension one.  We recall that the weighted localized H\"older norms were introduced in \autoref{section=2}. To describe the regularity, we fix an integer $N\geq 3$ and a H\"older exponent $\alpha \in (0,1)$, together with a localization exponent~$\expoP \geq 2$. (Later on, this exponent will be taken to be sufficiently large to deal with nonlinear equations.)
We recall that to a decay exponent $\pstar\geq p$, one associates
$\astar = \pstar + n - 2 - 2p \geq a_{n,p} / 2$,
and $\cutoff_{\pstar}=0$ in the sub-harmonic regime $\astar<a_{n,p}$ (namely, $p<n-2$) and $\cutoff_{\pstar}=1$ otherwise.

\begin{theorem}[Pointwise estimates for the localized Hamiltonian operator]
\label{thm-sharp-h-localized}
In the setup of \autoref{thm:decayL2}, with in particular a Hamiltonian-stable localization function $\lambda$ with connected support~$\Lambda$,
and a sharp decay exponent $\astar\in[a_{n,p}/2,a_{n,p}+\deltaH)$
consider the variational solution $u \in H^2_{n-2-p,-\expoP}(\Omega_R)$ to~\eqref{equa-opera5} and assume additional regularity for~$\vartheta u$ and radial decay of the source~$E$ in dual Sobolev norm and H\"older norm such that the norm~\eqref{NbbastarE-def} is finite, namely
\bel{equa-hyo-EEE-Hold}
\Nbb_{\astar}^{\notreH}
\coloneqq \| E \|_{\Omega_R,\astar+4,-\expoPp-2}^{N-2,\alpha} + \| E,\vartheta_*E \|_{H^{2*}_{\astar+4,-\expoP}(\Omega_R)} + \cutoff_{\pstar} \mmax(E) + \|\vartheta u\|_{H^2_{n-2-p,-\expoP}(\Omega_R)}
< +\infty ,
\ee
for some exponent $\expoPp \geq \expoP + (n+3)/2$.
In the \mbox{(super-)harmonic} case $\astar\geq a_{n,p}$, recall the energy term $\mmax(E)$ defined in~\eqref{mmodu-def}, and the energy~$\mmodu(E)$ and modulator $\umodu=\mmodu(E)r^{-a_{n,p}}\nu^\normal$ defined\/\footnote{The H\"older bounds ensure existence of $\mmax(E)$ and $\mmodu(E)$ for $\astar>a_{n,p}$ but in the harmonic case their existence is a signed integrability condition on~$E$.  One has $|\mmodu(E)|\leq\mmax(E)$.} in~\eqref{mmodu-def}.
Fix some radius $R'> R$.  Then the following estimate holds,\footnote{We emphasize that it is necessary to take the H\"older norms over $\Omega_R$ or $\Omega_{R'}$, as specified.  Integral norms are controlled over all of~$\Omega_R$ by \autoref{thm:decayL2} but using the same domain allows for a convenient statement about the $R'\to+\infty$ limit.}
\bel{u-HolderSobolev}
\bigl\| u - \cutoff_{\pstar} \umodu \bigr\|^{N+2,\alpha}_{\Omega_{R'}, \astar, -\expoPp+2}
+ \sup_{\beta\in[a_{n,p}/2,\astar)}\biggl( (\astar-\beta)^{1/2} \sum_{k=0,1} \bigl\| \vartheta^k(u - \cutoff_{\pstar} \umodu) \bigr\|_{H^2_{\beta,-\expoP}(\Omega_{R'})} \biggr)
\lesssim \Nbb_{\astar}^{\notreH} .
\ee
When $\astar=a_{n,p}/2$, the interval in the supremum is empty and that term is, by convention, omitted.
Furthermore, the left-hand side, with the $C^{N+2,\alpha}$ H\"older norm replaced by a $C^3$ norm, tends to zero in the limit $R'\to+\infty$.
\end{theorem}  

\begin{remark}
\label{rem:EE}
To be precise, stating that the source~$E$ is in a dual Sobolev space and a H\"older space means that the distribution $E\in H^{2*}_{\astar+4,-\expoP}(\Omega_R)$, restricted to test functions $w\in H^2_{-\astar-4,-Q}$ for some (unimportant) $0<Q<\expoPm-2$, is given by integration against a H\"older function~$E_{\textnormal{reg}}$, namely $\la E,w\ra = \int_{\Omega_R} E_{\textnormal{reg}} w \lambda^{2\expoP} r^{-n}dx$.
\end{remark}

\begin{proof}
\noindent{\it 1. Weighted Sobolev decay.} Thanks to the shell stability condition and in view of \autoref{thm:decayL2}, we already have the desired sharp decay properties~\eqref{equa-bound-int} on spherical shells~$\Lambda_r$ and on their unions over intervals~$[r,2r]$.
Summing the latter control over dyadic intervals covering $[R',+\infty)$ with a weight $r^{2\beta}$, $\beta<\astar$, yields a Sobolev bound on $\vartheta^k(u-\cutoff_{\pstar}\umodu)$, $k=0,1$, with exponent $\beta<\astar$.
The upper bound involves a sum $\sum_{m\geq 0}2^{-2m\gamma}$ with $\gamma=\astar-\beta$, which behaves as $1/\gamma$ in the limit $\beta\to\astar$.
The factor of $(\astar-\beta)^{1/2}$ ensures that the Sobolev bound in~\eqref{u-HolderSobolev} holds uniformly in that limit.
As $R'\to+\infty$, fewer and fewer dyadic intervals contribute, and the Sobolev norm decays to zero.


To proceed towards the H\"older part of that bound, we begin with the control~\eqref{equa-bound-int} in $\unH^2_{-\expoP}(\Lambda_r)$, within which we only keep the $\unL^2$~part (and use $\bstar>\astar$). 
\be
\gathered
\bigl\| (u- \cutoff_{\pstar} \umodu) (r) \bigr\|_{\unL^2_{-\expoP}(\Lambda)}
\lesssim \Mbb_0 r^{-\astar} o(1) ,
\\
\Mbb_0 \coloneqq \| E,\vartheta_*E \|_{H^{2*}_{\astar+4,-\expoP}(\Omega_R)} + \cutoff_{\pstar} \mmax(E) + \|\vartheta u\|_{H^2_{n-2-p,-\expoP}(\Omega_R)} ,
\endgathered
\ee
where $o(1)$ is bounded by~$1$ and decays to zero as $r\to+\infty$.
We measure this decay by the monotonically decreasing
\bel{eq522aa}
\Mbb(r) \coloneqq \sup_{s\in[r,+\infty)} \Bigl( s^{\astar} \bigl\| (u- \cutoff_{\pstar} \umodu) (s) \bigr\|_{\unL^2_{-\expoP}(\Lambda)} \Bigr) \lesssim \Mbb_0 ,
\qquad
\lim_{r\to+\infty} \Mbb(r) = 0 .
\ee

\medskip


\bse 
\noindent{\it 2. Interior estimates.}
For a given $R'>R$,
we set
\be
\cellip(R') = \min\biggl( 1 , \ \frac{1}{\lambda_0} , \ \inf_{x\in\Omega_{R'}} \frac{\dbf(x,{}^\complement\Omega_R)}{2\lambda(x)r(x)} \biggr) > 0 .
\ee
Denote $d(x) \coloneqq c_1 \lambda(x) r(x)$ for some $c_1\in(0,\cellip(R'))$.
By construction, $c_1<1$, one has $d(x)<r(x)$ everywhere on~$\Omega_{R'}$, and for all $x\in\Omega_{R'}$ the ball $\Ball_{2d(x)}(x)$ with radius $2d(x)$ is included in $\Omega_R$.
In particular, one has $\lambda\simeq\lambda(x)$, $r\simeq r(x)$, and $d\simeq d(x)$ on the ball $\Ball_{d(x)}(x)$.
Note that $\cellip(R')\lesssim R'-R$ in the (uninteresting) limit $R'\to R$, whereas $\cellip(R')$ tends to a positive limit as $R'\to+\infty$.

From the above weighted Sobolev bounds, we are now in a position to establish the desired weighted H\"older estimates, as follows, by applying the interior elliptic regularity enjoyed by the Hamiltonian operator. In any compact subset of the open set $\Omega_R$ and, in particular, after excluding the small domain $R < r <R'$, standard interior elliptic estimates in H\"older norm are available.\footnote{As per \autoref{rem:EE}, the source term is actually a pair of a dual Sobolev distribution and a H\"older function.  These coincide away from the boundary, so that the elliptic problem of interest is not ambiguous.} Indeed, we rewrite the equation as a singular perturbation of a bi-Laplacian problem and, from the expression of the operator $\notreH^\lambda[u]$ in~\eqref{equa-2933} and by observing that (for instance) 
$
|\del_i(\log\lambda^\expoP)| = \expoP \, |\del_i(\log\lambda)| \lesssim \expoP \, ( \lambda(x) \, r(x))^{-1}, 
$  
we can write 
\be
\aligned
& (n-1) \Delta^2 (u- \cutoff_{\pstar} \umodu) + \sum_{2 \leq |\beta| \leq 3}a_{\beta}^{(1)} \del^\beta (u- \cutoff_{\pstar} \umodu)
 = E, 
\qquad 
 |a_\beta^{(1)} | \lesssim (\lambda \, r)^{|\beta|-4}.  
\endaligned
\ee
This equation fulfills the ellipticity conditions in Douglis--Nirenberg~\cite{DouglisNirenberg} and, therefore, at any $x \in \Omega_{R'}$ we have an interior elliptic estimate in the ball $\Ball_\rho(x)$ for a radius $\rho$ smaller than the distance from $x$ to the boundary $\del \Omega_R$. We find (using specifically~\cite[Theorem 6.2.6]{Morrey-1966}, see \autoref{rem:DN} below for further comments)
\bel{equa-421-H}
\aligned
& \sum_{i=0}^{N+2} d(x)^i  \, \bigl| \del^i  (u- \cutoff_{\pstar} \umodu) (x) \bigr| 
+ d(x)^{N+2+ \alpha}\big[ \del^{N+2}  (u- \cutoff_{\pstar} \umodu)  \big]_{\alpha, \Ball_{d(x)/2}(x)} 
\\
& \lesssim {1 \over d(x)^n} \|  u- \cutoff_{\pstar} \umodu \|_{L^1(\Ball_{d(x)/2}(x))}
\\
& \quad +
 \sum_{i= 0}^{N-2} d(x)^{4+i} \sup_{\Ball_{3d(x)/4}(x)} |\del^i E| + d(x)^{N+2+ \alpha} \big[ \del^{N-2} E \big]_{\alpha, \Ball_{3d(x)/4}(x)}, 
\endaligned
\ee
in which the implied constants are {\it independent} of $x$. The terms involving $E$ are bounded thanks to our assumptions on the source. On the other hand, for the first term in the right-hand side of~\eqref{equa-421-H} we write 
\bel{equa-L1versusL2} 
\aligned
d(x)^{-n} \|  u- \cutoff_{\pstar} \umodu \|_{L^1(\Ball_{d(x)/2}(x))} 
\lesssim d(x)^{-n/2} \, \|  u- \cutoff_{\pstar} \umodu \|_{L^2(\Ball_{d(x)/2}(x))},
\endaligned
\ee 
which requires a control of the $L^2$ norm, derived next. 
\ese


\medskip

\bse
\noindent{\it 3. Pointwise decay estimates.} There remains to use our estimate on $\fint_{\Lambda_r} (u- \cutoff_{\pstar} \umodu)^2  \, d\chi$ by $r^{-2\astar}$ in~\eqref{eq522aa} in order to control the right-hand side of~\eqref{equa-L1versusL2}. For an arbitrary point $x$ and with the notation  $d(x) = c_1 \lambda(x) r(x)$ (introduced earlier such that $\Ball_{d(x)}(x)\subset\Omega_R$), we 
set $R_*\coloneqq\max\{R,R'/2\}$ and find
\bel{eq443}
\aligned
& d(x)^{-n} \, \| u- \cutoff_{\pstar} \umodu \|_{L^2(\Ball_{d(x)/2}(x))}^2
\\
& = d(x)^{-n} \int_{r(x) - d(x)/2}^{r(x)+d(x)/2} \int_{\Sphe_r \cap \Ball_{d(x)/2}(x)} \! \bigl((u- \cutoff_{\pstar} \umodu)(r, \xh)\bigr)^2 \, d\xh \, r^{n-1} dr
\\
& \lesssim 
{d(x)^{-n} \over \min_{\Ball_{d(x)/2}(x)} \lambda^{2\expoP}}
\int_{r(x) - d(x)/2}^{r(x)+d(x)/2} \Bigg(\int_{\Sphe_r \cap \Ball_{d(x)/2}(x)} \bigl((u- \cutoff_{\pstar} \umodu)(r, \xh)\bigr)^2 \, \lambda^{2\expoP} d\xh  \Bigg) \, r^{n-1} dr
\\
& \lesssim 
{d(x)^{-n} \over \min_{\Ball_{d(x)/2}(x)} \lambda^{2\expoP}} 
\Mbb(R_*)^2 \int_{r(x) - d(x)/2}^{r(x)+d(x)/2} r^{-2\astar} \, r^{n-1} dr
\endaligned
\ee
where we used $r(x)-d(x)/2>r(x)/2\geq R'/2$ and, when $R'/2<R$, the fact that every point of the ball still belongs to~$\Omega_R$.
Given that $\lambda\simeq\lambda(x)$ on the ball, the minimum can be replaced by~$\lambda(x)^{2\expoP}$.
The integrand is bounded up to a constant by $r(x)^{n-1-2\astar}$ hence
\bel{equa---444} 
\aligned
d(x)^{-n} \, \| u- \cutoff_{\pstar} \umodu \|_{L^2(\Ball_{d(x)/2}(x))}^2
& \lesssim {d(x)^{1-n} r(x)^{n-1-2\astar} \over \lambda^{2\expoP}} \Mbb(R_*)^2
\\
& = \lambda^{-2\expoP+ 1-n} \, r(x)^{-2\astar} c_1^{-n+1} \Mbb(R_*)^2.
\endaligned
\ee 
Consequently, in combination with~\eqref{equa-421-H}--\eqref{equa-L1versusL2}, we see that $u- \cutoff_{\pstar} \umodu$ is controlled in the relevant H\"older norm with angular weight $\lambda^{-\expoPp+2}$, namely 
\bel{equa-421-more}
\aligned
& \sum_{i=0}^{N+2}   \lambda^{\expoPp-2} \, d(x)^i  \, |\del^i (u- \cutoff_{\pstar} \umodu)(x)| 
+ \lambda^{\expoPp-2} \,  d(x)^{N +2 + \alpha}\big[ \del^{N+2} (u- \cutoff_{\pstar} \umodu) \big]_{\alpha, \Ball_{d(x)/2}(x)} 
\\
& \lesssim
\lambda(x)^{\expoPp-2 - \expoP - (n-1)/2} \, r(x)^{-\astar} \, c_1^{-(n-1)/2} \Mbb(R_*)
\\
& \quad + \sum_{i=0}^{N-2} \Bigl( \lambda(x)^{\expoPp-2} \, d(x)^{4+i} \sup_{\Ball_{3d(x)/4}(x)} |\del^i E| \Bigr)
+ \lambda(x)^{\expoPp-2} d(x)^{N+2+\alpha} \big[ \del^{N-2} E \big]_{\alpha, \Ball_{3d(x)/4}(x)}
\\
& \lesssim r(x)^{-\astar} \biggl(
c_1^{-(n-1)/2} \Mbb(R_*) \lambda(x)^{\expoPp-2 - \expoP - (n-1)/2} + c_1^4 \|E\|_{\Omega_R,\astar+4,-\expoPp-2}^{N-2,\alpha} \biggr)
\endaligned
\ee
where in the last step we bounded $\lambda(x)^{\expoPp-2}d(x)^{4+i}\lesssim c_1^4r(x)^{-\astar} r^{\astar+4+i}\lambda^{\expoPp+2+i}$ by using that $\lambda\simeq\lambda(x)$ and $r\simeq r(x)$ on the relevant ball.
This leads us to choose an exponent $\expoPp \geq \expoP + (n+3)/2$ in our statement, so that $\lambda^{\expoPp - 2 - \expoP - (n-1)/2}$ is bounded.
For any given value of $c_1\in(0,\cellip(R'))$ (which can be taken to be uniform as $R'\to+\infty$) one gets the H\"older bound~\eqref{u-HolderSobolev} stated in \autoref{thm-sharp-h-localized},
\be
c_1^{N+2+\alpha} \bigl\| u - \cutoff_{\pstar} \umodu \bigr\|^{N+2,\alpha}_{\Omega_{R'}, \astar, -\expoPp+2}
\lesssim c_1^{-(n-1)/2} \Mbb(R_*) + c_1^4 \|E\|_{\Omega_R,\astar+4,-\expoPp-2}^{N-2,\alpha} .
\ee
The factor $c_1^{N+2+\alpha}$ on the left-hand side arises from $d(x)^{N+2+\alpha}$ in~\eqref{equa-421-more}.  If we focus on bounding only up to three derivatives\footnote{We expect that the argument can be refined to control $u-\cutoff_{\pstar}\umodu$ with $C^{N+2,\beta}$ regularity for any $\beta<\alpha$.} of~$u$, this factor changes to~$c_1^3$, and 
and provided $\Mbb(R_*)>0$, we instead choose
\be
c_1=\frac{1}{2}\cellip(R')\min\Bigl(1,\bigl(\Mbb(R_*) /\Nbb_{\astar}^{\notreH}\bigr)^{\frac{2}{n+7}}\Bigr). 
\ee
This gives
\bel{u-cutoff-astar-prf}
\bigl\| u - \cutoff_{\pstar} \umodu \bigr\|_{C^3_{\astar, -\expoPp+2}(\Omega_{R'})}
\lesssim \Mbb({R_*})^{\frac{2}{n+7}} (\Nbb_{\astar}^{\notreH})^{\frac{n+5}{n+7}} .
\ee
If $\Mbb(R_*)=0$, then $u-\cutoff_{\pstar}\umodu=0$ almost everywhere in $\Omega_{R_*}$ and the same conclusion is immediate.
The implicit constant remains bounded in the limit $R'\to+\infty$, since $1/\cellip(R')$ remains bounded.
Finally, $\Mbb(R_*)\to 0$ as $R'\to+\infty$, thus ending the proof.
\ese
\end{proof}

\begin{remark}\label{rem:DN}
  Following Carlotto--Schoen~\cite{CarlottoSchoen} and Chrusciel--Delay~\cite{ChruscielDelay-memoir}, we use interior elliptic estimates to control H\"older norms by the weighted $L^2$~norms that are controlled at that stage.  It should be noted that these references only quote the results of Douglis--Nirenberg~\cite{DouglisNirenberg}, whose interior elliptic estimate only controls H\"older norms by the \emph{sup norm}, which is a weaker statement.
  We are grateful to Romain Gicquaud for pointing us to Morrey's textbook~\cite{Morrey-1966}, which has a detailed treatment of Douglis--Nirenberg elliptic systems and their elliptic estimates.
\end{remark}


\section{Consequences of Hamiltonian harmonic and radial stability}
\label{section=6}

\subsection{The harmonic-spherical decomposition}
\label{section=6.1}
 
Our next task is to investigate the kernel of the harmonic operator (\autoref{section=6.2}) and then 
the evolution of the radial averages  (\autoref{section=6.3}). Let us first present the relevant decomposition of the Hamiltonian operator.  In view of~\eqref{equa:acalew0}, in the Euclidean space $\RR^n$ the operator of interest reads 
\bel{equa-2933} 
\aligned
\notreH^\lambda[u]  
& = \omega_p^{-2} \, d\Hcal\bigl( \omega_p^2 d\Hcal^{*\flat\flat}[u] \bigr)
  = \lambda^{-2\expoP} r^{-n+2p} \,  (\del_i \del_j - \delta_{ij} \Delta) \Bigl( \lambda^{2\expoP} r^{n-2p} (\del_i \del_j u - \delta_{ij} \Delta u) \Bigr)
\\
& = \lambda^{-2\expoP} r^{-n+2p} \, \del_i \del_j \bigl( \lambda^{2\expoP} r^{n-2p} \del_i \del_j u \bigr)
+ (n-2) \lambda^{-2\expoP} r^{-n+2p} \, \Delta \bigl( \lambda^{2\expoP} r^{n-2p} \Delta u \bigr), 
\endaligned
\ee
in which $\omega_p = r^{n/2-p} \, \lambda^\expoP$. In order to analyze the harmonic decay, we rely on the decomposition stated now, which is checked by a routine calculation (details are given in \cite[Appendix B]{LL-PoincareKornHardy}).
We use here the notation $a_{n,p}$, $b_{n,p}$ and $c_{n,p}$ given in~\eqref{equa-our-parame-00}, with $c_{n,p}/a_{n,p}=(n-2)(n-2-a_{n,p}) + 1$.

\begin{lemma}[Harmonic-spherical decomposition of the localized Hamiltonian operator]
\label{lem:sph-Ham}
For every $u\in C^4(\Omega_R)$, the fourth-order squared localized Hamiltonian operator around Euclidean data (using $\vartheta\lambda=0$) enjoys the decomposition\footnote{The operator $\Arr$ does not depend upon $\lambda$.} 
\bse
\label{equa--488-2}
\be
\aligned
r^4 \notreH^\lambda[u] 
& = \Arr[u] + \Ars^\lambda[u] + \ssA^\lambda[u],
\endaligned
\ee
in which  
\bel{equa-62b} 
\aligned
\Arr[u] & \coloneqq (n-1) \vartheta (\vartheta+a_{n,p}) \bigl( \vartheta^2 + a_{n,p} \vartheta - b_{n,p} \bigr) u,
\\ 
\Ars^\lambda[u]
& \coloneqq
\lambda^{-2\expoP} (\vartheta+a_{n,p}) \Bigl(
2 \vartheta \nablaslash \cdot ( \lambda^{2\expoP} \nablaslash u)
+ (n-2) \vartheta\bigl( \lambda^{2\expoP}\Deltaslash u+ \Deltaslash( \lambda^{2\expoP} u) \bigr) \\
& \qquad\qquad\qquad\qquad\
+ \frac{c_{n,p}}{a_{n,p}} \bigl( \Deltaslash( \lambda^{2\expoP} u) -  \lambda^{2\expoP} \Deltaslash u\bigr)
\Bigr) ,
\\
\ssA^\lambda[u] & \coloneqq \lambda^{-2\expoP} \Bigl( (n-2) \, \Deltaslash ( \lambda^{2\expoP} \Deltaslash u) + \nablaslash^a\nablaslash^b\bigl(  \lambda^{2\expoP} \nablaslash_a\nablaslash_b u\bigr) \\
& \qquad\qquad
- 2 (a_{n,p}+1) \nablaslash \cdot (  \lambda^{2\expoP} \nablaslash u) - c_{n,p} \Deltaslash( \lambda^{2\expoP} u) \Bigr).
\endaligned
\ee
\ese
Here, $a,b$ are abstract Penrose indices on the unit sphere, and $\nablaslash$ denotes the Levi--Civita connection of the induced metric~$\gslash$ on the sphere. In particular, $\ssA^\lambda[\nu] = r^{4+a_{n,p}} \notreH^\lambda[\nu \, r^{-a_{n,p}}]$ for every angular function $\nu=\nu(\xh)$ or, equivalently, for any function satisfying $\vartheta\nu=0$.
\end{lemma}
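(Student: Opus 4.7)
The plan is to compute $r^{4}\notreH^{\lambda}[u]$ by expressing all Cartesian derivatives in the adapted radial/spherical frame and collecting terms according to the number of angular derivatives falling on $u$ versus on $\lambda^{2\expoP}$. Starting from
\be
\notreH^{\lambda}[u] = (n-2)\lambda^{-2\expoP}r^{-n+2p}\Delta(\lambda^{2\expoP}r^{n-2p}\Delta u) + \lambda^{-2\expoP}r^{-n+2p}\del_{i}\del_{j}(\lambda^{2\expoP}r^{n-2p}\del_{i}\del_{j}u),
\ee
I would use the decomposition $\del_{i}=r^{-1}(\xh_{i}\vartheta+\dslash_{i})$, together with the scale-invariance $\vartheta\lambda=0$, the identity $\vartheta r^{s}=s\, r^{s}$, and the fact that $\vartheta$ passes freely across $\lambda^{2\expoP}$ since $\lambda$ depends only on $\xh$. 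These reductions turn each radial-derivative factor into a polynomial in $\vartheta$ with numerical coefficients depending on $n,p$, while leaving the angular operators $\dslash_i$ and the weight $\lambda^{2\expoP}$ to be handled separately.

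For the Laplacian-squared term I would apply the polar identity $r^{2}\Delta=\vartheta(\vartheta+n-2)+\Deltaslash$ twice, sandwiching the weight $r^{n-2p}$ so that the outer $\vartheta$-polynomial is shifted by $(n-2p)$ when commuted across the weight. This yields contributions stratified by the number of $\Deltaslash$ factors acting on $u$: zero contributes to $\Arr$, one to $\Ars^{\lambda}$, and two to $\ssA^{\lambda}$. For the harder Hessian-squared term I would first expand $r^{2}\del_{i}\del_{j}u$ in the adapted frame as a purely radial piece $\xh_{i}\xh_{j}(\vartheta^{2}-\vartheta)u$, a mixed piece $2\xh_{(i}\dslash_{j)}(\vartheta-1)u$, and a purely tangential piece involving $\dslash_{i}\dslash_{j}u$ which must then be converted to the intrinsic Hessian $\nablaslash_{a}\nablaslash_{b}u$ (modulo $\xh$-dependent corrections coming from $[\dslash_{i},\dslash_{j}]$ and the Ricci of $S^{n-1}$, which equals $(n-2)\gslash$). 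Applying the outer $\del_{i}\del_{j}$ in the same adapted frame and repeatedly using the identity $\lambda^{-2\expoP}\del_{j}(\lambda^{2\expoP}\,\cdots) = \del_j(\cdots) + 2\expoP (\del_j\log\lambda)(\cdots)$ to gather weighted divergences of angular gradients of the form $\lambda^{-2\expoP}\nablaslash\cdot(\lambda^{2\expoP}\nablaslash u)$, one reorganizes the result as (polynomial in $\vartheta$)$\cdot\lambda^{-2\expoP}$(weighted $\nablaslash$-operator) acting on $u$, again stratified by angular-derivative order. The fundamental coefficients $a_{n,p}, b_{n,p}$ of \autoref{appendix=A} emerge by collecting these polynomial coefficients; in particular $b_{n,p}$ comes from the absorption of the $\xh_i\xh_j$-traced quadratic Hessian terms combined with the Laplacian contributions, while $a_{n,p}=2(n-2-p)$ appears as the shift by $(n-2p)$ twice combined with the outer scaling.

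Finally, the consistency identity for $\ssA^{\lambda}$ is verified as follows: on $\nu(\xh)r^{-a_{n,p}}$ the operator $\vartheta$ acts by multiplication by $-a_{n,p}$, so the factor $(\vartheta+a_{n,p})$ common to $\Arr$ and to the outermost layer of $\Ars^{\lambda}$ annihilates $\nu r^{-a_{n,p}}$. What remains is $\ssA^{\lambda}$ acting on the angular function $\nu$, after the scalar factor $r^{-a_{n,p}}$ is pulled through all the $\nablaslash$'s (which commute with multiplication by any power of $r$), matching the stated identity. The main obstacle is the bookkeeping in the Hessian-squared term: carefully tracking the commutators between $\vartheta$ and $\dslash_{i}$ when pulled past the weights, the distinction between ambient $\dslash_{i}\dslash_{j}$ and intrinsic $\nablaslash_{a}\nablaslash_{b}$, and the curvature corrections of $S^{n-1}$, so that the coefficients $a_{n,p},b_{n,p}$ in $\Arr$ combine cleanly with the factored outer $(\vartheta+a_{n,p})$ in $\Ars^{\lambda}$ and with the constant $(n-2)(n-2-a_{n,p})+1$ appearing in $\ssA^{\lambda}$. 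This detailed computation is precisely the content of \autoref{appendix=D}, to which the derivation is naturally delegated.
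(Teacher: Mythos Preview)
Your proposal is correct and follows essentially the same route as the paper's derivation in \autoref{appendix=D2}: split $\notreH^\lambda$ into the Laplacian-squared and Hessian-squared pieces, apply the polar identity $r^2\Delta=\vartheta(\vartheta+n-2)+\Deltaslash$ twice with the weight shift for the first, expand the Hessian via~\eqref{Hessf-1} for the second, and then recast the residual $\dslash_i\dslash_j(\lambda^{2\expoP}\dslash_i\dslash_j u)$ terms as the covariant double divergence $\nablaslash^a\nablaslash^b(\lambda^{2\expoP}\nablaslash_a\nablaslash_b u)$ using~\eqref{double-div}. Your identification of the $(\vartheta+a_{n,p})$ factor as the mechanism behind the last claim is also exactly how the paper organizes the decomposition.
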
  


\subsection{Construction of the silhouette function} 
\label{section=6.2}

\paragraph{Dimension of the kernel and cokernel.}

Our first task is to investigate the implications of the harmonic stability condition~\eqref{equa-stable-H-414} and, especially, explore whether the equation $\notreH^\lambda[u] = 0$ admits non-trivial solutions of the form $u=r^{-a_{n,p}}\nu$ in which $\nu=\nu(\xh)$ is~a function on the $(n-1)$-sphere.  In other words, we consider the operator $\ssA^\lambda[\nu] = r^{4+a_{n,p}} \notreH^\lambda[r^{-a_{n,p}}\nu]$ and establish the following result. 

\begin{proposition}[Kernel properties of the harmonic Hamiltonian operator]
\label{prop:Ham-kernel}
Consider a conical domain $\Omega_R = K \cap {}^{\complement} \Ball_R \subset \RR^n$ together with a localization function $\lambda\colon \Lambda \to (0, \lambda_0]$ with connected support $\Lambda \subset \Sphe^{n-1}$ and some ${\expoP \geq 2}$, such that the harmonic stability condition~\eqref{equa-stable-H-414} holds (cf.~\autoref{def-harmonic-Hstab}). Then the operator $\ssA^\lambda$ and its adjoint  
have one-dimensional kernels
\be
\ker(\ssA^{\lambda*}) = \RR \, 1,
\qquad 
\ker\ssA^\lambda = \RR \, \nu^\normal, 
\ee
consisting of constant functions, and of constant multiples of a (silhouette) function denoted by~$\nu^\normal$, respectively. Moreover, under the radial stability condition~\eqref{equa-b2-positive} (cf.~\autoref{def-radial-Hstab}), the average $\la - \Deltaslash \nu + d_{n,p} \nu \ra$ is non-vanishing for all (non-trivial) elements $\nu \in \ker\ssA^\lambda$, allowing the silhouette function to be normalized by the condition $\bigl\la - \Deltaslash\nu^\normal + d_{n,p} \nu^\normal \bigr\ra = \theta^\lambda$ in agreement with \autoref{def:normalized-kernel-basis}  ---the constant $\theta^\lambda$ being defined in~\eqref{equa-thetalambda}. 
\end{proposition}

 
\paragraph{Asymptotic variational formulation.}

A variant of the Lax--Milgram theorem (Banach--Ne\v{c}as--Babu\v{s}ka theorem) can be applied and provides us with a unique variational solution to the equation $\ssA^\lambda[\nu] = \varphi$ in the domain $\Lambda$, provided we restrict attention to data and solutions with vanishing average. More precisely, since the bilinear form $\ssrmA^\lambda[\nu,\mu]=\fint_{\Lambda}\mu\ssA^\lambda[\nu]d\chi$ is non-symmetric and admits a non-trivial kernel, we rely on a variant~\cite{Babuska}, as follows. 
We use here the normalized dual space $\unH^{2*}_{-\expoP}(\Lambda)$ defined after~\eqref{equa-norm-poids} and, for $\varphi\in\unH^{2*}_{-\expoP}(\Lambda)$ and
$\mu\in\unH^2_{-\expoP}(\Lambda)$,  we set
\be
\langle\varphi,\mu\rangle_\Lambda
\coloneqq
\aire[\Lambda,\lambda]^{-1}
\langle\varphi,\mu\rangle_{d\chi}. 
\ee
In particular, $\la\varphi\ra=\langle\varphi,1\rangle_\Lambda$.

\begin{lemma}[Variational formulation for the asymptotic localized Hamiltonian]
\label{propo-existenceH}
Consider a conical domain $\Omega_R = K \cap {}^{\complement} \Ball_R \subset \RR^n$ together with a localization function $\lambda\colon \Lambda \to (0, \lambda_0]$ with connected support $\Lambda \subset \Sphe^{n-1}$ and some ${\expoP \geq 2}$, such that the harmonic stability condition in \autoref{def-harmonic-Hstab} holds. Then, for any $\varphi \in \unH^{2*}_{-\expoP}(\Lambda)$ satisfying $\la \varphi \ra= 0$, there exists a unique solution $\nu \in \unH^2_{-\expoP}(\Lambda)$ with a vanishing average $\la\nu\ra= 0$ to  
\bse
\bel{Aweak-mu} 
\ssrmA^\lambda[\nu,\mu]
= \langle\varphi,\mu\rangle_\Lambda,
\qquad \mu \in \unH^2_{-\expoP}(\Lambda),
\ee 
which, moreover, is bounded in terms of the data, that is, 
\bel{nuvarphi}
\|\nu\|_{\unH^2_{-\expoP}(\Lambda)}\lesssim\|\varphi\|_{\unH^{2*}_{-\expoP}(\Lambda)}.
\ee
\ese
\end{lemma}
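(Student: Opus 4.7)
The statement is a non-symmetric existence-uniqueness result, so I would set it up as an application of the Banach-Ne\v{c}as-Babu\v{s}ka (BNB) theorem rather than the classical Lax--Milgram theorem. Introduce the closed, codimension-one subspaces of $H^2_{- \expoP}(\Lambda)$
\[
V \coloneqq \bigl\{\nu \in H^2_{- \expoP}(\Lambda) \,:\, \la \Deltaslash\nu - d_{n,p}\nu \ra = 0\bigr\},
\qquad
W \coloneqq \bigl\{\mu \in H^2_{- \expoP}(\Lambda) \,:\, \la \mu \ra = 0\bigr\},
\]
to serve as trial and test spaces, respectively. The two key structural facts I will use throughout are the duality $(\ssA^{\lambda})^* = \ssA^{\lambda,0}$ from~\eqref{ssAalpha-dual}, which gives in particular $\ssA^{\lambda,0}[1] = 0$ (so that constants lie in $\ker\ssA^{\lambda,0}$), and the Fredholm picture of~\autoref{prop:Ham-kernel}, i.e.\ $\ker\ssA^{\lambda} = \RR\nu^\normal$, $\coker\ssA^{\lambda} = \RR \cdot 1$, together with $\la \Deltaslash\nu^\normal - d_{n,p}\nu^\normal\ra = \zeta^{\lambda} \neq 0$.

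Continuity of $\ssrmA^{\lambda}$ on $V \times W$ is read off from the explicit quadratic form~\eqref{ssAalpha-quaform} with $\alpha = a_{n,p}$, and the right-hand side $\mu \mapsto \fint_\Lambda \varphi \, \mu \, d\chi$ is continuous on $W$ since $\varphi \in L^2_{- \expoP}(\Lambda)$. To obtain the inf-sup condition, given $\nu \in V$ I would test against $\mu = \nu - \la \nu \ra$, which lies in $W$ and satisfies $\|\mu\|_{H^2_{- \expoP}} \lesssim \|\nu\|_{H^2_{- \expoP}}$. Using the duality $\ssrmA^\lambda[\nu,1] = \fint \nu \, \ssA^{\lambda,0}[1]\, d\chi = 0$, one finds $\ssrmA^\lambda[\nu,\mu] = \ssrmA^\lambda[\nu,\nu]$, and the harmonic stability condition~\eqref{equa-stable-H-414} applied at the harmonic exponent $\alpha = a_{n,p}$ (so that $D_{n,p}^{a_{n,p}} = d_{n,p}$) is exactly the coercivity $\ssrmA^\lambda[\nu,\nu] \gtrsim \|\nu\|_{H^2_{-\expoP}(\Lambda)}^2$ on $V$. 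This delivers the inf-sup lower bound.

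For the complementary non-degeneracy condition of BNB, suppose $\mu \in W$ satisfies $\ssrmA^\lambda[\nu,\mu] = 0$ for every $\nu \in V$. For an arbitrary $\nu \in H^2_{- \expoP}(\Lambda)$ I would split
\[
\nu = \widetilde\nu + c\, \nu^\normal, \qquad c \coloneqq \zeta^\lambda{}^{-1} \la \Deltaslash\nu - d_{n,p}\nu \ra,
\]
so that $\widetilde\nu \in V$; since $\nu^\normal \in \ker\ssA^{\lambda}$ we have $\ssrmA^\lambda[\nu^\normal,\mu] = 0$, hence $\ssrmA^\lambda[\nu,\mu] = 0$ for \emph{all} $\nu \in H^2_{- \expoP}(\Lambda)$. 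Using the duality once more, this is equivalent to $\ssA^{\lambda,0}[\mu] = 0$, so $\mu \in \ker\ssA^{\lambda,0} = \coker\ssA^{\lambda} = \RR \cdot 1$; combined with $\la \mu \ra = 0$, this forces $\mu = 0$. Both BNB conditions being in place, the theorem yields existence and uniqueness of $\nu \in V$ with $\ssrmA^\lambda[\nu,\mu] = \fint_\Lambda \varphi\,\mu\,d\chi$ for all $\mu \in W$, which is the claim.

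The main obstacle I expect is purely bookkeeping rather than conceptual: one must make sure that the inclusion $V \hookrightarrow H^2_{- \expoP}(\Lambda)$ is compatible with the coercivity statement~\eqref{equa-stable-H-414}, i.e.\ that the affine constraint $\la \Deltaslash\nu - d_{n,p}\nu\ra = 0$ used to cut out $V$ is precisely the one assumed in Definition~\ref{def-asymptotic-Hstab} at $\alpha = a_{n,p}$ (this is clear since $D_{n,p}^{a_{n,p}} = d_{n,p}$), and one must verify carefully that $\ssrmA^{\lambda}[\nu,1]$ vanishes genuinely ---this is where the duality relation~\eqref{ssAalpha-dual}, rather than a direct integration by parts on $\ssrmA^{\lambda}$ itself, is used crucially because the bilinear form is not symmetric.
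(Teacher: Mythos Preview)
Your proof is correct and follows the same overall route as the paper: both apply the Babu\v{s}ka variant of Lax--Milgram on the trial/test pair $(V,W)$ and check the inf-sup condition by testing $\nu\in V$ against $\mu=\nu-\la\nu\ra$, using $\ssrmA^{\lambda}[\nu,1]=0$ and the harmonic stability condition~\eqref{equa-stable-H-414} at $\alpha=a_{n,p}$ for coercivity.

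The one genuine difference is in the non-degeneracy condition. The paper argues directly: given $\mu\in W\setminus\{0\}$, it sets $\nu=\mu-c_1$ with $c_1=-\la\Deltaslash\mu\ra/d_{n,p}$, so that $\nu\in V$ and $\ssrmA^{\lambda}[\nu,\mu]=\ssrmA^{\lambda}[\nu,\nu]>0$ (again via $\ssrmA^{\lambda}[\nu,1]=0$ and coercivity). Your route is instead to show that any $\mu\in W$ annihilating $V$ must annihilate all of $H^2_{-\expoP}$ (by splitting off a multiple of $\nu^\normal\in\ker\ssA^{\lambda}$), whence $\mu\in\ker(\ssA^{\lambda})^*=\coker\ssA^{\lambda}=\RR\cdot 1$, forcing $\mu=0$. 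Both are short and valid; the paper's version is slightly more self-contained (it does not invoke the full Fredholm picture of \autoref{prop:Ham-kernel}), while yours makes the role of the kernel/cokernel structure more transparent.
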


\begin{proof}
\bse
We work in the space $W \coloneqq \{ \nu \in \unH^2_{-\expoP}(\Lambda), \, \la \nu \ra = 0 \}$ of functions with a vanishing average, endowed with its induced norm~$\|{\,\cdot\,}\|_W$.
The bilinear form $\ssrmA^\lambda$ is (bounded and) weakly coercive on~$W$ in the sense that, for some $c>0$, 
\be
\text{(i)}\,\, \sup_{\| \mu \|_W= 1} \ssrmA^\lambda[\nu, \mu] \geq c \, \|\nu\|_W, \quad \nu \in W, 
\qquad\quad
\text{(ii)}\,\, \sup_{\| \nu \|_W = 1} \ssrmA^\lambda[\nu, \mu] \geq c \, \|\mu\|_W , \quad \mu \in W.
\ee
Indeed, (i)~holds since $\mu = \nu/\|\nu\|_W$ has unit norm and obeys $\ssrmA^\lambda[\nu, \mu] = \ssrmA^\lambda[\nu, \nu] / \|\nu\|_W \gtrsim \| \nu \|_W$ by the harmonic stability condition.
The condition~(ii) is established likewise by picking $\nu = \mu/\|\mu\|_W$.
The linear form $\varphi\in \unH^{2*}_{-\expoP}(\Lambda)$ is bounded hence its restriction $\varphi|_W\in W'$ also is, with $\bigl\|\varphi|_W\bigr\|_{W'}\leq\|\varphi\|_{\unH^{2*}_{-\expoP}(\Lambda)}$.
By applying the Lax--Milgram theorem to $\varphi|_W$, there exists a unique $\nu\in W$ such that $\ssrmA^\lambda[\nu,\,\cdot\,]|_W=\varphi|_W$, namely such that~\eqref{Aweak-mu} holds for all $\mu\in W$.
To extend~\eqref{Aweak-mu} to all $\mu$ it suffices to show it for $\mu=1$.
By assumption, $\langle\varphi,1\rangle_\Lambda = \la\varphi\ra=0$, 
and on the other hand, $\lambda^{2\expoP}\ssA^\lambda[\nu]$ is a divergence, by inspection of~\eqref{equa-62b}, and therefore has a vanishing integral
\be
\ssrmA^\lambda[\nu,1] = \fint_\Lambda \ssA^\lambda[\nu]d\chi = \fint_\Lambda \lambda^{-2\expoP} \nablaslash\cdot(\text{vector field})\, d\chi = 0 .
\ee
This establishes~\eqref{Aweak-mu}.
Finally, the bound $\|\nu\|_W\lesssim \bigl\|\varphi|_W\bigr\|_{W'}$ provided by the Lax--Milgram theorem yields~\eqref{nuvarphi} as
\be
\|\nu\|_{\unH^2_{-\expoP}(\Lambda)} = \|\nu\|_W \lesssim \bigl\|\varphi|_W\bigr\|_{W'} \leq \|\varphi\|_{\unH^{2*}_{-\expoP}(\Lambda)} .
\qedhere
\ee
\ese
\end{proof} 

 
\paragraph{Proof of \autoref{prop:Ham-kernel}.}

{\it 1. Kernel dimension.} 
In view of the identity $\la \ssA^\lambda[\nu] \ra = \fint_{\Lambda} \nu \ssA^{\lambda *}[1] d\chi = 0$ for all $\nu\in \unH^2_{-\expoP}(\Lambda)$, 
we see that the image of the operator $\ssA^\lambda$ is contained in the subspace of distributions $\varphi \in \unH^{2*}_{-\expoP}(\Lambda)$ satisfying $\la\varphi\ra=0$. Moreover, by \autoref{propo-existenceH} the image is exactly equal to that subspace, so that $\ker\ssA^{\lambda*} = \RR \, 1$. On the other hand, consider next an element $\nu\in\ker\ssA^\lambda$ that belongs to the hyperplane $\la\nu\ra=0$. By the harmonic stability condition~\eqref{equa-stable-H-414} we have $\| \nu \|_{\unH^2_{-\expoP}(\Lambda)}^2 \lesssim \ssrmA^\lambda[\nu,\nu]$, which vanishes since $\nu$ is in the kernel of $\ssA^\lambda$. Thus, $\nu$ vanishes. We deduce that the linear map $\nu\mapsto\la\nu\ra$ from $\ker\ssA^\lambda$ to~$\RR$ is injective, hence that the kernel is {\sl at most} one-dimensional. To construct a non-trivial element of the kernel, consider the unique solution (by \autoref{propo-existenceH}) of $\ssA^\lambda[\nu] = \ssA^\lambda[1]$ satisfying $\la\nu\ra = 0$. By construction, $1-\nu\in\ker\ssA^\lambda$ and $\la 1-\nu\ra = 1$.

\medskip

\noindent{\it 2. Normalization.} The radial stability condition states that the product of $\bnotreH_{0}$ and~$\bnotreH_{1}$ is positive, where these two constants are linear combinations of $\la\Deltaslash\nu^\normal\ra$ and $\la\nu^\normal\ra$.  This positivity is of course independent of the normalization of $\nu^\normal$, so that it can be expressed in terms of any element $\nu=\gamma\nu^\normal$ ($\gamma\neq 0$) of the kernel as, with $\nut\coloneqq\nu-\la\nu\ra$, 
\be
\aligned
0 < \gamma^2 \bnotreH_{1} \bnotreH_{0}
= (n^2-4n+5) \frac{c_{n,p}}{a_{n,p}} \bigl\la (n-1) c_{n,p} \nu - (n-2)^2 \Deltaslash\nut \bigr\ra \bigl\la - \Deltaslash\nut + d_{n,p} \nu\bigr\ra .
\endaligned
\ee
In particular the last factor is non-vanishing, which concludes the proof of \autoref{prop:Ham-kernel}.

 
\paragraph{Normalization for the kernel.}

Finally, we show that our normalization leads to the following property. 

\begin{lemma}[ADM energy of the metric modulator]
\label{lem-ADMenergymod} 
The modulated metric\footnote{Note that $\gmodu$ is not positive-definite; it is strictly speaking a difference of metrics, and correspondingly $\mbb(\Omega_R, \gmodu)$ is a relative ADM energy in the sense of \autoref{def:relative-ADM}.} defined in $\Omega_R \subset \RR^n$ 
\[
(\gmodu)_{ij} \coloneqq \lambda^{2\expoP} r^{n-2p} \bigl(\del_i \del_j \umodu - \delta_{ij} \Delta \umodu \bigr), 
\qquad 
\umodu \coloneqq \mmodu \, {\nu^\normal(x/r) \over r^{a_{n,p}}}, 
\]
has ADM energy $\mbb(\Omega_R, \gmodu) = \mmodu$.
\end{lemma}

\begin{proof}
\bse
We begin with a preliminary observation. To any function $u$, we associate the vector field 
\be
V^\lambda_j[u] \coloneqq \del_i(g^\lambda_{ij}[u]) - \del_j (g^\lambda_{ii}[u]) ,
\qquad
g^\lambda_{ij}[u]=\lambda^{2\expoP}r^{n-2p}(\del_i\del_j u - \delta_{ij}\Delta u)
\ee
whose divergence is the fourth-order Hamiltonian operator
\bel{divV-notreH}
\del_j V_j^\lambda[u] = \lambda^{2\expoP}r^{n-2p} \notreH^\lambda[u] .
\ee
Up to numerical factors, the ADM energy we seek is the integral of $r^{n-1}\xh_j V_j^\lambda[\umodu]$ measuring the flux of $V_j^\lambda[\umodu]$ through a sphere of radius~$r$, which is independent of~$r$ because its expression is homogeneous of degree~$0$ under dilations.

For now, we continue with a general function~$u$. Decomposing the divergence into a radial and an angular parts yields
\bel{intrH-Vr}
\int_{\Lambda} r^4 \notreH^\lambda[u] d\chi
= r^{-a_{n,p}} \int_{\Lambda} \bigl( r \del_j V_j^\lambda[u] \bigr) r^{n-1} d\xh
= (\vartheta+a_{n,p}) \biggl(r^{-a_{n,p}}\int_{\Lambda} \xh_j V_j^\lambda[u] r^{n-1} d\xh\biggr) .
\ee 
Next, we note that the scalar differential operator $u\mapsto r^{n-1-a_{n,p}}\xh_j V_j^\lambda[u]$ commutes with $\vartheta$ because of its homogeneity in~$r$ and in derivatives.
Thus, the right-hand side is equal to the same expression with $(\vartheta+a_{n,p})$ acting on the argument~$u$ of $V_j^\lambda$ instead of the whole expression.
Up to an overall power of~$r$, the resulting integral is the flux of interest to us provided $(\vartheta+a_{n,p})u = \umodu$.

The choice $u=\umodu\log r$ obeys, for any polynomial~$\polyP$ with constant coefficients,
\bel{Pvarthetau-umodu}
\polyP(\vartheta) u
= (\log r) \polyP(\vartheta) \umodu + \polyP'(\vartheta) \umodu
= (\log r) \polyP(-a_{n,p}) \umodu + \polyP'(-a_{n,p}) \umodu, 
\ee
where $\polyP'$ denotes the derivative of the polynomial.  In particular, $(\vartheta+a_{n,p})u = \umodu$ as desired.
To finish the proof, we evaluate the left-hand side of~\eqref{intrH-Vr} in terms of $\Arr$, $\Ars^\lambda$, $\ssA^\lambda$ given in~\eqref{equa--488-2}.
Because we integrate with the measure $d\chi$, and $\lambda^{2\expoP}\ssA^\lambda[u]$ is a divergence, its contribution vanishes.
The contribution of $\Ars^\lambda[u]$ simplifies for the same reason.
Using~\eqref{Pvarthetau-umodu} for $\Arr[u]$ and the remaining part of~$\Ars^\lambda[u]$ eliminates all radial derivatives.
We arrive at a formula for the ADM energy, where $C_n \coloneqq (n-1) \, |\Sphe^{n-1}|$,
\bel{equa-form-energy} 
\aligned
2 C_n \, \mbb(\Omega_R, \gmodu)  
& = r^{n-1} \int_{\Lambda} \xh_j V_j^\lambda[\umodu] \, d\xh
= r^{a_{n,p}} (\vartheta+a_{n,p}) \biggl( r^{n-1-a_{n,p}} \int_{\Lambda} \xh_j V_j^\lambda[u] \, d\xh \biggr) 
\\
& = r^{a_{n,p}} \int_{\Lambda} \Bigl( (n-1)a_{n,p}b_{n,p} \umodu - \Bigl((n-2) a_{n,p} + \frac{c_{n,p}}{a_{n,p}}\Bigr) \Deltaslash \umodu \Bigr) \, d\chi
\\
& = \mmodu \aire[\Lambda,\lambda] \Bigl( (n-1) a_{n,p} b_{n,p} \la\nu^\normal\ra - (n^2 -4n+5) \la\Deltaslash\nu^\normal\ra \Bigr)
\\
& = \mmodu \aire[\Lambda,\lambda] (n^2 -4n+5) \theta^\lambda = 2 (n-1) |\Sphe^{n-1}| \mmodu ,
\endaligned
\ee
where the latter expression arises in view of our normalization in~\eqref{equa-norm-normal} and~\eqref{equa-thetalambda}.
\ese
\end{proof} 


\subsection{Radial evolution of spherical averages}
\label{section=6.3}

\paragraph{Assumptions.}

We now explain the derivation of the second-order differential equation~\eqref{equa-defb-b-b} (in~\eqref{equa-defb-b-b-bis}, below) and establish a sharp decay estimate for the spherical averages of a solution (in \autoref{prop-66}, below). As pointed out in \autoref{section=3.2}, in contrast to the isotropic case $(\Lambda,\lambda)=(\Sphe^{n-1},1)$, the evolution equation satisfied by the average $\la u(r) \ra$ also involves the average $\la \Deltaslash u(r) \ra$ and, therefore, cannot be solved independently of the fluctuations of the solution. Furthermore, the equation satisfied by $\la \Deltaslash u(r) \ra$ itself involves further weighted averages, so that there exists no closed system of differential equations for the averages. In view of this structure, we analyze the system satisfied by the above two averages while allowing for a source term denoted by $\Kappa^\notreH[u]$, eventually controlled by the shell dissipation and radial Hardy inequalities.

We consider the equation $\notreH^\lambda[u] = E$ satisfied by a solution $u:\Omega_R \to \RR$ for some given source term $E:\Omega_R \to \RR$. Since a basic decay arises from our variational formulation and for the sake of generality in our presentation, at this stage we only assume that the source and solution decay at least like $r^{-a_{n,p}/2}$ in a (dual) Sobolev sense\footnote{The regularity of $\vartheta u$ actually implies an $H^{2*}_{n+2-p,-\expoP}$ regularity of $\vartheta E$, but only away from $r=R$ in a suitable sense.  This information on $\vartheta E$ is not necessary in this section.}, 
\bel{equa-varia-bound}
\|E\|_{H^{2*}_{n+2-p,-\expoP}(\Omega_R)} + \|u\|_{H^2_{n-2-p,-\expoP}(\Omega_R)} + \|\vartheta u\|_{H^2_{n-2-p,-\expoP}(\Omega_R)}
< +\infty,
\ee
where the dual Sobolev space $H^{2*}_{n+2-p,-\expoP}(\Omega_R)$ is defined in~\eqref{H2star-def}.
These assumptions suppress certain constant or growing modes in the following.

The radial differential equations in this section are to be understood in the sense of distributions $\Dcal'((R,+\infty))$, 
namely with test functions being compactly supported in~$(R,+\infty)$ and, therefore, vanishing in a neighborhood of the boundary $r=R$.  As explained in \appref{appendix=F.3}, this ensures that $\vartheta=r\del_r$ defined in $\Dcal'((R,+\infty))$ extends the usual derivative on functions, with no boundary term at $r=R$.
The unknown and source terms in these equations are (restrictions to $(R,+\infty)$ of) elements of suitable dual Sobolev spaces with variational decay $H^{k*}_{n-2-p}([R,+\infty))$ (for some integer $k\geq 0$), which are defined in~\eqref{Hkstar-def}.  This regularity allows for an explicit integration in \autoref{prop:integrate-ODE-distrib}. All angular integrations by parts in this section are justified by Proposition~\ref{prop-density-H2}: since $\expoP\geq2$, smooth functions supported away from the lateral boundary are dense in the relevant weighted $H^2$~space.


\paragraph{Contracting with an element of the co-kernel.}

We integrate the equation $\notreH^\lambda[u] = E$, multiplied by either $1$ or $\nut^\normal=\nu^\normal- \la\nu^\normal\ra$, on each sphere of radius $r \geq R$. We find it convenient to  decompose a solution $u \coloneqq \la u \ra  + \ut$ into its average and its fluctuations on the sphere $\Sphe_r$. Integrating with the weight~$1$ and using $\la\ssA^\lambda[u] \ra=0$, we find ($\iota$ being suppressed throughout)
\bel{equa-average-one-H}
\compresseq{.89}
\aligned
r^4 \la E \ra = \Bigl\la\Arr[u] + \Ars^\lambda[u] + \ssA^\lambda[u] \Bigr\ra 
& = (n-1) \vartheta (\vartheta+a_{n,p}) \bigl( \vartheta^2 + a_{n,p} \vartheta - b_{n,p} \bigr) \la u \ra 
\\
& \quad
 + (\vartheta+a_{n,p}) \bigl((n-2) \vartheta+(n-2)a_{n,p} - n^2 + 4n - 5 \bigr) \la \Deltaslash \ut \ra,
\endaligned
\ee 
which can be regarded as a fourth-order relation for $\la u \ra$ but also involves  $\la \Deltaslash \ut \ra$. After two radial integrations, it reduces to the second-order equation used below once the variationally excluded homogeneous modes have been eliminated.

\begin{remark}
The variational formulation~\eqref{equa-variH} with $w=r^{a_{n,p}}\varphi(r)$ provides a meaning to~\eqref{equa-average-one-H} in $H^{2*}_{n-2-p}([R,+\infty))$.  For instance, in the notation~\eqref{vartheta-star-def}, the highest-order radial derivative is $\vartheta_*^2 \vartheta^2\la u\ra$, defined by its duality bracket $\la\vartheta_*^2\vartheta^2\la u\ra,\varphi\ra=\int_R^{+\infty}\vartheta^2\la u\ra\vartheta^2\varphi dr/r$ for $\varphi\in H^2_{-n+2+p}([R,+\infty))$.  Compared to $\vartheta^4\la u\ra$ (defined for smooth enough~$\la u\ra$) this differs by boundary terms.  To avoid such boundary terms in later calculations, we restrict to $\Dcal'((R,+\infty))$, namely impose that $\varphi$ has compact support in $(R,+\infty)$.
\end{remark}


\paragraph{Contracting with an element of the kernel.}
 
Next, we proceed similarly with the weight $\nut^\normal=\nu^\normal- \la\nu^\normal\ra$, and deduce an expression of the adjoint operator $\ssA^\lambda$. Specifically, we compute 
\[
\aligned
\fint_{\Lambda} \nut^\normal \ssA^\lambda[u] \, d\chi
& = \fint_{\Lambda} u \, \ssA^{\lambda *}\bigl[\nu^\normal- \la\nu^\normal\ra\bigr] \, d\chi
= \fint_{\Lambda} u \, \Bigl( \ssA^{\lambda *}- \ssA^\lambda \Bigr)[\nu^\normal] \, d\chi
\\
& = c_{n,p} \Bigl(
\la\nu^\normal\ra \la\Deltaslash\ut\ra - \la\Deltaslash\nut^\normal\ra \la u\ra
+ \fint_{\Lambda} \Bigl( \nut^\normal \Deltaslash\ut - \ut \, \Deltaslash\nut^\normal\Bigr) d\chi
\Bigr). 
\endaligned
\]
Hence, we have (with $\Kappa^\notreH$ defined next)
\[
\aligned
r^4 \fint_{\Lambda} \nut^\normal \, E d\chi
& = \fint_{\Lambda} \nut^\normal \Bigl(\Arr[u] + \Ars^\lambda[u] + \ssA^\lambda[u] \Bigr) d\chi
\\
& = c_{n,p} \la\nu^\normal\ra \la\Deltaslash\ut\ra
+ \la\Deltaslash\nut^\normal\ra \bigl((n-2) \vartheta + (n-2)^2+1\bigr) \vartheta \la u\ra
+ \vartheta\Kappa^\notreH[\ut] .
\endaligned
\]

\begin{definition}\label{def:KappaH}
The {\bf fluctuation operator} $\Kappa^\notreH$ is a linear functional of the fluctuation~$\ut$ and its derivatives given by
\be
\aligned
\Kappa^\notreH[\ut]  
& \coloneqq \fint_{\Lambda} \Bigl(
(n-1)(\vartheta+a_{n,p}) \bigl(\vartheta^2+a_{n,p}\vartheta-b_{n,p}\bigr) \nut^\normal \ut
\\[-1ex]
& \qquad\quad + (\vartheta+a_{n,p})
\bigl(
- 2 \, \nablaslash \ut \cdot \nablaslash \nut^\normal
+ (n-2) ( \nut^\normal \Deltaslash \ut + \ut \, \Deltaslash\nut^\normal)
\bigr)
\\[-.5ex]
& \qquad\quad + \bigl( (n-2)(n-2-a_{n,p}) + 1\bigr) \bigl( \ut \, \Deltaslash\nut^\normal - \nut^\normal \, \Deltaslash\ut\bigr) \Bigr) d\chi. 
\endaligned
\ee
\end{definition}

In turn, the desired equation for the average of the Laplacian can be cast in the form (cf. \autoref{rem-foot194}) 
\bel{equa-Delta--u}
\aligned
\la\Deltaslash\ut\ra
= \frac{1}{c_{n,p} \la\nu^\normal\ra} \biggl( r^4 \fint_{\Lambda} \nut^\normal \, E d\chi - \vartheta\Kappa^\notreH[\ut] \biggr)
- \frac{\la\Deltaslash\nut^\normal\ra}{c_{n,p} \la\nu^\normal\ra} \bigl((n-2) \vartheta + (n-2)^2+1\bigr) \vartheta \la u\ra, 
\endaligned
\ee
whose right-hand side involves up to two radial derivatives of $\la u\ra$, and four derivatives of~$\ut$.


\paragraph{Fourth-order equation for the average.}

Inserting (radial derivatives of) the expression~\eqref{equa-Delta--u} of $\la\Deltaslash\ut\ra$ into the equation~\eqref{equa-average-one-H} for $\la u\ra$ yields a fourth-order differential equation\footnote{The restriction to $(R,+\infty)$ becomes unavoidable here, as~\eqref{equa-calcul-equa} involves radial derivatives beyond the variational formulation.} in $\Dcal'((R,+\infty))$, namely  
\bse\label{equa-612}
\bel{equa-calcul-equa}
 - \vartheta (\vartheta+a_{n,p}) \Bigl( Q_2(\vartheta) \la u \ra
+ \bigl( (n-2) \vartheta - {c_{n,p} / a_{n,p}} \bigr) \Kappa^\notreH[\ut] 
\Bigr) = \Nuh^\notreH[E] .
\ee
Here, we introduced the second-order operator 
\bel{equa-Ptwo-0} 
\aligned
Q_2(\vartheta)
& \coloneqq - \bnotreH_{1} \, (\vartheta^2 + a_{n,p} \vartheta) +  \bnotreH_{0}, 
\\
\bnotreH_{1}& \coloneqq (n-1) c_{n,p}\la\nu^\normal\ra - (n-2)^2\la\Deltaslash\nut^\normal\ra, 
\\
\bnotreH_{0} &\coloneqq (n-1)b_{n,p} c_{n,p} \la\nu^\normal\ra - (n^2 -4n+5){c_{n,p} \over a_{n,p}} \la\Deltaslash\nut^\normal\ra, 
\endaligned
\ee 
together with the following contribution from the source term 
\bel{equa-la-sourceH}
\aligned
\Nuh^\notreH[E]
& \coloneqq c_{n,p}\la\nu^\normal\ra r^4 \la E \ra
- (\vartheta+a_{n,p}) \Bigl((n-2) \vartheta - {c_{n,p} / a_{n,p}} \Bigr) \biggl( r^4 \fint_{\Lambda} \nut^\normal \, E d\chi \biggr) .
\endaligned
\ee
\ese


\paragraph{Control of spherical averages.}

The operator $Q_2(\vartheta) = - \bnotreH_{1} (\vartheta+\beta_-) (\vartheta+\beta_+)$ admits characteristic exponents
\bel{betapm-def-text}
\beta_{\pm} = a_{n,p} / 2 \pm \sqrt{a_{n,p}^2 / 4 + \bnotreH_{0} / \bnotreH_{1}}
\ee
that obey $\beta_++\beta_-=a_{n,p}$.
Provided the radial stability condition $\bnotreH_0 \bnotreH_1 > 0$ holds,
the exponents $\beta_{\pm}$ are real and {\it outside} the interval $[0,a_{n,p}]$ and we order them so that $\beta_-<0<a_{n,p}<\beta_+$.
The exponents $\beta_-$ and $\beta_+$ play a key role in our analysis, and correspond to a {\it growing} mode (excluded by variational bounds) and a {\it super-harmonic} mode, respectively.
By integrating radially the differential equation~\eqref{equa-612}, we establish next that $\la u\ra$ is given by suitable radial integrals of $\Kappa^\notreH[\ut]$, up to error terms involving the source term~$E$.

To express the radial integrals we introduce the solution operators
$I_\beta$ and $J_\beta$ acting on functions $f:[R,+\infty) \to\RR$ in the radial variable and any exponent $\beta\in\RR$:
\bel{IJdef-first}
I_\beta[f](r) \coloneqq r^{-\beta} \int_R^r f(s) s^\beta \frac{ds}{s},
\qquad
J_\beta[f](r) \coloneqq r^{-\beta} \int_r^{+\infty} f(s) s^\beta \frac{ds}{s} 
\ee
(under obvious integrability conditions).  These solution operators enjoy various positivity and continuity properties stated and proven in \autoref{appendix=F}. In particular, as stated in~\eqref{IJprop} we have $(\vartheta+\beta)I_\beta[f] = f$ as well as $(\vartheta+\beta)J_\beta[f] = -f$.
They are also defined for $f$ in suitable dual Sobolev spaces in \appref{appendix=F.3}.

\begin{proposition}[Spherical averages associated with the localized Hamiltonian operator]
\label{prop-66}
\bse
In the setup of \autoref{thm-sharp-h-localized-vari} (variational formulation), assume that the localization domain $(\Lambda,d\chi)$ obeys the harmonic and radial Hamiltonian stability conditions in \refwithname{Definitions}{def-harmonic-Hstab} \refwithname{and}{def-radial-Hstab} (but not necessarily shell stability), that the variational solution $u \in H^2_{n-2-p, -\expoP}(\Omega_R)$ obeys the additional decay
\[
\vartheta u \in H^2_{n-2-p,-\expoP}(\Omega_R) ,
\]
and that the source~$E$ obeys the decay~\eqref{equa-hyo-EEE} for some $\astar\in[a_{n,p}/2,\beta_+)$, namely $E,\vartheta_*E\in H^{2*}_{\astar+4,-\expoP}(\Omega_R)$ and $\mmodu(E)$ is well-defined if $\astar\geq a_{n,p}$.
Then there exist constants $\Chu,C_+^u\in\RR$ such that the averages over shells~$\Lambda_r$ satisfy\footnote{All terms are functions in $L^2_{n-2-p}([R,+\infty))$ or more regular.}
\bel{equa-main-u-aver} 
\aligned
\la u(r)\ra
& = \Chu r^{-a_{n,p}} + C_+^u r^{-\beta_+} + c_- J_{\beta_-}\bigl[\Kappa^\notreH[\ut]\bigr](r) + c_+ I_{\beta_+}\bigl[\Kappa^\notreH[\ut]\bigr](r) + \Theta^E(r) ,
\endaligned
\ee
with a source term $\Theta^E$ given as integrals of the source~$E$ in \eqref{ThetaE-def} and~\eqref{equa-hereOmegaHcompact}, below.
Moreover,
\be
|\Chu| + |C_+^u|
\lesssim \|E\|_{H^{2*}_{\astar+4,-\expoP}(\Omega_R)} + \cutoff_{\pstar} \mmax(E) + \|\vartheta u\|_{H^2_{n-2-p,-\expoP}(\Omega_R)} ,
\ee
where $\cutoff_{\pstar}$ is $1$ for $\astar\geq a_{n,p}$ and otherwise vanishes, and $\mmax$ is defined in~\eqref{mmodu-def}, and
one can decompose $\Theta^E = \Theta^E_1+\Theta^E_2$ with
\bel{estimp65}
\|\Theta^E_1\|_{L^2_{\astar}([R,+\infty))} \lesssim \|E\|_{H^{2*}_{\astar+4,-\expoP}(\Omega_R)} ,
\qquad
\Theta^E_2 = \begin{cases}
  0 & \text{if } \astar\neq a_{n,p} , \\
  \la\nu^\normal\ra \mmax(E) r^{-a_{n,p}} o(1) & \text{if } \astar=a_{n,p}
\end{cases}
\ee
where $o(1)$~is a function of radius that tends to zero.
\ese
\end{proposition}

We first establish an intermediate result obtained by integrating the fourth-order differential equation~\eqref{equa-612} twice to obtain a lower-order equation that was announced in~\eqref{equa-defb-b-b}.

\begin{lemma}
\bse
In the setup of \autoref{prop-66},
the averages $\la u\ra$ satisfy the second-order differential equation
\bel{equa-defb-b-b-bis}
- \bnotreH_{1} (\vartheta + \beta_-) (\vartheta+\beta_+) \, \la u\ra 
= \bigl( -(n-2) \vartheta +{c_{n,p} / a_{n,p}} \bigr) \Kappa^\notreH[\ut]
+ \Nu^\notreH[E] + C_1^u r^{-a_{n,p}}
\ee
in $\Dcal'((R,+\infty))$,
for some constant~$C_1^u$, where 
the operator $\Kappa^\notreH[\ut] $ is defined in \autoref{def:KappaH}, the constants $\bnotreH_{1}, \beta_-, \beta_+$  are given in~\eqref{equa-Ptwo-0-struct}, and
\bel{equa-hereOmegaHcompact}
\Nu^\notreH[E]
= (n-2) \la \nut^\normal \, r^4 E \ra + \frac{c_{n,p}}{a_{n,p}} J_0\bigl[\la \nu^\normal \, r^4 E \ra \bigr]
+ \frac{c_{n,p}}{a_{n,p}} \la\nu^\normal\ra I_{a_{n,p}} \bigl[ \la r^4 E \ra \bigr].
\ee
\ese
\end{lemma}

\begin{proof}
The equation~\eqref{equa-612} reads $- \vartheta (\vartheta + a_{n,p}) \barg_2 = \barg_0 + (\vartheta+a_{n,p}) \barg_1$ with $\barg_i=g_i|_{(R,+\infty)}\in\Dcal'((R,+\infty))$ and
\be
\aligned
g_2 & \coloneqq Q_2(\vartheta) \la u \ra + \bigl( (n-2) \vartheta_* - {c_{n,p} / a_{n,p}} \bigr) \Kappa^\notreH[\ut] - (n-2) r^4 \la \nut^\normal \, E \ra ,
\\
g_1 & \coloneqq {c_{n,p} \over a_{n,p}} \, r^4 \la \nut^\normal \, E \ra ,
\qquad
g_0 \coloneqq c_{n,p}\la\nu^\normal\ra \, r^4 \la E \ra .
\endaligned
\ee
Here, $\vartheta_*\colon L^2_{n-2-p}([R,+\infty))\to H^{1*}_{n-2-p}([R,+\infty))$ is the derivative operator defined in~\eqref{vartheta-star-def}, which reduces to the usual distributional derivative $\vartheta=r\del_r$ on the interval $(R,+\infty)$ of interest.
Writing $\vartheta_*$ rather than restricting to $(R,+\infty)$ before taking the derivative ensures that each $g_i\in H^{2*}_{n-2-p,-\expoP}(\Omega_R)$.
As stated in~\eqref{equa-formuleODE}, the general solution with this decay does not include a homogeneous $r^0$ term, and reads
\[
g_2(r) = \frac{1}{a_{n,p}} J_0[g_0+a_{n,p}g_1](r) + \frac{1}{a_{n,p}} I_{a_{n,p}}[g_0](r) + C_1^u r^{-a_{n,p}} \quad \text{in } (R,+\infty)
\]
for some constant~$C_1^u$ that depends on boundary conditions on~$\barg_2$.
This coincides with the desired second-order equation~\eqref{equa-defb-b-b-bis} on~$\la u\ra$ thanks to $\la\nut^\normal E\ra+\la\nu^\normal\ra\la E\ra=\la\nu^\normal E\ra$.
\end{proof}

\paragraph{Proof of \autoref{prop-66}}
\bse
{\it 1. Effect of fluctuations.}
The general solution~\eqref{equa-formuleODE} of~\eqref{equa-defb-b-b-bis} with variational decay reads, on the interval $(R,+\infty)$,
\be
\aligned
\la u\ra & = C_+^u r^{-\beta_+} + J_{\beta_-}[h_-] + I_{\beta_+}[h_+] ,
\\
h_{\pm} \, & \! \coloneqq
\frac{1}{(\beta_+-\beta_-)\bnotreH_{1}} \Bigl( \bigl( (n-2) \beta_{\pm} + c_{n,p} / a_{n,p} \bigr) \Kappa^\notreH[\ut]
+ \Nu^\notreH[E] + C_1^u r^{-a_{n,p}} \Bigr)
\endaligned
\ee
for some constant~$C_+^u$, where the absence of homogeneous term $r^{-\beta_-}$ is due to the variational decay (and to $\beta_-<a_{n,p}/2$).
Splitting $\Nu^\notreH[E] + C_1^u r^{-a_{n,p}} = (\Nu^\notreH[E] - \ChNu r^{-a_{n,p}}) + C_2^u r^{-a_{n,p}}$ for some constant~$\ChNu$ chosen below, and
separating out the various contributions, one gets
\bel{equa-main-u-average} 
\aligned
\la u(r)\ra
& = \Chu r^{-a_{n,p}} + C_+^u r^{-\beta_+} + c_- J_{\beta_-}\bigl[\Kappa^\notreH[\ut]\bigr](r) + c_+ I_{\beta_+}\bigl[\Kappa^\notreH[\ut]\bigr](r) + \Theta^E(r) ,
\endaligned
\ee
with $\Chu$ given by a combination of exponents times~$C_2^u$, a source contribution~$\Theta^E$ explicited below, and the constants
\bel{equa-cpcm}
c_\pm \coloneqq {1 \over \bnotreH_{1} } {(n-2) \beta_{\pm} + {c_{n,p} / a_{n,p}} \over \beta_+ - \beta_-} .
\ee
The formula shows that any bound of the form $|\Kappa^\notreH[\ut]|\leq C \, r^{-\gamma}$ for $\beta_-<\gamma<\beta_+$ translates to a similar bound $|\la u \ra-\Chu r^{-a_{n,p}}|\leq C' r^{-\gamma}$, provided that the source term~$\Theta^E$ obeys this bound.
\ese

\medskip

\bse
\noindent{\it 2. Source term.}
Up to a homogeneous solution $\ChE r^{-a_{n,p}}$ that we choose freely to include, the contribution to~\eqref{equa-main-u-average} from the source in~\eqref{equa-hereOmegaHcompact} is
\bel{ThetaE-def}
\aligned
\Theta^E
& = {1 \over \bnotreH_{1}(\beta_+ - \beta_-)} \Bigl( J_{\beta_-}\bigl[\Nu^\notreH[E]\bigr] + I_{\beta_+}\bigl[\Nu^\notreH[E] \bigr] \Bigr)
- \ChE r^{-a_{n,p}} ,
\qquad
\\
\ChE & = \cutoff_{\pstar} \la\nu^\normal\ra \mmodu(E) r^{-a_{n,p}} ,
\endaligned
\ee
which we now bound in a suitable weighted Lebesgue space.
Consider first $\Nu^\notreH[E]$.
The angular integrals of~$r^4 E$ that appear in its expression~\eqref{equa-hereOmegaHcompact} are bounded in $H^{2*}_{\astar}([R,+\infty))$ by the $H^{2*}_{\astar+4,-\expoP}(\Omega_R)$ norm of~$E$.

In the sub-harmonic regime $\astar<a_{n,p}$, \autoref{lem:Ibeta-norm} then bounds the $J_0$ and $I_{a_{n,p}}$ terms in~\eqref{equa-hereOmegaHcompact} in the norm $H^{1*}_{\astar}([R,+\infty))$, hence we have a bound in the weaker norm (because of the term $\la\nut^\normal r^4 E\ra$ that is not integrated radially)
\be
\|\Nu^\notreH[E]\|_{H^{2*}_{\astar}([R,+\infty))} \lesssim \|E\|_{H^{2*}_{\astar+4,-\expoP}(\Omega_R)} ,
\qquad \astar < a_{n,p} ,
\ee
where the implicit constant depends on the localization domain $(\Lambda,d\chi)$ and exponents $\astar$ and~$a_{n,p}$.
In the super-harmonic regime $\astar>a_{n,p}$, the term $I_{a_{n,p}}[\la r^4E\ra]$ does not decay with radial exponent~$\astar$.  We use the identity~\eqref{Ibeta-plus-Jbeta}, whose terms are in $H^{1*}_\gamma([R,+\infty))$ for some irrelevant $\gamma<a_{n,p}$,
\bel{Ianp-to-Janp}
I_{a_{n,p}}[\la r^4E\ra]
= \frac{2(n-1)|\Sphe^{n-1}| \mmodu(E)}{\aire[\Lambda,\lambda]} r^{-a_{n,p}} - J_{a_{n,p}}[\la r^4 E\ra] ,
\ee
to obtain, with $\ChNu= 2(n-1)|\Sphe^{n-1}| (c_{n,p}/a_{n,p})\la\nu^\normal\ra \mmodu(E)/\aire[\Lambda,\lambda]$,
\bel{ChNu}
\bigl\| \Nu^\notreH[E] - \ChNu r^{-a_{n,p}} \bigr\|_{H^{2*}_{\astar}([R,+\infty))}
\lesssim \|E\|_{H^{2*}_{\astar+4,-\expoP}(\Omega_R)} ,
\qquad a_{n,p} < \astar .
\ee
Then, by \autoref{lem:Ibeta-norm}, an expression of the form $J_{\beta_0}[f] + I_{\beta_1}[f]$ ``gains'' two derivatives compared to~$f$.
Taking into account the harmonic term that we subtracted from $\Nu^\notreH[E]$, we get (for $\astar\neq a_{n,p}$)
\bel{ThetaE-nonharmonic}
\aligned
& \biggl\| \Theta^E
+ \cutoff_{\pstar} \la\nu^\normal\ra \mmodu(E) \Bigl(1 - \frac{2(n-1)|\Sphe^{n-1}| c_{n,p}}{a_{n,p}\aire[\Lambda,\lambda] \bnotreH_{1} (a_{n,p} - \beta_-) (\beta_+ - a_{n,p})} \Bigr) r^{-a_{n,p}}
\biggr\|_{L^2_{\astar}([R,+\infty))}
\\
& \quad \lesssim \|E\|_{H^{2*}_{\astar+4,-\expoP}(\Omega_R)} ,
\endaligned
\ee
with implicit constants depending on the geometry and on exponents $a_{n,p},\astar,\beta_{\pm}$.
To eliminate the harmonic term above, we use that $\beta_+ + \beta_- = a_{n,p}$ and $\beta_- \beta_+ = - \bnotreH_{0}/\bnotreH_{1}$ (cf.~\eqref{betapm-def-text}), together with expressions \eqref{equa-thetalambda} and~\eqref{bH10} of $\theta^\lambda$ and~$\bnotreH_{0}$ to see that
\be
\frac{2(n-1)|\Sphe^{n-1}| c_{n,p}}{a_{n,p}\aire[\Lambda,\lambda] \bnotreH_{1} (a_{n,p} - \beta_-) (\beta_+ - a_{n,p})}
= \frac{2(n-1)|\Sphe^{n-1}| c_{n,p}}{a_{n,p}\aire[\Lambda,\lambda] \bnotreH_{0}}
= \frac{\theta^\lambda}{d_{n,p}  - \la\Deltaslash\nut^\normal\ra} = 1 ,
\ee
where the last simplification arises thanks to the normalization $d_{n,p} \la\nu^\normal\ra - \la\Deltaslash\nut^\normal\ra = \theta^\lambda$ in \autoref{def:normalized-kernel-basis}.
\ese

\medskip

\bse
\noindent{\it 3. Source term in the harmonic regime.}
Finally, for $\astar=a_{n,p}$, we begin with the straightforward bound
\be
\Bigl\| \Nu^\notreH[E] - \frac{c_{n,p}}{a_{n,p}} \la\nu^\normal\ra I_{a_{n,p}} \bigl[ \la r^4 E \ra \bigr] \Bigr\|_{H^{2*}_{a_{n,p}}([R,+\infty))}
\lesssim \|E\|_{H^{2*}_{a_{n,p}+4,-\expoP}(\Omega_R)} ,
\qquad \astar = a_{n,p} .
\ee
The decay assumption~\eqref{equa-hyo-EEE} 
(see also \autoref{rem:decay-reg}) states that
 the integral $r^{a_{n,p}}I_{a_{n,p}}[\la r^4E\ra]$ on the interval $[R,r]$ is bounded pointwise in~$r$, and tends to $2(n-1)|\Sphe^{n-1}|\mmodu(E) / \aire[\Lambda,\lambda]$ as $r\to+\infty$.
Thus, the same identity~\eqref{Ianp-to-Janp} holds now as a pointwise identity, and $r^{a_{n,p}} J_{a_{n,p}}[\la r^4 E\ra]$ tends to zero.  In addition to tending to zero, it is bounded by the truncated energy $\mmax(E)$ defined in~\eqref{mmodu-def},
\bel{raJarE}
\bigl| r^{a_{n,p}} J_{a_{n,p}}[\la r^4 E\ra] \bigr|
= \biggl| \int_r^{+\infty} \la s^4 E\ra s^{a_{n,p}-1} ds \biggr|
\leq \frac{2(n-1)|\Sphe^{n-1}|}{\aire[\Lambda,\lambda]} \mmax(E) .
\ee
Thus, $\Nu^\notreH[E]$ is the sum of a distribution in $H^{2*}_{a_{n,p}}([R,+\infty))$ and a harmonic term~$r^{-a_{n,p}}$ whose contributions~$\Theta^E_1$ to~$\Theta^E$ are treated as in the non-harmonic case~\eqref{ThetaE-nonharmonic}, and of a term $J_{a_{n,p}}[\la r^4 E\ra]$ that is (pointwise) $o(r^{-a_{n,p}})$.
By \autoref{lem:appF-pointwise}, its contribution to~$\Theta^E$ (which we denote as~$\Theta^E_2$) is bounded in the same way:
\be
\Theta^E_2 = \frac{1}{\bnotreH_{1}(\beta_+ - \beta_-)} (J_{\beta_-}+I_{\beta_+})\bigl[J_{a_{n,p}}[\la r^4 E\ra]\bigr] = o(1) r^{-a_{n,p}} .
\ee
In addition, taking into account the upper bound~\eqref{raJarE} shows that the $o(1)$ factor here is bounded by $\mmax(E)$ up to factors depending on the geometry and exponents.
\ese

\medskip

\noindent{\it 4. Homogeneous terms.}
Finally, we consider the $L^2_{n-2-p}([R,+\infty))$ norm (with the variational radial exponent) of every term in~\eqref{equa-main-u-average}.  The source term $\Theta^E$ is bounded by $\mmax(E)$ and the weighted $H^{2*}$ norm of~$E$.  The average $\la u\ra$, as well as contributions of $\Kappa^\notreH[\ut]$ are bounded by weighted $H^2$ norms of $u$ and~$\vartheta u$.  In addition, the weighted $H^2$ norm of~$u$ is controlled by the weighted $H^{2*}$ norm of the source~$E$ thanks to the variational existence theorem.  Altogether,
\be
\bigl\|\Chu r^{-a_{n,p}} + C_+^u r^{-\beta_+} \bigr\|_{L^2_{n-2-p}([R,+\infty))}
\lesssim \|E\|_{H^{2*}_{\astar+4,-\expoP}(\Omega_R)} + \cutoff_{\pstar} \mmax(E) + \|\vartheta u\|_{H^2_{n-2-p,-\expoP}(\Omega_R)} ,
\ee
which immediately gives a bound on $|\Chu|$ and~$|C_+^u|$ separately. This completes the proof of~\autoref{prop-66}.


\begin{remark}
\label {rem-foot194} 
The additional regularity $\vartheta u\in H^2_{n-2-p,-\expoP}(\Omega_R)$ in~\eqref{equa-varia-bound} ensures that \autoref{def:KappaH} defines a function $\Kappa^\notreH[\ut]\in L^2_{n-2-p}([R,+\infty))$, which is such that~\eqref{equa-Delta--u} holds in the sense of distributions on $(R,+\infty)$.  All terms of~\eqref{equa-Delta--u} are meaningful in $H^{2*}_{n-2-p}([R,+\infty))$, provided $\vartheta\Kappa^\notreH[\ut]$ is understood as $\vartheta_*\Kappa^\notreH[\ut]$, but the identity~\eqref{equa-Delta--u} fails due to boundary terms.  Indeed, the variational formulation with $w=r^{a_{n,p}}\varphi(r)\nut^\normal$ yields a similar identity but with some radial derivatives $\vartheta$ in \autoref{def:KappaH} replaced by the distributional derivative $\vartheta_*$, in effect adding to $\Kappa^\notreH[\ut]$ a Dirac distribution at $r=R$, which belongs to $H^{1*}_{n-2-p}([R,+\infty))$.
\end{remark}


\section{Integral estimates for the Hamiltonian (\autoref{thm:decayL2})}
\label{section=7}

\subsection{Preliminaries for the integral estimates}
\label{section=7.1}

\paragraph{Range of super-harmonic exponents.}

We now construct the positive super-harmonic margin~$\deltaH$ announced in \autoref{thm:decayL2}, and then formulate the Hamiltonian shell identity in the sense of distributions.
In \autoref{section=7.2} we prove \autoref{thm:decayL2} by solving this shell identity and exhibiting suitable positivity properties, modulo a contribution of shell averages, which are controlled in \autoref{section=7.3}.

The range of sharp decay exponents $\astar\in [a_{n,p}/2,a_{n,p}+\deltaH)$ in  \autoref{thm:decayL2}
for which integral estimates can be obtained is limited by details of the localization domain, such as the implicit constants in the Hamiltonian shell stability condition~\eqref{equa-last-twoH}.
Recall that the exponents $\beta_{\pm}$ are defined in~\eqref{betapm-def-text} in terms of the silhouette function.
We recall the constants~$c_\pm$ in~\eqref{equa-cpcm} and introduce the functions $g_J,g_I\colon(\beta_-,\beta_+)\to(0,+\infty)$,
\bel{equa-cplusmoins}
\aligned
c_\pm & \coloneqq {1 \over \bnotreH_{1}} {(n-2) \beta_{\pm} + c_{n,p}/a_{n,p} \over \beta_+ - \beta_-} ,
\\
g_J(\bstar) & \coloneqq \frac{16 c_+^2}{(2 \beta_+ - a_{n,p})^2} + \frac{8 c_-^2}{(a_{n,p} - 2\beta_-)^2} + \frac{2 c_-^2}{(a_{n,p}-2\beta_-) (\bstar-\beta_-)} ,
\\
g_I(\bstar) & \coloneqq \frac{4 c_-^2}{(\bstar-\beta_-)^2} + \frac{2 c_+^2}{(\beta_+-\bstar)^2} + \frac{2 c_+^2 }{(2 \beta_+ - a_{n,p})(\beta_+ - \bstar)}.
\endaligned
\ee
These expressions arise in \autoref{lem:gnotreH} as a result of radial Hardy inequalities.
The values $g^\notreH_I=g_I(a_{n,p})$ and $g^\notreH_J=g_J(a_{n,p})$ are particularly important (cf.~\eqref{equa--517-repeat}) since we define the Hardy constant appearing in the Hamiltonian shell stability condition to be $\cradialH=\max(g^\notreH_I,g^\notreH_J)$, which depends on $n$, $p$, $\la\nu^\normal\ra$ and $\la\Deltaslash\nu^\normal\ra$.

\bse
Constructing this margin~$\deltaH$ relies on three bounds, whose importance will be explained in due course.
First, as stated in~\eqref{equa-conditionH}, the shell functional~$\Phi^\notreH[v]$ and dissipation~$\Psi^\notreH[v]$, hence the radial integration functional $\Upsilon^\notreH[v]=\frac{1}{a_{n,p}}(\Psi^\notreH_{a_{n,p}}[v]-\Psi^\notreH_{2a_{n,p}}[v])$, are controlled by suitable norms of~$v$,
with radius-independent constants $\gamma_\Phi,\gamma_\Psi>0$ chosen so that
\bel{dl2-Phi}
\aligned
\bigl| \Phi^\notreH[v]  \bigr| 
& \leq \gamma_\Phi \bigl(\norm{v}^\notreH\bigr)^2 , \qquad
\\ 
\bigl| \Psi^\notreH_{a_{n,p}}[v]  \bigr| + \bigl| \Psi^\notreH_{2a_{n,p}}[v]  \bigr| 
& \leq \gamma_\Psi \bigl(\norm{v}^\notreH\bigr)^2 + \gamma_\Psi \bigl(\norm{\vartheta v}^\notreH\bigr)^2 ,
\\
\bigl| \Upsilon^\notreH[v]  \bigr| 
& \leq \frac{\gamma_\Psi}{a_{n,p}} \bigl(\norm{v}^\notreH\bigr)^2 + \frac{\gamma_\Psi}{a_{n,p}} \bigl(\norm{\vartheta v}^\notreH\bigr)^2
\endaligned
\ee
on every spherical shell $\Lambda_r$, $r\geq R$ and every scalar field $v\colon\Omega_R\to\RR$.
The chosen constants $\gamma_\Phi,\gamma_\Psi>0$ 
depend only on $n,p,\cstun,\cstdeux$; thus their possible dependence on the localization geometry occurs solely through the fixed admissible choice of~$\cstun,\cstdeux$. 

Second, the explicit expression of the fluctuation operator $\Kappa^\notreH[\vt]$ in \autoref{def:KappaH} yields a bound with a constant $\gamma_\Kappa>0$, depending on $p$ and on the geometry through the silhouette function~$\nu^\normal$, such that
\be
\bigl| \Kappa^\notreH[\vt]  \bigr|^2 \leq \gamma_\Kappa \bigl(\norm{v}^\notreH \bigr)^2 + \gamma_\Kappa \bigl(\norm{\vartheta v}^\notreH\bigr)^2
\ee
fon every $\Lambda_r$. 
Third, the shell stability condition~\eqref{equa-conditionH2} for $\beta=a_{n,p}$ and $\beta=2a_{n,p}$ states that for some $\gammash^0>0$,
\bel{dl2-shell}
\min\bigl(\Psi^\notreH_{a_{n,p}}[v], \Psi^\notreH_{2a_{n,p}}[v]\bigr)
\geq - C \, \la v\ra^2 + C \, \cradialH \bigl( \Kappa^\notreH[\vt] \bigr)^2
+ \gammash^0 \bigl( \norm{\vartheta v}^\notreH \bigr)^2 + \gammash^0 \bigl( \norm{v}^\notreH \bigr)^2 .
\ee
\ese
\bse
Then (for purposes that will become clear as we proceed) we define the continuous function
\bel{gammaPsipbstar}
\aligned
\gammash(\bstar)
= \gammash^0 & - \max\bigl(0, g_J(\bstar)-\cradialH, g_I(\bstar)-\cradialH\bigr) C\gamma_\Kappa 
\\
& - |2\bstar-2a_{n,p}| \Bigl( (2\bstar-a_{n,p})\gamma_\Phi+\frac{\gamma_\Psi}{a_{n,p}} \Bigr)
\endaligned
\ee
for $\bstar\in(a_{n,p}/2,\beta_+)$.
It is positive at $\bstar=a_{n,p}$ since $\gammash(a_{n,p})=\gammash^0>0$, and we consider the largest $0 < \deltaH \leq\beta_+-a_{n,p}$ and $0<\delta^-\leq a_{n,p}/2$ such that it remains positive on $(a_{n,p}- \delta^-,a_{n,p}+\deltaH)$, that is,
\bel{delta-interval}
\gammash(\bstar)>0 \text{ for all } \bstar\in (a_{n,p}- \delta^-,a_{n,p}+\deltaH) \subset (a_{n,p}/2,\beta_+) .
\ee
Since $g_I(\bstar)$ given in~\eqref{equa-cplusmoins} diverges as $\bstar\to\beta_+$, we learn that $\beta_+ - (a_{n,p}+ \deltaH)$ is bounded below by a positive function of $\gamma_\Kappa,\gamma_\Psi,\gammash^0,C$ and of the exponents $n,p,\beta_{\pm}$.  In particular, $1/(\beta_+-\bstar)$ is uniformly bounded above.
\ese


\paragraph{The Hamiltonian shell identity in the sense of distributions.}

Our main tool in this section is the Hamiltonian shell identity~\eqref{main-func-identity},
\bel{shell-again}
- (\vartheta + a_{n,p}) (\vartheta + 2a_{n,p}) \Phi^\notreH[u]
= - \Chi^\notreH[u] + \Mu^\notreH[u,E] .
\ee
This identity is straightforward to derive for smooth $E$ and~$u$ (cf.~\cite{LL-PoincareKornHardy} for explicit calculations), while we are interested here in a source term $E,\vartheta_* E\in H^{2*}_{n+2-p,-\expoP}(\Omega_R)$ and solution $u,\vartheta u\in H^2_{n-2-p,-\expoP}(\Omega_R)$.
The key step in deriving the shell identity is to multiply the equation $\notreH^\lambda[u]=E$ by $(\vartheta^2 u+a_{n,p}\vartheta u-\cstun u)$ and integrate on a spherical shell~$\Lambda_r$.
In our low-regularity setting, these manipulations can be performed in the sense of distributions on the \emph{open interval} $(R,+\infty)$ as follows (cf.~\appref{appendix=F.3}).

Consider a radial regularization such as $u_\eps(r,\,{\cdot}\,) = \frac{1}{\eps} \int_r^{r+\eps} u(s,\,{\cdot}\,) ds$ for $\eps>0$, with $u_\eps\to u$ and $\vartheta u_\eps\to\vartheta u$ in $H^2_{n-2-p,-\expoP}(\Omega_R)$.
Fix some smooth radial function $\varphi=\varphi(r)$ with compact support in $(R,+\infty)$ (and in particular with $\varphi$ vanishing in a neighborhood of $r=R$).
Given the regularity $\vartheta u\in H^2_{n-2-p,-\expoP}(\Omega_R)$ and the identity $\vartheta u_\eps=r(u(r+\eps) - u(r))/\eps$, one has a convenient regularization of $(\vartheta^2 u+a_{n,p}\vartheta u-\cstun u)$,
\bse
\be
w_\eps \coloneqq r^{-a_{n,p}} \varphi(\vartheta^2 u_\eps+a_{n,p}\vartheta u_\eps-\cstun u_\eps) \in H^2_{n-2-p,-\expoP}(\Omega_R) .
\ee
The variational formulation~\eqref{equa-variH} (cf.~also \eqref{eq-521}) then makes sense for $w=w_\eps$.
One gets the following identity, where derivatives in $\notreH^\lambda[u]$ are understood in the sense of distributions (with no boundary term since $\varphi$ has compact support in~$r$),
\be
0 = \int_R^{+\infty} \fint_{\Lambda_r} \varphi(\vartheta^2 u_\eps+a_{n,p}\vartheta u_\eps-\cstun u_\eps) \bigl( r^4 \notreH^\lambda[u] - r^4 E\bigr) d\chi \, r^{-1} dr .
\ee
\ese
Here, the integral denotes the variational duality pairing with the admissible test field~$w_\eps$, rather than a pointwise product of a distribution with~$w_\eps$. 
When expressed in the variational formulation, the contribution of $\notreH^\lambda[u]$ involves terms $\vartheta^{j_0}\varphi\vartheta^{j_1}\nablaslash^{k_1}u_\eps\vartheta^{j_2}\nablaslash^{k_2}u$ with $j_1+k_1\leq 3$, $j_2+k_2\leq 3$, $k_1\leq 2$, $k_2\leq 2$, and $j_0\leq 1$.  Hence, in the $\eps\to 0$ limit, $(\vartheta^2 u_\eps+a_{n,p}\vartheta u_\eps-\cstun u_\eps)\notreH^\lambda[u]$ converges.  It gives rise to the $\Phi^\notreH$ and $\Chi^\notreH$ terms in~\eqref{shell-again}.
On the other hand, the contribution of~$E$ can be written as $\Mu^\notreH[u_\eps,E]$ in terms of the bilinear operator $\Mu^\notreH$ defined by
\bse\label{MuvE-def}
\be
\la \Mu^\notreH[v,E] , \varphi \ra
\coloneqq \frac{1}{(n-1)\aire[\Lambda,\lambda]} \Bigl(
\bigl\la \vartheta_*(r^4 E), \varphi \vartheta v\bigr\ra
+ \bigl\la r^4 E , (\vartheta v\vartheta\varphi + \cstun v \varphi -a_{n,p}\varphi\vartheta v )\bigr\ra
\Bigr), 
\ee
where $\la\,{\cdot}\,,\,{\cdot}\,\ra$ is the duality pairing of $H^{2*}_{n+2-p,-\expoP}(\Omega_R)$ with $H^2_{-n-2+p,-\expoP}(\Omega_R)$, based on the measure $\lambda^{2\expoP} r^{-n}dx=d\chi\,dr/r$ on~$\Omega_R$.
This definition extends to $\varphi$ with suitably bounded derivatives, thanks to the bound
\bel{MuvEphiW3}
\bigl| \la \Mu^\notreH[v,E] , \varphi \ra \bigr|
\lesssim \|E,\vartheta_* E\|_{H^{2*}_{\astar+4,-\expoP}(\Omega_R)} \|v,\vartheta v\|_{H^2_{n-2-p,-\expoP}(\Omega_R)} \|\varphi\|_{W^{3,\infty}_{-a_{n,p}/2-\astar}(\Omega_R)}
\ee
\ese
involving a Sobolev space whose functions obey $|\vartheta^k\varphi|\lesssim r^{\astar+a_{n,p}/2}$ for $0\leq k\leq 3$.
A consequence of~\eqref{MuvEphiW3} is that $\Mu^\notreH[u_\eps,E]\to\Mu^\notreH[u,E]$ in the sense of distributions, as $\eps\to 0$.
This convergence justifies in our present low-regularity setting the manipulations used to derive the shell identity~\eqref{shell-again}.

 
\subsection{Derivation of the localized integral estimates}
\label{section=7.2}

\paragraph{Strategy.}

We now prove \autoref{thm:decayL2} in multiple steps.
After subtracting a harmonic term from~$u$, we integrate the shell identity to express $\Phi^\notreH[u]$ as the sum of a source term~$\Omega^\Mu$ and of radial integrals of the functionals~$\Psi^\notreH_\beta[u]$ (and less important terms).  The dissipation due to $\Psi^\notreH_\beta[u]$ has a favorable sign, except for a term $\la u\ra^2$ whose effect we control later on in \autoref{lem:gnotreH} (\autoref{section=7.3}).  The source~$\Omega^\Mu$ is controlled in a mollified sense by the source~$E$ (which has fast decay) and the solution itself.
This leads to a control of weighted integrals of $(\norm{u}^\notreH+\norm{\vartheta u}^\notreH)^2$ in a mollified sense, which is equivalent to controlling the integrals themselves, and by integration a control of $\norm{u}^\notreH$ pointwise in~$r$, both strictly better than the power law~$r^{-2\astar}$ as stated in the theorem.  Finally, there remains to identify the harmonic term subtracted from~$u$ with the expected energy term involving~$\mmodu(E)$, and this concludes the proof.

To begin, given the sharp decay exponent $\astar\in[a_{n,p}/2,a_{n,p}+\deltaH)$, we select a larger exponent $\bstar$ within the interval~\eqref{delta-interval} and in the same regime (sub-harmonic or not sub-harmonic) as~$\astar$, namely in the interval $\bigl(\max(a_{n,p}-\delta^-,\astar),a_{n,p}+\cutoff_{\pstar}\deltaH\bigr)$.
In detail, we select $\bstar$ such that $a_{n,p}/2\leq\astar < \bstar < a_{n,p}$ and $a_{n,p}-\delta^- < \bstar$, or such that $a_{n,p} \leq \astar < \bstar < a_{n,p} + \deltaH$.


\paragraph{Subtracting the harmonic contribution.}
At the outset, let us subtract from~$u$ a harmonic term that will eventually be identified with~$\umodu$ when $\astar\geq a_{n,p}$, and consider the unknown
\be
v = u - \frac{1}{\la\nu^\normal\ra} \Chu r^{-a_{n,p}} \nu^\normal .
\ee
At this stage, $\Chu$ is an arbitrary constant.  It arises as a homogeneous term when solving the differential equation obeyed by $\la u\ra$ in \autoref{section=6.3}, and appears as a result in the estimates of certain integrals in \autoref{section=7.3} which we use in the course of our analysis of the shell identity in the present section.  Eventually, equipped with estimates of the solution~$v$ we will identify $\Chu=\mmodu(E)\la\nu^\normal\ra$.
The new unknown obeys $v,\vartheta v\in H^2_{n-2-p,-\expoP}(\Omega_R)$, and since $\nu^\normal\in\ker\ssA^\lambda$, the unknown~$v$ solves $\notreH^\lambda[v]=\notreH^\lambda[u]=E$ with boundary conditions that differ from the Neumann boundary conditions used in the variational formulation~\eqref{equa-opera5}.
It is also straightforward to check that $\Kappa^\notreH[\vt]=\Kappa^\notreH[\ut]$ from the explicit expression of this functional (\autoref{def:KappaH}).
Our main tool to proceed is the shell identity applied to~$v$, combined with the Hamiltonian shell stability condition on~$\lambda$.

\bse\label{Integrationoftheshellidentity}
\paragraph{Integration of the shell identity.}
Our first ingredient for the proof is the shell identity~\eqref{main-func-identity} which we write with critical exponents $\{-a_{n,p},-2\bstar\}$ instead of $\{-a_{n,p},-2a_{n,p}\}$,
\bel{vtvtPhi}
\aligned
& - (\vartheta + a_{n,p}) (\vartheta + 2\bstar) \Phi^\notreH[v]
\\
& = - \Chi^\notreH[v]
+ \Mu^\notreH[v,E]
- (2\bstar - 2a_{n,p}) (\vartheta + a_{n,p}) \Phi^\notreH[v] ,
\quad \text{on } (R,+\infty) ,
\endaligned
\ee
in the sense of distributions (cf.~\appref{appendix=F.3}) on the interval $(R,+\infty)$.
Integrating this equation is subtle.
Firstly, $\Mu^\notreH[v,E]$ was only defined in the sense of distributions in~\eqref{MuvE-def} above.  In \eqref{OmegaMu-def}, below, we construct a distribution $\Omega^\Mu\in\Dcal'((R,+\infty))$ that solves
\bel{abstract-def-OmegaMu}
-(\vartheta+a_{n,p})(\vartheta+2\bstar)\Omega^\Mu = \Mu^\notreH[v,E]
\ee
and we analyze its radial decay.
Secondly, $\Chi^\notreH$~involves fourth derivatives of~$u$, which are only defined in the sense of distributions in a suitable dual Sobolev space.
This functional can be rewritten in terms of the dissipation functionals $\Psi^\notreH_\beta[v] = \Chi^\notreH[v] - (\vartheta+\beta) \Upsilon^\notreH[v]$, which are well-defined functions in $L^1_{a_{n,p}}([R,+\infty))$ under our variational decay and regularity assumptions.\footnote{For linear or quadratic functionals, variational terms are $r^{-a_{n,p}/2}$ and $r^{-a_{n,p}}$, respectively, while harmonic terms are $r^{-a_{n,p}}$ or $r^{-2a_{n,p}}$, respectively.}
With these considerations, the shell identity~\eqref{vtvtPhi} admits as its general solution on $(R,+\infty)$ the distribution (with the notation $I_\beta[f](r)$ and $J_\beta[f](r)$ in~\eqref{IJdef-first}) for some constants $C_0,\Cstar\in\RR$: 
\bel{equa-shell-here}
\aligned
\Phi^\notreH[v] = {} & \Omega^\Phi + \Omega^\Psi + \Omega^\Mu + C_0 r^{-a_{n,p}} + \Cstar r^{-2\bstar} ,
\\
\Omega^\Phi \coloneqq {} & (2\bstar-2a_{n,p}) I_{2\bstar}\bigl[\Phi^\notreH[v]\bigr] , &
\Omega^\Psi & \coloneqq - \frac{J_{a_{n,p}}\bigl[\Psi^\notreH_{a_{n,p}}[v]\bigr] + I_{2\bstar}\bigl[\Psi^\notreH_{2\bstar}[v]\bigr]}{2\bstar-a_{n,p}}
\endaligned
\ee
This representation involves the integral operators $J_{a_{n,p}}$ and $I_{2\bstar}$ applied to $L^1_{a_{n,p}}([R,+\infty))$ functions.
Using the identities~\eqref{IJprop}, that is, $(\vartheta+\beta)I_\beta[f](r) = - (\vartheta+\beta)J_\beta[f](r) = f(r)$, one checks that it is a solution; it is the most general one because the kernel of $(\vartheta+a_{n,p})(\vartheta+2\bstar)$ (applied to distributions) consists of the homogeneous solutions $C_0 r^{-a_{n,p}}+\Cstar r^{-2\bstar}$ only.
\ese

\bse
In view of the term $\Omega^\Psi$ and the fact that $\Psi$, up to a $\la u\ra^2$ contribution, controls the norms $(\norm{v,\vartheta v}^\notreH)^2$, it is natural to introduce the dissipation functional
\bel{equa-Dcal}
\aligned
\Dcal(r) \, & \! \coloneqq
J_{a_{n,p}} \Bigl[ (\norm{v}^\notreH)^2 + (\norm{\vartheta v}^\notreH)^2 \Bigr](r)
+ I_{2\bstar} \Bigl[ (\norm{v}^\notreH)^2 + (\norm{\vartheta v}^\notreH)^2 \Bigr](r)
\\
& = \int_R^{+\infty} \min\Bigl( (s/r)^{a_{n,p}} , (s/r)^{2\bstar} \Bigr) \Bigl( \bigl(\norm{v}^\notreH(s)\bigr)^2 + \bigl(\norm{\vartheta v}^\notreH(s)\bigr)^2 \Bigr) \frac{ds}{s} .
\endaligned
\ee
The kernel in~\eqref{equa-Dcal} is monotone in its second argument, more precisely
\be
2^{-2\bstar}\Dcal(r)\leq\Dcal(r')\leq\Dcal(r),
\qquad r\leq r'\leq2r.
\ee

The next few steps of the proof consist in bounding the terms $\Omega^\Mu$, $\Omega^\Phi$, and~$\Omega^\Psi$ to deduce from~\eqref{equa-shell-here} a bound of the form $\gamma_2\Dcal(r) \leq \gamma_1\Dcal(r)+\#\Dcal(r)^{1/2}r^{-\astar}o(1)+\#r^{-2\astar}o(1)$ where the implicit constants~$\#$ depend on the source term, and where $\gamma_1<\gamma_2$.  This immediately implies the desired conclusion $\Dcal(r)\lesssim \#r^{-2\astar}o(1)$.

Before we reach this fast decay, observe that bounding the weight $\min\bigl( (s/r)^{a_{n,p}} , (s/r)^{2\bstar} \bigr)$ by $(s/r)^{a_{n,p}}$ yields a variational bound
\bel{Dcal-bound}
\Dcal(r) \leq r^{-a_{n,p}} \|v,\vartheta v\|_{H^2_{a_{n,p}/2,-\expoP}(\Omega_R)}^2
\lesssim r^{-a_{n,p}} \Bigl( \|u,\vartheta u\|_{H^2_{a_{n,p}/2,-\expoP}(\Omega_R)}^2 + |\Chu|^2 \Bigr) < +\infty .
\ee
This can be improved to
\bel{Dcalr-smallo}
\lim_{r\to+\infty} r^{a_{n,p}} \Dcal(r) = 0 ,
\ee
by splitting the integral into small and large arguments, for instance as $[R,\sqrt{Rr}]\cup[\sqrt{Rr},+\infty)$.  On the first interval the weight is $(s/r)^{2\bstar}\leq(s/r)^{a_{n,p}}(r/R)^{a_{n,p}/2-\bstar}=o((s/r)^{a_{n,p}})$.  The second interval involves the $H^2_{a_{n,p}/2,-\expoP}$ norms of $v$ and $\vartheta v$ restricted to $\Omega_{\sqrt{Rr}}$, which tend to zero as $r\to+\infty$.
\ese


\bse\label{OmegaPhiPsi-onesided}
\paragraph{Dealing with the terms $\Omega^\Phi$ and $\Omega^\Psi$.}
Given the bound~\eqref{dl2-Phi} on $\Phi^\notreH[v]$, we have
\be
\Omega^\Phi
\leq |2\bstar - 2a_{n,p}| I_{2\bstar} \bigl[ \gamma_\Phi (\norm{v}^\notreH)^2 \bigr]
\leq \gamma_\Phi |2\bstar - 2a_{n,p}| \Dcal(r) .
\ee
For~$\Omega^\Psi$, the shell stability condition~\eqref{equa-conditionH2} restated in~\eqref{dl2-shell} yields an upper bound
\be
\aligned
(2\bstar-a_{n,p}) \Omega^\Psi = - J_{a_{n,p}}\bigl[\Psi^\notreH_{a_{n,p}}[v]\bigr] - I_{2\bstar}\bigl[\Psi^\notreH_{2a_{n,p}}[v]\bigr]
& + (2\bstar-2a_{n,p}) I_{2\bstar}\bigl[\Upsilon^\notreH[v]\bigr]
\\
\leq C (J_{a_{n,p}}+I_{2\bstar})\Bigl[ \la v\ra^2 - \cradialH \bigl( \Kappa^\notreH[v] \bigr)^2 \Bigr]
& + (2\bstar-2a_{n,p}) I_{2\bstar}\bigl[\Upsilon^\notreH[v]\bigr]
- \gammash^0 \Dcal(r) .
\endaligned
\ee
We arrive at a key feature of our argument.
The average $\la v\ra$ obeys a differential equation with $E$ and $\Kappa^\notreH[\vt]=\Kappa^\notreH[\ut]$ as source terms, and we prove in \autoref{lem:gnotreH} (below)\footnote{Compared to the lemma, we found it convenient to replace the separate coefficients $g_J(\bstar)$, $g_I(\bstar)$ by their maximum, and we used that $1/(\beta_+-\bstar)<1/(\beta_+-a_{n,p}-\deltaH)$ is bounded above to control part of the $r^{-2\bstar}$ term by the $r^{-2\astar}$~term.} an integral bound: 
for any $c\in(0,1)$, there exist a constant $\gamma_{JI}(c)>0$, depending on the geometry, the exponents other than~$\bstar$, and~$c$, and a decaying function $o(1)$ bounded by~$1$, such that 
\bel{gnotreH-eq-v}
\aligned
\quad & \unquad (J_{a_{n,p}} + I_{2\bstar})\Bigl[\la v\ra^2 - \frac{1}{1-c} \max(g_J(\bstar), g_I(\bstar)) (\Kappa^\notreH[\vt])^2 \Bigr]
\\
& \leq \gamma_{JI}(c) \Bigl( \bigl( \|E\|_{H^{2*}_{\astar+4,-\expoP}(\Omega_R)} + \cutoff_{\pstar} \mmax(E) \bigr)^2 r^{-2\astar} o(1) + \|\vartheta v\|_{H^2_{n-2-p,-\expoP}(\Omega_R)}^2 r^{-2\bstar} \Bigr)
\endaligned
\ee
where we recall that $g_J$ and $g_I$ were defined in~\eqref{equa-cplusmoins}. 
In terms of $\gammash(\bstar)$ introduced in~\eqref{gammaPsipbstar}, we deduce an upper bound on $\Omega^\Phi+\Omega^\Psi$,
\bel{OmegaPhiPsi-bound}
\aligned
\quad & \unquad (2\bstar-a_{n,p}) (\Omega^\Phi + \Omega^\Psi) + \bigl( \gammash(\bstar) - Cc\bigr) \Dcal(r)
\\
& \leq C \gamma_{JI}(c) \Bigl( \bigl( \|E\|_{H^{2*}_{\astar+4,-\expoP}(\Omega_R)} + \cutoff_{\pstar} \mmax(E) \bigr)^2 r^{-2\astar} o(1) + \|\vartheta v\|_{H^2_{n-2-p,-\expoP}(\Omega_R)}^2 r^{-2\bstar} \Bigr) .
\endaligned
\ee
The dissipation term~$\Dcal(r)$ has a favorable (positive) sign for small enough $c\in(0,1)$ since $\gammash(\bstar)>0$ by construction of the interval $(a_{n,p}-\delta^-,a_{n,p}+\deltaH)$ defined in~\eqref{delta-interval}.
\ese


\bse
\paragraph{Dealing with the source term $\Omega^\Mu$.}
We must construct $\Omega^\Mu$ as a distribution, and bound it in terms of the dissipation~$\Dcal$ and source~$E$.
Let us define $\la \Omega^\Mu , \psi \ra$ for smooth test functions~$\psi$ with compact support in $(R,+\infty)$.
The natural formula arising from integrating by parts the formal solution to~\eqref{abstract-def-OmegaMu} is
\bel{OmegaMu-def}
\la \Omega^\Mu , \psi \ra
\coloneqq \frac{1}{2\bstar-a_{n,p}} \bigl\la \Mu^\notreH[v,E] , \varphi \bigr\ra , \qquad
\varphi \coloneqq J_{-2\bstar}[\psi] + I_{-a_{n,p}}[\psi] .
\ee
Observe that $J_{-2\bstar}[\psi]$ and $I_{-a_{n,p}}[\psi]$ are both smooth, with the first one vanishing in a neighborhood of infinity, and the second one behaving as~$r^{a_{n,p}}$.
In particular, $\varphi$ belongs to $W^{3,\infty}_{-a_{n,p}/2-\astar}(\Omega_R)$ for any $\astar\geq a_{n,p}/2$, which is the space of test functions on which $\Mu^\notreH[v,E]$ was defined in~\eqref{MuvE-def}, above.
Therefore, \eqref{OmegaMu-def} defines a distribution $\Omega^\Mu\in\Dcal'((R,+\infty))$.

Fix once and for all a smooth function $\psi_1:\RR\to[0,+\infty)$ supported on the interval $[1,2]$ and normalized as
\be
\int_1^2 \psi_1(\rho) \frac{d\rho}{\rho} = 1 .
\ee
For any radius $r\in(R,+\infty)$, define a test function $\psi_r:[R,+\infty)\to[0,+\infty)$ supported on $[r,2r]$ by
\be
\psi_r(s)=\psi_1(s/r) , \qquad s\in [R,+\infty) .
\ee
We now bound $\la\Omega^\Mu,\psi_r\ra$ by some (solution-dependent) multiple of $\Dcal(r)^{1/2}r^{-\astar}$.
The definition~\eqref{MuvE-def} of $\Mu^\notreH$ implies
\bel{OmMupsir10}
|\la \Omega^\Mu , \psi_r \ra|
= \frac{| \la \Mu^\notreH[v,E] , \varphi \ra|}{2\bstar-a_{n,p}}
\lesssim \bigl\|E,\vartheta_* E\bigr\|_{H^{2*}_{\astar+4,-\expoP}(\Omega_R)}
\bigl\|\vartheta\varphi\vartheta v, \; \varphi \vartheta v, \; \varphi v\bigr\|_{H^2_{-\astar,-\expoP}(\Omega_R)} .
\ee
Observe that
\be
\aligned
\bigl\|\vartheta\varphi\vartheta v, \; \varphi \vartheta v, \; \varphi v\bigr\|_{H^2_{-\astar,-\expoP}(\Omega_R)}^2
& \lesssim \sum_{k=0}^3 \int_R^{+\infty} |\vartheta^k\varphi|^2 \bigl( (\norm{v}^\notreH)^2 + (\norm{\vartheta v}^\notreH)^2 \bigr) \, s^{-2\astar-1} ds
\\
& \lesssim \Dcal(r) r^{-2\astar} \sup_{s\in[R,+\infty)} \frac{(|\varphi|^2+|\vartheta\varphi|^2+|\vartheta^2\varphi|^2+|\vartheta^3\varphi|^2)(s)}{\min((s/r)^{a_{n,p}+2\astar},(s/r)^{2\bstar+2\astar})} .
\endaligned
\ee
Since $\vartheta\varphi=2\bstar J_{-2\bstar}[\psi_r]+a_{n,p}I_{-a_{n,p}}[\psi_r]$, the derivatives $\vartheta^k\varphi$, $0\leq k\leq 3$, are linear combinations of $J_{-2\bstar}[\psi_r]$, $I_{-a_{n,p}}[\psi_r]$, $\psi_r$, and~$\vartheta\psi_r$.
We bound each of these contributions in turn.  First,
\be
\aligned
\frac{|J_{-2\bstar}[\psi_r](s)|}{\min((s/r)^{a_{n,p}/2+\astar},(s/r)^{\bstar+\astar})}
& \leq \max\Bigl(\!\Bigl(\frac{s}{r}\Bigr)^{2\bstar-a_{n,p}/2-\astar}\!,\Bigl(\frac{s}{r}\Bigr)^{\bstar-\astar}\Bigr) \! \int_s^{+\infty} \! |\psi_r(t)| (t/r)^{-2\bstar} \frac{dt}{t}
\\
& \leq 2^{2\bstar-a_{n,p}/2-\astar} \int_r^{2r} |\psi_r(t)| \frac{dt}{t} \lesssim 1 ,
\endaligned
\ee
since the integral vanishes unless $s\leq 2r$, and since $\psi_r(t)$ vanishes unless $t\in[r,2r]$.
Controlling the remaining contribution in the same way, we deduce that
\bel{OmegaMupsi-bound}
|\la \Omega^\Mu , \psi_r \ra|
\lesssim r^{-\astar} \Dcal(r)^{1/2} \bigl\|E,\vartheta_* E\bigr\|_{H^{2*}_{\astar+4,-\expoP}(\Omega_R)} \|1,\psi_1,\del\psi_1\|_{L^\infty} .
\ee
\ese
This shows that $\Omega^\Mu$ decays as $r^{-\astar}\Dcal(r)^{1/2}$ in a mollified sense.

\bse\label{ImprovedboundonthesourceOmegaMu}
\paragraph{Improved bound on the source~$\Omega^\Mu$.}
The bound~\eqref{OmegaMupsi-bound} can be improved by an $o(1)$ factor as follows.
For $\rho\in(R,r/2)$ we consider a smooth monotonically-increasing radial cutoff function $\kappa_\rho\colon[R,+\infty)\to[0,1]$ with $\kappa_\rho=0$ on $[R,\rho]$ and $\kappa_\rho=1$ on $[2\rho,+\infty)$, and bounded derivatives $\vartheta^k\kappa_\rho$ for each $k\geq 0$.
Given the test function $\psi_r$ defined above, we split $\varphi_r=(J_{-2\bstar}+I_{-a_{n,p}})[\psi_r]$ into the sum of two contributions
\be
\varphi_1 = (1-\kappa_\rho) \varphi_r , \qquad \varphi_2 = \kappa_\rho \varphi_r ,
\ee
which depend on the two parameters $r$ and~$\rho$, as well as their argument in $[R,+\infty)$.
They are such that $\varphi_1$ and $\varphi_2$ are supported on $[R,2\rho]$ and $[\rho,+\infty)$ respectively, and Sobolev norms of $\varphi_1$ and~$\varphi_2$ are bounded by those of~$\varphi_r$.
Given that $2\rho<r$, one has, for $s\in[R,2\rho]$, the bound
\be
\varphi_1(s) = (1-\kappa_\rho(s)) (s/r)^{2\bstar} \int_r^{2r} (r/t)^{2\bstar} \psi_r(t) \frac{dt}{t}
\lesssim (\rho/r)^{2\bstar} ,
\ee
thanks to the favorable power of $s/r$ (the integral is bounded).  Derivatives of $\varphi_1$ are subject to the same bound, hence the contribution of $\varphi_1$ to the bounds \eqref{OmMupsir10}--\eqref{OmegaMupsi-bound} is suppressed by a factor of $(\rho/r)^{2\bstar}$.

The contribution of $\varphi_2$ is suppressed via a different mechanism based on the lack of concentration at infinity in dual Sobolev spaces.
We return to the definition~\eqref{MuvE-def} of~$\Mu^\notreH$.  The products $\varphi v$, $\varphi \vartheta v$, and $\vartheta\varphi\vartheta v$ are all supported on $[\rho,+\infty)$, therefore the bound on $\la \Mu^\notreH[v,E] , \varphi \ra$ used in~\eqref{OmMupsir10} is improved by \autoref{lem:Hstar-cutoff} to any small $\eps>0$ times the $H^2_{-\astar,-\expoP}(\Omega_R)$ norm of these products by taking $\rho=\rho_\eps$ large enough.
The bound \eqref{OmegaMupsi-bound} is then improved to
\be
|\la \Omega^\Mu , \psi_r \ra|
\lesssim r^{-\astar} \Dcal(r)^{1/2} \bigl( (\rho_\eps/r)^{2\bstar} + \eps) = r^{-\astar} \Dcal(r)^{1/2} o(1) ,
\qquad r\to+\infty ,
\ee
where we used that for $r\to+\infty$ one can take $\eps\to 0$ and $\rho_\eps\to+\infty$ while ensuring that $\rho_\eps/r\to 0$.
\ese

\bse
\paragraph{Dealing with the homogeneous terms.}
We now show $C_0=0$ and we bound~$\Cstar$.
Given the bound~\eqref{dl2-Phi} on $\Phi^\notreH$ and~$\Psi^\notreH_\beta$, one has
\bel{OmegaPhiPsi-twosided}
|\Omega^\Phi(r) + \Omega^\Psi(r)|
\leq \Bigl( |2\bstar-2a_{n,p}| \gamma_\Phi + \frac{\bigl(1+|2\bstar-2a_{n,p}|/a_{n,p}\bigr)}{2\bstar-a_{n,p}} \gamma_\Psi \Bigr) \Dcal(r),
\ee
which is a coarse two-sided bound, in contrast to the refined one-sided bound~\eqref{OmegaPhiPsi-onesided}.
Using the monotonicity $\Dcal(r')\leq\Dcal(r)$ for $r'\geq r$, the same bound holds for $|\la\Omega^\Phi+\Omega^\Psi,\psi_r\ra|$.
Furthermore, we have 
\bel{bound-Phi-moll}
\la\Phi^\notreH[v],\psi_r\ra
\leq \gamma_\Phi \|\psi_1\|_{L^\infty} \int_r^{2r} (\norm{v}^\notreH(s))^2 \frac{ds}{s}
\leq \gamma_\Phi \|\psi_1\|_{L^\infty} \Dcal(r) .
\ee
Given that $\Dcal(r)=o(r^{-a_{n,p}})$ (cf.~\eqref{Dcalr-smallo}) and $\astar\geq a_{n,p}/2$, we learn that
\be
\lim_{r\to+\infty} r^{a_{n,p}} \bigl|\bigl\la\Phi^\notreH[v]-\Omega^\Phi-\Omega^\Psi-\Omega^\Mu,\psi_r\bigr\ra\bigr| = 0 .
\ee
Returning to~\eqref{equa-shell-here}, mollified with $\psi_r$, namely
\bel{equa-shell-here-2}
r^{a_{n,p}} \la s^{-a_{n,p}},\psi_r\ra C_0 + r^{a_{n,p}} \la s^{-2\bstar},\psi_r\ra \Cstar
= r^{a_{n,p}} \bigl\la\Phi^\notreH[v](s)-\Omega^\Phi-\Omega^\Psi-\Omega^\Mu,\psi_r\bigr\ra,
\ee
we observe that the first term lies in the interval $[2^{-a_{n,p}},1]C_0$, and the second one decays as $r^{a_{n,p}-2\bstar}$, hence their sum can only tend to zero if $C_0=0$.

To control~$\Cstar$, we simply apply~\eqref{equa-shell-here-2} to some arbitrary radius, say $r=2R$, and use the available bounds \eqref{Dcal-bound}, \eqref{OmegaMupsi-bound}, \eqref{OmegaPhiPsi-twosided}, and~\eqref{bound-Phi-moll}
to get
\bel{Cstar-bound}
|\Cstar| \lesssim \|1,\psi_1,\del\psi_1\|_{L^\infty}^2 \Bigl( \|v,\vartheta v\|_{H^2_{a_{n,p}/2,-\expoP}(\Omega_R)}^2 + \bigl\|E,\vartheta_* E\bigr\|_{H^{2*}_{\astar+4,-\expoP}(\Omega_R)}^2 \Bigr) .
\ee
\ese

\bse\label{Controlofthedissipation}
\paragraph{Control of the dissipation.}
By collecting all of the error terms from~\eqref{equa-shell-here}, using in particular the one-sided bound~\eqref{OmegaPhiPsi-bound}, and letting $\Cstar'>0$ denote a constant arising from~\eqref{Cstar-bound} and the $r^{-2\bstar}$ term in~\eqref{OmegaPhiPsi-bound}, we arrive at
\bel{PhiDcal-bound-1}
\aligned
\quad & \unquad \Phi^\notreH[v] + \frac{1}{2\bstar-a_{n,p}} \bigl( \gammash(\bstar) - Cc \bigr) \Dcal(r)
\\
& \leq \Omega^\Mu + \frac{C \gamma_{JI}(c)}{2\bstar-a_{n,p}} \Bigl( \|E\|_{H^{2*}_{\astar+4,-\expoP}(\Omega_R)} + \cutoff_{\pstar} \mmax(E) \Bigr)^2 r^{-2\astar} o(1)
+ \Cstar' r^{-2\bstar}.
\endaligned
\ee 
Since $\gammash(\bstar)$ is positive by assumption on~$\bstar$, for $c\in(0,1)$ sufficiently small, namely $c<\gammash(\bstar)/C$, the left-hand side consists of two positive terms.
The upper bound involves~$\Omega^\Mu$, which we control only in a mollified sense~\eqref{OmegaMupsi-bound}.
 There exists a constant $\gamma_\Mu>0$, depending on the fixed non-negative test function $\psi_1$ supported on $[1,2]$ used to define $\psi_r(s)=\psi_1(s/r)$, such that
\be
\la \Omega^\Mu , \psi_r \ra
\leq \gamma_\Mu \bigl\|E,\vartheta_* E\bigr\|_{H^{2*}_{\astar+4,-\expoP}(\Omega_R)} \Dcal(r)^{1/2} r^{-\astar} o(1) ,
\ee
for every $r\in(R,+\infty)$.
Using the lower bound $\Dcal(r')\geq2^{-2\bstar}\Dcal(r)$ for $r'\in[r,2r]$, and using the normalization $\int_1^2\psi_1(\rho)d\rho/\rho=1$, the mollified version of~\eqref{PhiDcal-bound-1} is
\be
\aligned
\quad & \unquad \bigl\la\Phi^\notreH[v], \psi_r\bigr\ra
+ \frac{2^{-2\bstar}}{2\bstar-a_{n,p}} \bigl( \gammash(\bstar) - Cc \bigr) \Dcal(r)
\\
& \leq \gamma_\Mu \bigl\|E,\vartheta_* E\bigr\|_{H^{2*}_{\astar+4,-\expoP}(\Omega_R)} \Dcal(r)^{1/2} r^{-\astar} o(1)
\\
& \quad + \frac{C \gamma_{JI}(c)}{2\bstar-a_{n,p}} \Bigl( \|E\|_{H^{2*}_{\astar+4,-\expoP}(\Omega_R)} + \cutoff_{\pstar} \mmax(E) \Bigr)^2 r^{-2\astar} o(1)
+ \Cstar' r^{-2\bstar} .
\endaligned
\ee
Since the shell functional $\Phi^\notreH[v]$ and the test function $\psi_r$ are non-negative, their pairing is non-negative and may be discarded from the left-hand side.
From the remaining terms, we can deduce a bound on dissipation by bounding the term involving $\Dcal(r)^{1/2}$ by a small multiple of~$\Dcal(r)$ and a correspondingly large multiple of $r^{-2\astar}\bigl\|E,\vartheta_* E\bigr\|_{H^{2*}_{\astar+4,-\expoP}(\Omega_R)}^2$.  This yields
\bel{Dcal-decay}
\Dcal(r) \lesssim \Bigl( \|E,\vartheta_* E\|_{H^{2*}_{\astar+4,-\expoP}(\Omega_R)}^2 + \cutoff_{\pstar} \mmax(E)^2 \Bigr) r^{-2\astar} o(1) + \|\vartheta v\|_{H^2_{n-2-p,-\expoP}(\Omega_R)}^2 r^{-2\bstar}
\ee
where the implicit constant depends\footnote{The dependence on~$\psi_1$ can henceforth be ignored since it no longer appears on the left-hand side.} on $c$, on the geometry, and on exponents including~$\bstar$.
\ese


\paragraph{Radially-pointwise control.}

The bound on~$\Dcal(r)$ implies a bound on a simpler notion of dissipation, without weight, thanks to\footnote{In fact, the resulting bound on~$\Dcal_2(r)$, combined with $\Dcal(r) \lesssim \int_R^{+\infty} \min( (s/r)^{a_{n,p}} , (s/r)^{2\bstar}) \Dcal_2(s) \frac{ds}{s}$ implies~\eqref{Dcal-decay} with $r^{-2\bstar}$ replaced by the very slightly weaker bound $r^{-2\bstar}\log r$.}
\bel{simpler-dissip}
\Dcal_2(r)\coloneqq\int_r^{2r} \Bigl( (\norm{v}^\notreH(s))^2 + (\norm{\vartheta v}^\notreH(s))^2 \Bigr) \frac{ds}{s} \leq \Dcal(r) .
\ee
Since $\vartheta$ commutes with the derivatives appearing in the definition of~$\norm{v}^\notreH$, the derivative~$\vartheta\norm{v}^\notreH$ is controlled by~$\norm{\vartheta v}^\notreH$.
Thus, by integrating $f(r)=f(s)+\int_r^s\vartheta f(t)dt/t$ over $s\in[r,2r]$ with $f=(\norm{v}^\notreH)^2$, we obtain
\be
(\norm{v}^\notreH(r))^2
\lesssim \int_r^{2r} \Bigl( \bigl(\norm{v}^\notreH(s)\bigr)^2 + \bigl(\vartheta\norm{v}^\notreH(s)\bigr)^2 \Bigr) \frac{ds}{s} \lesssim \Dcal_2(r) .
\ee

\paragraph{Identifying the harmonic term.}
Altogether, we have established the desired bound~\eqref{equa-bound-int} and its $o(1)$ improvement, but expressed on the shifted solution $v=u-\Chu r^{-a_{n,p}}\nu^\normal/\la\nu^\normal\ra$ instead of $u-\cutoff_{\pstar}\umodu$.
In the sub-harmonic case $\astar<\bstar<a_{n,p}$, we are done since these two functions differ by a harmonic term $\Chu r^{-a_{n,p}}\nu^\normal/\la\nu^\normal\ra$ whose quadratic functionals decay as $r^{-2a_{n,p}}$, faster than both terms on the right-hand side of~\eqref{equa-bound-int}.
In the harmonic and super-harmonic cases we must establish $\Chu=\la\nu^\normal\ra \mmodu(E)$.
For this we integrate the equation $\notreH^\lambda[v]=E$ with the weight $r^{3+a_{n,p}} dr\,d\chi$ on~$\Omega_R$.
The angular integral is performed in~\eqref{equa-average-one-H} above using explicit expressions of the Hamiltonian operator, and reads
\be
\aligned
r^4 \la E \ra = (\vartheta+a_{n,p})\Bigl( (n-1) \vartheta \bigl( \vartheta^2 + a_{n,p} \vartheta - b_{n,p} \bigr) \la u \ra
+ \bigl((n-2) \vartheta-c_{n,p}/a_{n,p}\bigr) \la \Deltaslash \ut \ra \Bigr) .
\endaligned
\ee 
Integrating on $[R,r]$ with the weight $r^{a_{n,p}-1}dr$ yields boundary terms.
The boundary term at~$R$ is simply the average $R^{a_{n,p}} \la\Abb_3^\lambda[u](R)\ra$ of the Hamiltonian boundary operator~$\Abb_3^\lambda$ given in~\eqref{equa-bound-ope-2-lambda}, which vanishes as the natural weak boundary functional encoded by~\eqref{equa-opera5}.

The boundary term at~$r$ is useful rewritten in terms of~$v$,
with the coefficient of $\Chu$ simplifying thanks to the normalization~\eqref{equa-norm-normal} of the silhouette function,
\bel{RrlaE-Chu}
\aligned
\int_R^r \la E(s)\ra s^{3+a_{n,p}} ds
& = r^{a_{n,p}} \Bigl( (n-1) \vartheta \bigl( \vartheta^2 + a_{n,p} \vartheta - b_{n,p} \bigr) \la v\ra + \Bigl((n-2) \vartheta-\frac{c_{n,p}}{a_{n,p}}\Bigr) \la\Deltaslash\vt\ra \Bigr) \\
& \quad + \frac{2(n-1)|\Sphe^{n-1}|}{\aire[\Lambda,\lambda]\la\nu^\normal\ra} \Chu .
\endaligned
\ee
By assumption on the source term, the left-hand side is bounded and as $r\to+\infty$ it tends to $\mmodu(E)$ up to normalization factors.  We learn that the combination of third derivatives of $v$ is bounded and has a finite limit, say~$\ell$.
On the other hand, by \eqref{Dcal-decay}--\eqref{simpler-dissip} and the $o(1)$ improvement mentioned there, all second derivatives of $v$ and $\vartheta v$ decay as $o(r^{-a_{n,p}})$ in a weighted~$L^2$ sense, and more precisely, for any $r_1\in[R,+\infty)$,
\be
\aligned
\quad & \unquad \int_{r_1}^{2r_1} r^{2a_{n,p}} \Bigl( (n-1) \vartheta \bigl( \vartheta^2 + a_{n,p} \vartheta - b_{n,p} \bigr) \la v \ra + \bigl((n-2) \vartheta-c_{n,p}/a_{n,p}\bigr) \la\Deltaslash\vt\ra \Bigr)^2 \frac{dr}{r}
\\
& \lesssim \int_{r_1}^{2r_1} r^{2a_{n,p}} \bigl(\norm{v,\vartheta v}^\notreH\bigr)^2 \frac{dr}{r} \lesssim r_1^{2a_{n,p}} \Dcal_2(r_1) \longrightarrow 0 , \qquad r_1\to+\infty .
\endaligned
\ee
On the other hand, we have just argued that the integrand tends to $\ell^2$ hence the integral tends to $\ell^2\log 2$.  This implies $\ell=0$.
Taking the $r\to+\infty$ limit in~\eqref{RrlaE-Chu} then identifies $\mmodu(E)=\Chu/\la\nu^\normal\ra$ as desired, provided one accounts for the standard normalization of the energy (cf.~\eqref{mmodu-def}).

This concludes the proof of \autoref{thm:decayL2}, modulo the upcoming \autoref{lem:gnotreH} and calculations in \autoref{section=6}.


\subsection{Contribution of the average}
\label{section=7.3}

The average $\la u\ra$ obeys an ordinary differential equation~\eqref{equa-defb-b-b}, derived and solved in \autoref{section=6.3}.
The solution is given in \autoref{prop-66} in the form
\bse\label{avu-early}
\be
\aligned
\la u(r)\ra
& = \Chu r^{-a_{n,p}} + C_+^u r^{-\beta_+} + c_- J_{\beta_-}\bigl[\Kappa^\notreH[\ut]\bigr](r) + c_+ I_{\beta_+}\bigl[\Kappa^\notreH[\ut]\bigr](r) + \Theta^E_1(r) + \Theta^E_2(r) ,
\endaligned
\ee
where $\Kappa^\notreH$ is a linear functional of~$\ut$ involving the derivatives $\vartheta^j \nablaslash^k \ut$ with $j+k\leq 3$ and $k\leq 2$, and
\bel{preJaTEIbTE}
\|\Theta^E_1\|_{L^2_{\astar}([R,+\infty))} \lesssim \|E\|_{H^{2*}_{\astar+4,-\expoP}(\Omega_R)} ,
\qquad
\Theta^E_2 = \begin{cases}
  0 & \text{if } \astar\neq a_{n,p} , \\
  \la\nu^\normal\ra \mmax(E) r^{-a_{n,p}} o(1) & \text{if } \astar=a_{n,p} .
\end{cases}
\ee
\ese

To complete the proof of \autoref{thm:decayL2}, we establish the following technical estimate, which is used in~\eqref{OmegaPhiPsi-onesided} to place a (negative) lower bound on the integrated effect of the dissipation functionals~$\Psi^\notreH_\beta[u]$.

\begin{lemma}[Lower bound on the dissipation]\label{lem:gnotreH}
Fix a localization domain $(\Lambda,\lambda)$ and projection exponent $p\in(p_n^\flat,n-2)$ as in \autoref{thm:decayL2} and assume Hamiltonian harmonic and radial stability (\refwithname{Definitions}{def-harmonic-Hstab} \refwithname{and}{def-radial-Hstab}), but not shell stability.
Fix sharp decay exponents $\astar,\bstar$ such that either $a_{n,p}/2\leq\astar<\bstar<a_{n,p}$ or $a_{n,p}\leq\astar<\bstar<\beta_+$,
consider the variational solution $u \in H^2_{n-2-p, -\expoP}(\Omega_R)$ to the localized Hamiltonian equation~\eqref{equa-opera5} associated with a source term~$E$ enjoying the radial decay~\eqref{equa-hyo-EEE},
and assume the additional regularity $\vartheta u\in H^2_{n-2-p,-\expoP}(\Omega_R)$ as in~\eqref{equa-hyo-EEu}.
Then, there exists a constant $\Chu$ (depending on the solution~$u$) such that for any $c\in(0,1)$,
\bel{gnotreH-eq}
\aligned
\quad & \unquad
(1-c) (J_{a_{n,p}}+I_{2\bstar}) \Bigl[ \bigr(\la u\ra-\Chu r^{-a_{n,p}}\bigr)^2 \Bigr]
- g_J(\bstar) J_{a_{n,p}} \Bigl[ \bigl( \Kappa^\notreH[\ut] \bigr)^2 \Bigr]
- g_I(\bstar) I_{2\bstar} \Bigl[ \bigl( \Kappa^\notreH[\ut] \bigr)^2 \Bigr] \\
& \lesssim \bigl( \|E\|_{H^{2*}_{\astar+4,-\expoP}(\Omega_R)} + \cutoff_{\pstar} \mmax(E) \bigr)^2 r^{-2\astar} o(1) \\
& \quad + \frac{1}{\beta_+-\bstar} \Bigl( \|E\|_{H^{2*}_{\astar+4,-\expoP}(\Omega_R)} + \cutoff_{\pstar} \mmax(E) + \|\vartheta u\|_{H^2_{n-2-p,-\expoP}(\Omega_R)} \Bigr)^2 r^{-2\bstar} ,
\endaligned
\ee
with the shorthand $(J_{a_{n,p}}+I_{2\bstar})[f]\coloneqq J_{a_{n,p}}[f]+I_{2\bstar}[f]$.
The left-hand side here may be (arbitrarily) negative and the implicit constant in the upper bound depends on the localization domain, the constant~$c$, the radius $R$, and the exponents $a_{n,p},\beta_{\pm},\astar$ but not~$\bstar$, and finally $o(1)$ is bounded by~$1$ and decays to zero as $r\to+\infty$.
\end{lemma}

\begin{proof}
\noindent{\it 1. Integrability of the fluctuation operator.}
Throughout the proof we use the short-hand notation $\Kappa \coloneqq \Kappa^\notreH[\ut]$.
As described in~\eqref{equa-defb-b-b} and given explicitly in \autoref{def:KappaH}, $\Kappa$~is linear in the derivatives $\vartheta^j \nablaslash^k \ut$ with $j+k\leq 3$ and $k\leq 2$, with coefficients involving the silhouette function~$\nu^\normal$.
Thus, the additional regularity assumption on $\vartheta u$ ensures that the $\Kappa$~contributions on the left-hand side of~\eqref{gnotreH-eq} are well-defined: first of all,
\bel{JaKappa2defined}
\aligned
J_{a_{n,p}}\bigl[\Kappa^2\bigr](r)
& = r^{-a_{n,p}} \! \int_r^{+\infty} \!\! \Kappa(s)^2 s^{a_{n,p}-1} ds
\\
& \lesssim r^{-a_{n,p}} \Bigl( \|u\|_{H^2_{n-2-p,-\expoP}(\Omega_R)}^2 + \|\vartheta u\|_{H^2_{n-2-p,-\expoP}(\Omega_R)}^2 \Bigr) ,
\endaligned
\ee
with implicit constants that depend on the localization domain $(\Lambda,d\chi)$ and, second, $I_{2\bstar}[\Kappa^2]$ is well-defined since~\eqref{JaKappa2defined} implies local integrability of~$\Kappa^2$.
The inequality~\eqref{gnotreH-eq} can thus be seen as an upper bound on integrals of $(\la u(r)\ra - \Chu r^{-a_{n,p}})^2$ with a precise coefficient for $\Kappa$ contributions and unimportant constants for the $r^{-2\astar}$ and $r^{-2\bstar}$ terms.

\medskip

\bse
\noindent{\it 2. Expression of the average.}
We rely on the evolution equation~\eqref{equa-defb-b-b} given by
\be
- (\vartheta+\beta_-)(\vartheta+\beta_+) \la u\ra = {1 \over \bnotreH_{1}} \Bigl( -(n-2) \vartheta +{c_{n,p} \over a_{n,p}} \Bigr)  \Kappa
+ {1 \over \bnotreH_{1}} \, \Nu^\notreH[E] + {1 \over \bnotreH_{1}} C_1^u r^{-a_{n,p}} ,
\ee
in which $\beta_- < 0 < a_{n,p} <\beta_+$, as stated near~\eqref{betapm-def-text}.
By solving this equation with the absence of a growing contribution $r^{-\beta_-}$ enforced by our variational bounds, we can derive that
(cf.~\eqref{avu-early}, above)
\bel{equa-598}
\la u(r)\ra - \Chu r^{-a_{n,p}} = C_+^u r^{-\beta_+} + \Theta^E + c_- J_{\beta_-}[\Kappa](r) + c_+ I_{\beta_+}[\Kappa](r)
\ee
in terms of the constants $c_\pm$ given in~\eqref{equa-cplusmoins} and some constant $\Chu$, together with a suitable super-harmonic term $C_+^u r^{-\beta_+}$ and source term $\Theta^E=\Theta^E_1+\Theta^E_2$ with 
\be
\|\Theta^E_1\|_{L^2_{\astar}([R,+\infty))} \lesssim \|E\|_{H^{2*}_{\astar+4,-\expoP}(\Omega_R)}
\ee
 and $\Theta^E_2 = \Oneone_{\astar=a_{n,p}} \la\nu^\normal\ra \mmax(E) r^{-a_{n,p}} o(1)$.
Then we deduce a bound on $\la u\ra-\Chu r^{-a_{n,p}}$ by four contributions, namely 
\bel{laurra-up}
\bigl(\la u(r)\ra - \Chu r^{-a_{n,p}}\bigr)^2
\leq
\frac{3}{c} (C_+^u r^{-\beta_+})^2
+ \frac{3}{c} (\Theta^E_1)^2
+ \frac{3}{c} (\Theta^E_2)^2
+ \frac{1}{1-c} \bigl( c_- J_{\beta_-}[\Kappa](r) + c_+ I_{\beta_+}[\Kappa](r) \bigr)^2 ,
\ee
and we are interested in their effect in the left-hand side of the inequality~\eqref{gnotreH-eq}.
\ese

\medskip

\noindent{\it 3. Super-harmonic term contribution.}
The term $C_+^u r^{-\beta_+}$ has super-harmonic decay, and its coefficient is controlled by \autoref{prop-66}.
In addition, a direct computation using that $a_{n,p}/2,\bstar<\beta_+$ shows that
\be
J_{a_{n,p}}[r^{-2\beta_+}] + I_{2\bstar}[r^{-2\beta_+}]
\leq \frac{R^{2\bstar-2\beta_+}}{2(\beta_+-\bstar)} r^{-2\bstar}
+ \frac{1}{2\beta_+ - a_{n,p}} r^{-2\beta_+}
\lesssim \frac{1}{\beta_+-\bstar} r^{-2\bstar}
\ee
since $\bstar<\beta_+$.  The contribution of the super-harmonic term to the left-hand side of~\eqref{gnotreH-eq} is thus bounded by $(\beta_+-\bstar)^{-1} r^{-2\bstar}(C_+^u)^2$. 

\medskip

\bse\label{integralThetaE1}
\noindent{\it 4. Source term contribution.}
Next, we wish to bound $(J_{a_{n,p}}+I_{2\bstar})[(\Theta^E_i)^2]$ for $i=1,2$.  For $\Theta^E_1$, we use $a_{n,p}\leq 2\astar<2\bstar$ to find
\be
\aligned
r^{2\astar} (J_{a_{n,p}} + I_{2\bstar})[(\Theta^E_1)^2]
&= \int_R^{+\infty} \!\!\!\min\bigl((s/r)^{a_{n,p}-2\astar}, (s/r)^{2\bstar-2\astar}\bigr) (\Theta^E_1)^2 s^{2\astar-1} ds
\\
& 
\leq \|\Theta^E_1\|_{L^2_{\astar}([R,+\infty))}^2 ,
\endaligned
\ee
itself bounded by~\eqref{preJaTEIbTE}.  Furthermore, since the weight is bounded above by a function $\min(1,(s/r)^{2\bstar-2\astar})$ that tends pointwise monotonously towards zero, the integral tends to zero, hence gaining an $o(1)$ factor.

In the harmonic case $\astar=a_{n,p}$ we must also consider 
\be
(\Theta^E_2)^2=\la\nu^\normal\ra^2\mmax(E)^2 r^{-2a_{n,p}}o(1).
\ee
By \autoref{lem:appF-pointwise}, which applies thanks to $a_{n,p}<2a_{n,p}<2\bstar$, applying $J_{a_{n,p}}+I_{2\bstar}$ preserves this pointwise decay, namely
\be
r^{2\astar} (J_{a_{n,p}} + I_{2\bstar})[(\Theta^E_2)^2] = \la\nu^\normal\ra^2\mmax(E)^2 o(1) , \qquad \astar=a_{n,p} .
\ee
Altogether (absorbing $\la\nu^\normal\ra$ into the implicit constant), in all regimes we get the following bound, where $o(1)$ is bounded independently of the source term (and decays to zero),
\be
\sum_{i=1,2} \Bigl( J_{a_{n,p}}[(\Theta^E_i)^2] + I_{2\bstar}[(\Theta^E_i)^2] \Bigr)
\lesssim \bigl( \|E\|_{H^{2*}_{\astar+4,-\expoP}(\Omega_R)} + \cutoff_{\pstar} \mmax(E) \bigr)^2 r^{-2a_{n,p}} o(1) .
\ee
\ese

\medskip

\bse
\noindent{\it 5. Contribution of fluctuations.}
We now turn to the last term in~\eqref{laurra-up}.
We apply the Hardy-type inequalities in \refwithname{Lemmas}{lem:rad-Hardy} \refwithname{and}{lem:rad-mixed}. Here, we must pay attention to the {\it specific constants derived in these lemmas.} That is, by recalling $\beta_- < 0 < a_{n,p} < \beta_+$ and  $\bstar < \beta_+$,
we write 
\be
\aligned
& J_{a_{n,p}} \Bigl[ \bigl( c_- J_{\beta_-}[\Kappa] + c_+ I_{\beta_+}[\Kappa] \bigr)^2 \Bigr]
\leq 2 c_-^2 J_{a_{n,p}} \Bigl[\bigl(J_{\beta_-}[\Kappa]\bigr)^2 \Bigr] + 2 c_+^2 J_{a_{n,p}} \Bigl[ \bigl(I_{\beta_+}[\Kappa]\bigr)^2 \Bigr]
\\
& \leq \Big(
{8 c_-^2 \over (a_{n,p} - 2\beta_-)^2} +   \frac{16 c_+^2}{(2 \beta_+ - a_{n,p})^2}  \Big) J_{a_{n,p}}[\Kappa^2]
+  \frac{2 c_+^2 }{(2 \beta_+ - a_{n,p})(\beta_+ - \bstar)} I_{2\bstar}[\Kappa^2]. 
\endaligned
\ee
Treating similarly the $I_{2\bstar}$ term we get a bound that involves the constants defined in~\eqref{equa-cplusmoins},
\be
\aligned
&
J_{a_{n,p}} \Bigl[ \bigl( c_- J_{\beta_-}[\Kappa] + c_+ I_{\beta_+}[\Kappa] \bigr)^2 \Bigr] + I_{2\bstar}\Bigl[ \bigl( c_- J_{\beta_-}[\Kappa] + c_+ I_{\beta_+}[\Kappa] \bigr)^2 \Bigr]
\\
& \leq g_J(\bstar) J_{a_{n,p}}[\Kappa^2] + g_I(\bstar) I_{2\bstar}[\Kappa^2] .
\endaligned
\ee
We recognize the terms subtracted on the left-hand side of~\eqref{gnotreH-eq}, which completes the proof.
\ese
\end{proof} 


\section{Linear estimates for the squared localized momentum operator}
\label{section=8}

\subsection{Variational formulation}
\label{section=8.1}

In this section and the next one, we continue to work within the setup of \autoref{section=5.1} in a truncated cone of the $n$-dimensional Euclidean space (with $n \geq 3$), but we now turn our attention to the linearization of the (squared) localized momentum operator $\notreM^\lambda$. As before, a basic variational formulation (cf.~\autoref{thm-sharp-m-localized-vari}, below) provides us with a preliminary control of the decay of solutions in a mild integral sense, and next we seek sharp estimates in (integral, pointwise) weighted norms. To the operator $\notreM^\lambda$ given in~\eqref{equa:acalew0-deux} we associate the following \textbf{momentum boundary operator} 
\bel{equa-bound-ope-2-lambda-M}
\mathbb{B}^\lambda[Z]_i
\coloneqq 2 \xh_i\vartheta\Zperp + \nablaslash_i\Zperp + \vartheta\Zpar_i - \Zpar_i ,
\ee
expressed in terms of the orthogonal and parallel components defined by $Z_i = \xh_i\Zperp+ \Zpar_i$ with $\xh_i\Zpar_i=0$ (cf.~\eqref{Zi-split}). This is a first-order operator defined on any sphere~$\Sphe_r$ (for $r \geq R$, say) which provides us with a vector-valued analogue of the Neumann boundary operator for the Laplace operator. We point out that, as was the case for the Hamiltonian operator, the variational decay we achieve at this stage is much weaker than the \mbox{(super-)harmonic} decay that we will establish later on in this section.
We define $H^{1*}_{\astar+2,-\expoP}(\Omega_R,\RR^n)$ as the dual of $H^1_{-\astar-2,-\expoP}(\Omega_R,\RR^n)$ under the measure $\lambda^{2\expoP} r^{-n}dx=d\chi\,dr/r$ in the same way as~\eqref{H2star-def}.

\begin{theorem}[Variational formulation for the localized momentum operator]
\label{thm-sharp-m-localized-vari}
Consider a conical domain $\Omega_R = K \cap {}^{\complement} \Ball_R \subset \RR^n$ together with a localization function $\lambda\colon \Lambda \to (0, \lambda_0]$ with connected support  $\Lambda \subset \Sphe^{n-1}$. Fix some arbitrary localization exponent~$\expoP \geq 2$ and consider a projection exponent\footnote{The restriction based on the lower bound $p_n^\flat$ is unnecessary here.} $p \in (0, n-2)$. Given any vector field $F \in H^{1*}_{n-p,-\expoP}(\Omega_R,\RR^n)$, there exists a unique variational solution $Z \in H^1_{n-2-p, -\expoP}(\Omega_R,\RR^n)$ to the localized second-order momentum system
\bel{equa-opera5-M} 
\aligned
\notreM^\lambda[Z]^i 
& = F^i  \quad &&(1 \leq i \leq n) 
&& \text{ in the exterior domain } \Omega_R = {}^\complement \Ball_R \cap K,
\qquad 
\\
\mathbb{B}^\lambda [Z]^i 
& = 0 \quad &&(1 \leq i \leq n) 
&& \text{ on the subset of the sphere } \Lambda_R = \Sphe_R \cap K.  
\endaligned
\ee
Moreover, $\|Z\|_{H^1_{n-2-p, -\expoP}(\Omega_R,\RR^n)} \lesssim \|F\|_{H^{1*}_{n-p,-\expoP}(\Omega_R,\RR^n)}$.
\end{theorem}

\begin{proof}
  We determine the variational formulation of interest by considering a sufficiently regular solution $Z$ to~\eqref{equa-opera5-M} and a smooth test vector field~$W$ on~$\Omega_R$.  We write
\[
\aligned 
& \int_{\Omega_R} W_j \, F^j \, r^{n-2-2p} \lambda^{2\expoP} dx 
= \int_{\Omega_R} W_j \, \notreM^\lambda_{n,p}[Z]^j  \, r^{n-2-2p} \lambda^{2\expoP} dx 
\\ 
& = - \frac{1}{2} \int_{\Omega_R} W_i \Bigl( \del_j \bigl( r^{n-2-2p} \lambda^{2\expoP}  \del_j Z_i \bigr) + \del_j \bigl( r^{n-2-2p} \lambda^{2\expoP}  \del_i Z_j \bigr) \Bigr) \, dx 
\\ 
& = \frac{1}{2}  \int_{\Omega_R} ( \del_j W_i ) \bigl( \del_j Z_i + \del_i Z_j \bigr) \, r^{n-2-2p} \lambda^{2\expoP} dx 
+ \frac{1}{2} R^{n-3-2p} \int_{\Lambda_R} W_i x_j \bigl(\del_j Z_i + \del_i Z_j \bigr) \lambda^{2\expoP} dx. 
\endaligned
\]
Hence the variational formulation reads 
\bel{equa-variM}
\aligned 
\frac{1}{4}  \int_{\Omega_R}\big(\del_j W_i + \del_i W_j  \bigr)  \bigl(\del_j Z^i + \del_i Z^j  \bigr) \, r^{n-2-2p} \lambda^{2\expoP} dx 
= \la F, r^{2n-2-2p}W\ra ,
\endaligned
\ee 
provided we impose the following boundary condition at $r=R$,
\[
\aligned
0 & = x_j(\del_j Z_i + \del_i Z_j)
= \vartheta Z_i - Z_i + \del_i(r\Zperp)
\\
& = (\vartheta - 1) (\xh_i\Zperp + \Zpar{}_i) + \bigl(\xh_i\Zperp + (\xh_i\vartheta + \nablaslash_i)\Zperp\bigr)
= 2 \xh_i\vartheta\Zperp + \nablaslash_i\Zperp + \vartheta\Zpar_i - \Zpar_i ,
\endaligned
\]
which leads us to the boundary operator~\eqref{equa-bound-ope-2-lambda-M}. Consequently, for $F\in H^{1*}_{n-p,-\expoP}(\Omega_R,\RR^n)$, the variational solution $Z \in H^1_{n-2-p, -\expoP}(\Omega_R,\RR^n)$ is defined by requiring that~\eqref{equa-variM} holds for all $W \in H^1_{n-2-p, -\expoP}(\Omega_R,\RR^n)$, where the right-hand side of~\eqref{equa-variM} is the duality bracket between $F$ and $r^{2n-2-2p}W\in H^1_{-n+p,-\expoP}(\Omega_R,\RR^n)$. Standard continuity and coercivity properties for the linearized momentum then allow us to establish the existence of the variational solution to the problem~\eqref{equa-opera5-M}. The coercivity is a consequence of the localized weighted Korn inequality in~$\Omega_R$ (cf.~\autoref{appendix=D}).
\end{proof}
 

\subsection{Localized integral estimates}
\label{section=8.2}

From our shell identity and momentum stability conditions, we derive Sobolev bounds on the solution~$Z$, involving the exponent $\astar = n - 2 - 2p + \pstar \geq a_{n,p}/2$
and the notation $\cutoff_{\pstar}=1$ when $\astar\geq a_{n,p}$ and otherwise $\cutoff_{\pstar}=0$.
For suitable sources~$F$, the momentum modulator $\Jmodu(F)$ and $\Jmax(F)$ may be defined as
\bel{Jmodu-def}
\compresseq{.3}
\aligned
\Jmodu(F;r_1,r_2) & \coloneqq
\frac{1}{(n-1)|\Sphe^{n-1}|} \int_{\Omega_{r_1}\setminus\Omega_{r_2}} \!\!\! F \, r^{n-2p-2} \lambda^{2\expoP} dx
= \frac{\aire[\Lambda,\lambda]}{(n-1)|\Sphe^{n-1}|} \int_{r_1}^{r_2} \la F(s)\ra \, s^{1+a_{n,p}} ds ,
\\[1ex]
\Jmodu(F) & \coloneqq \lim_{r_2 \to +\infty} \Jmodu(F;R,r_2) ,
\qquad\qquad
\Jmax(F) \coloneqq \sup_{r_1\geq R} \Bigl| \lim_{r_2\to+\infty} \Jmodu(F;r_1,r_2) \Bigr| .
\endaligned
\ee

\begin{theorem}[Integral estimates for the localized momentum operator]
\label{thm:decayL2-M}
Consider a conical domain $\Omega_R = K \cap {}^{\complement} \Ball_R \subset \RR^n$ together with a localization function $\lambda\colon \Lambda \to (0, \lambda_0]$ with connected support $\Lambda \subset \Sphe^{n-1}$ and some ${\expoP \geq 2}$. Fix a projection exponent\footnote{In contrast to \autoref{thm:decayL2}, the lower bound $p>p_n^\flat$ is unnecessary since the momentum operator does not feature~$c_{n,p}$.} $p\in (0,n-2)$ and assume that the localization function is momentum-stable  (cf.~\autoref{def-shell-Mstab}). For some sufficiently small $\deltaM>0$, given a sharp decay exponent $\astar \in [a_{n,p}/2, a_{n,p}+\deltaM)$, consider the variational solution $Z \in H^1_{n-2-p, -\expoP}(\Omega_R,\RR^n)$ to the localized momentum equation~\eqref{equa-opera5-M} with a source term $F \in H^{1*}_{n-p,-\expoP}(\Omega_R,\RR^n)$ satisfying the radial decay
\bel{defJmoduZ}
\aligned
& F \in H^{1*}_{\astar+2,-\expoP}(\Omega_R,\RR^n) ,
\\
\text{when $\astar\geq a_{n,p}$,} \quad & \Jmodu(F;R,r) \text{ is bounded for $r\geq R$, and has an $r\to+\infty$ limit $\Jmodu(F)$.}
\endaligned
\ee
With the notation $\Zmodu\coloneqq \Jmodu(F)_j \xi^{\normal (j)} r^{-a_{n,p}}$ when $\astar\geq a_{n,p}$,
the variational solution enjoys pointwise and integral radial decay
\bel{equa-bound-int-M}
\aligned
\quad & \unquad \| Z - \cutoff_{\pstar} \Zmodu\|_{\unL^2_{-\expoP}(\Lambda,\RR^n)}^2
+ \int_r^{2r} \Bigl( \|\vartheta(Z - \cutoff_{\pstar} \Zmodu)\|_{\unL^2_{-\expoP}(\Lambda,\RR^n)}^2 + \|Z - \cutoff_{\pstar} \Zmodu\|_{\unH^1_{-\expoP}(\Lambda,\RR^n)}^2 \Bigr) \frac{ds}{s}
\\
& = o(1) \bigl( \|F\|_{H^{1*}_{\astar+2,-\expoP}(\Omega_R,\RR^n)} + \cutoff_{\pstar} \Jmax(F) \bigr)^2 r^{-2\astar} , \qquad r \geq R,
\endaligned
\ee
where $o(1)$ is bounded by constants depending on the exponents and the geometry (and not the source~$F$) and tends to zero as $r\to+\infty$ at a rate that may depend on~$F$.
\end{theorem} 

\begin{remark}
As explained for the energy in \autoref{rem:decay-reg}, the angular average $r^2\la F\ra$ belongs to $H^{1*}_{\astar}([R,+\infty))$ hence its radial integral $\Jmodu(F;R,{\,\cdot\,})$ is in $L^2_\gamma([R,+\infty))$ for all $\gamma<\astar-a_{n,p}$.  In the (super-)harmonic regime, \eqref{defJmoduZ} imposes that $\Jmodu(F;R,{\,\cdot\,})$ is bounded (regularity condition) and has a finite limit (decay condition).
\end{remark}

\paragraph{Proof of \autoref{thm:decayL2-M}: outline.}

This theorem is proven in the rest of the section, following mostly the same strategy as in \autoref{section=7.2} for the Hamiltonian operator with two main differences.
Firstly, averages of~$Z$ obey an algebraic (instead of differential) equation.  As in \autoref{section=7.2}, we shift~$Z$ by silhouette vector fields to eliminate an undetermined harmonic term.  The new unknown~$Y$ has averages expressed in terms of~$F$ and of fluctuation operators $\Kappa^{\notreM(j)}[Y]$.
Secondly, in contrast with the Hamiltonian where an additional regularity condition $\vartheta u\in H^2$ was necessary, we observe that all terms in the momentum shell identity are easily defined as distributions, and we integrate this identity applied to~$Y$.
Thanks to shell stability, we deduce an inequality of the form
\be
\Dsym(r) \lesssim |\bstar-a_{n,p}| \Dtot(r) + \Dtot(r)^{1/2} \|F\| r^{-\astar} o(1) + \|F\|^2 r^{-2\astar} o(1) ,
\ee
for a suitable dual Sobolev norm of the source term~$F$, where $\Dsym(r)$ and $\Dtot(r)$ are two \emph{different} notions of dissipation, defined momentarily: a weighted $L^2$~norm of $\Sym(\nabla Y)$, and a weighted $H^1$~norm of~$Y$.
Namely, the dissipation functionals $\Psi^\notreM_\beta[Y]$ only control some of the derivatives of~$Y$, leading to~$\Dsym(r)$, while error terms involve a larger dissipation~$\Dtot(r)$.
This complication compared to the Hamiltonian case is dealt with thanks to a Korn--Hardy inequality, which we explain before starting the proof.

At the outset, we choose an exponent $\bstar>\astar$ in the same regime (sub-harmonic or not) as~$\astar$.  In addition, we require in~\eqref{ineq-Dsymtot-F} below that $|\bstar-a_{n,p}|<\deltaM$ for some sufficiently small $\deltaM>0$: this can be done provided $\astar<a_{n,p}+\deltaM$.  Explicitly, we choose
\be
\aligned
\bstar& \in\bigl(\max(\astar,a_{n,p}-\deltaM), \; a_{n,p}\bigr)
& & \text{if } \astar < a_{n,p} ,
\\
\bstar& \in\bigl(\astar, \; a_{n,p}+\deltaM)\bigr)
& & \text{if } \astar \geq a_{n,p} .
\endaligned
\ee


\paragraph{Two equivalent dissipations.}

For a vector field $Y\in H^1_{n-2-p,-\expoP}(\Omega_R,\RR^n)$, later identified with a shift of~$Z$, we introduce the dissipations
\bel{equa-Dcal-M}\compresseq{.1}
\aligned
\Dtot & \coloneqq (J_{a_{n,p}} + I_{2\bstar})[\dtot] ,
& \dtot(r) & \coloneqq \|\vartheta Y,\nablaslash Y\|_{\unL^2_{-\expoP}(\Lambda_r)}^2 + \|Y\|_{\unL^2_{-\expoP+1}(\Lambda_r)}^2,
\\
\Dsym(r) & \coloneqq (J_{a_{n,p}} + I_{2\bstar})[\dsym] ,
& \dsym(r) & \coloneqq \bigl\|\vartheta\Yperp\!,\vartheta\Ypar+\nablaslash\Yperp\!,\Sym(\nablaslash\Ypar),\Yperp\!,\Ypar\bigr\|_{\unL^2_{-\expoP}(\Lambda_r)}^2 .
\endaligned
\ee
(Observe that in $\Dtot$ the $\unL^2_{-\expoP+1}$~norm is stronger than~$\unL^2_{-\expoP}$.)
It is also useful to note that
\bel{JI8-observation}
(J_{a_{n,p}} + I_{2\bstar})[f](r)
= \int_R^{+\infty} \min\bigl( (s/r)^{a_{n,p}} , (s/r)^{2\bstar} \bigr) f(s) \frac{ds}{s} .
\ee
The symmetric dissipation is bounded as $\Dsym(r) \leq r^{-a_{n,p}} \|Y\|_{H^1_{n-2-p,-\expoP}(\Omega_R,\RR^n)}^2$ by the variational norm, and as explained for the Hamiltonian in~\eqref{Dcalr-smallo}, splitting the integral into small and large~$s$ improves the decay rate to
\bel{Dsym-o1}
\Dsym(r) = r^{-a_{n,p}} \|Y\|_{H^1_{n-2-p,-\expoP}(\Omega_R,\RR^n)}^2 o(1) , \qquad r\to+\infty .
\ee
The following result shows that the total dissipation $\Dtot(r)$ is well-defined and admits the same decay.

\begin{lemma}[Weighted Korn--Hardy inequality]
\label{lem:Korn7}
The dissipations obey $\Dsym(r) \simeq \Dtot(r)$, with equivalence constants that are uniform for $\bstar$ in a compact subinterval of $(0,+\infty)$.
\end{lemma}

\begin{proof}
\noindent{\it 1. Global Korn--Hardy inequality.}
The bound $\Dsym(r)\leq 2\Dtot(r)$ is obvious, with a factor of~$2$ due to $|\vartheta\Ypar+\nablaslash\Yperp|^2\leq 2|\vartheta\Ypar|^2+2|\nablaslash\Yperp|^2$.
For the other direction, the Korn--Hardy inequality in \autoref{lem:PoincareKornHardyD} controls in an integral sense $\lambda^{-1}Y$ and $r\del Y$ (namely all of~$\dtot$) by the symmetrized gradient $\Sym(\del Y)_{ij}=\frac{1}{2}(\del_i Y_j+\del_j Y_i)$ in~$\RR^n$.  The latter decomposes as
\be
r^2 |\Sym(\del Y)|^2
= (\vartheta\Yperp)^2
+ (1/2) \bigl|(\vartheta-1)\Ypar+\nablaslash\Yperp\bigr|^2
+ \bigl|\Sym(\nablaslash\Ypar)+\gslash\Yperp\bigr|^2
\lesssim \dsym .
\ee
Thus in our context the Korn--Hardy inequality reads as follows for any $q>0$ (provided $\expoP>1$),
\bel{dtotL2q}
\int_R^{+\infty} \dtot[Y](s) s^{2q} \frac{ds}{s}
\lesssim \int_R^{+\infty} \dsym[Y](s) s^{2q} \frac{ds}{s} .
\ee

\noindent{\it 2. Localizing the dissipation bound.}
We now wish to splice together the bounds for $q=a_{n,p}/2$ and $q=\bstar$ to get the weight in~\eqref{JI8-observation}.
Consider a smooth monotonic function $\kappa\colon(0,+\infty)\to[0,1]$ with $\kappa(\rho)=0$ for $\rho\leq 1$ and $\kappa(\rho)=1$ for $\rho\geq 2$.  Given a vector field $Y$ and a radius $r\geq R$, we define $Y_1(s)=\kappa(s/r)Y(s)$, which is supported on $[r,+\infty)$ and coincides with $Y$ on $[2r,+\infty)$.  By the triangle inequality the dissipation of~$Y$ is bounded by that of $Y_1$ and $Y_2\coloneqq Y-Y_1$, which we bound using~\eqref{dtotL2q} and the fact that in the interval $s\in[r,2r]$ the factors $(s/r)^{a_{n,p}}$ and $(s/r)^{2\bstar}$ are equivalent:
\be
\aligned
\Dtot[Y] & \leq 2\Dtot[Y_1] + 2\Dtot[Y_2]
\\
& \leq 2 r^{-a_{n,p}} \int_R^{+\infty} \dtot[Y_1](s) s^{a_{n,p}} \frac{ds}{s}
+ 2 r^{-2\bstar} \int_R^{+\infty} \dtot[Y_2](s) s^{2\bstar} \frac{ds}{s}
\\
& \lesssim r^{-a_{n,p}} \int_R^{+\infty} \dsym[Y_1](s) s^{a_{n,p}} \frac{ds}{s} + r^{-2\bstar} \int_R^{+\infty} \dsym[Y_2](s) s^{2\bstar} \frac{ds}{s}
\\
& \lesssim \Dsym[Y_1] + \Dsym[Y_2] .
\endaligned
\ee
To finish the proof, we note that the derivatives $s\del_s\kappa(s/r)=(\rho\del_\rho\kappa)|_{\rho=s/r}$ and $\nablaslash\kappa(s/r)=0$ are uniformly bounded, and therefore $\dsym[Y_1]+\dsym[Y_2]\lesssim\dsym[Y]$ and likewise for the integrated dissipations~$\Dsym$.
All constants used above are locally uniform in the exponent $q>0$, leading to the stated uniformity in~$\bstar$.
\end{proof}


\paragraph{Shifting the solution by a harmonic term.}

Recall the positive-definite structure matrix with components $T_{kl}=\delta_{kl}+\la\xh_k\xh_l\ra$.
Jumping ahead somewhat, we introduce the following notation for any vector field (cf.~\eqref{equa-fluct-vectors}--\eqref{equa--813} below for a detailed motivation)
\be
\projP^k(Z) \coloneqq (T^{-1})^{kl} \, \bigl\la 2 \xh_l \Zperp + \Zpar{}_l \bigr\ra,
\qquad
\matQ^{(j)k} \coloneqq \projP^k(\xi^{\normal(j)}) = (T^{-1})^{kl} \, \bigl\la 2 \xh_l \xi^{\normal (j) \perp} + \xi^{\normal (j) \parallel}{}_l \bigr\ra .
\ee
In \autoref{section=9.3}, below, we integrate the differential equation $\notreM^\lambda[Z]=F$ against elements of the co-kernel and kernel of~$\ssB^\lambda$ and deduce a second-order differential equation~\eqref{Zave-diffeq} for the averages~$\projP^k(Z)$.  Its solution in \autoref{prop-87-moment},
\bel{sol-Z}
(\Xi^\notreM)^{(j)}{}_{k} \projP^k(Z)
- \Kappa^{\notreM (j)}[Z^\fluc]
= \Theta^{F(j)} + \ChZ{}^{(j)} r^{-a_{n,p}} ,
\ee
involves homogeneous harmonic terms with $n$ integration constants~$\ChZ{}^{(j)}$.
The same identity holds with $Z$~replaced by one of the harmonic kernels $\xi^{\normal(i)} r^{-a_{n,p}}$, without source term since $\notreM^\lambda[\xi^{\normal(i)}r^{-a_{n,p}}]=0$, and we evaluate in~\eqref{aveKappa-silhouette} the homogeneous term.  This results in
\be
(\Xi^\notreM)^{(j)}{}_{k} \matQ^{(i)k} r^{-a_{n,p}} - \Kappa^{\notreM (j)}[\xi^{\normal(i)\fluc}r^{-a_{n,p}}]
= (\Xi^\notreM)^{(i)}{}_{k} \matQ^{(j)k} r^{-a_{n,p}} .
\ee
Upon subtracting from~\eqref{sol-Z} a linear combination of this identity with $n$~constants~$v_{(i)}$ we find that
\bel{Y-of-Z-def}
Y \coloneqq Z - \bigl(\ChZ(\Xi^\notreM\matQ^t)^{-1}\bigr)_{(i)} \xi^{\normal(i)} r^{-a_{n,p}}
\quad \text{obeys} \quad
(\Xi^\notreM)^{(j)}{}_{k} \projP^k(Y) - \Kappa^{\notreM (j)}[Y] = \Theta^{F(j)} .
\ee
The choice of $v_{(i)}$ involves inverses of the matrices $\Xi^\notreM$ and~$\matQ$.
Radial stability states that $\Xi^\notreM$ is invertible.
Harmonic stability~\eqref{equa-stable-M-414} ensures that the matrix with components $\matQ^{(i)l}T_{kl}=\la 2\xh_k\xi^{\normal(i)\perp}+\xi^{\normal(i)\parallel}{}_k\ra$ is also invertible, as otherwise a non-zero linear combination~$\xi$ of the silhouette vector fields would have vanishing averages $\la 2\xh_k\xiperp+\xipar{}_k\ra$ and vanishing $\ssrmB^\lambda[\xi]$, yet non-zero norm, in contradiction with~\eqref{equa-stable-M-414}.


\paragraph{Integrating the shell identity.}

We consider next the momentum shell identity~\eqref{main-func-identity-MM} for the unknown $Y\in H^1_{n-2-p,-\expoP}(\Omega_R,\RR^n)$, with a source $F\in H^{1*}_{\astar+2,-\expoP}(\Omega_R,\RR^n)$, and integrate it following~\eqref{Integrationoftheshellidentity}.  There exist constants $C_0,\Cstar\in\RR$ such that
\bel{equa-shell-mom}\compresseq{.7}
\aligned
\Phi^\notreM[Y] = {} & \Omega^\Phi + \Omega^\Psi + \Omega^\Mu + C_0 r^{-a_{n,p}} + \Cstar r^{-2\bstar} ,
\\
\Omega^\Phi \coloneqq {} & (2\bstar-2a_{n,p}) I_{2\bstar}\bigl[\Phi^\notreM[Y]\bigr] ,
\endaligned
\quad\ \
\Omega^\Psi \coloneqq - \frac{J_{a_{n,p}}\bigl[\Psi^\notreM_{a_{n,p}}[Y]\bigr] + I_{2\bstar}\bigl[\Psi^\notreM_{2\bstar}[Y]\bigr]}{2\bstar-a_{n,p}} ,
\ee
with $\Omega^\Mu$~defined as follows.
The source functional $\Mu^\notreM[Y,F]=2r^2\la Y\cdot F\ra$ is a distribution, and we apply the solution operator $J_{a_{n,p}}+I_{2\bstar}$ to it by treating the weight in~\eqref{JI8-observation} as a test function: we set
\be
\aligned
\Omega^\Mu(r) & = \la \Mu^\notreM[Y,F], \varphi_r \ra = 2 \la F , \uY \ra ,
\\
\varphi_r(s) & = \frac{\min( (s/r)^{a_{n,p}} , (s/r)^{2\bstar})}{2\bstar-a_{n,p}} ,
\qquad
\uY(x) = |x|^2 \varphi_r(|x|) Y(x) .
\endaligned
\ee
The vector field $\uY$ is in an appropriate~$H^1$ space since
\be
\aligned
\|\uY\|_{H^1_{-\astar-2,-\expoP}(\Omega_R,\RR^n)}^2
& \lesssim \int_R^{+\infty} \bigl(\varphi_r(s)^2 + (s\del_s\varphi_r(s))^2\bigr) \dtot(s) s^{-2\astar-1} ds
\\
& \lesssim r^{-2\astar} \int_R^{+\infty} \min((s/r)^{2a_{n,p}-2\astar}, (s/r)^{4\bstar-2\astar}) \dtot(s) s^{-1} ds
\\
& \lesssim r^{-2\astar} \Dtot(r) ,
\endaligned
\ee
where we used that $2a_{n,p}-2\astar\leq a_{n,p}<2\bstar<4\bstar-2\astar$.
Therefore
\bel{mom-OmegaMu}
|\Omega^\Mu(r)| \lesssim r^{-\astar} \Dtot(r)^{1/2} \|F\|_{H^{1*}_{\astar+2,-\expoP}(\Omega_R,\RR^n)} .
\ee
This is easily improved by an $o(1)$ factor following the same steps as in the Hamiltonian case~\eqref{ImprovedboundonthesourceOmegaMu}, because of lack of concentration at infinity in $H^{1*}$ and because $\bstar>\astar$.


\paragraph{Controlling the homogeneous terms.}

By momentum shell stability (\autoref{def-shell-Mstab}), there exist constants $\gamma_\Phi,\gamma_\Psi>0$ such that for $\beta\in\{a_{n,p},2a_{n,p}\}$,
\be
0 \leq \Phi^\notreM[Y] \leq \gamma_\Phi \|Y\|_{\unL^2_{-\expoP}(\Lambda)}^2 , \qquad
\bigl|\Psi^\notreM_{a_{n,p}}[Y]\bigr| + \bigl|\Upsilon^\notreM[Y]\bigr| \leq \gamma_\Psi \bigl( \|\vartheta Y\|_{\unL^2_{-\expoP}(\Lambda)}^2 + \|Y\|_{\unH^1_{-\expoP}(\Lambda)}^2 \bigr) .
\ee
In particular,
\bel{mom-OmegaPhi}
\Omega^\Phi \leq |2\bstar-2a_{n,p}| \gamma_\Phi \Dtot , \qquad
|\Omega^\Phi| + |\Omega^\Psi| \lesssim \Dtot .
\ee
Thanks to \eqref{mom-OmegaMu}, \eqref{mom-OmegaPhi}, $\astar\geq a_{n,p}/2$ and $\Dtot=o(r^{-a_{n,p}})$ (from \eqref{Dsym-o1} and \autoref{lem:Korn7}), all terms in~\eqref{equa-shell-mom} decay pointwise faster than $r^{-a_{n,p}}$ except $C_0 r^{-a_{n,p}}$, therefore $r^{a_{n,p}}\Phi^\notreM[Y]\to C_0$.  The integrability of $\Phi^\notreM[Y]r^{a_{n,p}-1}dr$ then forces $C_0$ to vanish.
Evaluating \eqref{equa-shell-mom} at some arbitrary radius, say $r=2R$ and using these same coarse bounds leads to a control of the other homogeneous term~$\Cstar$ by $\Dtot(2R)$ hence by $\Dsym(2R)$, controlled by~\eqref{Dsym-o1} by the variational norm of~$Y$ and by the source~$F$:
\be
C_0 = 0 , \qquad |\Cstar| \lesssim \|Y\|_{H^1_{a_{n,p}/2,-\expoP}(\Omega_R,\RR^n)}^2 + \|F\|_{H^{1*}_{\astar+2,-\expoP}(\Omega_R,\RR^n)}^2 .
\ee

\paragraph{Shell Korn functional.}

We will use momentarily the shell stability condition~\eqref{equa-conditionM2}, which involves an additional term $\shellKorn[Z]$, whose radial integral $(J_{a_{n,p}}+I_{2\bstar})\bigl[\shellKorn[Z]\bigr]$ we consider now.  Specifically we seek a lower bound.
We split the radial integral into dyadic intervals $[2^jR,2^{j+1}R)$ for $j\geq 0$. Denoting by $\mu_j^{\pm}$ the maximum and minimum of the radial weight $\min( (s/r)^{a_{n,p}} , (s/r)^{2\bstar})$ for $s$ in the dyadic interval, observe that $\mu_j^+/\mu_j^-\leq 2^{2\bstar}$.
Recall then that $\shellKorn[Z]$ is the difference of two terms~\eqref{def-Korn-shell}.
The negative contribution is controlled by noting first that the minimum over ambient Killing vectors~$\zeta_s$ independently for each~$s$ is bounded above by the minimum over a single~$\zeta$ throughout the dyadic interval, and then applying the ambient Korn inequality~\eqref{ambient-Korn}:
\be
\aligned
& 2^{-2a_{n,p}} (\CKornM)^{-2} \int_{2^j R}^{2^{j+1} R} \min_{\zeta_s} \|\Zpar - \zeta_s^\parallel\|_{\unH^1_{-\expoP}(\Lambda_s)}^2 \, \min( (s/r)^{a_{n,p}} , (s/r)^{2\bstar}) \frac{ds}{s}
\\
& \leq 2^{-2a_{n,p}} (\CKornM)^{-2} \mu_j^+ \min_\zeta \biggl( \int_{2^j R}^{2^{j+1} R} \|\Zpar - \zeta^\parallel\|_{\unH^1_{-\expoP}(\Lambda_s)}^2 \, \frac{ds}{s} \biggr)
\\
& \leq 2^{-2a_{n,p}} \mu_j^+
\int_{2^j R}^{2^{j+1} R} \bigl\|\Sym(\del Z)\bigr\|_{\unL^2_{-\expoP}(\Lambda_s)}^2 \, \frac{ds}{s}
\\
& \leq 2^{2\bstar-2a_{n,p}}
\int_{2^j R}^{2^{j+1} R} \bigl\|\Sym(\del Z)\bigr\|_{\unL^2_{-\expoP}(\Lambda_s)}^2 \, \min( (s/r)^{a_{n,p}} , (s/r)^{2\bstar}) \frac{ds}{s} .
\endaligned
\ee
Summing over $j\geq 0$, this results in the lower bound
\be
\aligned
\quad & \unquad (J_{a_{n,p}}+I_{2\bstar})\bigl[\shellKorn[Z]\bigr](r)
\\
& \geq (1 - 2^{2\bstar-2a_{n,p}}) \int_R^{+\infty} \bigl\|\Sym(\del Z)\bigr\|_{\unL^2_{-\expoP}(\Lambda_s)}^2 \, \min( (s/r)^{a_{n,p}} , (s/r)^{2\bstar}) \frac{ds}{s}
\\
& \geq - \max(0, \, 2^{2\bstar-2a_{n,p}} - 1) \Dsym(r) .
\endaligned
\ee
In particular, for $\bstar<a_{n,p}$ the lower bound is simply zero.

\paragraph{Shell stability.}

The momentum shell stability condition states that the functionals $\Psi^\notreM_\beta[Y]$, $\beta\in\{a_{n,p},2a_{n,p}\}$, are coercive modulo the Korn remainder functional $\shellKorn[Z]$, and modulo one term that combines averages of~$Y$ and the fluctuation operator~$\Kappa^{\notreM}$.  The latter term is expressed in~\eqref{Y-of-Z-def} in terms of the source contribution~$\Theta^F$.
This gives a one-sided bound on~$\Omega^\Psi$, for some constants $C,\gammash^0>0$,
\be
\aligned
(2\bstar-a_{n,p}) \Omega^\Psi
& = - J_{a_{n,p}}[\Psi^\notreM_{a_{n,p}}[Y]] - I_{2\bstar}[\Psi^\notreM_{2a_{n,p}}[Y]] + (2\bstar-2a_{n,p}) I_{2\bstar}[\Upsilon^\notreM[Y]]
\\
& \leq - \gammash^0 \Dsym + C (J_{a_{n,p}}+I_{2\bstar})\bigl[|T(\Xi^\notreM)^{-1}\Theta^F|^2\bigr] \\
& \quad
- \gammaKornM (J_{a_{n,p}}+I_{2\bstar})\bigl[\shellKorn[Z]\bigr]
+ |2\bstar-2a_{n,p}| \gamma_\Psi \Dtot .
\endaligned
\ee
As part of analyzing the equation for averages, we show in particular in \autoref{prop-87-moment} that $\Theta^F=\Theta^F_1+\Theta^F_2$ where $\Theta^F_1\in L^2_{\astar}([R,+\infty))$ and $\Theta^F_2=o(r^{-\astar})$ pointwise, with norms controlled by the $H^{1*}_{\astar+2,-\expoP}$ norm of~$F$ together with $\cutoff_{\pstar}\Jmax(F)$.
Analogously to~\eqref{integralThetaE1} in the Hamiltonian case, the $J_{a_{n,p}}+I_{2\bstar}$ integral of $|T(\Xi^\notreM)^{-1}\Theta^F|^2$ is thus bounded pointwise as $o(r^{-2\astar})$.
Combining our bounds into~\eqref{equa-shell-mom} yields
\bel{ineq-Dsymtot-F}
\aligned
\quad & \unquad
\frac{\gammash^0 - \gammaKornM \max(0, 2^{2\bstar-2a_{n,p}}-1))}{2\bstar-a_{n,p}} \Dsym(r)
- |2\bstar-2a_{n,p}| \Bigl( \gamma_\Phi + \frac{\gamma_\Psi}{2\bstar-a_{n,p}} \Bigr) \Dtot(r)
\\
& \leq \Dtot(r)^{1/2} \|F\|_{H^{1*}_{\astar+2,-\expoP}(\Omega_R,\RR^n)} r^{-\astar} o(1) \\
& \quad
+ \Bigl( \|F\|_{H^{1*}_{\astar+2,-\expoP}(\Omega_R,\RR^n)} + \cutoff_{\pstar} \Jmax(F) \Bigr)^2 r^{-2\astar} o(1) ,
\endaligned
\ee
where as usual $o(1)$ are bounded independently of the source term~$F$ and tend to zero as $r\to+\infty$ at some $F$-dependent rate.
Thanks to \autoref{lem:Korn7}, the left-hand side controls $\Dtot(r)$ for $\bstar$ sufficiently close to~$a_{n,p}$.  We thus reach a quadratic inequality on $\Dtot(r)^{1/2}$, which bounds $\Dtot(r)$ by $o(r^{-\astar})$ times some norm of~$F$.  This immediately gives the bounds~\eqref{equa-bound-int-M} with $Z-\cutoff_{\pstar}\Zmodu$ replaced by~$Y$ since the dissipation~$\Dtot(r)$ controls both terms in the left-hand side: $\Dtot(r)$ manifestly controls the $H^1$ norm on $\Omega_r\setminus\Omega_{2r}$ as stated, and controls the pointwise norm through an easy radial integration, explained below~\eqref{simpler-dissip} in the Hamiltonian case.

\paragraph{Identifying the harmonic term.}
For $\astar<a_{n,p}$ we are done since $Z-\cutoff_{\pstar}\Zmodu$ and $Y$ differ by an $r^{-a_{n,p}}$ term, which decays faster than the bounds we established.
For $\astar\geq a_{n,p}$ we must show $\ChZ$ is given by $\Jmodu(F)$.
We integrate $\notreM^\lambda[Z]=F$ against $s^{a_{n,p}+1}ds\,d\chi$.  The angular integral is evaluated in~\eqref{equa-818bis} below (as part of deriving an equation for averages of~$Z$) and we get
\be
\aligned
& 2 \int_R^r \la F_l(s) \ra s^{a_{n,p}+1} ds
= - \int_R^r (\vartheta+a_{n,p}) \Bigl( \vartheta \la 2 \xh_l  \Zperp + \Zpar_l \ra + \la \nablaslash_l \Zperp -  \Zpar_l \ra \Bigr) (s)\, s^{a_{n,p}-1} ds
\\
& \qquad = - \bigl( \vartheta \la 2 \xh_l  \Yperp + \Ypar_l \ra + \la \nablaslash_l \Yperp -  \Ypar_l \ra \bigr) r^{a_{n,p}}
+ (\ChZ(\matQ^t)^{-1})_l + \la \mathbb{B}^\lambda[Z]_l(R) \ra R^{a_{n,p}}
\endaligned
\ee
where we have written the boundary term at~$R$ as the momentum boundary operator~\eqref{equa-bound-ope-2-lambda-M}, which vanishes by definition of variational solutions, and the boundary term at $r$ by splitting $Z = Y + (\ChZ(\matQ^t)^{-1}(\Xi^\notreM)^{-1})_{(i)} \xi^{\normal(i)} r^{-a_{n,p}}$.
By assumption on the source term the left-hand side has a limit, specifically $\eta^\lambda\Jmodu(F)_l$ with the constant $\eta^\lambda=2(n-1)|\Sphe^{n-1}|/\aire[\Lambda,\lambda]$ given in~\eqref{equa-thetalambda-M}.
The term involving~$Y$ thus also has a finite $r\to+\infty$ limit~$\ell_l$.
On the other hand, its square, averaged over $[r,2r]$, decays to zero by the dissipation bounds we have established, and therefore $\ell_l=0$.
We deduce $\eta^\lambda \Jmodu(F) = \ChZ(\matQ^t)^{-1}$.
The normalization~\eqref{equa-norm-normal} of the silhouette vectors, namely $\Xi^{\notreM(i)}{}_k = \eta^\lambda\delta^{(i)}_k$ then implies
\be
Y = Z - \bigl(\ChZ(\matQ^t)^{-1}(\Xi^\notreM)^{-1}\bigr)_{(i)} \xi^{\normal(i)} r^{-a_{n,p}}
= Z - \Jmodu(F)_{(i)} \xi^{\normal(i)} r^{-a_{n,p}} = Z - \Zmodu .
\ee
This concludes the proof of \autoref{thm:decayL2-M}, modulo some calculations in \autoref{section=9}, and especially \autoref{prop-87-moment} solving the equation for averages.


\subsection{Localized pointwise estimates}
\label{section=8.3} 

The analogue of \autoref{thm-sharp-h-localized} (concerning the Hamiltonian operator) is now stated for the momentum operator. The proof is essentially identical to the one in \autoref{section=5.4} for the Hamiltonian operator and, therefore, is omitted. 
We recall that to a decay exponent $\pstar  \geq p$, one associates $\astar = \pstar - p + a_{n,p}/2$.

\begin{theorem}[Pointwise estimates for the localized momentum operator]
\label{thm-sharp-m-localized}
In the setup of \autoref{thm:decayL2-M}, with exponents $\expoP\geq 2$ and $p\in(0,n-2)$, a momentum-stable localization function~$\lambda$ on a connected domain~$\Lambda$ (cf.~\autoref{def-shell-Mstab}), and a sharp decay exponent $\astar\in[a_{n,p}/2,a_{n,p}+\deltaM)$ for some small enough $\deltaM>0$, consider the variational solution $Z\in H^1_{n-2-p,-\expoP}(\Omega_R,\RR^n)$ to~\eqref{equa-opera5-M} with a source term~$F$ with finite norm
\bel{NbbastarF-def}
\Nbb_{\astar}^{\notreM} \coloneqq \|F\|^{N-1,\alpha}_{\Omega_R,\astar+2,-\expoPp-1}+\|F\|_{H^{1*}_{\astar+2,-\expoP}(\Omega_R,\RR^n)} + \cutoff_{\pstar} \Jmax(F) < +\infty
\ee
for some exponent $\expoPp \geq \expoP + (n+1)/2$.
In the \mbox{(super-)harmonic} case, $\Jmax(F)$~is defined in~\eqref{Jmodu-def}, and $\Jmodu(F)$ and $\Zmodu=\Jmodu(F)_j\xi^{\normal(j)}r^{-a_{n,p}}$ as in~\eqref{defJmoduZ}.  Fix some radius $R'> R$.  Then\footnote{We emphasize that the norms are over $\Omega_R$ or $\Omega_{R'}$, as specified.}
\bel{harmonic-regime-713}
\bigl\| Z - \cutoff_{\pstar} \Zmodu \bigr\|^{N+1,\alpha}_{\Omega_{R'}, \astar, -\expoPp+1}
+ \sup_{\beta\in[a_{n,p}/2,\astar)} \Bigl( (\astar-\beta)^{1/2} \bigl\| Z - \cutoff_{\pstar} \Zmodu \bigr\|_{H^1_{\beta,-\expoP}(\Omega_{R'},\RR^n)} \Bigr)
\lesssim  \Nbb_{\astar}^{\notreM} .
\ee
For $\astar=a_{n,p}/2$, the interval in the supremum is empty and the supremum term is omitted.
Furthermore, the left-hand side, with the $C^{N+1,\alpha}$ H\"older norm replaced by a $C^1$~norm, tends to zero in the limit $R'\to+\infty$.
\end{theorem}


\section{Consequences of momentum harmonic and radial stability}
\label{section=9}

\subsection{The harmonic-spherical decomposition}
\label{section=9.1}

We now analyze the structure of the localized momentum equations. The relevant decomposition is presented first and we then proceed with the analysis of the kernel of the harmonic momentum operator (in \autoref{section=9.2}) and next the evolution of the spherical averages of the solutions in terms of fluctuations (in \autoref{section=9.3}). Specifically, we consider the second-order operator (with implicit summation in $j$) 
\be
\aligned
\notreM^\lambda[Z]^i 
& \coloneqq - \frac{1}{2} \Big(
(\del_j \del_j Z^i + \del_j \del_i Z^j) + \big(\del_j \log ( r^{n-2-2p} \lambda^{2\expoP} ) \big) \bigl(\del_j Z^i + \del_i Z^j \bigr)
\Big).
\endaligned
\ee
In our decomposition below, observe that $\Brr^{\bullet\bullet}$ contains radial derivatives only (and does not depend upon $\lambda$), while $\ssB^{\lambda\bullet\bullet}[Z]$ contains tangential derivatives only. When writing $\xi=(\xiperp,\xipar)$  we can either regard $\xipar$ as a vector in $\RR^n$ with components $\xipar_i$ (with $i=1,\dots,n$) or as a tangent vector to the sphere with components denoted by $\xipar_a$, using abstract Penrose indices $a,b,\dots$ on the unit sphere~$\Sphe^{n-1}$. In the calculations below it is convenient to use both standpoints. Recall also that $\nablaslash$ is the Levi--Civita connection of the round metric~$\gslash$ on the sphere, and that $\Sym(\nablaslash \xipar)_{ab} \coloneqq \frac{1}{2} \bigl( \nablaslash_a \xipar_b + \nablaslash_b \xipar_a\bigr)$.

\begin{lemma}[Harmonic-spherical decomposition of the localized momentum operator]
\label{lem:sph-mom}
The second-order elliptic operator obtained by composing the linearized momentum constraint (around the Euclidean data set) and its formal adjoint, together with a weight $(\lambda^\expoP r^{n/2-1-p})^2$ (with $\vartheta\lambda= 0$) enjoys the decomposition 
\be
\aligned 
r^2 \notreM^{\lambda\bullet\bullet}[Z]
& = \Brr^{\bullet\bullet}[Z] + \Brs^{\lambda\bullet\bullet}[Z] + \ssB^{\lambda\bullet\bullet}[Z],
\endaligned
\ee
in which both $Z_i = \xh_i\Zperp+ \Zpar_i$ and the operator itself are decomposed into a component $\Zperp$ orthogonal to the sphere (a scalar field) and a component $\Zpar$ parallel to it (a vector field), with 
\bse
\bel{Brs-expr-main}
\aligned
\ssB^{\lambda\perp\perp}[\xiperp]
& = (n-1) \xiperp - \frac{1}{2} \lambda^{-2\expoP} \nablaslash \cdot \bigl( \lambda^{2\expoP} \nablaslash \xiperp \bigr), 
\\
\ssB^{\lambda\parallel\perp}[\xiperp]_a & = - \frac{1}{2} \nablaslash_a\xiperp - \lambda^{-2\expoP} \nablaslash_a\bigl(\lambda^{2\expoP} \xiperp\bigr),
\\
\ssB^{\lambda\perp\parallel}[\xipar]
& = \frac{1+a_{n,p}}{2} \lambda^{-2\expoP} \nablaslash \cdot \bigl(\lambda^{2\expoP} \xipar \bigr) + \nablaslash \cdot \xipar, 
\\
\ssB^{\lambda\parallel\parallel}[\xipar]_a  & = \frac{1+a_{n,p}}{2} \xipar_a
-  \lambda^{-2\expoP} \nablaslash^b\Bigl(\lambda^{2\expoP} \Sym(\nablaslash \xipar)_{ab} \Bigr), 
\endaligned
\ee
and (with all other terms taken to vanish identically)
\bel{equa-Bmoment} 
\aligned
\Brr^{\perp\perp}[\Zperp]
& = - \vartheta(\vartheta+a_{n,p}) \Zperp,
& 
\Brs^{\lambda\parallel\perp}[\Zperp]_a & = - \frac{1}{2} (\vartheta+a_{n,p}) \nablaslash_a\Zperp,
\\
\Brs^{\lambda\perp\parallel}[\Zpar]
& = - \frac{1}{2} \lambda^{-2\expoP} (\vartheta+a_{n,p}) \nablaslash \cdot \bigl(\lambda^{2\expoP} \Zpar \bigr),
&
\Brr^{\parallel\parallel}[\Zpar]_a & = - \frac{1}{2} \vartheta(\vartheta+a_{n,p}) \Zpar_a .
\endaligned
\ee
\ese
\end{lemma}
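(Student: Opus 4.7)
The plan is to carry out a direct coordinate computation in Euclidean spherical coordinates on $\Omega_R \subset \RR^n$. I would write the Euclidean gradient as $\del_i = \xh_i \, r^{-1}\vartheta + r^{-1} \nablaslash_i$, where $\nablaslash_i$ denotes the tangential component of the full gradient (extended off the sphere by constancy in $r$), and decompose the vector field as $Z_i = \xh_i \Zperp + \Zpar_i$ with $\xh_i \Zpar_i = 0$. The logarithmic weight splits cleanly as $\del_j \log( r^{n-2-2p}\lambda^{2\expoP}) = (n-2-2p) \, \xh_j/r + (1/r)\nablaslash_j(2\expoP \log\lambda)$, since $\lambda$ depends only on $\xh$; note that $n-2-2p = (a_{n,p}/2) - p + \text{const}$, and the constant $a_{n,p}=2(n-2-p)$ will ultimately govern the harmonic exponent that appears throughout $\Brr$ and $\Brs^\lambda$.

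Next, I would expand the symmetrized derivative $\del_j Z^i + \del_i Z^j$ and then apply the second derivative in the formula $\notreM^\lambda[Z]^i = -\tfrac{1}{2} \omega_{p+1}^{-2}\del_j(\omega_{p+1}^2(\del_j Z^i+\del_i Z^j))$, using $\del_j \xh_i = r^{-1}(\delta_{ij}-\xh_i\xh_j)$ repeatedly. Multiplying by $r^2$, I would then project the result along $\xh^i$ (giving the $\perp$ component) and along $(\delta^{ik}-\xh_i\xh_k)$ (giving the $\parallel$ component). At this stage, each resulting term is a sum of monomials in $\vartheta$, $\nablaslash$ (acting on $\Zperp$ or $\Zpar$) and $\nablaslash \log \lambda^{2\expoP}$; I would sort them into three buckets according to the stated rule: purely radial derivatives of $Z$ go into $\Brr$; terms mixing one radial $\vartheta$-factor on $Z$ with a tangential object (either $\nablaslash$ on $Z$ itself or the weight-derivative of $\lambda$) go into $\Brs^\lambda$; and terms containing no radial derivative of $Z$ go into $\ssB^\lambda$. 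All weight-derivatives of $\lambda$ are gathered by writing them in divergence form $\lambda^{-2\expoP}\nablaslash\cdot(\lambda^{2\expoP}\,\cdot\,)$.

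The main book-keeping obstacle is the careful separation between radially generated constants (coming from $\del_j \xh_i$ and from $\vartheta r^{n-2-2p}$) and genuine angular contributions. In particular, the curvature-like term $(n-1)\xiperp$ in $\ssB^{\lambda\perp\perp}$ arises entirely from contractions of $\del_j\xh_i$ with $\delta_{ij}$, which must be isolated from the radial piece $-\vartheta(\vartheta+a_{n,p})\Zperp$ contributing to $\Brr^{\perp\perp}$; similarly, the coefficient $(a_{n,p}+1)/2$ in $\ssB^{\lambda\parallel\parallel}$ comes from combining $\del_j\log r^{n-2-2p}$ with one symmetrization, while the $\Sym(\nablaslash \xipar)$ structure comes from $\del_i \Zpar_j + \del_j \Zpar_i$ after tangential projection. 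I would manage these by organizing the computation so that at each stage a factor of $(\vartheta+a_{n,p})$ is made manifest wherever possible, since this factor annihilates harmonically decaying fields; this organization also provides a valuable internal cross-check.

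As a final verification I would substitute $Z = \xi(\xh) \, r^{-a_{n,p}}$, for which $(\vartheta+a_{n,p})\xi r^{-a_{n,p}} \equiv 0$, and confirm that all $\Brr$ and $\Brs^\lambda$ contributions vanish, leaving exactly the expressions displayed for $\ssB^{\lambda\bullet\bullet}$; this both checks the decomposition and shows consistency with \eqref{equa-asymptoOper-reapeat}. The analogous verification was carried out in \autoref{appendix=D} for the Hamiltonian, and the computation here follows the same pattern but is shorter because $\notreM^\lambda$ is only second order.
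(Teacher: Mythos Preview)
Your proposal is correct and follows essentially the same route as the paper's own derivation in \autoref{appendix=D3}: decompose $\del_i$ into radial and tangential parts, split $Z=\xh\Zperp+\Zpar$, expand $\del_iZ_j+\del_jZ_i$, apply the outer weighted divergence, project onto $\perp$ and $\parallel$ components, and finally organize each block by isolating the factor $(\vartheta+a_{n,p})$ so that $\Brr,\Brs^\lambda$ vanish on $r^{-a_{n,p}}\xi$. One small caution: the paper works first with the overcomplete operators $\dslash_i$ (tangential part of $r\del_i$, with $\xh_i\dslash_i=0$) and only at the end converts to the intrinsic $\nablaslash_a$ via identities like~\eqref{Hessf-2} and~\eqref{double-div}; writing $\nablaslash_i$ with an ambient index from the start, as you do, is fine for scalars but requires care on vectors since $\dslash_i\Zpar_j$ and $(\nablaslash\Zpar)_{ij}$ differ by a $\xh_j\Zpar_i$ term.
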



\subsection{Construction of the silhouette vectors} 
\label{section=9.2}

\paragraph{Preliminary on the kernel of the adjoint.}

We follow the same strategy as we did for the Hamiltonian operator in \autoref{section=6.2} to investigate consequences of the momentum harmonic stability condition~\eqref{equa-stable-M-414} in \autoref{def-harmonic-Mstab}.
We show that $\notreM^\lambda[Z] = 0$ admits an $n$-dimensional basis of non-trivial solutions of the form $Z=r^{-a_{n,p}}\xi$ in which $\xi=\xi(\xh)$ is~a vector field on the $(n-1)$-sphere, in other words that the operator $\ssB^\lambda[\xi] = r^{2+a_{n,p}} \notreM^\lambda[r^{-a_{n,p}}\xi]$ has an $n$-dimensional kernel.
As a first step in \autoref{section=6.2}, we observed that the adjoint Hamiltonian admits constants as solutions.  The analogous solutions of the adjoint momentum are constant vectors (somewhat obscured by splitting them into orthogonal and parallel components). We thus seek solutions to the coupled system
\be 
(\ssB^\lambda)^{*\perp\perp}[\xiperp] + (\ssB^\lambda)^{*\perp\parallel}[\xipar] = 0,
\qquad\quad 
(\ssB^\lambda)^{*\parallel\perp}[\xiperp] + (\ssB^\lambda)^{*\parallel\parallel}[\xipar] = 0, 
\ee 
in which $\xiperp$ is a scalar field and $\xipar$ is a vector field. 

\begin{lemma} 
\label{lem-kernaladjointM} 
The kernel of the adjoint $(\ssB^\lambda)^*$ of the harmonic momentum operator has dimension at least $n$, since for each $l=1,\dots,n$ 
\bel{equa-moment-adjoint-expl}
\text{the pair } \, \xistar_l \coloneqq (\xiperp, \xipar) = (\xh_l, \nablaslash \xh_l) \, \text{ belongs to } \ker\bigl((\ssB^\lambda)^*\bigr).
\ee 
\end{lemma}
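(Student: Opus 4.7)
The plan is a direct algebraic verification exploiting the observation, recorded just before the statement, that the formal adjoint $(\ssB^\lambda)^*$ coincides with the family member at $\alpha = 0$; its four components are thus read off~\eqref{Brs-expr-alpha} by setting $\alpha = 0$. It suffices to substitute $\xi^{\star(l)} = (\xh_l, \nablaslash \xh_l)$ and confirm that the two resulting equations
\[
(\ssB^\lambda)^{*\perp\perp}[\xh_l] + (\ssB^\lambda)^{*\perp\parallel}[\nablaslash \xh_l] = 0,
\qquad
(\ssB^\lambda)^{*\parallel\perp}[\xh_l]_a + (\ssB^\lambda)^{*\parallel\parallel}[\nablaslash \xh_l]_a = 0,
\]
hold identically on $\Lambda$. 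Linear independence of the $n$ resulting pairs will follow at the end from the observation that any relation $\sum_l c_l \xh_l \equiv 0$ on the open set $\Lambda \subset S^{n-1}$ would extend by real-analyticity of the underlying linear function on $\RR^n$ to the whole sphere, forcing all $c_l = 0$; this yields $\dim \coker \ssB^\lambda \geq n$.

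Three classical facts about $\xh_l$ viewed as a first-order spherical harmonic on $S^{n-1}$ will drive the computation: the eigenvalue relation $\Deltaslash \xh_l = -(n-1)\xh_l$; the Obata-type Hessian identity $\nablaslash_a \nablaslash_b \xh_l = -\xh_l \gslash_{ab}$, which in particular identifies $\Sym(\nablaslash\xipar)_{ab}$ with $-\xh_l\gslash_{ab}$ when $\xipar = \nablaslash \xh_l$; and the Ricci identity $[\nablaslash^b,\nablaslash_a]\omega_b = (n-2)\omega_a$ for any $1$-form $\omega$ on the sphere, which gives $\nablaslash^b\nablaslash_a\nablaslash_b\xh_l = \nablaslash_a\Deltaslash\xh_l + (n-2)\nablaslash_a\xh_l = -\nablaslash_a\xh_l$.

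The $\perp$-equation is immediate: the two contributions $\pm \tfrac{1}{2}\lambda^{-2\expoP}\nablaslash\cdot(\lambda^{2\expoP}\nablaslash\xh_l)$ coming respectively from $(\ssB^\lambda)^{*\perp\perp}$ and $(\ssB^\lambda)^{*\perp\parallel}$ cancel, and what remains is $(n-1)\xh_l + \Deltaslash\xh_l = 0$. The $\parallel$-equation is where the main cancellation occurs. First, the two terms $\mp\tfrac{1}{2}(a_{n,p}+1)\nablaslash_a\xh_l$ cancel outright, leaving
\[
-\lambda^{-2\expoP}\nablaslash_a(\lambda^{2\expoP}\xh_l) - \lambda^{-2\expoP}\nablaslash^b(\lambda^{2\expoP}\nablaslash_a\nablaslash_b\xh_l).
\]
Expanding both terms by Leibniz and grouping the four resulting pieces, the two that do not differentiate $\lambda^{2\expoP}$ combine via the Ricci computation into $-\nablaslash_a\xh_l + \nablaslash_a\xh_l = 0$, while the two that do differentiate $\lambda^{2\expoP}$ combine via the Obata identity --- which converts $(\nablaslash^b\lambda^{2\expoP})\nablaslash_a\nablaslash_b\xh_l$ into $-\xh_l\nablaslash_a\lambda^{2\expoP}$ --- into $-\xh_l\lambda^{-2\expoP}\nablaslash_a\lambda^{2\expoP} + \xh_l\lambda^{-2\expoP}\nablaslash_a\lambda^{2\expoP} = 0$. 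I expect this simultaneous cancellation of the $\lambda$-derivative terms to be the only genuinely delicate step; it is precisely what singles out $\xipar = \nablaslash\xh_l$ as the correct tangential companion for $\xiperp = \xh_l$, rather than any other tangent field built from $\xh_l$.
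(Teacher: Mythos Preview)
Your proof is correct and follows essentially the same direct-verification approach as the paper: plug $(\xh_l,\nablaslash\xh_l)$ into the four components of $(\ssB^\lambda)^*$ at $\alpha=0$, use the eigenvalue relation $\Deltaslash\xh_l=-(n-1)\xh_l$ for the $\perp$-equation, and use the Hessian identity $\nablaslash_a\nablaslash_b\xh_l=-\gslash_{ab}\xh_l$ to produce the cancellations in the $\parallel$-equation. The only cosmetic difference is that for the non-$\lambda$-derivative piece $\nablaslash^b\nablaslash_a\nablaslash_b\xh_l$ you invoke the Ricci commutator, whereas the paper (and you, for the $\lambda$-derivative piece) simply substitutes the Hessian identity directly to get $\nablaslash^b(-\gslash_{ab}\xh_l)=-\nablaslash_a\xh_l$; your explicit linear-independence remark is a welcome addition that the paper leaves implicit.
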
 

\begin{proof} Using $\xipar = \nablaslash\xiperp$ we have 
$(\ssB^\lambda)^{*\perp\perp}[\xiperp] + (\ssB^\lambda)^{*\perp\parallel}[\xipar]
= (n-1) \xiperp + \Deltaslash\xiperp$, which vanishes since $\xh_l$ are eigenfunctions of the spherical Laplacian $-\Deltaslash$ with eigenvalue $(n-1)$. Next, using $\xipar = \nablaslash\xiperp$ we find 
\be
\aligned
& (\ssB^\lambda)^{*\parallel\perp}[\xiperp]_a + (\ssB^\lambda)^{*\parallel\parallel}[\xipar]_a \\
&  = - \lambda^{-2\expoP} \nablaslash_a(\lambda^{2\expoP}) \xiperp
 - \lambda^{-2\expoP} \nablaslash^b(\lambda^{2\expoP}) \nablaslash_a\nablaslash_b\xiperp
- \nablaslash_a\xiperp - \nablaslash^b\nablaslash_a\nablaslash_b\xiperp.
\endaligned
\ee
To see that this vanishes, we note that the function $\xiperp=\xh_l$ obeys $\nablaslash_a\nablaslash_b\xiperp = - \gslash_{ab}\xiperp$.
\end{proof} 


\paragraph{Dimension of the kernel and cokernel.}

We can now state our main result concerning the operator $\ssB^\lambda$, proven in the rest of the section.

\begin{proposition}[Kernel properties of the harmonic momentum operator]
\label{prop:Mom-kernel}
Consider a conical domain $\Omega_R = K \cap {}^{\complement} \Ball_R \subset \RR^n$ together with a localization function $\lambda\colon \Lambda \to (0, \lambda_0]$ with connected support $\Lambda \subset \Sphe^{n-1}$ and some ${\expoP \geq 2}$, such that the harmonic stability condition~\eqref{equa-stable-M-414} in \autoref{def-harmonic-Mstab} holds. Then the operator~$\ssB^\lambda$ and its adjoint  
have $n$-dimensional kernels, that is, 
\bel{equa-coker-ker-moment}
\ker\ssB^{\lambda *} = \Span\big\{ \xistar_l, 1\leq l\leq n \big\} ,
\qquad
\ker\ssB^\lambda = \Span\big\{ \xi^{\normal(l)}, 1\leq l\leq n \big\} ,
\ee
consisting, on the one hand, of linear combinations of vector fields $\xistar_l$ given explicitly in~\eqref{equa-moment-adjoint-expl} and, on the other hand, of linear combinations of $n$~silhouette vector fields $\xi^{\normal(l)}\colon\Lambda\to\RR^n$, $1\leq l\leq n$, which are linearly independent as vector fields.\footnote{The vectors $\xi^{\normal(l)}(x)$ at a given $x\in\Lambda$ are not necessarily linearly independent.}
Moreover, under the radial stability condition~\eqref{equa-Xi-invertible} in \autoref{def-radial-Mstab}, the basis of silhouette vector fields 
$\xi^{\normal (j)} \in \ker(\ssB^\lambda)$ can be characterized by the average conditions in \autoref{def:normalized-kernel-basis}, namely \textup{(}the constant~$\eta^\lambda$ being given in~\eqref{equa-thetalambda-M}\textup{)}
\bel{equa-normal-moment} 
\bigl\la - \nablaslash_l \xi^{\normal (j) \perp}  + 2 a_{n,p} \xh_l\, \xi^{\normal (j) \perp} \bigr \ra 
 + (a_{n,p}+1) \la \xi_l^{\normal (j) \parallel} \ra = \eta^\lambda \, \delta^{(j)}_{l}. 
\ee
\end{proposition}

 
\paragraph{Asymptotic variational formulation.}

\bse
The bilinear form $\ssrmB^\lambda[\xi,\rho]=\fint_{\Lambda}\rho \ssB^\lambda[\xi] \, d\chi$ is non-symmetric, with
\be
\ssrmB^\lambda[\xi,\rho] - \ssrmB^\lambda[\rho,\xi]
= \frac{1}{2} a_{n,p} \fint_{\Lambda} (\rhopar \cdot \nablaslash\xiperp - \xipar \cdot \nablaslash \rhoperp) d\chi .
\ee
In addition, given the non-trivial cokernel identified in \autoref{lem-kernaladjointM},
the corresponding quadratic functional $\ssrmB^\lambda[\xi,\xi]$ (cf.~the notation\footnote{Here, we prefer to write $\ssrmB^\lambda[\xi,\xi]$ rather than $\ssrmB^\lambda[\xi]$ to emphasize quadraticity.}~\eqref{ssBalpha-quaform-0})
\bel{ssBalpha-quaform-1}
\aligned 
\ssrmB^\lambda[\xi,\xi]
= \fint_{\Lambda} \Bigl(
&  \frac{1}{2} |\nablaslash\xiperp|^2
- \frac{1}{2} (a_{n,p}+2) \xipar \cdot \nablaslash \xiperp 
+ \frac{1}{2} \bigl(a_{n,p}+1 \bigr) |\xipar|^2
\\
& 
+ \bigl|\Sym(\nablaslash \xipar) \bigr|^2 
+ 2 \xiperp \nablaslash \cdot \xipar 
+ ( n-1) (\xiperp)^2 \Bigr) \, d\chi
\endaligned
\ee
\ese
is not positive-definite since it vanishes for $\xi=\xistar_l$.
Observe that the second line in \eqref{ssBalpha-quaform-1} is non-negative since
$\bigl|\Sym(\nablaslash \xipar) \bigr|^2 = \bigl|\Sym(\nablaslash \xipar)^\circ\bigr|^2 +  |\nablaslash \cdot \xipar|^2/(n-1)$, but that the first is only bounded below by $- (a_{n,p}^2/8) |\xipar|^2$.
Nevertheless, a variant of the Lax--Milgram theorem now provides us with a unique variational solution to the equation $\ssB^\lambda[\xi] = \psi$ in the domain $\Lambda$, provided we restrict attention to data and solutions satisfying suitable average constraints, as stated now.\footnote{The source term obeys 
$\langle\psi,\xistar_l\rangle_\Lambda = \la \xh_l\, \psi^{\perp} + \psi_{l}^{\parallel}  \ra=0$, $1 \leq l \leq n$, while the solution obeys $\la 2\xh_l\, \xi^{\perp} + \xi_{l}^{\parallel}  \ra=0$, with a factor of~$2$ difference that can be tracked down to the difference between $\Brr^{\perp\perp}$ and~$\Brr^{\parallel\parallel}$ in~\eqref{equa-Bmoment}.}
Here, the weighted dual Sobolev space $\unH^{1*}_{-\expoP}(\Lambda,\RR^n)$ is defined in~\eqref{dual-Sobolev-Lambda} by duality.
 
\begin{lemma}[Variational formulation for the asymptotic localized momentum]
\label{propo-existenceM}
Consider a conical domain $\Omega_R = K \cap {}^{\complement} \Ball_R \subset \RR^n$ together with a localization function $\lambda\colon \Lambda \to (0, \lambda_0]$ with connected support $\Lambda \subset \Sphe^{n-1}$ and some ${\expoP \geq 2}$. Assume that the harmonic stability condition in~\eqref{equa-stable-M-414} in \autoref{def-harmonic-Mstab} holds. Then, for any vector field $\psi \in \unH^{1*}_{-\expoP}(\Lambda,\RR^n)$ satisfying $\langle\psi,\xistar_l\rangle_\Lambda=0$
($1 \leq l \leq n$), there exists a unique solution $\xi \in \unH^1_{-\expoP}(\Lambda,\RR^n)$ satisfying
$\bigl\la 2 \xh_l\, \xiperp + \xipar_l \bigr\ra=0$ $(1 \leq l \leq n)$ to the variational problem 
\bse
\bel{varBrho}
\ssrmB^\lambda[\xi, \rho] = \langle\psi,\rho\rangle_\Lambda
\qquad  
\rho \in \unH^1_{-\expoP}(\Lambda,\RR^n),
\ee 
which, moreover, is bounded in terms of the data, that is, 
\bel{varBrho-2}
\|\xi\|_{\unH^1_{-\expoP}(\Lambda,\RR^n)}\lesssim\|\psi\|_{\unH^{1*}_{-\expoP}(\Lambda,\RR^n)}.
\ee
\ese
\end{lemma}

\begin{proof}
The proof is essentially identical to that of \autoref{propo-existenceH} for the Hamiltonian operator.
We work with the codimension~$n$ subspace $W \coloneqq \{ \xi \in \unH^1_{-\expoP}(\Lambda,\RR^n), \, \bigl\la 2 \xh_l\, \xi^{\perp} + \xi_{l}^{\parallel} \bigr\ra  =0 \ \text{for} \allowbreak \ l=1,\dots,n \}$.
By the harmonic stability condition~\eqref{equa-stable-M-414}, the bilinear form~$\ssrmB^\lambda$ is bounded and weakly coercive in the sense that $\sup_{\| \rho \|_W= 1} \ssrmB^\lambda[\xi, \rho] \gtrsim \| \xi \|_W$ and likewise with the arguments interchanged.
Thus, by the (generalized) Lax--Milgram theorem \cite[Theorem 2.1]{Babuska}, the restriction of $\psi$ to~$W$ can be written as $\ssrmB^\lambda[\xi,{\,\cdot\,}]$ for some unique $\xi\in W$, namely \eqref{varBrho}~holds for $\rho\in W$.
By \autoref{lem-kernaladjointM}, the relation~\eqref{varBrho} holds for each of the $n$ vectors $\rho=\xistar_k$ ($1\leq k\leq n$).  These span a subspace of $\unH^1_{-\expoP}(\Lambda,\RR^n)$ complementary to~$W$ since the matrix of averages $\la 2\xh_l \xistarperp_k + \xistarpar_k\ra = \delta_{kl} + \la\xh_k\xh_l\ra = T_{kl}$ is invertible.
Finally, $\|\xi\|_{\unH^1_{-\expoP}(\Lambda,\RR^n)}=\|\xi\|_W\lesssim\|\psi|_W\|_{W'}\leq\|\psi\|_{\unH^{1*}_{-\expoP}(\Lambda,\RR^n)}$.
\end{proof} 

 
\paragraph{Proof of the first part of \autoref{prop:Mom-kernel}.}

In view of the identities ($1 \leq l \leq n$) 
\be
\langle\ssB^\lambda[\xi],\xistar_l\rangle_\Lambda
= \fint_{\Lambda} \xi \cdot \ssB^{\lambda *}[ \xistar_l ] d\chi = 0 , \qquad \xi \in \unH^1_{-\expoP}(\Lambda,\RR^n), 
\ee
we see that the image of the operator $\ssB^\lambda$ is contained in the subspace of distributions $\psi \in \unH^{1*}_{-\expoP}(\Lambda,\RR^n)$ satisfying $\langle\psi,\xistar_l\rangle_\Lambda=0$.
 Moreover, by \autoref{propo-existenceM} the image is exactly equal to that subspace, so that
$\ker\ssB^{\lambda *} = \Span\big\{ \xistar_l, 1 \leq l \leq n\bigr\}$.
On the other hand, consider next an element $\xi \in\ker\ssB^\lambda$ that satisfies the $n$ conditions 
$\la 2\xh_l\xiperp+\xipar{}_l \ra =0$.
By the harmonic stability condition~\eqref{equa-stable-M-414} we have $\| \xi \|_{\unH^1_{-\expoP}(\Lambda,\RR^n)}^2 \lesssim \ssrmB^\lambda[\xi,\xi]$, which vanishes since $\xi$ is in the kernel of $\ssB^\lambda$. Thus, $\xi$ vanishes identically. We deduce that the linear map $\xi \mapsto \bigl(\la 2\xh_l\xiperp+\xipar{}_l \ra \bigr)_{1 \leq l \leq n}$ from $\ker\ssB^\lambda$ to~$\RR^n$ is injective, hence that the kernel is {\sl at most} $n$-dimensional. To construct a basis of the kernel, for \emph{each $j$} we consider the unique solution $\xi=\xi^{(j)}$ (by \autoref{propo-existenceM}) of
\be
\ssB^\lambda[\xi^{(j)}] = \ssB^\lambda[\xistar_j] , 
\qquad 
\bigl\la 2 \xh_l\, \xi^{(j)\perp} + \xi^{(j)\parallel}{}_l \bigr\ra  = 0 
\quad (1 \leq l \leq n).
\ee
Indeed, \autoref{propo-existenceM} applies since the source satisfies $\la  \xistar_k \cdot \ssB^\lambda[\xistar_j] \ra = \la  \ssB^{\lambda *}[\xistar_k]\cdot \xistar_j \ra = 0$ ($1 \leq k \leq n$). It follows that, by construction, $\xi^{(j)} - \xistar_j\in \ker\ssB^\lambda$ and
\be
\aligned
& \bigl\la 2 \xh_k \, (\xi^{(j)} - \xistar_j)^{\perp} + (\xi^{(j)} - \xistar_j)_k^{\parallel} \bigr\ra
= - \bigl\la 2 \xh_k \, \xistar_j{}^{\perp} + \xistar_j{}_k^{\parallel} \bigr\ra
\\
& = - \bigl\la 2 \xh_k \, \xh_j + \delta_{kj} - \xh_k \xh_j  \bigr\ra
= - \bigl( \delta_{kj} + \bigl\la \xh_k \, \xh_j \bigr\ra \bigr) = - T_{kj}.
\endaligned
\ee
The matrix $(T_{kj})$ is invertible, thus we have found $n$ linearly independent elements of the kernel. This establishes \eqref{equa-coker-ker-moment}.

 
\paragraph{Proof of the second part of \autoref{prop:Mom-kernel}.} 

By the radial stability condition, there exists a basis of elements of the momentum kernel so that the matrix 
$
 \bigl\la - \nablaslash_l \xi^{\normal (j) \perp} + 2 a_{n,p} \xh_l\, \xi^{\normal (j) \perp} \bigr\ra
 +(1+a_{n,p}) \la \xi^{\normal (j) \parallel}{}_l \ra 
 $
 is invertible. Clearly, this property is independent of the specific choice of the basis. We normalize according to  \autoref{def:normalized-kernel-basis}, so that \eqref{equa-normal-moment} holds. This concludes the proof of \autoref{prop:Mom-kernel}.
 

\paragraph{Normalization for the kernel.}

Finally, we check that our normalization of the silhouette vector fields $\xi^{\normal (j)}$ in \autoref{def:normalized-kernel-basis} leads to the following property. 

\begin{lemma}[ADM momentum of the modulator]
\label{lem-ADMmomentmod}
The two-tensor field $\hmodu$ defined in $\Omega_R \subset \RR^n$ by
\be
\aligned 
\hmodu_{jk} 
& = - {1 \over 2} \, \lambda ^{2\expoP} r^{ - (n - 2) + a_{n,p}} \bigl( \del_j \Zmodu_{k} + \del_k \Zmodu_j \bigr), 
\qquad    
\Zmodu_k = J_{(j)}^\modu \, \xi^{\normal (j)}{}_k (\xh) r^{-a_{n,p}},
\endaligned
\ee
with $\xh=x/r$, has ADM momentum $\Jbb(\Omega_R, \hmodu) = \Jmodu$. 
\end{lemma}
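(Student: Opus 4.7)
The plan is to compute the surface integral in the definition~\eqref{equa-def-mass-momentum} of $\Jbb(\Omega_R, h^\modu)$ directly, by substituting the explicit expression of $Z^\modu_k = J^\modu_l \, \xi^{\normal(l)}_k(\xh) \, r^{-a_{n,p}}$ and exploiting the normalization~\eqref{equa-normal-moment} at the very end. First I would compute $\partial_i Z^\modu_k$: for an angular function $f(\xh)$, one has $\partial_i (f(\xh) r^{-a_{n,p}}) = r^{-a_{n,p}-1}(\dslash_i f - a_{n,p}\, \xh_i f)$, so that
\be
\partial_i Z^\modu_k \;=\; \frac{J^\modu_l}{r^{a_{n,p}+1}}\bigl(\dslash_i \xi^{\normal(l)}_k - a_{n,p}\, \xh_i\, \xi^{\normal(l)}_k\bigr).
\ee
Contracting with $\xh_k$ and using the decomposition $\xi^{\normal(l)}_k = \xh_k\, \xi^{\normal(l)\perp} + \xi^{\normal(l)\parallel}_k$ together with the two identities $\xh_k\dslash_k = 0$ (since $\dslash$ is purely tangential) and $\xh_k\dslash_j \xi^{\normal(l)}_k = \dslash_j \xi^{\normal(l)\perp} - \xi^{\normal(l)\parallel}_j$ (an immediate consequence of $\dslash_j \xh_k = \delta_{jk} - \xh_j \xh_k$), one obtains after a short manipulation
\be
\xh_k\bigl(\partial_j Z^\modu_k + \partial_k Z^\modu_j\bigr)
= \frac{J^\modu_l}{r^{a_{n,p}+1}}\Bigl(\nablaslash_j \xi^{\normal(l)\perp} - 2a_{n,p}\, \xh_j\, \xi^{\normal(l)\perp} - (a_{n,p}+1)\, \xi^{\normal(l)\parallel}_j\Bigr).
\ee

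Next I would insert this into the definition of $h^\modu_{jk}$ and check the radial bookkeeping: the prefactors produce $r^{(n-2-2p)-(a_{n,p}+1)} = r^{1-n}$, since $a_{n,p} = 2(n-2-p)$. Consequently $r^{n-1}\, \xh_k\, h^\modu_{jk}$ is $r$-independent and the limit in~\eqref{equa-def-mass-momentum} may be taken trivially. Using $\lambda^{2\expoP} d\xh = d\chi$ and the weighted-average notation $\la \cdot \ra = (\normcoeff^\lambda)^{-1}\int_\Lambda \cdot \, d\chi$, the resulting expression becomes
\be
\Jbb(\Omega_R, h^\modu)_j
= \frac{\normcoeff^\lambda\, J^\modu_l}{2(n-1)\,|S^{n-1}|}\,
\bigl\la -\nablaslash_j \xi^{\normal(l)\perp} + 2a_{n,p}\, \xh_j\, \xi^{\normal(l)\perp} + (a_{n,p}+1)\, \xi^{\normal(l)\parallel}_j\bigr\ra.
\ee

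Finally, the bracket in the right-hand side is exactly the one appearing in the normalization condition~\eqref{equa-normal-moment} of~\autoref{def:normalized-kernel-basis}, and evaluates to $\eta^\lambda\, \delta_{jl}$. Since $\eta^\lambda$ is defined in~\eqref{equa-zetalambda} precisely so that $\normcoeff^\lambda \eta^\lambda = 2(n-1)|S^{n-1}|$ (in perfect analogy with the choice of $\zeta^\lambda$ in the Hamiltonian case treated in~\eqref{equa-mequalm}), the prefactor collapses to $1$ and one reads off $\Jbb(\Omega_R, h^\modu)_j = J^\modu_j$. The only real obstacle is the index bookkeeping between $\xiperp$ and $\xipar$ together with the verification that $\eta^\lambda$ has been fixed with the correct numerical normalization; no analytic difficulty arises, as the integral reduces to an identity once the asymptotic form of $h^\modu$ has been extracted.
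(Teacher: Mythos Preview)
Your proposal is correct and follows essentially the same route as the paper's proof: both substitute the explicit form of $Z^\modu$, contract $\xh_k(\partial_j Z^\modu_k + \partial_k Z^\modu_j)$ using the radial/angular split and the $\perp/\parallel$ decomposition of $\xi^{\normal(l)}$, check that the $r$-powers collapse to $r^{1-n}$, and then invoke the normalization~\eqref{equa-normal-moment} with $\eta^\lambda = 2(n-1)|S^{n-1}|/\normcoeff^\lambda$ to obtain $\delta_{jl}$. The only cosmetic difference is that you introduce the $\perp/\parallel$ decomposition from the outset and package the two contractions into a single identity, whereas the paper treats $\xh_k\del_l Z^\modu_k$ and $\xh_k\del_k Z^\modu_l$ separately before passing to the decomposition; the resulting averaged expression and the final normalization step are identical.
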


\begin{proof}
\bse
 In view of~\eqref{equa-def-energy-momentum}, with $C_n \coloneqq  (n-1) \, |\Sphe^{n-1}|$ we compute  
\be
\aligned
& \Jbb(\Omega_R, \hmodu)_{l} 
= {1 \over C_n} \lim_{r \to +\infty} r^{1 + a_{n,p}}
\int_\Lambda \xh_k (-1/2) \bigl( \del_k \Zmodu_l + \del_l \Zmodu_k \bigr) \Big|_{|x|=r} \lambda^{2 \expoP} d \xh
\\
& \quad = - {J_{(j)}^\modu \over 2 C_n}  \, 
\lim_{r \to +\infty}
\int_\Lambda \Bigl( r^{a_{n,p}} x_k \del_k \bigl( \xi^{\normal (j)}{}_l (\xh) \, r^{-a_{n,p}} \bigr)
+ r^{a_{n,p}} x_k \del_l \bigl( \xi^{\normal (j)}{}_k (\xh) \, r^{-a_{n,p}}\bigr) \Bigr) \,  \lambda^{2 \expoP} d \xh .
\endaligned
\ee
We then use $x_k \del_k = \vartheta$ in the first term in the integrand, that is,
\[
r^{a_{n,p}} x_k \del_k \big( \xi^{\normal (j)}{}_l  \, r^{-a_{n,p}} \big) 
= r^{a_{n,p}} \vartheta \big( \xi^{\normal (j)}{}_l \, r^{-a_{n,p}} \big) 
= - a_{n,p} \xi^{\normal (j)}{}_l.
\]
For the second term in the integrand, use that $Y_k=\xh_k\Yperp+\Ypar{}_k$ for any vector field $Y$ to get
$x_k\del_lY_k = \del_l(r\Yperp)-Y_l = r^{-1} (\xh_l \vartheta(r\Yperp) + \nablaslash_l(r\Yperp)) - \xh_l\Yperp - \Ypar{}_l = \xh_l\vartheta\Yperp + \nablaslash_l\Yperp - \Ypar{}_l$ and deduce
\[
r^{a_{n,p}} x_k\del_l\bigl( \xi^{\normal (j)}{}_k (\xh) \, r^{-a_{n,p}}\bigr)
= - a_{n,p} \xh_l \xi^{\normal(j)\perp} + \nablaslash_l\xi^{\normal(j)\perp} - \xi^{\normal(j)\parallel}{}_l .
\]
Consequently,
\be
\aligned
\Jbb(\Omega_R, \hmodu)_{l}
& = {1 \over 2 C_n} J_{(j)}^\modu \,
\int_\Lambda \Bigl(  - \nablaslash_l \xi^{\normal (j) \perp}
+ 2 a_{n,p} \xh_l\, \xi^{\normal (j) \perp}
+ (1+a_{n,p}) \xi^{\normal (j) \parallel}{}_l 
\Bigr) d\chi
\\
& = {\aire[\Lambda,\lambda] \over 2 C_n} J_{(j)}^\modu \,
\Bigl(  - \la \nablaslash_l \xi^{\normal (j) \perp}  \ra 
+ 2 a_{n,p} \la \xh_l\, \xi^{\normal (j) \perp} \ra 
+ (1+a_{n,p}) \la \xi^{\normal (j) \parallel}{}_l \ra 
\Bigr).
\endaligned
\ee
This suggests the following normalization (which, in view of the notation in~\eqref{equa-thetalambda-M}, is consistent with~\eqref{equa-norm-normal}): 
\be
 \bigl\la - \nablaslash_l \xi^{\normal (j) \perp} + 2 a_{n,p} \xh_l\, \xi^{\normal (j) \perp} \bigr\ra 
 +(1+a_{n,p}) \la \xi^{\normal (j) \parallel}{}_l \ra 
 = {2(n-1) |\Sphe^{n-1}| \over \aire[\Lambda,\lambda]} \delta_{jl}. \qedhere 
\ee
\ese
\end{proof} 


\subsection{Radial evolution of spherical averages}
\label{section=9.3}  

\paragraph{Contracting with an element of the co-kernel.}

We now turn our attention to averages of solutions $Z$ to the momentum operator $\notreM^\lambda$ in~\eqref{equa-opera5-M}. We work with variational solutions enjoying a basic integral control and, as pointed out for the Hamiltonian equation, this integrability allows us to suppress the worst decay terms that could arise in our computation, below. We proceed by using, first, the elements of the co-kernel and, next, the elements of the kernel of the harmonic momentum operator. By combining these two sets of equations we arrive at a coupled system of second-order differential equations for a set of $n$ weighted spherical averages of~$Z$, which are easily integrated to get non-differential equations.

We begin by integrating the equations $\notreM^\lambda[Z]^j = F^j$ on a sphere of radius $r \geq R$ after multiplication by one of the vector fields 
$\xistar_l = (\xh_l, \nablaslash \xh_l)$ derived in~\eqref{equa-moment-adjoint-expl}. Using $\la \xistar_l \cdot \ssB^\lambda[Z] \ra=0$, we find ($l=1,2, \ldots, n$) 
\[
\aligned
r^2 \la F_l\ra
& = \fint_\Lambda \xistar_l \cdot F \, r^2 \, d\chi 
= \fint_\Lambda \xistar_l \cdot \notreM^\lambda[Z]  r^2 \, d\chi 
= \fint_\Lambda \xistar_l \cdot \Big(
\Brr^{}[Z] + \Brs^\lambda[Z] 
\Big) \, d\chi  
\\
& =
\fint_\Lambda  \Big(
 - \xh_l \vartheta(\vartheta+a_{n,p}) \Zperp
  - \frac{1}{2} (\vartheta+a_{n,p}) \nablaslash_l \Zperp \\
& \qquad\qquad
  - \xh_l \frac{1}{2} \lambda^{-2\expoP} (\vartheta+a_{n,p}) \nablaslash \cdot \bigl(\lambda^{2\expoP} \Zpar \bigr)
 - \frac{1}{2} \vartheta(\vartheta+a_{n,p}) \Zpar_l
\Big) \, d\chi
\\
& =   
- \vartheta(\vartheta+a_{n,p}) \la \xh_l  \Zperp\ra
- \frac{1}{2} (\vartheta+a_{n,p}) \la \nablaslash_l \Zperp\ra
+ \frac{1}{2} (\vartheta+a_{n,p}) \la \Zpar_l \ra 
- \frac{1}{2} \vartheta(\vartheta+a_{n,p}) \la \Zpar_l\ra .
\endaligned
\]
Hence, we arrive at the second-order equations ($l=1,2, \ldots, n$) 
\bel{equa-818bis} 
- \vartheta(\vartheta+a_{n,p}) \la 2 \xh_l  \Zperp + \Zpar_l \ra
= (\vartheta+a_{n,p}) \la \nablaslash_l \Zperp -  \Zpar_l \ra  
+ 2 r^2  \, \la F_l \ra.
\ee
We now seek a differential equation for $\la \nablaslash_l \Zperp -  \Zpar_l \ra$, which may involve $\la 2 \xh_l  \Zperp + \Zpar_l \ra$ and its derivatives.


\paragraph{On fluctuations and averages.}

To better separate the contribution of co-kernel elements~$\xistar_l$ to various expressions, we derive the formula, already announced in \autoref{section=3.2}, for the fluctuation of a vector field~$\xi$, expressed as follows (with implied sum over $k$): 
\be
\xi^{\fluc} \coloneqq \xi - \projP^k(\xi) \xistar_k,
\qquad 
\xi^{\fluc\perp} = \xi^\perp - \projP^k(\xi)  \xh_k ,
\qquad
\xi^{\fluc\parallel} = \xipar - \projP^k(\xi)  \nablaslash \xh_k ,
\ee
in which the constants $\projP^k(\xi)$ are chosen so that the following averages vanish:
\be
\aligned
\la 2\xh_l\xi^{\fluc\perp}+\xi^{\fluc\parallel}{}_l\ra
& = \la 2\xh_l\xi^\perp+\xi^\parallel{}_l\ra - \la 2\xh_l \projP^k(\xi) \xh_k + \projP^k(\xi) \nablaslash_l\xh_k\ra
\\
& = \la 2\xh_l\xi^\perp+\xi^\parallel{}_l\ra - (\delta_{kl} + \la \xh_k\xh_l\ra) \projP^k(\xi). 
\endaligned
\ee
This gives the following formal restatement of the earlier notation.

\begin{definition}
\label{DEF-fluct-vectors} 
The {\bf fluctuation of a vector field} $\xi$ is the vector field $\xi^{\fluc}$ defined by 
\bel{equa-fluct-vectors} 
\xi^{\fluc} \coloneqq \xi - \projP^k(\xi) \xistar_k,
\qquad
\projP^k(\xi) \coloneqq (T^{-1})^{kl} \, \bigl\la 2 \xh_l \xiperp + \xipar{}_l \bigr\ra, 
\ee
in terms of the positive-definite matrix $(T_{kl}) = (\delta_{kl} + \la \xh_k\xh_l\ra)$.
\end{definition}

Observe that $\projP^k(\xistar_l) = \delta^k_l$ hence $\xistar_l{}^\fluc=0$ and the averages $\projP^k(\xi^\fluc)=0$ vanish for any vector field~$\xi$.
We express fluctuations of the silhouette vector fields in terms of a matrix~$\matQ^{(j)k}$, with
\bel{equa--813}
\aligned 
\xi^{\normal (j) \fluc} & = \xi^{\normal (j)} - \matQ^{(j)k} \xistar_k,
\qquad 
\matQ^{(j)k}  = (T^{-1})^{kl} \, \bigl\la 2 \xh_l \xi^{\normal (j) \perp} + \xi^{\normal (j) \parallel}{}_l \bigr\ra, 
\\
\xi^{\normal (j) \fluc\perp} 
& = \xi^{\normal (j) \perp} - \matQ^{(j)k} \xh_k ,
\qquad
\xi^{\normal (j) \fluc\parallel} = \xi^{\normal (j) \parallel} - \matQ^{(j)k}  \nablaslash \xh_k. 
\endaligned
\ee


\paragraph{Contracting with an element of the kernel.}
Next, we contract the momentum operator with the silhouette vector fields  in \autoref{prop:Mom-kernel}, namely the vector fields $\xi^{\normal (j)} \in \ker(\ssB^\lambda)$ in~\eqref{equa-normal-moment}.  More precisely, we integrate the equation $\notreM^\lambda[Z]=F$ against the fluctuation~$\xi^{\normal(j)\fluc}$.
We use that $\ssB^{\lambda *}[\xistar_l]=0$ and $\ssB^\lambda[\xi^{\normal (j)}]=0$ to compute
\bel{xiBZ}
\aligned
&
\Bigl\la \xi^{\normal (j) \fluc} \cdot \ssB^\lambda[Z] \Bigr\ra
= \Bigl\la Z \cdot \ssB^{\lambda *}\bigl[ \xi^{\normal (j) \fluc}  \bigr] \Bigr\ra
= \Bigl\la Z \cdot \ssB^{\lambda *}\bigl[ \xi^{\normal (j)}  \bigr] \Bigr\ra
= \Bigl\la Z \cdot \Bigl( \ssB^{\lambda *}- \ssB^\lambda \Bigr)[\xi^{\normal (j)} ] \Bigr\ra
\\
&  = \frac{a_{n,p}}{2} \Bigl\la \nablaslash \Zperp  \cdot \xi^{\normal (j) \parallel} - \Zpar \cdot \nablaslash \xi^{\normal (j) \perp} \Bigr\ra
\\
&  = \frac{a_{n,p}}{2} \Bigl\la \nablaslash \Zperp  \cdot \matQ^{(j)k}  \nablaslash \xh_k
- \Zpar \cdot \matQ^{(j)k} \nablaslash \xh_k \Bigr\ra
+ \frac{a_{n,p}}{2} \la {\nablaslash \Zperp}  \cdot {\xi^{\normal (j) \fluc\parallel}} - {\Zpar} \cdot {\nablaslash \xi^{\normal (j) \fluc\perp}} \ra
\\ 
&  = \frac{a_{n,p}}{2} \matQ^{(j)k} \Bigl\la \nablaslash_k\Zperp - \Zpar_k \Bigr\ra
+ \frac{a_{n,p}}{2} \la {\nablaslash \Zperp}  \cdot {\xi^{\normal (j) \fluc\parallel}} - {\Zpar} \cdot {\nablaslash \xi^{\normal (j) \fluc\perp}} \ra.
\endaligned
\ee

Recalling the operator expressions~\eqref{equa-Bmoment} we deduce that  the solution to $\notreM^\lambda[Z]^j = F^j$ satisfies 
\[
\aligned 
\bigl\la \xi^{\normal (j) \fluc}\cdot F \, r^2 \bigr\ra
& = \Bigl\la \xi^{\normal (j) \fluc} \cdot \big( \Brr^{}[Z] + \Brs^\lambda[Z] + \ssB^\lambda[Z] \big) \Bigr\ra
\\
& = \frac{a_{n,p}}{2} \matQ^{(j)k} \Bigl\la \nablaslash_k\Zperp - \Zpar_k \Bigr\ra
+ \frac{a_{n,p}}{2} \la {\nablaslash \Zperp}  \cdot {\xi^{\normal (j) \fluc\parallel}}
 - {\Zpar} \cdot {\nablaslash \xi^{\normal (j) \fluc\perp}} \ra
\\
& \quad - \vartheta(\vartheta+a_{n,p}) \la \xi^{\normal (j) \fluc\perp} \Zperp\ra
 - \frac{1}{2} (\vartheta+a_{n,p}) \la \xi^{\normal (j) \fluc\parallel} \cdot \nablaslash \Zperp \ra
 \\
 & \quad + \frac{1}{2} ( \vartheta + a_{n,p}) \la \nablaslash \xi^{\normal (j) \fluc\perp} \cdot \Zpar \ra
  - \frac{1}{2} \vartheta(\vartheta+a_{n,p}) \la \xi^{\normal (j) \fluc\parallel} \cdot \Zpar \ra .
\endaligned
\]
After some obvious cancellation, we have 
\be
\aligned 
a_{n,p} \matQ^{(j)k} \Bigl\la \nablaslash_k\Zperp - \Zpar_k \Bigr\ra
& = \vartheta\Bigl(
(\vartheta+a_{n,p}) \bigl\la 2 \xi^{\normal (j) \fluc\perp} \Zperp + \xi^{\normal (j) \fluc\parallel} \cdot \Zpar\bigr\ra \\
& \qquad
+ \bigl\la \xi^{\normal (j) \fluc\parallel} \cdot \nablaslash \Zperp - \nablaslash \xi^{\normal (j) \fluc\perp} \cdot \Zpar \bigr\ra \Bigr)
+ 2 \bigl\la \xi^{\normal (j) \fluc}\cdot F \, r^2 \bigr\ra .
\endaligned
\ee

Next, we decompose $Z=Z^{\fluc}+ \projP^k(Z) \xistar_k$ into its fluctuation and average with the notation in~\eqref{equa-fluct-vectors}, and we use $\bigl\la 2 \xh_k \xi^{\normal (j) \fluc\perp} + \xi^{\normal (j) \fluc\parallel}{}_k \bigr\ra=0$ for a cancellation.
 
\begin{definition}\label{def:KappaM}
For the normalized kernel basis, the \textbf{fluctuation operators} introduced in \autoref{section=3.2} are the linear functionals of the fluctuation~$Z^\fluc$ and its derivatives given by the convention $\Kappa^{\notreM(j)}[Z]\coloneqq\Kappa^{\notreM(j)}[Z^\fluc]$ and by
\be
\aligned
\Kappa^{\notreM(j)}[Z^\fluc]
= \bigl\la \nablaslash \xi^{\normal (j) \fluc\perp} \cdot Z^{\fluc\parallel} - \xi^{\normal (j) \fluc\parallel} \cdot \nablaslash Z^{\fluc\perp} \bigr\ra 
- (\vartheta+a_{n,p}) \bigl\la 2 \xi^{\normal (j) \fluc\perp} Z^{\fluc\perp} + \xi^{\normal (j) \fluc\parallel} \cdot Z^{\fluc\parallel}\bigr\ra. 
\endaligned
\ee
\end{definition}

This is the normalized-basis form of the functional announced in \autoref{section=3.2}; with it, we obtain
\bel{equa--823}
a_{n,p} \matQ^{(j)k} \Bigl\la \nablaslash_k\Zperp - \Zpar_k \Bigr\ra
= \vartheta\Bigl(
\bigl\la \xi^{\normal (j) \fluc\parallel}{}_k - \nablaslash_k \xi^{\normal (j) \fluc\perp} \bigr\ra \projP^k(Z) 
- \Kappa^{\notreM(j)}[Z^\fluc] \Bigr) 
+ 2 \bigl\la \xi^{\normal (j) \fluc}\cdot F \, r^2 \bigr\ra .
\ee

\paragraph{Observations on $\Kappa^{\notreM(j)}$.}
In the second average (acted upon by $\vartheta+a_{n,p}$) the vector $Z^\fluc$ can be replaced by $Z$ since $\xi^{\normal(j)\fluc}$ has vanishing averages.
Moreover, in the coefficient of $\projP^k(Z)$ in~\eqref{equa--823}, the vector $\xi^{\normal(j) \fluc}$ can be replaced by $\xi^{\normal(j)}$ since $\bigl\la \xipar{}_k - \nablaslash_k \xiperp \bigr\ra$ vanishes for $\xi=\xistar_l$.
It will prove useful to evaluate $\Kappa^{\notreM (j)}[\xi^{\normal(i)\fluc}r^{-a_{n,p}}]$.
For this, we return to~\eqref{xiBZ} with $Z=\xi^{\normal(i)\fluc}r^{-a_{n,p}}$, which yields (after multiplying by $2r^{a_{n,p}}/a_{n,p}$)
\be
\aligned
0 & = \matQ^{(j)k} \la \nablaslash_k \xi^{\normal(i)\perp} - \xi^{\normal(i)\parallel}{}_k \ra
+ \la \nablaslash\xi^{\normal(i)\perp}  \cdot \xi^{\normal (j) \fluc\parallel} - \xi^{\normal(i)\parallel} \cdot {\nablaslash \xi^{\normal (j) \fluc\perp}} \ra
\\
& =
\matQ^{(j)k} \la \nablaslash_k \xi^{\normal(i)\perp} - \xi^{\normal(i)\parallel}{}_k \ra
+ \la \nablaslash\xi^{\normal(i)\fluc\perp}  \cdot \xi^{\normal (j) \fluc\parallel} - \xi^{\normal(i)\fluc\parallel} \cdot {\nablaslash \xi^{\normal (j) \fluc\perp}} \ra \\
& \quad + \matQ^{(i)k} \la \xi^{\normal (j) \fluc\parallel}{}_k - \nablaslash_k \xi^{\normal (j) \fluc\perp} \ra ,
\endaligned
\ee
where $\xi^{\normal(j)\fluc}$ can be replaced by $\xi^{\normal(j)}$ in the last average since the vector $\xistar_l{}^\parallel - \nablaslash\xistar_l{}^\perp$ vanishes for all~$l$.
Therefore,
\bel{aveKappa-silhouette}
\aligned
\Kappa^{\notreM (j)}[\xi^{\normal(i)\fluc}r^{-a_{n,p}}]
& = \bigl\la \nablaslash \xi^{\normal (j) \fluc\perp} \cdot \xi^{\normal(i)\fluc\parallel} - \xi^{\normal (j) \fluc\parallel} \cdot \nablaslash \xi^{\normal(i)\fluc\perp} \bigr\ra r^{-a_{n,p}}
\\
& = \bigl( (\Xi^\notreM)^{(j)}{}_{k} \matQ^{(i)k} - (\Xi^\notreM)^{(i)}{}_{k} \matQ^{(j)k} \bigr) r^{-a_{n,p}}
\endaligned
\ee
in terms of the matrix $\Xi^{\notreM(j)}{}_k = \bigl\la - \nablaslash_k \xi^{\normal (j) \perp} + 2 a_{n,p} \xh_k\, \xi^{\normal (j) \perp} + (1+a_{n,p}) \xi^{\normal (j) \parallel}{}_k \bigr\ra$ introduced earlier in~\eqref{equa-the-matrix}.


\paragraph{Conclusion.}

Inserting \eqref{equa--823} into~\eqref{equa-818bis}, we find 
\[
\aligned
\quad & \unquad \vartheta(\vartheta+a_{n,p}) \Bigl(
a_{n,p} \matQ^{(j)l} \la 2 \xh_l  \Zperp + \Zpar_l \ra
+ \bigl\la \xi^{\normal (j) \parallel}{}_k - \nablaslash_k \xi^{\normal (j) \perp} \bigr\ra \projP^k(Z)
-\Kappa^{\notreM(j)}[Z^{\fluc}]
 \Bigr)
\\
& = - 2 r^2 a_{n,p} \matQ^{(j)k} \la F_k \ra
- 2 (\vartheta+a_{n,p}) \bigl\la \xi^{\normal (j) \fluc}\cdot F \, r^2 \bigr\ra .
\endaligned
\]
Using that $\matQ^{(j)l} \la 2 \xh_l  \Zperp + \Zpar_l \ra = \la 2 \xh_k \xi^{\normal (j) \perp} + \xi^{\normal (j) \parallel}{}_k \ra\projP^k(Z)$, we recognize that the coefficient of $\projP^k(Z)$ is the structure matrix~$\Xi^\notreM$ and, by using $\projP^k(Z) = (T^{-1})^{kl} \bigl\la 2 \xh_l \Zperp + \Zpar_l \bigr\ra$, we arrive at 
\bel{Zave-diffeq}
\aligned
& \vartheta(\vartheta+a_{n,p}) \Bigl(
(\Xi^\notreM)^{(j)}{}_{k} \, (T^{-1})^{kl} \bigl\la 2 \xh_l \Zperp + \Zpar_l \bigr\ra
- \Kappa^{\notreM(j)}[Z^\fluc] 
\Bigr)
\\
& \quad = - 2 r^2 a_{n,p} \matQ^{(j)k} \la F_k \ra - 2 (\vartheta+a_{n,p}) \la \xi^{\normal (j) \fluc}\cdot r^2 F\ra .
\endaligned
\ee
Taking the inverse of the operator $\vartheta(\vartheta+a_{n,p})$ thanks to~\eqref{equa-formuleODE} (and \eqref{Ibeta-plus-Jbeta} for $\astar\geq a_{n,p}$), we arrive at \eqref{equa-defb-b-b-MM} and, specifically, the following conclusion.
 
\begin{proposition}[Spherical averages associated with the localized momentum operator]
\label{prop-87-moment} 
In the setup of \autoref{thm-sharp-m-localized-vari}, assume that the localization domain $(\Lambda,d\chi)$ obeys the harmonic and radial momentum stability conditions in \refwithname{Definitions}{def-harmonic-Mstab} \refwithname{and}{def-radial-Mstab} (but not necessarily shell stability) and that the source~$F$ obeys the decay~\eqref{defJmoduZ} for some $\astar\geq a_{n,p}/2$, namely $F\in H^{1*}_{\astar+2,-\expoP}(\Omega_R,\RR^n)$ and, if $\astar\geq a_{n,p}$, the limits in~\eqref{Jmodu-def} exist and $\Jmax(F)<+\infty$. Here $\cutoff_{\pstar}$ has the convention introduced immediately before~\eqref{Jmodu-def}. Then there exists a constant vector $\ChZ\in\RR^n$ such that the averages $\la 2 \xh_l \Zperp + \Zpar_l \ra$ satisfy the identity
\bse
\bel{equa-defb-b-b-MM-3} 
\aligned 
( \Xi^\notreM)^{(j)}{}_{k} (T^{-1})^{kl} \bigl\la 2 \, \xh_l \Zperp + \Zpar_l \bigr\ra
& = \Kappa^{\notreM (j)}[Z^\fluc] + \Theta^{F(j)} + \ChZ{}^{(j)} \, r^{-a_{n,p}}
\endaligned
\ee
with a source term
\be
\Theta^{F(j)} = 2J_0[\la \xi^{\normal (j)}\cdot r^2 F\ra] + 2 \matQ^{(j)k} \begin{cases} I_{a_{n,p}}[\la r^2 F_k \ra] & \text{if } \astar<a_{n,p} , \\ - J_{a_{n,p}}[\la r^2 F_k \ra] & \text{if } \astar\geq a_{n,p} . \end{cases}
\ee
In addition, upon decomposing $\Theta^F=\Theta^F_1+\Theta^F_2$ with $\Theta^F_2 = -2\Oneone_{\astar=a_{n,p}}\matQ^{(j)k} J_{a_{n,p}}[\la r^2 F_k \ra]$, one has the bounds
\be
\aligned
\|\Theta^F_1\|_{L^2_{\astar}([R,+\infty))} & \lesssim \|F\|_{H^{1*}_{\astar+2,-\expoP}(\Omega_R,\RR^n)} ,
\\
|\Theta^F_2(r)| & \lesssim\Oneone_{\astar=a_{n,p}}\Jmax(F)r^{-a_{n,p}}o_F(1),
\\
|\ChZ| & \lesssim \|F\|_{H^{1*}_{\astar+2,-\expoP}(\Omega_R,\RR^n)} + \cutoff_{\pstar} \Jmax(F) .
\endaligned
\ee
\ese
\end{proposition}


\section{Nonlinear analysis of the squared localized Einstein constraints}
\label{section=10}

\subsection{Formulation of the estimates}
\label{section=10.1}

\paragraph{Setting up the stage.}

Building upon the linear estimates established in \refwithname{Sections}{section=5} \refwithname{to}{section=9}, we are now in a position to give a proof of \autoref{theo--beyond-harmonic} (and \autoref{theo--beyond-harmonic-II}, its sub-harmonic analogue).
At this stage, the seed-to-solution operator is known to exist by \autoref{thm:sts-existence}, whose construction scheme was described in \autoref{section=2.1}. 
The resulting solutions $(g,h)$ of the Einstein constraint equations, expressed in terms of auxiliary fields $(u,Z)$, satisfy the sub-harmonic estimates stated in \autoref{thm:sts-existence} (and \autoref{thm:sts-Sobolev}) with radial exponent~$p$.
After some set-up and reduction to an asymptotic end, we present \autoref{prop:improve-radial} which improves slightly the radial control of the solution (proven in \autoref{section=11}).  By iterating it finitely many times we reach the desired sharp decay exponent~$\pstar$.
As pointed out earlier, one of our challenges is dealing with geometries with (possibly) very low decay exponent $p_G>0$, which increases the number of iteration steps.  Consequences for the ADM energy and momentum modulators are specified in \autoref{section=10.4}.

We follow the notation in \refwithname{Sections}{section=2} \refwithname{to}{section=4}, and we recall now the hypotheses of \autoref{theo--beyond-harmonic} and \autoref{theo--beyond-harmonic-II} (which differ only in whether $\pstar\geq n-2$ or $\pstar<n-2$).
Let $(\Mbf, \Omega, g_0,h_0, \wtrr, \lambdabf)$ be a conical localization data set in the sense of \autoref{def-conical}. We assume that $\lambdabf$ is a stable localization function (cf.~\autoref{def-41-stable}), so that the pointwise decay estimates for the Euclidean linear operators (from earlier sections) hold up to 
the common super-harmonic upper bound $p^{\lambdabf}_{n,p}>n-2$ defined in~\eqref{p-lambda-def}, and a set of exponents $p,p_G, p_A,\pstar$ is given so that \eqref{exponent-range} holds:\footnote{To be precise, stability conditions involve both $\lambdabf^{2\expoP}$ and the projection exponent~$p$, so that they may hold only on a subinterval of $(p_n^\flat, n-2)$.  Alternatively, one selects the exponents $\expoPm,\expoP,\expoPp,p,p_G,p_A$ before~$\lambdabf$, and lastly~$\pstar$.}
\bel{condi--1-repeat-55}
\alignedat{2}
& \text{projection exponent: } \, & p & \in (p_n^\flat, n-2),
\\
& \text{geometry exponent: } & p_G & > 0, \\
& \text{accuracy exponent: } & p_A & \geq \max(p_G,p),
\\
& \text{sharp decay exponent: } & \pstar & \in [p,p_A] \cap \bigl[p, \; \min\bigl( p^{\lambdabf}_{n,p}, n-2+p_G \bigr)\bigr).
\endalignedat
\ee 
Recall that the exponent $p$ arises in the definition of the (integral) variational projection~\eqref{equa--221}. The exponent $p_G$ determines the pointwise decay of the solutions and $p_A$ the pointwise decay of the Einstein operators, as stated in~\eqref{equa-near-refe} and~\eqref{equa-nearEins}, respectively.
The upper bounds on the sharp decay exponent are explained in due course.
In addition, our results involve an admissible set $(\expoPm,\expoP,\expoPp)$ of localization exponents satisfying~\eqref{equa-expoPPP}, namely
\bel{expoP-sec10-cond}
\max(N+\alpha,1+\expoP/2) < \expoPm \leq \expoP - \frac{n+4}{2} , \qquad \expoPp = 2 \expoP - \expoPm .
\ee


\paragraph{Formulation of the equations.}

By definition of the seed-to-solution map whose existence is established in \autoref{thm:sts-existence}, under suitable smallness assumptions on $(\seedg-g_0,\seedh-h_0)$ and $\Gcal(\seedg,\seedh)$, there exist a function~$u$ and vector field~$Z$ that yield a solution $(g,h)$ of the Einstein constraints,
\bel{equa-EEHM-33}
\alignedat{2}
(\Hcal,\Mcal)[g,h] & = (0,0), \qquad
& g & = \seedg + \gdiff, \qquad \gdiff = \omegabf_p^2 \, d\Hcal^{*\flat\flat}_{(g_0,h_0)}[u,Z], 
\\
&& h & = \seedh + \hdiff, \qquad \hdiff = \omegabf_{p+1}^2 \, d\Mcal^{*\sharp\sharp}_{(g_0,h_0)}[u,Z], 
\endalignedat
\ee
such that $g,\seedg,g_0$ are uniformly equivalent Riemannian metrics.
At this stage of our analysis, the solutions $(\gdiff,\hdiff)$ and $(u,Z)$ are known only to enjoy integral and pointwise estimates with radial decay exponents $p$ and $a_{n,p}/2=n-2-p$, respectively.
As we quantify in \autoref{section=10.2}, below, these sub-harmonic estimates provide sufficient control of the solution in a large bounded region, which allows us to concentrate on the asymptotic ends.


\paragraph{Strategy of the proof.}
In each asymptotic end~$\Omega_\iota$, we proceed by iteratively improving the radial control of the solution from the exponent~$p$ to an {\bf iteration exponent} $\pstar'\in[p,\pstar]$, all the way to the sharp decay exponent~$\pstar$.
This procedure takes finitely many steps to reach any exponent $\pstar$ in the range~\eqref{condi--1-repeat-55}.

We express the Hamiltonian and momentum constraints in the schematic form:
\bel{linearEuclidenoperatorsource}
\text{linear Euclidean operator }
= \text{ linear curved terms } + \text{ nonlinearities } + \text{ source,}
\ee
obtained by expanding the linearized operators and nonlinearities in terms of objects associated with the Euclidean data set $(\delta,0)$.
We combine together
\bei
\item integral and pointwise estimates with a decay exponent $\pstar'$ from earlier steps in the iteration,

\item estimates of the linear curved terms with a decay exponent $\pstar'+p_G$ due to the decay of differences between the data sets $(g_0,h_0)$, $(\seedg,\seedh)$, and $(\delta,0)$,

\item estimates of nonlinearities of the Einstein constraints, with a decay exponent $2\pstar'$,

\item our analysis of the linear Euclidean operator in \refwithname{Sections}{section=5} \refwithname{to}{section=9}, which yields estimates on the solution with a new improved exponent bounded above by $p^{\lambdabf}_{n,p}$ and by the exponents of all terms in~\eqref{linearEuclidenoperatorsource}.
\eei

The estimates on the linearized Euclidean operators concern variational solutions, namely solutions obeying Neumann-type boundary conditions on the radial boundary of~$\Omega_\iota$.  We thus apply the above operators to the unknowns $\kappa_\iota u$ and $\kappa_\iota Z$ which are supported in~$\Omega_\iota$ and vanish in a neighborhood of its radial boundary, by the construction~\eqref{equa-partition} of~$\kappa_\iota$.
Derivatives of the cutoff function~$\kappa_\iota$ introduce additional sources in~\eqref{linearEuclidenoperatorsource}, which pose no difficulty since they are all supported in the bounded region.

Before we can start the iteration, we must contend with the fact that the integral estimates provided by \autoref{thm:sts-existence} only control $(u,Z)$ in a weighted $H^2\times H^1$ space, and that the resulting ``linear curved terms'' in~\eqref{linearEuclidenoperatorsource} are not regular enough to apply the results of \autoref{thm-sharp-h-localized} on the linearized Hamiltonian operator.
Likewise, during the iteration, the integral estimates provided by \refwithname{Sections}{section=5} \refwithname{to}{section=9} control $u$ and $\vartheta u$ in a weighted $H^2$ space, but still only control $Z$ in~$H^1$.
To gain the required radial derivative, we observe that $\vartheta(\kappa_\iota u,\kappa_\iota Z)$ is the solution of a (curved-space) linear elliptic equation with source terms in suitable dual Sobolev spaces, such that variational estimates provide the weighted $H^2\times H^1$ control of $\vartheta(\kappa_\iota u,\kappa_\iota Z)$ necessary to continue the iteration.  Crucially, the elliptic operator used at this step differs from the Euclidean operator to which the results of \refwithname{Sections}{section=5} \refwithname{to}{section=9} apply, so we use \emph{two different} elliptic operators in the course of each iteration.


\paragraph{Family of estimates for the induction argument.}

Throughout the iteration, we keep track of how the solution $(u,Z,\gdiff,\hdiff)$ is controlled thanks to a family of estimates that depend on the iteration exponent $\pstar'\in[p,\pstar]$.
To accommodate for \mbox{(super-)}harmonic exponents $\pstar'\geq n-2$ for which modulator terms must be subtracted from the solution, as well as sub-harmonic exponents $\pstar'<n-2$ for which such terms are ill-defined, it is convenient to introduce a {\it cut-off indicator} $\cutoff_{\pstar'}\in\{0,1\}$,
\be
\cutoff_{\pstar'} = \Oneone_{\pstar'\geq n-2} = \bigl( 0 \text{ when } \pstar' < (n-2) \text{ and $1$ otherwise} \bigr) .
\ee
We also denote by $\astar'=n-2-2p+\pstar'$ the radial exponent relevant for the decay of $(u,Z)$ instead of $(\gdiff,\hdiff)$.

In agreement with \autoref{def-317} the modulated objects are given by~\eqref{equa-2p8} with the additional factor of~$\cutoff_{\pstar'}$, namely
\bel{equa-2p8-repeat} 
\aligned
\seedmodgpstarp
& = \seedg + \cutoff_{\pstar'} \sum_\iota \kappa_\iota \, \gmodu_\iota, 
\qquad & \gmodu_\iota & = \omegabf_p^2 d\Hcal^{*\flat\flat}_{(\delta,0)}[\umodu_\iota] ,
\qquad & \umodu_\iota & = \mmodu_\iota \nu^\normal(x/r) r^{-a_{n,p}},
\\
\seedmodhpstarp & = \seedh + \cutoff_{\pstar'} \sum_\iota \kappa_\iota \, \hmodu_\iota,
\qquad & \hmodu_\iota & = \omegabf_{p+1}^2 d\Mcal^{*\sharp\sharp}_{(\delta,0)}[\Zmodu_\iota] ,
\qquad & \Zmodu_\iota & = \Jmodu_\iota \cdot \xi^\normal(x/r) r^{-a_{n,p}} ,
\endaligned
\ee
where $\delta$, $x$, and~$r$ are the Euclidean metric, Cartesian coordinates, and radius $r=|x|$, derived from the identification $\Omega_\iota\simeq\Omega_R=K_\iota\cap{}^\complement\Ball_R\subset\RR^n$.
The modulators $(\gmodu_\iota,\hmodu_\iota,\umodu_\iota,\Zmodu_\iota)$ in each asymptotic end depend on $1+n$ parameters $(\mmodu_\iota, \Jmodu_\iota)$ that are not immediately determined, and only become so once the iteration exponent reaches $\pstar'\geq n-2$.  We will not need the coordinate expressions~\eqref{equa-2p8-correct} of the adjoint Hamiltonian and momentum constraints.

We denote by $\Est(\Omega_\iota,\pstar',\mmodu_\iota,\Jmodu_\iota)$ the following list of inequalities on an asymptotic end~$\Omega_\iota$ for $\pstar'\geq p$,
\bel{Est-def}
\Est(\Omega_\iota,\pstar',\mmodu_\iota,\Jmodu_\iota) \colon 
\aligned
\| u - \cutoff_{\pstar'} \umodu_\iota \|^{N+2, \alpha}_{\Omega_\iota, \astar', -\expoPp+2}
+ \| u - \cutoff_{\pstar'} \umodu_\iota \|_{H^2_{\astar', -\expoP}(\Omega_\iota)}
& \lesssim \Err^+_{\pstar}[\seedg,\seedh] ,
\\
\| Z - \cutoff_{\pstar'} \Zmodu_\iota \|^{N+1, \alpha}_{\Omega_\iota, \astar', -\expoPp+1}
+ \| Z - \cutoff_{\pstar'} \Zmodu_\iota \|_{H^1_{\astar', -\expoP}(\Omega_\iota)}
& \lesssim \Err^+_{\pstar}[\seedg,\seedh] .
\endaligned
\ee
For $\pstar'\in[p, n-2)$ we shall take vanishing $(\mmodu_\iota,\Jmodu_\iota)$ and omit them from the notation.
Note that for any $\pstar'\geq p$, these estimates imply any $\Est(\Omega_\iota,\tilde{p},\mmodu_\iota,\Jmodu_\iota)$ for $\tilde{p}\in[p,\pstar')$ provided $(\mmodu_\iota,\Jmodu_\iota)$ are themselves controlled by~$\Err^+_{\pstar}[\seedg,\seedh]$.
The right-hand side is a norm already introduced in~\eqref{equa-Ecal-pstar}
\bel{equa-Ecal-pq-rep}
\aligned
& \Err^+_{\pstar}[\seedg,\seedh]
= \bigl\| \Hcal(\seedg,\seedh) \bigr\|^{N-2,\alpha}_{\Omega, \pstar+2, \expoPm-2}
+ \bigl\| \Hcal(\seedg,\seedh) \bigr\|_{L^2_{\pstar+2,\expoP}(\Omega)}
\\
& \quad + \bigl\| \Mcal(\seedg,\seedh) \bigr\|^{N-1,\alpha}_{\Omega, \pstar+2, \expoPm-1}
+ \bigl\| \Mcal(\seedg,\seedh) \bigr\|_{L^2_{\pstar+2,\expoP}(\Omega)}
\\
& \quad
+ \cutoff_{\pstar} \sum_\iota \sup_{r\geq R} \biggl( \biggl| \int_{K_\iota\cap {}^\complement \Ball_r} \phi_{\iota\pushforward} \bigl(\Hcal( \seedg, \seedh) \, \dVol_{\seedg}\bigr) \biggr|
+ \sum_{1\leq j\leq n} \biggl| \int_{K_\iota\cap {}^\complement \Ball_r} \bigl(\phi_{\iota\pushforward}\Mcal( \seedg, \seedh)\bigr)_j \phi_{\iota\pushforward} \dVol_{\seedg} \biggr| \biggr) ,
\endaligned
\ee
which ranges over the whole domain~$\Omega$ and involves the final exponent~$\pstar$.  Since $\pstar\in[p,p_A]$, the norm controls $\Err_p[\seedg,\seedh]$ used in \autoref{section=2.3} when proving existence of the seed-to-solution map, and the norm is defined thanks to the decay of $\Gcal(\seedg,\seedh)$ with exponent~$p_A+2$.

In our notation of the norms we need not specify the choice of the metric ---which can be taken to be the reference $g_0$ or, at each asymptotic end,~$\delta$.
Technical calculations on the expansion of the Einstein constraints are collected in \autoref{appendix=C} and will be applied in the present section.


\subsection{Reduction to an asymptotically Euclidean end}
\label{section=10.2}

\paragraph{Dealing with the bounded region.}

Our aim in this section is to kick-start the iteration by reducing to one asymptotic end and observing that the estimates of~\eqref{Est-def} hold for the initial value $\pstar'=p$.  Then, \autoref{section=10.3} will be devoted to the iteration itself.

The construction of the seed-to-solution map in \autoref{appendix=E} provides us with useful bounds on $(u,Z)$ and $(\gdiff,\hdiff)=(g-\seedg,h-\seedh)$, stated in \autoref{thm:sts-Sobolev} (based on the variational formulation and interior elliptic regularity):
\bse
\label{equa-bounds-var-p}
\begin{align}
\| u \|^{N+2, \alpha}_{\Omega, a_{n,p}/2, -\expoPp+2}
+ \| u \|_{H^2_{a_{n,p}/2, -\expoP}(\Omega)} 
+ \| Z \|^{N+1, \alpha}_{\Omega, a_{n,p}/2, -\expoPp+1}
+ \| Z \|_{H^1_{a_{n,p}/2, -\expoP}(\Omega)}
& \lesssim \Err^+_{\pstar}[\seedg,\seedh] ,
\\
\| \gdiff \|^{N, \alpha}_{\Omega, p,\expoPm}
+ \| \gdiff \|_{L^2_{p,\expoP}(\Omega)}
+ \| \hdiff \|^{N,\alpha}_{\Omega, p+1, \expoPm}
+ \| \hdiff \|_{L^2_{p+1,\expoP}(\Omega)}
& \lesssim \Err^+_{\pstar}[\seedg,\seedh] .
\end{align}
\ese
Inside the large bounded subset~$\Omega_0$ (introduced in \autoref{def-conical}), the decay at infinity is irrelevant and norms with different radial exponents are all equivalent.  Explicitly, one obtains bounds on both weighted H\"older and Sobolev norms,
\bse\label{estim-Omega0}
\begin{align}
\| u \|^{N+2, \alpha}_{\Omega_0, \astar', -\expoPp+2}
+ \| u \|_{H^2_{\astar', -\expoP}(\Omega_0)} 
+ \| Z \|^{N+1, \alpha}_{\Omega_0, \astar', -\expoPp+1}
+ \| Z \|_{H^1_{\astar', -\expoP}(\Omega_0)}
& \lesssim \Err^+_{\pstar}[\seedg,\seedh], 
\\
\| \gdiff \|^{N, \alpha}_{\Omega_0, \pstar',\expoPm}
+ \| \gdiff \|_{L^2_{\pstar',\expoP}(\Omega_0)}
+ \| \hdiff \|^{N,\alpha}_{\Omega_0, \pstar'+1, \expoPm}
+ \| \hdiff \|_{L^2_{\pstar'+1,\expoP}(\Omega_0)}
& \lesssim \Err^+_{\pstar}[\seedg,\seedh],  
\end{align}
\ese
for all $\pstar'\in[p,\pstar]$.
The implied constants depend upon $\pstar$ and $\Omega_0$, now fixed once and for all.
Importantly, these estimates compare $(g,h)$ to the \emph{uncorrected seed data} $(\seedg,\seedh)$ rather than the modulated data $(\seedmodgpstarp,\seedmodhpstarp)$ because at this stage the modulators are not known to be well-defined.  Correspondingly, the bounds involve $(u,Z)$ rather than $(u-\cutoff_{\pstar'}\umodu,Z-\cutoff_{\pstar'}\Zmodu)$.
We prove suitable bounds on the modulators later on.


\paragraph{The initial estimates hold in an asymptotic end.}

We now pick and focus on an asymptotic end $\Omega_\iota$ for a given~$\iota$.  We observe that the estimates $\Est(\Omega_\iota,p)$ hold, namely \eqref{Est-def} with $\pstar'=p$ and $\astar'=n-2-p=a_{n,p}/2$.  There is no modulator for $p<n-2$, so $\Est(\Omega_\iota,p)$ is simply \eqref{equa-bounds-var-p} restricted to~$\Omega_\iota$.  Explicitly,
\bel{Est-p}
\aligned
\Est(\Omega_\iota,p) \colon \quad
& \| u \|^{N+2, \alpha}_{\Omega_\iota, n-2-p, -\expoPp+2}
+ \| u \|_{H^2_{n-2-p, -\expoP}(\Omega_\iota)} \\
& \quad + \| Z \|^{N+1, \alpha}_{\Omega_\iota, n-2-p, -\expoPp+1}
+ \| Z \|_{H^1_{n-2-p, -\expoP}(\Omega_\iota)}
\lesssim \Err^+_{\pstar}[\seedg,\seedh] .
\endaligned
\ee


\paragraph{Notation at an asymptotic end.}

According to our notation, in the asymptotic domain~$\Omega_\iota$, identified with $\Omega_R \subset \RR^n$, we have coordinates $(x^i)$ and, with the subscript~$\iota$ suppressed (when no confusion is possible) for the rest of this section,
\bel{sec10-omegabfp-omegap}
\omegabf_p = \lambdabf_\iota^{\expoP} \wtrr^{n/2-p}
= \lambda^{\expoP} r^{n/2-p} = \omega_p,
\ee
where $\lambda$ is solely a function of $\xh$ and $r^2=\sum (x^i)^2$. In our notation we identify the asymptotic end $\Omega_\iota$ with the subset $\Omega_R \subset \RR^n$ for some $R>1$.  As introduced near~\eqref{equa-partition}, the cutoff function $\kappa=\kappa_\iota\colon\Omega_R\to[0,1]$ vanishes for $r\in[R,R_2]$ and is identically~$1$ for $r\in[R_3,+\infty)$.  The bounded region~$\Omega_0$ contains the ball $\Ball_{R_4}\cap\Omega_R$ (or rather, its image under the identification $\Omega_R \simeq \Omega_\iota$), so that the estimates~\eqref{estim-Omega0} control $(u,Z,g-\seedg,h-\seedh)$ throughout the region where $\kappa\neq 1$.
In agreement with~\eqref{equa-2p8-repeat} (restricted to $\Omega_\iota$ and with $\iota$ suppressed) we write
\be
\seedmodgpstarp = \seedg + \cutoff_{\pstar'} \kappa \gmodu, 
\qquad \seedmodhpstarp = \seedh + \cutoff_{\pstar'} \kappa \hmodu  
\qquad \text{ in the domain } \Omega_R \subset \RR^n .
\ee


\subsection{Reaching a sharp radial decay}
\label{section=10.3}

\paragraph{Statement of the radial decay improvement.}
We now state the result that enables us to go step by step from a radial exponent~$p$ to the final exponent~$\pstar$.  To streamline the main argument, we postpone the proof of this proposition to \autoref{section=11}.  We recall that $\Est(\Omega_\iota,\pstar')$ are estimates~\eqref{Est-def} on $(u,Z)$ with exponent $\astar'=n-2-2p+\pstar'$.

\begin{proposition}[Improvement of the radial decay]\label{prop:improve-radial}
Assume that $\Est(\Omega_\iota,\pstar')$ holds for a decay exponent $\pstar'\in[p,n-2)$, and let 
\be 
\pstar''\coloneqq\min(\pstar,\pstar'+p_G,2\pstar'),
\qquad
\astar''\coloneqq n-2-2p+\pstar''.
\ee 
Then there exist parameters $(\mmodu, \Jmodu)$ with
\be
\aligned
& (\mmodu, \Jmodu) = (0, 0) & \text{for } \pstar''<n-2 , 
\\
& |\mmodu| + |\Jmodu| \lesssim \Err^+_{\pstar}[\seedg,\seedh] & \text{for } \pstar''\geq n-2 ,
\endaligned
\ee
such that $\Est(\Omega_\iota,q,\mmodu,\Jmodu)$ holds for each $q\in(\pstar',\pstar'')$. Furthermore, provided $\pstar''\neq n-2$ or $\pstar''=\pstar=n-2<\min(\pstar'+p_G, 2\pstar')$, one has a pointwise control that decays faster than~$r^{-\astar''}$, namely
\bse
\label{improve-radial-top}
\be
\aligned
& \| u - \cutoff_{\pstar''} \umodu \|^{N+2, \alpha}_{\Omega_\iota, \astar'', -\expoPp+2}
+ \| Z - \cutoff_{\pstar''} \Zmodu \|^{N+1, \alpha}_{\Omega_\iota, \astar'', -\expoPp+1}
\lesssim \Err^+_{\pstar}[\seedg,\seedh] ,
\endaligned
\ee
and 
\be
\aligned
\lim_{R'\to+\infty} \Bigl( & \bigl\| u - \cutoff_{\pstar''} \umodu \bigr\|_{C^3_{\astar'', -\expoPp+2}(\Omega_{R'})} + \bigl\| Z - \cutoff_{\pstar''} \Zmodu \bigr\|_{C^1_{\astar'', -\expoPp+1}(\Omega_{R'})} \Bigr) = 0 .
\endaligned
\ee
\ese
\end{proposition}

\begin{remark}
  1.~Sources for the linear Euclidean operator are controlled with a radial decay exponent~$\pstar''$. The Sobolev estimates in \refwithname{Sections}{section=5} \refwithname{to}{section=9} have a loss of exponent, hence the bound $q<\pstar''$, while the H\"older estimates reach this exponent, as stated in~\eqref{improve-radial-top}.
  \quad 2.~The condition $\pstar'<n-2$ ensures the absence of modulators in $\Est(\Omega_\iota,\pstar')$, which would otherwise contribute additional sources with $n-2+p_G$ radial decay.  This difficulty explains the upper bound $\pstar<n-2+p_G$ in~\eqref{exponent-range}.
  \quad 3.~The harmonic case $\pstar''=n-2$ is excluded when $\min(\pstar'+p_G, 2\pstar')=n-2$ because nonlinearities may have uncontrolled energy-momentum modulators.  In contrast, H\"older estimates do hold in the harmonic case when $\min(\pstar'+p_G, 2\pstar')>n-2$.
\end{remark}


\paragraph{Iteration.}
We are now ready to improve the estimates $\Est(\Omega_\iota,p)$ in~\eqref{Est-p} from radial exponent~$p$ to~$\pstar$.
If $\pstar=p$, the conclusion of \autoref{theo--beyond-harmonic-II} follows directly from~\eqref{equa-bounds-var-p}. Hence, throughout the remainder of this paragraph, we assume that $\pstar>p$.
As intermediate steps, we introduce a (finite) sequence of exponents $\pstar^{(i)}$ that are sufficiently close to each other in the sense that
\bel{ppstar-interpolation}
\aligned
p = \pstar^{(0)} & < \pstar^{(1)} < \dots < \pstar^{(k)} < \pstar , \\
\pstar^{(i+1)} & < \min\bigl(\pstar^{(i)}+p_G, 2\pstar^{(i)}\bigr) , && 0 \leq i\leq k-1 , \\
\pstar & < \min\bigl(\pstar^{(k)}+p_G, 2\pstar^{(k)}\bigr) .
\endaligned
\ee
For $\pstar\leq n-2$ the conditions~\eqref{ppstar-interpolation} are easily achieved by setting $\pstar^{(i)}=p+(\pstar-p)i/(k+1)$ for an integer $k > (\pstar - p) / \min(p, p_G)$.
For a sharp decay exponent $n-2<\pstar<n-2+p_G$, we first select $\pstar^{(k)}$ in the range $\max(\pstar-p_G,\pstar/2)<\pstar^{(k)}<n-2$, then construct other $\pstar^{(i)}$ as in the sub-harmonic case.
This interval is nonempty since $\pstar<\pmax\leq\min(2n-4,n-2+p_G)$ by~\eqref{p-lambda-def}.

In either case, $\pstar^{(k)}<n-2$ so an immediate induction based on \autoref{prop:improve-radial} shows that $\Est(\Omega_\iota,\pstar^{(i)})$ holds for $0\leq i\leq k$, with no energy and momentum modulators.  Applying \autoref{prop:improve-radial} one last time, one gets H\"older estimates with radial exponent exactly~$\pstar$:
\be
\aligned
\| u - \cutoff_{\pstar} \umodu \|^{N+2, \alpha}_{\Omega_\iota, \astar, -\expoPp+2}
+ \| Z - \cutoff_{\pstar} \Zmodu \|^{N+1, \alpha}_{\Omega_\iota, \astar, -\expoPp+1}
& \lesssim \Err^+_{\pstar}[\seedg,\seedh] ,
\\
\lim_{R'\to+\infty} \Bigl( \bigl\| u - \cutoff_{\pstar} \umodu \bigr\|_{C^3_{\astar, -\expoPp+2}(\Omega_{R'})} + \bigl\| Z - \cutoff_{\pstar} \Zmodu \bigr\|_{C^1_{\astar, -\expoPp+1}(\Omega_{R'})} \Bigr) & = 0 .
\endaligned
\ee


\paragraph{Pointwise estimates on the geometry.}
We now deduce H\"older estimates on $(g,h)$.  Within the asymptotic end~$\Omega_\iota\simeq\Omega_R$ we consider $R_3>R$ as specified in~\eqref{sec10-omegabfp-omegap}, such that the cutoff function $\kappa=\kappa_\iota$ is identically~$1$ 
on the domain $\Omega_{R_3} = K_\iota\cap{}^\complement\Ball_{R_3}$.  Then, in $\Omega_{R_3}$, we have 
\bel{final-gh-split}
\aligned
g-\seedmodg = \gdiff - \cutoff_{\pstar} \gmodu
& = \omega_p^2 d\Hcal^{*\flat\flat}_{(g_0,h_0)}[u - \cutoff_{\pstar}  \umodu,\,Z-\cutoff_{\pstar}\Zmodu] \\
& \quad + \cutoff_{\pstar} \omega_p^2 \, \Bigl( d\Hcal^{*\flat\flat}_{(g_0,h_0)}[\umodu,\Zmodu] - d\Hcal^{*\flat\flat}_{(\delta,0)}[\umodu] \Bigr) ,
\\
h-\seedmodh = \hdiff - \cutoff_{\pstar} \hmodu
& = \omega_{p+1}^2 d\Mcal^{*\sharp\sharp}_{(g_0,h_0)}[u - \cutoff_{\pstar}  \umodu, Z - \cutoff_{\pstar} \Zmodu] \\
& \quad + \cutoff_{\pstar} \omega_{p+1}^2 \, \Bigl( d\Mcal^{*\sharp\sharp}_{(g_0,h_0)}[\umodu,\Zmodu] - d\Mcal^{*\sharp\sharp}_{(\delta,0)}[\Zmodu] \Bigr) .
\endaligned
\ee
For all terms we rely on \autoref{prop:adjoint-constr-coord}, which expands the adjoint constraints in an asymptotically Euclidean end as
\be
\aligned
d\Hcal^{*\flat\flat}_{(g_0,h_0)}[u,Z]
& = d\Hcal^{*\flat\flat}_{(\delta,0)}[u] +(g_0-\delta)*\del\del u + \del*(\del g_0 * u) + h_0 * h_0 * u + \del * (h_0 * Z) ,
\\
d\Mcal^{*\sharp\sharp}_{(g_0,h_0)}[u,Z]
& = d\Mcal^{*\sharp\sharp}_{(\delta,0)}[Z] + (g_0-\delta) * \del Z + \del g_0 * Z + h_0 * u ,
\endaligned
\ee
with $d\Hcal^{*\flat\flat}_{(\delta,0)}[u]_{ij} = \del_i\del_j u - \delta_{ij} \Delta u$ and $d\Mcal^{*\sharp\sharp}_{(\delta,0)}[Z]^{ij} = -\frac{1}{2} (\del^i Z^j + \del^j Z^i)$.
From the bounds on $u-\cutoff_{\pstar}\umodu$ and $Z-\cutoff_{\pstar}\Zmodu$ with radial exponent~$\astar$ we deduce decay of the first terms in~\eqref{final-gh-split} in H\"older norms with radial exponents $\pstar$ and $\pstar+1$.
The $\umodu$ and $\Zmodu$ modulators both scale as~$r^{-a_{n,p}}$, and the differences $d\Hcal^{*\flat\flat}_{(g_0,h_0)} - d\Hcal^{*\flat\flat}_{(\delta,0)}$ and $d\Mcal^{*\sharp\sharp}_{(g_0,h_0)} - d\Mcal^{*\sharp\sharp}_{(\delta,0)}$ improve the exponent by~$p_G$ owing to $g_0-\delta$, $\del g_0$, $\del\del g_0$, $h_0$, or~$\del h_0$ factors.  The second terms in~\eqref{final-gh-split} are thus bounded with exponents $n-2+p_G$ and $n-1+p_G$, respectively, which are strictly better than the exponents for the first terms.

By combining with the estimates~\eqref{estim-Omega0} in the bounded region~$\Omega_0$, we reach the following conclusion:
\bel{bound-proof-gseedmodg}
\aligned
\|g-\seedmodg\|^{N,\alpha}_{\Omega_\iota,\pstar,\expoPm}
+ \|h-\seedmodh\|^{N,\alpha}_{\Omega_\iota,\pstar+1,\expoPm}
& \lesssim \Err^+_{\pstar}[\seedg,\seedh]
\\
\lim_{R'\to+\infty} \Bigl( \|g-\seedmodg\|_{C^1_{\pstar,\expoPm}(\Omega_{R'})} + \|h-\seedmodh\|_{C^0_{\pstar+1,\expoPm}(\Omega_{R'})} \Bigr)
& = 0 .
\endaligned
\ee
Since the argument applies to each of the finitely many ends, and \eqref{estim-Omega0} applies to~$\Omega_0$, the estimates hold on~$\Omega$.
These are the H\"older estimates stated as~\eqref{pstar-less2} in \autoref{theo--beyond-harmonic} and as~\eqref{pstar-less2-SUB} in \autoref{theo--beyond-harmonic-II}.  The latter theorem is therefore established---conditional on \autoref{prop:improve-radial} which is proven in \autoref{section=11}.


\subsection{ADM energy and momentum for the modulators}
\label{section=10.4}

\paragraph{ADM energy.}

In the super-harmonic case $\pstar\geq n-2$ considered in \autoref{theo--beyond-harmonic}, there remains to check~\eqref{energyg2b}.  The coarse bounds $|\mmodu_\iota|+|\Jmodu_\iota|\lesssim\Err^+_{\pstar}[\seedg,\seedh]$ are simply those given in \autoref{prop:improve-radial}.
For the remainder of this subsection, we impose the additional hypothesis under which~\eqref{energyg2b-2} is stated, namely that $g_0=\delta$ globally on $\Omega\subset\RR^n$.
We are interested in the integral of the source term over~$\Omega$, which can be defined as a limit over bounded domains by restricting to a large ball as was done in~\eqref{equa-mstarJstar},
\be
2(n-1) |\Sphe^{n-1}| \mseed \coloneqq - \int_{\Omega} \Hcal(\seedg, \seedh) \,d^nx
= - \lim_{r\to+\infty} \int_{\Omega\cap \Ball_r} \Hcal( \seedg, \seedh) \,d^nx .
\ee
This limit exists because the integrals in each asymptotic end are assumed to converge in \autoref{theo--beyond-harmonic}.
The pointwise bounds~\eqref{bound-proof-gseedmodg} we have obtained (in the super-harmonic case) imply in each asymptotic end $\Omega_\iota$
\bel{bound-proof-gseedmodg-bis}
\lim_{r\to+\infty} \Bigl( \|\gdiff-\gmodu_\iota\|_{C^1_{n-2,\expoPm}(\Omega_{\iota,r})} + \|\hdiff-\hmodu_\iota\|_{C^0_{n-1,\expoPm}(\Omega_{\iota,r})} \Bigr)
= 0 ,
\ee
and in particular $|\gdiff|\lesssim r^{-n+2}$ and $|\hdiff|\lesssim r^{-n+1}$ pointwise.
We decompose $(g,h) = (\seedg,\seedh) + (\gdiff,\hdiff)$ and expand the exact Hamiltonian constraint $\Hcal(g,h)=0$ near $(\seedg,\seedh)$ using~\autoref{lem:lin-constr-coord}, and specifically~\eqref{lin-constr-coord-2}
\bel{minusHcal-seednabla}
\aligned
- \Hcal(\seedg,\seedh)
& = \del_i \del_j \gdiff_{ij} - \del_i \del_i \gdiff_{jj} + \Rcal^\Hcal ,
\\
\Rcal^\Hcal \, & \! \coloneqq
(\seedg- \delta) * \del\del\gdiff
+ \del\seedg * \del\gdiff
+ \del\del\seedg * \gdiff
+ \del\seedg * \del\seedg * \gdiff
+ \seedh * \hdiff
+ \seedh * \seedh * \gdiff
\\
& \quad
+ \del\gdiff * \del\gdiff
+ \gdiff * \del\del\gdiff
+ \hdiff*\hdiff .
\endaligned
\ee
The terms containing a background factor are $\Obig(r^{-n-p_G})$, whereas the terms quadratic in the deformation are $\Obig(r^{-2n+2})$. Thus, $\Rcal^\Hcal$ is integrable ($p_G>0$, $n>2$). It obeys the bound
\bel{int-Rcal-seedRic}
\biggl| \int_\Omega \Rcal^\Hcal \, d^nx \biggr|
\lesssim \Err^+_{\pstar}[\seedg,\seedh] \Bigl( \|\seedg - \delta\|_{C^2_{p_G}(\Omega)} + \|\seedh\|_{C^0_{p_G+1}(\Omega)} + \Err^+_{\pstar}[\seedg,\seedh] \Bigr) ,
\ee
where implicit constants depend on H\"older norms of $(g_0,h_0)$ and $(\seedg,\seedh)$, and on the exponents.

On the other hand, the integral of the first two terms in~\eqref{minusHcal-seednabla} yields a boundary term along a radial shell~$\Lambda_{\iota,r}$ of each asymptotic end (with no contribution from other boundaries thanks to the H\"older estimates $\gdiff,\hdiff=\Obig(\lambda^{\expoPm})$ and $\del\gdiff=\Obig(\lambda^{\expoPm-1})$ and $\expoPm>1$),
\be
\int_{\Omega\cap \Ball_r} \bigl( \del_i \del_j \gdiff_{ij} - \del_i\del_i \gdiff_{jj} \bigr) d^nx
= \sum_{\iota=1}^I \int_{\Lambda_{\iota,r}} \bigl( \del_j \gdiff_{ij} - \del_i \gdiff_{jj} \bigr) \frac{x_i}{r} r^{n-1} d\xh ,
\ee
where $x_i/r$ is the outwards normal and $r^{n-1}d\xh$ is the volume form on $\Lambda_{\iota,r}$ induced by~$\delta$.
In the $r\to+\infty$ limit the integral tends to the relative ADM energy $\mbb(\Omega_\iota, \gdiff) = \mbb(\Omega_\iota, g - \seedg)$ defined in \autoref{def:relative-ADM}.  Thanks to the faster than $r^{-n+2}$ decay of $\gdiff-\gmodu_\iota$ (in $C^1$~norm) in~\eqref{bound-proof-gseedmodg-bis}, we have
\be
\lim_{r\to+\infty} r^{n-1} \int_{\Lambda_{\iota,r}} \frac{x_i}{r} \bigl(\del_j \gdiff_{ij} - \del_i \gdiff_{jj} \bigr) d\xh
= \lim_{r\to+\infty} r^{n-1} \int_{\Lambda_{\iota,r}} \frac{x_i}{r} \bigl(\del_j \gmodu_{\iota ij} - \del_i \gmodu_{\iota jj} \bigr) d\xh
\ee
hence $\mbb(\Omega_\iota, \gdiff) = \mbb(\Omega_\iota, \gmodu_\iota)$.  We have computed explicitly the latter to be equal to $\mmodu_\iota$ in \autoref{lem-ADMenergymod}.
Overall, we find
\be
\mseed - \sum_{\iota=1}^I \mmodu_\iota
= \frac{1}{2(n-1) |\Sphe^{n-1}|} \int_\Omega \Rcal^\Hcal\,d^nx ,
\ee
which we have bounded in~\eqref{int-Rcal-seedRic} as needed.


\paragraph{ADM momentum.}

The computation for the momentum is similar, but the constraint needs to be integrated component-wise.  Applying~\eqref{lin-constr-coord-2} to $(g,h)$ and $(\seedg,\seedh)$ yields
\be
- \Mcal[\seedg,\seedh]_j = \del_i \hdiff_{ij} + \Rcal^\Mcal_j ,
\qquad
\Rcal^\Mcal = (\seedg-\delta)*\del\hdiff + \del\seedg * \hdiff + \seedh * \del\gdiff + \seedh * \del\seedg * \gdiff + \hdiff * \del\gdiff.
\ee
We find
\be
(n-1) |\Sphe^{n-1}| \Jseed_j
= - \int_\Omega \Mcal(\seedg, \seedh)_j\, d^nx
= \lim_{r\to+\infty} \int_{\Lambda_r} \frac{x_i}{r} \hdiff_{ij} r^{n-1} d\xh
+ \int_\Omega \Rcal^\Mcal_j\, d^nx .
\ee
The integral over each connected component of~$\Lambda_r$ converges to the relative ADM momentum $\Jbb(\Omega_\iota,\hdiff)_j$ of \autoref{def:relative-ADM}.  The radial decay (faster than~$r^{-n+1}$) of $\hdiff-\hmodu = h-\seedmodh$ in~\eqref{bound-proof-gseedmodg} ensures that the relative ADM momenta of $\hdiff$ and of~$\hmodu$ coincide.  The latter, in turn was explicitly calculated in~\autoref{lem-ADMmomentmod} to be~$\Jmodu_\iota$.  We conclude that
\be
|\Jmodu - \Jseed| \lesssim \int_\Omega |\Rcal^\Mcal|\, d^nx
\lesssim \Err^+_{\pstar}[\seedg,\seedh] \Bigl( \|\seedg - \delta\|_{C^1_{p_G}(\Omega)} + \|\seedh\|_{C^0_{p_G+1}(\Omega)} + \Err^+_{\pstar}[\seedg,\seedh] \Bigr) .
\ee
This completes the proof of \autoref{theo--beyond-harmonic}, up to establishing \autoref{prop:improve-radial}, which we do next.


\section{Proof of the radial decay improvement (\autoref{prop:improve-radial})}
\label{section=11}

\subsection{Organization of the proof}
\label{section=11.1}

\paragraph{Aim and strategy.}
This section proves \autoref{prop:improve-radial}: assuming the $\Est(\Omega_\iota,\pstar')$ with $\pstar'<n-2$ it establishes Sobolev estimates with radial exponent $q<\pstar''\coloneqq\min(\pstar,\pstar'+p_G,2\pstar')$ and H\"older estimates with $q\leq\pstar''$, except in the case $\min(\pstar'+p_G,2\pstar')=n-2\leq\pstar$ where we get no bounds with the harmonic exponent $q=\pstar''=n-2$.

As explained in \autoref{section=10.1}, the conditions of \refwithname{Theorems}{thm-sharp-h-localized} \refwithname{and}{thm-sharp-m-localized} on the linear Euclidean operators require a control of $\vartheta(\kappa u)$ and $\vartheta(\kappa Z)$ that is not included in $\Est(\Omega_\iota,\pstar')$.
The proof of \autoref{prop:improve-radial} thus proceeds along the following steps, illustrated in \autoref{fig:estimates}.
Our starting point is the weighted H\"older and $H^2\times H^1$ estimates $\Est(\Omega_\iota,\pstar')$ on $(u,Z)$.
\bei

\item[(1)] We easily deduce H\"older and $L^2$ estimates on $(\gdiff,\hdiff)$ from explicit expressions in terms of $(u,Z)$.

\item[(2)] Quadratic terms $\Qcal\Hcal_{(\seedg,\seedh)}[\gdiff,\hdiff]$ in the constraints are then controlled in suitable dual Sobolev norms~$H^{k*}$ by products of H\"older and $L^2$ norms of $(\gdiff,\hdiff)$.

\item[(3)] For the rescaling 
$(u',Z')= r^{\pstar'-p} \kappa(u,Z)$,  the radial derivatives $(\vartheta u',\vartheta Z')$ are variational solutions of an elliptic equation $\Jcal_{(\seedg,\seedh;g_0,h_0)}[\vartheta u',\vartheta Z']=(\text{sources})$, where the sources are bounded in dual Sobolev norms.  This provides $H^2\times H^1$ estimates on $(\vartheta u', \vartheta Z')$ with variational decay exponent~$p$.

\item[(4)] Accounting for the weights, this is a control of $(\kappa\vartheta u,\kappa\vartheta Z)$ with exponent~$\pstar'$.

\item[(5)] We easily deduce~$L^2$ estimates on $(\kappa\vartheta\gdiff,\kappa\vartheta\hdiff)$.

\item[(6)] We then control in H\"older and dual Sobolev norms all sources in~\eqref{linearEuclidenoperatorsource} (and their radial derivative), which include the quadratic terms $\Qcal\Hcal_{(\seedg,\seedh)}$ as well as curved linear terms $d\Hcal_{(\seedg,\seedh)}-d\Hcal_{(\delta,0)}$.  These estimates involve the improved exponent~$\pstar''$.

\item[(7)] The (Euclidean) \refwithname{Theorems}{thm-sharp-h-localized} \refwithname{and}{thm-sharp-m-localized} yield a decay of $(\kappa u,\kappa Z)$ with improved radial exponent.

\item[(8)] Finally, we account for terms with bounded support to bound $(u,Z)$ and conclude.
\eei

\begin{figure}\centering
\begin{tikzpicture}
\node(kuZ) at (0,0) {$\kappa_\iota u,\kappa_\iota Z$};
\node(uZ) at (0,-1) {$u,Z$};
\node(ghdiff) at (0,-2) {$\gdiff,\hdiff$};
\node(QG) at (0,-3.3) {$\vartheta\bigl(\kappa_\iota r^{\pstar'-p}\Qcal\Gcal_{(\seedg,\seedh)}[\gdiff,\hdiff]\bigr)$};
\node(thkuZ) at (9.5,-3.3) {$\vartheta u',\vartheta Z'$};
\node(thuZ) at (9.5,-2.3) {$\kappa_\iota\vartheta u,\kappa_\iota\vartheta Z$};
\node(thghdiff) at (9.5,-1.3) {$\kappa_\iota\vartheta\gdiff,\kappa_\iota\vartheta\hdiff$};
\node(lincurve) at (9.5,0) {$\aligned \kappa_\iota\,\text{sources} & \\[-.5ex] \vartheta(\kappa_\iota\,\text{sources} & )\endaligned$};
\draw[-{stealth}](kuZ) -- (uZ);
\node[shape=circle,draw,inner sep=.5pt] at (.2,-.5){\tiny 8};
\draw[-{stealth}](uZ) -- (ghdiff);
\node[shape=circle,draw,inner sep=.5pt] at (.2,-1.5){\tiny 1};
\draw[-{stealth}](ghdiff) -- (QG);
\node[shape=circle,draw,inner sep=.5pt] at (.2,-2.5){\tiny 2};
\draw[-{stealth}](QG) -- (thkuZ) node [midway,above] {\autoref{appendix=E} variational bounds} node [midway,below] {for curved $\Jcal_{(\seedg,\seedh; g_0,h_0)}$};
\node[shape=circle,draw,inner sep=.5pt] at (2.5,-3.5){\tiny 3};
\draw[-{stealth}](thkuZ) -- (thuZ);
\node[shape=circle,draw,inner sep=.5pt] at (9.3,-2.8){\tiny 4};
\draw[-{stealth}](thuZ) -- (thghdiff);
\node[shape=circle,draw,inner sep=.5pt] at (9.3,-1.8){\tiny 5};
\draw[-{stealth}](thghdiff) -- (lincurve);
\node[shape=circle,draw,inner sep=.5pt] at (9.3,-.8){\tiny 6};
\draw[-{stealth}](lincurve) -- (kuZ) node [midway,above] {\refwithname{Sections}{section=5} \refwithname{to}{section=9} on $\notreH^\lambda,\notreM^\lambda$} node [midway,below] {improved $\pstar'$};
\node[shape=circle,draw,inner sep=.5pt] at (7.5,-.2){\tiny 7};
\node(s102) at (-2.7,-1) {\autoref{section=10.2}, $\pstar'=p$}; \draw[-{stealth}] (s102)--(uZ);
\node(done) at (-2.7,-2) {Done once $\pstar'=\pstar$}; \draw[-{stealth}] (ghdiff)--(done);
\end{tikzpicture}
\caption{\label{fig:estimates}Intermediate steps to improve the decay exponent~$\pstar'$ of the solution $(u,Z)$ in an asymptotic end~$\Omega_\iota$.  Vertical arrows are straightforward.  Horizontal arrows rely on ellipticity of the curved operator $\Jcal_{(\seedg,\seedh; g_0,h_0)}$ and its Euclidean analogue $(\notreH^\lambda,\notreM^\lambda)$, with sources being either quadratic terms $\Qcal\Gcal_{(\seedg,\seedh)}$, or a sum of such terms and linear curved terms outlined in~\eqref{linearEuclidenoperatorsource}.  We enter the loop with bounds on $u,Z$, and eventually exit with bounds on the geometry $(\gdiff,\hdiff)$.  \autoref{prop:improve-radial} captures one iteration of steps 1--8.}
\end{figure}


\paragraph{Convention and notation.}

In the upcoming iteration, we build upon the derivation in \autoref{appendix=C} which led us to various expansions of the operators and nonlinearities associated with the Einstein constraints.  Each asymptotic end~$\Omega_\iota$ is equipped with the reference Euclidean metric~$\delta$, with Levi-Civita connection~$\del$.  Indices are raised and lowered using the metric~$\delta$ and its inverse.  For instance, the tensor fields $g_i{}^j = g_{ik}\delta^{kj}$ and $g^j{}_i = \delta^{jk} g_{ki}$ coincide by virtue of the symmetry of $g$ and of $\delta$, which allows us to denote both of them simply as~$g_i^j$. By convention, for any pair of tensors $A,B$ the product $A * B$ denotes arbitrary index contractions using any of the metrics $\delta,g_0,\seedg,g$ (or their inverse). In addition, we find it convenient to set 
\be
\aligned
A^{*n} & \coloneqq A * \dots * A \quad \text{(with $n$ factors),} 
\\
{\del*}A & \coloneqq \del A + \del g_0 * A + \del\seedg * A + \del g * A ,
\\
{\del*}(A B) \coloneqq {\del*}(A * B) & \coloneqq \del A * B + A * \del B + \del g_0 * A * B + \del\seedg * A * B + \del g * A * B ,
\endaligned
\ee
omitting the terms involving $g_0,\seedg,g$ if the problem at hand does not involve these metrics at all.
It should be emphasized that $\del g= \del(g- \delta)$ is small for metrics close to~$\delta$. With this convention, from our decay and regularity assumptions on the objects $g, h, u, Z$, etc. we are able to deduce the desired decay and regularity properties enjoyed by the operators of interest.
The Sobolev and H\"older norms on $\Omega_\iota$ using the metric $g_0$ or the Euclidean metric~$\delta$ are equivalent; we shall use the latter for convenience.


\subsection{Steps 1--4: control of radial derivatives}
\label{section=11.2}

\paragraph{Towards a control of radial derivatives: set-up for steps 1--4.}

The first half of our work here aims to control radial derivatives $(\vartheta u, \vartheta Z)$ in weighted $H^2\times H^1$ norms.
To begin with, the solution $(u,Z)$ obeys~\eqref{equa-EEHM-33}, namely
\be
\Gcal[\seedg+\gdiff, \seedh+\hdiff] = 0 , \qquad
\gdiff = \omegabf_p^2 \, d\Hcal_{(g_0,h_0)}^{*\flat\flat}[u,Z], \qquad
\hdiff = \omegabf_{p+1}^2  \, d\Mcal_{(g_0,h_0)}^{*\sharp\sharp}[u,Z] .
\ee
From the expansion of the constraints in \autoref{lem:constr-expand}, we deduce
\bel{JuQG}
\Jcal_{(\seedg,\seedh;g_0,h_0)}[u,Z]
= - \Gcal[\seedg,\seedh] - \Qcal\Gcal_{(\seedg,\seedh)}[\gdiff,\hdiff] ,
\ee
with a linear operator $\Jcal_{(\seedg,\seedh;g_0,h_0)}[u,Z]
= d\Gcal_{(\seedg,\seedh)}\bigl[ \omegabf_p^2 \, d\Hcal_{(g_0,h_0)}^{*\flat\flat}[u,Z], \omegabf_{p+1}^2  \, d\Mcal_{(g_0,h_0)}^{*\sharp\sharp}[u,Z] \bigr]$ that already appeared in the fixed-point construction of $(u,Z)$.
It is shown to be (non-self-adjoint) elliptic in \autoref{appendix=E}.

We convert $(u,Z)$ to unknowns that satisfy Neumann boundary conditions on the radial boundary of $\Omega_\iota\simeq\Omega_R$ and have the variational decay $r^{-a_{n,p}/2}$ by defining (note that $\pstar'-p=\astar'-a_{n,p}/2$)
\be
u' = r^{\pstar'-p} \kappa u , \qquad Z' = r^{\pstar'-p} \kappa Z .
\ee
Applying to~\eqref{JuQG} the first-order differential operator $\Dcal_{\pstar'}\coloneqq\vartheta(r^{\pstar'-p}\kappa\ \cdot\ )$ gives an equation of the form $\Jcal_{(\seedg,\seedh;g_0,h_0)}[\vartheta u',\vartheta Z']=\text{source}$, where the left-hand side is understood in the sense of distributions, and the source can be shown to lie in weighted $H^{2*}\times H^{1*}$ spaces.  The invertibility of $\Jcal_{(\seedg,\seedh;g_0,h_0)}$ provides a unique solution in $H^2\times H^1$, but this does not constrain $(\vartheta u',\vartheta Z')$, which only lie in $H^1\times L^2$.
Instead, we rely on a regularized version of the radial derivative~$\vartheta$ and of~$\Dcal_{\pstar'}$: define $\vartheta_0=\vartheta$ and $\Dcal_{\pstar',0}=\Dcal_{\pstar'}$ and, for $\sigma>0$,
\be
\aligned
(\vartheta_\sigma f)(x) & = \frac{1}{\sigma} \int_0^{\sigma} \vartheta f(e^\tau x) d\tau = \frac{f(e^\sigma x) - f(x)}{\sigma} , \qquad x\in \Omega_R ,
\\
\Dcal_{\pstar',\sigma} f & = \vartheta_\sigma(r^{\pstar'-p}\kappa f) .
\endaligned
\ee
One has $\vartheta_\sigma f\to\vartheta f$ and $\Dcal_{\pstar',\sigma}f\to\Dcal_{\pstar'}f$ as $\sigma\to 0$ in the sense of distributions.
Then the following identity holds,
\bse\label{Jthu}
\be
\aligned
\Jcal_{(\seedg,\seedh;g_0,h_0)}[\vartheta_\sigma u',\vartheta_\sigma Z']
& = - \Dcal_{\pstar',\sigma} \bigl(\Gcal[\seedg,\seedh]\bigr) - \Dcal_{\pstar',\sigma} \bigl(\Qcal\Gcal_{(\seedg,\seedh)}[\gdiff,\hdiff]\bigr) + \Rcal ,
\endaligned
\ee
where the remainder
\be
\aligned
\Rcal \, & \! \coloneqq \bigl(\Jcal_{(\seedg,\seedh;g_0,h_0)} \circ \Dcal_{\pstar',\sigma} - \Dcal_{\pstar',\sigma} \circ \Jcal_{(\seedg,\seedh;g_0,h_0)}\bigr)[u,Z] \\
& = \Jcal_{(\seedg,\seedh;g_0,h_0)}[\vartheta_\sigma(r^{\pstar'-p} \kappa u),\vartheta_\sigma(r^{\pstar'-p} \kappa Z)] - \vartheta_\sigma\bigl(r^{\pstar'-p} \kappa\Jcal_{(\seedg,\seedh;g_0,h_0)}[u,Z]\bigr)
\endaligned
\ee
\ese
captures the effect of commuting $\Dcal_{\pstar',\sigma}$ through the operator~$\Jcal$.
As stated in \autoref{lem:appE-invertibility}, the operator~$\Jcal$ from $H^2_{n-2-p,-\expoP}\times H^1_{n-2-p,-\expoP}$ to $H^{2*}_{p+2,\expoP}\times H^{1*}_{p+2,\expoP}$ has a bounded inverse.
Our aim in the upcoming few steps is thus to control the sources of~\eqref{Jthu} in $H^{2*}_{p+2,\expoP}\times H^{1*}_{p+2,\expoP}$ norm, so as to deduce a weighted $H^2\times H^1$ control of $(\vartheta_\sigma u',\vartheta_\sigma Z')$.  These bounds are uniform in~$\sigma$ and yield bounds on the $\sigma\to 0$ limit $(\vartheta u',\vartheta Z')$ in these spaces.


\paragraph{Step 1: Estimates on the geometry.}

We now estimate $\gdiff = \omega_p^2 \, d\Hcal^{*\flat\flat}_{(g_0,h_0)}[u,Z]$ and $\hdiff = \omega_{p+1}^2 \, d\Mcal^{*\sharp\sharp}_{(g_0,h_0)}[u,Z]$ (see~\eqref{equa-EEHM-33}) in an asymptotic end using the near-Euclidean expansions of the adjoint linearized constraints from \autoref{prop:adjoint-constr-coord}, further simplified since we do not need the detailed tensor structure presently,
\bel{omegapgdiff-uZexpr}
\aligned
\omega_p^{-2} \gdiff = d\Hcal_{(g_0,h_0)}^{*\flat\flat}[u,Z]
& = {\del*} {\del*} u + h_0 * h_0 * u + {\del*}(h_0 * Z) ,
\\
\omega_{p+1}^{-2} \hdiff = d\Mcal_{(g_0,h_0)}^{*\sharp\sharp}[u,Z] & = {\del*}Z + h_0 * u .
\endaligned
\ee
It is routine to extract weighted H\"older and $L^2$ estimates on $(\gdiff,\hdiff)$.

\begin{lemma}\label{lem:sec11-estimates-geom}
One has
\[
\aligned
\|\gdiff\|^{N,\alpha}_{\Omega_\iota,\pstar',\expoPm} + \|\hdiff\|^{N,\alpha}_{\Omega_\iota,\pstar'+1,\expoPm}
& \simeq \bigl\|d\Gcal_{(g_0,h_0)}^*[u,Z]\bigr\|_{C^{N,\alpha}_{\astar'+2,-\expoPp}(\Omega_\iota)\times C^{N,\alpha}_{\astar'+1,-\expoPp}(\Omega_\iota)}
\\
& \lesssim
\| u \|^{N+2, \alpha}_{\Omega_\iota, \astar', -\expoPp+2}
+ \| Z \|^{N+1, \alpha}_{\Omega_\iota, \astar', -\expoPp+1}
\endaligned
\]
and
\[
\aligned
\|\gdiff\|_{L^2_{\pstar',\expoP}(\Omega_\iota)} {+} \|\hdiff\|_{L^2_{\pstar'+1,\expoP}(\Omega_\iota)}
& \simeq \bigl\|d\Gcal_{(g_0,h_0)}^*[u,Z]\bigr\|_{L^2_{\astar'+2,-\expoP}(\Omega_\iota)\times L^2_{\astar'+1,-\expoP}(\Omega_\iota)}
\\
& \lesssim \| u \|_{H^2_{\astar', -\expoP}(\Omega_\iota)}
+ \| Z \|_{H^1_{\astar', -\expoP}(\Omega_\iota)} ,
\endaligned
\]
with implicit constants depending on the unweighted $C^{N+2,\alpha}(\Omega_\iota)$ norm of $g_0$ and weighted $C^{N+1,\alpha}_1(\Omega_\iota)$ norm of~$h_0$, controlled respectively by the $C^{N+2,\alpha}_{p_G}(\Omega_\iota)$ and $C^{N+1,\alpha}_{p_G+1}(\Omega_\iota)$ norms in the definition of conical localization data set in \autoref{def-conical}.
\end{lemma}


\paragraph{Step 2: Controlling nonlinear terms.}

We turn to the quadratic terms $\Qcal\Gcal_{(\seedg,\seedh)}[\gdiff,\hdiff]$.
The bounds we establish on radial derivatives involve the radial exponent $\pstar'+p+2$, which is larger than the exponent $p+2$ that we need.  Note also that the H\"older regularity required for $(\seedg,\seedh)$ is lower than that of $(g_0,h_0)$.

\begin{lemma}\label{lem:step2-nonlinear}
  Under the smallness conditions of \autoref{thm:sts-existence},  one has, uniformly in $\sigma\geq 0$,
  \be
  \aligned
 & \bigl\| \Qcal\Hcal_{(\seedg,\seedh)}[\gdiff,\hdiff] \bigr\|_{H^{1*}_{2\pstar'+2,\expoP}(\Omega_\iota)}
  + \bigl\| \Dcal_{\pstar',\sigma}\bigl(\Qcal\Hcal_{(\seedg,\seedh)}[\gdiff,\hdiff]\bigr) \bigr\|_{H^{2*}_{\pstar'+p+2,\expoP}(\Omega_\iota)}
  \\
  & + \bigl\| \Qcal\Mcal_{(\seedg,\seedh)}[\gdiff,\hdiff] \bigr\|_{L^2_{2\pstar'+2,\expoP}(\Omega_\iota)}
  + \bigl\| \Dcal_{\pstar',\sigma}\bigl(\Qcal\Mcal_{(\seedg,\seedh)}[\gdiff,\hdiff]\bigr) \bigr\|_{H^{1*}_{\pstar'+p+2,\expoP}(\Omega_\iota)}
  \\
  & \qquad \lesssim
  \bigl(\|\gdiff\|^{N,\alpha}_{\Omega_\iota,\pstar',\expoPm} + \|\hdiff\|^{N,\alpha}_{\Omega_\iota,\pstar'+1,\expoPm}\bigr)
  \bigl(\|\gdiff\|_{L^2_{\pstar',\expoP}(\Omega_\iota)} + \|\hdiff\|_{L^2_{\pstar'+1,\expoP}(\Omega_\iota)}\bigr) ,
  \endaligned
  \ee
  with implicit constants depending on norms $\|g_0\|_{C^{N+2,\alpha}}$, $\|\seedg\|_{C^{N,\alpha}}$, $\|h_0\|_{C^{N+1,\alpha}_1}$, $\|\seedh\|_{C^{N,\alpha}_1}$ on~$\Omega_\iota$ and uniform-equivalence constants of $g,\seedg,g_0$, controlled by the norms in the definition of localized seed data set in \autoref{def-aset} and conical localization data set in \autoref{def-conical}.
\end{lemma}

\begin{proof}
We rely on the elliptic estimates to control one factor in each nonlinear term given in \autoref{lem:constr-expand}.  Importantly, while the localization exponent~$\expoPm$ in elliptic estimates is worse than~$\expoP$, it still remains positive, namely $(\gdiff,\hdiff)$ decay pointwise at the angular boundary.
The momentum nonlinearity has the form
$\Qcal\Mcal_{(\seedg,\seedh)}[\gdiff,\hdiff]
=\seedh*\gdiff*\nabla^{\seedg}\gdiff+\hdiff*\nabla^{\seedg}\gdiff$.
Its $L^2$ norm is bounded by the $L^2$ norm of $\gdiff$ or~$\hdiff$ and the supremum norm of $\nabla^{\seedg}\gdiff$, itself bounded by the H\"older norm thanks to $\expoPm>1$ and $N\geq 1$.
Next, we consider the quadratic Hamiltonian terms
\be
\aligned
\Qcal\Hcal[\gdiff,\hdiff]
= \nabla^{\seedg}\gdiff * \nabla^{\seedg}\gdiff
+ \gdiff * \nabla^{\seedg}\nabla^{\seedg}\gdiff
+ \Ric^{\seedg} * \gdiff * \gdiff
+ \seedh*\seedh*\gdiff*\gdiff + \seedh*\gdiff*\hdiff + \hdiff*\hdiff.
\endaligned
\ee
Apart from the $\nabla^{\seedg}\gdiff * \nabla^{\seedg}\gdiff$ term, all others can be bounded in $L^2$ by bounding a factor $\gdiff$ or~$\hdiff$ in $L^2$ and the other pointwise (using $\expoPm>2$ and $N\geq 2$).  This first term can be rewritten, up to a term $\gdiff * \nabla^{\seedg}\nabla^{\seedg}\gdiff$ controlled in~$L^2$, as a derivative $\nabla^{\seedg}(\gdiff * \nabla^{\seedg}\gdiff)$ of a product in~$L^2$, hence it is controlled in $H^{1*}$~norm as needed.
The discrete radial derivative of $\Dcal_{\pstar',\sigma}\Qcal\Gcal_{(\seedg,\seedh)}[\gdiff,\hdiff]$ for $\sigma>0$ is controlled in the same weighted $H^{1*}\times L^2$ spaces, but non-uniformly in~$\sigma$.  To get a uniform bound in weighted $H^{2*}\times H^{1*}$ spaces, observe that~$\Dcal_{\pstar'}\Qcal\Gcal_{(\seedg,\seedh)}[\gdiff,\hdiff]$ is bounded in these spaces, that radial dilations are operators of norm $\leq 1$ on these spaces, and that $\Dcal_{\pstar',\sigma}$ is obtained as an average of $\Dcal_{\pstar'}$ under a set of radial dilations.
\end{proof}


\paragraph{Step 3: Radial derivatives of the solution.}

We consider the source terms in~\eqref{Jthu}.
We have just provided bounds on the nonlinear sources~$\Qcal\Gcal$.
Next, the $L^2_{\pstar'+2,\expoP}$ norm of $\Gcal[\seedg,\seedh]$ is controlled by the $L^2_{\pstar+2,\expoP}$ norm included in $\Err^+_{\pstar}[\seedg,\seedh]$, hence the source term $- \Dcal_{\pstar',\sigma}(\Gcal[\seedg,\seedh])$ is bounded uniformly in $\sigma\geq 0$ in $H^{1*}$ norm (which we can freely weaken to~$H^{2*}$ in the case of the Hamiltonian),
\bel{Gcalprime-ctrl}
\bigl\| - \vartheta \bigl( r^{\pstar'-p} \kappa \Gcal[\seedg,\seedh] \bigr) \bigr\|_{H^{2*}_{p+2,\expoP}(\Omega_\iota)\times H^{1*}_{p+2,\expoP}(\Omega_\iota)}
\lesssim \Err^+_{\pstar}[\seedg,\seedh] .
\ee
The remaining term~$\Rcal$ is a commutator of the operator $\Dcal_{\pstar',\sigma}$ and of~$\Jcal$, hence it has the same order as~$\Jcal$, which is sufficient for our needs, as stated next.  Its components corresponding to the Hamiltonian and momentum constraints are bounded in $H^{2*}_{p+2,\expoP}$ and $H^{1*}_{p+2,\expoP}$ norms, respectively.

\begin{lemma}\label{lem:sec10-commute}
  The remainder $\Rcal$ in~\eqref{Jthu} obeys, uniformly in $\sigma\geq 0$,
  \be
  \aligned
  \|\Rcal\|_{H^{2*}_{p+2,\expoP}(\Omega_\iota)\times H^{1*}_{p+2,\expoP}(\Omega_\iota)}
  \lesssim \Err^+_{\pstar}[\seedg,\seedh] ,
  \endaligned
  \ee
  where implicit constants depend on $\|g_0,\seedg\|_{C^3(\Omega_\iota)}$ and $\|h_0,\seedh\|_{C^2_1(\Omega_\iota)}$, which are controlled by the available $C^{N,\alpha}_{p_G}$ and $C^{N,\alpha}_{p_G+1}$ norms for $N\geq 3$.
\end{lemma}

\begin{proof}
We present the proof for $\sigma=0$, as $\sigma>0$ is obtained by averaging radially.

The linear operator is $\Jcal_{(\seedg,\seedh;g_0,h_0)} = d\Gcal_{(\seedg,\seedh)} \circ \diag(\omega_p^2,\omega_{p+1}^2) \circ (d\Hcal_{(g_0,h_0)}^{*\flat\flat},d\Mcal_{(g_0,h_0)}^{*\sharp\sharp})$, in which $\diag(\omega_p^2,\omega_{p+1}^2)$ stands for multiplication of the two components by these two weights.  The expressions of $d\Gcal$ and $d\Gcal^*$ are given in \autoref{lem:constr-expand} and \autoref{lem:lin-constr}; each of these operators maps a pair of tensors to another pair of tensors hence can be written in matrix form,
\be\compresseq{.5}\setlength\arraycolsep{0pt}
d\Gcal_{(\seedg,\seedh)} =
\begin{pmatrix}
  {\del*}{\del*} + {\seedh*}{\seedh*} \quad & \seedh* \\
  {\seedh*}{\del*} & {\del*}
\end{pmatrix} ,
\quad\ \
(d\Hcal_{(g_0,h_0)}^{*\flat\flat},d\Mcal_{(g_0,h_0)}^{*\sharp\sharp}) =
\begin{pmatrix}
  {\del*}{\del*} + {h_0*}{h_0*} \quad & {\del*}{h_0*} \\
  {h_0*} & {\del*}
\end{pmatrix} .
\ee
Commutators with $\Dcal_{\pstar'} = \vartheta(r^{\pstar'-p}\kappa\ \cdot\ )$ are differential operators of the same order, and with radial decay exponents shifted by $\pstar'-p$: for instance, using $[\vartheta,\del_i]=-\del_i$, and using that $\del_i\kappa$ is compactly supported, we get
\bel{deli-Dcalpstarp-comm}
\gathered
\bigl[\del_i, \Dcal_{\pstar'}\bigr](u)
= r^{\pstar'-p}\kappa \del_i u + \del_i(r^{\pstar'-p}\kappa) \vartheta u + \del_i \vartheta(r^{\pstar'-p}\kappa) u
\\
\bigl\| \bigl[\del_i, \Dcal_{\pstar'}\bigr](u) \bigr\|_{H^1_{n-1-p,-\expoP}(\Omega_\iota)}
\lesssim \| u \|_{H^2_{\astar',-\expoP}(\Omega_\iota)} .
\endgathered
\ee
Similar calculations yield
\be
\aligned
\bigl[d\Hcal_{(g_0,h_0)}^{*\flat\flat}, \Dcal_{\pstar'}\bigr][u,Z]
& =
[{\del*}{\del*}, \Dcal_{\pstar'}](u) + [{h_0*}{h_0*}, \Dcal_{\pstar'}](u) + [{\del*}{h_0*},\Dcal_{\pstar'}](Z) ,
\\
\bigl[d\Mcal_{(g_0,h_0)}^{*\sharp\sharp}, \Dcal_{\pstar'}\bigr][u,Z]
& = [{h_0*}, \Dcal_{\pstar'}](u) + [{\del*},\Dcal_{\pstar'}](Z) ,
\endaligned
\ee
which involve derivatives of the metrics $\delta$ and $g_0$ that are implicit in the notation~$*$, and (derivatives of)~$h_0$.  We deduce
\bel{dGcal-Dcalpstarp-comm}
\aligned
& \Bigl\|\Bigl[(d\Hcal_{(g_0,h_0)}^{*\flat\flat},d\Mcal_{(g_0,h_0)}^{*\sharp\sharp}), \Dcal_{\pstar'}\Bigr][u,Z]\Bigr\|_{L^2_{n-p,-\expoP}(\Omega_\iota)\times L^2_{n-p-1,-\expoP}(\Omega_\iota)}
\\
& \quad \lesssim \| u \|_{H^2_{\astar',-\expoP}(\Omega_\iota)} + \| Z \|_{H^1_{\astar',-\expoP}(\Omega_\iota)} ,
\endaligned
\ee
and, after acting with $\diag(\omega_p^2,\omega_{p+1}^2)$ and $d\Gcal_{(\seedg,\seedh)}$,
\be
\aligned
\quad & \unquad \Bigl\|d\Gcal_{(\seedg,\seedh)}\circ\diag(\omega_p^2,\omega_{p+1}^2)\circ\bigl[(d\Hcal_{(g_0,h_0)}^{*\flat\flat},d\Mcal_{(g_0,h_0)}^{*\sharp\sharp}), \Dcal_{\pstar'}\bigr][u,Z]\Bigr\|_{H^{2*}_{p+2,\expoP}(\Omega_\iota)\times H^{1*}_{p+2,\expoP}(\Omega_\iota)} 
\\
& \lesssim \| u \|_{H^2_{\astar',-\expoP}(\Omega_\iota)} + \| Z \|_{H^1_{\astar',-\expoP}(\Omega_\iota)} ,
\endaligned
\ee
The commutator $[\omega_q^2,\Dcal_{\pstar'}]=-(n-2q)\omega_q^2 r^{\pstar'-p}\kappa$ for $q=p$ and $p+1$ adds innocuous terms that are bounded in the same way.
We are left with the commutator $[d\Gcal_{(\seedg,\seedh)},\Dcal_{\pstar'}](\gdiff,\hdiff)$,
\be
\aligned\relax
[d\Hcal_{(\seedg,\seedh)} , \Dcal_{\pstar'}](\gdiff,\hdiff) & = [{\del*}{\del*},\Dcal_{\pstar'}](\gdiff) + [{\seedh*}{\seedh*},\Dcal_{\pstar'}](\gdiff) + [\seedh*,\Dcal_{\pstar'}](\hdiff) ,
\\
[d\Mcal_{(\seedg,\seedh)} , \Dcal_{\pstar'}](\gdiff,\hdiff) & = [{\seedh*}{\del*},\Dcal_{\pstar'}](\gdiff) + [{\del*},\Dcal_{\pstar'}](\hdiff) ,
\endaligned
\ee
which we bound using the control of $\gdiff$ and $\hdiff$ in \autoref{lem:sec11-estimates-geom},
\be
\aligned
\bigl\| [d\Gcal_{(\seedg,\seedh)} , \Dcal_{\pstar'}](\gdiff,\hdiff) \bigr\|_{H^{2*}_{p+2,\expoP}(\Omega_\iota)\times H^{1*}_{p+2,\expoP}(\Omega_\iota)}
& \lesssim \|\gdiff\|_{L^2_{\pstar',\expoP}(\Omega_\iota)} + \|\hdiff\|_{L^2_{\pstar'+1,\expoP}(\Omega_\iota)}
\\
& \lesssim \| u \|_{H^2_{\astar',-\expoP}(\Omega_\iota)} + \| Z \|_{H^1_{\astar',-\expoP}(\Omega_\iota)} ,
\endaligned
\ee
where the implicit constant depends on $\|\seedg\|_{C^3(\Omega_\iota)}$ and $\|\seedh\|_{C^2_1(\Omega_\iota)}$.  The need for three derivatives of $\seedg$ (hence the condition $N\geq 3$ in our main theorem) comes solely from the term
$\Dcal_{\pstar'}(\del\del\seedg*\gdiff)$ in $\Dcal_{\pstar'}\circ d\Hcal_{(\seedg,\seedh)}[\gdiff,\hdiff]$.
Overall, each term in
\be
\aligned
\Rcal & = \bigl[d\Gcal_{(\seedg,\seedh)} , \Dcal_{\pstar'} \bigr](\gdiff,\hdiff)
- d\Gcal_{(\seedg,\seedh)}\bigl[(n-2p)r^{\pstar'-p}\kappa\gdiff,(n-2-2p)r^{\pstar'-p}\kappa\hdiff\bigr] \\
& \quad + d\Gcal_{(\seedg,\seedh)} \circ \diag(\omega_p^2,\omega_{p+1}^2) \circ \bigl[(d\Hcal_{(g_0,h_0)}^{*\flat\flat},d\Mcal_{(g_0,h_0)}^{*\sharp\sharp}), \Dcal_{\pstar'}\bigr](u,Z)
\endaligned
\ee
is bounded by $\| u \|_{H^2_{\astar',-\expoP}(\Omega_\iota)} + \| Z \|_{H^1_{\astar',-\expoP}(\Omega_\iota)}$, itself bounded by $\Err^+_{\pstar}[\seedg,\seedh]$ by $\Est(\Omega_\iota,\pstar')$ given in~\eqref{Est-def}.
\end{proof}


\paragraph{Step 4: Summarizing bounds on the solution.}

We have obtained bounds on the sources in~\eqref{Jthu}, from which we now deduce bounds on $(\vartheta_\sigma u',\vartheta_\sigma Z')$ for $\sigma>0$, then on $(\vartheta u',\vartheta Z')$ and finally $(\vartheta u,\vartheta Z)$.

\begin{lemma}\label{lem:kappavarthetau}
The radial derivatives of the solution obey
\be
\aligned
\| \kappa\vartheta u \|_{H^2_{\astar', -\expoP}(\Omega_\iota)}
+ \| \kappa\vartheta Z \|_{H^1_{\astar', -\expoP}(\Omega_\iota)}
& \lesssim \Err^+_{\pstar}[\seedg,\seedh]
\endaligned
\ee
where implicit constants depend on $\|g_0,\seedg\|_{C^3(\Omega_\iota)}$ and $\|h_0,\seedh\|_{C^2_1(\Omega_\iota)}$, as well as on the fixed localization geometry and exponents of the end~$\Omega_\iota$.
\end{lemma}

\begin{proof}
  Thanks to~\eqref{Gcalprime-ctrl}, \autoref{lem:step2-nonlinear} and~\autoref{lem:sec10-commute}, the  $H^{2*}_{p+2,\expoP}(\Omega_\iota)\times H^{1*}_{p+2,\expoP}(\Omega_\iota)$ norm of the source terms in \eqref{Jthu} is bounded by $\Err^+_{\pstar}[\seedg,\seedh]$.
  The estimates on the elliptic operator $\Jcal_{(\seedg,\seedh;g_0,h_0)}$ in~\autoref{lem:appE-invertibility} then imply that, uniformly in $\sigma>0$,
  \bel{varthetauprime-bnd-sigma}
  \| \vartheta_\sigma u' \|_{H^2_{a_{n,p}/2,-\expoP}(\Omega_\iota)}
  + \| \vartheta_\sigma Z' \|_{H^1_{a_{n,p}/2,-\expoP}(\Omega_\iota)}
  \lesssim \Err^+_{\pstar}[\seedg,\seedh] .
  \ee
  By weak compactness, there exists a sequence $\sigma_j\to 0$ such that $(\vartheta_{\sigma_j} u',\vartheta_{\sigma_j} Z')$ converges weakly in these spaces.  The weak limit must coincide with the distributional limit $(\vartheta u',\vartheta Z')$ and lower semi-continuity then gives a control of the radial derivative
  \bel{varthetauprime-bnd}
  \| \vartheta u' \|_{H^2_{a_{n,p}/2,-\expoP}(\Omega_\iota)}
  + \| \vartheta Z' \|_{H^1_{a_{n,p}/2,-\expoP}(\Omega_\iota)}
  \lesssim \Err^+_{\pstar}[\seedg,\seedh] .
  \ee
  Then $\kappa\vartheta u = r^{-\pstar'+p}\vartheta u' - ((\pstar' - p) \kappa + \vartheta \kappa) u$ and likewise for~$Z$ implies
  \be
  \aligned
  \| \kappa \vartheta u \|_{H^2_{\astar',-\expoP}(\Omega_\iota)}
  & \lesssim \| \vartheta u' \|_{H^2_{a_{n,p}/2,-\expoP}(\Omega_\iota)} + \| u \|_{H^2_{\astar',-\expoP}(\Omega_\iota)} ,
  \\
  \| \kappa \vartheta Z \|_{H^1_{\astar',-\expoP}(\Omega_\iota)}
  & \lesssim
  \| \vartheta Z' \|_{H^1_{a_{n,p}/2,-\expoP}(\Omega_\iota)}
  + \| Z \|_{H^1_{\astar',-\expoP}(\Omega_\iota)} .
  \endaligned
  \ee
  Here $\kappa$ is a fixed smooth radial cutoff with bounded scale-invariant derivatives; all of its nonzero derivatives are supported in a fixed annulus.
  The terms on both right-hand sides are bounded by $\Err^+_{\pstar}[\seedg,\seedh]$ thanks to~\eqref{varthetauprime-bnd} and~\eqref{Est-def}, respectively.  (The harmonic shift in~\eqref{Est-def} is absent here since $\pstar'<n-2$.)
\end{proof}


\subsection{Steps 5--8: proof of the radial decay improvement}
\label{section=11.3}

\paragraph{Towards a better radial decay: set-up for steps 5--8.}

The Euclidean space version of the squared constraint operator~$\Jcal$ is $\Jcal_{(\delta,0;\delta,0)}[u,Z] = (\omega_p^2\notreH[u], \omega_{p+1}^2\notreM[Z])$ in terms of the operators studied in \refwithname{Sections}{section=5} \refwithname{to}{section=9}.
In this second stage of the proof, we consider the constraints~\eqref{JuQG} in the form
\bel{eq-kappau}
\notreH[\kappa u] = E_\kappa , \qquad \notreM[\kappa Z] = F_\kappa
\ee
with sources specified shortly.
For the present discussion, in analogy to $\cutoff_{\pstar''}$, we introduce the exceptional indicator
\bel{exceptional-indicator}
\cutoff_{\pstar''}^{\textnormal{ex}} = \begin{cases}
0 & \text{if } \pstar'' < n-2 \text{ or } \pstar''= n-2 = \min(\pstar'+p_G,2\pstar')  , \\
1 & \text{if } \pstar''> n-2 \text{ or } \pstar''= n-2 < \min(\pstar'+p_G,2\pstar') ,
\end{cases}
\ee
where we recall that $\pstar''=\min(\pstar,\pstar'+p_G,2\pstar')$.
In other words, $\cutoff_{\pstar''}^{\textnormal{ex}}$ coincides with $\cutoff_{\pstar''}$ except in the special case $\pstar''=\min(\pstar'+p_G,2\pstar')=n-2$ singled out in \autoref{prop:improve-radial} as lacking the pointwise control with exponent~$\pstar''$.
Just as for the indicator $\cutoff_{\pstar''}$, we use the convention that products of the form $\cutoff_{\pstar''}^{\textnormal{ex}}A$ vanish when the indicator vanishes, regardless of whether~$A$ is defined.

\refwithname{Theorems}{thm-sharp-h-localized} \refwithname{and}{thm-sharp-m-localized} imply that solutions decay with improved radial exponent, provided the sources obey suitable H\"older and dual Sobolev bounds, specifically\footnote{Since $\kappa$ (hence~$E_\kappa$) vanishes in a neighborhood of the radial boundary of $\Omega_\iota$, no distinction is needed between $\vartheta E_\kappa$ and the distributional derivative~$\vartheta_* E_\kappa$.}. 
\bel{neededEFbounds}
\aligned
&
\| E_\kappa \|_{\Omega_\iota,\astar''+4,-\expoPp-2}^{N-2,\alpha} + \|F_\kappa\|^{N-1,\alpha}_{\Omega_\iota,\astar''+2,-\expoPp-1}
 + \| E_\kappa,\vartheta E_\kappa \|_{H^{2*}_{\astar''+4,-\expoP}(\Omega_\iota)}
 + \|F_\kappa\|_{H^{1*}_{\astar''+2,-\expoP}(\Omega_\iota,\RR^n)}
\\
& \quad + \cutoff_{\pstar''}^{\textnormal{ex}} \mmax(E_\kappa)
+ \cutoff_{\pstar''}^{\textnormal{ex}} \Jmax(F_\kappa)
+ \|\vartheta(\kappa u)\|_{H^2_{n-2-p,-\expoP}(\Omega_\iota)}
\lesssim \Err^+_{\pstar}[\seedg,\seedh] .
\endaligned
\ee
The last term is controlled by \autoref{lem:kappavarthetau} and $\Est(\Omega_\iota,\pstar')$ since $n-2-p\leq\astar'$.
The other terms are bounded in \refwithname{Lemmas}{lem:EF-Holder}\refwithname{,}{lem:EF-energymom} \refwithname{and}{lem:EF-Sobolev} below.

Let us write down the sources explicitly.
From the nonlinear curved equation~\eqref{JuQG} and from the expression of the operators~$\Jcal$ as compositions of linearized and adjoint constraints, we get
\bse\label{EF-curved}
\be
(E_\kappa,F_\kappa) = T^\Gcal_\kappa[u, Z] + T^\Gcal_\adj[u, Z] + T^\Gcal_\lin[\gdiff,\hdiff] + T^\Gcal_\qua[\gdiff,\hdiff] + T^\Gcal_\seed ,
\ee
with
\be
\aligned
T^\Gcal_\kappa[u,Z] & \coloneqq \bigl(\notreH[\kappa u]-\kappa\notreH[u], \ \notreM[\kappa Z]-\kappa\notreM[Z]\bigr) ,
\\
T^\Gcal_\adj[u,Z] & \coloneqq
- \diag(\omega_p^{-2},\omega_{p+1}^{-2}) \kappa d\Gcal_{(\delta,0)}\Bigl[ \diag(\omega_p^2,\omega_{p+1}^2) \\[-.5ex]
& \qquad\qquad \bigl( d\Hcal_{(g_0,h_0)}^{*\flat\flat} - d\Hcal_{(\delta,0)}^{*\flat\flat}, d\Mcal_{(g_0,h_0)}^{*\sharp\sharp} - d\Mcal_{(\delta,0)}^{*\sharp\sharp}\bigr)[u,Z] \Bigr] ,
\\
T^\Gcal_\lin[\gdiff,\hdiff] & \coloneqq - \diag(\omega_p^{-2},\omega_{p+1}^{-2}) \kappa (d\Gcal_{(\seedg,\seedh)} - d\Gcal_{(\delta,0)})[\gdiff,\hdiff] ,
\\
T^\Gcal_\qua[\gdiff,\hdiff] & \coloneqq - \diag(\omega_p^{-2},\omega_{p+1}^{-2}) \kappa \Qcal\Gcal_{(\seedg,\seedh)}[\gdiff,\hdiff] ,
\\
T^\Gcal_\seed & \coloneqq - \diag(\omega_p^{-2},\omega_{p+1}^{-2}) \kappa \Gcal[\seedg,\seedh] .
\endaligned
\ee
\ese
The contribution of the seed term~$T^\Gcal_\seed$ to the norms~\eqref{neededEFbounds} is controlled by $\Err^+_{\pstar''}[\seedg,\seedh]$ (hence by $\Err^+_{\pstar}[\seedg,\seedh]$) since $H^{2*}$ norms of $E_\kappa$ and $\vartheta E_\kappa$ and the $H^{1*}$ norm of~$F_\kappa$ are bounded by the $L^2$ norms of~$E_\kappa$ and~$F_\kappa$.  For the previous two terms we need a control of $\gdiff$ and~$\hdiff$.


\paragraph{Step 5: Radial derivatives of the geometry.}

We begin with $L^2$ estimates on $(\kappa\vartheta\gdiff,\kappa\vartheta\hdiff)$, based on the expressions $\gdiff = \omega_p^2 ( {\del*} {\del*} u + h_0 * h_0 * u + {\del*}(h_0 * Z))$ and $\hdiff = \omega_{p+1}^2({\del*}Z + h_0 * u)$ in~\eqref{omegapgdiff-uZexpr}.
This involves commuting the first-order operator $\kappa\vartheta$ through the differential operators $d\Hcal_{(g_0,h_0)}^{*\flat\flat}$ and $d\Mcal_{(g_0,h_0)}^{*\sharp\sharp}$ that define $(\gdiff,\hdiff)$ in terms of $(u,Z)$.  This produces lower-order derivative terms, and derivatives of $g_0,h_0,\kappa,\omega_p^2,\omega_{p+1}^2$.  It leads to an analogue of \autoref{lem:sec11-estimates-geom}.

\begin{lemma}\label{lem:kappavarthetagdiff}
The radial derivatives of the metric and extrinsic curvature obey
\be
\aligned
\|\kappa\vartheta\gdiff\|_{L^2_{\pstar',\expoP}(\Omega_\iota)} + \|\kappa\vartheta\hdiff\|_{L^2_{\pstar'+1,\expoP}(\Omega_\iota)}
& \lesssim \| u , \kappa\vartheta u\|_{H^2_{\astar', -\expoP}(\Omega_\iota)}
+ \| Z , \kappa\vartheta Z \|_{H^1_{\astar', -\expoP}(\Omega_\iota)} ,
\endaligned
\ee
where implicit constants depend on $\|g_0\|_{C^3(\Omega_\iota)}$ and $\|h_0\|_{C^2_1(\Omega_\iota)}$.
\end{lemma}

\begin{proof}
Since $\vartheta(\omega_p^2) = (n-2p)\omega_p^2$ one has
\be
\aligned
 &  \|\kappa\vartheta\gdiff\|_{L^2_{\pstar',\expoP}(\Omega_\iota)} + \|\kappa\vartheta\hdiff\|_{L^2_{\pstar'+1,\expoP}(\Omega_\iota)}
\\
& \lesssim \|\kappa\gdiff\|_{L^2_{\pstar',\expoP}(\Omega_\iota)} + \|\kappa\hdiff\|_{L^2_{\pstar'+1,\expoP}(\Omega_\iota)}
+ \Bigl\|\kappa\vartheta(\omega_p^{-2}\gdiff), \ \kappa\vartheta(\omega_{p+1}^{-2}\hdiff)\Bigr\|_{L^2_{\astar'+2,-\expoP}(\Omega_\iota)\times L^2_{\astar'+1,-\expoP}(\Omega_\iota)}
\\
& \lesssim \|\gdiff\|_{L^2_{\pstar',\expoP}(\Omega_\iota)} + \|\hdiff\|_{L^2_{\pstar'+1,\expoP}(\Omega_\iota)}
+ \bigl\|\kappa\vartheta(d\Hcal_{(g_0,h_0)}^{*\flat\flat}[u,Z])\bigr\|_{L^2_{\astar'+2,-\expoP}(\Omega_\iota)} \\
& \quad + \bigl\|\kappa\vartheta(d\Mcal_{(g_0,h_0)}^{*\sharp\sharp}[u,Z])\bigr\|_{L^2_{\astar'+1,-\expoP}(\Omega_\iota)}
\endaligned
\ee
Thanks to \autoref{lem:sec11-estimates-geom}, the first term is controlled by the (weighted) $H^2$ norm of~$u$ and $H^1$ norm of~$Z$.
We split the last term (and the previous one likewise) into two parts.  Firstly, $d\Mcal_{(g_0,h_0)}^{*\sharp\sharp}$ acting on $(\kappa\vartheta u,\kappa\vartheta Z)$, which is bounded by the (weighted) $H^2$ norm of~$\kappa\vartheta u$ and $H^1$ norm of~$\kappa\vartheta Z$ by the last inequality in \autoref{lem:sec11-estimates-geom}, applied to $\kappa\vartheta(u,Z)$.  Secondly, a commutator of $d\Mcal_{(g_0,h_0)}^{*\sharp\sharp}$ and $\kappa\vartheta$ very similar to that of $\Dcal_{\pstar'}$ and the adjoint constraints, which was bounded in the proof of \autoref{lem:sec10-commute} (see \eqref{deli-Dcalpstarp-comm}--\eqref{dGcal-Dcalpstarp-comm}).  Importantly, all terms in the commutator involve $\kappa$ or its derivatives, which are bounded pointwise and vanish outside~$\Omega_\iota$, thus the upper bounds involve norms of $u$ and~$Z$ within $\Omega_\iota$, only.
\end{proof}


\paragraph{Step 6: Controlling sources for the Euclidean operator.}  We proceed to prove three lemmas on the source terms of $\notreH[\kappa u] = E_\kappa$ and $\notreM[\kappa Z] = F_\kappa$, concerning the H\"older norm, energy-momentum terms, and dual Sobolev norm, respectively.  We do not control the energy in the special case $\pstar''=n-2=\min(\pstar'+p_G,2\pstar')\leq\pstar$, and correspondingly the conclusions of \autoref{prop:improve-radial} are slightly weaker in that case.

\begin{lemma}\label{lem:EF-Holder}
The source terms obey H\"older bounds
\be
\aligned
\| E_\kappa \|_{\Omega_\iota,\astar''+4,-\expoPp-2}^{N-2,\alpha} + \|F_\kappa\|^{N-1,\alpha}_{\Omega_\iota,\astar''+2,-\expoPp-1} \lesssim \Err^+_{\pstar}[\seedg,\seedh] .
\endaligned
\ee
\end{lemma}

\begin{proof}
\bse\label{TGcal-Holder-1043}
We treat each term in~\eqref{EF-curved} in turn.
The seed term~$T^\Gcal_\seed$ is controlled, as observed immediately after~\eqref{EF-curved}.
Next, consider~$T^\Gcal_\kappa$.
The difference $\notreH[\kappa u]-\kappa\notreH[u]$ is a sum of terms of the form $\lambda^{-2\expoP}D^{k_1}(\lambda^{2\expoP})\,D^{k_2}u\,D^{k_3}\kappa$, where $D^k$ denotes a differential operator of order~$k$ constructed from $\vartheta$ and $\nablaslash$, and where $k_1+k_2+k_3\leq 4$ and $k_3\geq 1$.  Likewise, the momentum component of $T^\Gcal_\kappa[u,Z]$ consists of $\lambda^{-2\expoP}D^{k_1}(\lambda^{2\expoP})\,D^{k_2}Z\,D^{k_3}\kappa$ with $k_1+k_2+k_3\leq 2$ and $k_3\geq 1$.  Observe that $|\lambda^{-2\expoP}D^{k_1}(\lambda^{2\expoP})|\lesssim\lambda^{-k_1}$ and $D^{k_3}\kappa$ is uniformly bounded for $k_3\geq 1$, and is supported on $\Omega_0\cap\Omega_\iota$.  This yields pointwise bounds
\be
\bigl\| T^\Gcal_\kappa[u,Z] \bigr\|_{C_{\beta+4,-\expoPp-2}^{N-2,\alpha}(\Omega_\iota)\times C_{\beta+2,-\expoPp-1}^{N-1,\alpha}(\Omega_\iota)}
\lesssim \|u\|_{\Omega_0,\beta,-\expoPp+1}^{N+1,\alpha} + \|Z\|_{\Omega_0,\beta,-\expoPp}^{N,\alpha}
\ee
for any given radial exponent $\beta$ (such as the exponent $\beta=\astar''$ of interest) since this exponent is irrelevant in the bounded region~$\Omega_0$.  These norms of $u$ and~$Z$ are controlled by~\eqref{estim-Omega0}.

Consider next $T^\Gcal_\adj[u,Z]$.  Using \autoref{prop:adjoint-constr-coord}, we bound $d\Hcal_{(g_0,h_0)}^{*\flat\flat} - d\Hcal_{(\delta,0)}^{*\flat\flat}$ and $d\Mcal_{(g_0,h_0)}^{*\sharp\sharp} - d\Mcal_{(\delta,0)}^{*\sharp\sharp}$ in terms of products of $(g_0-\delta,h_0)$ and $(u,Z)$,
\be
\aligned
\quad & \unquad \bigl\| T^\Gcal_\adj[u,Z] \bigr\|_{C_{p_G+\astar'+4,-\expoPp-2}^{N-2,\alpha}(\Omega_\iota)\times C_{p_G+\astar'+2,-\expoPp-1}^{N-1,\alpha}(\Omega_\iota)} \\
& \lesssim \bigl\| \bigl(d\Hcal_{(g_0,h_0)}^{*\flat\flat} - d\Hcal_{(\delta,0)}^{*\flat\flat}\bigr)[u,Z] \bigr\|_{C_{p_G+\astar'+2,-\expoPp}^{N,\alpha}(\Omega_\iota)} \\
& \quad + \bigl\| \bigl(d\Mcal_{(g_0,h_0)}^{*\sharp\sharp} - d\Mcal_{(\delta,0)}^{*\sharp\sharp}\bigr)[u,Z] \bigr\|_{C_{p_G+\astar'+1,-\expoPp}^{N,\alpha}(\Omega_\iota)}
\\
& \lesssim
\|u\|_{\Omega_\iota,\astar',-\expoPp+2}^{N+2,\alpha}
+\|Z\|_{\Omega_\iota,\astar',-\expoPp+1}^{N+1,\alpha}, 
\endaligned
\ee
where the implicit constant includes a factor of $\|g_0-\delta\|_{p_G}^{N+2,\alpha} + \|h_0\|_{p_G+1}^{N+1,\alpha}$.
This norm of $u,Z$ is controlled by $\Err^+_{\pstar}[\seedg,\seedh]$ according to $\Est(\Omega_\iota,\pstar')$.
Since $\astar''\leq\astar'+p_G$ (equivalently $\pstar''\leq\pstar'+p_G$), the norm on the left-hand side has a radial exponent at least as good as needed.

Next, using \autoref{lem:lin-constr-coord} to express $d\Gcal_{(\seedg,\seedh)} - d\Gcal_{(\delta,0)}$ as products of $(\seedg-\delta,\seedh)$ and $(\gdiff,\hdiff)$, we get
\be\compresseq{.42}
\aligned
& \bigl\| T^\Gcal_\lin[\gdiff,\hdiff] \bigr\|_{C_{p_G+\astar'+4,-\expoPp-2}^{N-2,\alpha}(\Omega_\iota)\times C_{p_G+\astar'+2,-\expoPp-1}^{N-1,\alpha}(\Omega_\iota)}
\\
& \quad \lesssim \bigl\| (d\Gcal_{(\seedg,\seedh)} - d\Gcal_{(\delta,0)})[\gdiff,\hdiff] \bigr\|_{C_{p_G+\pstar'+2,\expoPm-2}^{N-2,\alpha}(\Omega_\iota)\times C_{p_G+\pstar'+2,\expoPm-1}^{N-1,\alpha}(\Omega_\iota)} 
\lesssim \|\gdiff\|_{\Omega_\iota,\pstar',\expoPm}^{N,\alpha} + \|\hdiff\|_{\Omega_\iota,\pstar'+1,\expoPm}^{N-1,\alpha} ,
\endaligned
\ee
where the implicit constant includes a factor of $\|\seedg-\delta\|_{p_G}^{N,\alpha} + \|\seedh\|_{p_G+1}^{N-1,\alpha}$.
The norms of $\gdiff,\hdiff$ are controlled thanks to~\autoref{lem:sec11-estimates-geom}.

The quadratic terms are described in \autoref{lem:constr-expand}
\bel{T-quadra}
\aligned
& \bigl\| T^\Gcal_\qua[\gdiff,\hdiff] \bigr\|_{C_{\pstar'+\astar'+4,-\expoPp-2}^{N-2,\alpha}(\Omega_\iota)\times C_{\pstar'+\astar'+2,-\expoPp-1}^{N-1,\alpha}(\Omega_\iota)}
\\
& \quad \lesssim \bigl\| \Qcal\Gcal_{(\seedg,\seedh)}[\gdiff,\hdiff] \bigr\|_{C_{2\pstar'+2,\expoPm-2}^{N-2,\alpha}(\Omega_\iota)\times C_{2\pstar'+2,\expoPm-1}^{N-1,\alpha}(\Omega_\iota)}
\lesssim \bigl( \|\gdiff\|_{\Omega_\iota,\pstar',\expoPm/2}^{N,\alpha} + \|\hdiff\|_{\Omega_\iota,\pstar'+1,\expoPm/2}^{N-1,\alpha} \bigr)^2 .
\endaligned
\ee
where the implicit constants depend on the $C_0^{N,\alpha}(\Omega_\iota)$ norm of $\seedg-\delta$ and the $C_1^{N,\alpha}(\Omega_\iota)$ norm of~$\seedh$. The upper bound is controlled through~\autoref{lem:sec11-estimates-geom}.  The left-hand side controls the desired norm since $\pstar''\leq 2\pstar'$.
\ese
\end{proof}


\begin{lemma}\label{lem:EF-energymom}
If $\pstar''\neq n-2$ or $\pstar''=\pstar=n-2<\min(\pstar'+p_G, 2\pstar')$,
the energy and momentum semi-norms of the sources $E_\kappa$ and~$F_\kappa$ are bounded by
\be
\cutoff_{\pstar''}^{\textnormal{ex}} \mmax(E_\kappa)
+ \cutoff_{\pstar''}^{\textnormal{ex}} \Jmax(F_\kappa)
\lesssim \Err^+_{\pstar}[\seedg,\seedh] .
\ee
\end{lemma}

\begin{proof}
The energy and momentum terms are controlled by the H\"older norms for $\pstar''>n-2$ (cf.~\eqref{mmodu-def}), and vanish for $\pstar''<n-2$, so they can be ignored in these cases.  Consider now $\pstar''=\pstar=n-2<\min(\pstar'+p_G, 2\pstar')$.
Per~\eqref{TGcal-Holder-1043}, the first four terms in~\eqref{EF-curved} are controlled with (pointwise) radial decay $\pstar'+p_G$ or $2\pstar'$, both larger than $n-2$, hence sufficient to bound the energy and momentum integrals.
On the other hand, the energy and momentum contributions of~$T^\Gcal_\seed$ are integrals of $\kappa \Gcal[\seedg,\seedh]$, which coincide with those in the definition of~$\Err^+_{\pstar}[\seedg,\seedh]$ in~\eqref{equa-Ecal-pq-rep}, except for terms with $\Obig(r^{-(n-2)-p_G})$ decay due to metric differences and compact terms caused by~$\kappa$.  These in turn are bounded by the H\"older (or Lebesgue) norms of $\Hcal(\seedg,\seedh)$ and $\Mcal(\seedg,\seedh)$ in~$\Err^+_{\pstar}[\seedg,\seedh]$. 
\end{proof}


\begin{lemma}\label{lem:EF-Sobolev}
The source terms obey dual Sobolev bounds
\be
\| E_\kappa,\vartheta E_\kappa \|_{H^{2*}_{\astar''+4,-\expoP}(\Omega_\iota)}
+ \|F_\kappa\|_{H^{1*}_{\astar''+2,-\expoP}(\Omega_\iota)}
\lesssim \Err^+_{\pstar}[\seedg,\seedh] .
\ee
\end{lemma}

\begin{proof}
We fix a test function $\varphi\in H^2_{-\astar''-4,-\expoP}(\Omega_\iota)$ and test vector field $\psi\in H^1_{-\astar''-2,-\expoP}(\Omega_\iota,\RR^n)$ throughout this proof, and seek bounds on
\bel{EFphipsi}
\bigl\la (E_\kappa,F_\kappa),\, (\varphi,\psi)\bigr\ra = \int_{\Omega_\iota} \bigl( E_\kappa \varphi + F_\kappa\cdot\psi \bigr) \lambda^{2\expoP} r^{-n} d^nx , \qquad
\la \vartheta E_\kappa,\varphi\ra = - \int_{\Omega_\iota} E_\kappa\vartheta\varphi \, \lambda^{2\expoP} r^{-n} d^nx
\ee
by weighted $H^2\times H^1$ norms of $(u,Z)$ and $(\varphi,\psi)$.

Let us begin with the term $T^\Gcal_\kappa[u,Z] = (\notreH[\kappa u]-\kappa\notreH[u], \ \notreM[\kappa Z]-\kappa\notreM[Z])$ in~\eqref{EF-curved}.  The expressions~\eqref{equa:acalew0-deux} of $(\notreH,\notreM)$ yield (here and below, the index contractions ``$\cdot$'' do not involve any implicit metric)
\be
\aligned
\bigl\la (T^\Gcal_\kappa[u,Z])^F, \psi \bigr\ra
& = \int_{\Omega_\iota} \Bigl( d\Mcal_{(\delta,0)}^{*\sharp\sharp}[\kappa Z] \cdot d\Mcal_{(\delta,0)}^*[r^{2p+2-2n} \psi] \\[-1ex]
& \qquad\qquad - d\Mcal_{(\delta,0)}^{*\sharp\sharp}[Z] \cdot d\Mcal_{(\delta,0)}^*[\kappa r^{2p+2-2n} \psi] \Bigr) \lambda^{2\expoP} r^{n-2p-2} d^nx .
\endaligned
\ee
Upon expanding derivatives, one gets terms of the form $D\kappa\,D^{k_1}Z\,D^{k_2}\psi$ with $k_1+k_2\leq 1$, where $D^k$ are differential operators of order~$k$ built from scale-invariant derivatives $\nablaslash$ or~$\vartheta$.
These terms are controlled by the $H^1_{-\expoP}$ norms of $Z$ and $\psi$ within the compact set~$\Omega_0$ (within which $D\kappa$ is supported).
The bound on $(T^\Gcal_\kappa[u,Z])^E$ is entirely analogous, so
\bel{TGkappa-Sobolev-bound}
\bigl\|T^\Gcal_\kappa[u,Z]\bigr\|_{H^{2*}_{\astar''+4,-\expoP}(\Omega_\iota)\times H^{1*}_{\astar''+2,-\expoP}(\Omega_\iota)}
\lesssim \|(u,Z)\|_{H^2_{\astar'',-\expoP}(\Omega_0)\times H^1_{\astar'',-\expoP}(\Omega_0)}
\lesssim \Err^+_{\pstar}[\seedg,\seedh] ,
\ee
where the choice of radial exponent~$\astar''$ is irrelevant and the upper bound comes from~\eqref{estim-Omega0}.
For the radial derivative, we commute both $\kappa$ and $\vartheta$ through powers of $r$ and the derivatives in $d\Hcal_{(\delta,0)}^*$ to get
\be
\aligned
& \bigl\la \vartheta(T^\Gcal_\kappa[u,Z])^E, \varphi\bigr\ra
\\
& = \int_{\Omega_\iota} \Bigl( d\Hcal_{(\delta,0)}^{*\flat\flat}[u] \cdot d\Hcal_{(\delta,0)}^*[\kappa r^{2p-2n} \vartheta\varphi]
- d\Hcal_{(\delta,0)}^{*\flat\flat}[\kappa u] \cdot d\Hcal_{(\delta,0)}^*[r^{2p-2n} \vartheta\varphi] \Bigr) \lambda^{2\expoP} r^{n-2p} d^nx
\\
& = \int_{\Omega_\iota} \Bigl( d\Hcal_{(\delta,0)}^{*\flat\flat}[u] \cdot \bigl( \bigl[d\Hcal_{(\delta,0)}^*,\kappa\cdot\ \bigr][r^{2p-2n} \vartheta\varphi] \bigr) \\[-1ex]
& \qquad\qquad - \bigl( \bigl[d\Hcal_{(\delta,0)}^{*\flat\flat},\kappa\cdot\ \bigr][u] \bigr) \cdot d\Hcal_{(\delta,0)}^*[r^{2p-2n} \vartheta\varphi] \Bigr) \lambda^{2\expoP} r^{n-2p} d^nx
\\
& = \int_{\Omega_\iota} \Bigl( d\Hcal_{(\delta,0)}^{*\flat\flat}[u] \cdot \bigl( \bigl[d\Hcal_{(\delta,0)}^*,\kappa\cdot\ \bigr][r^{2p-2n} \vartheta\varphi] \bigr) \\[-1ex]
& \qquad\qquad + \bigl( (\vartheta - 2) \bigl[d\Hcal_{(\delta,0)}^{*\flat\flat},\kappa\cdot\ \bigr][u] \bigr) \cdot d\Hcal_{(\delta,0)}^*[r^{2p-2n}\varphi] \Bigr) \lambda^{2\expoP} r^{n-2p} d^nx .
\endaligned
\ee
Expanding all commutators and derivatives yields terms of the form $D^{k_1}\kappa D^{k_2}u D^{k_3}\varphi$ with $k_1+k_2+k_3\leq 5$ and $k_1\geq 1$ and $k_2,k_3\leq 2$, which are bounded by $H^2_{-\expoP}(\Omega_0)$ norms of $u$ and~$\varphi$.  Thus,
\be
\bigl\|\vartheta(T^\Gcal_\kappa[u,Z])^E\bigr\|_{H^{2*}_{\astar''+4,-\expoP}(\Omega_\iota)}
\lesssim \|u\|_{H^2_{\astar'',-\expoP}(\Omega_0)} \lesssim \Err^+_{\pstar}[\seedg,\seedh] .
\ee

For the remaining terms we only spell out the arguments for the radial derivative $\vartheta E_\kappa$, as the bounds on $E_\kappa,F_\kappa$ involve fewer derivatives and are thus easier.
We proceed to $T^\Gcal_\adj[u,Z]$.
In the following calculations, we commute $\kappa$ through $d\Hcal_{(\delta,0)}^*$, which produces compactly-supported terms with few enough derivatives of $u,Z,\varphi$, bounded by $\Err^+_{\pstar}[\seedg,\seedh]$ as in~\eqref{TGkappa-Sobolev-bound}.
We also use $d\Hcal_{(\delta,0)}^*\circ\vartheta = (\vartheta+2)\circ d\Hcal_{(\delta,0)}^*$, whose radial integration by parts produces the operator $(\vartheta-2)$.
At this stage, \autoref{prop:adjoint-constr-coord} expresses the difference $d\Hcal_{(g_0,h_0)}^{*\flat\flat} - d\Hcal_{(\delta,0)}^{*\flat\flat}$ in terms of
$u,Z,\del u,\del^2u,\del Z$
multiplied by geometrical factors involving $g_0-\delta$, $h_0$ and their derivatives, which yields the bound
\be
\aligned
 &  \bigl| \bigl\la \vartheta(T^\Gcal_\adj[u,Z])^E,\varphi\bigr\ra \bigr|
= \biggl| \int_{\Omega_\iota} \bigl(d\Hcal_{(g_0,h_0)}^{*\flat\flat} - d\Hcal_{(\delta,0)}^{*\flat\flat}\bigr)[u,Z] \cdot d\Hcal_{(\delta,0)}^*[r^{2p-2n}\kappa\vartheta\varphi] \lambda^{2\expoP} r^{n-2p} d^n x \biggr|
\\
& \lesssim \biggl| \int_{\Omega_\iota} (\vartheta-2)\Bigl(\kappa\bigl(d\Hcal_{(g_0,h_0)}^{*\flat\flat} - d\Hcal_{(\delta,0)}^{*\flat\flat}\bigr)[u,Z]\Bigr) \cdot d\Hcal_{(\delta,0)}^*[r^{2p-2n}\varphi] \lambda^{2\expoP} r^{n-2p} d^n x \biggr|
+ \Err^+_{\pstar}[\seedg,\seedh]
\\
& \lesssim
\Bigl(
\|u,\kappa\vartheta u\|_{H^2_{\astar''-p_G,-\expoP}}
+\|Z,\kappa\vartheta Z\|_{H^1_{\astar''-p_G,-\expoP}}
+ \Err^+_{\pstar}[\seedg,\seedh]
\Bigr)
\|\varphi\|_{H^2_{-\astar''-4,-\expoP}(\Omega_\iota)}
\endaligned
\ee
with an implicit constant depending on the $C^3_{p_G}(\Omega_\iota)$ norm of $g_0-\delta$ and $C^2_{p_G+1}(\Omega_\iota)$ norm of~$h_0$.
The norms of $u,Z$ are controlled by the estimate $\Est(\Omega_\iota,\pstar')$ and 
those of the radial derivatives by \autoref{lem:kappavarthetau}.  (We recall $\pstar''-p_G\leq\pstar'$.)

For the control of $\vartheta(T^\Gcal_\lin)^E$, when integrating by parts, one must beware that the volume form used in~\eqref{EFphipsi} is the Euclidean one and not that of~$\seedg$, hence $d\Hcal_{(\seedg,\seedh)}$ gets dualized to $\mu\circ d\Hcal_{(\seedg,\seedh)}^*\circ\mu^{-1}$ where $\mu$ is multiplication by the scalar field $\vol_{\seedg}/d^nx$ (Radon--Nikodym derivative).  A related difficulty is that \autoref{prop:adjoint-constr-coord} controls the difference $d\Hcal^{*\flat\flat}_{(\seedg,\seedh)} - d\Hcal^{*\flat\flat}_{(\delta,0)}$ with indices lowered by the two different metrics $\seedg$ and~$\delta$, respectively (and similarly for $d\Mcal^{*\sharp\sharp}$).  The momentum terms do not need such treatment since (for~$\upsilon$ any scalar field) $d\Mcal_{(\delta,0)}^*[\upsilon,0]=0$ and $d\Mcal_{(\seedg,\seedh)}^*[\upsilon,0] = \seedh*\upsilon$ by \autoref{lem:lin-constr}.
Altogether,
\be
\aligned
& \bigl\la \vartheta(T^\Gcal_\lin[\gdiff,\hdiff])^E , \varphi \bigr\ra
\\
& = \int_{\Omega_\iota} \Bigl( \gdiff \cdot \bigl(\mu\circ d\Hcal_{(\seedg,\seedh)}^*\circ\mu^{-1} - d\Hcal_{(\delta,0)}^*\bigr)[\kappa\vartheta\varphi \, r^{2p-2n}] + \hdiff*\seedh*\kappa\vartheta\varphi \, r^{2p-2n} \Bigr) d^nx
\\
& = \int_{\Omega_\iota} \Bigl( \gdiff \cdot \bigl[\mu,d\Hcal_{(\seedg,\seedh)}^*\bigr][\mu^{-1}\kappa\vartheta\varphi \, r^{2p-2n}]
+ \gdiff_{ij} (\seedg^{ik} \seedg^{jl} - \delta^{ik} \delta^{jl}) d\Hcal_{(\seedg,\seedh)kl}^{*\flat\flat}[\kappa\vartheta\varphi \, r^{2p-2n}] \\[-1ex]
& \qquad\quad
+ \gdiff_{ij} \delta^{ik} \delta^{jl} (d\Hcal_{(\seedg,\seedh)kl}^{*\flat\flat} - d\Hcal_{(\delta,0)kl}^{*\flat\flat})[\kappa\vartheta\varphi \, r^{2p-2n}]
+ \hdiff*\seedh*\kappa\vartheta\varphi \, r^{2p-2n} \Bigr) d^nx ,
\endaligned
\ee
with the short-hand notation $d\Hcal_{(\seedg,\seedh)}^*[\upsilon] = d\Hcal_{(\seedg,\seedh)}^*[\upsilon,0]$ to avoid confusion with the commutator notation~$[\mu,d\Hcal_{(\seedg,\seedh)}^*]$.
This commutator term and the difference of adjoint constraints are easily bounded, while the $\seedg\seedg-\delta\delta$ term requires commuting $\kappa\vartheta$ through the operators.  We get
\be
\aligned
& \bigl| \bigl\la \vartheta(T^\Gcal_\lin[\gdiff,\hdiff])^E , \varphi \bigr\ra \bigr|
\\
& \lesssim \|\gdiff\|_{L^2_{\pstar''-p_G,\expoP}(\Omega_\iota)} \|\varphi\|_{H^2_{-\astar''-4,-\expoP}(\Omega_\iota)}
+ \|\hdiff\|_{L^2_{\pstar''-p_G+1,\expoP}(\Omega_\iota)} \|\varphi\|_{H^1_{-\astar''-4,-\expoP}(\Omega_\iota)} \\
& \quad + \biggl| \int_{\Omega_\iota} \Bigl(
\gdiff_{ij} \Bigl[ (\seedg^{ik} \seedg^{jl} {-} \delta^{ik} \delta^{jl}) d\Hcal_{(\seedg,\seedh)kl}^{*\flat\flat},\,\kappa\vartheta\Bigr] [\varphi \, r^{2p-2n}] \\[-2ex]
& \qquad\qquad\quad
- \vartheta(\kappa \gdiff_{ij}) (\seedg^{ik} \seedg^{jl} {-} \delta^{ik} \delta^{jl}) d\Hcal_{(\seedg,\seedh)}^{*\flat\flat}[\varphi \, r^{2p-2n}]_{kl} \Bigr) d^nx \biggr|
\\
& \lesssim \|\gdiff,\kappa\vartheta\gdiff\|_{L^2_{\pstar''-p_G,\expoP}(\Omega_\iota)} \|\varphi\|_{H^2_{-\astar''-4,-\expoP}(\Omega_\iota)}
+ \|\hdiff\|_{L^2_{\pstar''-p_G+1,\expoP}(\Omega_\iota)} \|\varphi\|_{H^1_{-\astar''-4,-\expoP}(\Omega_\iota)}
\endaligned
\ee
where the implicit constant depends on the $C^3_{p_G}(\Omega_\iota)$ norm of $\seedg-\delta$ and $C^1_{p_G+1}(\Omega_\iota)$ norm of~$\seedh$.
The norms of $\gdiff,\hdiff$ are bounded in \autoref{lem:sec11-estimates-geom} and those of $\kappa\vartheta\gdiff$ in \autoref{lem:kappavarthetagdiff}.

Finally, the required bounds on $T^\Gcal_\qua[\gdiff,\hdiff] = - \diag(\omega_p^{-2},\omega_{p+1}^{-2}) \kappa \Qcal\Gcal_{(\seedg,\seedh)}[\gdiff,\hdiff]$ are immediate from \autoref{lem:step2-nonlinear} (and the control of H\"older norms of $\gdiff,\hdiff$ in \autoref{lem:sec11-estimates-geom}) since commuting the weight $\omega_p^{-2}$ through derivatives only gives lower-derivative terms.  The bounds on~$T^\Gcal_\seed$, and especially on the $H^{2*}$ norm of $\vartheta(T^\Gcal_\seed)^E=-\vartheta(\omega_p^{-2}\kappa\Hcal[\seedg,\seedh])$, are immediate from the $L^2$ norms of $\Gcal[\seedg,\seedh]$ appearing in $\Err^+_{\pstar}[\seedg,\seedh]$.
\end{proof}


\paragraph{Step 7: Applying \refwithname{Sections}{section=5} \refwithname{to}{section=9}.}

We have established dual Sobolev, H\"older, and energy-momentum bounds on the sources $(E_\kappa,F_\kappa)$ of the equation~\eqref{eq-kappau} for the cut-off solutions $(\kappa u,\kappa Z)$.  These are precisely the conditions\footnote{The stability of $\lambdabf$ was assumed at the start of \autoref{section=10}.  The fact that the Euclidean theorems only apply up to some exponent $a_{n,p}+\delta$ for some small $\delta>0$ translates to the upper bound $p^{\lambdabf}_{n,p}$ given in~\eqref{p-lambda-def} on the exponent~$\pstar$ in our exponent ranges~\eqref{exponent-range}.}
to apply \refwithname{Theorems}{thm-sharp-h-localized} \refwithname{and}{thm-sharp-m-localized} with $\pstar$ replaced by~$\pstar''$, except in the special case $\pstar''=n-2=\min(\pstar'+p_G, 2\pstar')\leq\pstar$, where the theorems can only be applied with strictly smaller exponents.  We encapsulate both cases by introducing~$\widetilde{\pstar}$ chosen as follows:
\begin{itemize}
\item $\widetilde{\pstar}=\pstar''$ in the case $\pstar''\neq n-2$ or the case $\pstar''=\pstar=n-2<\min(\pstar'+p_G, 2\pstar')$;
\item $\widetilde{\pstar}\in(q,\pstar'')$ in the special case $\pstar''=n-2=\min(\pstar'+p_G, 2\pstar')\leq\pstar$.  Here, $q\in(\pstar',\pstar'')$ is the exponent in \autoref{prop:improve-radial}.
\end{itemize}
Observe that the indicator $\cutoff_{\widetilde{\pstar}}=\Oneone_{\widetilde{\pstar}\geq n-2}$ coincides with $\cutoff_{\pstar''}^{\textnormal{ex}}$ given in~\eqref{exceptional-indicator}.
Indeed, if $\pstar''\neq n-2$ this is obvious by $\widetilde{\pstar}=\pstar''$, while for the two cases $\pstar''= n-2 = \min(\pstar'+p_G,2\pstar')$ and $\pstar''= n-2 < \min(\pstar'+p_G,2\pstar')$ one has, respectively, $\widetilde{\pstar}<\pstar''$ and $\widetilde{\pstar}=\pstar''$.
As a consequence, the modular parameters $\cutoff_{\widetilde{\pstar}} \mmodu(E_\kappa)$ and $\cutoff_{\widetilde{\pstar}} \Jmodu(F_\kappa)$ that appear below are well-defined and bounded by \autoref{lem:EF-energymom}.

For this exponent~$\widetilde{\pstar}$, and the corresponding $\widetilde{\astar}=n-2-2p+\widetilde{\pstar}$, we obtain Sobolev and H\"older bounds on $\kappa u,\kappa Z$: for any $\beta\in[a_{n,p}/2,\widetilde{\astar})$,
\bel{kappau-step7}
\aligned
& \bigl\| \kappa u - \cutoff_{\widetilde{\pstar}} \mmodu(E_\kappa) r^{-a_{n,p}}\nu^\normal \bigr\|^{N+2,\alpha}_{\Omega_{R'}, \widetilde{\astar}, -\expoPp+2}
+ \sum_{k=0,1} \bigl\| \vartheta^k(\kappa u - \cutoff_{\widetilde{\pstar}}  \mmodu(E_\kappa) r^{-a_{n,p}}\nu^\normal) \bigr\|_{H^2_{\beta,-\expoP}(\Omega_{R'})}
\\
& \quad + \bigl\| \kappa Z - \cutoff_{\widetilde{\pstar}} \Jmodu(F_\kappa)_j \xi^{\normal(j)}r^{-a_{n,p}} \bigr\|^{N+1,\alpha}_{\Omega_{R'}, \widetilde{\astar}, -\expoPp+1}
+ \bigl\| \kappa Z - \cutoff_{\widetilde{\pstar}} \Jmodu(F_\kappa)_j \xi^{\normal(j)}r^{-a_{n,p}} \bigr\|_{H^1_{\beta,-\expoP}(\Omega_{R'})}
\\
& \lesssim \Err^+_{\pstar}[\seedg,\seedh]
\endaligned
\ee
for any given~$R'>R$.  In addition, in the limit $R'\to+\infty$,
\bel{kappau-step7-lim}
\bigl\| \kappa u - \cutoff_{\widetilde{\pstar}}  \mmodu(E_\kappa) r^{-a_{n,p}}\nu^\normal \bigr\|_{C^3_{\!\widetilde{\astar}, -\!\expoPp+2}(\Omega_{R'})}
+ \bigl\| \kappa Z - \cutoff_{\widetilde{\pstar}} 
\Jmodu(F_\kappa)_j \xi^{\normal(j)}r^{-a_{n,p}} \bigr\|_{C^1_{\!\widetilde{\astar}, -\!\expoPp+1}(\Omega_{R'})}
\to 0 .
\ee


\paragraph{Step 8: An improved radial exponent.}

We are ready to conclude the proof of \autoref{prop:improve-radial}, which states H\"older and Sobolev bounds on the shifted solution $(u-\cutoff_{\pstar''}\umodu, Z-\cutoff_{\pstar''}\Zmodu)$.  For this, we select a fixed radius $R_{\mathrm{in}}\in(R,R_4)$
to ensure that (under the identification $\phi_\iota:\Omega_\iota\overset{\sim}\to\Omega_R$) the domain
$\phi_\iota^*(\Omega_R\setminus\Omega_{R_{\mathrm{in}}})$
is contained in the bounded domain~$\Omega_0$, namely $R_{\mathrm{in}}<R_4$ in the notation below~\eqref{sec10-omegabfp-omegap}.  Then the (H\"older and Sobolev) control of $u,Z$ in~$\Omega_0$ given in~\eqref{estim-Omega0} and the control of energy and momentum modulators in \autoref{lem:EF-energymom} suffice to complete the estimates \eqref{kappau-step7} and~\eqref{kappau-step7-lim} into estimates of $(u-\cutoff_{\pstar''}\umodu, Z-\cutoff_{\pstar''}\Zmodu)$ stated in \autoref{prop:improve-radial} with
\be
\mmodu = \mmodu(E_\kappa) , \qquad \Jmodu = \Jmodu(F_\kappa).
\ee
For $\pstar''\neq n-2$ or $\pstar''=\pstar=n-2<\min(\pstar'+p_G,2\pstar')$ the exponent $\widetilde{\pstar}=\pstar''$ is exactly as needed for all of the H\"older and Sobolev bounds in \autoref{prop:improve-radial}.  As we point out in \autoref{section=10.4}, the energy and momentum modulators are in fact uniquely fixed by these bounds if $\pstar''>n-2$ or $\pstar''=\pstar=n-2<\min(\pstar'+p_G,2\pstar')$.

In the remaining case $\pstar''=n-2=\min(\pstar'+p_G,2\pstar')\leq\pstar$, the bounds with radial decay exponent~$\widetilde{\pstar}$ imply $\Est(\Omega_\iota,q)$ since $q<\widetilde{\pstar}$, but the H\"older estimates do not reach the harmonic exponent $\pstar''=n-2$.  There is no harmonic control at all, and no expression for modulator terms.

This completes the proof of \autoref{prop:improve-radial}, hence of the main results of this work, \autoref{theo--beyond-harmonic} and \autoref{theo--beyond-harmonic-II}.


\paragraph*{Acknowledgments.} 

The authors were partially supported by the research project ANR-23-CE40-0010-02: Einstein-PPF, entitled {\it ``Einstein constraints: past, present, and future''} funded by the Agence Nationale de la Recherche (ANR), as well as by the MSCA Staff Exchange Project 101131233: Einstein-Waves, entitled {\it ``Einstein gravity and nonlinear waves: physical models, numerical simulations, and data analysis'',} funded by the European Research Council (ERC). 
 

\begin{spacing}{.92}


\end{spacing}

\appendix

\section{Operator coefficients and structure constants}
\label{appendix=A}

\paragraph{Operator coefficients.}

For convenience, we recall here the operator coefficients introduced in~\eqref{equa-our-parame-00}:
\bel{equa-our-parame-000} 
\aligned
a_{n,p} & \coloneqq 2( n-2-p)
&& \text{(harmonic exponent)},
\\
b_{n,p} & \coloneqq 
2 + (n-3)(n-2-a_{n,p}), 
&&  
\\ 
c_{n,p} & \coloneqq a_{n,p} \bigl( 1 + (n-2) ( n-2- a_{n,p}) \bigr), 
&&  
\\
d_{n,p} & \coloneqq {(n-1) a_{n,p} b_{n,p} \over (n-2)^2+1}
&& \text{(ADM energy coefficient),} 
\endaligned
\ee  
as well as
\be
p^\flat_n \coloneqq {(n-1)(n-3) \over 2(n-2)} < \frac{n-2}{2} .
\ee
For $p\in(p^\flat_n, n-2)$ the coefficients $a_{n,p},b_{n,p},c_{n,p},d_{n,p}$ are all positive.


\paragraph{Structure constants for the Hamiltonian operator.} 

We collect here the expressions of the structure constants that arise in our analysis. Constants that depend upon the area $\aire[\Lambda,\lambda] = \int_{\Lambda} \, d\chi$  of the localization function are needed, specifically 
\bel{equa-thetalambda}
\theta^\lambda \coloneqq \frac{\theta_n}{\aire[\Lambda,\lambda]} \coloneqq {2 (n-1) \over (n-2)^2 + 1} \, {|\Sphe^{n-1}| \over \aire[\Lambda,\lambda]}.
\ee
We also have  introduced 
\bel{bH10}
\bnotreH_{1} \coloneqq (n-1) c_{n,p}\la\nu^\normal\ra - (n-2)^2\la\Deltaslash\nut^\normal\ra, 
\quad\ \
\bnotreH_{0} \coloneqq (n^2 -4n+5){c_{n,p} \over a_{n,p}} \Bigl( d_{n,p} \la\nu^\normal\ra - \la\Deltaslash\nut^\normal\ra \Bigr)
\ee
and 
the second-order operator 
\bel{betapm-def-app}
\aligned
& - \bnotreH_{1}\vartheta(\vartheta+a_{n,p}) + \bnotreH_{0} =: - \bnotreH_{1}  (\vartheta+\beta_-)(\vartheta+\beta_+), 
\qquad
 \beta_- < 0 < a_{n,p} < \astar < \beta_+, 
\endaligned
\ee
namely $\beta_- + \beta_+ = a_{n,p}$ and $\beta_- \beta_+ = -  \bnotreH_{0}/\bnotreH_{1}$. In~\eqref{equa-cplusmoins}, we also set 
\bel{equa-cplusminus} 
c_\pm \coloneqq {1 \over \bnotreH_{1} } {(n-2) \beta_{\pm} + {c_{n,p} / a_{n,p}} \over \beta_+ - \beta_-}. 
\ee 

The radial Hardy constant 
\bel{equa--517-repeat}
\cradialH := \max(  g^\notreH_I,   g^\notreH_J ) 
\ee
is defined from 
\bel{equa-radH}
\aligned 
g^\notreH_I & = \frac{4 c_-^2}{(a_{n,p}-\beta_-)^2} + \frac{2 c_+^2}{(\beta_+-a_{n,p})^2} + \frac{2 c_+^2 }{(2 \beta_+ - a_{n,p})(\beta_+ - a_{n,p})} ,
\\
g^\notreH_J & = \frac{16 c_+^2}{(2 \beta_+ - a_{n,p})^2} + \frac{8 c_-^2}{(a_{n,p} - 2\beta_-)^2} + \frac{2 c_-^2}{(a_{n,p}-2\beta_-) (a_{n,p}-\beta_-)} .
\endaligned
\ee 


\paragraph{Structure constants for the momentum operator.} 

At each end (with $\iota$ suppressed), we have set 
\bel{equa-thetalambda-M}
\eta^\lambda \coloneqq \frac{\eta_n}{\aire[\Lambda,\lambda]} 
\coloneqq \frac{2(n-1) |\Sphe^{n-1}|}{\aire[\Lambda,\lambda]}.
\ee
In \eqref{equa-the-matrix}, \eqref{equa-Poin2}, and~\eqref{equa--813} we defined several interrelated structure matrices for the momentum, with components
\bel{equa-the-matrix-repeat}
\aligned
T_{kl} & \coloneqq  \delta_{kl} + \la \xh_k\xh_l\ra,
&
G_{kl} & \coloneqq \delta_{kl} + \la\xh_k\xh_l\ra - 2 \la\xh_k\ra \la\xh_l\ra ,
\\
\matQ^{(j)k} & = \bigl\la 2 \xh_l \xi^{\normal (j) \perp} + \xi^{\normal (j) \parallel}{}_l\bigr\ra (T^{-1})^{lk} ,
\\
(\Xi^\notreM)^j_{l}
& \coloneqq 
\bigl\la - \nablaslash_l \xi^{\normal (j) \perp} + 2 a_{n,p} \xh_l\, \xi^{\normal (j) \perp} \bigr\ra
+(1+a_{n,p}) \la \xi^{\normal (j) \parallel}{}_l \ra. \mspace{-130mu}
\endaligned
\ee


\section{Explicit shell functionals}
\label{appendix=B}

\paragraph{Preliminary comments.}
We complement here \refwithname{Sections}{section=3.4} \refwithname{and}{section=3.5} with explicit expressions of the Hamiltonian and momentum functionals appearing in the shell identities \eqref{main-func-identity} and~\eqref{main-func-identity-MM}.  The structure of these two identities is the same,
\be
- (\vartheta + a_{n,p}) (\vartheta + 2a_{n,p}) \Phi[u] + \Chi[u] = \Mu[u,E],
\ee
where the shell functional~$\Phi$ and source term~$\Mu$ are given in \refwithname{Sections}{section=3.4} \refwithname{and}{section=3.5} in the main text.  Here we do not give explicitly the dissipation functionals~$\Chi$.  Instead we give the shifted functionals $\Psi_\beta = \Chi - (\vartheta +\beta) \Upsilon$ for $\beta\in\{a_{n,p},2a_{n,p}\}$, from which one can easily extract
\be
\Upsilon = \frac{1}{a_{n,p}} (\Psi_{a_{n,p}}-\Psi_{2a_{n,p}}) ,
\qquad
\Chi = \frac{1}{a_{n,p}} \bigl( (\vartheta+2a_{n,p})\Psi_{a_{n,p}} - (\vartheta+a_{n,p})\Psi_{2a_{n,p}} \bigr) .
\ee
Checking that the functionals $\Phi,\Chi,\Mu$ obey the shell identity is then a tedious but straightforward calculation.


\paragraph{Hamiltonian functionals.}

The Hamiltonian shell functional $\Phi^\notreH_\iota[u]$ is given in~\eqref{PhinotreH-expr}.
Incidentally, this corresponds to the expressions in our work~\cite{LL-PoincareKornHardy}, with constants $c_1,\dots,c_{13}$ chosen to be
\be\compresseq{.65}
\aligned
c_2 & = \cstun , \quad
c_1 = c_5 = c_7 = a_{n,p}, \quad
c_3 = c_4 = c_6 = c_8 = 0, \quad
c_9^2 = \frac{n^2 -3n+3}{(n-1)^2}, \quad
c_{11}^2 = \frac{1}{n-1},
\\
c_{10}
& = - \frac{(n-1)(c_{n,p}+(n-2)\cstun)}{n^2 -3n+3} , \quad
c_{12}^2 = \frac{2(1+a_{n,p}+\cstun)}{n-1} , \quad
c_{13}^2 = \frac{1}{n-1} \cstdeux
- c_9^2 c_{10}^2 .
\endaligned
\ee
The (shifted) shell dissipations (for $\beta\in\{a_{n,p},2a_{n,p}\}$) are expressed in terms of $w=(\vartheta+a_{n,p})u$:
\bel{PsinotreH-expr}\compresseq{.78}
\aligned
\Psi^\notreH_{\beta\iota}[u]
& = \Psi^\notreH_{\beta\iota,0}[u] + \cstun \Psi^\notreH_{\beta\iota,1}[u] + \Bigl( \frac{\cstdeux}{n-1} + \cstun^2 \Bigr) \Psi^\notreH_{\beta\iota,2}[u] ,
\\
\Psi^\notreH_{\beta\iota,0}[u] & = \frac{1}{n-1} \fint_{\Lambda_{\iota,r}} \! \Bigl(
(\vartheta^2 w)^2
+ 2 |\nablaslash\vartheta w|^2
+ |\nablaslash^2 w|^2
+ (n-2) (\vartheta^2 w+\Deltaslash w)^2
+ 2 (1 + a_{n,p}) |\nablaslash w|^2
\\[-1ex]
& \qquad\qquad\qquad\quad
+ (n-1) \bigl(b_{n,p} - a_{n,p}(\beta - a_{n,p}) \bigr) (\vartheta w)^2
- c_{n,p}\vartheta w\Deltaslash u
\\
& \qquad\qquad\qquad\quad
+ 2 \Bigl(\frac{c_{n,p}}{a_{n,p}} + (n - 2) \beta\Bigr) \vartheta w \Deltaslash w
- (2a_{n,p}-\beta)\frac{c_{n,p}}{a_{n,p}} w \Deltaslash w
\Bigr) d\chi_\iota ,
\\
\Psi^\notreH_{\beta\iota,1}[u]
& = \frac{1}{n-1} \ssrmA^{\lambda_\iota}[u]
+ \frac{1}{n-1} \fint_{\Lambda_{\iota,r}} \! \Bigl(
2 |\nablaslash w|^2
+ (n-1) (\vartheta w)^2
- u \Bigl( \frac{c_{n,p}}{a_{n,p}} \Deltaslash w + (n-1) b_{n,p} \vartheta w \Bigr)
\\[-.5ex]
& \qquad\qquad\qquad\qquad\qquad\quad + w \Bigl( 2(n-1) (\beta - a_{n,p}) \vartheta w - 2(n-2) \Deltaslash w + \frac{c_{n,p}}{a_{n,p}} \Deltaslash u \Bigr) \Bigr) d\chi_\iota ,
\\[-1.5ex]
\Psi^\notreH_{\beta\iota,2}[u] & = \fint_{\Lambda_{\iota,r}} \!\! (w^2 + u \vartheta w) d\chi_\iota .
\endaligned
\ee
In $\Psi^\notreH_{\beta\iota,0}[u]$, the terms without definite sign all include a factor of $\vartheta w$ or $w$, hence for large enough $C>0$ one has $\Psi^\notreH_{\beta\iota,0}[u]+C\|\vartheta w,w,\Deltaslash u\|_{\unL^2_{-\expoP}(\Lambda_{\iota,r})}^2\gtrsim(\norm{w}^{\notreH})^2$.
For the same reason, for large enough $\cstun>0$ and $C>0$, one has
$\Psi^\notreH_{\beta\iota,0}[u]+\cstun\Psi^\notreH_{\beta\iota,1}[u]+C\|w,u\|_{\unL^2_{-\expoP}(\Lambda_{\iota,r})}^2\gtrsim(\norm{\vartheta u,u}^{\notreH})^2$ (using that $\ssrmA^{\lambda_\iota}[u]+C\|u\|^2$ controls $\|\Deltaslash u\|^2$).  The $\unL^2$ norm of $w$ is then provided by the last contribution to the dissipation functional, so that overall for large enough $\cstun,\cstdeux,C>0$, one has
\be
\Psi^\notreH_{\beta\iota}[u] + C \|u\|_{\unL^2_{-\expoP}(\Lambda_{\iota,r})}^2 \gtrsim (\norm{\vartheta u,u}^{\notreH})^2 .
\ee
In \cite{LL-PoincareKornHardy}, we then apply the Poincaré inequality to control (in $\unL^2$ norms) fluctuations $\ut=u-\la u\ra_\iota$ by the gradient $\nablaslash u$.  Provided the Poincaré constant is small enough, this implies that $\Psi^\notreH_{\beta\iota}[u] + C \la u\ra_\iota^2$ is coercive for large enough $C>0$.
This is combined with suitable bounds on $\Kappa^\notreH_\iota[\ut]$ to establish \autoref{thm:informal-sufficient-stability}, namely Hamiltonian stability of localization domains $(\Lambda,d\chi_\iota)$ with small enough diameter and Poincaré constant.


\paragraph{Momentum functionals.}
The shell functional is $\Phi^\notreM_\iota[Z] = \fint_{\Lambda_{\iota,r}} ( \Zperp{}^2 + |\Zpar|^2 /2 ) d\chi_\iota$ and the (shifted) shell dissipations are ($\beta\in\{a_{n,p},2a_{n,p}\}$)
\bel{PsinotreM-expr}
\aligned
\Psi^\notreM_{\iota\beta}[Z]
& = \fint_{\Lambda_{\iota,r}} \Bigl( 2 \bigl((\vartheta+a_{n,p}) \Zperp\bigr)^2 + \bigl|(\vartheta+a_{n,p}-1) \Zpar + \nablaslash\Zperp\bigr|^2
\\[-1ex]
& \qquad\qquad + 2 \, \bigl|\Sym(\nablaslash\Zpar) + \Zperp \gslash\bigr|^2
+ (3a_{n,p} - \beta) \bigl( |\Zpar|^2 - \Zpar \cdot \nablaslash\Zperp\bigr)
\Bigr) d\chi_\iota .
\endaligned
\ee
On a given shell $\Lambda_{\iota,r}$ the vector fields $\vartheta Z$ and $Z$ are independent, so for any $Z$ the terms involving $\vartheta Z$ may vanish.
Semi-coercivity of the dissipation functionals (modulo suitable averages) is thus equivalent to semi-coercivity of the last two terms.
The term $\Zpar\cdot\nablaslash\Zperp$ prevents the integrand from being positive, but integrating by parts produces terms $\Zperp\nablaslash\cdot\Zpar$ and $\Zperp\Zpar\cdot\nablaslash\log\lambda_\iota^{2\expoP}$ that can be controlled by suitable Korn and Hardy inequalities on $(\Lambda_\iota,d\chi_\iota)$.
These inequalities are analyzed in~\cite{LL-PoincareKornHardy}, with the result being that for small enough $a_{n,p}>0$ (small $3a_{n,p} - \beta$) the last term in~\eqref{PsinotreM-expr} is under control and the localization function is momentum stable, as stated in \autoref{thm:informal-sufficient-stability-M}.


\section{Expansion of the Einstein constraints and their adjoint}
\label{appendix=C}

\subsection{Expansion of the metric, connection, and curvature}

Throughout the main text, we work with three (asymptotically Euclidean) metrics denoted here by $g_0=\zg$, $\seedg=\yg$, and $g$, since the notation $\zg, \yg$ is more compact in the forthcoming calculations. When working in a given asymptotic end equipped with coordinates, we also consider the Euclidean metric~$\delta$ in these coordinates. For clarity, let us recall some further notation. Starting from an asymptotically Euclidean data set $(\yg,\yh)$ and using suitable deformations, we seek to construct a data set $(g,h)$ such that 
\bel{gh-ygyh-uZ}
g = \yg + \omegabf_p^2 \, \zdH^{*\flat\flat}[u,Z], 
\qquad
h = \yh + \omegabf_{p+1}^2 \, \zdM^{*\sharp\sharp}[u,Z],
\ee
where $\omegabf_p$ and $\omegabf_{p+1}$ are scalar weights, $u$~is a scalar field, and $Z$~is a vector field. Here, the operator $\big( \zdH^{*\flat\flat},\zdM^{*\sharp\sharp} \big)$ constitutes the formal adjoint of the linearized constraint $\zdG$ around an (asymptotically Euclidean) localization  data set $(\zg,\zh)$. The unknowns $u$ and~$Z$ are determined as solutions to elliptic equations that arise by setting the difference $\Gcal[g,h]- \Gcal[\yg,\yh]$ equal to a given source term. In this appendix, we express this difference in coordinates on an asymptotically Euclidean end, and we list all nonlinearities involved.

We denote the Levi-Civita connections, Laplacian, and curvatures of $\zg,\yg,g,\delta$ as $\znabla,\ynabla,\nabla,\del$ and $\zDelta,\yDelta,\Delta,\Delta_\delta$, and $\zR,\yR,R,0$, respectively. In our statements, {\it indices are never raised nor lowered implicitly,} in order to avoid a confusion between the four metrics $\zg$, $\yg$, $g$, and~$\delta$. For a pair of tensors $A,B$ the product $A * B$ denotes arbitrary index contractions using any of the metrics $\zg$, $\yg$, $g$, $\delta$ (or their inverses) involved in the problem at hand.  In addition, we write
\bel{equa--B2} 
\aligned
a^{*n} & \coloneqq a * \dots * a \text{ \ with $n$ factors,} 
\\
{\del*}a & \coloneqq \del a + \del\zg * a + \del\yg * a + \del g * a,
\\
{\del*}(a b) \coloneqq {\del*}(a * b) & \coloneqq \del a * b + a * \del b + \del\zg * a * b + \del\yg * a * b + \del g * a * b
\endaligned
\ee
with arbitrary index contractions, and likewise with $\del$ replaced by $\znabla$ or $\ynabla$, and including derivatives of all the metrics involved in the problem, only.  For instance, in statements that only involve the metrics $g$ and~$\yg$, the notation ${\ynabla*}a$ stands for $\ynabla a+ \ynabla g*a$. It should be emphasized at this stage that $\ynabla g= \ynabla(g- \yg)$ is small for metrics close to~$\yg$.  We will keep the more concise notation $\ynabla g$ in such cases, at the cost of leaving smallness less manifest.

As a starting point, we focus on the two data sets $(g,h)$ and $(\yg,\yh)$, without (yet) making use of the relation~\eqref{gh-ygyh-uZ} between them which involves~$(\zg,\zh)$.  This will provide us with explicit expressions for the difference $\Gcal[g,h]- \Gcal[\yg,\yh]$ in term of the linearized constraints and of quadratic contributions in $(g- \yg,h- \yh)$. We begin with a few objects derived from the metric~$g$.

\begin{lemma}\label{lem:split-basic}
The inverse of a metric~$g$ can be written as
\be
\aligned
(g^{-1})^{ij} & = (\yg^{-1})^{ij} - (\yg^{-1})^{ik} (g - \yg)_{kl} (g^{-1})^{lj} 
\\
& = (\yg^{-1})^{ij} - (\yg^{-1})^{ik} (g - \yg)_{kl} (\yg^{-1})^{lj} + (g - \yg)*(g - \yg).
\endaligned
\ee
The difference of Levi-Civita connections $\nabla$ and $\ynabla$ (associated with $g$ and $\yg$) is the tensor with components
\bel{nabla-ynabla}
\aligned
(\nabla- \ynabla)^i{}_{jk}
& = \frac{1}{2} (g^{-1})^{il} \bigl(\ynabla_j g_{kl} + \ynabla_k g_{jl} - \ynabla_l g_{jk}\bigr)
\\
& = \frac{1}{2} (\yg^{-1})^{il} \bigl(\ynabla_j g_{kl} + \ynabla_k g_{jl} - \ynabla_l g_{jk}\bigr) + (g- \yg)*\ynabla g.
\endaligned
\ee
The Riemann, Ricci, and scalar curvatures of~$g$ are given in terms of those of~$\yg$ by
\bel{Riem-zRiem}
\aligned
R^k{}_{lij}
& = \yR^k{}_{lij} + \frac{1}{2} (\yg^{-1})^{km} \Bigl(\ynabla_i \ynabla_l g_{jm} - \ynabla_i \ynabla_m g_{jl} - \ynabla_j \ynabla_l g_{im} + \ynabla_j \ynabla_m g_{il} + [\ynabla_i,\ynabla_j] g_{lm} \Bigr) \mspace{-30mu} \\
& \quad + \ynabla g * \ynabla g + (g-\yg)*\ynabla\ynabla g ,
\\
R_{lj}
& = \yR_{lj}
+ \frac{1}{2} (\yg^{-1})^{ik} \Bigl(\ynabla_i \ynabla_l g_{jk} - \ynabla_i \ynabla_k g_{jl} - \ynabla_j \ynabla_l g_{ik} + \ynabla_i \ynabla_j g_{kl} \Bigr) \\
& \quad + \ynabla g * \ynabla g + (g-\yg)*\ynabla\ynabla g ,
\\
R
& =
\yR + (\yg^{-1})^{ik} (\yg^{-1})^{jl} \bigl( \ynabla_i \ynabla_j g_{kl} - \ynabla_i \ynabla_k g_{jl} - \yR_{ij} (g- \yg)_{kl} \bigr) \\
& \quad + \ynabla g * \ynabla g + (g-\yg)*\ynabla\ynabla g + \yRic * (g- \yg)^{*2} .
\endaligned
\ee
\end{lemma}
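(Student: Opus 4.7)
The three formulas are consequences of standard tensor calculus; the only subtlety is to organize all remainders into the schematic form introduced in~\eqref{equa--B2}, so that they can later be absorbed into the $*$-notation without tracking individual contractions. My plan is to establish the three identities successively, each one feeding into the next.

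\textbf{Step 1: the inverse metric.} The starting point is the algebraic identity $A^{-1}-B^{-1} = -B^{-1}(A-B)A^{-1}$, valid for any pair of invertible matrices. Applying it to $A=g$, $B=\yg$ yields the first equality
\be
(g^{-1})^{ij} - (\yg^{-1})^{ij} = - (\yg^{-1})^{ik} (g-\yg)_{kl} (g^{-1})^{lj}.
\ee
Substituting this identity back into the factor $(g^{-1})^{lj}$ on the right-hand side produces the second equality, the last term being quadratic in $(g-\yg)$ and therefore absorbable in $(g-\yg)*(g-\yg)$.

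\textbf{Step 2: the connection difference.} Since $\nabla-\ynabla$ is the unique $(1,2)$-tensor determined by $g$-compatibility, namely by $\nabla g = 0$, the Koszul-type computation writes $2(\nabla-\ynabla)^i{}_{jk}g_{il} = \ynabla_j g_{kl}+\ynabla_k g_{jl}-\ynabla_l g_{jk}$. Raising the index with $g^{-1}$ gives the first displayed form in~\eqref{nabla-ynabla}, and then inserting the first step of Step~1 replaces $g^{-1}$ by $\yg^{-1}$ modulo a $(g-\yg)*\ynabla g$ contribution, which is exactly the structure of the second displayed form.

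\textbf{Step 3: Riemann, Ricci, and scalar curvatures.} I set $T\coloneqq \nabla-\ynabla$ and use the general identity
\be
R^k{}_{lij} = \yR^k{}_{lij} + \ynabla_i T^k{}_{jl} - \ynabla_j T^k{}_{il} + T^k{}_{im}T^m{}_{jl} - T^k{}_{jm}T^m{}_{il}.
\ee
Plugging in the expression for $T$ from Step~2, the quadratic-in-$T$ terms contribute a factor of the schematic form $(g-\yg)*\ynabla g*\ynabla g$ which enters $\ynabla{*}\bigl((g-\yg)*\ynabla g\bigr)$ after including the explicit $\ynabla T$ contribution, whose principal part (after commuting one covariant derivative with the $\yg^{-1}$ factor) is precisely the bracketed expression in the stated formula for $R^k{}_{lij}$. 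Contracting the indices $k=i$ yields the Ricci formula after identifying the expected bracketed combination, using $[\ynabla_i,\ynabla_j]g_{lm}=\yR{*}g$ so that this commutator is absorbed into the remainder. Finally, to obtain the scalar curvature one contracts with $(g^{-1})^{lj}$ and applies Step~1 once more; this produces the extra term $\yRic*(g-\yg)^{*2}$ coming from the $(g-\yg)$ correction to $\yg^{-1}$ acting on $\yR_{lj}$, while the leading bracketed combination is the one displayed.

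\textbf{Main obstacle.} No single computation is difficult, but one must verify that {\sl every} remainder generated by commuting covariant derivatives, by trading $g^{-1}$ for $\yg^{-1}$, or by the $T*T$ contribution, has the exact schematic form~\eqref{equa--B2}. The cleanest way is to carry out the calculations with all indices down, keep explicit track of each $\yg^{-1}$, and only at the end repackage the remainders using the convention on $*$. This bookkeeping, rather than any geometric subtlety, is the part that will require care, since later sections of the paper rely on the fact that the remainders contain no ``hidden'' derivatives beyond those exhibited in $\ynabla{*}(\cdots)$.
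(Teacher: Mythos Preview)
Your proposal is correct and follows essentially the same route as the paper: verify the inverse-metric identity algebraically, derive the connection difference via the Koszul formula and then substitute $g^{-1}\to\yg^{-1}+(g-\yg)*$, and obtain the curvature via the standard $R=\yR+\ynabla T-\ynabla T+T*T$ identity before tracing. One small point: the $T*T$ terms are schematically $\ynabla g*\ynabla g$ (not $(g-\yg)*\ynabla g*\ynabla g$), and in passing from Riemann to Ricci the commutator $[\ynabla_i,\ynabla_j]g_{lm}$ actually \emph{combines} with the term $\ynabla_j\ynabla_m g_{il}$ (after using the $i\leftrightarrow m$ symmetry of $(\yg^{-1})^{im}$) to yield the paper's $\ynabla_i\ynabla_j g_{kl}$, rather than being sent to the remainder---sending it there would produce an $\yRic*(g-\yg)$ term that does not fit the schematic ${\ynabla*}\bigl((g-\yg)*\ynabla g\bigr)$.
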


\begin{proof} 
\bse
{\it 1. Metric and connection.} The first expression of the inverse metric is proven by expanding
\be
(\yg^{-1})^{ik} (g- \yg)_{kl} (g^{-1})^{lj}
= (\yg^{-1})^{ik} g_{kl} (g^{-1})^{lj} - (\yg^{-1})^{ik} \yg_{kl} (g^{-1})^{lj}
= (\yg^{-1})^{ij} - (g^{-1})^{ij} .
\ee
Applying this first equality to the term $(g^{-1})^{kj}$ on its right-hand side gives the second expression.

The fact that $\nabla- \ynabla$ is a tensor, and its first expression, are well-known and are easily derived by expanding both sides in terms of Christoffel symbols of $g$ and~$\yg$ in a coordinate chart. Splitting $g^{-1}$ into $\yg^{-1}$ and a term of order $g- \yg$ yields the second expression.

\medskip

\noindent{\it 2. Curvature.} We set $\zeta^i{}_{jk} \coloneqq (\nabla- \ynabla)^i{}_{jk}$.  Acting with the Levi-Civita connection~$\nabla$ of~$g$ on an arbitrary vector field~$v$ yields
\be
\aligned
\nabla_i \nabla_j v^k
& = \ynabla_i (\nabla_j v^k) + \zeta^k{}_{im} \nabla_j v^m - \zeta^m{}_{ij} \nabla_m v^k
\\
& = \ynabla_i (\ynabla_j v^k + \zeta^k{}_{jl} v^l)
+ \zeta^k{}_{im} (\ynabla_j v^m + \zeta^m{}_{jl} v^l)
- \zeta^m{}_{ij} (\ynabla_m v^k + \zeta^k{}_{ml} v^l).
\endaligned
\ee
Taking into account the symmetry $\zeta^m{}_{ij}=\zeta^m{}_{ji}$, the commutator simplifies to
\be
[\nabla_i,\nabla_j] v^k
= [\ynabla_i, \ynabla_j] v^k
+ \bigl( \ynabla_i \zeta^k{}_{jl} - \ynabla_j \zeta^k{}_{il}
+ \zeta^k{}_{im} \zeta^m{}_{jl} - \zeta^k{}_{jm} \zeta^m{}_{il} \bigr) v^l, 
\ee
whence the well-known expression
\be
R^k{}_{lij} = \yR^k{}_{lij} + \ynabla_i \zeta^k{}_{jl} - \ynabla_j \zeta^k{}_{il}
+ \zeta^k{}_{im} \zeta^m{}_{jl} - \zeta^k{}_{jm} \zeta^m{}_{il}.
\ee
The $\zeta*\zeta$ terms can also be written as $\ynabla g*\ynabla g$. The derivatives of $\zeta= \nabla- \ynabla$ are evaluated by using the second expression in~\eqref{nabla-ynabla}, which yields the expression of $R^k{}_{lij}$ stated in~\eqref{Riem-zRiem}.

Consequently, the Ricci curvature, obtained by tracing the indices $i$ and~$k$, has two fewer terms of the form $\ynabla\ynabla g$, since the terms $\ynabla_j \ynabla_m g_{il} - \ynabla_j\ynabla_i g_{lm}$ cancel in this trace. The scalar curvature is obtained by contraction with the inverse metric $(g^{-1})^{jl}= (\yg^{-1})^{jl}- (\yg^{-1})^{jm}(g- \yg)_{mk}(\yg^{-1})^{kl}+(g- \yg)^{*2}$, which is responsible for the linear term $- (\yg^{-1})^{jm} \yR_{lj}(\yg^{-1})^{kl} (g- \yg)_{km}$ and the quadratic term $\yRic*(g- \yg)^{*2}$.
\ese
\end{proof}


\subsection{Expansion of the Einstein constraints}

We turn our attention to the constraint operators, namely 
\be
\Hcal[g,h] = R + \frac{1}{n-1} (\Tr_g h)^2 - |h|_g^2, 
\quad \qquad
\Mcal[g,h]^i = \nabla_j h^{ji},
\ee
where $\nabla$ and $R$ are the Levi-Civita connection and scalar curvature of~$g$, respectively.

\begin{lemma}
\label{lem:constr-expand}
Given data $(\yg,\yh)$ and $(g,h)=(\yg,\yh)+(\gdiff,\hdiff)$, the constraints admit the expansion
\be
\aligned
\Hcal[g,h]
& = \Hcal[\yg,\yh] + \ydH[\gdiff,\hdiff] + \yQH[\gdiff,\hdiff] ,
\\
\yQH[\gdiff,\hdiff] & = \ynabla\gdiff * \ynabla\gdiff + \gdiff * \ynabla\ynabla\gdiff
 + \yRic * \gdiff * \gdiff
+ \yh*\yh*\gdiff*\gdiff + \yh*\gdiff*\hdiff + \hdiff*\hdiff,
\\
\Mcal[g,h]^i & = \Mcal[\yg,\yh]^i + \ydM[\gdiff,\hdiff]^i + \yQM[\gdiff,\hdiff] , \qquad\
\yQM[\gdiff,\hdiff] = \yh*\gdiff*\ynabla\gdiff + \hdiff * \ynabla\gdiff,
\endaligned
\ee
in which the nonlinearities are expressed in the notation~\eqref{equa--B2}, and the linearized constraints read 
\bel{lin-constr}
\aligned
\ydH[\gdiff,\hdiff]
& \coloneqq (\yg^{-1})^{ik} (\yg^{-1})^{jl} \bigl( \ynabla_i \ynabla_j \gdiff_{kl} - \ynabla_i \ynabla_k \gdiff_{jl} - \yR_{ij} \gdiff_{kl} \bigr) 
\\
& \, \quad + \frac{2}{n-1} \yg_{kl} \yh^{kl} \bigl( \yg_{ij} \hdiff^{ij} + \gdiff_{ij} \yh^{ij} \bigr)
- 2 \yg_{kl} \yh^{li} \bigl(\yg_{ij}\hdiff^{jk} + \gdiff_{ij} \yh^{jk}\bigr),
\\
\ydM[\gdiff,\hdiff]^i & \coloneqq \ynabla_j \hdiff^{ji} + \yh^{jk} (\yg^{-1})^{il} \Bigl( \ynabla_j\gdiff_{kl} - \frac{1}{2} \ynabla_l \gdiff_{jk} \Bigr) + \frac{1}{2} \yh^{ij} \ynabla_j\yTr\gdiff.
\endaligned
\ee
\end{lemma}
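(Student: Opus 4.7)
}

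The strategy is a direct Taylor expansion in $(\gdiff, \hdiff) = (g-\yg, h-\yh)$, collecting the linear contributions into $\ydG[\gdiff,\hdiff]$ and checking that every remaining term has the quadratic form advertised. I will treat the Hamiltonian and momentum constraints separately, reusing in both cases the algebraic identities already established in \autoref{lem:split-basic}.

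For the Hamiltonian $\Hcal[g,h] = R_g + \frac{1}{n-1}(\Tr_g h)^2 - |h|_g^2$, the plan is as follows. First, I insert the scalar-curvature expansion from~\eqref{Riem-zRiem} (applied with $\zg$ replaced by $\yg$), which already yields the term
$(\yg^{-1})^{ik}(\yg^{-1})^{jl}\bigl(\ynabla_i\ynabla_j \gdiff_{kl} - \ynabla_i\ynabla_k \gdiff_{jl} - \yR_{ij}\gdiff_{kl}\bigr)$
plus contributions of the schematic form ${\ynabla*}\bigl(\gdiff * \ynabla\gdiff\bigr) + \yRic * \gdiff^{*2}$, which fit exactly into the families $\ynabla\gdiff * \ynabla\gdiff$, $\gdiff * \ynabla\ynabla\gdiff$, and $\yRic * \gdiff * \gdiff$ listed in the statement. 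Next, I expand the quadratic part in $h$: writing $h = \yh + \hdiff$ and using the first identity of \autoref{lem:split-basic} to expand $g^{-1}$ in powers of $\gdiff$, I get
\[
\tfrac{1}{n-1}(\Tr_g h)^2 - |h|_g^2 = \tfrac{1}{n-1}(\Tr_{\yg}\yh)^2 - |\yh|_{\yg}^2 + (\text{linear in }\gdiff,\hdiff) + (\text{quadratic remainder}).
\]
The linear part produces precisely the $\yh$-dependent terms of $\ydH$ in~\eqref{lin-constr}, after identifying $\yh^{ij} = (\yg^{-1})^{ik}(\yg^{-1})^{jl}\yh_{kl}$ and carefully tracking whether $h$ appears with raised or lowered indices. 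The remainder is a finite sum of monomials of the schematic types $\yh*\yh*\gdiff*\gdiff$, $\yh*\gdiff*\hdiff$, and $\hdiff*\hdiff$ as claimed.

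For the momentum constraint $\Mcal[g,h]^i = \nabla_j h^{ji}$, I use the tensor field $\zeta^i{}_{jk} = (\nabla - \ynabla)^i{}_{jk}$ provided by~\eqref{nabla-ynabla}, which is schematically $\zeta = \yg^{-1}*\ynabla\gdiff + (g-\yg)*\ynabla g$. Writing $h^{ji} = (g^{-1})^{jk}(g^{-1})^{il} h_{kl}$ and expanding both $g^{-1}$ (via \autoref{lem:split-basic}) and the connection, one obtains $\nabla_j h^{ji} = \ynabla_j \yh^{ji}$ plus a linear contribution in $(\gdiff,\hdiff)$ together with a remainder of the form $\hdiff * \ynabla\gdiff$ (augmented by terms absorbed into ${\del*}$-notation via~\eqref{equa--B2}). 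Rearranging the linear part, using the symmetry of $\yh$, gives exactly the expression for $\ydM[\gdiff,\hdiff]^i$ in~\eqref{lin-constr}.

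The main (and essentially only) obstacle is bookkeeping: there are many potentially distinct quadratic monomials, and one must verify that each falls into one of the schematic classes listed, with the correct structure of contractions and derivatives. This is purely algebraic and is handled by systematically decomposing each factor of $g^{-1}$ as $\yg^{-1} + O(\gdiff)$, each factor of $\nabla$ as $\ynabla + \zeta$, and collecting terms order by order in $(\gdiff,\hdiff)$; no analytic input is required beyond \autoref{lem:split-basic}.
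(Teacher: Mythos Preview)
Your approach is correct and matches the paper's proof: expand each constituent of $\Hcal$ and $\Mcal$ to second order in $(\gdiff,\hdiff)$ using \autoref{lem:split-basic}, collect the linear part into $\ydG$, and check that every remaining monomial falls into one of the listed schematic classes. One simplification you should note: in this paper $h$ is a $(2,0)$-tensor with upper indices by convention, so $\Tr_g h = g_{ij}h^{ij}$ and $\Mcal[g,h]^i = \nabla_j h^{ji}$ involve no raising of $h$'s indices via $g^{-1}$; only the metric $g=\yg+\gdiff$ and the connection difference $\zeta$ need expanding, which makes the momentum computation (and the $h$-quadratic part of the Hamiltonian) shorter than your plan suggests.
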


\begin{proof}
{\it 1. Hamiltonian constraint.} Using that the notation $*$ can absorb factors of $\gdiff$ (since that is a difference of metrics), the expansions of $\Tr_g h$ and $|h|_g^2$ are found to be
\be
\aligned
\Tr_g h & = (\yg_{ij}+ \gdiff_{ij})(\yh^{ij}+ \hdiff^{ij}) = \yTr \yh + (\yg_{ij} \hdiff^{ij} + \gdiff_{ij} \yh^{ij}) + \gdiff*\hdiff, 
\\
(\Tr_g h)^2 & = (\yTr \yh)^2 + 2 (\yTr \yh) (\yg_{ij}\hdiff^{ij} + \gdiff_{ij} \yh^{ij}) + \yh*\yh*\gdiff*\gdiff + \yh*\gdiff*\hdiff + \hdiff*\hdiff, 
\\
g_{ij} h^{jk} & = (\yg_{ij}+ \gdiff_{ij}) (\yh^{jk}+ \hdiff^{jk})
= \yg_{ij} \yh^{jk} + (\yg_{ij}\hdiff^{jk} + \gdiff_{ij} \yh^{jk}) + \gdiff*\hdiff, 
\\
|h|_g^2 & = |\yh|_{\strut\yg}^2 + 2 \yg_{kl} \yh^{li} (\yg_{ij}\hdiff^{jk} + \gdiff_{ij} \yh^{jk}) + \yh*\yh*\gdiff*\gdiff + \yh*\gdiff*\hdiff + \hdiff*\hdiff.
\endaligned
\ee
Together with the expansion of~$R$ given in~\eqref{Riem-zRiem}, this yields the stated expansion of~$\Hcal$.

\medskip

\noindent{\it 2. Momentum constraint.} Using $\zeta= \nabla- \ynabla$ given in~\eqref{nabla-ynabla}, we write
\be
\aligned
\nabla_j h^{ji}
& = \ynabla_j h^{ji} + \zeta^i{}_{jk} h^{jk} + \zeta^j{}_{jk} h^{ki}
\\
& \compressmath{.8}{
= \ynabla_j \yh^{ji} + \ynabla_j \hdiff^{ji} + \yh^{jk} (\yg^{-1})^{il} \Bigl( \ynabla_j\gdiff_{kl} - \frac{1}{2} \ynabla_l \gdiff_{jk} \Bigr)
+ \frac{1}{2} \yh^{ik} \ynabla_k\yTr\gdiff + \yh*\gdiff*\ynabla\gdiff + \hdiff * \ynabla \gdiff, 
}
\endaligned
\ee
which is the desired expansion of the momentum constraint.
\end{proof}


\begin{proposition}[Expansion of the Einstein constraints near a Euclidean end]
\label{lem:lin-constr-coord}
Given asymptotically Euclidean data sets $(\yg,\yh)$ and $(g,h)=(\yg,\yh)+(\gdiff,\hdiff)$ and a coordinate chart $(x^i)$ defined on an asymptotically Euclidean end, the linearized Einstein constraint admits the expansion
\bel{lin-constr-coord}
\aligned
\ydH[\gdiff,\hdiff]
& = \del_i \del_j \gdiff_{ij} - \del_i \del_i \gdiff_{jj} + (\yg- \delta) * \del\del\gdiff
\\
& \quad 
+ \del\yg * \del\gdiff + \del\del\yg * \gdiff + \del\yg * \del\yg * \gdiff + \yh * \hdiff + \yh * \yh * \gdiff,
\\
\ydM[\gdiff,\hdiff]^{\smallbullet} & = \del_j \hdiff^{j\smallbullet} + \del\yg * \hdiff + \yh * \del\gdiff + \yh * \del\yg * \gdiff, 
\endaligned
\ee
where $\del_i=\del/\del x^i$ denotes a partial derivative, indices are raised or lowered using the Euclidean metric $\delta=dx^i\otimes dx^i$ and summation over repeated indices is implicit, regardless of their (upper, lower) position. The nonlinear constraints read
\bel{lin-constr-coord-2}
\aligned
\Hcal[g,h] - \Hcal[\yg,\yh]
& = \del_i \del_j \gdiff_{ij} - \del_i \del_i \gdiff_{jj}
\\
& \quad 
+ (\yg- \delta) * \del\del\gdiff
+ \del\yg * \del\gdiff
+ \del\del\yg * \gdiff
+ \del\yg * \del\yg * \gdiff
\\
& \quad
+ \yh * \hdiff
+ \yh * \yh * \gdiff
+ \del\gdiff * \del\gdiff
+ \gdiff * \del\del\gdiff
+ \hdiff*\hdiff,
\\
\Mcal[g,h]^{\smallbullet} - \Mcal[\yg,\yh]^{\smallbullet}
& = \del_j \hdiff^{j\smallbullet} + \del\yg * \hdiff + \yh * \del\gdiff + \yh * \del\yg * \gdiff + \hdiff * \del\gdiff.
\endaligned
\ee
\end{proposition}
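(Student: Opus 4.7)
The plan is to start from the covariant expressions derived in Lemmas \ref{lem:constr-expand} and \ref{lem:lin-constr}, and then systematically convert covariant derivatives $\ynabla$ into partial derivatives $\del$ while replacing the inverse metric $(\yg^{-1})^{ij}$ by $\delta^{ij}+((\yg^{-1})^{ij}-\delta^{ij})$. Since the chart is asymptotically Euclidean, the correction $(\yg^{-1})-\delta$ is of order $\yg-\delta$, so every factor $(\yg^{-1})-\delta$ that multiplies a second-derivative term $\del\del\gdiff$ contributes exactly a term of the schematic type $(\yg-\delta)*\del\del\gdiff$ absorbed on the right-hand side of~\eqref{lin-constr-coord}.

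First, I would rewrite the covariant Hessians $\ynabla_i\ynabla_j\gdiff_{kl}$ in the coordinate chart as $\del_i\del_j\gdiff_{kl}$ plus terms schematically of the form $\yGamma*\del\gdiff+(\del\yGamma+\yGamma*\yGamma)*\gdiff$, with Christoffel symbols $\yGamma\sim\del\yg$. These contributions land in the right-hand side of~\eqref{lin-constr-coord} under the labels $\del\yg*\del\gdiff$, $\del\del\yg*\gdiff$, and $\del\yg*\del\yg*\gdiff$. The explicit Ricci term $\yR_{ij}\gdiff_{kl}$ in \eqref{lin-constr} contributes terms of the same type since, in coordinates, $\yR_{ij}=\Obig(\del\del\yg+\del\yg*\del\yg)$. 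Combining the two principal traces $(\delta^{ik}\delta^{jl}-\delta^{il}\delta^{jk})\del_i\del_j\gdiff_{kl}= \del_i\del_j\gdiff_{ij}-\del_i\del_i\gdiff_{jj}$ gives the leading expression in~\eqref{lin-constr-coord}. The remaining terms in $\ydH$ that involve $\yh$ are already of the schematic form $\yh*\hdiff+\yh*\yh*\gdiff$ (no derivatives of $\gdiff$), and no rewriting is needed.

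For the momentum operator $\ydM$ I would proceed analogously: the principal piece $\ynabla_j\hdiff^{ji}$ equals $\del_j\hdiff^{ji}$ plus a Christoffel correction $\del\yg*\hdiff$, while the three subleading terms in~\eqref{lin-constr} schematically take the form $\yh*(\ynabla g\pm\frac{1}{2}\ynabla\gdiff)$, which after conversion to partial derivatives contribute $\yh*\del\gdiff+\yh*\del\yg*\gdiff$. For the nonlinear differences $\Hcal[g,h]-\Hcal[\yg,\yh]$ and $\Mcal[g,h]-\Mcal[\yg,\yh]$, I would simply insert the previously obtained expansions of $\ydH[\gdiff,\hdiff]$ and $\ydM[\gdiff,\hdiff]$ into Lemma~\ref{lem:constr-expand} and reabsorb the covariant-to-partial remainder terms arising from the quadratic remainders $\ynabla\gdiff*\ynabla\gdiff+\gdiff*\ynabla\ynabla\gdiff$ into their partial-derivative counterparts; the correction terms merge with the previously listed schematic products since $\ynabla g-\del g\sim \del\yg* g$ and similarly for second derivatives.

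The main obstacle is essentially bookkeeping: one has to verify that every cross-term appearing during the conversion from $\ynabla$ to $\del$, and from $(\yg^{-1})$ to $\delta$, falls into one of the schematic classes listed in~\eqref{lin-constr-coord} (and its nonlinear counterpart), with no missing or spurious terms. Since the $*$-product notation of~\eqref{equa--B2} is explicitly designed to absorb contractions with any of the metrics $\yg,g,\delta$, this verification is mechanical once one fixes a careful order of substitutions. No new analytical idea is required beyond the covariant identities already established in Lemmas \ref{lem:split-basic}, \ref{lem:constr-expand}, and \ref{lem:lin-constr}.
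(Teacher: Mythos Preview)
Your proposal is correct and follows essentially the same route as the paper's proof: both rely on the identity $\ynabla_i T = \del_i T + \del\yg * T$ to convert covariant derivatives to partial ones, then contract with $(\yg^{-1})=\delta+(\yg-\delta)$ to extract the leading Euclidean symbol and absorb all corrections into the schematic $*$-products, with the Ricci term handled via $\yR_{ij}=\del\del\yg+\del\yg*\del\yg$. The paper's treatment of the nonlinear remainders is likewise identical to yours, noting that extra factors of $\gdiff$ can be absorbed into the $*$ notation since $\gdiff=g-\yg$.
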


\begin{proof} 
\bse
The main ingredient in expanding the linearized constraints in coordinates is the observation that, for an arbitrary tensor~$T$, 
\bel{ynablaT}
\ynabla_i T^{j\dots}_{k\dots} = \del_i T^{j\dots}_{k\dots} + \del\yg * T. 
\ee
 Thus, we have 
\be
\aligned
\ynabla_i \ynabla_j \gdiff_{kl} - \ynabla_i \ynabla_k \gdiff_{jl} - \yR_{ij} \gdiff_{kl}
& = \del_i \ynabla_j \gdiff_{kl} - \del_i \ynabla_k \gdiff_{jl} + \del\yg * \ynabla\gdiff
+ \del\yg * \del\yg * \gdiff + \del\del\yg * \gdiff
\\
& = \del_i \del_j \gdiff_{kl} - \del_i \del_k \gdiff_{jl} + \del\yg * \del\gdiff
+ \del\yg * \del\yg * \gdiff + \del\del\yg * \gdiff.
\endaligned
\ee
Contracting with the inverse metric as in~\eqref{lin-constr} yields the announced expression of~$\ydH$. The analogous expansion of the linearized momentum constraint is straightforward. Regarding nonlinearities, most of the relevant terms in the coordinate expansion can be absorbed into error terms that are already present in the linear part. For instance, we can write 
\be
\ynabla\gdiff * \ynabla\gdiff
= \del\gdiff * \del\gdiff + \gdiff * \del\yg * \del\gdiff + \gdiff * \gdiff * \del\yg * \del\yg
= \del\gdiff * \del\gdiff + \del\yg * \del\gdiff + \gdiff * \del\yg * \del\yg, 
\ee
since $\gdiff=g- \yg$ can be absorbed into the notation~$*$.
\ese
\end{proof}


\subsection{Expansion of the adjoint constraints}

With an obvious change of notation, \autoref{lem:constr-expand} also provides us with the linearization~\eqref{lin-constr} of the constraints around a data set $(\zg,\zh)$ instead of $(\yg,\yh)$. Here, we determine the adjoint of these linearized constraints. Starting from the linearized Einstein constraint operator $\zdG=(\zdH,\zdM) \colon S^0_2\times S^2_0\to S^0_0\times S^1_0$, we determine its formal adjoint operator $\zdG^*\colon S^0_0\times S^0_1\to S^2_0\times S^0_2$ (defined over the tensor fields $S_m^n$ of the type corresponding to each argument).
With a mild abuse of notation, we write $\zdG^*=(\zdH^*,\zdM^*)$: this notation is motivated by the time-symmetric case $\zh= 0$ since, in that case, $\zdH[\gdiff,\hdiff] =\zdH[\gdiff]$ and $\zdM[\gdiff,\hdiff] =\zdM[\hdiff]$ decouple.
To make sense of~\eqref{gh-ygyh-uZ}, we lower and raise indices using the metric~$\zg$, and obtain an operator $(\zdH^{*\flat\flat},\zdM^{*\sharp\sharp})\colon S^0_0\times S^0_1\to S^0_2\times S^2_0$, for which we occasionally use the same notation~$\zdG^*$ when no confusion may arise.

\begin{lemma}[Adjoint of the linearized Einstein constraints]
\label{lem:lin-constr}
The operator $\zdG^*$ takes the explicit form
\[
\aligned
\zdH^{*\flat\flat}[u,Z]_{ij} & = \znabla_i \znabla_j u - \zR_{ij} u
- \zg_{ij} \zDelta u 
+ \frac{2}{n-1} (\zTr\zh) \zg_{ik} \zh^{kl} \zg_{lj} u - 2 \zg_{ik} \zh^{km} \zg_{mq} \zh^{ql} \zg_{lj} u 
\\
& \quad + \frac{1}{2} \znabla_k \Bigl(\zg_{il}\zh^{lm}\zg_{mj} (\zg^{-1})^{kq} Z_q - \zh^{kl} \zg_{li} Z_j - \zh^{kl} \zg_{lj} Z_i - \zh^{kl} Z_l \zg_{ij}\Bigr),
\\
\zdM^{*\sharp\sharp}[u,Z]^{ij} & = - \frac{1}{2} (\zg^{-1})^{ik} (\Lie_Z \zg)_{kl} (\zg^{-1})^{lj} + \Bigl( \frac{2}{n-1} (\zTr\zh) (\zg^{-1})^{ij} - 2 \zh^{ij} \Bigr) u.
\endaligned
\]
\end{lemma}

\begin{proof} As there is a single metric involved in the problem, we freely raise and lower indices using~$\zg$ in the present proof. The adjoint of the linearized constraints is evaluated by a formal integration by parts starting from the linearized operators $\zdH$ and $\zdM$ derived previously in~\eqref{lin-constr}, that is, 
\[
\aligned
& \int \bigl( \zdH[\gdiff,\hdiff] \, u + \zdM[\gdiff,\hdiff]^i \, Z_i \bigr) \zvol
\\
& = \int \Bigl( \znabla^i \znabla^j \gdiff_{ij} - \zDelta \gdiff^i_i - \zR^{ij} \gdiff_{ij}
+ \frac{2}{n-1} \zh^j_j \bigl( \hdiff^i_i + \zh_i^k \gdiff_k^i \bigr)
- 2 \zh^{ij} \bigl( \hdiff_{ij} + \zh_i^k \gdiff_{kj} \bigr) \Bigr) u \, \zvol \\
& \quad + \int \Bigl( \znabla_j \hdiff^{ji} + \zh^{jk} \Bigl( \znabla_j\gdiff_k^i - \frac{1}{2} \znabla^i \gdiff_{jk} \Bigr) + \frac{1}{2} \zh^{ij} \znabla_j\gdiff_l^l \Bigr) Z_i \, \zvol, 
\endaligned
\]
which equals 
\[
\aligned
& \int \Bigl( \znabla^i \znabla^j u - \zg^{ij} \zDelta u - \zR^{ij} u + \frac{2}{n-1} \zh^k_k \zh^{ij} u - 2 \zh_k^i \zh^{kj} u \\
& \qquad - \znabla_k(Z^i \zh^{jk}) + \frac{1}{2} \znabla_k(Z^k \zh^{ij}) - \frac{1}{2} \znabla_l(Z_k \zh^{kl}) \zg^{ij} \Bigr) \gdiff_{ij} \zvol \\
& \quad + \int \Bigl( \frac{2}{n-1} u \zh^k_k \zg_{ij} - 2 u \zh_{ij} - \znabla_j Z_i \Bigr) \hdiff^{ij} \zvol
\\
& = \int \bigl( \zdH^*[u,Z]^{ij} \gdiff_{ij} + \zdM^*[u,Z]_{ij} \hdiff^{ij} \bigr) \, \zvol.
\endaligned
\]
The symmetry of $\gdiff$ and~$\hdiff$ requires $\zdH^*$ and $\zdM^*$ to be symmetrized, which replaces (for instance) the term $- \znabla_j Z_i$ by its symmetrization $- \frac{1}{2}(\znabla_i Z_j+ \znabla_jZ_i)=- \frac{1}{2}(\Lie_Z\zg)_{ij}$. We finally restore explicitly in the statement of \autoref{lem:lin-constr} the metrics and inverse metrics used to raise or lower indices.
\end{proof}

\begin{proposition}[Expansion of the adjoint constraints near a Euclidean end]
\label{prop:adjoint-constr-coord}
Given asymptotically Euclidean data $(\zg,\zh)$, a scalar field $u$, and vector field $Z$, together with a coordinate chart $(x^i)$ defined on an asymptotically Euclidean end, the adjoint of the linearized Einstein constraint admits the expansion
\[
\aligned
\zdH^{*\flat\flat}[u,Z]_{\smallbullet\smallbullet}
& = \del_{\smallbullet} \del_{\smallbullet} u - \delta_{\smallbullet\smallbullet} \Delta_\delta u
+ (\zg-\delta) * \del\del u
+ \del\zg * \del u
+ \del\del\zg * u + \del\zg * \del\zg * u \\
& \quad + \zh * \zh * u + \del\zg * \zh * Z + \del\zh * Z + \zh * \del Z,
\\
\zdM^{*\sharp\sharp}[u,Z]^{\smallbullet\smallbullet} & =
- \frac{1}{2} (\Lie_Z\delta)^{\smallbullet\smallbullet}
+ (\zg- \delta) * \del Z + \del\zg * Z + \zh * u,
\endaligned
\]
where $\del_i=\del/\del x^i$ denotes a partial derivative, $\Delta_\delta=\del_i\del_i$, and indices are raised or lowered using the Euclidean metric $\delta=dx^i\otimes dx^i$, so that $(\Lie_Z\delta)^{ij}=\delta^{ik}\delta^{jl}(\del_k Z_l+\del_l Z_k)$.
\end{proposition}

\begin{proof} As for \autoref{lem:lin-constr-coord}, the proof relies on~\eqref{ynablaT}, that is, $\znabla_i T^{j\dots}_{k\dots} = \del_i T^{j\dots}_{k\dots} + \del\zg * T$. 
For instance, the Lie derivative in~$\zdM^*$ reads
\[
(\Lie_Z \zg)_{kl}
= \znabla_k Z_l + \znabla_l Z_k = \del_k Z_l + \del_l Z_k + \del\zg * Z ,
\]
and raising the indices with $\zg^{-1}$ differs from raising them with~$\delta$ by terms of the form
\[
(\zg^{-1} - \delta) * (\Lie_Z \zg) = (\zg - \delta) * \del Z + \del\zg * Z .
\]
The second-order derivative terms in~$\zdH^{*\flat\flat}$ also need to be evaluated, namely 
\[
\znabla_i \znabla_j u
= \bigl(\del_i + \del\zg*\bigr) \bigl(\del_j + \del\zg*\bigr) u 
= \del_i \del_j u + \del\zg * \del u + \del\del\zg * u + \del\zg * \del\zg * u.
\]
The Laplacian is obtained by contracting this Hessian with $\zg^{ij}=\delta^{ij}+(\zg- \delta)^{ij}$ and the Ricci term is $R_{ij}u = \del\del\zg * u + \del\zg * \del\zg * u$. The remaining terms in $\zdG^*$ all involve~$\zh$ and are written with the $*$~notation.
\end{proof}


\section{Poincaré--Korn--Hardy inequalities on conical domains}
\label{appendix=D}

\subsection{Density of compactly supported functions}
\label{appendix=D.1}

\paragraph{The class of localization functions}

Throughout this section we work in a fixed Euclidean conical domain.  We derive density results (\autoref{appendix=D.1}) used in the main text to justify the absence of some angular boundary terms (e.g.\ in \autoref{section=5.2}), and geometric inequalities (\autoref{appendix=D.2} and \autoref{appendix=D.3}) in this flat setting that are extended to curved space in \autoref{appendix=E} and play a crucial role there to prove the existence of the seed-to-solution map.

Let us recall our notation.
We write $r=|x|$ and $\xh=x/r$ for $x\in\RR^n$ with $n\geq 3$.
We work on the open truncated cone
\be
\Omega_R = \{r\xh:\ r>R,\ \xh\in\Lambda\}\subset\RR^n,
\ee
where $R>0$ and $\Lambda\subset\Sphe^{n-1}$ is a connected domain with $C^2$ boundary.
The localization function $\lambda\in C^2(\overline\Lambda)$ is a defining function of the boundary in the sense that
\bel{equa-append33}
\lambda > 0 \text{ in } \Lambda , \qquad
\lambda = 0 \text{ on } \partial\Lambda , \qquad
|\nablaslash\lambda|>0 \text{ on } \partial\Lambda .
\ee
After multiplying $\lambda$ by a fixed positive constant, we also assume $0 < \lambda \leq 1$ on $\Lambda$. 
All constants below may depend on these fixed data~$(\Lambda,\lambda)$.

We parametrize the level sets of~$\lambda$ by coordinates $(t,y)\in[0,t_0]\times\del\Lambda$ near the boundary of~$\overline{\Lambda}$ such that $\lambda(t,y)=t$ and constant-$y$ lines are orthogonal to the level sets; these can be constructed by a normalized gradient flow of~$\lambda$ starting at $y\in\del\Lambda$.
These are coordinates on $\{0\leq\lambda\leq t_0\}$ for sufficiently small $0<t_0<m_0\leq 1$.  The measure induced on $\partial\Lambda$ is denoted by $d\sigma(y)$.

\begin{remark}
One could specify the dependence on~$\lambda$ as follows.  Consider the set
\be
\Ucal_\rho \coloneqq \bigl\{\xh\in\overline\Lambda \bigm| \operatorname{dist}_{\Sphe^{n-1}}(\xh,\partial\Lambda) < \rho \bigr\}.
\ee
For sufficiently small $\rho_0>0$, the sets $\Ucal_\rho$ for $0 < \rho\leq 2\rho_0$ are tubular collars of~$\partial\Lambda$ with nowhere vanishing $\nablaslash\lambda$.
Then for some constants $c_0,c_1,M_2,m_0>0$ one has
\be
\inf_{\overline\Lambda\setminus\mathcal U_{\rho_0/2}}\lambda\geq m_0 , \qquad
0<c_0 \leq|\nablaslash\lambda|\leq c_1 \ \text{in } \Ucal_{\rho_0} , \qquad
\|\lambda\|_{C^2(\overline\Lambda)} \leq M_2 .
\ee
All constants below can be shown to be uniform over localization functions~$\lambda$ satisfying these bounds, and may depend on $\rho_0,c_0^{-1},c_1,M_2,m_0^{-1}$ and the fixed $C^2$ geometry of~$\Lambda$.
The lower bound by $m_0$ is needed in addition to the $C^2$ bound in order to control uniformly the region away from~$\del\Lambda$.
\end{remark}


\paragraph{Density properties}

The squared weighted Sobolev norms of interest here are 
\bel{f-Hlp-a-def}
\|f\|_{H^l_{p,-a}(\Omega_R)}^2 \coloneqq \sum_{k=0}^{l}\int_{\Omega_R} |\nabla^k f|^2\lambda^{2a}r^{2p-n+2k}\,dx ,
\ee
for any given $p,a\in\RR$ and any integer $l \geq 0$. We denote $L^2_{p,-a}(\Omega_R)=H^0_{p,-a}(\Omega_R)$, and set
\be
\overline{\Omega}_R = \{ r\xh\mid r\geq R,\ \xh\in\overline{\Lambda} \}, \qquad
\Lambda_R = \{ R\xh\mid \xh\in\Lambda \}, \qquad
\Gamma_R = \{ r\xh\mid r\geq R,\ \xh\in\partial\Lambda \},
\ee
with a common corner $\del\Lambda_R = \del\Gamma_R = \{ R\xh \mid \xh\in\partial\Lambda \}$.
The appropriate test space is
\bel{DcalR-def}
\DcalR(\Omega_R, E) \coloneqq C_c^\infty(\overline{\Omega}_R \setminus \Gamma_R, E),
\ee
where $E$ is any finite-dimensional Euclidean space and we omit $E$ from the notation when $E=\RR$.
This test space consists of functions on the manifold with boundary $\overline{\Omega}_R\setminus\Gamma_R$ that extend smoothly across the radial boundary~$\Lambda_R$ and have compact support in $\overline{\Omega}_R\setminus\Gamma_R$.  Thus they may be nonzero on~$\Lambda_R$, but they vanish for large~$r$ and in a neighborhood of the angular boundary~$\Gamma_R$.
Observe that \emph{no boundary condition is imposed on $\Lambda_R$.}
Radial cutoffs are taken at large $\log(r/R)$ and never near $r=R$.  Angular
cutoffs are taken in the collar variable $\lambda$ and are justified by \autoref{lem-density}, below, which is then extended from one dimension up to a density result in~$\Omega_R$, \autoref{prop-density}.  At this stage, we restrict attention to $H^1$~functions.

\begin{lemma}[One-dimensional density lemma]
  \label{lem-density} 
  Let $A,B\in\RR$, and let $H$ be a Hilbert space. Consider\footnote{Functions in $C_c^\infty((0,1])$ have compact support contained in~$(0,1]$, namely they vanish in a neighborhood of~$0$ and smoothly extend through the regular end-point $t=1$ with no boundary condition.  The tensor product $C_c^\infty((0,1])\otimes H$ consists of finite sums of smooth scalar functions multiplied by fixed vectors of~$H$.}
  the space $V_{A,B}(H)$ of locally $H^1$ maps $f:(0,1]\to H$ with finite squared norm 
  \be
  \|f\|_{V_{A,B}(H)}^2
  \coloneqq \int_0^1 \bigl( t^{2A-1}\|f(t)\|_H^2 + t^{2B+1}\|f'(t)\|_H^2 \bigr)\,dt .
  \ee

  \bei
\item If $A\leq 0$ or $0\leq B$ then $C_c^\infty((0,1]) \otimes H$ is dense in $V_{A,B}(H)$.

\item If $B<0<A$, then the trace map $\mathrm{T}_0f=\lim_{t\to 0}f(t)$ is bounded from $V_{A,B}(H)$ to $H$, and the closure of $C_c^\infty((0,1]) \otimes H$ is its kernel.
  When $H$ is finite-dimensional, this kernel has codimension~$\dim H$.
  \eei
\end{lemma}

\begin{proof}
  To limit clutter, we write $\|\cdot\|_{A,B}$ for $\|\cdot\|_{V_{A,B}(H)}$ and $|\cdot|$ for the norm in~$H$.
  First, observe that any locally-$H^1$ function $f\in H^1_{\textnormal{loc}}((0,1],H)$ that \emph{vanishes in a neighborhood of~$0$} can be approached in $V_{A,B}(H)$ norm by elements of $C_c^\infty((0,1])\otimes H$ by a standard mollification argument.  The only difficulties thus arise near $t=0$.
  We distinguish cases $A\leq B$ and $A>B$, with the latter being further separated into $B>0$, $B<0$, and $B=0$.

  \medskip

  \bse
  \noindent{\it Case 1.} For $A\leq B$ we show density by exhibiting an explicit regularization of $f\in V_{A,B}(H)$.
  Consider the monotonically decreasing bijection $U:(0,1]\to[0,+\infty)$
  \bel{equa-density-cutoff}
  U(t)=\int_t^1s^{-1-B+A}\,ds ,
  \ee
  which has $U(t)\to+ \infty$ as $t\to 0$.  Choose $\kappa\in C^\infty(\RR,[0,1])$ with $\kappa=0$ on $(-\infty,-1]$ and $\kappa=1$ on $[1,+ \infty)$, and set $\eta_v(t)=\kappa(v-U(t))$.  Thus the product $\eta_vf$ vanishes near~$0$, $\eta_v\to1$ pointwise on $(0,1]$ as $v\to+\infty$, and
  \bel{equa-density-error}
  \int_0^1 t^{2B+1} |\eta_v'f|^2\,dt
  \lesssim \int_{\{v-1\leq U\leq v+1\}} t^{2A-1} |f|^2\,dt .
  \ee
  The right-hand side tends to zero by absolute continuity of the integral. The other terms in $\|(1-\eta_v)f\|_{A,B}$ are tails of integrable
  functions, so $\eta_vf\to f$ in~$V_{A,B}$ as $v\to+\infty$.

  \medskip

  \noindent{\it Case 2.}  For $A>B>0$, the weighted Hardy inequality with an interior
  remainder gives
  \be
  \int_0^{1/2}t^{2B-1} |f|^2\,dt
  \lesssim \int_0^1t^{2B+1} |f'|^2\,dt + \int_{1/2}^1 |f|^2\,dt
  \ee
  (with a $B$-dependent constant).
  Since $A>B$, this shows that $V_{A,B}=V_{B,B}$ with equivalent norms, so that the preceeding density result with $A=B$ concludes.

  \medskip

  \noindent{\it Case 3.} 
  Suppose next that $A>B$ and $B<0$. The Cauchy--Schwarz inequality yields
  \be
  \int_0^1|f'|\,dt
  \leq \biggl(\int_0^1 t^{2B+1} |f'|^2\,dt\biggr)^{1/2} \biggl(\int_0^1 t^{-2B-1}\,dt\biggr)^{1/2},
  \ee
  so every $f\in V_{A,B}$ admits a trace $\mathrm{T}_0f$ at $t=0$.
  Combining this bound with the ordinary $H^1$ estimate on $[1/2,1]$ gives $|\mathrm{T}_0f|\lesssim\|f\|_{A,B}$.  
  \bei
\item If $A\leq 0$, the integrability of $t^{2A-1}|f|^2$ forces $\mathrm{T}_0f=0$.
\item 
  If $A>0$, constants belong to $V_{A,B}$ so the trace map~$\mathrm{T}_0$ is onto $H$.
  Hence, when $H$ is finite-dimensional, its kernel has codimension $\dim H$.
\eei
We now show density in the space of functions $f\in\ker\mathrm{T}_0$.  The weighted Hardy inequality with $\mathrm{T}_0f=0$, namely
\be
\int_0^{1/2} t^{2B-1} |f|^2\,dt
\lesssim \int_0^{1/2} t^{2B+1} |f'|^2\,dt ,
\ee
reduces the case $A>B$ to $A=B$, which was already treated.  This proves both assertions when $B<0$.

  \medskip

  \noindent{\it Case 4.} 
  There remains to treat $B=0<A$.  With $s=-\log t$ and
  $g(s)=f(e^{-s})$, the squared norm is
  \be
  \int_0^\infty\bigl(e^{-2As}|g|^2+|g'|^2\bigr)\,ds .
  \ee
  Let $\chi_L=1$ on $[0,L]$, $\chi_L=0$ on $[2L,\infty)$, and
  $|\chi_L'|\lesssim L^{-1}$.  Since $g'\in L^2(0,\infty)$ implies
  $g(s)=o(\sqrt{s})$, we obtain 
  \be
  \int_0^\infty |\chi_L'g|^2\,ds
  \lesssim L^{-2} \int_L^{2L} |g(s)|^2\,ds \longrightarrow 0 , \qquad
  L\to+\infty .
  \ee
  Consequently $\chi_Lg\to g$ in the displayed norm.
  \ese
\end{proof}

\begin{proposition}[Density of compactly supported functions in a conical domain]
  \label{prop-density}
  Let $q,a\in\RR$.  
  \bei
  \item If $a\leq-1/2$ or $a\geq 1/2$, then $\DcalR(\Omega_R,E)$ is dense in $H^1_{q,-a}(\Omega_R,E)$.

  \item If $-1/2<a<1/2$, the angular trace operator, initially defined on smooth fields by the collar limit 
  \be
  \mathrm{T}_{\Gamma_R}f(r,y) \coloneqq \lim_{t\to 0}f(r,t,y),
  \ee
  extends uniquely to a bounded linear map
  \be
  \mathrm{T}_{\Gamma_R} : H^1_{q,-a}(\Omega_R,E)
  \longrightarrow L^2\bigl(\Gamma_R, r^{2q-1}\,dr\,d\sigma,E\bigr) .
  \ee
  The closure of $\DcalR(\Omega_R,E)$ is $\ker\mathrm{T}_{\Gamma_R}$.
  \eei
  \noindent Moreover, in all cases, the trace on $\Lambda_R$ is unrestricted and in particular the functions under consideration need not vanish at $r=R$. 
\end{proposition}

\begin{proof}
  In the collar coordinates $(r,t,y)$ the metric and inverse metric are uniformly bounded, and the volume factor in the $H^1_{q,-a}$ norm is $r^{2q-1}dr\,t^{2a}dt\,d\sigma$ up to uniformly bounded factors.  Thus for the angular part of the $H^1_{q,-a}$ norm, the one-dimensional exponents and Hilbert space in \autoref{lem-density} are
  \be
  A=a+ 1/2,\qquad B=a-1/2 , \qquad
  H = L^2(\Gamma_R, r^{2q-1}dr\,d\sigma,E)
  \ee
  and the exceptional case $B<0<A$ is $-1/2<a<1/2$.  In that interval the one-dimensional trace~$\mathrm{T}_{\Gamma_R}$ exists, and the closure (of functions vanishing near $t=0$, and under the angular part of the $H^1_{q,-a}$ norm) consists exactly of functions with zero angular trace.  Outside that interval there is no trace obstruction.
  Since the angular cutoff depends only on $t$, the $r$- and $y$-derivative terms in the norms produce no commutator and converge separately by dominated convergence.
  Away from the collar, ordinary mollification applies.
  There remains only to make the support bounded in the radial direction, which is done using the same cutoffs as \autoref{lem-density} with $(t,A,B)$ replaced by $(1/r,-q,-q)$.  This fits in Case 1 in the proof of \autoref{lem-density}, which has no trace condition.
  The continuity of the trace operator proves the converse inclusion and hence identifies the closure exactly with its kernel.
\end{proof}

\begin{proposition}[Second-order density in the variational range]
  \label{prop-density-H2}
  Let $q\in\RR$, let $a>3/2$, and let $E$ be a finite-dimensional Euclidean space. Then $\DcalR(\Omega_R,E)$ is dense in $H^2_{q,-a}(\Omega_R,E)$.
  In particular, this applies to the Hamiltonian variational space $H^2_{n-2-p,-\expoP}(\Omega_R)$ since $\expoP\geq 2$.
  No trace condition is imposed.
  Moreover, the traces $f|_{\Lambda_R}$ and $\partial_r f|_{\Lambda_R}$ at $r=R$ exist and are unconstrained.
\end{proposition}

\begin{proof}
  The proof is essentially identical to \autoref{prop-density}, with some simplifications due to the restriction to a single case $a>3/2$.
  We introduce a radial and angular cutoff and show that its effect is controlled thanks to angular Hardy inequalities.
  Choose $\kappa\in C^\infty(\RR,[0,1])$ with $\kappa=0$ on $(-\infty,-1]$ and $\kappa=1$ on $[1,+ \infty)$, and consider the function (for some $v\in\RR$)
  \be
  \eta_v = \kappa(v+\log\lambda) \kappa(v-\log r) ,
  \ee
  which belongs to $C_c^2(\overline{\Omega}_R \setminus \Gamma_R,[0,1])$ since it vanishes as $\lambda\to 0$ or $r\to+\infty$ and $\lambda$ is assumed to have $C^2$ regularity.
  We now prove that for $f\in H^2_{q,-a}(\Omega_R,E)$ one has $\eta_v f\to f$ as $v\to+\infty$.

  The function $1-\eta_v$ is supported in the region $\Delta_v=\{e^{v-1}<r<e^{v+1},e^{-1-v}<\lambda<e^{1-v}\}$, which is contained in the collar for sufficiently large~$v$.
  By expanding all derivatives in collar coordinates $(r,t,y)$ and noting that $|(r\del_r)^j\del_t^k\eta_v|\lesssim \lambda^{-k}$ for $k\leq 2$ one gets
  \be
  \|f-\eta_v f\|_{H^2_{q,-a}(\Omega_R)}^2
  \lesssim
  \|f\|_{H^2_{q,-a}(\Delta_v)}^2
  + \|f\|_{H^1_{q,-a+1}(\Delta_v)}^2
  + \|f\|_{L^2_{q,-a+2}(\Delta_v)}^2 .
  \ee
  The first term tends to zero as $v\to+\infty$ by absolute continuity of the integral.
  The other two terms are bounded by the first thanks to the Hardy inequality.

  For $b>1/2$, the Hilbert-valued Hardy inequality with an interior remainder gives
  \be
  \int_0^{t_0/2}t^{2b-2}\|h(t)\|_H^2\,dt
  \lesssim_b
  \int_0^{t_0}t^{2b}\|h'(t)\|_H^2\,dt
  +\int_{t_0/2}^{t_0}\|h(t)\|_H^2\,dt.
  \ee
  Applying it to derivatives $Df=\{r\del_r f,\del_t f,\del_y f\}$ with $b=a$, and to $f$ with $b=a-1$ gives the desired bounds.  We conclude that $\eta f\to f$.  A standard bounded $H^2$ extension across the regular face $r=R$ followed by mollification then produces approximants in $\DcalR(\Omega_R,E)$.

  Finally, the two traces along~$\Lambda_R$ are independent. Indeed, for any
  $\varphi_0,\varphi_1\in C_c^\infty(\Lambda_R,E)$, choose a radial cutoff
  $\zeta_0$ equal to one near $r=R$. The field
  \be
  w(r,\xh)
  =
  \zeta_0(r)
  \bigl(
  \varphi_0(\xh)+(r-R)\varphi_1(\xh)
  \bigr)
  \ee
  belongs to $\DcalR(\Omega_R,E)$ and satisfies
  $w|_{\Lambda_R}=\varphi_0$ and $\partial_rw|_{\Lambda_R}=\varphi_1$.
\end{proof}


\subsection{Weighted Poincar\'e--Hardy inequalities}
\label{appendix=D.2}

We now turn to geometric inequalities.  Contrarily to the main text where Sobolev norms on $\Lambda_R$ are normalized by a factor $\aire[\Lambda,\lambda]$, here, none of the norms involve this factor as it would make inconvenient the relation between $n$-dimensional norms on~$\Omega_R$ and those on its boundary.

\begin{proposition}[Weighted Poincar\'e--Hardy inequalities]
  \label{lem-Hardy}
  Let $q>0$, $a\in\RR$, and set $\delta=\Oneone_{a>1}$, equal to~$1$ if $a>1$ and $0$ if $a\leq 1$.

  \bei
\item For every $w\in H^1_{q,-a}(\Omega_R)$, one has
  \bel{equa-Poin}
  \|w\|_{L^2_{q,-a+\delta}(\Omega_R)}
  \lesssim \|\nabla w\|_{L^2_{q+1,-a}(\Omega_R)} .
  \ee

\item For every scalar distribution $w$ such that $\nabla w\in H^1_{q,-a}(\Omega_R,\RR^n)$, one has
  \bel{equa-Poin-Hess}
  \|\nabla w\|_{L^2_{q,-a+\delta}(\Omega_R,\RR^n)}
  \lesssim \mathopen\|\Hess w\|_{L^2_{q+1,-a}(\Omega_R)} .
  \ee
  \eei
  The corresponding traces along $\Lambda_R$ satisfy the stronger estimates
  \bel{equa-Poin-inner-traces}
  \aligned
    q^2\|w\|_{L^2_{q,-a}(\Omega_R)}^2
    + q R^{2q} \|w(R,\cdot)\|_{L^2(\Lambda,\lambda^{2a}d\xh)}^2
    & \leq \|\partial_r w\|_{L^2_{q+1,-a}(\Omega_R)}^2
    \leq \|\nabla w\|_{L^2_{q+1,-a}(\Omega_R)}^2 ,
    \\
    q^2\|\nabla w\|_{L^2_{q,-a}(\Omega_R)}^2
    + q R^{2q} \|\nabla w(R,\cdot)\|_{L^2(\Lambda,\lambda^{2a}d\xh)}^2
    & \leq \|\partial_r \nabla w\|_{L^2_{q+1,-a}(\Omega_R)}^2
      \leq \mathopen\|\Hess w\|_{L^2_{q+1,-a}(\Omega_R)}^2 .
  \endaligned
  \ee
  The implied constants depend only on $q$, $a$, $\Lambda$, $\lambda$, and in particular they are independent of $R$.
\end{proposition}

\begin{proof}
  \bse
  \noindent {\bf Radial estimate.}
  Regard $w(r,\cdot)$ as a map with values in $L^2(\Lambda,\lambda^{2a}d\xh)$.  With
  $s=\log(r/R)$ and $v(s,\xh)=r^qw(r,\xh)$, the weighted radial norms become the ordinary $L^2$ norms of $v$ and $\partial_sv-qv$ on the half-line.
  Thus $v$ and $\partial_sv=(\partial_sv-qv)+qv$ both belong to $L^2((0,\infty),L^2(\Lambda,\lambda^{2a}d\xh))$.
  Consequently $v\in H^1((0,\infty),L^2(\Lambda,\lambda^{2a}d\xh))$, and its continuous representative tends to zero at infinity.
  The one-dimensional trace theorem and an approximation in $s$ therefore justify the following identity for every $w\in H^1_{q,-a}(\Omega_R)$:
  \bel{equa-Poin-proof}
  \aligned
  0 & \leq \int_{\Omega_R}(r\del_r w+qw)^2 \lambda^{2a}r^{2q-n}\,dx
  \\
  & = \|\del_r w\|_{L^2_{q+1,-a}}^2 - q^2\|w\|_{L^2_{q,-a}}^2 - q R^{2q} \int_{\Lambda} w(R,\xh)^2 \lambda^{2a}\,d\xh .
  \endaligned
  \ee
  This is the first estimate in~\eqref{equa-Poin-inner-traces}.  Since the trace along $\Lambda_R$ has a favorable sign, we also obtain~\eqref{equa-Poin} with $\delta=0$.
  Applying the estimate, with the same exponent~$q$, to the components $u=\del_i w$ produces the second estimate in~\eqref{equa-Poin-inner-traces} as well as \eqref{equa-Poin-Hess} with $\delta=0$.

  \medskip

  \textbf{Angular estimate.}
  Assume $a>1$.  \autoref{prop-density} makes $\DcalR(\Omega_R)$ dense in $H^1_{q,-a}(\Omega_R)$, so it is sufficient to prove~\eqref{equa-Poin} for $w\in\DcalR(\Omega_R)$.
  Since the radial estimate controls an~$L^2$ norm whose angular weight is equivalent to the desired one away from the collar, we can focus on bounding the norm on the collar, only, and work in the collar coordinates $(r,t,y)$ introduced above, in which $t=\lambda$.
  Fix a cutoff $\kappa:[0,+\infty)\to[0,1]$ equal to $1$ on $[0,t_0/2]$ and $0$ on $[t_0,+\infty)$.
  For fixed $r,y$ and setting $f(t)=w(r,t,x)$, the one-dimensional Hardy inequality applied to $\kappa f$, which equals $f$ for $0<t<t_0/2$ gives (uniformly in $r,y$)
  \be
  \int_0^{t_0/2} t^{2a-2} f(t)^2\,dt
  \lesssim \int_0^{t_0} t^{2a} (\del_t f(t))^2\,dt + \int_{t_0/2}^{t_0} t^{2a} f(t)^2\,dt .
  \ee
  After integration in $r,y$ with measure $r^{2q-1}dr\,d\sigma$, this completes as needed the radial estimate to get the $\delta=1$ improvement for $a>1$.
  For the second inequality~\eqref{equa-Poin-Hess}, simply apply~\eqref{equa-Poin} componentwise to $\nabla w$.
  \ese
\end{proof}


\subsection{Weighted Korn--Hardy inequality}
\label{appendix=D.3}

\paragraph{Key inequality for the momentum constraint.}

We turn to a geometric inequality for vector fields that is used in our study of the momentum constraint.
The results of \autoref{appendix=D.1} imply that vector fields $Z\in H^1_{q,-a}(\Omega_R,\RR^n)$ with $a>1$ admit traces at $r=R$ that need not vanish.
The proof relies on several lemmas:
\autoref{lem-uniform-conical-Korn} provides a similar inequality on regions $\{\lambda>t\}$ uniformly as $t\to 0$, without angular weight;
\autoref{prop-unshifted-Korn} gives the inequality with no angular improvement and for smooth compactly-supported~$Z$, only;
the proof of \autoref{lem:KornHardyD}, given at the end of this section, extends the result by density and obtains an angular improvement thanks to the Poincaré--Hardy inequality in \autoref{lem-Hardy}.

\begin{proposition}[Weighted Korn--Hardy inequality]
  \label{lem:KornHardyD}
  Let $q>0$ and $a\geq 1/2$.  Let $\delta=\Oneone_{a>1}$.
  For every $Z\in H^1_{q,-a}(\Omega_R,\RR^n)$, the radial trace along~$\Lambda_R$ belongs to $L^2(\Lambda,\lambda^{2a}d\xh)$ and one has
  \bel{equa-Korn}
  \|Z\|_{L^2_{q,-a+\delta}(\Omega_R)}
  + \|\nabla Z\|_{L^2_{q+1,-a}(\Omega_R)} 
  + R^q \|Z(R,\cdot)\|_{L^2(\Lambda,\lambda^{2a}d\xh)}
  \lesssim \|\Dcal Z\|_{L^2_{q+1,-a}(\Omega_R)} .
  \ee
  The implied constant depends only on $n,q,a,\Lambda,\lambda$, and is independent of~$R$.
\end{proposition}

\begin{remark}\label{rem:smallaKorn}
  For $a\in[0,1/2)$, the same inequality holds with $\delta=0$ when restricted to vector fields~$Z$ whose trace along the angular boundary~$\Gamma_R$ vanishes.
\end{remark}

\paragraph{Inequality without angular weights.}

We introduce the domains (in~$\Lambda$ and $\Omega_R$, respectively)
\be
\Lambda_{>t} = \{ \xh\in\Lambda \mid \lambda(\xh)>t \}, \qquad
\Omega_R(\Lambda_{>t}) = \{ r\xh\mid r>R,\ \lambda(\xh) > t \} .
\ee
After decreasing the fixed~$t_0$ if necessary, these domains are connected for $0\leq t\leq 2t_0$.
Note that $\Lambda_{>0}=\Lambda$.

\begin{lemma}[Uniform Korn inequality on parallel cones]
  \label{lem-uniform-conical-Korn}
  Let $q>0$ and let $\Lambda_{>t}$ for some $0\leq t\leq t_0$.
  For every vector field $U$ that is smooth up to $\partial\Omega_R(\Lambda_{>t})$ and has bounded radial support, one has
  \bel{equa-slice-Korn}
  \int_{\Omega_R(\Lambda_{>t})} \Bigl( |U|^2 r^{2q-n} + |\nabla U|^2 r^{2q+2-n} \Bigr) dx
  \lesssim \int_{\Omega_R(\Lambda_{>t})} |\Dcal U|^2 r^{2q+2-n}\,dx .
  \ee
  The constant is uniform in $t\in[0,t_0]$ and $R$ and depends only on $n,q,\Lambda,\lambda$. No boundary condition is imposed on the radial boundary $r=R$ or angular boundary.
\end{lemma}

\begin{proof}
  \bse
  Let $\rho_j=2^jR$ for $j\geq 0$ and set
  \be
  A_j(\Lambda_{>t}) = \bigl\{ \rho_j < r < \rho_{j+2},\ \xh\in\Lambda_{>t} \bigr\}.
  \ee
  These annuli cover the cone and consecutive ones have finite overlap $\{\rho_j<r<\rho_{j+1},\ \xh\in\Lambda_{>t}\}$.
  By the local Korn inequality modulo Euclidean isometries (translations, rotations), one has
  \bel{equa-local-Korn}
  \min_{K_j\text{ Killing}} \int_{A_j(\Lambda_{>t})} \Bigl( |\nabla(U-K_j)|^2 + \rho_j^{-2}|U-K_j|^2 \Bigr)\,dx
  \lesssim \int_{A_j(\Lambda_{>t})}|\Dcal U|^2\,dx ,
  \ee
  where the minimum is taken over Killing vectors
  \be
  K_j(x) = b_j + B_j x, \qquad B_j^T = -B_j .
  \ee
  The constant is uniform in $j$ by scaling.  Let us show that it is also uniform in $t\in[0,t_0]$.
  Assume by contradiction that uniformity fails along a sequence $(t_k,U_k)_{k\geq 1}$ with $t_k\in[0,t_0)$.  After extraction, $t_k\to t_\infty$.
  Select a family of diffeomorphisms $[t_\infty,2t_0]\to[t_k,2t_0]$ with unit derivatives and vanishing second-derivatives at the end-points, and first and second derivatives converging uniformly to zero as $k\to+\infty$.  Pull-back $U_k$ to vector fields~$V_k$ on $A_j(\Lambda_{>t_\infty})$.
  After extraction, $V_k$ converges weakly in~$H^1$ and strongly in~$L^2$ to $V_\infty$.
  In particular, the effects of the pull-back on first-order derivatives tend to zero as $k\to+\infty$.
  By lower semi-continuity, $\Dcal V_\infty$ vanishes hence $V_\infty$ is a Killing vector $V_\infty = K_{j\infty}$.
  Then, strong $L^2$ convergence $V_k-K_{j\infty}\to 0$ and $\Dcal V_k-\Dcal K_{j\infty}\to 0$ together with the second Korn inequality implies that $V_k\to K_{j\infty}$ strongly in $H^1(A_j(\Lambda_{>t_\infty}))$.  Upon pushing forward to $A_j(\Lambda_{>t_k})$, the field~$V_k$ is mapped to~$U_k$ while $K_{j\infty}$ is mapped to a vector field~$W_k$.  The sequence $(W_k)$ converges in~$H^1$ to the Killing vector field itself since the diffeomorphisms converge to the identity.
  Altogether we have convergence of $V_k$ to a fixed Killing vector field, which contradicts the assumed normalization of the left-hand side of~\eqref{equa-local-Korn}.
  \ese

  \bse
  We now collect some inequality that will combine into the desired result.
  Firstly, on the overlap $A_j(\Lambda_{>t}) \cap A_{j+1}(\Lambda_{>t})$ of two successive annuli, two estimates~\eqref{equa-local-Korn} are available.
  Comparing the two and accounting for the norm equivalence on the space of rigid motions and \eqref{equa-local-Korn} gives, after rescaling by a power of $r\simeq\rho_j$,
  \bel{equa-rigid-difference}
  \rho_j^{2q}|b_{j+1}-b_j|^2 + \rho_j^{2q+2}|B_{j+1}-B_j|^2
  \lesssim \int_{A_j(\Lambda_{>t})\cup A_{j+1}(\Lambda_{>t})} |\Dcal U|^2 r^{2q+2-n}\,dx .
  \ee
  The bounded radial support of~$U$ and the definition of~$K_j$ as a minimizer implies that $K_j=0$ for all sufficiently large~$j$, namely $b_j=0$ and $B_j=0$.
  Secondly, one has the discrete Hardy inequalities for $q>0$
  \bel{equa-discrete-Hardy}
  \aligned
  \sum_{j\geq 0} \rho_j^{2q} |b_j|^2 & \lesssim_q \sum_{j\geq 0} \rho_j^{2q} |b_{j+1}-b_j|^2, \\
  \sum_{j\geq 0} \rho_j^{2q+2} |B_j|^2 & \lesssim_q \sum_{j\geq 0} \rho_j^{2q+2} |B_{j+1}-B_j|^2 ,
  \endaligned
  \ee
  proven by writing $b_j=-\sum_{k\geq j}(b_{k+1}-b_k)$ and applying the Cauchy--Schwarz inequality with a weight~$\rho_j^q$, and likewise for~$B_j$.
  Thirdly, finite-dimensional norm equivalence on a scaled annular sector, and~\eqref{equa-local-Korn} yield respectively
  \be
  \aligned
  \int_{A_j(\Lambda_{>t})} \Bigl( |K_j|^2 r^{2q-n} + |\nabla K_j|^2 r^{2q+2-n} \Bigr)\,dx
  & \lesssim \rho_j^{2q} |b_j|^2 + \rho_j^{2q+2} |B_j|^2 ,
  \\
  \int_{A_j(\Lambda_{>t})} \Bigl( |U-K_j|^2r^{2q-n} + |\nabla(U-K_j)|^2r^{2q+2-n} \Bigr)\,dx
  & \lesssim \int_{A_j(\Lambda_{>t})}|\Dcal U|^2 r_j^{2q+2-n}\,dx.
  \endaligned
  \ee
  Summing these estimates, applying~\eqref{equa-discrete-Hardy} and~\eqref{equa-rigid-difference}, and using finite overlap establishes~\eqref{equa-slice-Korn}.  The constant is independent of~$R$ by a scaling argument.
  \ese
\end{proof}

\paragraph{Introducing the angular weight.}

By applying \autoref{lem-uniform-conical-Korn}, we obtain the following result with an angular weight, which we state for smooth vector fields with compact support.  We recall from~\eqref{DcalR-def} that $\DcalR(\Omega_R,\RR^n)$ consists of smooth functions (valued in~$\RR^n$) with bounded support and that vanish in a neighborhood of the angular boundary~$\Gamma_R$.

\begin{lemma}[Weighted Korn inequality with compact support]
  \label{prop-unshifted-Korn}
  Let $q>0$ and $a\geq 0$.  For every $Z\in\DcalR(\Omega_R,\RR^n)$, one has
  \bel{equa-unshifted-Korn}
  \|Z\|_{L^2_{q,-a}(\Omega_R)}
  + \|\nabla Z\|_{L^2_{q+1,-a}(\Omega_R)}
  + R^q\|Z(R,\cdot)\|_{L^2(\Lambda,\lambda^{2a}d\xh)}
  \lesssim \|\Dcal Z\|_{L^2_{q+1,-a}(\Omega_R)} .
  \ee
  The implied constant depends only on $n,q,a,\Lambda,\lambda$, and is independent of~$R$.  In particular, the trace at $r=R$ is controlled, not required to vanish and the test fields may be nonzero there.
\end{lemma}

\begin{proof}
  \bse
  When $a=0$, the first two terms in \eqref{equa-unshifted-Korn} follow directly from \autoref{lem-uniform-conical-Korn} with $t=0$.

  Assume now that $a>0$.  To transfer the result of \autoref{lem-uniform-conical-Korn}, we rely on the layer-cake identity over the superlevel domains $\Lambda_{>t} = \{\lambda>t\}$.  Pointwise in $\xh$, we have
  \bel{equa-cake-pointwise}
  \lambda(\xh)^{2a}
  = 2a \int_0^{t_0} t^{2a-1} \Oneone_{\Lambda_{>t}}(\xh)\,dt
  + \Oneone_{\Lambda_{>t_0}}(\xh) \bigl(\lambda(\xh)^{2a}-t_0^{2a}\bigr).
  \ee
  Both terms on the right-hand side are nonnegative.  Multiplying~\eqref{equa-cake-pointwise} by a nonnegative function $F(x)$, integrating over $\Omega_R$ with respect to~$dx$,
  \bel{equa-layer-cake}
  \int_{\Omega_R}F\lambda^{2a}\,dx
  = 2a \int_0^{t_0} t^{2a-1} \int_{\Omega_R(\Lambda_{>t})} F\,dx\,dt
  + \int_{\Omega_R(\Lambda_{>t_0})} F \bigl(\lambda^{2a}-t_0^{2a}\bigr)\,dx .
  \ee

  Put now 
  \be
  F = |Z|^2 r^{2q-n} + |\nabla Z|^2 r^{2q+2-n},
  \qquad
  G = |\Dcal Z|^2 r^{2q+2-n}.
  \ee
  \autoref{lem-uniform-conical-Korn}, whose constant is uniform in $t$, gives
  \bel{cake-FleqG}
  \int_{\Omega_R(\Lambda_{>t})}F\,dx
  \lesssim \int_{\Omega_R(\Lambda_{>t})}G\,dx,
  \qquad 0\leq t\leq t_0.
  \ee
  Bounding $\lambda^{2a}-t_0^{2a}\lesssim 1$ in~\eqref{equa-layer-cake} and applying~\eqref{cake-FleqG} gives
  \be
  \aligned
  \int_{\Omega_R}F\lambda^{2a}\,dx
  & \lesssim
  2a \int_0^{t_0} t^{2a-1} \int_{\Omega_R(\Lambda_{>t})} G\,dx\,dt
  + \int_{\Omega_R(\Lambda_{>t_0})} G\,dx
  \\
  & \leq (1 + t_0^{-2a}) \int_{\Omega_R}G\lambda^{2a}\,dx
  \endaligned
  \ee
  where the second inequality comes from~\eqref{equa-layer-cake} applied to~$G$.
  This proves the first two terms in~\eqref{equa-unshifted-Korn} for $a>0$.

  In either case (that is, for $a\geq 0$), applying \eqref{equa-Poin-inner-traces} to the components of~$Z$ and then using the gradient estimate just proven yields
  \bel{equa-Korn-inner-trace}
  q R^{2q} \|Z(R,\cdot)\|_{L^2(\Lambda,\lambda^{2a}d\xh)}^2
  \leq \|\partial_r Z\|_{L^2_{q+1,-a}}^2
  \lesssim \|\Dcal Z\|_{L^2_{q+1,-a}}^2 .
  \qedhere
  \ee
  \ese
\end{proof}

\paragraph{Proof of \autoref{lem:KornHardyD}.}

\bse
We first complete \autoref{prop-unshifted-Korn} by density.
Since we assume $a\geq 1/2$, \autoref{prop-density} yields a sequence $Z_m\in\DcalR(\Omega_R,\RR^n)$ converging to $Z$ in~$H^1_{q,-a}$:
\be
Z_m\longrightarrow Z \qquad \text{in } H^1_{q,-a}(\Omega_R,\RR^n).
\ee
As stated in \autoref{rem:smallaKorn}, for $a\in[0,1/2)$ the trace along the angular boundary~$\Gamma_R$ obstructs density.
The first estimate in \eqref{equa-Poin-inner-traces}, applied componentwise to $Z_m-Z_k$, gives
\be
R^q \|Z_m(R,\cdot)-Z_k(R,\cdot)\|_{L^2(\Lambda,\lambda^{2a}d\xh)}
\lesssim \|Z_m-Z_k\|_{H^1_{q,-a}(\Omega_R)} .
\ee
Thus the inner traces form a Cauchy sequence in $L^2(\Lambda,\lambda^{2a}d\xh)$.
Its limit is the radial Sobolev trace~$Z(R,\cdot)$.
Applying \autoref{prop-unshifted-Korn} to $Z_m$ and passing to the limit proves the asserted estimate with $\delta=0$.

Then, for $a>1$, the Poincar\'e--Hardy estimate in \autoref{lem-Hardy}, applied to the Cartesian components of $Z$, gives as wanted
\be
\|Z\|_{L^2_{q,-a+1}(\Omega_R)}
\lesssim \|\nabla Z\|_{L^2_{q+1,-a}(\Omega_R)} .
\ee
Combining the two estimates proves \eqref{equa-Korn}.
\ese


\section{Existence of the seed-to-solution projection (\autoref{thm:sts-existence})}
\label{appendix=E}

\subsection{Geometric inequalities}
\label{appendix=E.1}

\paragraph{Overview of the proof.}

In a conical gluing domain with at least one asymptotic end (as defined in \autoref{section=2.3}), we consider the projection of a seed data set $(\seedg,\seedh)$ onto an exact solution $(g,h)$.  We prove convergence of the Newton iteration scheme~\eqref{Newton-it}, which we reproduce here as~\eqref{Newton-again} for convenience, and the resulting sub-harmonic estimates stated in \autoref{thm:sts-existence}.
Our proof broadly follows Carlotto and Schoen~\cite{CarlottoSchoen} as well as Corvino and Schoen~\cite{Corvino-2000,CorvinoSchoen} and~Chrusciel and Delay~\cite{ChruscielDelay-memoir,ChruscielDelay-2021,Delay}. Some differences must be taken care of within our presentation, since the operator of interest includes both a reference data set $(g_0,h_0)$ and a data set $(g_k,h_k)$ obtained after iteration, and we allow for a very low radial decay exponent $p_G>0$.

Starting from $(g_1,h_1)=(\seedg,\seedh)$, we solve an elliptic problem for $u_k,Z_k$ to obtain the next data set $(g_{k+1},h_{k+1})$, for $k\geq 1$,
\bel{Newton-again}
\aligned
\Jcal_{(g_k,h_k;g_0,h_0)}[u_k,Z_k] = - \Gcal(g_k,h_k) ,
\qquad
g_{k+1} & = g_k + \omegabf_p^2 \, d\Hcal_{(g_0,h_0)}^{*\flat\flat}[u_k,Z_k] ,
\\
h_{k+1} & = h_k + \omegabf_{p+1}^2  \, d\Mcal_{(g_0,h_0)}^{*\sharp\sharp}[u_k,Z_k] .
\endaligned
\ee
The differential operator~$\Jcal$ is the squared constraint operator
\be
\Jcal_{(g_k,h_k;g_0,h_0)}[u,Z] \coloneqq d\Gcal_{(g_k,h_k)}\Bigl[ \omegabf_p^2 \, d\Hcal_{(g_0,h_0)}^{*\flat\flat}[u,Z], \omegabf_{p+1}^2  \, d\Mcal_{(g_0,h_0)}^{*\sharp\sharp}[u,Z] \Bigr] .
\ee
Importantly, instead of a Picard iteration scheme which uses the same linearized operator at every step of the iteration, we rely on linearizations $d\Gcal_{(g_k,h_k)}$ around successive data sets $(g_k,h_k)$.  This avoids the need for Lipschitz continuity of nonlinearities: we use instead a control of nonlinearities around each data set in the iteration.

The proof proceeds in four steps in successive subsections.
Firstly, weighted Poincaré, Korn, and Hardy inequalities on~$\Omega$ hold because the decay at infinity does not allow $d\Gcal^*_{(g_0,h_0)}$ to have a kernel.
Secondly, this ensures coercivity of quadratic functionals associated to $\Jcal_{(g_0,h_0;g_0,h_0)}$ and~$\Jcal_{(g_k,h_k;g_0,h_0)}$, so that the Lax--Milgram theorem shows invertibility of the linear operator $\Jcal_{(g_k,h_k;g_0,h_0)}$, with Sobolev bounds on $(u_k,Z_k)$.
Thirdly, Douglis--Nirenberg ellipticity enhances these integral bounds to pointwise bounds on $(u_k,Z_k)$.
Lastly, nonlinearities are controlled as needed in Sobolev and H\"older norms to conclude convergence.


\paragraph{Geometric inequalities.}

In the special case $(g_k,h_k)=(g_0,h_0)$ the operator $\Jcal_{(g_0,h_0;g_0,h_0)}$ is self-adjoint and associated to a non-negative quadratic form
\be
\Jrm_{(g_0,h_0)}[u,Z]
\coloneqq \int_{\Mbf} \Bigl(
\omegabf_p^2 \, \bigl|d\Hcal_{(g_0,h_0)}^*[u,Z]\bigr|_{g_0}^2
+ \omegabf_{p+1}^2 \bigl|d\Mcal_{(g_0,h_0)}^*[u,Z]\bigr|_{g_0}^2
\Bigr) \dVol_{g_0} .
\ee
As we state in \autoref{lem:squaredquadraticform} below, it controls weighted $H^2\times H^1$ norms of $(u,Z)$, multiplied by a \emph{Poincaré--Korn--Hardy constant} $\CPKHzero>0$ introduced in \autoref{def:PoincareKornHardyConst}.
The proof relies on three geometric inequalities stated now, which generalize the ones in Euclidean space in \autoref{appendix=D}. The two Poincaré--Hardy inequalities state that derivatives of a scalar field~$w$ control lower derivatives with improved weights. The Korn--Hardy inequality states that symmetrized derivatives
$\Lie_{Z^\sharp} g_0 = 2\Sym(\nabla_{g_0} Z)$ control all derivatives, and $Z$~itself with improved weights.
Importantly, the equations $\nabla_{g_0}w=0$ or $\Hess_{g_0}(w)=0$ or $\Lie_{Z^\sharp} g_0=0$ have no solutions with $L^2_{q,-\expoP}$ decay at infinity in the asymptotic end(s).

\begin{lemma}[Poincaré--Hardy and Korn--Hardy inequalities]\label{lem:PoincareKornHardyD}
Assume that every connected component of~$\Omega$ contains at least one asymptotic end.
Given a localization manifold $(\Mbf,\Omega,\allowbreak g_0,\allowbreak\rbf,\lambdabf)$ and any exponents $q>0$ and $\expoP>1$, one has
\bel{equa-Poin-Korn}
\aligned
\| w \|_{L^2_{q,-\expoP+1}(\Omega, g_0)} 
& \lesssim \| \nabla_{g_0} w \|_{L^2_{q+1,-\expoP}(\Omega, g_0)} ,
\\
\| \nabla_{g_0} w \|_{L^2_{q,-\expoP+1}(\Omega, g_0)}
& \lesssim \bigl\| \Hess_{g_0}(w) \bigr\|_{L^2_{q+1,-\expoP}(\Omega, g_0)},
\\
\|Z\|_{L^2_{q,-\expoP+1}(\Omega, g_0)} + \|\nabla_{g_0} Z\|_{L^2_{q+1,-\expoP}(\Omega, g_0)}
& \lesssim \| \Lie_{Z^\sharp} g_0 \|_{L^2_{q+1,-\expoP}(\Omega, g_0)}
\endaligned
\ee
Here the first inequality holds for $w\in H^1_{q,-\expoP}(\Omega,g_0)$, the second for every scalar distribution satisfying $\nabla_{g_0}w\in H^1_{q,-\expoP}(\Omega,g_0)$, the third for vector fields $Z\in H^1_{q,-\expoP}(\Omega;T^*\Omega)$,
and no boundary trace is prescribed on~$\partial\Omega$.
The implied constants depend upon the exponents $q,\expoP$ and the localization data set. More generally, the same conclusion holds on a Lipschitz subdomain having the same end structure, with a constant that may additionally depend on its Lipschitz geometry.
\end{lemma}

\begin{proof}
These inequalities are proven in \refwithname{Propositions}{lem-Hardy} \refwithname{and}{lem:KornHardyD} in the special case 
$(\Omega',g_0)=(\Omega_R,\delta)$, namely a truncated cone $\Omega_R\subseteq\RR^n$ equipped with the Euclidean metric~$\delta$, without imposing any boundary condition at the radial boundary $r=R$.
In that Euclidean case, the optimal constants in~\eqref{equa-Poin-Korn} depend on the angular domain and weight $(\Lambda,\lambda)$ and the exponents $q,\expoP$ but not on~$R$.

We now explain how the Euclidean inequalities imply that the inequalities apply to each asymptotic end $(\Omega_\iota,g_0)$ \emph{outside a large enough ball}.
For brevity we omit the label~$\iota$ when possible.
We recall that $g_0-\delta$ and its derivatives decay pointwise as per~\eqref{equa-slightlybetter}
and $\wtrr=r$ and $\lambdabf=\lambda$ under the diffeomorphism from $\Omega_\iota$ to~$\Omega_R$.
The weighted $H^k$ norms involving~$g_0$ and $\delta$ are thus equivalent (for $k\leq N+3$).  However, $\dot{H}^k$ seminorms with only $k$-th derivatives are not obviously equivalent.  For the first Poincaré inequality, we note that $|\nabla_{g_0}w|_{g_0}^2=|dw|_{g_0}^2\simeq |dw|_{\delta}^2$ pointwise, so that the inequality on $(\Omega_\iota,g_0)$ is equivalent to that on $(\Omega_R,\delta)$.
For the second Poincaré--Hardy inequality in~\eqref{equa-Poin-Korn}, we denote by $C_{\Poin\Hardy 2}^{\delta}$ the optimal constant on $(\Omega_R,\delta)$.
By scale invariance the Euclidean constant $C_{\Poin\Hardy 2}^{\delta}$ does not depend on the truncation radius $R>0$.
The $o(1)$ decay of $g_0-\delta$ and $o(r^{-1})$ decay of the Christoffel symbols~$\Gamma_{g_0}$ ensures that for large enough~$R$, we have $|g_0-\delta|_\delta\leq 1/2$ and $|\Gamma_{g_0}|_\delta\leq 1/(2C_{\Poin\Hardy 2}^\delta r)$.
Hence,
\be
\aligned
4 \bigl|\Hess_{g_0}(w)\bigr|_{g_0}^2
& \geq 2 \sum_{i,j} \bigl(\del_i \del_j w - \Gamma_{g_0}{}^k{}_{ij} \del_k w\bigr)^2
\\
&
\geq |\del^2 w|_\delta^2 - 2 |\Gamma_{g_0}|_\delta^2 |\del w|_\delta^2
\geq |\del^2 w|_\delta^2 - \frac{1}{2(C_{\Poin\Hardy 2}^\delta)^2r^2} |\del w|_\delta^2 .
\endaligned
\ee
Upon integrating this last expression against $\lambda^{2\expoP}r^{2q+2-n}d^nx$, the first term controls the second term together with the desired weighted norm thanks to the Euclidean Poincaré--Hardy inequality.
Finally, for the Korn--Hardy inequality the situation is the same, with the differences $\Lie_{Z^\sharp}(g_0-\delta)$ being lower-order terms that are small for large enough~$R$.

We now return to the whole domain~$\Omega$ (or more generally a Lipschitz subdomain~$\Omega'$ of the type specified in the statement).  As we just saw, the left-hand sides of~\eqref{equa-Poin-Korn}, restricted to the outside of a large enough ball~$\Ball_R$, are controlled by the right-hand sides.  We then consider a smooth cutoff function~$\kappa$ identically equal to~$1$ in~$\Ball_R$ and supported in~$\Ball_{2R}$.  The relevant weighted $L^2$ norms of $\nabla_{g_0}(\kappa w)$, $\Hess_{g_0}(\kappa w)$, and $\Lie_{(\kappa Z)^\sharp} g_0$ are controlled by the same $L^2$ norms without~$\kappa$ because derivatives of~$\kappa$ are supported on $\Omega\setminus \Ball_R$ where lower derivatives of $w,Z$ are controlled by the already-established inequalities.
Standard weighted Poincaré--Hardy and Korn--Hardy inequalities on $\Omega\cap \Ball_{2R}$ for functions/vector fields vanishing on part of the boundary (namely the spherical caps~$\Lambda_{\iota,2R}$) then control lower-derivatives of $\kappa w,\kappa Z$ in $\Omega\cap \Ball_{2R}$, hence of $w,Z$ on $\Omega\cap \Ball_R$, which concludes.
\end{proof}


\subsection{Integral bounds for the linearized operator}
\label{appendix=E.2}

\paragraph{Coercivity of the quadratic functional for a single data set.}

We now arrive at the coercivity property that shows that $\Jrm_{(g_0,h_0)}$ is equivalent to the weighted $H^2\times H^1$ norms, namely justifies the existence of a Poincaré--Korn--Hardy constant $\CPKHzero>0$ in \autoref{def:PoincareKornHardyConst}.
The intuition is that $\Jrm_{(g_0,h_0)}[u,Z]$ vanishes only when $d\Gcal_{(g_0,h_0)}^*[u,Z]=0$, whose only solution decaying at infinity is zero.
Incidentally, if $(g_0,h_0)$ are exact solutions of the constraints, this equation defines Killing initial data sets (KIDs)~\cite{Moncrief-1975}.

\begin{lemma}[Coercivity of the squared constraints]\label{lem:squaredquadraticform}
There exists $\CPKHzero>0$ such that for $u\in H^2_{n-2-p,-\expoP}(\Omega,g_0)$ and $Z\in H^1_{n-2-p, -\expoP}(\Omega,g_0,\RR^n)$ one has
\[
\|u\|_{H^2_{n-2-p,-\expoP}(\Omega,g_0)}^2
+ \|Z\|_{H^1_{n-2-p, -\expoP}(\Omega,g_0,\RR^n)}^2
\leq (\CPKHzero)^2 \Jrm_{(g_0,h_0)}[u,Z] .
\]
Conversely, $\Jrm_{(g_0,h_0)}[u,Z]$ is controlled by the norms on the left-hand side, with a constant depending on the $C^0_2(\Omega,g_0)$ norm of $\Ric_{g_0}$ and $C^1_1(\Omega,g_0)$ norm of~$h_0$.
\end{lemma}

\begin{proof}
We begin with the straightforward control of $\Jrm_{(g_0,h_0)}$ by $H^2\times H^1$ norms.
From the expressions of the adjoint constraints in \autoref{lem:lin-constr}, we have the bounds (here, $\nabla=\nabla_{g_0}$ and $|T|=|T|_{g_0}$ for any tensor~$T$)
\bel{dHg0h0-bound}
\aligned
|d\Hcal_{(g_0,h_0)}^{*\flat\flat}[u,Z]|
& \lesssim |\nabla^2 u| + \bigl( |\Ric_{g_0}| + |h_0|^2 \bigr) |u| + |h_0| |\nabla Z| + |\nabla h_0| |Z| ,
\\
|d\Mcal_{(g_0,h_0)}^{*\sharp\sharp}[u,Z]|
& \lesssim |\nabla Z| + |h_0| |u| ,
\endaligned
\ee
with implicit constants depending on the dimension, only.  We then conclude by accounting for the weights $\omegabf_p^2$ and~$\omegabf_{p+1}^2$ and for the pointwise $\rbf^{-1}$ decay of~$h_0$ and $\rbf^{-2}$~decay of $\Ric_{g_0}$ and $\nabla h_0$.

The strategy to prove coercivity is essentially the same as for \autoref{lem:PoincareKornHardyD}.
In each asymptotic end the functional $\Jrm$ associated to the Euclidean data set $(\delta,0)$ is coercive.  Indeed,
\be
\aligned
\quad & \unquad \int_{\Omega_R} \Bigl( \omega_p^2 \, \bigl|d\Hcal_{(\delta,0)}^*[u,Z]\bigr|_{\delta}^2
+ \omega_{p+1}^2 \bigl|d\Mcal_{(\delta,0)}^*[u,Z]\bigr|_{\delta}^2 \Bigr) d^nx
\\
& = \|\del^2 u\|_{L^2_{n-p,-\expoP}(\Omega_R,\delta)}^2 + (n-2) \|\Delta u\|_{L^2_{n-p,-\expoP}(\Omega_R,\delta)}^2
+ \|\Sym(\del Z)\|_{L^2_{n-1-p,-\expoP}(\Omega_R,\delta)}^2 ,
\endaligned
\ee
which controls the norms $\|u\|_{H^2_{n-2-p,-\expoP}(\Omega_R,\delta)}^2 + \|Z\|_{H^1_{n-2-p,-\expoP}(\Omega_R,\delta)}^2$, thanks to the (Euclidean) Poincaré--Hardy and Korn--Hardy inequalities proved in \refwithname{Propositions}{lem-Hardy} \refwithname{and}{lem:KornHardyD}.
Outside a large enough ball~$\Ball_R$, the difference $d\Gcal_{(g_0,h_0)}^*[u,Z] - d\Gcal_{(\delta,0)}^*[u,Z]$ is small compared to $d\Gcal_{(\delta,0)}^*[u,Z]$ in the relevant weighted $L^2$~norms, thus the restriction of $\Jrm_{(g_0,h_0)}[u,Z]$ to~$\Omega\setminus \Ball_R$ controls the weighted $H^2\times H^1$ norm of $(u,Z)$ on that domain.

Assume by contradiction that coercivity fails. There then exists a sequence $(u_s,Z_s)_{s=1,2,\ldots}$ such that
\bel{usZs-normalization}
\|u_s\|_{H^2_{n-2-p,-\expoP}(\Omega,g_0)}^2
+ \|Z_s\|_{H^1_{n-2-p, -\expoP}(\Omega,g_0,\RR^n)}^2
= 1 , \qquad
\lim_{s\to+\infty} \Jrm_{(g_0,h_0)}[u_s,Z_s]  = 0 .
\ee
Since $\Jrm_{(g_0,h_0)}[u_s,Z_s]$ controls its restriction to $\Omega\setminus\Ball_R$, the exterior estimate just proven shows that the weighted $H^2\times H^1$ norm of $(u_s,Z_s)$ on that domain tends to zero as $s\to+\infty$.
Fix a smooth radial cutoff function $\kappa$ supported in $\Ball_{2R}\subset\Mbf$ and equal to~$1$ on $\Ball_R$.
Replacing $(u_s,Z_s)$ by $(\kappa u_s,\kappa Z_s)$ shifts the $H^2\times H^1$ norm and the functional $\Jrm_{(g_0,h_0)}$ by terms that tend to zero as $s\to+\infty$, hence we can assume (after normalization) that \eqref{usZs-normalization} holds, with all $u_s,Z_s$ supported in~$\Ball_{2R}$.

By compact embedding of $H^2\Subset H^1\Subset L^2$ in the subdomains $\Delta_{R,\varepsilon}\subset\Omega$ defined by $\lambdabf>\varepsilon$ and $R<\rbf<2R$, and by a diagonal argument, one obtains a subsequence $(u_s,Z_s)$ that converges to a limit $(u_\infty,Z_\infty)$ weakly in $H^2\times H^1$ and strongly in $H^1\times L^2$ restricted to~$\Delta_{R,\varepsilon}$ for any $\varepsilon>0$.  On the other hand, in a collar where $\lambdabf<2\varepsilon$, the angular improvement in~\eqref{equa-Poin-Korn} gives
\be
\|u_s,Z_s\|_{H^1_{n-2-p,-\expoP}\times L^2_{n-2-p,-\expoP}(\{\lambdabf<2\varepsilon\})}^2
\lesssim
\varepsilon^2
\|u_s,Z_s\|_{H^2_{n-2-p,-\expoP}\times H^1_{n-2-p,-\expoP}(\Omega)}^2
= \varepsilon^2 ,
\ee
uniformly in~$s$, hence also for $s=+\infty$.  Thus, $(u_s,Z_s)\to (u_\infty,Z_\infty)$ strongly in weighted $H^1\times L^2$ norms on~$\Omega\cap\Ball_{2R}\setminus\Ball_R$.
By \eqref{usZs-normalization}, one has $\Jrm_{(g_0,h_0)}[u_\infty,Z_\infty]=0$ hence
\be
d\Gcal_{(g_0,h_0)}^*[u_\infty,Z_\infty] = 0 .
\ee

Solutions $(u_\infty,Z_\infty)$ of this (overdetermined) system supported in the bounded region~$\Ball_{2R}$ vanish, as we now show for completeness.
Introduce $A_{\infty\,ij}\coloneqq\nabla_i Z_{\infty\,j}-\nabla_j Z_{\infty\,i}$.
The variables $\mathcal{Y}\coloneqq(u_\infty,\nabla u_\infty,Z_\infty,A_\infty)$ are such that $\nabla\mathcal{Y}$ can be expressed linearly in terms of~$\mathcal{Y}$ using the constraints: indeed, $\nabla Z_\infty = A_\infty/2 + \bigl(\frac{2}{n-1}\Tr_{g_0}(h_0)g_0 - 2 h_0^{\flat\flat}\bigr)u_\infty$ is given by the adjoint momentum constraint, $\nabla^2 u_\infty$ by the adjoint Hamiltonian constraint, and lastly
\be
\nabla_k A_{\infty\,ij}
= 2 \nabla_i \nabla_{(k} Z_{\infty\,j)}
- 2 \nabla_j \nabla_{(k} Z_{\infty\,i)}
+ R^l{}_{jki} Z_{\infty\,l}
- R^l{}_{ikj} Z_{\infty\,l}
- R^l{}_{kij} Z_{\infty\,k} ,
\ee
where the symmetrized derivatives of~$Z_\infty$ are to be replaced by their expression given by the momentum constraint.  It follows that $\mathcal{Y}$ satisfies a closed first-order linear system along every curve.  Since all connected components of~$\Omega$ include asymptotic ends in which $(u_\infty,Z_\infty)$, and thus $\mathcal{Y}$, vanish, the solution~$\mathcal{Y}$ vanishes identically in~$\Omega$.

It follows that $u_s\to0$ in weighted $H^1$ and $Z_s\to0$ in weighted $L^2$. We now use these properties in $d\Gcal^*_{(g_0,h_0)}[u_s,Z_s]\to 0$.  The adjoint momentum constraint simplifies to $\Lie_{Z_s^\sharp}g_0 \to 0$, from which the Korn--Hardy inequality implies $Z_s\to 0$ in weighted $H^1$~norm.  Then the (trace-flipped) adjoint Hamiltonian constraint simplifies to $\nabla^2 u_s\to 0$, hence $u_s\to 0$ in weighted $H^2$~norm, contradicting~\eqref{usZs-normalization}. This proves coercivity.
\end{proof}


\paragraph{Coercivity for a pair of data sets.}

The elliptic problem can be reformulated as a variational problem based on the bilinear form on
$H^2_{n-2-p,-\expoP}(\Omega,g_0)\times H^1_{n-2-p,-\expoP}(\Omega,g_0)$ defined by
\bel{Jrm-def}
\aligned
& \Jrm_{(g_k,h_k;g_0,h_0)}[u,Z;v,Y]
\coloneqq \int_{\Omega} (v,Y) \cdot \Jcal_{(g_k,h_k;g_0,h_0)}[u,Z] \, \dVol_{g_k}
\\
& \quad = \int_{\Omega} \Bigl(
\omegabf_p^2 \, g_k\bigl(d\Hcal_{(g_0,h_0)}^{*\flat\flat}[u,Z] , d\Hcal_{(g_k,h_k)}^{*\flat\flat}[v,Y]\bigr) \\
& \qquad\qquad + \omegabf_{p+1}^2 g_k\bigl(d\Mcal_{(g_0,h_0)}^{*\sharp\sharp}[u,Z] , d\Mcal_{(g_k,h_k)}^{*\sharp\sharp}[v,Y]\bigr)
\Bigr) \, \dVol_{g_k} .
\endaligned
\ee
The metric $g_k$ must be used throughout because, by definition, $(d\Hcal_{(g_k,h_k)}^{*\flat\flat},d\Mcal_{(g_k,h_k)}^{*\sharp\sharp})$ denotes the adjoint with respect to that metric.
Note that $\Jrm_{(g_0,h_0;g_0,h_0)}[u,Z;u,Z] = \Jrm_{(g_0,h_0)}[u,Z]$ is the quadratic form appearing in \autoref{lem:squaredquadraticform}.

\begin{lemma}[Invertibility of the squared constraints]\label{lem:appE-invertibility}
  There exists a constant $\etainv=\etainv(g_0,h_0)\in(0,1/2]$ depending non-increasingly on~$\CPKHzero$, $|\Ric_{g_0}|_{C^0_2(\Omega,g_0)}$, and $|h_0|_{C^1_1(\Omega,g_0)}$ such that the following properties hold if
  \be
  \|g_k-g_0\|_{C^2_0(\Omega,g_0)}
  + \|h_k-h_0\|_{C^1_1(\Omega,g_0)}
  \leq \etainv .
  \ee

  1. The quadratic form associated to~\eqref{Jrm-def} is continuous and obeys the coercivity inequality
  \[
    \|u\|_{H^2_{n-2-p,-\expoP}(\Omega,g_0)}^2
    + \|Z\|_{H^1_{n-2-p, -\expoP}(\Omega,g_0,\RR^n)}^2
    \leq 2 (\CPKHzero)^2 \Jrm_{(g_k,h_k;g_0,h_0)}[u,Z;u,Z] .
  \]

  2. For any $f\in H^{2*}_{2+p, \expoP}(\Omega,g_0)$, $V\in H^{1*}_{2+p, \expoP}(\Omega,g_0,\RR^n)$, there exists a unique solution $u\in H^2_{n-2-p, -\expoP}(\Omega,g_0)$, $Z\in H^1_{n-2-p, -\expoP}(\Omega,g_0,\RR^n)$ to the equation $\Jcal_{(g_k,h_k;g_0,h_0)}[u,Z] = (f, V)$ understood in the weak sense
  \[
    \Jrm_{(g_k,h_k;g_0,h_0)}[u,Z;v,Y] = \int_{\Omega} (v f + Y\cdot V) \, \dVol_{g_k} , \qquad
    \aligned
    v & \in H^2_{n-2-p, -\expoP}(\Omega,g_0) , \\
    Y & \in H^1_{n-2-p, -\expoP}(\Omega,g_0,\RR^n) .
    \endaligned
  \]
  Furthermore,
  \[
    \aligned
    & \|u\|_{H^2_{n-2-p, -\expoP}(\Omega,g_0)}+\|Z\|_{H^1_{n-2-p, -\expoP}(\Omega,g_0,\RR^n)}
    \\
    & \quad \leq 4 (\CPKHzero)^2 \bigl( \|f\|_{H^{2*}_{2+p, \expoP}(\Omega,g_0)}+\|V\|_{H^{1*}_{2+p, \expoP}(\Omega,g_0,\RR^n)} \bigr) .
    \endaligned
  \]
\end{lemma}

\begin{remark}
  The condition $\etainv\leq 1/2$, hence $|g_k-g_0|\leq 1/2$ everywhere, ensures the uniform equivalence of $g_k$ and~$g_0$.  As a result, norms of $g_k^{-1}$ are controlled by those of~$g_k$.
\end{remark}

\begin{remark}
The integral of $v f \dVol_{g_k}$ (and likewise $Y\cdot V \, \dVol_{g_k}$) should be understood as the duality pairing of~$f$ with the function $v\,\dVol_{g_k}/\dVol_{g_0}$, which is in the same weighted $H^2$~space as~$v$ since $g_k-g_0$ and its derivatives are bounded uniformly near the boundary.
\end{remark}

\begin{proof}
\bse
  By the Lax--Milgram theorem, the first property (continuity and coercivity of~$\Jrm$) implies the existence of $u,Z$ and the Sobolev bounds.
  Both continuity and coercivity hold for $(g_k,h_k)=(g_0,h_0)$ by \autoref{lem:squaredquadraticform}.
  Continuity of $\Jrm_{(g_k,h_k;g_0,h_0)}$ is proven in the same way, by writing the bounds~\eqref{dHg0h0-bound} for $(g_k,h_k)$ instead of $(g_0,h_0)$.
  We turn to coercivity of the quadratic form.
  We denote by $\dVol$ the volume form of~$g_0$, $\nabla=\nabla_{g_0}$, and $|T|=|T|_{g_0}$ for any tensor~$T$, for brevity, and omit the domain~$\Omega$ and metric~$g_0$ from all norms in this proof.
  We bound the difference of quadratic forms by
  \be
  \aligned
  \quad & \unquad \bigl| \Jrm_{(g_k,h_k;g_0,h_0)}[u,Z;u,Z] - \Jrm_{(g_0,h_0)}[u,Z] \bigr|
  \\
  & \lesssim \!
  \int_{\Omega} \Bigl( \aleph_1 \bigl|d\Hcal_{(g_0,h_0)}^{*\flat\flat}[u,Z]\bigr|\, \omegabf_p^2
  + \aleph_2 \bigl|d\Mcal_{(g_0,h_0)}^{*\sharp\sharp}[u,Z]\bigr| \, \omegabf_{p+1}^2 \Bigr) \dVol ,
  \\
  \aleph_1 \, & \! \coloneqq |g_k-g_0| \bigl|d\Hcal_{(g_0,h_0)}^{*\flat\flat}[u,Z]\bigr|
  + \bigl|d\Hcal_{(g_k,h_k)}^{*\flat\flat}[u,Z]-d\Hcal_{(g_0,h_0)}^{*\flat\flat}[u,Z]\bigr| ,
  \\
  \,\aleph_2 \, & \! \coloneqq |g_k-g_0| \bigl|d\Mcal_{(g_0,h_0)}^{*\sharp\sharp}[u,Z]\bigr|
  + \bigl|d\Mcal_{(g_k,h_k)}^{*\sharp\sharp}[u,Z] - d\Mcal_{(g_0,h_0)}^{*\sharp\sharp}[u,Z]\bigr| .
  \endaligned
  \ee
  The expressions of the adjoint constraints in \autoref{lem:lin-constr} then yield the bounds
  \bel{aleph12}
  \aligned
  \aleph_1
  & \lesssim |g_k-g_0| \Bigl( |\nabla^2 u| + |\Ric_{g_0}| |u| + |h_0|^2 |u| + |\nabla h_0| |Z| + |h_0| |\nabla Z| \Bigr) 
  \\
  & \quad + |\nabla g_k| \bigl( |\nabla u| + |h_0| |Z| \bigr) + |\nabla g_k|^2 |u| + |\nabla^2 g_k| |u|
  \\
  & \quad + |h_k-h_0| \bigl( |h_0| |u| + |\nabla Z| \bigr)
  + |h_k-h_0|^2|u|
  + |\nabla(h_k-h_0)| |Z| ,
  \\
  \aleph_2
  & \lesssim |g_k-g_0| \Bigl( |h_0| |u| + |\nabla Z| \Bigr)
  + |\nabla g_k| |Z| + |h_k-h_0| |u| ,
  \endaligned
  \ee
  with implicit constants that depend on the dimension and the $C^0$ norms of~$g_k$ and~$g_k^{-1}$.
  Under the assumption $|g_k-g_0|\leq 1/2$, both of these norms are bounded by functions of~$n$ so that the implicit constants in~\eqref{aleph12} are functions of the dimension, only.
  After rewriting $\nabla g_k=\nabla(g_k-g_0)$, we bound in each term one factor by the $C^2_0$ norm of $g_k-g_0$ or $C^1_1$ norm of $h_k-h_0$.
  Under the mild assumption $\|g_k-g_0\|_{C^2_0}+\|h_k-h_0\|_{C^1_1}\leq 1/2$ (so as to prevent quadratic terms from cluttering subsequent expressions), we obtain
  \bel{Jrmgkhk-closeto-g0h0}
  \aligned
  \quad & \unquad \bigl| \Jrm_{(g_k,h_k;g_0,h_0)}[u,Z;u,Z] - \Jrm_{(g_0,h_0)}[u,Z] \bigr|
  \\
  & \leq C_{\Jrm}(g_0,h_0)
  \Bigl( \|g_k-g_0\|_{C^2_0} + \|h_k-h_0\|_{C^1_1} \Bigr)
  \Bigl( \|u\|_{H^2_{n-2-p, -\expoP}}^2 + \|Z\|_{H^1_{n-2-p, -\expoP}}^2 \Bigr). 
  \endaligned
  \ee
  where the constant~$C_{\Jrm}$ depends on the $C^0_2$ norm of $\Ric_{g_0}$ and $C^1_1$ norm of~$h_0$.
  Under the condition
  \bel{gkbounde13}
  \|g_k-g_0\|_{C^2_0} + \|h_k-h_0\|_{C^1_1}
  \leq \frac{1}{2} \min\bigl(1, (\CPKHzero)^{-2} \bigl(C_{\Jrm}(g_0,h_0)\bigr)^{-1} \bigr) ,
  \ee
  the right-hand side of~\eqref{Jrmgkhk-closeto-g0h0} is bounded by $\frac{1}{2}\Jrm_{(g_0,h_0)}[u,Z]$ hence $\Jrm_{(g_k,h_k;g_0,h_0)}[u,Z;u,Z]$ has the desired coercivity.
  \ese
\end{proof}


\subsection{Pointwise bounds for the linearized operator}
\label{appendix=E.3}

\paragraph{From Sobolev to H\"older bounds.}

We now improve the control of $u,Z$ by using ellipticity of the system in the sense of Douglis and Nirenberg~\cite{DouglisNirenberg}.  As pointed out in \autoref{rem:DN}, the interior estimates in~\cite{DouglisNirenberg} are insufficient and we rely on Morrey's book~\cite{Morrey-1966}.  Besides extending the exponent range compared to~\cite[Lemma 5.7]{CarlottoSchoen} we also allow metrics that are far from Euclidean away from the asymptotic ends.

\begin{lemma}[Linearized pointwise estimates]\label{lem:appE-linearized-pointwise}
In the set-up of \autoref{lem:appE-invertibility} (namely $g_k-g_0$ and $h_k-h_0$ small in low-order norms), and with $\expoPp-\expoP=\expoP-\expoPm\geq (n+4)/2$, one has
\[
\aligned
\quad & \unquad \|u\|^{N+2,\alpha}_{\Omega, g_0, n-2-p, -\expoPp+2}
+ \|Z\|^{N+1,\alpha}_{\Omega, g_0, n-2-p, -\expoPp+1}
\\
& \lesssim \|u,Z\|_{L^2_{n-2-p, -\expoP}(\Omega,g_0)}
+ \|f\|_{\Omega,g_0, 2+p,\expoPm-2}^{N-2,\alpha} + \|V\|_{\Omega,g_0, 2+p,\expoPm-1}^{N-1,\alpha}
\endaligned
\]
provided the weighted H\"older norms of the sources $f,V$ on the right-hand side are finite.
The implicit constant depends on the localization data set and on $\|g_k\|_{C^{N,\alpha}_0(\Omega,g_0)}$ and $\|h_k\|_{C^{N,\alpha}_1(\Omega,g_0)}$.
\end{lemma}

\begin{proof}
\bse
Since we detail all the steps for the Euclidean operators in asymptotic ends $(\Omega_R,\delta)$ when proving \autoref{thm-sharp-h-localized}, we shall only outline the proof here, highlighting the complications due to the coupled nature of the curved-space operators.
First, the equation $\Jcal_{(g_k,h_k;g_0,h_0)}[u,Z] = (f, V)$ is rewritten as
\be
\Lcal[\omegabf_p u, \omegabf_{p+1} Z] = (\omegabf_p^{-1} f, \omegabf_{p+1}^{-1} V) ,
\ee
where $\Lcal$ is a simple reweighting of~$\Jcal_{(g_k,h_k;g_0,h_0)}$ by $\omegabf_p=\lambdabf^\expoP\rbf^{n/2-p}$ and~$\omegabf_{p+1}$.
Let us see how this equation fits in the framework of elliptic equations of mixed orders analyzed by Douglis and Nirenberg~\cite{DouglisNirenberg}.  This formalism involves assigning integer weights (differential orders) $s_\Hcal=0$ and $s_\Mcal=-1$ to the equations and $t_u=4$, $t_Z=3$ to the variables, such that at most $s_i+t_j$ derivatives of the $j$-th variable appear in the $i$-th equation.  The characteristic matrix $L(x,\xi)$ is obtained by keeping only these $s_i+t_j$ derivatives and changing derivatives to a mode variable~$i\xi$.  Since $\Lcal$ is a composition of operators, one finds a product of $(1+n)\times(n^2+n^2)$ and $(n^2+n^2)\times(1+n)$ block matrices associated to $d\Gcal$ and $d\Gcal^*$ (with indices properly raised/lowered),
\be
L(x,\xi) = \diag(\omegabf_p^{-1},\omegabf_{p+1}^{-1}) \charac(d\Gcal_{(g_k,h_k)}) \diag(\omegabf_p^2, \omegabf_{p+1}^2) \charac(d\Gcal^*_{(g_0,h_0)}) \diag(\omegabf_p^{-1},\omegabf_{p+1}^{-1}) ,
\ee
with
\be
\aligned
\charac(d\Gcal_{(g,h)})\begin{pmatrix}\gdiff\\\hdiff\end{pmatrix}
& = \begin{pmatrix} - \bigl( g^{ik} g^{jl} \xi_i \xi_j - g(\xi,\xi) g^{kl} \bigr) \gdiff_{kl} \\
\bigl( h^{jk} g^{\bullet l} - \frac{1}{2} g^{\bullet j} h^{kl} + \frac{1}{2} h^{\bullet j} g^{kl} \bigr) i \xi_j \gdiff_{kl} + i \xi_j \hdiff^{j\bullet}
\end{pmatrix} ,
\\
\charac(d\Gcal^*_{(g,h)})\begin{pmatrix}u\\Z\end{pmatrix}
& = \begin{pmatrix}
- (\xi_\bullet \xi_\bullet - g(\xi,\xi) g_{\bullet\bullet}) u + \frac{1}{2} (h_{\bullet\bullet} \delta^k_l - 2 h^k_{(\bullet} g^{}_{\bullet)l} - g_{\bullet\bullet} h_l^k) i \xi_k Z^l
\\
- i \xi_k g^{k(\bullet}Z^{\bullet)}
\end{pmatrix} .
\endaligned
\ee
To show that the matrix $L(x,\xi)$ is Douglis--Nirenberg elliptic, we use the same perturbative comparison as for $\Jrm_{(g_k,h_k,g_0,h_0)}$ in the proof of \autoref{lem:appE-invertibility}.
By smallness of $g_k-g_0$ and $h_k-h_0$ it is enough to consider the case $(g_k,h_k)=(g_0,h_0)$. In that case, the factors in $L(x,\xi)$ are adjoint of each other, hence $L(x,\xi)$ is positive semi-definite.
For any $(u,Z)$ we find
\bel{uZLxxiuZ}
(\omegabf_p u,\omegabf_{p+1}Z)\cdot L(x,\xi)\bigl(\begin{smallmatrix}\omegabf_p u\\\omegabf_{p+1}Z\end{smallmatrix}\bigr)
= \omegabf_p^2 \bigl|\charac(d\Hcal^*_{(g_0,h_0)})\bigl(\begin{smallmatrix}u\\Z\end{smallmatrix}\bigr)\bigr|_{g_0}^2
+ \omegabf_{p+1}^2 \bigl|\charac(d\Mcal^*_{(g_0,h_0)})\bigl(\begin{smallmatrix}u\\Z\end{smallmatrix}\bigr)\bigr|_{g_0}^2 .
\ee
The momentum constraint contribution is (raising and lowering indices with~$g_0$)
\be
\bigl|\charac(d\Mcal^*_{(g_0,h_0)})\bigl(\begin{smallmatrix}u\\Z\end{smallmatrix}\bigr)\bigr|_{g_0}^2
= \frac{1}{4} (\xi^i Z^j + \xi^j Z^i) (\xi_i Z_j + \xi_j Z_i)
= \frac{1}{2} |\xi|_{g_0}^2 |Z|_{g_0}^2 + \frac{1}{2} (\xi\cdot Z)^2 ,
\ee
which controls all products~$\xi_i Z_j$, hence \eqref{uZLxxiuZ}~controls $\omegabf_{p+1}\xi_i Z_j$.
Because $h_0$ is uniformly bounded by $\rbf^{-1}$ on~$\Omega$, its contribution to the $\charac(d\Hcal^*_{(g_0,h_0)})$ term is controlled by such products, so that \eqref{uZLxxiuZ}~controls $\omegabf_p(\xi_\bullet \xi_\bullet - g(\xi,\xi) g_{\bullet\bullet}) u$ hence all~$\omegabf_p\xi_i\xi_j u$.
Altogether, we obtain the desired ellipticity
\be
\Re\bigl\langle L(x,\xi)(U,W),(U,W)\bigr\rangle
\gtrsim |\xi|_{g_0}^4|\omegabf_p u|^2+|\xi|_{g_0}^2|\omegabf_{p+1}Z|_{g_0}^2 ,
\ee
where the real part $\Re$ is needed to accomodate the non-self-adjoint case $(g_k,h_k)\neq(g_0,h_0)$.
This is the required uniform Douglis--Nirenberg ellipticity.
Note that the work of Carlotto and Schoen avoided these complications by selecting a set-up where $(g_0,h_0)$ was close to a Euclidean metric $(\delta,0)$ on the whole gluing domain.

We can then seek interior elliptic estimates.  Recall the covering of $\Omega$ by the bounded region $\Omega_0$ and the asymptotic ends~$\Omega_\iota$.  The latter are equipped with global coordinate charts by definition, and $\Omega_0$ can be covered by finitely many coordinate charts.  The following discussion is done in one coordinate chart.  All norms below arise first as coordinate norms involving partial derivatives, and we replace them for convenience by the equivalent norms involving covariant derivatives and contraction with the metric~$g_0$.

For all $x$, consider a ball $\Ball_{d(x)}(x)$ of radius $d(x) = c_1 \lambdabf(x) \rbf(x)$ for $c_1\in (0,1)$ small enough (uniform in~$x$) that $\Ball_{2d(x)}$ is contained in the coordinate chart under consideration in~$\Omega$.
In particular, $\dbf(\cdot,{}^\complement\Omega),\rbf,\lambdabf$ on the ball $\Ball_{d(x)}(x)$ are bounded above and below by $d(x),\rbf(x),\lambdabf(x)$ up to multiplicative constants that are uniform in~$x$.
Beyond the leading terms captured by the characteristic matrix, the coefficients of subleading terms in  $\Jcal_{(g_k,h_k;g_0,h_0)}[u,Z]$ are bounded by suitable negative powers of the distance $\dbf(\cdot,{}^\complement\Omega) \simeq d(x)$, such that \cite[Theorem 6.2.6]{Morrey-1966} applies, and bounds H\"older norms of $\omegabf_p u,\omegabf_{p+1}Z$ as
\be
\aligned
K_{1/2}^{N+2}(\omegabf_p u) + d(x) K_{1/2}^{N+1}(\omegabf_{p+1} Z)
& \lesssim d(x)^{-n} \bigl\|\omegabf_p u,d(x)\omegabf_{p+1}Z\bigr\|_{L^1(\Ball_{d(x)/2}(x),g_0)} \\
& \quad + d(x)^4 K_{3/4}^{N-2}(\omegabf_p^{-1} f) + d(x)^3 K_{3/4}^{N-1}(\omegabf_{p+1}^{-1} V) ,
\\
K^M_\tau(X) \, & \! \coloneqq \sum_{i= 0}^M d(x)^i \sup_{\Ball_{\tau d(x)}(x)} |\del^i X|_{g_0} + d(x)^{M+ \alpha} \big[ \del^M X \big]_{\alpha, \Ball_{\tau d(x)}(x)} .
\endaligned
\ee
The constants depend on the localization data set, on $|g_k|_{C^{N,\alpha}_0(\Omega,g_0)}$, $|h_k|_{C^{N,\alpha}_1(\Omega,g_0)}$ and on $c_1$ (which can simply be fixed once and for all in terms of the geometry), and are uniform in~$x$.
Multiplying by $\rbf(x)^{n/2-2}\lambdabf(x)^{\expoPp-\expoP-2}$, accounting for $d(x)\sim\rbf(x)\lambdabf(x)$ and taking the sup over~$x$ gives
\be
\aligned
\quad & \unquad \|u\|^{N+2,\alpha}_{\Omega, g_0, n-2-p, -\expoPp+2}
+ \|Z\|^{N+1,\alpha}_{\Omega, g_0, n-2-p, -\expoPp+1}
\\
& \lesssim \sup_{x\in\Omega} \bigl( \rbf(x)^{-p-2}\lambdabf(x)^{\expoPp-2-n} \bigl\| u,\lambdabf Z\bigr\|_{L^1(\Ball_{d(x)/2}(x),g_0)} \bigr) \\
& \quad
+ \|f\|_{\Omega,g_0, 2+p,\expoPm-2}^{N-2,\alpha}
+ \|V\|_{\Omega,g_0, 2+p,\expoPm-1}^{N-1,\alpha} .
\endaligned
\ee

There remains to bound the $L^1$ norms on $\Ball_{d(x)/2}$ by weighted $L^2_{n-2-p,-\expoP}$ norms appearing in the Lemma's statement.
This is a consequence of the Cauchy--Schwarz inequality on $u,Z$ and the characteristic function of the ball with suitable powers of $\rbf$ and $\lambdabf$:
\be
\aligned
\|u\|_{L^1(\Ball_{d(x)/2}(x),g_0)}^2
& \leq \int_{\Ball_{d(x)/2}(x)} u^2 \rbf^{n-4-2p} \lambdabf^{2\expoP} dx'
\int_{\Ball_{d(x)/2}(x)} \rbf^{-n+4+2p} \lambdabf^{-2\expoP} dx'
\\
& \lesssim \rbf(x)^{4+2p} \lambdabf(x)^{n-2\expoP} \|u\|^2_{L^2_{n-2-p,-\expoP}(\Omega,g_0)}
\endaligned
\ee
and likewise for~$\lambdabf Z$.  This is enough to conclude, provided $\expoPp-2-n+(n-2\expoP)/2\geq 0$, namely $\expoPp\geq\expoP+(n+4)/2$.  Observe that this inequality is slightly stronger than the one needed in \autoref{section=5.4}.  This is due to the fact that in \autoref{section=5.4}, the $\unL^2$~norms on spherical shells were under control, while here one has $n$-dimensional $L^2$~norms.
\ese
\end{proof}


\subsection{Convergence of the Newton iteration.}
\label{appendix=E.4}

\paragraph{Nonlinearities and convergence of the Newton iteration.}

After bounding the nonlinear effect of perturbing a data set $(g_k,h_k)$ by $(\gdiff,\hdiff)$, we will explain why the sequence of data sets produced by the Newton method converges in Sobolev and H\"older norms.
The following pointwise bound is obtained immediately from the expression of the nonlinear terms $\Qcal\Hcal$ and $\Qcal\Mcal$ in \autoref{lem:constr-expand} and needs no proof.  It is very similar to the bound~\eqref{T-quadra} in an asymptotic end.  In contrast to \autoref{lem:step2-nonlinear}, the integral bound is seen here as a consequence of the pointwise bound rather than being derived directly from a product of pointwise and integral bounds of $(\gdiff,\hdiff)$.  This avoids the need of a negative Sobolev space on $(\Omega,g_0)$ in this appendix, at the cost of an otherwise unnecessary assumption $2\expoPm-2>\expoP$ on angular exponents.

\begin{lemma}[Control of nonlinearities]\label{lem:appE-control-nonlin}
In the set-up of \autoref{lem:appE-invertibility} (namely $g_k-g_0$ and $h_k-h_0$ small in low-order norms), for any symmetric two-tensors $\gdiff,\hdiff$ on~$\Omega$ such that
\bel{gdiffNalpha-unif}
\|\gdiff\|_{C^{N,\alpha}_0(\Omega,g_0)}
\leq 1/2,
\ee
and hence $g_k+\gdiff$ is uniformly Riemannian, one has the pointwise bounds
\be
\bigl\| \Qcal\Hcal_{(g_k,h_k)}[\gdiff,\hdiff] \bigr\|^{N-2,\alpha}_{\Omega,2p+2,2\expoPm-2}
+ \bigl\| \Qcal\Mcal_{(g_k,h_k)}[\gdiff,\hdiff] \bigr\|^{N-1,\alpha}_{\Omega,2p+2,2\expoPm-1}
\lesssim \bigl( \|\gdiff\|_{\Omega,p,\expoPm}^{N,\alpha} + \|\hdiff\|_{\Omega,p+1,\expoPm}^{N,\alpha} \bigr)^2 ,
\ee
with implicit constants depending on the norms $\|g_k\|_{C^{N,\alpha}(\Omega,g_0)}$ and $\|h_k\|_{C^{N,\alpha}_1(\Omega,g_0)}$.
Provided the angular exponents obey $2\expoPm-2>\expoP$, one deduces the integral bounds
\be
\bigl\| \Qcal\Hcal_{(g_k,h_k)}[\gdiff,\hdiff] \bigr\|_{L^2_{p+2,\expoP}(\Omega)}
+ \bigl\| \Qcal\Mcal_{(g_k,h_k)}[\gdiff,\hdiff] \bigr\|_{L^2_{p+2,\expoP}(\Omega)}
\lesssim \bigl(\|\gdiff\|^{N,\alpha}_{\Omega,p,\expoPm} + \|\hdiff\|^{N,\alpha}_{\Omega,p+1,\expoPm}\bigr)^2 .
\ee
\end{lemma}

\begin{remark}
  The smallness condition~\eqref{gdiffNalpha-unif} on~$\gdiff$ in unweighted H\"older norm allows cubic and higher-order terms in $\Qcal\Gcal_{(g_k,h_k)}[\gdiff,\hdiff]$ to be controlled by squared norms of~$(\gdiff,\hdiff)$.
  The H\"older norms appearing in \autoref{lem:appE-control-nonlin} may equally well be defined with the metric $g_0$ or~$g_k$, as they are equivalent up to constants that depend on $\|g_k\|_{C^{N-1,\alpha}(\Omega,g_0)}$.
  For $k=1$, $(g_1,h_1)=(\seedg,\seedh)$ and the norms $\|\seedg\|_{C^{N,\alpha}}$ and $\|\seedh\|_{C^{N,\alpha}_1}$ are controlled by the norms in the definition of localized seed data set in \autoref{def-aset} and conical localization data set in \autoref{def-conical}.
\end{remark}

We are ready to prove (a refined version of) \autoref{thm:sts-existence}.
We recall that
\bel{Errp-recall}
\Err_p[\seedg,\seedh] = \Norm{\Hcal(\seedg,\seedh)}^{N-2,\alpha}_{\Omega, g_0, p+2, \expoPm-2, \expoP} + \Norm{\Mcal(\seedg,\seedh)}^{N-1,\alpha}_{\Omega, g_0, p+2, \expoPm-1, \expoP}
\ee
where $\Norm{\cdot}_{\Omega,p,a,b}^{l,\alpha}$ is the sum~\eqref{equa-def-weightLC} of a weighted $L^2$ norm and a H\"older norm.

\begin{theorem}[Existence of the seed-to-solution map]\label{thm:sts-Sobolev}
Consider the setup of \autoref{thm:sts-existence}, with a conical localization data set $(\Mbf, \Omega, g_0,h_0, \wtrr, \lambdabf)$, admissible exponents $(p, p_G, p_A)$ and $(\expoPm, \expoP, \expoPp)$, and $\eps_G,\eps_A>0$ sufficiently small.
If a seed data set $(\seedg, \seedh)$ satisfies
\bel{sts-premise}
\aligned
\| \seedg - g_0 \|^{N, \alpha}_{\Mbf, p_G} + \| \seedh - h_0 \|^{N,\alpha}_{\Mbf, p_G+1} & \leq \eps_G ,
\\
\Norm{\Hcal(\seedg,\seedh)}^{N-2,\alpha}_{\Omega,  p_A+2, \expoPm-2,\expoP} + \Norm{\Mcal(\seedg, \seedh)}^{N-1, \alpha}_{\Omega, p_A+2, \expoPm-1,\expoP} & \leq \eps_A ,
\endaligned
\ee
then there exists a solution $(u,Z)$ to
\be
\aligned 
\Hcal(g,h) & = 0,
& \qquad g & = \seedg + \omegabf_p^2 \, d\Hcal^{*\flat\flat}_{(g_0,h_0)}(u,Z), 
\\
\Mcal(g,h) & = 0,
& \qquad h & = \seedh + \omegabf_{p+1}^2 \, d\Mcal^{*\sharp\sharp}_{(g_0,h_0)}(u,Z),
\endaligned
\ee
satisfying the estimates (with implicit constants depending on the localization data set and the exponents)
\be
\aligned
\|u\|^{N+2, \alpha}_{\Omega, n-2-p, -\expoPp+2}
+ \|u\|_{H^2_{n-2-p, -\expoP}(\Omega)}
+ \|Z\|^{N+1, \alpha}_{\Omega, n-2-p, -\expoPp+1}
+ \|Z\|_{H^1_{n-2-p, -\expoP}(\Omega)}
& \lesssim \Err_p[\seedg,\seedh],
\\
\Norm{g - \seedg}^{N, \alpha}_{\Omega, p,\expoPm, \expoP}
+ \Norm{h - \seedh}^{N,\alpha}_{\Omega, p+1, \expoPm, \expoP}
& \lesssim \Err_p[\seedg,\seedh] ,
\endaligned
\ee
and such that the Riemannian metrics $g,\seedg,g_0$ are uniformly equivalent on~$\Mbf$.
\end{theorem}

\begin{proof}
\bse\label{sts-all-bounds}
We consider the iteration scheme~\eqref{Newton-again}.
First, $\Jcal_{(g_k,h_k;g_0,h_0)}[u_k,Z_k] = - \Gcal(g_k,h_k)$ has a unique solution $(u_k,Z_k)$ by \autoref{lem:appE-invertibility}, provided $\|g_k-g_0\|_{C^2_0(\Omega,g_0)}$ and $\|h_k-h_0\|_{C^1_1(\Omega,g_0)}$ are small enough.
The iteration scheme then takes $g_{k+1}=g_k+\gdiff_k$ and $h_{k+1}=h_k+\hdiff_k$, where $\gdiff_k=\omegabf_p^2 \, d\Hcal_{(g_0,h_0)}^{*\flat\flat}[u_k,Z_k]$ and $\hdiff_k=\omegabf_{p+1}^2  \, d\Mcal_{(g_0,h_0)}^{*\sharp\sharp}[u_k,Z_k]$.

Fix $\teps_G,\teps_A>0$ sufficiently small as determined below.  We shall prove by induction that $(g_1,h_1)=(\seedg,\seedh)$ and all iterates $(g_k,h_k)$ satisfy a variant of~\eqref{sts-premise} with different radial exponents:
\bel{sts-iteration-bounds}
\aligned
\Err_p[\seedg,\seedh] & \leq \teps_A ,
\\
\Err_p[g_k,h_k] & \leq 2^{-2^{k-1}} \Err_p[\seedg,\seedh] ,
\\
\| g_k - g_0 \|^{N, \alpha}_0 + \| h_k - h_0 \|^{N,\alpha}_1 & \leq (1 - 2^{-k}) \teps_G .
\endaligned
\ee
Since $p_A\geq p$ and $p_G>0$, and $\wtrr$~is bounded below, \eqref{sts-premise}~for sufficiently small $\eps_G,\eps_A>0$ implies the iteration bounds~\eqref{sts-iteration-bounds} for $k=1$.

Assume that \eqref{sts-iteration-bounds} holds for some $k\geq 1$.
The explicit formulas of $(\gdiff_k,\hdiff_k)$ imply (see \autoref{lem:sec11-estimates-geom}) that they are controlled in Lebesgue--H\"older norms by $(u_k,Z_k)$, themselves controlled thanks to \refwithname{Lemmas}{lem:appE-invertibility} \refwithname{and}{lem:appE-linearized-pointwise} in terms of the source $\Gcal(g_k,h_k)$:
\bel{gdiffkbound}
\aligned
\quad & \unquad
\Norm{\gdiff_k}^{N,\alpha}_{p,\expoPm,\expoP}
+ \Norm{\hdiff_k}^{N,\alpha}_{p+1,\expoPm,\expoP}
\\
& \lesssim \|u_k\|_{H^2_{n-2-p, -\expoP}} + \|Z_k\|_{H^1_{n-2-p, -\expoP}}
+ \|u_k\|^{N+2,\alpha}_{n-2-p, -\expoPp+2}
+ \|Z_k\|^{N+1,\alpha}_{n-2-p, -\expoPp+1}
\\
& \lesssim \Err_p[g_k,h_k] \leq 2^{-2^{k-1}} \Err_p[\seedg,\seedh] .
\endaligned
\ee
\refwithname{Lemmas}{lem:appE-invertibility} \refwithname{and}{lem:appE-linearized-pointwise} require smallness of $(g_k-g_0,h_k-h_0)$ in low-order pointwise norms, which is ensured by the last condition in~\eqref{sts-iteration-bounds} for sufficiently small~$\teps_G$.

We then observe that, because $(g_{k+1},h_{k+1})$ is chosen to solve the linear problem, its failure to be an exact solution comes only from nonlinear terms (see~\eqref{Gcalgk1hk1}):
\be
\Gcal(g_{k+1},h_{k+1}) = \Qcal\Gcal_{(g_k,h_k)}[\gdiff_k, \hdiff_k] .
\ee
This, in turn, is controlled thanks to \autoref{lem:appE-control-nonlin} by the \emph{square} of~\eqref{gdiffkbound}
\bel{NewtGbound}
\Err_p[g_{k+1},h_{k+1}]
\lesssim \bigl(\|\gdiff_k\|^{N,\alpha}_{p,\expoPm} + \|\hdiff_k\|^{N,\alpha}_{p+1,\expoPm}\bigr)^2
\lesssim 2^{-2^k} \Err_p[\seedg,\seedh]^2 \leq 2^{-2^k} \Err_p[\seedg,\seedh] \teps_A .
\ee
For $\teps_A$ sufficiently small, this gives the second bound in~\eqref{sts-iteration-bounds}.
Then, \eqref{gdiffkbound} controls the weighted norms of $(\gdiff_k,\hdiff_k)$.
Since $\expoPm>N+\alpha$ and the normal derivative of $\lambdabf$ on $\del\Omega$ is bounded below, $C^{N,\alpha}$ norms with angular weights control those without, so that
$\|\gdiff_k\|^{N,\alpha}_0 + \|\hdiff_k\|^{N,\alpha}_1 \leq 2^{-2^{k-1}}C\teps_A$ for some constant $C>0$ that does not depend on~$k$ nor on the seed data.
We obtain
\be
\aligned
\| g_{k+1} - g_0 \|^{N, \alpha}_0 + \| h_{k+1} - h_0 \|^{N,\alpha}_1
& \leq \| g_k - g_0 \|^{N, \alpha}_0 + \| h_k - h_0 \|^{N,\alpha}_1
+ \| \gdiff_k \|^{N, \alpha}_0 + \| \hdiff_k \|^{N,\alpha}_1
\\
& \leq (1 - 2^{-k}) \teps_G + 2^{-2^{k-1}} C \teps_A ,
\endaligned
\ee
which implies the third bound in~\eqref{sts-iteration-bounds} for $\teps_A$ sufficiently small compared to $\teps_G$.
This concludes the proof of~\eqref{sts-iteration-bounds} for all $k\geq 1$ by induction, hence of all bounds in~\eqref{sts-all-bounds}.
From these bounds one deduces the desired control of $(u,Z,g-\seedg,h-\seedh)$ by summing the bounds on $(u_k,Z_k,\gdiff_k,\hdiff_k)$.
The condition $\expoPm>N+\alpha$ implies that $(g,h)$ are $C^{N,\alpha}$ regular across the boundary~$\del\Omega$.
Since $\Gcal(g_k,h_k)\to 0$ and $(g_k,h_k)\to(g,h)$ in $C^{N,\alpha}$ one deduces $\Gcal(g,h)=0$.
\ese
\end{proof}


\section{Observations on linear differential equations}
\label{appendix=F}

\subsection{Definition of solution operators}
\label{appendix=F.1}

The shell averages and functionals both obey radial differential equations expressed in terms of $\vartheta=r\del_r$.
Let us repeat our notation~\eqref{IJdef-first}.  For any exponent $\beta\in\RR$ and any function~$f:[R,+\infty) \to\RR$ that is suitably integrable\footnote{The operator $I_\beta$ is defined for locally-integrable functions, while $J_\beta$ is defined provided $r\mapsto f(r)r^{\beta-1}$ is integrable.}:
\bel{IJdef}
I_\beta[f](r) \coloneqq r^{-\beta} \int_R^r f(s) s^\beta \frac{ds}{s},
\qquad
J_\beta[f](r) \coloneqq r^{-\beta} \int_r^{+\infty} f(s) s^\beta \frac{ds}{s}.
\ee
Both operators map non-negative functions to non-negative functions, and they are (up to a sign) inverses of the differential operator $\vartheta+\beta$, as we state in \autoref{lem:inte-more}, below.
We recall that $\vartheta=r\del_r$.
As a preliminary step we mention two obvious inequalities that hold for $\alpha\leq\beta$ and for a \emph{non-negative} function~$f$,
\bel{IJmonotonic}
I_\beta[f](r) \leq I_\alpha[f](r) , \qquad
J_\alpha[f](r) \leq J_\beta[f](r) , \qquad r\in[R,+\infty) .
\ee

\begin{lemma}[An explicit formula for first-order ODEs]
\label{lem:inte-more}
The operators $I_\beta$ and $J_\beta$ are determined by the conditions 
\bel{IJprop}
(\vartheta+\beta)I_\beta[f](r) = - (\vartheta+\beta)J_\beta[f](r) = f(r),
\ee
together with the boundary conditions $I_\beta[f](R)=0$ and $J_\beta[f](r)=o(r^{-\beta})$ as $r\to+\infty$. In addition, one has
\bel{IJprop-2}
\aligned
I_\beta\Bigl[(\vartheta+\beta)f\Bigr](r) & = f(r) - f(R) R^\beta r^{-\beta},
\\
J_\beta\Bigl[(\vartheta+\beta)f\Bigr](r) & = - f(r)  \quad \text{provided } \lim_{r\to+\infty} f(r) r^\beta = 0 .
\endaligned
\ee
\end{lemma}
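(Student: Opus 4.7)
The proof is essentially an elementary calculation, and I plan to carry it out by direct differentiation of the defining integrals. The key observation is that writing $r^{\beta} I_\beta[f](r) = \int_R^r f(s) s^{\beta-1}\, ds$ converts the differential identity $(\vartheta+\beta)I_\beta[f] = f$ into the fundamental theorem of calculus, since $\partial_r(r^\beta I_\beta[f]) = f(r) r^{\beta-1}$ while the left-hand side equals $r^{\beta-1}(\vartheta+\beta)I_\beta[f]$. An identical calculation with $r^\beta J_\beta[f](r) = \int_r^{+\infty} f(s) s^{\beta-1}\, ds$ yields the opposite sign $(\vartheta+\beta)J_\beta[f] = -f$, which establishes~\eqref{IJprop}. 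The boundary behaviors $I_\beta[f](R) = 0$ and $J_\beta[f](r) = o(r^{-\beta})$ are immediate from the defining formulas (the latter because the tail integral $\int_r^{+\infty} f(s)s^{\beta-1}\, ds$ tends to $0$ as $r \to +\infty$ under the integrability assumption, so that $r^\beta J_\beta[f](r) \to 0$).

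For uniqueness, I would note that any two solutions of the inhomogeneous first-order ODE $(\vartheta+\beta)g = f$ differ by a solution of the homogeneous equation $(\vartheta+\beta)h = 0$; since $\vartheta = r\,\partial_r$, the general solution is $h(r) = C r^{-\beta}$. The prescribed boundary condition in each case eliminates this homogeneous mode: imposing $g(R) = 0$ forces $C = 0$, and likewise the decay $g(r) = o(r^{-\beta})$ at infinity forces $C = 0$.

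For the identity~\eqref{IJprop-2}, I would substitute $f \mapsto (\vartheta+\beta)f$ and recognize the integrand as an exact derivative, that is, $(\vartheta+\beta)f(s) \cdot s^{\beta-1} = \partial_s(s^\beta f(s))$. The fundamental theorem of calculus then yields
\[
I_\beta[(\vartheta+\beta)f](r) = r^{-\beta}\int_R^r \partial_s\bigl(s^\beta f(s)\bigr) ds = f(r) - f(R)\, R^\beta r^{-\beta},
\]
and similarly
\[
J_\beta[(\vartheta+\beta)f](r) = r^{-\beta}\int_r^{+\infty} \partial_s\bigl(s^\beta f(s)\bigr) ds = \lim_{s\to+\infty}\bigl(f(s)\,s^\beta\bigr)\, r^{-\beta} - f(r) = -f(r)
\]
under the stated vanishing condition at infinity. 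No step presents a genuine obstacle; the only point requiring mild care is checking that the assumed integrability and decay hypotheses make all the limits and integrations by parts legitimate, which is a routine verification.
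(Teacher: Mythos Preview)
Your proof is correct and follows the natural direct calculation; the paper itself does not give a proof of this lemma, merely stating that the property is easily checked.
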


We can also establish the following elementary result. 

\begin{lemma}[An explicit formula for high-order ODEs]
\label{lem:integrate-ODE}
The solutions to the $m$-th-order differential equation $(\vartheta+\beta_1) \dots(\vartheta+\beta_m)f=g$ for pairwise distinct $\beta_1,\dots,\beta_m\in\RR$ can be expressed as
\be
f = \sum_{i=1}^m \Bigl( A_i r^{-\beta_i} + B_i I_{\beta_i}[g](r) \Bigr)
= \sum_{i=1}^m r^{-\beta_i} \Bigl( A_i + B_i \int_R^r g(s)s^{\beta_i-1}\,ds\Bigr),
\ee
 where $B_i = \prod_{j\neq i} (\beta_j - \beta_i)^{-1}$ and $A_i$ are arbitrary constants.
\end{lemma}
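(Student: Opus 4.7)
The plan is to exploit the fact that, since $\vartheta$ is a single derivation, the first-order operators $\vartheta+\beta_i$ all commute; consequently the left-hand side may be viewed as $P(\vartheta)f$, where $P(x)=\prod_{i=1}^m(x+\beta_i)$. From \autoref{lem:inte-more} we already know that each $\vartheta+\beta_i$ is inverted by $I_{\beta_i}$ and annihilated by the one-parameter family $r^{-\beta_i}$. The strategy is therefore to decompose the ``inverse'' of $P(\vartheta)$ via partial fractions and then verify by direct substitution.

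First I would establish the purely algebraic identity
\be
1 \;=\; \sum_{i=1}^{m} B_i \prod_{j\neq i}(x+\beta_j)\qquad \text{as polynomials in }x,
\ee
assuming the $\beta_i$ are pairwise distinct (which is implicit in the very definition of $B_i$). Both sides are polynomials of degree at most $m-1$; evaluating the right-hand side at $x=-\beta_k$ gives $B_k\prod_{j\neq k}(\beta_j-\beta_k)=1$, so the two polynomials coincide at the $m$ distinct points $\{-\beta_1,\dots,-\beta_m\}$ and hence everywhere. This is the standard partial-fraction decomposition of $1/P(x)$.

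Next, substituting $x\mapsto\vartheta$ (which is legitimate because the factors commute) yields the operator identity
$\mathrm{id}=\sum_i B_i\prod_{j\neq i}(\vartheta+\beta_j)$. Applying this to $g$ and using $(\vartheta+\beta_i)I_{\beta_i}[g]=g$ from \eqref{IJprop}, I would set $f_p\coloneqq\sum_i B_iI_{\beta_i}[g]$ and compute
\be
P(\vartheta)f_p \;=\; \sum_{i=1}^{m} B_i \Bigl(\prod_{j\neq i}(\vartheta+\beta_j)\Bigr)(\vartheta+\beta_i)I_{\beta_i}[g]
\;=\; \sum_{i=1}^{m} B_i \Bigl(\prod_{j\neq i}(\vartheta+\beta_j)\Bigr) g \;=\; g,
\ee
so $f_p$ is a particular solution. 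The general solution is then obtained by adding an arbitrary element of the kernel of $P(\vartheta)$; since the $m$ functions $r^{-\beta_i}$ (distinct exponents) are linearly independent and each satisfies $(\vartheta+\beta_i)r^{-\beta_i}=0$, they span this $m$-dimensional kernel, yielding the announced form. The alternative expression follows by unwinding \eqref{IJdef}.

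The argument is essentially routine, but the one genuine point to watch is the distinctness hypothesis on the $\beta_i$: the constants $B_i$ blow up when two exponents coincide, and resonances must be resolved by $r^{-\beta_i}\log r$-type homogeneous solutions obtained as a limit. In the applications of this lemma within the present paper the characteristic exponents $\beta_\pm$ of the average equations are always shown to be real and distinct (cf.~\eqref{equa-expobetapm} and its momentum counterpart), so this mild caveat does not affect any of our uses.
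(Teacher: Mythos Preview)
Your proof is correct and follows essentially the same approach as the paper: both verify that $f_p=\sum_i B_i I_{\beta_i}[g]$ is a particular solution by reducing to the polynomial identity $\sum_i B_i\prod_{j\neq i}(X+\beta_j)=1$, which is checked by evaluating at the $m$ points $X=-\beta_k$ and invoking that a degree-$(m{-}1)$ polynomial with $m$ roots vanishes. Your explicit remark on the distinctness hypothesis for the~$\beta_i$ is a welcome addition that the paper leaves implicit.
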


\begin{proof} The homogeneous solutions are obviously $r^{-\beta_i}$ for $i=1,\dots,m$, which explains the free constants~$A_i$. To check the inhomogeneous terms, we act on the given Ansatz with the differential operator, using~\eqref{IJprop}:
\be 
  (\vartheta+\beta_1) \dots(\vartheta+\beta_m) f = \polyP_\beta(\vartheta) g, 
  \qquad
  \polyP_\beta(X) = \sum_{i=1}^m \prod_{j\neq i} \frac{X+\beta_j}{\beta_j - \beta_i}.
\ee
For each $k=1,\ldots,m$ we evaluate $\polyP_\beta(-\beta_k)$: all terms vanish except the $i=k$ term, which equals~$1$.  Thus, $\polyP_\beta-1$ is a polynomial of degree at most~$(m-1)$ that vanishes at $m$ points, hence $\polyP_\beta=1$ identically.
\end{proof}

In the main text we also require pointwise decay of certain integrals, stated as follows.

\begin{lemma}\label{lem:appF-pointwise}
  For exponents $\alpha<\beta<\gamma$, if a locally integrable function $f\colon[R,+\infty)\to\RR$ obeys $\lim_{r\to+\infty} r^\beta f(r)=0$ then
  \be
  \lim_{r\to+\infty} r^\beta J_\alpha[f](r)
  = \lim_{r\to+\infty} r^\beta I_\gamma[f](r) = 0 .
  \ee
\end{lemma}

\begin{proof} For all sufficiently large $r$, denote by $N_f(r) = \sup_{s\in[r,+\infty)} |s^\beta f(s)|$ the $L^\infty_\beta$ norm of $f$ on $[r,+\infty)$.
By assumption $N_f(r)\to 0$ as $r\to+\infty$, and $|f(s)|\leq N_f(r)s^{-\beta}$ for $s\in[r,+\infty)$,
thus
\be
|r^\beta J_\alpha[f](r)|
\leq N_f(r) r^{\beta-\alpha} \int_r^{+\infty} s^{\alpha-\beta} \frac{ds}{s}
= \frac{N_f(r)}{\beta-\alpha} .
\ee
For every fixed sufficiently large $\rho>R$, split the integral defining $I_\gamma$ into intervals $[R,\rho]$ and $[\rho,r]$:
\be
|r^\beta I_\gamma[f](r)|
\leq
r^{\beta-\gamma}\int_R^\rho |f(s)|s^\gamma\frac{ds}{s}
+N_f(\rho)r^{\beta-\gamma}\int_\rho^r s^{\gamma-\beta}\frac{ds}{s} .
\ee
The contribution of the first interval only depends on~$r$ through a power-law prefactor that tends to zero, and the second contribution is bounded by $N_f(\rho)/(\gamma-\beta)$.  First letting $r\to+\infty$ for fixed~$\rho$, and then letting $\rho\to+\infty$, proves the claim.
\end{proof}


\subsection{Hardy-type inequalities}
\label{appendix=F.2}

Our aim here is to control some nested integrals that arise in the analysis of our shell identities. We build upon the standard Hardy inequality in the radial direction, which states
\bel{eqE11}
\int_R^{+\infty} |u(r)|^2 \, dr
+ 2 R |u(R)|^2
\leq 4 \int_R^{+\infty} |u'(r)|^2 r^2 \, dr
\ee
for functions $u:[R, +\infty) \to \RR$ with $u(r)=o(r^{-1/2})$ as $r\to+\infty$. After listing a set of Cauchy--Schwarz inequalities (in \autoref{lem:rad-CS}), we prove a weighted generalization (in \autoref{lem:rad-Hardy}) of~\eqref{eqE11} on intervals $[R,r]$ and $[r,+\infty)$, then on a mixed combination of such intervals (in \autoref{lem:rad-mixed}).

\begin{lemma}[Cauchy--Schwarz inequalities]
\label{lem:rad-CS}
For any exponents $\alpha,\beta\in\RR$, and any pair of functions $f,g:[R,+\infty) \to\RR$ that are locally square-integrable, one has
\bse
\bel{cs-1-a}
\aligned
  \bigl( I_{\alpha+\beta}[f g](r) \bigr)^2 & \leq I_{2\alpha}[f^2](r) \, I_{2\beta}[g^2](r) , \qquad\qquad r\in [R,+\infty) , \\
  \bigl( I_{\alpha+\beta}[f](r) \bigr)^2 & \leq \frac{1}{2\beta} I_{2\alpha}[f^2](r) , \qquad\quad \beta > 0 , \quad r\in [R,+\infty) .
\endaligned
\ee
  If the functions $r^{\alpha-1/2}f(r)$ and $r^{\beta-1/2}g(r)$ are square-integrable on $[R,+\infty)$ then one has
\bel{cs-1-b}
\aligned
  \bigl( J_{\alpha+\beta}[f g](r) \bigr)^2 & \leq J_{2\alpha}[f^2](r) \, J_{2\beta}[g^2](r), \qquad\qquad r\in [R,+\infty) , \\
  \bigl( J_{\alpha+\beta}[f](r) \bigr)^2 & \leq \frac{1}{-2\beta} J_{2\alpha}[f^2](r) , \qquad \beta < 0 , \quad r\in [R,+\infty) .
\endaligned
\ee
\ese
In particular, assuming only that $r^{\alpha-1/2}f(r)$ is square integrable on $[R,+\infty)$, the following decay properties hold as $r\to+\infty$:
  $I_{\alpha+\beta}[f](r) = \Obig(r^{-\alpha})$ for $\beta>0$, and
  $J_{\alpha+\beta}[f](r) = \osmall(r^{-\alpha})$ for $\beta<0$.
\end{lemma}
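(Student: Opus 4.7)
Each inequality is a direct application of the Cauchy--Schwarz inequality in a weighted $L^2$ space on the radial interval, so the plan is essentially bookkeeping of weights. The key observation to exploit is that for any $\gamma\in\RR$,
\[
I_\gamma[h](r) = r^{-\gamma} \int_R^r h(s)\, s^{\gamma-1}\, ds,
\qquad
J_\gamma[h](r) = r^{-\gamma} \int_r^{+\infty} h(s)\, s^{\gamma-1}\, ds,
\]
so the weight $s^{\gamma-1}\,ds$ in the integrand splits as $s^{\gamma-1} = s^{\alpha-1/2}\cdot s^{\beta-1/2}$ whenever $\gamma=\alpha+\beta$. This is the split that will drive the bilinear bounds~\eqref{cs-1-a}--\eqref{cs-1-b}.

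For~\eqref{cs-1-a}, I would write
\[
I_{\alpha+\beta}[fg](r) = r^{-\alpha-\beta} \int_R^r \bigl(f(s)s^{\alpha-1/2}\bigr)\bigl(g(s)s^{\beta-1/2}\bigr)\, ds,
\]
apply the standard Cauchy--Schwarz inequality on $L^2([R,r], ds)$, and recognize the resulting factors as $r^{2\alpha}I_{2\alpha}[f^2](r)$ and $r^{2\beta}I_{2\beta}[g^2](r)$; the $r$-prefactors cancel, yielding the claim. The analogous computation on $[r,+\infty)$ gives~\eqref{cs-1-b}, the only subtlety being that the $L^2$-integrability assumption on $r^{\alpha-1/2}f$ and $r^{\beta-1/2}g$ is precisely what ensures that the factors $J_{2\alpha}[f^2]$ and $J_{2\beta}[g^2]$ are finite.

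The scalar variants (second line of~\eqref{cs-1-a} and of~\eqref{cs-1-b}) follow by specializing to $g\equiv 1$ and computing the resulting elementary integrals: for $\beta>0$,
\[
I_{2\beta}[1](r) = r^{-2\beta} \int_R^r s^{2\beta-1}\, ds = \frac{1-(R/r)^{2\beta}}{2\beta} \leq \frac{1}{2\beta},
\]
while for $\beta<0$ the corresponding computation gives $J_{2\beta}[1](r) = 1/(-2\beta)$. Plugging into the bilinear inequalities produces the stated bounds.

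Finally, the asymptotic decay statements follow immediately from the scalar inequalities. Assuming $r^{\alpha-1/2}f(r)$ is square-integrable on $[R,+\infty)$, the quantity $r^{2\alpha}I_{2\alpha}[f^2](r) = \int_R^r f^2(s)s^{2\alpha-1}\,ds$ stays bounded, so for $\beta>0$ the inequality $|I_{\alpha+\beta}[f](r)|^2 \leq (2\beta)^{-1} I_{2\alpha}[f^2](r)$ yields $I_{\alpha+\beta}[f](r) = O(r^{-\alpha})$; for $\beta<0$, the tail integral $r^{2\alpha}J_{2\alpha}[f^2](r) = \int_r^{+\infty} f^2(s)s^{2\alpha-1}\,ds$ tends to $0$ as $r\to+\infty$, which upgrades the $J$-version to $J_{\alpha+\beta}[f](r) = o(r^{-\alpha})$. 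There is no genuine obstacle in this lemma; the only point requiring care is to consistently write the measure as $s^{\gamma-1}\,ds$ (rather than $ds/s$) before invoking Cauchy--Schwarz, to avoid an off-by-one error in the weights.
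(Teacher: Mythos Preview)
Your proof is correct and follows essentially the same route as the paper: apply Cauchy--Schwarz on $[R,r]$ or $[r,+\infty)$ to the functions $s\mapsto s^{\alpha-1/2}f(s)$ and $s\mapsto s^{\beta-1/2}g(s)$, specialize to $g\equiv 1$ and evaluate $I_{2\beta}[1]$, $J_{2\beta}[1]$ explicitly, then read off the decay from the boundedness (resp.\ vanishing) of $r^{2\alpha}I_{2\alpha}[f^2]$ (resp.\ $r^{2\alpha}J_{2\alpha}[f^2]$). There is nothing to add.
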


\begin{proof} The inequalities involving $f$ and~$g$ are simply restatements of the Cauchy--Schwarz inequality (on the intervals $[R,r]$ and $[r,+\infty)$, respectively) for the functions $s\mapsto s^{\alpha-1/2}f(s)$ and $s\mapsto s^{\beta-1/2}g(s)$.  The prefactors in front of integrals are $r^{-2\alpha-2\beta}$ on all sides of these inequalities.
  The inequalities apply to $g=1$ with constants arising from an explicit evaluation of $I_{2\beta}[1]$ and $J_{2\beta}[1]$: for $\gamma>0$,
\[
  I_\gamma[1](r) = \frac{1}{\gamma} \bigl(1 - (R/r)^\gamma\bigr) \leq \frac{1}{\gamma} ,
  \qquad
  J_{-\gamma}[1](r) = \frac{1}{\gamma} .
\]
The $r^{-2\alpha}$ decay is then an immediate consequence of these Cauchy--Schwarz inequalities, by noting that $r^{2\alpha} I_{2\alpha}[f^2](r)$ and $r^{2\alpha} J_{2\alpha}[f^2](r)$ are integrals on $[R,r]$ and $[r,+\infty)$ of a non-negative integrable function, hence are respectively bounded (by the integral on $[R,+\infty)$) and $o(1)$ as $r\to+\infty$.
\end{proof}

\begin{lemma}[Hardy-type inequalities]
\label{lem:rad-Hardy}
Fix a pair of exponents $\alpha,\beta\in\RR$ and a locally square-integrable function $f:[R,+\infty) \to\RR$. If $\alpha>\beta$ then the function $u=I_\alpha[f]$ obeys the Hardy-type inequality
  \bse\label{rad-Hardy}
\bel{rad-Hardy-a}
  I_{2\beta}\bigl[ u^2 \bigr](r) + \frac{1}{\alpha-\beta} u(r)^2
  \leq \frac{1}{(\alpha-\beta)^2} I_{2\beta}[f^2](r) ,
  \qquad r\in[R,+\infty) .
\ee
If $\alpha<\beta$ and $r\mapsto r^{\beta-1/2} f(r)$ is square-integrable on $[R,+\infty)$, then the function $v=J_\alpha[f]$ obeys
\bel{rad-Hardy-b}
  J_{2\beta}\bigl[ v^2 \bigr](r) + \frac{1}{\beta- \alpha} v(r)^2
  \leq \frac{1}{(\beta- \alpha)^2} J_{2\beta}[f^2](r) ,
  \qquad r\in[R,+\infty) .
\ee
  \ese
\end{lemma}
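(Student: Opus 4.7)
The plan is to prove both inequalities by the same integration-by-parts computation, applied to the two intervals $[R,r]$ and $[r,+\infty)$ respectively. The key observation is that the operators $I_\alpha$ and $J_\alpha$ produce functions satisfying a first-order ODE $(\vartheta+\alpha)u = \pm f$ (see \eqref{IJprop}), which allows one to convert the weighted squared average of $u$ (respectively $v$) into that of~$f$ at the same weight, with a sharp constant determined by $|\alpha-\beta|$.

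For Case 1, with $u = I_\alpha[f]$ so that $s u'(s) = f(s) - \alpha u(s)$, I would write the elementary identity
\[
\frac{d}{ds}\bigl(u(s)^2 s^{2\beta}\bigr)
= 2 u(s)\, u'(s)\, s^{2\beta} + 2\beta\, u(s)^2 s^{2\beta-1}
= 2 u(s) f(s) s^{2\beta-1} - 2(\alpha-\beta)\, u(s)^2 s^{2\beta-1}.
\]
Integrating from~$R$ to~$r$ and using $u(R)=0$ (immediate from the definition of $I_\alpha$) yields the integrated energy identity
\[
u(r)^2 r^{2\beta} + 2(\alpha-\beta)\int_R^r u^2 s^{2\beta-1}\, ds
= 2 \int_R^r u\, f\, s^{2\beta-1}\, ds.
\]
Applying Young's inequality with parameter $\alpha-\beta > 0$, namely $2 u f \leq (\alpha-\beta)u^2 + (\alpha-\beta)^{-1} f^2$, absorbs half of the dissipation on the right-hand side and leaves $(\alpha-\beta)^{-1} \int_R^r f^2 s^{2\beta-1}\, ds$. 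Dividing through by $(\alpha-\beta)\, r^{2\beta}$ produces exactly \eqref{rad-Hardy-a}.

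For Case 2, $v = J_\alpha[f]$ satisfies $(\vartheta+\alpha)v = -f$, so the analogous computation gives $\frac{d}{ds}(v^2 s^{2\beta}) = -2 v f s^{2\beta-1} + 2(\beta-\alpha) v^2 s^{2\beta-1}$, and integration from~$r$ to~$+\infty$ followed by the same Young inequality with parameter $\beta-\alpha > 0$ yields \eqref{rad-Hardy-b}. The only point that requires care is justifying the vanishing of the boundary term $v(s)^2 s^{2\beta}$ as $s \to +\infty$; this is precisely where the hypothesis that $s^{\beta-1/2} f(s)$ is square-integrable enters, and the needed decay $v(r)^2 r^{2\beta} = o(1)$ follows directly from the Cauchy--Schwarz bound \eqref{cs-1-b} already recorded in \autoref{lem:rad-CS} (which gives $v(r)^2 r^{2\beta} \leq (2(\beta-\alpha))^{-1} J_{2\beta}[f^2](r) \to 0$). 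I do not foresee a substantive obstacle: the whole argument is a sharp textbook weighted Hardy inequality, and sharpness of the constant $1/(\alpha-\beta)^2$ is built in because both the pointwise term and the dissipation term on the left-hand side are matched exactly by the Young inequality with the single optimal parameter $|\alpha-\beta|$.
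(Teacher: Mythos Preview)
Your proof is correct and is essentially the same as the paper's: both arguments rest on the single integration by parts that converts $\int uf\, s^{2\beta-1}\,ds$ into $\int u^2 s^{2\beta-1}\,ds$ plus the boundary term $u^2 s^{2\beta}$, followed by the optimal Young/square inequality with parameter $|\alpha-\beta|$. The paper packages this as expanding the non-negative square $I_{2\beta}[(\gamma u-f)^2]\geq 0$ (respectively $J_{2\beta}[(\gamma v+f)^2]\geq 0$) and then choosing $\gamma=|\alpha-\beta|$, while you differentiate $u^2 s^{2\beta}$ directly; the treatment of the boundary term at infinity via \autoref{lem:rad-CS} is identical in both.
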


\begin{proof}
We prove the first inequality for $\alpha>\beta$. For any $\gamma\in\RR$, we expand the following square and use $f=(\vartheta+\alpha)u$ to integrate by parts the cross-term:
\[
\aligned
  0 & \leq I_{2\beta}\bigl[ (\gamma u - f)^2 \bigr]
 = \gamma^2 I_{2\beta}\bigl[ u^2 \bigr]
  - \gamma I_{2\beta}\Bigl[ 2 u (\vartheta+\alpha)u \Bigr]
  + I_{2\beta}[f^2]
\\
& =
\gamma (\gamma - 2\alpha + 2\beta) I_{2\beta}\bigl[ u^2 \bigr]
  - \gamma I_{2\beta}\Bigl[ (\vartheta+2\beta)(u^2) \Bigr]
  + I_{2\beta}[f^2]
  \\
  &
  = \gamma (\gamma - 2\alpha + 2\beta) I_{2\beta}\bigl[ u^2 \bigr]
  - \gamma u^2
  + I_{2\beta}[f^2],
\endaligned
\]
where in the last line we used the integration by parts formula~\eqref{IJprop-2}, which has no boundary term at $R$ thanks to $u(R)=I_\alpha[f](R)=0$. Taking $\gamma=2\alpha-2\beta>0$ gives a Cauchy--Schwarz inequality, specifically the second line in~\eqref{cs-1-a}. Taking $\gamma=\alpha-\beta>0$ gives the Hardy inequality~\eqref{rad-Hardy-a} we wished to prove.

Next we prove the second inequality for $\alpha<\beta$. The decay statement in \autoref{lem:rad-CS}, applied with the pair of exponents $(\beta,\alpha-\beta)$, states that $v(r)=J_\alpha[f](r)=o(r^{-\beta})$ as $r\to+\infty$. For any $\gamma\in\RR$, we can then expand the following square and use $f=- (\vartheta+\alpha)v$ to integrate by parts the cross-term:
\[
\aligned
  0 \leq J_{2\beta}\bigl[ (\gamma v + f)^2 \bigr]
& = \gamma^2 J_{2\beta}\bigl[ v^2 \bigr]
  - \gamma J_{2\beta}\Bigl[ 2 v (\vartheta+\alpha)v \Bigr]
  + J_{2\beta}[f^2]
\\
& = \gamma (\gamma - 2\alpha + 2\beta) J_{2\beta}\bigl[ v^2 \bigr]
  - \gamma J_{2\beta}\Bigl[ (\vartheta+2\beta)(v^2) \Bigr]
  + J_{2\beta}[f^2]
\\
& = \gamma (\gamma - 2\alpha + 2\beta) J_{2\beta}\bigl[ v^2 \bigr]
  + \gamma v^2
  + J_{2\beta}[f^2],
\endaligned
\]
where in the last line we used the integration by parts formula~\eqref{IJprop-2}, which has no boundary term at infinity thanks to the decay $v(r)^2=o(r^{-2\beta})$. Taking $\gamma=2\alpha-2\beta$ yields a previously-proven Cauchy--Schwarz inequality. For $\gamma=\alpha-\beta<0$ we obtain the desired Hardy-type inequality.
\end{proof}

\begin{lemma}[Hardy-type inequalities with different intervals] \label{lem:rad-mixed}
Consider three exponents $\alpha,\beta,\gamma\in\RR$ and a function $f:[R,+\infty) \to\RR$ such that $r^{\beta-1/2}f(r)$ is square-integrable on $[R,+\infty)$. If $\beta,\gamma<\alpha$ then one has 
\bse
\bel{rad-mixed-a}
J_{2\beta}\bigl[ I_\alpha[f]^2 \bigr](r)
\leq \frac{1}{2(\alpha-\beta)(\alpha-\gamma)} I_{2\gamma}[f^2](r)
+ \frac{2}{(\alpha-\beta)^2} J_{2\beta}[f^2](r) ,
\qquad r\in[R,+\infty) .
\ee
If $\alpha<\beta,\gamma$ then one has 
\be
I_{2\gamma}\bigl[ J_\alpha[f]^2 \bigr](r)
\leq \frac{2}{(\gamma-\alpha)^2} I_{2\gamma}[f^2](r) + \frac{1}{2(\gamma-\alpha)(\beta-\alpha)} J_{2\beta}[f^2](r) ,
\qquad r\in[R,+\infty) .
\ee
\ese
\end{lemma}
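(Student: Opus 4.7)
The main idea is to separate the ``inside'' contribution (values of $f$ on $[R,r]$) from the ``outside'' contribution (values on $[r,+\infty)$) to the integrand of $I_\alpha[f]$, so that each piece is handled by a tool already in the appendix. Concretely, for the first inequality I set $u \coloneqq I_\alpha[f]$ and, for $s \geq r$, write
\[
s^\alpha u(s) \;=\; \int_R^r f(s') s'^{\alpha-1} ds' \;+\; \int_r^s f(s') s'^{\alpha-1} ds' \;=\; r^\alpha u(r) + B(s),
\]
so that $u(s)^2 \leq 2 s^{-2\alpha}\bigl(r^{2\alpha} u(r)^2 + B(s)^2\bigr)$ by the elementary inequality $(a+b)^2 \leq 2(a^2+b^2)$. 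Multiplying by $s^{2\beta-1}$ and integrating on $[r,+\infty)$ decouples the two contributions.

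The first contribution is handled in closed form: $\int_r^{+\infty} s^{2\beta-2\alpha-1} ds = r^{2\beta-2\alpha}/\bigl(2(\alpha-\beta)\bigr)$ since $\beta<\alpha$, giving a bound of $\frac{1}{\alpha-\beta} u(r)^2$. Then the second Cauchy--Schwarz of \autoref{lem:rad-CS} (applied with the pair of exponents $(\gamma,\alpha-\gamma)$, which is legitimate since $\gamma<\alpha$) yields $u(r)^2 = \bigl(I_\alpha[f](r)\bigr)^2 \leq \frac{1}{2(\alpha-\gamma)} I_{2\gamma}[f^2](r)$, and produces the first term of the right-hand side of~\eqref{rad-mixed-a}. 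For the second contribution, I observe that $\tilde u(s) \coloneqq s^{-\alpha} B(s)$ again satisfies $(\vartheta+\alpha)\tilde u = f$ on $[r,+\infty)$, but now with the boundary condition $\tilde u(r)=0$; the proof of~\eqref{rad-Hardy-a} in \autoref{lem:rad-Hardy} uses only $u(R)=0$ and transplants verbatim from $[R,+\infty)$ to $[r,+\infty)$, yielding $\int_r^{+\infty} \tilde u(s)^2 s^{2\beta-1} ds \leq \frac{1}{(\alpha-\beta)^2} \int_r^{+\infty} f(s)^2 s^{2\beta-1} ds$. The factor $2$ from $(a+b)^2 \leq 2(a^2+b^2)$ then produces the second term of~\eqref{rad-mixed-a}, exactly matching the stated coefficient.

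The second inequality is proved by the same splitting, played backwards: for $v \coloneqq J_\alpha[f]$ and $s \leq r$, write
\[
s^\alpha v(s) \;=\; \int_s^r f(s') s'^{\alpha-1} ds' \;+\; \int_r^{+\infty} f(s') s'^{\alpha-1} ds' \;=\; \tilde B(s) + r^\alpha v(r),
\]
square with $(a+b)^2 \leq 2(a^2+b^2)$, and integrate against $s^{2\gamma-1}$ on $[R,r]$. The ``outside'' piece $v(r)^2$ is controlled by the Cauchy--Schwarz of \autoref{lem:rad-CS} in the form $v(r)^2 \leq \frac{1}{2(\beta-\alpha)} J_{2\beta}[f^2](r)$ (valid since $\beta>\alpha$) and yields the second term of the target bound; the ``inside'' piece $\tilde v(s) \coloneqq s^{-\alpha}\tilde B(s)$ satisfies $(\vartheta+\alpha)\tilde v = -f$ on $[R,r]$ with $\tilde v(r)=0$, so a localized version of the $J$-side Hardy inequality (again a direct transplant of the proof of~\eqref{rad-Hardy-b}, this time to a finite interval) produces the first term.

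\textbf{Main obstacle.} The only nontrivial point is verifying that the Hardy-type inequalities of \autoref{lem:rad-Hardy} remain valid on the shifted or truncated intervals $[r,+\infty)$ and $[R,r]$ with a vanishing boundary condition at~$r$; I expect this to be routine since the proof relies only on expanding a non-negative square, an integration-by-parts identity from~\eqref{IJprop-2}, and the vanishing of the boundary term at the endpoint where $\tilde u$ or $\tilde v$ vanishes. Tracking the numerical constants carefully through the $(a+b)^2 \leq 2(a^2+b^2)$ step is what produces the factor $2$ in the coefficient $\frac{2}{(\alpha-\beta)^2}$ (respectively $\frac{2}{(\gamma-\alpha)^2}$), while the factor $\frac{1}{2}$ in the other coefficient comes from the Cauchy--Schwarz bound on $u(r)^2$ (respectively $v(r)^2$).
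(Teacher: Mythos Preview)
Your proposal is correct and follows the paper's own route: the same splitting $s^\alpha I_\alpha[f](s)=r^\alpha I_\alpha[f](r)+B(s)$, the same $(a+b)^2\leq 2(a^2+b^2)$, and the same treatment of the constant piece via the Cauchy--Schwarz bounds of \autoref{lem:rad-CS}. The one difference is in how you dispose of the variable piece $B(s)$. The paper expands $(\int_r^s)^2$ as a symmetrized double integral, performs the $s$-integration first to obtain the nested expression $\frac{2}{\alpha-\beta}J_{2\beta}\bigl[f\,J_{2\beta-\alpha}[f]\bigr]$, applies Cauchy--Schwarz in $J_{2\beta}$, and finishes with the $J$-side Hardy inequality~\eqref{rad-Hardy-b} with exponents $(2\beta-\alpha,\beta)$. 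You instead recognize $\tilde u(s)=s^{-\alpha}B(s)$ as a shifted $I_\alpha[f]$ with base point~$r$ and rerun the expand-a-square argument of~\eqref{rad-Hardy-a} directly on the interval $[r,+\infty)$; the boundary term at~$r$ vanishes because $\tilde u(r)=0$, and the boundary term at infinity has the favorable sign, so you recover $\int_r^{+\infty}\tilde u^2 s^{2\beta-1}\,ds\leq(\alpha-\beta)^{-2}\int_r^{+\infty}f^2 s^{2\beta-1}\,ds$ with the same constant. Your variant avoids the triple-integral Fubini step and the auxiliary appearance of $J_{2\beta-\alpha}$; the paper's route has the advantage of invoking \autoref{lem:rad-Hardy} as a black box rather than revisiting its proof on a shifted interval. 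The second inequality is handled symmetrically in both cases, and your anticipated ``main obstacle'' (checking the transplanted Hardy bound on $[R,r]$ with $\tilde v(r)=0$) is indeed routine: taking $\mu=\alpha-\gamma<0$ in the expansion of $\int_R^r(\mu\tilde v+f)^2 s^{2\gamma-1}\,ds\geq 0$ produces the boundary term $\mu\,\tilde v(R)^2 R^{2\gamma}\leq 0$, which can be dropped.
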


\begin{proof}
For $r,s\in[R,+\infty)$ with $r\leq s$, we decompose
$
s^\alpha I_\alpha[f](s) = r^\alpha I_\alpha[f](r) + \int_r^s t^\alpha f(t) \frac{dt}{t}.
$
Inserting this decomposition into the explicit expression of $J_{2\beta}$ yields
\[
\aligned
J_{2\beta}\bigl[ I_\alpha[f]^2 \bigr](r)
& \leq r^{-2\beta} \int_r^{+\infty} s^{2\beta-2\alpha} \biggl(2 \bigl( r^\alpha I_\alpha[f](r) \bigr)^2 + 2 \Bigl( \int_r^s t^\alpha f(t) \frac{dt}{t}\Bigr)^2\biggr) \frac{ds}{s}
\\
& = \frac{1}{\alpha-\beta} \bigl( I_\alpha[f](r) \bigr)^2
+ 4 r^{-2\beta} \iiint_{r\leq t_1\leq t_2\leq s} s^{2\beta-2\alpha} t_1^\alpha f(t_1) t_2^\alpha f(t_2) \frac{dt_1}{t_1} \frac{dt_2}{t_2} \frac{ds}{s}
\\
& = \frac{1}{\alpha-\beta} \bigl( I_\alpha[f](r) \bigr)^2
+ \frac{2}{\alpha-\beta} J_{2\beta}\Bigl[f \, J_{2\beta-\alpha}[f] \Bigr](r)
\\
& \leq \frac{1}{\alpha-\beta} \bigl( I_\alpha[f](r) \bigr)^2
+ \frac{2}{\alpha-\beta} \Bigl( J_{2\beta}[f^2](r) \, J_{2\beta}\Bigl[J_{2\beta-\alpha}[f]^2\Bigr](r) \Bigr)^{1/2} ,
\endaligned
\]
where we swapped the order of integrals over $t_1,t_2,s$ to obtain nested integrals that all range all the way to infinity, hence can be expressed with the $J$~notation, and we have then applied the Cauchy--Schwarz inequality~\eqref{cs-1-b}. To reach the desired bound~\eqref{rad-mixed-a}, we use
\[
\aligned
I_\alpha[f]^2 \leq \frac{1}{2(\alpha-\gamma)} I_{2\gamma}[f^2] , \qquad
J_{2\beta}\bigl[ J_{2\beta-\alpha}[f]^2 \bigr]
\leq \frac{1}{(\alpha-\beta)^2} J_{2\beta}[f^2] ,
\endaligned
\]
which are~\eqref{cs-1-a} with the exponents $(\gamma,\alpha-\gamma)$ and~\eqref{rad-Hardy-b} with the exponents $(2\beta-\alpha,\beta)$, respectively. Next, for the second inequality we begin with a decomposition valid for $s\leq r$,
\[
s^\alpha J_\alpha[f](s) = r^\alpha J_\alpha[f](r) + \int_s^r t^\alpha f(t) \frac{dt}{t} .
\]
We follow the same steps as the first inequality, but the intervals of integration are different. The main change is to replace the integral
\[
r^{2\alpha-2\gamma} \int_r^{+\infty} s^{2\gamma-2\alpha} \frac{ds}{s} = \frac{1}{2(\alpha-\gamma)} , \qquad \alpha>\gamma ,
\]
whose convergence required $\alpha>\gamma$, by an integral
\[
r^{2\alpha-2\gamma} \int_R^r s^{2\gamma-2\alpha} \frac{ds}{s} = \frac{1-(R/r)^{2\gamma-2\alpha}}{2(\gamma-\alpha)} \leq \frac{1}{2(\gamma-\alpha)} , \qquad \gamma>\alpha ,
\]
whose convenient upper bound requires $\gamma>\alpha$. We arrive at
\be
\aligned
I_{2\gamma}\bigl[ J_\alpha[f]^2 \bigr](r)
& \leq r^{-2\gamma} \int_R^r s^{2\gamma-2\alpha} \biggl(2 \bigl( r^\alpha J_\alpha[f](r) \bigr)^2 + 2 \Bigl( \int_s^r t^\alpha f(t) \frac{dt}{t}\Bigr)^2\biggr) \frac{ds}{s}
\\
& \leq \frac{1}{\gamma-\alpha} \bigl( J_\alpha[f](r) \bigr)^2
+ \frac{2}{\gamma-\alpha} I_{2\gamma}\Bigl[f \, I_{2\gamma-\alpha}[f] \Bigr](r)
\\
& \leq \frac{1}{\gamma-\alpha} \bigl( J_\alpha[f](r) \bigr)^2
+ \frac{2}{\gamma-\alpha} \Bigl( I_{2\gamma}[f^2](r) \, I_{2\gamma}\Bigl[I_{2\gamma-\alpha}[f]^2\Bigr](r) \Bigr)^{1/2} .
\endaligned
\ee
To finish up, we bound $J_\alpha[f]^2$ by $J_{2\gamma}[f^2]$ using~\eqref{cs-1-b} and we bound $I_{2\gamma}\bigl[I_{2\gamma-\alpha}[f]^2\bigr]$ using the Hardy inequality~\eqref{rad-Hardy-a}.
\end{proof} 


\subsection{Distributional definition of solution operators}
\label{appendix=F.3}

\paragraph{The space of distributions.}

We consider first the space $\Dcal'((R,+\infty))$ of distributions with test functions in the space $C_c^\infty((R,+\infty))$ of smooth functions with compact support contained in~$(R,+\infty)$.  Such compactly-supported test functions vanish in a neighborhood of~$R$.  We use the notation $\la f,\varphi\ra$ for the distributional pairing, and any locally integrable function $f\colon(R,+\infty)\to\RR$ is viewed as a distribution by setting
\bel{fvarphi-dual}
\la f, \varphi\ra \coloneqq \int_R^{+\infty} f(r) \varphi(r) \frac{dr}{r} .
\ee
The choice of measure, together with compact support in $(R,+\infty)$, ensures the absence of boundary term when integrating by parts to get (for locally integrable~$f$)
\be
\la \vartheta f, \varphi\ra = - \la f, \vartheta\varphi\ra .
\ee
We use this identity more generally as the definition of $\vartheta$ acting on distributions $f\in\Dcal'((R,+\infty))$.
The differential equations on the shell functional~$\Phi^\notreH[u]$ and average~$\la u\ra$ are derived as identities in this space of distributions.  However, the solution operators $I_\beta$ and $J_\beta$ cannot be defined here, so that the equations must be solved in a different space introduced now.


\paragraph{Dual Sobolev spaces.}

We are interested in the spaces $H^{k*}_{\alpha}([R,+\infty)) = \bigl(H^k_{-\alpha}([R,+\infty))\bigr)^*$ that are dual to weighted Sobolev spaces for $k\geq 0$ and $\alpha\in\RR$, with the norm
\bel{Hkstar-def}
\|f\|_{H^{k*}_{\alpha}([R,+\infty))}
\coloneqq \sup_{\varphi\not\equiv 0} \frac{\bigl| \la f, \varphi\ra \bigr|}{\|\varphi\|_{H^k_{-\alpha}([R,+\infty))}} .
\ee
In contrast to $\Dcal'((R,+\infty))$, the test functions $\varphi$ are not required to vanish near~$R$.
The choice of radial exponents $\alpha$ and $-\alpha$ ensures that the spaces $L^2_\alpha$ and $H^{0*}_\alpha$ are identified by the bracket $\la f,\varphi\ra$ in~\eqref{fvarphi-dual}, namely integration against the measure~$dr/r$.
One has the natural inclusions $\dots\subset H^1_{\alpha}([R,+\infty))\subset L^2_{\alpha}([R,+\infty))=H^{0*}_{\alpha}([R,+\infty))\subset H^{1*}_{\alpha}([R,+\infty))\subset\dots$ and bounds (for $k\geq 0$)
\be
\|f\|_{H^{(k+1)*}_{\alpha}([R,+\infty))} \leq \|f\|_{H^{k*}_{\alpha}([R,+\infty))} , \qquad
\|f\|_{H^{0*}_{\alpha}([R,+\infty))} = \|f\|_{L^2_{\alpha}([R,+\infty))} .
\ee
Restricting the duality brackets $\la f,\cdot\ra$ to test functions with compact support in $(R,+\infty)$ defines linear maps $H^{k*}_\alpha([R,+\infty))\to\Dcal'((R,+\infty))$ for $k\geq 0$ that are compatible with each other.  We denote them by $\dots|_{(R,+\infty)}$ in this section, and leave them implicit in the main text when no confusion arises.

We also define a radial derivative operator $\vartheta_*\colon H^{k*}_{\alpha}([R,+\infty))\to H^{(k+1)*}_{\alpha}([R,+\infty))$ for $k\geq 0$ by
\bel{vartheta-star-def}
\la \vartheta_* f, \varphi \ra \coloneqq - \la f, \vartheta\varphi \ra , \qquad
\varphi\in H^{k+1}_{-\alpha}([R,+\infty)) ,
\ee
which obeys $\|\vartheta_*f\|_{H^{(k+1)*}_{\alpha}([R,+\infty))}\leq\|f\|_{H^{k*}_{\alpha}([R,+\infty))}$.
An important consideration is that this (dual) derivative operator does not reduce to $\vartheta$ when restricted to $H^1_\alpha([R,+\infty))\subset H^{k*}_\alpha([R,+\infty))$:
indeed, for $f\in H^1_\alpha([R,+\infty))$ and $\varphi\in H^1_{-\alpha}([R,+\infty))$ one has $\la\vartheta_* f-\vartheta f,\varphi\ra = f(R)\varphi(R)$.
Observe however that
\be
(\vartheta_* f)|_{(R,+\infty)} = \vartheta\bigl(f|_{(R,+\infty)}\bigr) ,
\ee
which makes it convenient to state differential equations in~$\Dcal'((R,+\infty))$ instead of dual Sobolev spaces.


\paragraph{Solution operators in dual Sobolev spaces.}

For locally-integrable functions~$f$ one checks that
\be
\la I_\beta[f],\varphi\ra
= \int_{R<r<s<+\infty} r^\beta s^{-\beta} f(r) \varphi(s) \frac{dr}{r}\, \frac{ds}{s}
= \la f,J_{-\beta}[\varphi]\ra
\ee
by expanding out the definitions in terms of integrals and swapping the two integrals.
This suggests generalizing the operators $I_\beta$ and~$J_\beta$ to $f\in H^{k*}_\alpha([R,+\infty))$ for $k \geq 1$ by using the definitions
\bel{IJ-distrib}
\aligned
\bigl\la I_\beta[f] , \varphi \bigr\ra
& \coloneqq \bigl\la f, J_{-\beta}[\varphi] \bigr\ra , \qquad \varphi\in H^{k-1}_{-\alpha}([R,+\infty)) , \quad \text{if } \alpha < \beta ,
\\
\bigl\la J_\beta[f] , \varphi \bigr\ra
& \coloneqq \bigl\la f, I_{-\beta}[\varphi] \bigr\ra , \qquad \varphi\in H^{k-1}_{-\alpha}([R,+\infty)) , \quad \text{if } \beta < \alpha .
\endaligned
\ee
The restriction on exponents, and the space of test functions, arise as a result of the bounds~\eqref{Ibeta-norm} below.
Since decreasing the exponent $\alpha$ enlarges the space $H^{k*}_\alpha([R,+\infty))$, one can always define $I_\beta[f]$ regardless of the ordering of $\alpha$ and $\beta$, simply by treating $f$ as having $r^{-\gamma}$ decay for some $\gamma<\min(\alpha,\beta)$.  In contrast, $J_\beta[f]$ is only defined if $f$ has sufficient radial decay.
For $\beta<\alpha$ and $f\in H^{k*}_\alpha([R,+\infty))$, the following identity holds, whose terms lie in $H^{(k-1)*}_\gamma([R,+\infty))$ for any $\gamma<\beta$,
\bel{Ibeta-plus-Jbeta}
I_\beta[f] + J_\beta[f] = \la f, r^\beta\ra r^{-\beta} .
\ee

\begin{lemma}[Estimates for distributional solution operators]\label{lem:Ibeta-norm}
For any $\alpha,\beta\in\RR$, given a distribution $f\in H^{k*}_{\alpha}([R,+\infty))$ for $k\geq 1$, one has\footnote{There exist similar $H^{(k-2)*}_{\alpha}([R,+\infty))$ bounds on $J_{\beta_0}[f]-J_{\beta_1}[f]$ for $\beta_0<\beta_1<\alpha$ and on $I_{\beta_0}[f]-I_{\beta_1}[f]$ for $\alpha<\beta_0<\beta_1$ but we will not need them.}
\bel{Ibeta-norm}
\aligned
\bigl\| I_\beta[f] \bigr\|_{H^{(k-1)*}_{\alpha}([R,+\infty))}
& \lesssim \| f \|_{H^{k*}_{\alpha}([R,+\infty))} , \qquad \text{if } k\geq 1 \text{ and } \alpha < \beta ,
\\
\bigl\| J_\beta[f] \bigr\|_{H^{(k-1)*}_{\alpha}([R,+\infty))}
& \lesssim \| f \|_{H^{k*}_{\alpha}([R,+\infty))} , \qquad \text{if } k\geq 1 \text{ and } \beta < \alpha ,
\\
\bigl\| J_{\beta_0}[f] + I_{\beta_1}[f] \bigr\|_{H^{(k-2)*}_{\alpha}([R,+\infty))}
& \lesssim \| f \|_{H^{k*}_{\alpha}([R,+\infty))} , \qquad \text{if } k\geq 2 \text{ and } \beta_0 < \alpha < \beta_1 ,
\endaligned
\ee
where implicit constants depend on the exponents $\alpha,\beta,\beta_0,\beta_1$.
\end{lemma}

\begin{proof}
For $\varphi\in H^{k-1}_{-\alpha}([R,+\infty))$ we bound
\be
\Bigl| \bigl\la I_\beta[f] , \varphi \bigr\ra \Bigr|
= \Bigl| \bigl\la f, J_{-\beta}[\varphi] \bigr\ra \Bigr|
\leq \| f \|_{H^{k*}_{\alpha}([R,+\infty))} \bigl\|J_{-\beta}[\varphi]\bigr\|_{H^k_{-\alpha}([R,+\infty))} .
\ee
For $0\leq j\leq k-1$, observe that $\vartheta^j (\vartheta-\beta) J_{-\beta}[\varphi] = - \vartheta^j \varphi$ by~\eqref{IJprop}, hence $\vartheta^{j+1} J_{-\beta}[\varphi]$ is controlled by lower derivatives and by $\vartheta^j\varphi$.
On the other hand, the $L^2_{-\alpha}([R,+\infty))$ norm of $J_{-\beta}[\varphi]$ is bounded by that of $\varphi/(\beta-\alpha)$ thanks to the Hardy inequality~\eqref{rad-Hardy-b} evaluated at $r=R$, with $(\alpha,\beta)\to(-\beta,-\alpha)$.
The bound on $J_\beta[f]$ is proven identically, but using the Hardy inequality~\eqref{rad-Hardy-a} for $I_{-\beta}[\varphi]$.

For the sum $J_{\beta_0}[f]+I_{\beta_1}[f]$ we take a less regular test function $\varphi\in H^{k-2}_{-\alpha}([R,+\infty))$ and we bound
\bel{Jbeta0Ibeta1-norm-bound}
\aligned
\Bigl| \bigl\la J_{\beta_0}[f]+I_{\beta_1}[f], \varphi \bigr\ra \Bigr|
& = \Bigl| \bigl\la f, J_{-\beta_1}[\varphi] + I_{-\beta_0}[\varphi] \bigr\ra \Bigr|
\\
& \leq \| f \|_{H^{k*}_{\alpha}([R,+\infty))} \bigl\|J_{-\beta_1}[\varphi] + I_{-\beta_0}[\varphi]\bigr\|_{H^k_{-\alpha}([R,+\infty))} .
\endaligned
\ee
By the previous results (with $k$ shifted by one) we control the $H^{k-1}_{-\alpha}([R,+\infty))$ norm of $J_{-\beta_1}[\varphi] + I_{-\beta_0}[\varphi]$, so we only need to control the $k$-th derivative.  Note that
\be
\vartheta\bigl(J_{-\beta_1}[\varphi] + I_{-\beta_0}[\varphi]\bigr)
= \beta_1 J_{-\beta_1}[\varphi] + \beta_0 I_{-\beta_0}[\varphi]
\ee
because $(\vartheta-\beta_1)J_{-\beta_1}[\varphi]=-\varphi=-(\vartheta-\beta_0)I_{-\beta_0}[\varphi]$.  Thus, the $k$-th derivative is controlled by $(k-1)$-th order derivatives, and we are done bounding~\eqref{Jbeta0Ibeta1-norm-bound}.
\end{proof}

\paragraph{Solution of ordinary differential equations.}

The operators $I_\beta$ and $J_\beta$ defined in a distributional sense obey some of the properties listed in \refwithname{Lemmas}{lem:inte-more} \refwithname{and}{lem:integrate-ODE}, provided one applies the restriction map at the appropriate place.  We state \autoref{prop:integrate-ODE-distrib} below with a choice of functional spaces that is relevant in the main text.

\begin{lemma}
\label{lem:IJprop-distrib}
For any $\alpha,\beta\in\RR$, given a distribution $f\in H^{k*}_\alpha([R,+\infty))$ for $k\geq 0$, one has
\bel{IJprop-distrib}
\aligned
(\vartheta+\beta)\bigl( I_\beta[f]|_{(R,+\infty)}\bigr) & = f|_{(R,+\infty)} ,
\\
(\vartheta+\beta)\bigl( J_\beta[f]|_{(R,+\infty)}\bigr) & = - f|_{(R,+\infty)} , \qquad \text{if } \beta < \alpha .
\endaligned
\ee
\end{lemma}

\begin{proof}
For $k=0$, the operators $I_\beta,J_\beta$ are defined as standard integrals and~\eqref{IJprop-distrib} is an immediate calculation.
For $k\geq 1$, they are defined in~\eqref{IJ-distrib} (and below for $I_\beta[f]$ with $\beta\leq\alpha$).
For $I_\beta$ we consider $\varphi\in C^\infty_c((R,+\infty))$ and evaluate
\be
\aligned
\bigl\la (\vartheta+\beta)\bigl( I_\beta[f]|_{(R,+\infty)}\bigr) , \varphi \bigr\ra
& = \bigl\la I_\beta[f]|_{(R,+\infty)} , (-\vartheta+\beta) \varphi \bigr\ra
\\
& = \bigl\la I_\beta[f] , (-\vartheta+\beta) \varphi \bigr\ra
= \bigl\la f , J_{-\beta}[(-\vartheta+\beta) \varphi] \bigr\ra = \la f,\varphi\ra .
\endaligned
\ee
The calculation for $J_\beta$ is identical: the test function is compactly supported in $(R,+\infty)$, so its trace at~$R$ vanishes and no boundary contribution remains.
\end{proof}

\begin{proposition}[An explicit formula for high-order ODEs, distributional version]
\label{prop:integrate-ODE-distrib}
For integers $k\geq 1$ and $0\leq j\leq m$ and exponents $\beta_1<\dots<\beta_j<\alpha<\beta_{j+1}<\dots<\beta_m$, the function $f\in L^2_\alpha([R,+\infty))$ and distribution $g\in H^{k*}_\alpha([R,+\infty))$ obey the $m$-th-order differential equation
\be
(\vartheta+\beta_1) \dots (\vartheta+\beta_m) \bigl( f|_{(R,+\infty)} \bigr) = g|_{(R,+\infty)}
\quad \text{in } \Dcal'((R,+\infty))
\ee
if and only if
\bel{integrateODE-f-value}
f|_{(R,+\infty)} = \biggl( - \sum_{i=1}^j B_i J_{\beta_i}[g](r)
+ \sum_{i=j+1}^m \Bigl( A_i r^{-\beta_i} + B_i I_{\beta_i}[g](r) \Bigr) \biggr)\biggr|_{(R,+\infty)}
\quad \text{in } \Dcal'((R,+\infty)) ,
\ee
for some constants $A_{j+1},\dots,A_m$, where $B_i = \prod_{\ell\neq i} (\beta_{\ell} - \beta_i)^{-1}$.
Furthermore, if $k=1$ the identity~\eqref{integrateODE-f-value} holds without restriction to $(R,+\infty)$, as an identity in $L^2_\alpha([R,+\infty))$.
\end{proposition}

\begin{proof}
  The ordering of $\beta_i$ and $\alpha$ ensures that $J_{\beta_i}[g]$ is well-defined for $1\leq i\leq j$ and that $r^{-\beta_i}\in H^{k*}_\alpha([R,+\infty))$ for $j+1\leq i\leq m$ so that the right-hand side of~\eqref{integrateODE-f-value} without restriction to $(R,+\infty)$ is well-defined in $H^{k*}_\alpha([R,+\infty))$.  
  The proof of \autoref{lem:integrate-ODE} then applies, using the identities~\eqref{IJprop-distrib}, so that it is a solution of the differential equation.

  To show the converse, by subtracting the known solution (with $A_i=0$, say), the question reduces to finding solutions $h=f|_{(R,+\infty)}\in\Dcal'((R,+\infty))$ to $(\vartheta+\beta_1)\dots(\vartheta+\beta_m)h=0$.  This further reduces to the case $m=1$, and for brevity we set $\beta=\beta_1$.  Choose an arbitrary smooth function~$\varphi_0$ with compact support in $(R,+\infty)$ and with $J_{-\beta}[\varphi_0](R) = R^\beta \int_R^{+\infty} \varphi_0 s^{-\beta-1}ds$ non-zero.  Then for any smooth function $\varphi$ with compact support in $(R,+\infty)$, let $c_\varphi = J_{-\beta}[\varphi](R) / J_{-\beta}[\varphi_0](R)$.  One has
  \be
  \aligned
  \la h,\varphi\ra
  & = c_\varphi \la h,\varphi_0\ra
  + \la h,\varphi - c_\varphi \varphi_0\ra
  = c_\varphi \la h,\varphi_0\ra
  + \bigl\la h,(-\vartheta+\beta) J_{-\beta}[\varphi - c_\varphi \varphi_0]\bigr\ra
  \\
  & = c_\varphi \la h,\varphi_0\ra
  + \bigl\la (\vartheta+\beta)h,J_{-\beta}[\varphi - c_\varphi \varphi_0]\bigr\ra = \frac{\la h,\varphi_0\ra}{R^{-\beta} J_{-\beta}[\varphi_0](R)} \la r^{-\beta}, \varphi\ra ,
  \endaligned
  \ee
  so that $h$ is a multiple of~$r^{-\beta}$.  For general $m$, we learn that $h$ is a linear combination of~$r^{-\beta_i}$.
  For $1\leq i\leq j$ the homogeneous solutions $r^{-\beta_i}$ do not have bounded $H^{k*}_\alpha([R,+\infty))$ norms hence are ruled out.
  Finally, for $k=1$ both sides of~\eqref{integrateODE-f-value} without restriction to $(R,+\infty)$ are in $L^2([R,+\infty))$, and their equality when restricted to $(R,+\infty)$ implies that they are equal in $L^2([R,+\infty))$.
\end{proof}


\paragraph{Solutions of second-order differential equations.}

A particular consequence of \autoref{prop:integrate-ODE-distrib} which is useful in the main text concerns the equation
\bel{eq-ODE2}
-(\vartheta +\beta_0)(\vartheta+\beta_1) \bigl(g_2|_{(R,+\infty)}\bigr)
= g_0|_{(R,+\infty)} + (\vartheta+\beta_1)\bigl(g_1|_{(R,+\infty)}\bigr)
\quad \text{in } \Dcal'((R,+\infty)) ,
\ee
for $g_0,g_1,g_2\in H^{k*}_\alpha([R,+\infty))$ with $\beta_0<\alpha<\beta_1$.
The general solution is
\bel{equa-formuleODE}
g_2|_{(R,+\infty)} = C_1 r^{-\beta_1} + \frac{1}{\beta_1-\beta_0} \Bigl( J_{\beta_0}\bigl[(\beta_1-\beta_0)g_1 + g_0\bigr] + I_{\beta_1}[g_0] \Bigr) \Bigr|_{(R,+\infty)} .
\ee
Indeed, one checks the contribution of $g_1$ by using $-(\vartheta+\beta_0) J_{\beta_0}[g_1]|_{(R,+\infty)}=g_1|_{(R,+\infty)}$, so that~\eqref{equa-formuleODE} is indeed a solution, and it is the most general solution by \autoref{prop:integrate-ODE-distrib}.
If in addition $g_2\in L^2_\alpha([R,+\infty))$ and $g_0,g_1\in H^{1*}_\alpha([R,+\infty))$ then the identity~\eqref{equa-formuleODE} holds in $L^2_\alpha([R,+\infty))$ without restricting to $(R,+\infty)$.

\paragraph{Large radius limit of dual Sobolev norms.}
Usual Sobolev norms are integrals over the domain of interest, hence their restriction to $\Omega_{R'}=\{r>R'\}$ tends to zero in the limit $R'\to+\infty$.  Dual Sobolev norms satisfy a similar decay property.

\begin{lemma}[No concentration at infinity in dual Sobolev spaces]\label{lem:Hstar-cutoff}
For $\alpha\in\RR$, $k\geq 0$, and any $f\in H^{k*}_{\alpha,-\expoP}(\Omega_R)$, the norm of $f$ restricted to $\Omega_{R'}$ tends to zero as $R'\to+\infty$, in the sense that for any $\eps>0$ there exists a (large) $R'\geq R$ such that for all test functions $\varphi\in H^k_{\alpha,-\expoP}(\Omega_R)$ with support in~$\Omega_{R'}$,
\bel{fRpeps}
|\la f,\varphi\ra| \leq \eps \|\varphi\|_{H^k_{\alpha,-\expoP}(\Omega_R)} .
\ee
\end{lemma}

\begin{proof}
Fix $\eps>0$ and assume by contradiction that for all $R'$ there exists $\varphi_{R'}$ violating~\eqref{fRpeps}.  By density of compactly-supported functions in $H^k_{\alpha,-\expoP}(\Omega_R)$ we can assume that all $\varphi_{R'}$ are compactly supported, and normalized to have unit norm and positive $\la f,\varphi_{R'}\ra$.
One can then choose successive radii $R'_i$ by taking $R'_{i+1}$ large enough such that $\varphi_i=\varphi_{R'_i}$ vanishes on~$\Omega_{R'_{i+1}}$.
The functions $\varphi_i$ constructed along the way have disjoint support, unit norm, and $\la f,\varphi_i\ra>\eps$, hence $\varphi=\sum_{i\geq 1} \frac{1}{i}\varphi_i\in H^k_{\alpha,-\expoP}(\Omega_R)$ while $\la f,\varphi\ra>\sum_k\frac{1}{k}\eps=+\infty$.
\end{proof}


\end{document}